\newcommand{\T}{\mbox{${\cal T}$}}
\newtheorem{ex}{EXAMPLE}[section]
\newenvironment{example}{\begin{ex} \nopagebreak
  \begin{rm}}{{\hfill$\Box$}\end{rm}\end{ex}} 
\newtheorem{defin}{Definition}[section]
\newenvironment{definition}[1]{\begin{defin}\begin{rm}({\bf #1})}{{\hfill$\Box$}\end{rm}\end{defin}}
\newtheorem{lemm}{Lemma}[section]
\newenvironment{lemma}{\begin{lemm}}{{\hfill$\Box$}\end{lemm}}
\newtheorem{thm}{Theorem}[section]
\newenvironment{theorem}{\begin{thm} \nopagebreak}{{\hfill$\Box$}\end{thm}}
\newtheorem{prop}{Proposition}[section]
\newenvironment{proposition}{\begin{prop}}{{\hfill$\Box$}\end{prop}}
\newtheorem{corol}{Corollary}[section]
\newenvironment{corollary}{\begin{corol} \nopagebreak}{{\hfill$\Box$}\end{corol}}
\newtheorem{conjec}{Conjecture}[section]
\newenvironment{conjecture}{\begin{conjec} \nopagebreak}{{\hfill$\Box$}\end{conjec}}
\newsavebox{\savepar}
\newcommand{\squishlist}{
  \begin{list}{$\bullet$}
   {
     \setlength{\itemsep}{0pt}
     \setlength{\parsep}{0pt}
     \setlength{\topsep}{0pt}
     \setlength{\partopsep}{0pt}
     \setlength{\leftmargin}{1.5em}
     \setlength{\labelwidth}{1em}
     \setlength{\labelsep}{0.5em} } }
\newcommand{\squishend}{
   \end{list}  }
\newcommand{\nop}[1]{}                       
\newcommand{\reminder}[2]{}
\begin{document}

\title
{Combined-Semantics Equivalence Is Decidable \\ For a Practical Class of Conjunctive Queries
}

\numberofauthors{1} 

\author{
\alignauthor Rada Chirkova
\\
	\affaddr{Department of Computer Science} \\
	\affaddr{NC State University, Raleigh, NC 27695, USA} \\
	\email{chirkova@csc.ncsu.edu}
}

\maketitle

\begin{abstract} 


The problems of query containment and equivalence are  fundamental problems in the context of query processing and optimization. In their classic work \cite{ChandraM77} published in 1977, Chandra and Merlin solved the two problems for the language of conjunctive queries {\em (CQ queries)} on relational data, under the ``set-semantics'' assumption for query evaluation. 
Alternative semantics, called {\em bag} and {\em bag-set semantics}, 
have been studied since 1993; 
Chaudhuri and Vardi in \cite{ChaudhuriV93} outlined necessary and sufficient conditions for equivalence of CQ queries under these semantics. (The problems of containment of CQ bag and bag-set queries remain open to this day.)  More recently, Cohen \cite{Cohen06-pods,Cohen06} introduced a formalism for treating (generalizations of) CQ queries evaluated under each of set, 
bag, and bag-set 
semantics uniformly as special cases of the more general {\em combined semantics.} 
This formalism provides tools for studying broader classes of practical SQL queries, specifically important types of queries that arise in on-line analytical processing (OLAP). 
Cohen in  \cite{Cohen06} provides a sufficient condition for equivalence of (generalizations of) combined-semantics CQ queries, as well as sufficient and necessary equivalence conditions for several proper sublanguages of the query language of  \cite{Cohen06}. 

Our goal in this paper is to continue the study of equivalence of CQ queries. 
We focus on the problem of determining whether two CQ queries are combined-semantics equivalent. We continue the tradition of \cite{ChandraM77,ChaudhuriV93,Cohen06} of studying this problem using the tool of containment between queries. 
This paper introduces a syntactic necessary and sufficient condition for equivalence of  queries belonging to a large natural language of ``explicit-wave'' combined-semantics CQ queries;  this language encompasses (but is not limited to) all set, bag, and bag-set queries, and appears to cover all combined-semantics CQ queries that are expressible in SQL. Our result solves in the positive the decidability problem of determining  combined-semantics equivalence for pairs of explicit-wave CQ queries. That is, for an arbitrary pair of combined-semantics CQ queries, it is decidable (i) to determine whether each of the queries is explicit wave, and (ii) to determine, in case both queries are explicit wave, whether or not they are combined-semantics equivalent, by using our syntactic criterion. (The problem of determining equivalence for general combined-semantics CQ queries remains open. Even so, our syntactic sufficient containment condition could still be used to determine that two general CQ queries are combined-semantics equivalent.)  
Our equivalence test, as well as our general sufficient condition for containment of combined-semantics CQ queries, reduce correctly to 
the special cases reported in \cite{ChandraM77,ChaudhuriV93} for set, bag, and bag-set semantics.    
Our containment and equivalence conditions also properly generalize the results of \cite{Cohen06}, provided that the latter are restricted to the language of (combined-semantics) CQ queries. 
\end{abstract}

\vspace{-0.2cm} 

\section{Introduction} 
\label{intro-sec} 





Query containment and equivalence are recognized as fundamental problems in  evaluation and optimization of databa- se queries. The reason is, for  conjunctive queries {\em (CQ queries)} --- a broad class of frequently used queries, whose expressive power is sufficient to express select-project-join queries in relational algebra --- query equivalence can be used as a tool in query optimization. 
Specifically, to find a more efficient {\it and} answer-preserving formulation of a given CQ query, it is enough to ``try all ways'' of arriving at a ``shorter'' query formulation, by removing query subgoals, in a process  called query minimization~\cite{ChandraM77}. A subgoal-removal step succeeds only if equivalence (via containment) of the ``original'' and ``shorter'' query formulations can be ensured. The equivalence test of~\cite{ChandraM77} for CQ queries is NP complete, whereas equivalence of general relational queries is undecidable. 

The query-minimization algorithm of \cite{ChandraM77} works under the assumption of {\em set semantics} for query evaluation, where both the database (stored) relations and query answers are treated as sets. Query answering and reformulation in the set-semantics setting have been studied extensively in the database-theory literature. As a basis, these studies have all used the necessary and sufficient containment condition of \cite{ChandraM77} for CQ queries. At the same time, the set semantics is not the default query-evaluation semantics in database systems in practice. For instance, in the standard relational query language SQL, duplicates are removed from the answer to a SQL query {\em only} if the query uses the  {\tt DISTINCT} keyword in its {\tt SELECT}  clause. 
This and other discrepancies between the set semantics for query evaluation and the standard of 
SQL have prompted researchers \cite{ChaudhuriV93,IoannidisR95} to consider ``bag semantics'' and ``bag-set semantics'' for query evaluation. Under {\em bag semantics,} both query answers and stored relations are treated as {\em bags} (that is, {\em multisets).} Under {\em bag-set semantics,} query answers are treated as bags, whereas the database relations are assumed to be sets.  

In an extended abstract \cite{ChaudhuriV93} published in PODS in 1993, Chaudhuri and Vardi focused on the hard problem of bag containment for CQ queries. The paper \cite{ChaudhuriV93} formulates containment and equivalence results, including equivalence tests, for bag and bag-set queries. However, the full version of the paper \cite{ChaudhuriV93} has never appeared, and the problems of bag and bag-set containment for CQ queries remain open to this day. 

The seminal work by Cohen \cite{Cohen06-pods,Cohen06} has provided a Datalog-based formalism for treating queries evaluated under each of set, 
bag, and bag-set 
semantics uniformly as special cases of the more general {\em combined semantics}. 
The focus of the papers \cite{Cohen06-pods,Cohen06} was on formulating conditions for {\em combined-semantics equivalence} of queries expressed using that syntax. Intuitively, two queries are combined-semantics equivalent, denoted $\equiv_C$, if on each database, the two queries return the same answers, with the same multiplicity of each answer tuple.  
To show the practical value of the combined-semantics formalism, Cohen exhibited in \cite{Cohen06} a number of real-life SQL queries that can be expressed as combined-semantics CQ queries, but cannot be expressed using set, bag, or bag-set semantics. In the following example, we show three realistic combined-semantics queries, which we then use to illustrate the contributions of this paper.

\begin{example} 
\label{nex-ex} 
The  application domain that we use here is based on a data-warehousing example from \cite{MumickBook}. 
Consider a retailer that has multiple stores. 
The retailer carries many items and has an elaborate relational database/warehouse for analysis, marketing, and promotion purposes. One of the tables in the database has the schema {\tt Pos(transactionID, storeID, itemID, date, amount)}; the table has one million rows. 
This table represents point-of-sale transactions, with one tuple for every item sold in a transaction. Each tuple has the transaction ID, the ID of the item sold, the ID of the store selling it, the date, and the amount of the sale. 

The business-development division of this store chain would typically analyze general trends and high-level aggregates over the {\tt Pos} relation. We assume that for this reason, the data analysts in that division access data through a {\em view,} with the schema {\tt Sales(storeID, itemID, date, amount)},  rather\linebreak than working with the ``raw'' data in the {\tt Pos} relation. To ensure correctness of the data-analyses results, the {\tt Sales} relation is a {\em bag-valued} relation; that is, for each tuple with specific values of the attributes {\tt storeID}, {\tt itemID}, {\tt date}, and {\tt amount}, the {\tt Sales} relation makes available as many copies of this tuple as there are rows with this information in the base relation {\tt Pos}. 

Suppose that this business-development division of the store chain would like to study the impact, on the total sales, of those transactions in the stores where the item prices are the same as on a fixed date,\footnote{We assume that the transaction amount can be used to determine the price of the item. This is true, for instance, for sales of big-ticket items, where each transaction typically records the sale of one such item. We also assume that item prices do not change in the middle of a business day.} January 1, 2012. Consider a SQL query  {\tt Qc} that could be used for the purpose of this analysis. 

{\small 
\begin{verbatim}  
(Qc) SELECT storeID, amount FROM Sales S 
     WHERE EXISTS 
        (SELECT * FROM Sales 
         WHERE storeID = S.storeID AND itemID = S.itemID 
         AND amount = S.amount AND date = `01/01/12')
\end{verbatim}
} 

For each store ID, query {\tt Qc} returns separately the amount for each transaction that took place in that store on the date {\tt 01/01/12}. Moreover, for each item that was sold in the store on the date {\tt 01/01/12}, the query {\tt Qc} returns all the {\em same} purchase amounts for the same item in the same store, as many times as the purchases have happened,  regardless of the date. 
If the analysts want to calculate correctly the total per-store returns for all the transactions that have the same item prices as the same-store transactions for the date {\tt 01/01/12}, then all they have to do to write such a query is to (i) add to the query {\tt Qc} the clause {\tt GROUP BY storeID}, and to (ii)  replace {\tt amount} by {\tt sum(amount)} in the {\tt SELECT} clause of the resulting query. Note that to evaluate the resulting analysis query, the query processor would first evaluate the query {\tt Qc}, and would then apply to the answer to {\tt Qc} the above grouping and aggregation.

Due to the large size of the relation {\tt Sales} (one million rows), 
the self-join of {\tt Sales} in the query {\tt Qc}, 
via a correlated subquery, should be avoided if at all possible. For some SQL queries, such direct removal of a subquery is indeed possible. 
Consider, for instance, a query {\tt Qc2}, whose definition is almost the same as that of the query {\tt Qc}: 

{\small 
\begin{verbatim}  
(Qc2) SELECT storeID, amount FROM Sales S 
      WHERE date = `01/01/12' 
      AND EXISTS 
         (SELECT * FROM Sales 
          WHERE storeID = S.storeID AND itemID = S.itemID 
          AND amount = S.amount)
\end{verbatim}
} 

\noindent 
The only difference between the queries {\tt Qc} and {\tt Qc2} is that the equality comparison with the date {\tt `01/01/12'}, which can be found in the subquery of the query {\tt Qc}, is a condition in the main body of the query {\tt Qc2}. 

It is easy to argue informally, based on the ``intended meaning'' of the query {\tt Qc2}, that {\tt Qc2} always returns the same number of the same answers as the query {\tt Qc2min}, which is obtained by removing the correlated subquery from {\tt Qc2}: 

{\small 
\begin{verbatim}  
(Qc2min) SELECT storeID, amount FROM Sales S 
         WHERE date = `01/01/12' 
\end{verbatim}} 

\noindent 
Indeed, by the results of this paper, the queries {\tt Qc2} and {\tt Qc2min} are combined-semantics equivalent. 

Thus, one question we could pose is whether the queries {\tt Qc} and {\tt Qc2} (or alternatively the queries {\tt Qc} and {\tt Qc2min}) are combined-semantics equivalent. If we can answer this question in the positive, then the query {\tt Qc2min}, which is very efficient to execute, can be evaluated on the relation {\tt Sales} to obtain the correct answer to the query of interest {\tt Qc}. Another question we could of course pose is whether the query {\tt Qc} is combined-semantics equivalent to the SQL query that results from removing the correlated subquery of the query {\tt Qc}. Intuitively, this is unlikely, as the latter query would lack the equality comparison with the date {\tt `01/01/12'}.  
\end{example} 

Combined-semantics CQ queries such as the query {\tt Qc} of Example~\ref{nex-ex}, with grouping and aggregation added, arise naturally in on-line analytical processing {\em (OLAP)} applications \cite{Lehner09,OlapReport}. Such queries occur whenever a data-analysis task calls for a query structure with nested subqueries. Such queries also arise due to joins that go beyond ``star-schema joins'' \cite{ChaudhuriD97,ChaudhuriDN11}, which are the only well-understood joins in the  literature on OLAP query optimization. 
 See \cite{Lehner09,OlapReport,ShuklaDN00,ZaharioudakisCLPU00} for more detailed discussions of why queries with nested subqueries and with ``non-star joins'' are natural and frequent in OLAP.  (For additional extended illustrations of such queries, see the online version \cite{fullversion} of this paper.)

It turns out that the equivalence tests reported in \cite{ChandraM77,ChaudhuriV93,Cohen06-pods,Cohen06} do not apply to the SQL queries {\tt Qc} or {\tt Qc2} of Example~\ref{nex-ex}, even though both queries are expressible in the syntax of \cite{Cohen06-pods,Cohen06} for combined-semantics conjunctive queries. At the same time, we can use the results of this paper and of \cite{MyLibkinIcdt12journalSubmission} (the latter results are also available online in \cite{fullversion}) to establish the following: 
\begin{enumerate} 
	\item $Qc$ $\equiv_C$ $Qc2$ does not hold; 
	
	
	\item $Qc2$ $\equiv_C$ $Qc2min$ does hold (observe that $Qc2min$ is a conjunctive {\em bag-semantics} query); 
	
	\item $Qc2min$ is a minimized version of $Qc2$; and 
	
	\item $Qc$ is the only minimized version of  itself.  
\end{enumerate} 

\noindent 
To the best of our knowledge, none of the above inferences 1--4 can be made using formal results reported in prior work. We use the results of \cite{MyLibkinIcdt12journalSubmission} in making inferences 3--4 only, and use the results of this current paper in making all the four inferences. 

\paragraph{Our contributions} 

\noindent 
We study equivalence of unaggregated SQL queries with equality comparisons (including comparisons with constants) and possibly with subqueries. We follow the approach of \cite{Cohen06}, where the study concentrates on Datalog translations of such queries, that is on combined-semantics CQ queries. 
The requisite translations from SQL to Datalog are straightforward (``as expected'').\footnote{Section 1 in \cite{Cohen06} provides some details of the 
translations.} In the remainder of this paper, all queries are expressed using the Datalog-based formalism of \cite{Cohen06}. In the remainder of this paper, we refer to combined-semantics CQ queries as {\em CCQ queries.} 

We focus on the problem of determining whether two CCQ queries are combined-semantics equivalent. We continue the tradition of \cite{ChandraM77,ChaudhuriV93,Cohen06}  of studying these problems using the tool of containment between queries. Our first specific contribution is to introduce in this work ``covering mappings'' {\em (CVMs)} 
between CCQ queries, and to show that CVMs furnish a sufficient condition for combined-semantics containment, for all CCQ queries. 

The second specific contribution of this paper is in providing a necessary condition for combined-semantics equivalence of CCQ queries. To formulate this condition, we isolate a large class of CCQ queries, which we call ``explicit-wave queries''.\footnote{The term ``explicit-wave query'' is due to the structures generated by the proof of the necessary equivalence condition reported in this paper.} We show that this class of queries encompasses, but is not limited to, (i) all CQ set, bag, and bag-set queries, and (ii) all CQ queries for which \cite{Cohen06} provides its sufficient and necessary equivalence tests. Further, it appears that all combined-semantics CQ queries that are expressible in SQL are explicit-wave queries. (Please see Section~\ref{necess-equiv-condition} for the details.) 
Our necessary condition for query equivalence is asymmetric -- it states that if for CCQ queries $Q_1$ and $Q_2$ we have the combined-semantics equivalence $Q_1 \equiv_C Q_2$, {\em and} $Q_1$ is  an explicit-wave query, then there exists a CVM from $Q_2$ to $Q_1$. We also show that this necessary condition is tight, in that we exhibit a specific CCQ query $Q$ that is {\em not} explicit wave, such that our necessary equivalence condition does not hold when applied to $Q$ and to another CCQ query $Q'$, even though the two queries are provably combined-semantics equivalent. 
The problem of combined-semantics equivalence for general  CCQ queries remains open. 

The importance of the contributions of this paper is in solving in the positive the decidability problem of determining  combined-semantics equivalence, for pairs of explicit-wave CCQ queries. That is, the results reported in this paper imply immediately that for an arbitrary pair of CCQ queries, it is decidable (i) to determine whether each of the queries is explicit wave, and (ii) to determine, in case both queries are explicit wave, whether or not they are combined-semantics equivalent, by using our syntactic CVM-based criterion. Note that even in case where one or both of the input queries are not explicit wave, our syntactic (CVM-based) sufficient condition for combined-semantics equivalence could still be used to determine that the two queries are equivalent. 

The results of this paper combined with those of \cite{MyLibkinIcdt12journalSubmission} can be used directly in query optimizers for database-management systems, as well as for developing optimization methods for queries in more expressive languages than CQ queries and in presence of integrity constraints. Our results can also be used for developing algorithms for rewriting queries using views, and for view selection under combined semantics. 

The remainder of this paper is organized as follows. In Section~\ref{new-rel-work-sec} we review related work. Section~\ref{prelim-section} formulates the background notions and results. In Section~\ref{containment-mappings-sec}, we introduce covering mappings (CVMs) and present our sufficient CVM-based condition for combined-semantics equivalence, which is applicable to all CCQ queries. Section~\ref{necess-equiv-condition} formalizes the notion of explicit-wave queries, presents in Theorem~\ref{magic-mapping-prop} the CVM-based necessary condition for equivalence of explicit-wave CCQ queries, and shows that this condition does not necessarily apply to general CCQ queries. Section~\ref{necess-equiv-condition} also contains the decidability test, which is the main result of this paper. Finally, Section~\ref{magic-mapping-proof-sec} contains the proof of Theorem~\ref{magic-mapping-prop}.

\reminder{Say that \cite{Cohen06} does {\em not} provide necessary conditions for $Q \equiv_C Q'$, such that those conditions would subsume ours. Cohen's necessary conditions are for the queries: 

\begin{itemize} 
	\item queries without set variables 
	\item queries without multiset variables 
	\item no same-name predicate twice or more in positive (i.e., nonnegated) subgoals 
	\item query is a join of set (i.e., no multiset variables) subquery with multiset (i.e., no set variables) subquery ; the formal definition is that a query must not have a subgoal that would have both a multiset variable and a set variable  
	\item no copy variables
\end{itemize} 
}


\reminder{Move this paragraph to Section~\ref{sccm-sec}: after (i) definition of regularized query, (ii) definition of CVM, and (iii) Theorem~\ref{not-same-num-multiset-vars-thm}. 
The main requirement of containment mappings, that of $\varphi(L) \subseteq L'$ (see Definition~\ref{multiset-homom-def} (3)), appears fundamental as far as our expectations from mappings go: We expect the subgoals of one query to ``map into'' the subgoals of the other query. Observe a departure from this requirement in the definition of CVM, in Definition~\ref{magic-mapping-def} (4). Do I need to go on, by explaining that we use the intuition of {\em individual} satisfying assignments from queries to databases, where the copy numbers do not matter? Hence the intuition for CVM.}

\subsection{Related Work}
\label{new-rel-work-sec} 

In their classic paper \cite{ChandraM77}, Chandra and Merlin presented an NP-complete containment test for CQ queries under set semantics. This sound and complete test has been used in optimization, via minimization, of CQ set-semantics queries, as well as in developing algorithms for rewriting queries (both equivalently and nonequivalently) using views. 
In this current paper we extend the containment and equivalence results of \cite{ChandraM77} to general CQ combined-semantics queries, and show the limitations of each extension. 

Equivalence tests for CQ bag and bag-set queries were formulated by Chaudhuri and Vardi in~\cite{ChaudhuriV93}; correctness of the tests follows from the results of \cite{Cohen06}. Our equivalence results for CQ combined-semantics queries reduce correctly to the special cases of CQ bag and bag-set queries, as given in \cite{ChaudhuriV93}. 
Further, this current paper provides a nontrivial generalization and the first known proof of the well-known sufficient containment condition for CQ bag queries, as outlined in \cite{ChaudhuriV93}. 
 
Definitive results on containment between CQ queries under bag and bag-set semantics have not been  obtained so far. Please see Jayram, Kolaitis, and Vee \cite{KolaitisPods06} for original undecidability results on containment of CQ queries with inequalities under bag semantics. The authors point out that it is not  known whether the problem of bag containment for {\em CQ} queries is even decidable. 
For the case of {\em bag-set} semantics, sufficient conditions for containment of two CQ queries can be expressed via containment of (the suitable) aggregate queries with aggregate function {\tt count(*)}. 
The latter containment problem can be solved using the methods proposed in~\cite{CohenNS03}. 
Please see~\cite{ChaudhuriV93,AfratiDG10} for other results on bag and bag-set containment of CQ queries. 
The general problems of containment for CQ bag and bag-set queries remain open. 

In her papers  \cite{Cohen06-pods,Cohen06}, Cohen  provided an elegant  and powerful  formalism for treating queries evaluated under each of set, 
bag, and bag-set 
semantics uniformly as special cases of the more general combined semantics. The papers also contain a general sufficient condition for combined-semantics equivalence of CQ queries with disjunction, negation, and arithmetic comparisons, as well as necessary and sufficient equivalence conditions for special cases. (Interestingly, when we restrict the language of the queries in question to the language of CQ queries, it turns out that all the necessary and sufficient query-equivalence conditions of \cite{Cohen06} hold for queries belonging collectively to a proper subclass of the class of explicit-wave CQ queries, which (class) we introduce in this current paper.) 
The proof in \cite{Cohen06} of its general sufficient condition for equivalence of queries 
is in terms of containment between the queries under combined semantics. 
That (implicit) sufficient query-containment condition 
 is proved in \cite{Cohen06}  for the case where the two queries have the same number of multiset variables. 
In this current paper we provide proper generalizations of all the results of \cite{Cohen06}, including of its implicit sufficient condition for query containment, provided that the results of \cite{Cohen06} are applied to CQ queries only. 

A discussion of query equivalence and containment for query languages 
that properly contain the language of CQ queries 
is beyond the scope of this paper. The interested reader is referred to \cite{Cohen06}, which contains an excellent overview of the literature in this direction.

\section{Preliminaries}
\label{prelim-section} 

\reminder{Plan for the subsections of this Preliminaries section:
\begin{itemize}

	\item Basics for combined semantics; special cases of bag/bag-set/set semantics (to define each in appendix)
	
	\item Combined-semantics equivalence tests for CQ queries
	
	\item Dependencies and chase:
	
	\begin{itemize}
	
		\item basics from PODS-09 preliminaries -- put into appendix?
		
		\item classes of dependency sets from \cite{JohnsonK84}
		
		\item my PODS-09 results for bag and bag-set semantics??? -- put into appendix?
	
	\end{itemize}

\end{itemize}
} 



\reminder{The contents of Sections~\ref{cutup-scohen-framework-sec} through~\ref{cutup-sound-complete-tests-sec} are all from the paper \cite{Cohen06} by Sara Cohen. Please see Appendix~\ref{cohen-set-bag-sem-sec} for the details and clarifying examples.} 

\subsection{Combined semantics: The framework [9]}
\label{cutup-scohen-framework-sec} 


\reminder{Into the title of this subsection, insert manually the citation ID for \cite{Cohen06}}

\reminder{Must have a separate (sub)section in the Preliminaries, where I would explicitly state all the relevant results of \cite{ChandraM77}.}


\subsubsection{Syntax of queries}
\label{query-syntax-sec}

Predicate symbols are denoted as $p$, $q$, $r$. Databases contain ground atoms for a given set of predicate symbols; we consider finite-size databases only. A database may have several copies of the same atom. To denote this fact, each atom in the database is associated with a {\em copy number} $N$. Formally, if $p$ is an $n$-ary predicate, for an $n \in {\mathbb N}_+$ (with ${\mathbb N}_+$ the set of natural numbers), we write $p(c_1,\ldots,c_n; N)$, with $N \in {\mathbb N}_+$, to denote that there are precisely $N$ copies of $p(c_1,\ldots,c_n)$ in the database. As a shorthand, if $N=1$, we often omit the copy number $N$. The {\em active domain of database} $D$, denoted $adom(D)$, is the set of all constants mentioned in the ground atoms of $D$.  We adopt a convention by which, for each atom of the form $p(c_1,\ldots,c_n)$ such that database $D$ has $N \geq 1$ copies of that atom, $N$ is an element of $adom(D)$ if and only if there exists in $D$  an atom $r(c'_1,\ldots,c'_m; N')$ (for some copy number $N'$ $\geq$ $1$ and where $r$ and $p$ may or may not be the same predicate) such that $N$ is one of $c'_1$, $\ldots$, $c'_m$. 
\reminder{A database $D$ is called {\em set valued} if all copy numbers for the atoms in $D$ equal unity (1). }

For query syntax, we denote variables using $X$, $Y$, $Z$, possibly with subscripts, and $i$, $j$, $k$. The former range over constants in the database (i.e., over $adom(D)$), whereas the latter range over copy numbers.  For this reason, we call the former {\em regular variables} (or simply {\em variables} for short), and we call the latter {\em copy variables.}  
We use $c$, $d$ to denote constants. A {\em term,} denoted as $S$, $T$, is a variable or a constant. 

A {\em relational atom} has the form $p(S_1,\ldots,S_n)$, where $p$ is a predicate of arity $n$. We also use the notation $p(\bar{S})$, where $\bar{S}$ stands for a sequence of terms $S_1,\ldots,S_n$. 
A {\em copy-sensitive atom} has the form $p(\bar{S}; i)$, and is simply a relational atom with copy variable $i$. 
We call relational atom $p(\bar{S})$ {\em the relational template of copy-sensitive atom} $p(\bar{S}; i)$. For each relational atom, its relational template is the atom itself.  
%
A {\em condition,} denoted as $L$, is a conjunction of relational and copy-sensitive atoms, with duplicate atoms allowed, such that all copy variables in $L$ are unique (i.e., appear in a single copy-sensitive atom, and do not appear in other atoms). 
Sometimes it will be convenient for us to view condition $L$ as a bag of all and only the elements in the conjunction $L$. 

We distinguish between the variables that appear in the head of a query, and those that only appear in the body. The former are {\em distinguished (head) variables,} and the latter are {\em nondistinguished (nonhead) variables.} Nondistinguished variables come in two flavors: {\em set variables} and {\em multiset variables.} The intuition for the difference between these two types of variables is as follows. When evaluating a query, different assignments for set variables do not contribute to the multiplicity in which a particular answer is returned by the query. On the other hand, different assignments for multiset variables do contribute to the multiplicity of the returned answers. Technically, in order to differentiate between set variables and multiset variables, we always specify the set of multiset variables in each condition immediately to the right of the condition. As a syntactic requirement, all copy variables must be in the set of multiset variables. 


\begin{definition}{Query syntax: CCQ query} 
\label{ccq-def}
A  {\em copy-sensitive conjunctive query (CCQ query)} is a nonrecursive expression of the form
$$Q(\bar{X}) \leftarrow L, M, $$
where $\bar{X}$ is a (possibly empty) vector, $L$ is a nonempty condition, and $M$ is a set of variables, such that:  
\begin{itemize}
	\item $L$ contains all the variables in $\bar{X}$; that is, $Q$ is {\em safe}; 
	\item $M$ is a subset of the set of nondistinguished variables of $L$ and contains all copy variables of $L$. We denote all the copy variables of $Q$ collectively as $M_{copy} \subseteq M$, and all the remaining (``multiset noncopy'') variables in $M$ as $M_{noncopy} := M - M_{copy}$.  
\end{itemize}
\vspace{-0.5cm}
\end{definition} 


We call each element of the condition $L$ a {\em subgoal} of $Q$. The variables in $M$ are the {\em multiset variables} of $Q$. The variables in $L$ that are not in $\bar{X}$ or in $M$ are the {\em set variables} of $Q$. Consider an illustration. 

\begin{example} 
\label{syntax-ex}
Let CCQ query $Qc$ be as follows. 
\begin{tabbing}
$Qc(X, Y) \leftarrow sales(X, Z, U, Y; i), sales(X, Z, c, Y), \{ Z, U, i \} .$ 
\end{tabbing}

Suppose that we interpret  $c$ as the constant value `01/01/12.' Then this query is the Datalog version of the SQL query {\tt Qc} of Example~\ref{nex-ex} (in Section~\ref{intro-sec}). 

The condition $L$ of the query $Qc$ is the conjunction of the two subgoals of $Q$ with the predicate {\tt sales}. The variables $X$ and $Y$ are the head variables of the query $Qc$. The set $\{ Z, U, i \}$ is the set $M$ of multiset variables of $Qc$; the set $M_{copy}$ of $Qc$ comprises the (only) copy variable $i$ of $Qc$, and the  set $M_{noncopy}$ of $Qc$ has the multiset noncopy variables $Z$ and $U$ of $Qc$. By definition, 
$M$ 
$=$ $M_{copy}$ $\cup$ $M_{noncopy}$ . This query does not have set variables. 
\end{example}


We use $S(Q)$ to denote an arbitrary vector, without repetitions, of the set variables of $Q$, and $\bar{S}(Q)$ to denote an arbitrary vector, without repetitions, of the remaining variables of $Q$ (i.e., the distinguished and multiset variables of $Q$). 
By abuse of notation, we will often refer to a query by its head $Q(\bar{X})$ or simply by its head predicate  $Q$. For a vector of terms $\bar X$ with $k$ $\geq$ $0$ elements, we say that a CCQ query with head $Q({\bar X})$ is a {\em CCQ} $k${\em -ary query.}  In the special case where $k$ $=$ $0$, we say that $Q$ is a {\em CCQ Boolean query,} and denote its head by $Q()$.

We will sometimes be interested in special types of queries. A CCQ query $Q$ is a {\em set query} 
if it has no multiset variables, that is, if $M = \emptyset$. Query $Q$ is a {\em multiset query} if $Q$ has no set variables. Further, a multiset query $Q$ is (i) a {\em bag query} if $Q$ has only copy-sensitive subgoals, and is (ii) a {\em bag-set query} if $Q$ has only relational subgoals. 

\reminder{Remove this paragraph from final paper if needed: In this paper, we also consider CCQ queries that may have copy variables in the head. 
We prohibit this syntactic relaxation of Definition~\ref{ccq-def} for any queries except those that we explicitly construct (to make certain points) from given CCQ queries that satisfy the Definition. Specifically, for a CCQ query $Q(\bar{X}) \leftarrow L, M$ and for a vector $\bar{M}$ that uses (perhaps with repetitions)  all and only multiset variables of $Q$, the {\em fully-relaxed derivative of $Q$ using $\bar{M}$} is the query  $Q(\bar{X},\bar{M}) \leftarrow L, \emptyset$. By definition, each fully-relaxed derivative query is a set query. Further, whenever $M = \emptyset$ in $Q$, all fully-relaxed derivatives of $Q$ coincide with $Q$.  }

\reminder{Put here a clarifying example: It should use the queries of Example~\ref{motiv-ex}, but make the points of Example~\ref{cohen06-exampleTwoTwo}.}

\subsubsection{Combined semantics for queries}
\label{comb-sem-sec}

We define how CCQ query $Q(\bar{X}) \leftarrow L, M$ yields a {\em multiset} of tuples on database $D$. Intuitively, we start by considering satisfying assignments of the condition $L$. We then restrict these assignments to the {\em nonset variables} of $L$, that is to $\bar{S}(Q)$. Each of these restricted assignments yields a tuple in the result. A formal description of the semantics follows.

Let $\gamma$ be a mapping of the terms in condition $L$ to values. We will also apply $\gamma$ to a sequence of terms to derive a sequence of values, in the obvious way. We say that $\gamma$ is a {\em satisfying assignment} of $L$ with respect to database $D$ if all of the following conditions hold: 

\begin{itemize}

	\item $\gamma$ is the identity mapping on constants;
\vspace{-0.1cm} 	
	\item for all relational atoms $p(\bar{T}) \in L$, there exists an $N \in \mathbb{N}_+$ such that we have $p(\gamma\bar{T}; N) \in D$; and 
\vspace{-0.1cm} 	

	\item for all copy-sensitive atoms $p(\bar{T}; i) \in L$, the following two conditions hold:

\vspace{-0.1cm} 	
	\begin{itemize}

		\item $\gamma i \in \mathbb{N}_+$ (i.e, $\gamma i$ is a positive natural number); 
		
\vspace{-0.1cm} 	
		\item there is an $N \geq \gamma i$ such that $p(\gamma\bar{T}; N) \in D$. 
		
	\end{itemize}

\end{itemize}


\mbox{}

Let $\Gamma(Q,D)$ denote the set of satisfying assignments of $L$ with respect to database $D$. Let $\gamma$ be an assignment of the variables in $\bar{S}(Q)$ to constants. We say that $\gamma$ is {\em satisfiably extendible} if there is an assignment $\gamma' \in \Gamma(Q,D)$ such that $\gamma$ and $\gamma'$ coincide on all terms for which $\gamma$ is defined, that is, $\gamma'(X) = \gamma(X)$ for all $X \in \bar{S}(Q)$. Intuitively, this means that it is possible to extend $\gamma$ to derive a satisfying assignment of $L$. We use $\Gamma_{\bar{S}}(Q,D)$ to denote the set of satisfiably extendible assignments of $\bar{S}(Q)$ with respect to $D$. For the $\gamma \in \Gamma_{\bar{S}}(Q,D)$ and for the $\gamma' \in \Gamma(Q,D)$ as specified in this paragraph, we say that $\gamma'$ {\em contributes $\gamma$ to} $\Gamma_{\bar{S}}(Q,D)$. 



\reminder{Do I need the ``$\rho$-permutation'' terminology? Finally, let $\rho(\bar{S}(Q))$ be a permutation of the terms in the vector $\bar{S}(Q)$; then we call the projection $\pi_{\rho(\bar{S}(Q))}(\Gamma_{\bar{S}}(Q,D))$ the {\em $\rho$-permutation of the set} $\Gamma_{\bar{S}}(Q,D)$. By definition of projection, the $\rho$-permutation of $\Gamma_{\bar{S}}(Q,D)$ is obtained, for any $\rho$ of $\bar{S}(Q)$, by simply ``reshuffling the columns'' of the set-valued relation $\Gamma_{\bar{S}}(Q,D)$. } 

We now define the result of applying a query $Q$ to a database $D$. (We use 
 $\{ \hspace{-0.1cm} \{ \ldots \} \hspace{-0.1cm} \}$ to denote a bag of values.) 

\begin{definition}{Combined semantics}
\label{query-semantics-def}
Let $Q(\bar{T})$ $\leftarrow L, M$ be a CCQ query and let $D$ be a database. The {\em result of applying $Q$ to $D$ under combined semantics,} denoted ${\sc Res}_C(Q,D)$, is defined as 

\mbox{}

\noindent 
${\sc Res}_C(Q,D) \ := \ \{ \hspace{-0.1cm} \{ \ \gamma(\bar{T}) \ | \ \gamma \in \Gamma_{\bar{S}}(Q,D) \ \} \hspace{-0.1cm} \} \ .$
\end{definition}

(In the special case where $Q$ is a Boolean query, each $\gamma(\bar{T})$ as above is interpreted as a separate copy of the empty tuple.) Note that ${\sc Res}_C(Q,D)$ is a bag of tuples, that is, 
${\sc Res}_C(Q,$ $D)$ 
may contain multiple occurrences of the same tuple. 
Consider an illustration. 

\begin{example} 
\label{second-syntax-ex}
Let CCQ query $Qc$ be as in Example~\ref{syntax-ex}, and consider a database $D$ whose $Sales$ relation has two distinct tuples: $(85, 433, 01/01/12, 264; 2)$ and $(85, 433,$\linebreak $03/15/12, 264)$. (Note that the first tuple of the $Sales$ relation is present in two copies in the database $D$.) Then ${\sc Res}_C(Qc,D)$ is a bag of exactly three copies of the tuple $(85, 264)$. 
\end{example}

\reminder{ 
\mbox{}

{\em Remark (2.8 in \cite{Cohen06})} Note that the semantics of \cite{Cohen06} allows us to choose whether multiplicities in the database should affect query results. If this is desired, then a copy variable should be used for the particular predicate of interest. Otherwise, omitting copy variables essentially gives the same effect as removing duplications in the database. \reminder{\footnote{See discussion of Example~\ref{cohen06-exampleTwoOne} in Appendix~\ref{scohen-framework-sec}.}} 

\mbox{}

For technical reasons, we will sometimes be interested in evaluating a set of queries over a database. Combined semantics extends naturally to this case, i.e., given a set $\cal Q$ of queries and a database $D$, we define

\begin{tabbing}
${\sc Res}_C({\cal Q},D) := \biguplus \ \{ \hspace{-0.1cm} \{ \ {\sc Res}_C(Q,D)  \ | \ Q \in {\cal Q} \ \} \hspace{-0.1cm} \} .$
\end{tabbing}
where $\biguplus$ is the multiset-union operator. In a similar fashion, upcoming definitions of query containment and equivalence can naturally be extended to sets of queries. (See Examples~\ref{cohen-two-nine-ex} and~\ref{cohen-two-ten-example} in Appendix~\ref{scohen-framework-sec} for illustrations.) 
} 

Under certain circumstances,  combined semantics coincides with set, bag, or bag-set semantics. Please see \cite{Cohen06} 
for the details on the three traditional query semantics, specifically on how these semantics can be formulated as special cases of combined semantics. 


\subsubsection{Query containment and equivalence}

Query containment under combined, set, bag, and bag-set semantics is defined in the standard manner. Formally, $Q$ {\em is contained in} $Q'$ under a given semantics if, for all databases, the bag of values returned by $Q$ is a subbag of the bag of values returned by $Q'$. We write $Q \sqsubseteq_C Q'$, $Q \sqsubseteq_S Q'$, $Q \sqsubseteq_B Q'$, and $Q \sqsubseteq_{BS} Q'$ if $Q$ is contained in $Q'$ under combined, set, bag, and bag-set semantics, respectively. Similarly, we use $Q \equiv_C Q'$, $Q \equiv_S Q'$, $Q \equiv_B Q'$, and $Q \equiv_{BS} Q'$ to denote the fact that $Q$ is equivalent to $Q'$ under each semantics. $Q \equiv_C Q'$ holds if and only if $Q \sqsubseteq_C Q'$ and $Q' \sqsubseteq_C Q$ both hold. The definitions of $Q \equiv_S Q'$, $Q \equiv_B Q'$, and $Q \equiv_{BS} Q'$ parallel that of $Q \equiv_C Q'$ in the obvious manner. 

For CCQ queries $Q$ and $Q'$, we have that (1) $Q \sqsubseteq_S Q'$ iff $Q \sqsubseteq_C Q'$, in case $Q$ and $Q'$ are set queries;  (2) $Q \sqsubseteq_B Q'$ iff $Q \sqsubseteq_C Q'$, in case $Q$ and $Q'$ are bag queries; and (3) $Q \sqsubseteq_{BS} Q'$ iff $Q \sqsubseteq_C Q'$, in case $Q$ and $Q'$ are bag-set queries. 

For a class $\cal Q$ of queries: The {\em $\cal Q$-containment problem for combined semantics} is: {\em Given queries $Q$ and $Q'$ in $\cal Q$, determine whether $Q \sqsubseteq_C Q'$.} The {\em $\cal Q$-equivalence problem} is defined similarly using $\equiv_C$ instead of $\sqsubseteq_C$. The two problems
 can be defined similarly for other semantics. 

\reminder{For both set and bag-set semantics and for important classes of queries, the $\cal Q$-equivalence problem has the {\em small counterexample property.} Formally this means that if $Q, Q' \in {\cal Q}$ and $Q \equiv_S \hspace{-0.5cm} / \hspace{0.4cm} Q'$ (resp.  $Q \equiv_{BS} \hspace{-0.8cm} / \hspace{0.6cm} Q'$), then there is a database $D$ that contains as many atoms as there are subgoals in $Q$ or $Q'$, such that ${\sc Res}_S(Q,D) \neq  {\sc Res}_S(Q',D)$ (resp. ${\sc Res}_{BS}(Q,D) \neq  {\sc Res}_{BS}(Q',D)$ \reminder{Need to really check this small counterexample property for *bag*-*set* semantics}). For example, the class of positive queries with constants, comparisons and disjunctions has the small counterexample property both under set semantics \cite{ChandraM77} and under bag-set semantics \cite{CohenNS99,NuttSS98}. The following theorem statems that this is also the case for bag semantics. 

\begin{theorem}{(Theorem 2.14 of \cite{Cohen06}: Bag semantics: small counterexamples)}
Let $Q$ and $Q'$ be positive queries, possibly with constants, comparisons, and disjunctions. If $Q$ and $Q'$ are not equivalent then there exists a database $D$ that contains as many {\em unique} atoms as there are subgoals in $Q$ or $Q'$, such that ${\sc Res}_{B}(Q,D) \neq {\sc Res}_{B}(Q',D)$. 
\end{theorem}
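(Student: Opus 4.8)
The plan is to exploit the fact that, under bag semantics, the multiplicity with which a query returns a given answer tuple is a \emph{polynomial} in the multiplicities of the atoms of the database, and that the degree of this polynomial is bounded by the number of relational subgoals of the query. Write $k_Q$ (resp.\ $k_{Q'}$) for the number of relational subgoals occurring in $Q$ (resp.\ $Q'$), and put $k:=\max(k_Q,k_{Q'})$. First I would record the polynomial form of bag evaluation. Fix a finite set $S$ of ground atoms and describe a database $D$ supported on $S$ by the vector $\bar x=(x_a)_{a\in S}$ of its atom multiplicities. For a conjunctive disjunct $Q_j$ of $Q$ with relational subgoals $g_1,\dots,g_{k_j}$ and a fixed candidate answer tuple $\bar t$, the multiplicity of $\bar t$ in ${\sc Res}_B(Q_j,D)$ is
\[
\mathrm{mult}(\bar t,Q_j,D)\ =\ \sum_{h}\ \prod_{i=1}^{k_j} x_{h(g_i)}\,,
\]
where $h$ ranges over the assignments that satisfy the arithmetic comparisons of $Q_j$, send the head vector to $\bar t$, and send every $g_i$ into $S$. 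The key point is that this index set does \emph{not} depend on $D$: whether an assignment satisfies the comparisons is a property of the assignment alone. Hence $\mathrm{mult}(\bar t,Q_j,D)$, as a function of $\bar x$, is a polynomial with nonnegative integer coefficients, homogeneous of degree $k_j$; more generally, for any positive query --- built from relations by selection, projection, product (join) and union --- the multiplicity of an output tuple is a polynomial in the input atom multiplicities of degree at most the number of relational subgoals, since under bag semantics selection filters, projection sums, join multiplies and union adds. Summing over disjuncts, $f_{Q,\bar t,S}:=\mathrm{mult}(\bar t,Q,\cdot)$ is a polynomial in $\{x_a\}_{a\in S}$ of total degree at most $k_Q$, so every monomial occurring in $f_{Q,\bar t,S}$ involves at most $k_Q$ of the variables $x_a$.

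Next I would extract a small distinguishing monomial. Suppose $Q\not\equiv_B Q'$; then there is a database $D$ and a tuple $\bar t$ with $\mathrm{mult}(\bar t,Q,D)\neq\mathrm{mult}(\bar t,Q',D)$. Let $S$ be the set of atoms occurring in $D$. Evaluating the two polynomials $f_{Q,\bar t,S}$ and $f_{Q',\bar t,S}$ at the multiplicity vector of $D$ gives different values, so these polynomials are not identical; pick a monomial $\prod_{a}x_a^{\mu(a)}$ whose coefficients in $f_{Q,\bar t,S}$ and $f_{Q',\bar t,S}$ differ, and let $S':=\{a:\mu(a)>0\}$ be its support. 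Since the monomial occurs with nonzero coefficient in at least one of the two polynomials, and those polynomials have total degree at most $k_Q$, resp.\ $k_{Q'}$, we get $|S'|\le\max(k_Q,k_{Q'})=k$. Now restrict to databases supported on $S'$, i.e.\ substitute $x_a:=0$ for every $a\in S\setminus S'$. This annihilates every monomial that mentions a variable outside $S'$, leaves $\prod_{a\in S'}x_a^{\mu(a)}$ untouched, and cannot cause cancellation among the surviving monomials (distinct monomials in the variables $\{x_a\}_{a\in S'}$ stay distinct). Hence the polynomials $f_{Q,\bar t,S'}$ and $f_{Q',\bar t,S'}$, each in at most $k$ variables, are still not identical.

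Finally I would produce the small counterexample database. A polynomial with integer coefficients that is not the zero polynomial takes a nonzero value at some point with nonnegative-integer coordinates --- a routine induction on the number of variables, using that a nonzero one-variable polynomial has only finitely many roots. Apply this to $f_{Q,\bar t,S'}-f_{Q',\bar t,S'}$ to obtain nonnegative integers $(n_a)_{a\in S'}$ at which the two polynomials differ, and let $D'$ be the database whose atom $a\in S'$ has multiplicity $n_a$ and which contains no other atoms. Then $\mathrm{mult}(\bar t,Q,D')\neq\mathrm{mult}(\bar t,Q',D')$, so ${\sc Res}_B(Q,D')\neq{\sc Res}_B(Q',D')$, while $D'$ has at most $|S'|\le\max(k_Q,k_{Q'})$ distinct atoms --- that is, no more than the number of subgoals of $Q$ or of $Q'$ --- which is exactly the stated bound on the number of \emph{unique} atoms of the counterexample (the multiplicities $n_a$ may of course exceed $1$).

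I expect the only delicate points to be (i) verifying that the polynomial representation is not disturbed by arithmetic comparisons or by disjunction --- which reduces to the observation that comparison satisfaction is independent of the database, so the index set of the defining sum is fixed once the query and $\bar t$ are fixed, while union merely adds polynomials --- and (ii) the bookkeeping in the monomial-extraction step, which is precisely where the bound ``number of subgoals of $Q$ or $Q'$'' enters: it is the degree bound on the evaluation polynomial that forces the distinguishing monomial to have support of size at most $k$. Everything else (the normal form of positive queries as unions of conjunctive queries with comparisons, and the elementary fact that a nonzero integer polynomial is nonzero at some point over $\mathbb{N}$) is routine.
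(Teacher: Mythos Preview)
The paper does not supply its own proof of this statement: the theorem is quoted verbatim from Cohen~\cite{Cohen06}, and the only justification given is the remark that a more general version ``will follow from Corollary~4.9 and Proposition~2.12 of~\cite{Cohen06}.'' (In fact the whole passage containing the theorem sits inside a suppressed \texttt{\textbackslash reminder} block and does not even appear in the compiled paper.) So there is no in-paper argument to compare your attempt against.

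That said, your plan is a correct and self-contained proof. Representing $\mathrm{mult}(\bar t,Q,D)$ as a polynomial in the atom multiplicities, bounding its total degree by the number of relational subgoals, extracting a monomial on which the two polynomials disagree, and restricting the database to the support of that monomial is the standard route to small-model results under bag semantics. Two minor sharpenings: for a disjunctive query the degree of $f_{Q,\bar t,S}$ is actually $\max_j k_j$ over the disjuncts rather than the total $k_Q$ you write, so your bound on $|S'|$ is slightly looser than necessary (though still within the statement as phrased); and in the final step you may as well choose \emph{positive} integers $(n_a)_{a\in S'}$ --- a nonzero polynomial has only finitely many roots in each variable, so positive witnesses exist --- ensuring every atom of $S'$ genuinely occurs in $D'$. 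Neither point affects correctness.
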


A more general version of this result is proved in Section 4 in \cite{Cohen06}. \reminder{Need to double check that a *correct* proof is there.} That theorem will  follow from Corollary 4.9 and Proposition 2.12 of \cite{Cohen06}. 

\reminder{Do I need this paragraph?} The small counterexample property is useful, since it shifts the focus to small databases when trying to determine equivalence. In particular, it is often the case that a counterexample to equivalence of $Q$ and $Q'$ can be created out of the body of $Q$ or $Q'$. Unfortunately, as the following example demonstrates, the small counterexample property does not hold even for simple queries evaluated under combined semantics. This makes characterizing equivalence under combined semantics intuitively more difficult. 

\begin{example}{(Example 2.15 from \cite{Cohen06})} 
\label{cutup-cohen-two-six-ex}
\reminder{Do I need this example?} 
Consider the queries
\begin{tabbing}
$Q_7(X) \leftarrow A(X,Y), \emptyset .$ \\
$Q_8(X) \leftarrow A(X,Y), \{ Y \} .$ 
\end{tabbing}

Over any database that is the size of $Q_7$ or $Q_8$ (i.e., that contains a single atom), $Q_7$ and $Q_8$ return the same values. However, $Q_7 \equiv_C \hspace{-0.6cm} / \hspace{0.4cm} Q_8$, as demonstrated by the database
\begin{tabbing}
Tabbb\= ing \= here \= today \kill
$D = \{ A(1,1), A(1,2) \} ,$  
\end{tabbing}
since ${\sc Res}_C(Q_7,D) = \{ \hspace{-0.1cm}\{ (1) \} \hspace{-0.1cm}\}$ and ${\sc Res}_C(Q_8,D) = \{ \hspace{-0.1cm}\{ (1), (1) \} \hspace{-0.1cm}\}$. 
\end{example}
} 


\subsection{Equivalence and minimization results} 
\label{prev-results-sec} 


{\bf Homomorphisms and set queries.} Given two conditions $\phi(\bar{U})$ and $\psi(\bar{V})$, 
a {\it homomorphism} from $\phi(\bar{U})$ to $\psi(\bar{V})$ is a mapping $h$ from the 
set of terms in $\bar{U}$ to the set of terms in $\bar{V}$ such that (1) $h(c) = c$ for each constant $c$, (2) for each  relational atom $r(U_1,\ldots,U_n)$ of $\phi$ we have that  $r(h(U_1),$ $\ldots,$ $h(U_n))$ is in $\psi$, 
and (3) for each copy-sensitive atom $p(W_1,\ldots,W_n; i)$ of $\phi$ we have that  $p(h(W_1),$ $\ldots,$ $h(W_n); h(i))$ is in $\psi$.  
Given two CCQ $k$-ary queries  
$Q_1(\bar{X}) \leftarrow \phi(\bar{X}, \bar{Y}), M_1$ and $Q_2(\bar{X}') \leftarrow \psi(\bar{X}', \bar{Y}'), M_2,$ a {\it containment mapping} from $Q_1$ to $Q_2$ is a homomorphism $h$ from $\phi(\bar{X}, \bar{Y})$ to $\psi(\bar{X}', \bar{Y}')$ such that $h(\bar{X}) = \bar{X}'$. 


\begin{theorem}{\cite{ChandraM77}} 
\label{cm-thm}
Given two CCQ set queries $Q_1$ and $Q_2$ of the same arity,  $Q_1 \sqsubseteq_S Q_2$ holds if and only if there is a containment mapping from $Q_2$ to $Q_1.$ 
\end{theorem}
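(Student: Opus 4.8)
The plan is to prove the two directions separately, using the standard "canonical database" (frozen body) construction. First I would set up notation: for a CCQ set query $Q_2(\bar{X}') \leftarrow \psi(\bar{X}', \bar{Y}'), \emptyset$, since $Q_2$ is a set query, none of its subgoals carry a meaningful copy variable and $M_2 = \emptyset$, so the combined semantics of $Q_2$ reduces to ordinary set semantics (this is exactly point (1) of the reduction facts stated just before the theorem: $Q \sqsubseteq_S Q'$ iff $Q \sqsubseteq_C Q'$ for set queries). Thus it suffices to reason with $\mathrm{Res}_S$ throughout, and a satisfying assignment of the body is just a homomorphism from the body condition into the database, restricted to the head variables (note $\bar{S}(Q) = \bar{X}$ for a set query, since there are no multiset variables).

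\smallskip
\noindent\emph{(If) direction.} Suppose there is a containment mapping $h$ from $Q_2$ to $Q_1$, i.e.\ a homomorphism $h\colon \psi(\bar{X}',\bar{Y}') \to \phi(\bar{X},\bar{Y})$ with $h(\bar{X}') = \bar{X}$. Let $D$ be an arbitrary database and let $\bar{t} \in \mathrm{Res}_S(Q_1, D)$. Then there is a satisfying assignment $\gamma$ of $\phi$ w.r.t.\ $D$ with $\gamma(\bar{X}) = \bar{t}$; viewed as a homomorphism, $\gamma$ maps every relational atom of $\phi$ into (the relational templates of atoms of) $D$. The composition $\gamma \circ h$ is then a homomorphism from $\psi$ into $D$ fixing constants, hence a satisfying assignment of $Q_2$'s body, and $(\gamma \circ h)(\bar{X}') = \gamma(h(\bar{X}')) = \gamma(\bar{X}) = \bar{t}$. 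Therefore $\bar{t} \in \mathrm{Res}_S(Q_2, D)$. Since $D$ and $\bar{t}$ were arbitrary, $Q_1 \sqsubseteq_S Q_2$, and by the reduction fact this is the same as $Q_1 \sqsubseteq_C Q_2$.

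\smallskip
\noindent\emph{(Only if) direction.} Suppose $Q_1 \sqsubseteq_S Q_2$. Construct the \emph{canonical database} $D_1$ of $Q_1$: freeze the body $\phi(\bar{X},\bar{Y})$ by treating each variable as a distinct fresh constant, and let $D_1$ consist of the relational templates of all subgoals of $\phi$ (each given copy number $1$). Let $\bar{a}$ denote the tuple of (frozen) head variables $\bar{X}$. The identity map on the frozen variables is a satisfying assignment of $Q_1$'s body on $D_1$, so $\bar{a} \in \mathrm{Res}_S(Q_1, D_1)$. By the containment assumption, $\bar{a} \in \mathrm{Res}_S(Q_2, D_1)$, so there is a satisfying assignment $h$ of $\psi$ w.r.t.\ $D_1$ with $h(\bar{X}') = \bar{a}$. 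Reading $h$ back through the freezing, $h$ is precisely a homomorphism from $\psi(\bar{X}',\bar{Y}')$ to $\phi(\bar{X},\bar{Y})$ that fixes constants and sends $\bar{X}'$ to $\bar{X}$ --- that is, a containment mapping from $Q_2$ to $Q_1$.

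\smallskip
I expect the only subtle point --- and hence the main thing to get right --- to be the bookkeeping around copy variables and the $M$-components: one must check that for \emph{set} queries the combined-semantics machinery ($\Gamma$, $\Gamma_{\bar{S}}$, copy numbers, the copy-sensitive satisfaction clause with "$N \geq \gamma i$") genuinely collapses to the classical set-semantics homomorphism picture, so that "satisfying assignment restricted to $\bar{S}(Q)=\bar{X}$" and "homomorphism from the body into the database, projected to the head" coincide. Once that reduction is nailed down (it follows from the definitions in Section~\ref{comb-sem-sec} together with the stated equivalence $\sqsubseteq_S \Leftrightarrow \sqsubseteq_C$ on set queries), the argument is the textbook Chandra--Merlin proof and everything else is routine.
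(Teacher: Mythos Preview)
Your proof is correct and is indeed the standard Chandra--Merlin argument via the canonical (frozen-body) database. However, there is nothing to compare against: the paper does not prove this theorem at all. Theorem~\ref{cm-thm} is stated as a cited classical result from~\cite{ChandraM77} and is used purely as background; the paper's own contributions (CVMs, explicit-wave queries, Theorem~\ref{magic-mapping-prop}) build on top of it rather than reprove it. So your proposal is a sound reconstruction of the textbook proof, but the paper simply imports the statement without argument.
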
 
%
This classic result of~\cite{ChandraM77}  forms the basis for a sound and complete test for set-equivalence of CCQ set queries $Q$ and $Q'$, by definition of set-equivalence $Q \equiv_S Q'$.

{\bf Bag and bag-set queries.} For bag and bag-set semantics, the following conditions are known for CCQ query equivalence. (Query $Q_c$ is a {\em canonical 
representation} of query $Q$ if $Q_c$ is the result of removing all 
duplicate atoms from the condition of $Q$.) 


\begin{theorem}{~\cite{ChaudhuriV93}}
\label{prelim-cv-theorem} Let $Q$ and $Q'$ be CCQ queries. Then (1) When $Q$ and $Q'$ are bag queries, 
$Q \equiv_B Q'$ iff 
 $Q$ and $Q'$ are isomorphic. (2) When $Q$ and $Q'$ are bag-set queries, $Q \equiv_{BS} Q'$ iff 
  $Q_c$ and $Q'_c$ are isomorphic. 
\end{theorem}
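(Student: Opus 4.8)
Here is how I would approach Theorem~\ref{prelim-cv-theorem}. The two directions of each biconditional are of very different difficulty, and I would handle parts~(1) and~(2) by the same ``frozen canonical database'' technique, differing only in how the copy structure of the queries is treated. The ``if'' direction of both parts is immediate: an isomorphism of bag queries (part~(1)), or of the canonical representations $Q_c$ and $Q'_c$ (part~(2)), is merely a renaming of variables, which does not change the set of satisfying assignments on any database; and for bag-set semantics, two duplicate relational subgoals produce identical satisfying assignments, so dropping duplicates does not affect ${\sc Res}_{BS}$. All the content is in the ``only if'' direction.

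For part~(1), fix a bag query $Q$ with body $L_Q$, a bag of $n := |L_Q|$ copy-sensitive atoms; freezing every regular variable of $Q$ to a distinct fresh constant turns the relational templates of $L_Q$ into a bag of ground atoms whose distinct elements I call $a_1,\dots,a_m$. For $\vec N = (N_1,\dots,N_m) \in \mathbb{N}_+^m$, let $D_Q^{\vec N}$ be the database with atoms $a_1,\dots,a_m$, where $a_j$ has copy number $N_j$. For any bag query $S$ of the same arity as $Q$, a satisfying assignment of $S$ on $D_Q^{\vec N}$ that sends the head of $S$ to the frozen head $\bar X^{fr}$ of $Q$ is obtained by choosing a homomorphism $h$ of $S$'s relational templates into $D_Q^{\vec N}$ with $h(\bar X_S) = \bar X^{fr}$, and then, independently for each copy-sensitive subgoal $g$ of $S$, a value in $\{1,\dots,N_{j(h(\hat g))}\}$ for its (necessarily unique) copy variable. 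Hence the multiplicity of $\bar X^{fr}$ in ${\sc Res}_B(S, D_Q^{\vec N})$ is the polynomial $P_{Q\leftarrow S}(\vec N) = \sum_h \prod_{g \in L_S} N_{j(h(\hat g))}$, the sum over such $h$ (there is always the freezing homomorphism when $S = Q$, so $P_{Q\leftarrow Q} \ne 0$). Assuming $Q \equiv_B Q'$ and taking $S = Q$ and $S = Q'$ gives the polynomial identity $P_{Q\leftarrow Q} = P_{Q\leftarrow Q'}$ (they agree at every $\vec N$; making this compatible with the paper's active-domain convention on copy numbers is a routine check, e.g.\ by padding $\vec N$ or adding an auxiliary atom). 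Let $\mu_j$ be the multiplicity of $a_j$ in the frozen subgoal-bag of $Q$. The coefficient of $\prod_j N_j^{\mu_j}$ in $P_{Q\leftarrow Q}$ counts the head-fixing homomorphisms $Q \to Q$ that induce a bijection of subgoal-bags, i.e.\ the head-fixing automorphisms of $Q$, hence is $\ge 1$; on the other side, that coefficient is nonzero only if $|L_{Q'}| = n$, in which case it counts the head-fixing homomorphisms $h\colon Q' \to Q$ that induce a bijection of subgoal-bags. So $|L_{Q'}| = |L_Q|$ and at least one such $h$ exists; running the same argument with $Q$ and $Q'$ exchanged (using $Q'$'s canonical databases) yields a head-fixing $g\colon Q \to Q'$ inducing a bijection of subgoal-bags. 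Finally, $g \circ h$ is a head-fixing endomorphism of $Q'$ inducing a permutation of the finite bag $L_{Q'}$; some power $(g\circ h)^t$ induces the identity permutation, hence fixes every subgoal of $Q'$ syntactically, hence (by safety) fixes every term, so $g\circ h$ is an automorphism. Therefore $h$ is injective on terms; it is also surjective on terms, because every variable of $Q$ occurs in $L_Q = h(L_{Q'})$; so $h$ is an isomorphism and $Q \cong Q'$.

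For part~(2) all copy numbers equal $1$ and the queries have only relational subgoals, so the polynomials above collapse to constants and the argument becomes the classical one: the frozen canonical database of $Q_c$ is just frozen $Q_c$ (duplicates collapse), and for any database $D$ and tuple $\bar a$ the multiplicity of $\bar a$ in ${\sc Res}_{BS}(Q,D)$ is the number of homomorphisms from $Q_c$ to $D$ sending the head to $\bar a$. Expanding the signature by a fresh relation naming the distinguished tuple, $Q \equiv_{BS} Q'$ becomes exactly the statement that $Q_c$ and $Q'_c$ (so expanded) admit equally many homomorphisms into every structure, which by the classical fact that homomorphism counts determine a finite relational structure up to isomorphism yields $Q_c \cong Q'_c$ (respecting heads); Theorem~\ref{cm-thm} supplies homomorphisms in both directions as a consistency check, and the same conclusion also follows by specializing the general combined-semantics equivalence result to bag-set queries. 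The step I expect to be the real obstacle, in both parts, is precisely this final upgrade --- from ``the two queries have matching (copy-weighted) homomorphism counts into the relevant canonical databases'' to ``the canonical forms are genuinely isomorphic'' --- together with the bookkeeping needed to keep the distinguished head variables and the copy-number/active-domain convention under control; once the lemma ``a head-fixing homomorphism inducing a bijection of subgoal-bags, composed with one going the other way, is an automorphism'' is in place via the finite-order-permutation argument, the remainder is routine.
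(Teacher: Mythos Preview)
Your proof is correct. Note first that the paper does not give its own proof of Theorem~\ref{prelim-cv-theorem}: it is stated in Section~\ref{prev-results-sec} as a cited preliminary result from~\cite{ChaudhuriV93} (with the remark in Section~\ref{new-rel-work-sec} that correctness follows from~\cite{Cohen06}). The paper's contribution is the general Theorem~\ref{magic-mapping-prop} and its corollary Theorem~\ref{necess-suff-equiv-cond-thm}, which do specialize to Theorem~\ref{prelim-cv-theorem} since bag and bag-set queries are explicit-wave by Proposition~\ref{suffic-for-expl-wave-prop}.

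That said, your approach to part~(1) --- parametrize the frozen canonical database by a vector of copy numbers, express the answer multiplicity as a polynomial in those parameters, and read off structural information from a distinguished coefficient --- is exactly the engine behind the paper's proof of Theorem~\ref{magic-mapping-prop} (Sections~\ref{db-constr-sec}--\ref{putting-together-f-sec}). Your top monomial $\prod_j N_j^{\mu_j}$ plays the role of the paper's ``wave'' $\mathcal{P}_*^{(Q)}$, and your homomorphisms~$h$ correspond to the paper's monomial classes. The paper's additional machinery (monomial classes with overlapping $\Gamma^{(i)}$-sets, inclusion--exclusion, total orders on copy variables) exists precisely to handle set variables and mixed relational/copy-sensitive subgoals; for pure bag queries these complications disappear and one recovers essentially your argument. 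Your finite-order trick to upgrade ``bag-bijective homomorphisms both ways'' to an isomorphism is a clean substitute for the paper's SCVM bookkeeping.

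For part~(2) you take a genuinely different route by invoking the Lov\'asz homomorphism-counting theorem (with the head marked by a fresh relation). The paper would instead obtain the bag-set case from the same CVM-based criterion, observing that for bag-set queries a SCVM between canonical representations must be an isomorphism. Your argument is shorter and more self-contained for this special case; the paper's buys uniformity across all explicit-wave CCQ queries.
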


 (Two CCQ queries $Q_1$ and $Q_2$, with respective sets of multiset variables $M_1$ and $M_2$, are {\em isomorphic} if there exists a one-to-one containment mapping from $Q_1$ {\em onto} $Q_2$ such that the mapping induces a bijection from $M_1$ to $M_2$, and there exists another containment mapping from $Q_2$ onto $Q_1$, with (symmetrically) the same properties.) 



{\bf Combined-semantics queries.} The next result is a sufficient condition of~\cite{Cohen06} for equivalence of two queries under combined semantics. In \cite{Cohen06},  Cohen formulates each of Definition~\ref{multiset-homom-def} and Theorem~\ref{cutup-cohen-thm-three-three} for CCQ queries that may also contain negation and inequality comparisons.  
(In condition (3) of Definition~\ref{multiset-homom-def} we treat the query conditions, which are conjunctions of atoms, as bags of the same atoms. Given a bag $B$, we call a set $S$ the {\em core-set} of $B$ if $S$ is the result of dropping all duplicates of all elements of $B$.) 


\begin{definition}{Multiset-homomorphism \cite{Cohen06}} 
\label{multiset-homom-def}
Let\linebreak $Q(\bar{X}) \leftarrow L,  M$ and $Q'(\bar{X}') \leftarrow L',  M'$ be two $k$-ary CCQ queries, for $k \geq 0$. 
Let $\varphi$ be a mapping from the terms of $Q'$ to the terms of $Q$.\footnote{We also apply $\varphi$ to atoms and conjunctions of atoms, in the obvious way, e.g., $\varphi(p(\bar{S})) = p(\varphi(\bar{S}))$.} We say that $\varphi$ is a {\em multiset-homomorphism from $Q'$ to $Q$} if  $\varphi$ satisfies all of the following conditions:
\begin{enumerate}
	\item $\varphi \bar{X}' = \bar{X}$ ;
	\item $\varphi$ is the identity mapping on constants;
	\item the core-set of $\varphi L'$ is a subset of the core-set of $L$ ;
	\item $\varphi M' \subseteq M$ ; and
	\item $\varphi Y \neq \varphi Y'$ for every two variables $Y \neq Y' \in M'$. 
\end{enumerate}
\vspace{-0.6cm}
\end{definition} 

For every mapping  $\varphi$ that satisfies conditions 1--3 of Definition~\ref{multiset-homom-def}, 
we call  $\varphi$ a {\em generalized containment mapping (GCM).} 

\nop{
Conditions 1 and 3 state, respectively, that $\varphi$ maps the head to the head and the body into the body. 
Conditions 4 and 5 state that $\varphi$ is an injective mapping of multiset variables to multiset variables. 
Note that the only difference between a regular homomorphism \cite{ChandraM77} and a multiset-homomorphism is the addition of Conditions 4 and 5. 
} 

We say that two CCQ queries $Q$ and $Q'$ are {\em multiset homomorphic} whenever there is a multiset-homomorphism from $Q$ to $Q'$ and another from  $Q'$ to  $Q$. 


\nop{
\begin{definition}{Definition 3.2 from \cite{Cohen06}: Multiset-homomorphic} 
Consider the queries $Q(\bar{X}) \leftarrow \vee_{i=1}^n L_i, M_i$ and 
$Q'(\bar{X}') \leftarrow \vee_{j=1}^m L'_j, M'_j$ . We say that $Q$ and $Q'$ are {\em multiset homomorphic} if
\begin{itemize}
	\item $n = m$, that is $Q$ and $Q'$ have the same number of disjuncts; and
	\item there is a permutation $\pi$ of $\{ \ 1,\ldots,n \ \}$ such that for all $i$, there is a multiset-homomorphism from $Q_{| i}$ to $Q'_{| \pi(i)}$ and from  $Q'_{| \pi(i)}$ to  $Q_{| i}$ . 
\end{itemize}
\end{definition}
} 


\begin{theorem}{\cite{Cohen06}} 
\label{cutup-cohen-thm-three-three} 
Given CCQ queries $Q$ and $Q'$. If $Q$ and $Q'$ are multiset homomorphic then $Q \equiv_C Q'$. 
\end{theorem}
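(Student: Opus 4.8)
The plan is to show directly, for an arbitrary database $D$, that the two multiset-homomorphisms between $Q$ and $Q'$ induce a bijection between the bags of satisfiably-extendible assignments ${\sc Res}_C(Q,D)$ and ${\sc Res}_C(Q',D)$. It suffices, by the definition of $\equiv_C$ (i.e., by proving containment in both directions), to establish the following one-directional claim: if there is a multiset-homomorphism $\varphi$ from $Q'$ to $Q$, then $Q \sqsubseteq_C Q'$; the reverse containment then follows by symmetry from the multiset-homomorphism $Q \to Q'$. So fix $\varphi: Q' \to Q$ satisfying conditions (1)--(5) of Definition~\ref{multiset-homom-def}, fix a database $D$, and let $Q(\bar X) \leftarrow L, M$ and $Q'(\bar X') \leftarrow L', M'$.

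First I would set up the map on satisfying assignments: given $\gamma \in \Gamma(Q,D)$ (a satisfying assignment of $L$), the composition $\gamma \circ \varphi$ is a mapping of the terms of $Q'$ to values, and I would check it is a satisfying assignment of $L'$ with respect to $D$ --- this is immediate from condition (3) (the core-set of $\varphi L'$ is a subset of the core-set of $L$, so every atom of $L'$ is carried by $\varphi$ to an atom present in $L$, whose image under $\gamma$ is witnessed in $D$ by the satisfying-assignment conditions; for copy-sensitive atoms the inequality $N \ge \gamma\varphi i$ is inherited) together with condition (2) (identity on constants). Restricting to the nonset variables $\bar S(Q')$ gives a satisfiably-extendible assignment, hence a tuple $\gamma\varphi(\bar X') = \gamma(\bar X)$ by condition (1) --- the same answer tuple. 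The substantive content is the multiplicity count: I must show the induced map on the bag $\Gamma_{\bar S}(Q,D)$ of extendible restrictions into the bag $\Gamma_{\bar S}(Q',D)$ is injective, so that every answer tuple of $Q$ appears in ${\sc Res}_C(Q',D)$ with at least the multiplicity it has in ${\sc Res}_C(Q,D)$.

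The key step, and the main obstacle, is precisely this injectivity on restricted (extendible) assignments. Two distinct extendible restrictions $\delta_1, \delta_2$ of $\bar S(Q)$ differ on some nonset (i.e., distinguished or multiset) variable of $Q$. If they differ on a distinguished variable, the answer tuples differ and there is nothing to prove within a fixed answer tuple; so the delicate case is when $\delta_1,\delta_2$ agree on $\bar X$ but differ on a multiset variable $Y \in M$ of $Q$. I would argue that $Y$ must lie in the image $\varphi(M')$ --- more precisely, that every multiset variable of $Q$ relevant to distinguishing restrictions is hit by $\varphi$ --- using conditions (4) and (5): condition (4) says $\varphi M' \subseteq M$, and condition (5) says $\varphi$ is injective on $M'$, so $\varphi$ restricts to a bijection between $M'$ and $\varphi(M') \subseteq M$; one must then argue (this is where care is needed) that the multiset variables of $Q$ not in $\varphi(M')$ do not affect the multiplicity of answers of $Q$ in a way that the composition could collapse --- intuitively, because when we pull back along $\varphi$, differing values on a $\varphi$-image multiset variable force the composed restrictions to differ on the corresponding $M'$ variable, preserving the count. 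Handling the multiset variables of $Q$ outside $\varphi(M')$ correctly, and confirming they cannot cause two distinct $Q$-restrictions to map to one $Q'$-restriction, is the heart of the proof; I expect it to require a careful bookkeeping argument about which variables of $Q$ are "seen" by the pullback and an appeal to the fact that $Q'$-restrictions are indexed by $\bar S(Q') \supseteq M'$.

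Finally, combining injectivity (multiplicity of each tuple in ${\sc Res}_C(Q,D)$ is at most its multiplicity in ${\sc Res}_C(Q',D)$) over all answer tuples gives ${\sc Res}_C(Q,D) \sqsubseteq {\sc Res}_C(Q',D)$ as bags, i.e., $Q \sqsubseteq_C Q'$; the symmetric multiset-homomorphism $Q \to Q'$ yields $Q' \sqsubseteq_C Q$, and therefore $Q \equiv_C Q'$. I would remark that the special cases recorded earlier --- Theorem~\ref{cm-thm} for set queries ($M = M' = \emptyset$, so conditions (4)--(5) are vacuous and this reduces to ordinary containment mappings) and Theorem~\ref{prelim-cv-theorem} for bag and bag-set queries (where isomorphism makes the induced maps genuine bijections) --- are consistent with this argument, which is a useful sanity check on the multiplicity bookkeeping.
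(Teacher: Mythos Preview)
Your overall strategy---compose satisfying assignments with the multiset-homomorphism and argue injectivity on restricted assignments---is exactly the right shape, and matches the paper's argument. But there is a genuine gap in the step you yourself flag as ``the heart of the proof.''

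You attempt to prove the one-directional claim \emph{``if there is a multiset-homomorphism $\varphi$ from $Q'$ to $Q$, then $Q \sqsubseteq_C Q'$''} using only $\varphi$. This claim is false. Take $Q(X)\leftarrow p(X,Y),\ \{Y\}$ and $Q'(X)\leftarrow p(X,Y),\ \emptyset$. The identity is a multiset-homomorphism from $Q'$ to $Q$ (conditions (4) and (5) are vacuous since $M'=\emptyset$), yet on $D=\{p(1,1),p(1,2)\}$ the query $Q$ returns $(1)$ twice while $Q'$ returns it once, so $Q\not\sqsubseteq_C Q'$. In terms of your argument: the two distinct restrictions $\delta_1,\delta_2\in\Gamma_{\bar S}(Q,D)$ differ exactly on $Y\in M\setminus\varphi(M')$, and the composed restrictions $\gamma_1\circ\varphi$ and $\gamma_2\circ\varphi$ to $\bar S(Q')$ coincide. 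No amount of bookkeeping on ``which variables of $Q$ are seen by the pullback'' will save this---the collapse is real.

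The fix, and what the paper does, is to use \emph{both} multiset-homomorphisms before attempting either containment. From $\varphi:Q'\to Q$ injective on $M'$ with $\varphi(M')\subseteq M$ you get $|M'|\le|M|$; from the reverse homomorphism you get $|M|\le|M'|$; hence $|M|=|M'|$ and $\varphi|_{M'}$ is a \emph{bijection} onto $M$. Now your injectivity step is immediate: if $\delta_1,\delta_2$ differ on some $Y\in M$, take the unique $Y'\in M'$ with $\varphi(Y')=Y$; then the composed restrictions differ on $Y'\in\bar S(Q')$. So the structure of your proof is correct, but the two homomorphisms must be used together at the outset to force surjectivity of $\varphi$ onto $M$, rather than treated as independent inputs to two separate containment arguments.
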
 

\reminder{
\begin{proof}
Consider the queries $Q(\bar{X}) \leftarrow L, M$ and $Q'(\bar{X}') \leftarrow L', M'$ . Let $\varphi$ be a multiset-homomorphism from $Q$ to $Q'$, and let $\varphi'$ be a multiset-homomorphism from $Q'$ to $Q$. Since $\varphi$ and $\varphi'$ are both injective on the multiset variables of $Q'$ and $Q$, respectively, it follows that $Q$ and $Q'$ have the same number of multiset variables. Therefore, $\varphi$ and $\varphi'$ are bijective on the multiset variables of $Q'$ and $Q$, respectively. 

Let $D$ be a database with active domain $adom(D)$, and let $\bar{d}$ be a tuple of constants in $adom(D)$. Suppose that ${\sc Res}_C(Q',D)$ contains $k > 0$ occurrences of $\bar{d}$. We show that ${\sc Res}_C(Q,D)$ contains at least $k$ occurrences of $\bar{d}$. This is sufficient in order to prove equivalence of $Q$ and $Q'$, since by symmetric arguments one can show that every tuple returned by $Q$ is returned at least as many times by $Q'$. 

Let $\Gamma$ be the set of satisfying assignments of $Q'$ into $D$ that map $\bar{X}'$ to $\bar{d}$. Let $\Gamma_{\bar{S}}$ be the restriction of assignments in $\Gamma$ to the variables in $\bar{S}(Q')$. In other words, $\Gamma_{\bar{S}}$ contains all satisfiably extendible assignments of the nonset variables of $Q'$ into $D$. There are exactly $k$ assignments $\gamma_1,\ldots,\gamma_k \in \Gamma_{\bar{S}}$. We associate each assignment $\gamma_i \in \Gamma_{\bar{S}}$ with an assignment  $\gamma^*_i \in \Gamma$ such that $\gamma_i$ is the restriction of  $\gamma_i^*$ to the variables in  ${\bar{S}}(Q')$. If there are several candidates for  $\gamma_i^*$, we choose one arbitrarily. 

Since $\varphi$ is a multiset-homomorphism from $Q$ to $Q'$, we have that $\varphi l \in L'$ for each 
subgoal $l$ of $Q$. Since  $\gamma_i^*$ is a satisfying assignment of $Q'$ into $D$, we have that  $\gamma_i^*(L')$ is satisfied by the database. (In other words, all the ground atoms in $\gamma_i^*(L')$ appear in $D$.) 
By composing the assignments we derive that $\gamma_i^* \circ \varphi(L)$ is satisfied by $D$.  Hence, $\gamma_i^* \circ \varphi$ is a satisfying assignment of $Q$ into $D$. In addition, $\gamma_i^* \circ \varphi(\bar{X}) = \bar{d}$, since $\varphi(\bar{X}) = \bar{X}'$. 

Finally, we show that no two assignments $\gamma_i^* \circ \varphi$ and $\gamma_j^* \circ \varphi$ ($i \neq j$) agree on all the multiset variables of $Q$. By the definition of $\Gamma_{\bar{S}}$, it holds that $\gamma_i$ and $\gamma_j$ differ on at least one multiset variable of $Q'$. Hence $\gamma_i^*$ and $\gamma_j^*$ also differ on at least one multiset variable of $Q'$. Since $\varphi$ maps multiset variables to multiset variables and is bijective on the multiset variables of $Q$, we derive that $\gamma_i^* \circ \varphi$ and $\gamma_j^* \circ \varphi$ differ on at least one multiset variable {\em of} $Q$. Therefore, the restrictions of the assignments $\gamma_j^* \circ \varphi$ (for all $j \leq k$) to $\bar{S}(Q)$ are all different satisfiably extendible assignments of the nonset variables of $Q$ into the database. We conclude that ${\sc Res}_C(Q,D)$ contains at least $k$ occurrences of $\bar{d}$. 

Our arguments apply for all $D$ and for all $\bar{d}$. Therefore, $Q \equiv_C Q'$, as required. 
\end{proof}

} 

{\em Note 1.} Theorem~\ref{cutup-cohen-thm-three-three} is proved in \cite{Cohen06} via showing that for two (generalized) CCQ queries $Q$ and $Q'$, the existence of a multiset-homomorphism from $Q'$ to $Q$ implies $Q$ $\sqsubseteq_C$ $Q'$. 

It is shown in \cite{Cohen06} that the sufficient equivalence condition of Theorem~\ref{cutup-cohen-thm-three-three} is not necessary for the query classes considered in \cite{Cohen06}. \reminder{How about ***CCQ*** queries, that is for my case???} 


\nop{
We demonstrate the notion of multiset-homomorphic queries in the following example. 

\begin{example}{(Example 3.4 from \cite{Cohen06})} 
Consider the following queries. 
\begin{tabbing}
$Q_6(X) \leftarrow P(X,Y), P(X,Z), P(W,U), Y < Z,  \{ Y \} .$ \\
$Q_7(X) \leftarrow P(X,Y), P(X,Z), Y < Z, \{ Y \} .$ \\
$Q_8(X) \leftarrow P(X,Y), P(X,Z), P(W,U), U < Z,  \{ Y \} .$ 
\end{tabbing}

It is not difficult to see that the queries $Q_6$ and $Q_7$ are multiset-homomorphic. Therefore, by Theorem~\ref{cutup-cohen-thm-three-three}, $Q_6 \equiv_C Q_7$. Now, consider queries $Q_6$ and $Q_8$. The mapping $\varphi$ defined as
\begin{tabbing}
$\{ X / X, Y / Y, Z / Z, W / X, U / Y \}$ 
\end{tabbing}
is a multiset-homomorphism from $Q_8$ to $Q_6$. Observe that $\varphi$ \reminder{Did she mean $\varphi^{-1}$?} is not a multiset-homomorphism from $Q_6$ to $Q_8$. In fact, no such multiset-homomorphism exists, since  $Q_6 \equiv_C \hspace{-0.6cm} / \hspace{0.4cm} Q_8$.  
\end{example}
} 



\nop{ 

\reminder{Everything in the remainder of this subsection -- below this sentence/reminder -- is the old version}

\reminder{Note that under set semantics, regularization (i.e., removal of duplicate subgoals) always happens implicitly. (But under bag-set semantic \cite{ChaudhuriV93} spoke explicitly about the ``canonical version'' of query, that is about the {\em regularized version} of query.) However, it is only under ``truly combined'' semantics, that is in cases where both copy-sensitive and relational subgoals can be present in the same query, that we can see nontrivial examples (such as $Q$ and $Q'$ of Example~\ref{regulariz-example}) of (regularized and non-regularized) queries being combined-semantics equivalent to each other; the point here is to show that to a query with copy-sensitive subgoal $s(\bar{X}; i)$, one can add {\em any number} of copies of {\em relational} subgoal $s(\bar{X})$ -- thus any query is combined-semantics equivalent to an {\em infinite number} of non-isomorphic such queries; but the regularized version is always unique). Note that while under set semantics, regularization always happens implicitly, the impossibility to remove duplicate subgoals under {\em bag} semantics (in the standard, as opposed to the combined-semantics, notation) makes us speak about regularization explicitly.}

\begin{definition}{Regularized version of CCQ query}
\label{regulariz-query-def}
Given a CCQ query $Q$, its {\em regularized version of} is the result of removing from $Q$ (i) all duplicates of all relational subgoals of $Q$, and (ii) all relational subgoals of $Q$  of the form $s(\bar{X})$ such that $Q$ also contains a copy-sensitive subgoal of the form $s(\bar{X}; i)$, for some copy variable $i$, where $s(\bar{X})$ and $s(\bar{X}; i)$ use the same predicate name and use the same vector $\bar{X}$ for all the terms except $i$. 
\end{definition}

\reminder{Point out in Example~\ref{regulariz-example} that ``duplicates'' of copy-sensitive subgoals cannot be removed.}

\begin{example}
\label{regulariz-example}
Consider a CCQ query $Q$ and its regularized version $Q'$. By Proposition~\ref{regulariz-equiv-prop}, $Q \equiv_C Q'$.  
\nop{
\begin{tabbing}
tab me he \= dudu \kill
$Q(X) \leftarrow p(X,Y; i), \ p(X,Y; j), \ p(X,Y), \ p(X,Y),$ \\
\> $p(X,Z), \ p(X,X), \ \{ Y,i,j \} .$ \\
$Q'(X) \leftarrow p(X,Y; i), \ p(X,Y; j), \ p(X,Z),  \ p(X,X),$ \\
\> $\{ Y,i,j \} .$ \\
$Q''(X) \leftarrow p(X,Y; i), \ p(X,Z),  \ p(X,X), \ \{ Y,i \} .$ \\ 
$Q'_m(X) \leftarrow p(X,Y; i), \ p(X,Y; j),  \ p(X,X), \ \{ Y,i,j \} .$ \\
$Q''_m(X) \leftarrow p(X,Y; i), \ p(X,Y; j), \ p(X,Y),  \ p(X,X),$ \\
\> $\{ Y,i,j \} .$ 
\end{tabbing}
} 

\begin{tabbing}
tab me he \= dudu \kill
$Q(X) \leftarrow p(X,Y; i), \ p(X,Y; j), \ p(X,Y), \ p(X,Y),$ \\
\> $p(X,Z), \ \{ Y,i,j \} .$ \\
$Q'(X) \leftarrow p(X,Y; i), \ p(X,Y; j), \ p(X,Z), \ \{ Y,i,j \} .$ \\
$Q''(X) \leftarrow p(X,Y; i), \ p(X,Z), \ \{ Y,i \} .$ \\ 
$Q'_m(X) \leftarrow p(X,Y; i), \ p(X,Y; j),  \ \{ Y,i,j \} .$ \\
$Q''_m(X) \leftarrow p(X,Y; i), \ p(X,Y; j), \ p(X,Y),  \ p(X,X),$ \\
\> $\{ Y,i,j \} .$ 
\end{tabbing}

Observe that deleting from $Q'$ a ``duplicate'' $p(X,Y; j)$ of the {\em copy-sensitive} subgoal $p(X,Y; i)$ of $Q$ results in a query $Q''$ that is {\em not} equivalent to $Q$ under combined semantics. Indeed, on the database $D = \{ p(1,1; 2) \}$, the bag ${\sc Res}_C(Q,D)$ has four copies of the tuple $t = (1)$, whereas the bag   ${\sc Res}_C(Q'',D)$ has only two copies of $t$. 

Finally, query $Q'_m$ is obtained from query $Q'$ by removing its subgoal $p(X,Z)$. As we will see in Section~\ref{minimiz-sec} \reminder{Must replace this section reference by reference to specific proposition(s)}, $Q' \equiv_C Q'_m$. (As $Q \equiv_C Q'$ by Proposition~\ref{regulariz-equiv-prop}, we have that $Q \equiv_C Q' \equiv_C Q'_m$.)  
\end{example}

\begin{proposition}
\label{regulariz-equiv-prop}
Given a CCQ query $Q$, let $Q'$ be a regularized version of $Q$. Then (1) $Q'$ is unique and can be computed in polynomial time in the size of $Q$, and (2) $Q \equiv_C Q'$. 
\end{proposition} 

\begin{proof}
(1) is immediate from Definition~\ref{regulariz-query-def}. Toward proving (2), fix an arbitrary database $D$. 
By Definition~\ref{regulariz-query-def}, the sets of all variables of $Q$ and $Q'$ are the same, as are the sets of head variables of $Q$ and $Q'$, as are the sets of multiset variables of $Q$ and $Q'$. Moreover, each element $t$ of $\Gamma(Q,D)$ is also an element of $\Gamma(Q',D)$ and vice versa, by Definition~\ref{regulariz-query-def}. Hence ${\sc Res}_C(Q,D)$ and ${\sc Res}_C(Q',D)$ are the same as bags. Recall that $D$ was chosen arbitrarily; therefore $Q \equiv_C Q'$ holds. 
\end{proof} 


\begin{definition}{Deregularized version of CCQ query}
\label{deregulariz-query-def}
Let $Q$ be a CCQ query, let $Q'$ be its regularized version, and let $T$ be the set of relational versions of all copy-sensitive subgoals of $Q$. Then a {\em deregularized version of} $Q$ is obtained from $Q'$ by adding $T$ to its condition. 
\end{definition} 

For instance, query $Q''_m$ in Example~\ref{regulariz-example} is a deregularized version of both $Q$ and $Q'_m$. 

\begin{proposition} 
\label{deregulariz-equiv-prop}
Given a CCQ query $Q$, let $Q'$ be a deregularized version of $Q$. Then (1) $Q'$ is unique and can be computed in polynomial time in the size of $Q$, and (2) $Q \equiv_C Q'$. 
\end{proposition} 

The proof of Proposition~\ref{deregulariz-equiv-prop} is straightforward.

} 

\vspace{-0.2cm} 

\section{Containment and Mappings} 
\label{containment-mappings-sec} 

In this section, for combined-semantics containment of CCQ queries, we outline two necessary conditions, Theorems~\ref{not-same-num-multiset-vars-thm} and~\ref{gen-ce-cm-thm}, which are proved in \cite{MyLibkinIcdt12journalSubmission}, and introduce a sufficient condition, Theorem~\ref{cvm-containm-thm}. The latter result properly generalizes both (i) the sufficient condition outlined in \cite{ChaudhuriV93} for bag containment of CQ queries, and  (ii) the general sufficient containment condition for CCQ queries that can be obtained from \cite{Cohen06}. 
To formulate Theorem~\ref{cvm-containm-thm}, we introduce covering mappings (CVMs) between CCQ queries. We use CVMs in our results throughout the remainder of this paper. 

Throughout this paper, we use the notation $Q(\bar{X}) \leftarrow L,M$ and $Q'(\bar{X}') \leftarrow L', M'$ for the definitions of CCQ queries $Q$ and $Q'$. The conditions of $Q$ and $Q'$ may have constants. 

\reminder{Must explain in this section by which results of this section the equivalence/nonequivalence claims of my intro examples hold: Example~\ref{weird-two-ex} and Example~\ref{mh-vs-sscm-ex}.} 

\subsection{Necessary conditions for containment} 
\label{containm-necess-cond-sec} 

We outline here two necessary conditions for a CCQ query $Q$ being combined-semantics contained in CCQ query $Q'$, Theorems~\ref{not-same-num-multiset-vars-thm} and~\ref{gen-ce-cm-thm}. Both results are proved in \cite{MyLibkinIcdt12journalSubmission}; we will need them in this current paper in our discussion of Theorem~\ref{magic-mapping-prop}.

\begin{theorem}
\label{not-same-num-multiset-vars-thm}
Let $Q$ and $Q'$ be two $k$-ary CCQ queries, for a $k$ $\geq$ $0$. 
Then $Q \sqsubseteq_C Q'$ implies both 
$|M_{copy}|$ $\leq$ $|M'_{copy}|$ and $|M_{noncopy}|$ $\leq$ $|M'_{noncopy}|$. 
\end{theorem}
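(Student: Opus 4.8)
The plan is to prove the contrapositive at the level of each inequality separately: if $|M_{copy}| > |M'_{copy}|$, then $Q \not\sqsubseteq_C Q'$, and similarly if $|M_{noncopy}| > |M'_{noncopy}|$. In both cases I would construct a ``separating'' database on which $Q$ returns some tuple with strictly greater multiplicity than $Q'$ does. The natural candidate is a \emph{canonical-style database} built from the condition $L$ of $Q$: freeze all head and set variables to distinct fresh constants, and for each multiset variable allow it to range over a suitable small set of constants so that the number of satisfying assignments of $Q$ on this database grows in a way controlled by $|M|$ (more precisely, by $|M_{noncopy}|$ and $|M_{copy}|$ separately). For the copy-variable part, a single atom $p(\bar c; N)$ with a large copy number $N$ already yields $N^{|M_{copy}|}$ distinct assignments restricted to the copy variables (since each copy variable independently chooses a value in $\{1,\ldots,N\}$), so by making $N$ large I can force $Q$'s multiplicity to outstrip any fixed power coming from $Q'$'s smaller number of copy variables.

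The key steps, in order, would be: (1) Reduce to the Boolean case, or more precisely fix an attainable head tuple $\bar d$ and work only with assignments mapping $\bar X$ to $\bar d$ — this is legitimate because $Q \sqsubseteq_C Q'$ requires the multiplicity inequality tuplewise. (2) For the $M_{copy}$ inequality: build $D$ by taking the canonical database of $Q$ (one ground atom per subgoal, with all regular nondistinguished variables mapped to fresh distinct constants too, so that the regular part of the satisfying assignment is essentially forced to be the identity), but give every ground atom the same large copy number $N$. Count $\lvert\Gamma_{\bar S}(Q,D)\rvert$: because copy variables are all distinct and each copy-sensitive atom's copy variable can independently take any value in $\{1,\ldots,N\}$, one gets a multiplicity of order $N^{|M_{copy}|}$ (times a factor from $M_{noncopy}$, which on this rigid database I can arrange to be $1$). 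Then bound $\lvert\Gamma_{\bar S}(Q',D)\rvert$ from above: \emph{any} satisfying assignment of $Q'$ into $D$ must, on its copy variables, map into $\{1,\ldots,N\}$, and the regular part ranges over a fixed finite set independent of $N$; hence $Q'$'s multiplicity is $O(N^{|M'_{copy}|})$. Choosing $N$ large enough and using $|M_{copy}| > |M'_{copy}|$ gives the strict inequality, contradicting $Q \sqsubseteq_C Q'$. (3) For the $M_{noncopy}$ inequality: the same idea but now the ``blow-up'' parameter is the number of distinct constants we let the noncopy multiset variables take. Replace each atom of $Q$ by $m$ copies obtained by substituting $m$ distinct fresh constants for one chosen noncopy multiset variable at a time — more carefully, build a product-style database so that each of the $|M_{noncopy}|$ noncopy multiset variables can independently choose among $m$ values, giving $Q$ multiplicity $\Theta(m^{|M_{noncopy}|})$, while $Q'$'s noncopy multiset variables (only $|M'_{noncopy}|$ of them) give it multiplicity $O(m^{|M'_{noncopy}|})$; set variables of $Q'$ contribute nothing to multiplicity by definition of combined semantics, and $Q'$'s copy variables are pinned to the fixed copy numbers present in $D$ if we keep those copy numbers bounded. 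Again pick $m$ large.

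The main obstacle is step (2)/(3)'s \emph{upper} bound on $\lvert\Gamma_{\bar S}(Q',D)\rvert$: one must argue that the \emph{regular} (non-multiset, non-copy) part of any satisfying assignment of $Q'$ into the carefully chosen $D$ lies in a set whose size does not depend on the blow-up parameter, so that all the growth in $Q'$'s multiplicity is genuinely captured by the exponent $|M'_{copy}|$ (resp. $|M'_{noncopy}|$). This requires designing $D$ so that, although it has many atoms / large copy numbers, its \emph{active domain of regular constants} (for the noncopy case) stays small, or the relevant positions are ``rigid''; the standard trick is to make all the fresh constants used for $Q$'s frozen variables mutually distinct and to ensure $Q'$ cannot exploit this freedom beyond independently routing each of its multiset variables. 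Getting the combinatorics of this product construction exactly right — so that $Q$'s count is $\geq c\, m^{|M_{noncopy}|}$ for a positive constant $c$ independent of $m$, while $Q'$'s count is $\leq C\, m^{|M'_{noncopy}|}$ — is the technical heart of the argument; everything else is bookkeeping with Definitions~\ref{query-semantics-def} and~\ref{ccq-def}. Since the theorem is attributed to \cite{MyLibkinIcdt12journalSubmission}, I would expect the actual proof there to follow essentially this ``blow-up on one class of multiset variables at a time'' strategy.
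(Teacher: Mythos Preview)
The paper does not actually prove this theorem here; it states the result and attributes the proof to \cite{MyLibkinIcdt12journalSubmission}. So there is no in-paper proof to compare against directly. That said, your plan is the standard one and is correct: the two counting arguments (blow up copy numbers to isolate $|M_{copy}|$, blow up the domain of noncopy multiset variables to isolate $|M_{noncopy}|$) are exactly how one separates the two exponents. Your upper-bound step for $Q'$ is sound: with head variables pinned to $\bar d$, the elements of $\Gamma_{\bar S}(Q',D)$ are determined by the values of the multiset variables of $Q'$, and on your databases the noncopy ones range over an active domain of size $O(m)$ while the copy ones range over $\{1,\dots,N\}$, giving the claimed $O(m^{|M'_{noncopy}|})$ and $O(N^{|M'_{copy}|})$ bounds respectively.

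It is worth noting that the product-style database you sketch for the noncopy case is precisely the construction the paper develops at length in Section~\ref{db-constr-sec} (the family $\{D_{\bar N^{(i)}}(Q)\}$), where each noncopy multiset variable $Y_j$ is given its own set $S_j^{(i)}$ of $N_j^{(i)}$ fresh constants and the database is the union of canonical databases over all choices. The paper builds this machinery for the much harder Theorem~\ref{magic-mapping-prop}, but it is exactly the ``product construction'' you need; Propositions~\ref{Observation_one_two} and~\ref{Observation_one_four} there formalize the lower bound on $Q$'s multiplicity that you want. So your instinct that the cited proof follows this blow-up strategy is well supported by the infrastructure the paper itself lays down.
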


\noindent 
(For a set $S$ we denote by $|S|$ the cardinality of $S$.)  

We call a pair $(Q,$ $Q')$ of CCQ queries a {\em containment-compatible CCQ pair} if (i) The (nonnegative) head arities of $Q$ and $Q'$ are the same; (ii) $|M_{copy}|$ $\leq$ $|M'_{copy}|$; and (iii) $|M_{noncopy}|$ $\leq$ $|M'_{noncopy}|$. 
(Note the asymmetry in the notation for the pair.) 
Further, we call a pair $(Q,$ $Q')$ of CCQ queries an {\em equivalence-compatible CCQ pair} if each of $(Q,$ $Q')$ and $(Q',$ $Q)$ is a containment-compatible CCQ pair. 
By Theorem~\ref{not-same-num-multiset-vars-thm} we have that, whenever $Q \sqsubseteq_C Q'$ ($Q \equiv_C Q'$, respectively) holds, then  $(Q,$ $Q')$ is a containment-compatible (an equivalence-compatible, respectively) CCQ pair.  

The next result, Theorem~\ref{gen-ce-cm-thm} (also proved in \cite{MyLibkinIcdt12journalSubmission}), generalizes the ``only-if'' part of the classic result of \cite{ChandraM77}, see Theorem~\ref{cm-thm} in Section~\ref{prev-results-sec}, to CCQ queries. For the definition of generalized containment mapping, {\em GCM,} see Section~\ref{prev-results-sec}.  We begin by introducing another definition that we need to formulate Theorem~\ref{gen-ce-cm-thm}. 

For a CCQ query $Q$, we say that 
CCQ query $Q_{ce}$ is a {\em copy-enhanced version of} $Q$ if $Q_{ce}$ is the result of adding a distinct copy variable to each relational subgoal of $Q$. We can show that for a query $Q$, all copy-enhanced versions of $Q$ are identical up to renaming of the copy variables introduced in the construction of  $Q_{ce}$. A formalization is given in \cite{MyLibkinIcdt12journalSubmission}. 

We are now ready to formulate Theorem~\ref{gen-ce-cm-thm}. 

\begin{theorem} 
\label{gen-ce-cm-thm} 
Given CCQ queries $Q$ and $Q'$ such that $Q \sqsubseteq_C Q'$. Then there exists a GCM from $Q'_{ce}$ to $Q_{ce}$.  
\end{theorem}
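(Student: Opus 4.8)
\textbf{Proof proposal for Theorem~\ref{gen-ce-cm-thm}.}

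The plan is to reduce the statement to a combination of the classical Chandra--Merlin argument (Theorem~\ref{cm-thm}) with the combined-semantics machinery, exploiting the fact that the copy-enhanced versions $Q_{ce}$ and $Q'_{ce}$ behave, for the purpose of a single satisfying assignment, like ordinary CQ queries once copy variables are treated as additional output-irrelevant but assignment-tracking terms. First I would fix notation: write $Q(\bar X)\leftarrow L,M$ and $Q'(\bar X')\leftarrow L',M'$, and let $Q_{ce}(\bar X)\leftarrow L_{ce},M_{ce}$, $Q'_{ce}(\bar X')\leftarrow L'_{ce},M'_{ce}$ be the copy-enhanced versions, so that every subgoal of $L_{ce}$ (resp.\ $L'_{ce}$) is copy-sensitive and carries its own fresh copy variable, and $M_{ce}=M\cup(\text{new copy variables})$. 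The key observation is that $Q\equiv_C Q_{ce}$ and $Q'\equiv_C Q'_{ce}$ need not hold as stated, but $Q\sqsubseteq_C Q'$ does imply a corresponding containment at the level of the copy-enhanced queries, because adding a distinct copy variable to a relational subgoal only \emph{refines} the multiplicity bookkeeping without changing which tuples are returned; one direction of this is exactly how copy variables were designed to interact with duplicates in the database.

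The main step would be a \emph{canonical-database} construction. Starting from $Q_{ce}$, build the canonical (``frozen'') database $D_Q$: turn each copy-sensitive subgoal $p(\bar S;i)$ of $L_{ce}$ into a ground atom $p(\bar S^{*}; 1)$, where $(\cdot)^{*}$ freezes each variable to a distinct constant, and then — crucially — give $D_Q$ enough copies of the relevant atoms so that each copy variable of $Q_{ce}$ can be assigned its own distinct copy number under the identity (freezing) assignment, witnessing that the frozen head tuple $\bar X^{*}$ lies in ${\sc Res}_C(Q_{ce},D_Q)$ with multiplicity at least one. Since $Q\sqsubseteq_C Q'$ (and hence, by the refinement observation above, $Q_{ce}\sqsubseteq_C Q'_{ce}$), the tuple $\bar X^{*}=\bar X'^{*}$ must also appear in ${\sc Res}_C(Q'_{ce},D_Q)$, so there is a satisfying assignment $\gamma'$ of $L'_{ce}$ into $D_Q$ with $\gamma'(\bar X')=\bar X^{*}$. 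Composing $\gamma'$ with the inverse of the freezing map on $D_Q$ yields a mapping $\varphi$ from the terms of $Q'_{ce}$ to the terms of $Q_{ce}$. One then checks the three GCM conditions from Definition~\ref{multiset-homom-def}: condition (1) $\varphi\bar X'=\bar X$ follows from $\gamma'(\bar X')=\bar X^{*}$; condition (2) (identity on constants) is immediate since freezing leaves constants alone; condition (3), that the core-set of $\varphi L'_{ce}$ is contained in the core-set of $L_{ce}$, follows because every relational/copy-sensitive atom of $L'_{ce}$ is mapped by $\gamma'$ into an atom of $D_Q$, and every atom of $D_Q$ is (the freezing of) some atom of $L_{ce}$ — here the fact that $D_Q$ was built \emph{only} from subgoals of $Q_{ce}$ is what forces the image to land inside $L_{ce}$.

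I expect the main obstacle to be the bookkeeping around copy variables, specifically two points. First, one must be careful that the canonical database $D_Q$ has exactly the right copy multiplicities: too few and the frozen tuple is not produced by $Q_{ce}$ (because some copy variable cannot be satisfied with a distinct value); too many and the containment hypothesis might force $Q'_{ce}$ to produce the tuple via an assignment that does not ``see'' the copy structure of $Q'_{ce}$ faithfully, so that $\varphi$ fails to be injective or fails to respect copy variables in the way a GCM (which only requires conditions 1--3, not the injectivity conditions 4--5) needs. The cleanest route is to make the copy numbers in $D_Q$ large enough and generic enough that \emph{any} satisfying assignment of $Q'_{ce}$ achieving the frozen head tuple must factor through the freezing map, and then argue that the resulting $\varphi$ is forced to map each copy-sensitive atom's relational template correctly because the relational templates in $D_Q$ are ``rigid'' (each frozen atom comes from a unique subgoal). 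Second, one must justify the step $Q\sqsubseteq_C Q' \implies Q_{ce}\sqsubseteq_C Q'_{ce}$; I would prove this by a direct semantic argument, showing that for any database $D$ and any satisfying assignment contributing a tuple to ${\sc Res}_C(Q_{ce},D)$ there is a corresponding assignment for $Q_{ce}$ (un-enhanced) on a suitably ``copy-inflated'' version of $D$, and conversely, so that the multiplicities line up — this is where the precise definition of satisfying assignment for copy-sensitive atoms (the ``$N\ge\gamma i$'' clause) does the work. Once both points are settled, the GCM $\varphi$ from $Q'_{ce}$ to $Q_{ce}$ is exactly what the theorem asserts.
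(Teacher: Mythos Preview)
The paper defers the proof of this theorem to~\cite{MyLibkinIcdt12journalSubmission}, so there is no in-text argument to compare against. Your overall canonical-database-plus-unfreezing strategy is the right shape, but the route you propose has a genuine error.

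The implication $Q \sqsubseteq_C Q' \Rightarrow Q_{ce} \sqsubseteq_C Q'_{ce}$, which you flag as your second obstacle and plan to establish ``by a direct semantic argument,'' is \emph{false}. Take the set queries $Q(X)\leftarrow p(X,Y),\,p(X,Z),\,\emptyset$ and $Q'(X)\leftarrow p(X,Y),\,\emptyset$; clearly $Q\equiv_C Q'$. But $Q_{ce}(X)\leftarrow p(X,Y;i),\,p(X,Z;j),\,\{i,j\}$ has two copy variables while $Q'_{ce}(X)\leftarrow p(X,Y;i),\,\{i\}$ has only one, so Theorem~\ref{not-same-num-multiset-vars-thm} already forbids $Q_{ce}\sqsubseteq_C Q'_{ce}$. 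Concretely, on $D=\{p(a,b;N)\}$ the tuple $(a)$ has multiplicity $N^2$ under $Q_{ce}$ and $N$ under $Q'_{ce}$. Your ``refinement'' intuition fails precisely because copy-enhancing $Q$ may introduce \emph{more} new copy variables than copy-enhancing $Q'$, blowing up multiplicities asymmetrically.

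The repair is simple: do not pass to $ce$-level containment at all. Build the extended canonical database $D_Q$ of $Q$ (equivalently of $Q_{ce}$, since the relational templates coincide). The frozen head tuple lies in ${\sc Res}_C(Q,D_Q)$, so the hypothesis $Q\sqsubseteq_C Q'$, applied \emph{directly}, places it in ${\sc Res}_C(Q',D_Q)$ and yields a satisfying assignment $\gamma'$ of $Q'$ into $D_Q$. Now each subgoal of $Q'_{ce}$---whether an original copy-sensitive subgoal of $Q'$ or a newly enhanced relational one---has its relational template sent by $\gamma'$ to some ground atom of $D_Q$, and that ground atom determines a representative copy-sensitive subgoal of $Q_{ce}$. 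Unfreeze the non-copy terms via $\gamma'$ composed with the inverse of the freezing map, and send each copy variable of $Q'_{ce}$ to the copy variable of the corresponding representative in $Q_{ce}$. This gives the required GCM, obtained without ever comparing ${\sc Res}_C(Q_{ce},\cdot)$ to ${\sc Res}_C(Q'_{ce},\cdot)$.
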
 


Neither Theorem~\ref{not-same-num-multiset-vars-thm} nor Theorem~\ref{gen-ce-cm-thm} provides a sufficient condition for combined-semantics containment of two CCQ queries: The following Example~\ref{weird-two-ex} is a counterexample in both cases.  

\begin{example} 
\label{weird-two-ex}
Consider CCQ queries $Q$ and $Q'$:  
\begin{tabbing}
$Q(X) \leftarrow p(X,Y), p(Y,Z), p(Z,X; i), \{ Y,i \} .$ \\
$Q'(X) \leftarrow p(X,Y), p(Y,Z), p(Z,X; i), \{ Z,i \} .$
\end{tabbing} 

%
Apart from the choice of multiset variables, 
$Q$ and $Q'$ are clearly isomorphic. 
However, $Q \equiv_C Q'$ does not hold, as witnessed by database $D = \{ p(1,2),$ $p(2,3),$ $p(3,1),$ $p(1,4),$ $p(4,3) \}$. 
Our results in this paper permit us to determine $Q \equiv_C \hspace{-0.55cm} / \hspace{0.4cm} Q'$ syntactically,  see Section~\ref{necess-equiv-condition}. (To the  best of our knowledge, no previous work provides a formal procedure to determine $Q \equiv_C \hspace{-0.55cm} / \hspace{0.4cm} Q'$ for queries such as in this example.)  
\end{example}

Each of Theorem~\ref{not-same-num-multiset-vars-thm} and Theorem~\ref{gen-ce-cm-thm} yields a necessary condition for combined-semantics {\em equivalence} of CCQ queries in a natural way. For instance: 

\begin{corollary}
\label{not-same-num-multiset-vars-prop} 
Let $Q$ and $Q'$ be two $k$-ary CCQ queries such that 
$Q \equiv_C Q'$. Then we have $|M_{copy}|$ $=$ $|M'_{copy}|$ and $|M_{noncopy}|$ $=$ $|M'_{noncopy}|$. 
\end{corollary}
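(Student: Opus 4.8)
The plan is to derive Corollary~\ref{not-same-num-multiset-vars-prop} directly from Theorem~\ref{not-same-num-multiset-vars-thm} by applying the latter to both directions of the equivalence. Since $Q \equiv_C Q'$ holds if and only if both $Q \sqsubseteq_C Q'$ and $Q' \sqsubseteq_C Q$ hold (by the definition of combined-semantics equivalence given in Section~\ref{prelim-section}), we may invoke Theorem~\ref{not-same-num-multiset-vars-thm} once for each containment.

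First I would apply Theorem~\ref{not-same-num-multiset-vars-thm} to $Q \sqsubseteq_C Q'$. Since $Q$ and $Q'$ are $k$-ary CCQ queries, the hypothesis of the theorem is satisfied, and we conclude $|M_{copy}| \leq |M'_{copy}|$ and $|M_{noncopy}| \leq |M'_{noncopy}|$. Next I would apply the same theorem to the reverse containment $Q' \sqsubseteq_C Q$; here the roles of the queries are swapped, so the conclusion reads $|M'_{copy}| \leq |M_{copy}|$ and $|M'_{noncopy}| \leq |M_{noncopy}|$. Combining the two pairs of inequalities by antisymmetry of $\leq$ on the natural numbers yields $|M_{copy}| = |M'_{copy}|$ and $|M_{noncopy}| = |M'_{noncopy}|$, which is exactly the claim.

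There is essentially no obstacle here — the corollary is a routine consequence of Theorem~\ref{not-same-num-multiset-vars-thm} together with the unfolding of $\equiv_C$ into a pair of containments. The only point worth care is that Theorem~\ref{not-same-num-multiset-vars-thm} is stated for two $k$-ary CCQ queries for any fixed $k \geq 0$, so one must check that it applies symmetrically with $Q$ and $Q'$ interchanged; this is immediate since the statement of the theorem is agnostic to which query is named first beyond the containment direction. Thus the entire proof is two invocations of the preceding theorem followed by a one-line argument about natural numbers.
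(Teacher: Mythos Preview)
Your proof is correct and matches the paper's approach exactly. The paper does not spell out a proof for this corollary but simply notes that Theorem~\ref{not-same-num-multiset-vars-thm} ``yields a necessary condition for combined-semantics equivalence of CCQ queries in a natural way,'' which is precisely the two-direction application you describe.
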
 

\subsection{Covering mappings for CCQ queries} 
\label{sccm-sec} 

In this subsection, we introduce covering mappings {\em (CVMs)} between CCQ queries, and study properties of CVMs. 


\begin{definition}{Covering mapping (CVM)}
\label{magic-mapping-def} 
Given CCQ queries $Q$ and $Q'$, 
a mapping, call it $\mu$, from the terms of $Q'$ to the terms of $Q$ is called a {\em covering mapping (CVM) from $Q'$ to $Q$} whenever $\mu$ satisfies all of the following conditions:
\begin{itemize} 
	\item[(1)] $\mu$ maps each constant (if any) in $Q'$ to itself; 
\vspace{-0.15cm}
	\item[(2)] applying $\mu$ to the vector $\bar{X}'$ yields the vector $\bar{X}$; 
\vspace{-0.15cm}

	\item[(3)] the set of terms 
	in $\mu M'_{copy}$ is exactly $M_{copy}$, and the set of terms 
	in $\mu M'_{noncopy}$ includes all of $M_{noncopy}$; 
\vspace{-0.15cm}
	
	\item[(4)] for each relational subgoal of $Q'$, of the form $s(\bar{Y})$, there exists in $Q$ either a relational subgoal $s(\mu(\bar{Y}))$, or a copy-sensitive subgoal $s(\mu(\bar{Y}); i)$, with $i \in M_{copy}$; and 	
\vspace{-0.15cm}

	\item[(5)] for each copy-sensitive subgoal of $Q'$ of the form $s(\bar{Y}; i)$, 
	there exists in $Q$ a subgoal $s(\mu(\bar{Y}); \mu(i))$. 

\end{itemize} 
\vspace{-0.6cm}
\end{definition} 

\nop{ 

Given two equivalence-compatible CCQ queries $Q(\bar{X}) \leftarrow L, M$ and $Q'(\bar{X}') \leftarrow L', M'$. 
Then a mapping, call it $\mu$, from the terms of $Q'$ to the terms of $Q$ is called a {\em same-scale covering mapping (SSCM) from $Q'$ to $Q$} whenever $\mu$ satisfies all of the following conditions:
\begin{itemize}
	\item[(1)] $\mu$ maps each constant (if any) in $Q'$ to itself; 
	\item[(2)] $\mu[\bar{X}']$ is the vector $\bar{X}$; 

	\item[(3)] the set of terms 
	in $\mu[M'_{copy}]$ is exactly $M_{copy}$, and the set of terms 
	in $\mu[M'_{noncopy}]$ is exactly $M_{noncopy}$; 
	
	\item[(4)] for each relational subgoal of $Q'$, of the form $s(\bar{Y})$, there exists in $Q$ either a relational subgoal $s(\mu(\bar{Y}))$, or a copy-sensitive subgoal $s(\mu(\bar{Y}); i)$ where $i \in M_{copy}$;	

	\item[(5)] for each subgoal of $Q'$ of the form $s(\bar{Y}; i)$, where $i \in M'_{copy}$, there exists in $Q$ a subgoal $s(\mu(\bar{Y}); \mu(i))$, where $\mu(i) \in M_{copy}$. 

\end{itemize}

} 

By Definition~\ref{magic-mapping-def}, if there exists a CVM from CCQ query $Q'$ to CCQ query $Q$, then $(Q,$ $Q')$ is a containment-compatible CCQ pair. 
It is immediate from Definition~\ref{magic-mapping-def} that 
if a mapping $\mu$ is a CVM from  $Q'$ to $Q$, then $\mu$ induces 
		a surjection 
from the set of copy-sensitive subgoals of $Q'$ to the set of copy-sensitive subgoals of $Q$. Observe also that in case both $Q$ and $Q'$ are set queries, Definition~\ref{magic-mapping-def} becomes the definition of containment mapping \cite{ChandraM77} from $Q'$ to $Q$.  

For the special case where $(Q,$ $Q')$ is an {\em equivalence-}compat- ible CCQ pair, we call each CVM from $Q'$ to $Q$ a {\em same-scale covering mapping (SCVM) from $Q'$ to $Q$.} By definition, each SCVM from $Q'$ to $Q$ is a bijection from the set $M'$ to the set $M$ when restricted to the domain $M'$. 

The intuition for Definition~\ref{magic-mapping-def} comes from our use of CVMs in \cite{MyLibkinIcdt12journalSubmission} 
as a tool for minimizing CCQ queries. Consider the following illustration. 

\begin{example} 
\label{mh-vs-sscm-ex}
Let queries $Q$ and $Q'$ be as follows. 
\begin{tabbing}
$Q(X) \leftarrow p(X,X,Y; i), p(X,Z,Y), \{ Y,i \} .$ \\
$Q'(X) \leftarrow p(X,X,Y; i), \{ Y,i \} .$
\end{tabbing}
By Definition~\ref{multiset-homom-def}, there does not exist a multiset homomorphism \cite{Cohen06}, or even a GCM, from $Q$ to $Q'$.  At the same time, by the results of~\cite{MyLibkinIcdt12journalSubmission}, $Q'$ is a minimized version of $Q$. 
We can ascertain this fact by using a CVM, $\mu$, from $Q$ to $Q'$: $\mu$ $=$ $\{$ $X \rightarrow X$, $Y \rightarrow Y$, $i \rightarrow i$, $Z \rightarrow X$  $\}$ .  
\end{example} 

As illustrated by  Example~\ref{mh-vs-sscm-ex}, CVMs are not GCMs. Indeed, the definition of CVMs gives up explicitly on condition (3) for GCMs (see Definition~\ref{multiset-homom-def}); by this condition, for each subgoal $s$ of $Q$ in Example~\ref{mh-vs-sscm-ex}, we must have that $\mu(s)$ is a subgoal of $Q'$. While CVMs are not GCMs, a nice relationship exists between CVMs and GCMs, see Proposition~\ref{cvm-is-gcm-prop}. To formulate Proposition~\ref{cvm-is-gcm-prop}, we use the following definition, in which we treat query conditions as bags of atoms.  

Given CCQ query $Q$, let ${\cal T}(Q)$ be the set of relational templates of all (if any) copy-sensitive subgoals of $Q$. We recall that CCQ query $Q_c$ is a canonical representation of CCQ query $Q$ if $Q_c$ is the result of removing all duplicate atoms from the condition of $Q$. 

\begin{definition}{(Un)regularizing CCQ query} 
\label{regulariz-def} 
Given CCQ query $Q$, with canonical representation $Q_c$. Then (1) A {\em regularized version of} $Q$ is a CCQ query $Q_r$ obtained by dropping from the condition of $Q_c$ all elements of the set  ${\cal T}(Q)$; (2) A {\em deregularized version of} $Q$ is a CCQ query $Q_d$ obtained by adding to the condition of $Q_r$ all elements of the set  ${\cal T}(Q)$; (3) An {\em unregularized version of} $Q$ is a CCQ query $Q_u$ obtained by adding to the condition of $Q_r$ one or more duplicates of the existing relational subgoals, and/or one or more elements (possibly with duplicates) of the set ${\cal T}(Q)$.  
\end{definition}


The following result is straightforward. 

\begin{proposition} 
\label{regulariz-prop} 
Given a CCQ query $Q$. Then (1) Each of $Q_r$ and $Q_d$ is a well-defined, unique and polynomial-time computable CCQ query; (2) $Q_r \equiv_C Q$ and $Q_d \equiv_C Q$ both hold; and (3) For each unregularized version $Q_u$ of $Q$, we have that $Q_u \equiv_C Q$ holds. 
\end{proposition}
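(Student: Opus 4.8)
The plan is to prove the three claims of Proposition~\ref{regulariz-prop} by unwinding the definitions of canonical representation, of $\mathcal{T}(Q)$, and of combined semantics (Definition~\ref{query-semantics-def}). The underlying observation throughout is that adding or removing a relational subgoal that is a duplicate of an existing relational subgoal, or that is the relational template of an existing copy-sensitive subgoal, does not change the set $\Gamma(Q,D)$ of satisfying assignments on any database $D$, nor the set $\bar{S}(Q)$ of nonset variables, nor the head vector $\bar{X}$; hence it does not change the multiset ${\sc Res}_C(Q,D)$.

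For part (1): well-definedness, uniqueness, and polynomial-time computability of $Q_r$ and $Q_d$ follow directly from Definition~\ref{regulariz-def}. First I would note that $Q_c$ is itself unique and polynomial-time computable (it is the result of deduplicating the condition of $Q$ viewed as a bag of atoms). Then $Q_r$ is obtained by deleting from $Q_c$ precisely the atoms lying in the fixed finite set $\mathcal{T}(Q)$, a deterministic operation; and $Q_d$ is obtained by re-adding exactly the atoms of $\mathcal{T}(Q)$ to $Q_r$. Each step is clearly polynomial in the size of $Q$. I would also observe that the head, the multiset-variable set $M$, and the copy-variable set $M_{copy}$ are unaffected by these transformations: removing an element of $\mathcal{T}(Q)$ removes only a relational (hence non-copy-sensitive) atom, and the variables it contains already occur in the corresponding copy-sensitive subgoal, so safety and the multiset-variable conditions of Definition~\ref{ccq-def} are preserved.

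For parts (2) and (3): fix an arbitrary database $D$. The key lemma is that if $Q_1$ is obtained from $Q_2$ by adding a relational atom $s(\bar{T})$ such that $Q_2$ already contains either the relational atom $s(\bar{T})$ or a copy-sensitive atom $s(\bar{T};i)$, then $\Gamma(Q_1,D) = \Gamma(Q_2,D)$. This is immediate from the definition of satisfying assignment: the new subgoal imposes the condition $s(\gamma\bar{T};N)\in D$ for some $N\in\mathbb{N}_+$, which is strictly weaker than (and in the duplicate case identical to) a condition already present via the witnessing subgoal of $Q_2$ (note that a copy-sensitive atom $s(\bar{T};i)$ being satisfied by $\gamma$ requires some $N\ge\gamma i\ge 1$ with $s(\gamma\bar{T};N)\in D$, which in particular gives $s(\gamma\bar T;N')\in D$ for some $N'\in\mathbb{N}_+$). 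Since $Q_1$ and $Q_2$ have the same terms, the same $\bar{S}$, and the same head vector, we get $\Gamma_{\bar S}(Q_1,D) = \Gamma_{\bar S}(Q_2,D)$ and hence ${\sc Res}_C(Q_1,D) = {\sc Res}_C(Q_2,D)$ as bags. Applying this lemma repeatedly (in both directions, since deletion is the reverse of addition) chains $Q$ to $Q_c$ to $Q_r$ and $Q_r$ to $Q_d$, yielding $Q_r \equiv_C Q$ and $Q_d \equiv_C Q$; the same chaining handles each unregularized version $Q_u$, which by Definition~\ref{regulariz-def} differs from $Q_r$ by exactly such additions. As $D$ was arbitrary, the equivalences hold. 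I do not expect a serious obstacle here; the only point requiring a little care is the asymmetric direction in the definition of satisfying assignment for copy-sensitive atoms, which is why a copy-sensitive subgoal can "cover" its relational template but the converse role (a relational subgoal covering a copy-sensitive one) is not needed — $\mathcal{T}(Q)$ only ever contributes relational atoms.
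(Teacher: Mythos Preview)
Your proof is correct and follows exactly the approach the paper has in mind; the paper itself merely labels the result ``straightforward'' without giving a proof, and your argument (showing that $\Gamma(Q,D)$, $\bar S(Q)$, and $\bar X$ are unchanged under the relevant atom additions/deletions, then invoking Definition~\ref{query-semantics-def}) is precisely how one unpacks that claim.
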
 


We are now ready to formulate Proposition~\ref{cvm-is-gcm-prop}.


\begin{proposition} 
\label{cvm-is-gcm-prop} 
Given CCQ queries $Q$ and $Q'$. Then for each CVM, $\mu$, from $Q'$ to $Q$, we have that (1) $\mu$ is a {\em GCM} 
from $Q'$ to the deregularized version of $Q$, and (2) $\mu$ is a CVM from $Q'$ to the regularized version of $Q$. 
\end{proposition}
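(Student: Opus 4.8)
The claim has two parts, and both follow by unwinding Definition~\ref{magic-mapping-def} against Definition~\ref{multiset-homom-def} and Definition~\ref{regulariz-def}, using the fact (Proposition~\ref{regulariz-prop}) that $Q_d$ and $Q_r$ are well-defined. The plan is to fix a CVM $\mu$ from $Q'$ to $Q$ and check each required condition of the target statement directly.

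For part (1), I would show that $\mu$ is a GCM from $Q'$ to $Q_d$, i.e.\ that $\mu$ satisfies conditions (1)--(3) of Definition~\ref{multiset-homom-def} with $L$ replaced by the condition of $Q_d$. Conditions (1) (identity on constants) and (2) ($\mu\bar{X}' = \bar{X}$) are literally conditions (1) and (2) of Definition~\ref{magic-mapping-def}, so there is nothing to prove there. The only real content is condition (3) of Definition~\ref{multiset-homom-def}: the core-set of $\mu L'$ must be a subset of the core-set of the condition of $Q_d$. Here I would argue subgoal by subgoal. Take any subgoal of $Q'$. If it is a relational subgoal $s(\bar{Y})$, then CVM condition (4) says $Q$ contains either $s(\mu(\bar{Y}))$ or $s(\mu(\bar{Y}); i)$ with $i \in M_{copy}$; in the first case $s(\mu(\bar{Y}))$ is already a subgoal of $Q$, hence of $Q_c$, hence of $Q_d$; in the second case, $s(\mu(\bar{Y}))$ is the relational template of a copy-sensitive subgoal of $Q$, so $s(\mu(\bar{Y})) \in {\cal T}(Q)$, and by Definition~\ref{regulariz-def}(2) it is added back in forming $Q_d$, so $s(\mu(\bar{Y}))$ is a subgoal of $Q_d$. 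If the subgoal of $Q'$ is a copy-sensitive subgoal $s(\bar{Y}; i)$, then CVM condition (5) gives that $s(\mu(\bar{Y}); \mu(i))$ is a subgoal of $Q$; since $Q_c$ keeps one copy of each distinct atom and copy-sensitive atoms are not dropped in regularization (only relational templates are), $s(\mu(\bar{Y}); \mu(i))$ is a subgoal of $Q_r$ and hence of $Q_d$. In every case $\mu$ applied to a subgoal of $Q'$ lands on a subgoal of $Q_d$, which gives the core-set containment and completes part (1). I would also note, for completeness, that conditions (4) and (5) of Definition~\ref{multiset-homom-def} need not hold — a CVM is only required to be a GCM, not a full multiset-homomorphism — which is exactly the point illustrated by Example~\ref{mh-vs-sscm-ex}.

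For part (2), I must verify that $\mu$ satisfies all five conditions of Definition~\ref{magic-mapping-def} with $Q$ replaced by $Q_r$. Conditions (1), (2), (3) refer only to constants, head variables, and the multiset-variable sets, and these are identical for $Q$ and $Q_r$ (regularization removes only duplicate atoms and relational templates, never variables), so they transfer verbatim. For condition (5): if $Q'$ has $s(\bar{Y}; i)$, then by CVM condition (5) for $\mu: Q' \to Q$, $s(\mu(\bar{Y}); \mu(i))$ is a subgoal of $Q$; as argued above, copy-sensitive subgoals survive regularization, so it is a subgoal of $Q_r$. For condition (4): if $Q'$ has a relational subgoal $s(\bar{Y})$, CVM condition (4) for $\mu: Q' \to Q$ says $Q$ has $s(\mu(\bar{Y}))$ or $s(\mu(\bar{Y}); i)$ with $i \in M_{copy}$; in the copy-sensitive case, $s(\mu(\bar{Y}); i)$ is still a subgoal of $Q_r$, so condition (4) is satisfied via that disjunct; in the relational case, $s(\mu(\bar{Y}))$ is a subgoal of $Q$, and I must check it is still a subgoal of $Q_r$. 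This is the one spot needing a small argument: $Q_r$ is obtained from $Q_c$ by dropping exactly the elements of ${\cal T}(Q)$, so $s(\mu(\bar{Y}))$ disappears only if it is the relational template of some copy-sensitive subgoal of $Q$ — but in that case $Q$ itself contains a copy-sensitive subgoal $s(\mu(\bar{Y}); i)$ with $i \in M_{copy}$, so condition (4) for $\mu: Q' \to Q_r$ is satisfied via the copy-sensitive disjunct anyway. Hence in all cases $\mu$ meets condition (4) for $Q_r$, and $\mu$ is a CVM from $Q'$ to $Q_r$.

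The main (and only) obstacle is the bookkeeping in the relational-subgoal cases: one must track precisely which atoms are present in $Q_c$, $Q_r$, and $Q_d$ versus $Q$, and observe that whenever a relational atom is dropped in forming $Q_r$ it is because its copy-sensitive counterpart is present in $Q$ (with a copy variable in $M_{copy}$), which is exactly what makes CVM condition (4)'s disjunction do the work. Once that observation is in hand, the proof is a routine case check, consistent with the excerpt's remark that "the following result is straightforward.''
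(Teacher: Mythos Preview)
Your proposal is correct and follows essentially the same approach as the paper: a subgoal-by-subgoal case analysis showing that the $\mu$-image of each subgoal of $Q'$ lands in the condition of $Q_d$ (for part~(1)) and that the CVM conditions hold with $Q$ replaced by $Q_r$ (for part~(2)). Your treatment is in fact more explicit than the paper's---the paper dispatches part~(2) in one line as ``immediate from (1) and from definition of CVM,'' whereas you spell out the case check in full, including the key observation that if a relational atom $s(\mu(\bar{Y}))$ is dropped in forming $Q_r$ it is precisely because a copy-sensitive subgoal $s(\mu(\bar{Y}); i)$ with $i \in M_{copy}$ is present, so condition~(4) of Definition~\ref{magic-mapping-def} is still satisfied via the other disjunct.
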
 

In Example~\ref{mh-vs-sscm-ex}, we are given the regularized version $Q'_r$ of the query $Q'$. The deregularized version of $Q'$ is $Q'_d(X) \leftarrow p(X,X,Y; i), p(X,X,Y), \{ Y,i \}$. The mapping $\mu$ of  Example~\ref{mh-vs-sscm-ex} (i) is a GCM from $Q$ to $Q'_d$, (ii) is a CVM from $Q$ to $Q'_r$, and (iii) is not a GCM  from $Q$ to $Q'_r$.

\begin{proof}{(Proposition~\ref{cvm-is-gcm-prop})} 
Proof of (1): Let $\mu$ be a CVM from CCQ query $Q'$ to CCQ query $Q$. 
We apply the mapping $\mu$ to the head and individually to each subgoal of the query $Q'$; we will refer to the result as (query) $\mu(Q')$. In addition, we impose a natural requirement that a variable $Y$ in the query $\mu(Q')$ be a multiset variable of $\mu(Q')$ if and only if $Y$ is a multiset variable of the query $Q$. It is immediate from this requirement and from item (3) of Definition~\ref{magic-mapping-def} that the multiset variables of $\mu(Q')$ are exactly the multiset variables of the query $Q$. 

Now applying $\mu$ to the head vector of the query $Q'$ results in the head vector of $Q$, by  item (2) of Definition~\ref{magic-mapping-def}. Further, for each copy-sensitive subgoal of $Q'$, its image in $\mu(Q')$ is a copy-sensitive subgoal of the query $Q$, by  item (5) of Definition~\ref{magic-mapping-def}. We get a similar desired behavior, by  item (4) of Definition~\ref{magic-mapping-def}, for all relational subgoals of $Q'$ whose images under $\mu$ are {\em relational} subgoals of the query $Q$. 

The only problem in the application of the mapping $\mu$ to the query $Q'$ would arise when, for some relational subgoal of $Q'$ of the form $s(\bar{Y})$, the {\em relational} atom $s(\mu(\bar{Y}))$ is not a subgoal of the query $Q$. For an illustration, consider queries $Q$ and $Q'$ of Example~\ref{mh-vs-sscm-ex}: The mapping $\mu$ $=$ $\{ X \rightarrow X, Y \rightarrow Y, Z \rightarrow X, i \rightarrow i \}$ is a CVM from query $Q$ to query $Q'$ of the Example. Applying this mapping to subgoal $p(X,Z,Y)$ of the query $Q$ results in a relational subgoal $p(X,X,Y)$ that is not present in the query $Q'$.


However,  items (3) through (5) of Definition~\ref{magic-mapping-def} together ensure that for each occurrence of the above problem, query $\mu(Q')$ has ``the copy-sensitive version'', $s(\mu(\bar{Y}); i)$ (for some copy variable $i$ $\in$ $M_{copy}$), of the atom $s(\mu(\bar{Y}))$ of the previous paragraph. That subgoal  $s(\mu(\bar{Y}); i)$ would be added to $\mu(Q')$ as the result of applying $\mu$ to some copy-sensitive subgoal of the query $Q'$. Thus, adding the {\em relational} atom $s(\mu(\bar{Y}))$ to $\mu(Q')$, as the result of applying $\mu$ to the {\em relational} subgoal $s(\bar{Y})$  of the query $Q'$, does not ``take us outside'' the set of subgoals of the {\em deregularized version} of the query $Q$. (Recall the definition of the set ${\cal T}(Q)$.) 
As a result, the condition of the query $\mu(Q')$ is a subset of the condition of the deregularized version of the query $Q$. Q.E.D. 

Proof of (2): Immediate from (1) and from definition of CVM. 
\end{proof}

It turns out that CVMs furnish a rather general sufficient condition for CCQ combined-semantics containment: 


\begin{theorem} 
\label{cvm-containm-thm} 
Given CCQ queries $Q$ and $Q'$, such that there exists a CVM from $Q'$ to $Q$. Then $Q \sqsubseteq_C Q'$ holds. 
\end{theorem}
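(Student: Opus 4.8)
The plan is to show that for every database $D$ and every tuple $\bar d$, the multiplicity of $\bar d$ in ${\sc Res}_C(Q,D)$ is at most its multiplicity in ${\sc Res}_C(Q',D)$; this is exactly $Q \sqsubseteq_C Q'$. Fix $D$ and $\bar d$, and recall that the multiplicity of $\bar d$ in ${\sc Res}_C(Q,D)$ is $|\{\gamma \in \Gamma_{\bar S}(Q,D) : \gamma(\bar X) = \bar d\}|$, i.e.\ the number of distinct restrictions to $\bar S(Q)$ (the distinguished and multiset variables of $Q$) of satisfying assignments of $L$ into $D$ that send $\bar X$ to $\bar d$. The natural strategy is to define a map $\Phi$ from the satisfiably-extendible restrictions counted for $Q$ to those counted for $Q'$, via precomposition with the CVM $\mu$, and to prove $\Phi$ is injective.

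First I would take $\gamma \in \Gamma_{\bar S}(Q,D)$ with $\gamma(\bar X) = \bar d$, pick (arbitrarily) a full satisfying assignment $\gamma^* \in \Gamma(Q,D)$ extending it, and set $\delta := \gamma^* \circ \mu$, a mapping from the terms of $Q'$ to $adom(D) \cup \mathbb{N}_+$. I would then check that $\delta$ is a satisfying assignment of $L'$ with respect to $D$: it is the identity on constants by condition (1) of Definition~\ref{magic-mapping-def}; for a relational subgoal $s(\bar Y) \in L'$, condition (4) gives that $Q$ contains $s(\mu(\bar Y))$ or $s(\mu(\bar Y); i)$ with $i \in M_{copy}$, and in either case $\gamma^*$ being a satisfying assignment of $L$ yields some $N \in \mathbb{N}_+$ with $s(\delta(\bar Y); N) \in D$ (in the copy-sensitive case we use $N \ge \gamma^*(i) \ge 1$); for a copy-sensitive subgoal $s(\bar Y; i) \in L'$, condition (5) gives $s(\mu(\bar Y); \mu(i)) \in L$, so $\delta(i) = \gamma^*(\mu(i)) \in \mathbb{N}_+$ and there is $N \ge \delta(i)$ with $s(\delta(\bar Y); N) \in D$. (Since all copy variables of $L'$ are distinct and lie in $M'_{copy}$, and $\mu$ sends $M'_{copy}$ onto $M_{copy} \subseteq M$, the values $\delta(i)$ on copy variables are legitimate.) Hence $\delta \in \Gamma(Q',D)$, and $\delta(\bar X') = \gamma^*(\mu(\bar X')) = \gamma^*(\bar X) = \bar d$ by condition (2). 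Let $\Phi(\gamma)$ be the restriction of $\delta$ to $\bar S(Q')$; this lies in $\Gamma_{\bar S}(Q',D)$ and sends $\bar X'$ to $\bar d$.

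The main obstacle — and the step I would spend the most care on — is showing $\Phi$ is well-defined (independent of the choice of $\gamma^*$) and injective, since a single $\gamma \in \Gamma_{\bar S}(Q,D)$ may arise from many full assignments $\gamma^*$, and the restriction discards exactly the set variables. The key observation is that $\bar S(Q')$ consists of the head and multiset variables of $Q'$, and every such variable $W$ of $Q'$ satisfies $\mu(W) \in \bar S(Q)$: if $W$ is a head variable this is condition (2), and if $W \in M'$ then condition (3) forces $\mu(W) \in M_{copy} \cup M_{noncopy} \subseteq M \subseteq \bar S(Q)$ (here I use that $\mu M'_{copy} = M_{copy}$ and $\mu M'_{noncopy} \supseteq M_{noncopy}$, and that the image of any multiset variable is a copy or multiset variable of $Q$). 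Therefore $\delta(W) = \gamma^*(\mu(W)) = \gamma(\mu(W))$ depends only on $\gamma$, not on the extension $\gamma^*$; this gives well-definedness. For injectivity, suppose $\gamma_1, \gamma_2 \in \Gamma_{\bar S}(Q,D)$ with $\gamma_1 \ne \gamma_2$; they differ on some $V \in \bar S(Q)$, i.e.\ $V$ is a head or multiset variable of $Q$. If $V$ is a head variable of $Q$, then $\gamma_1(\bar X) \ne \gamma_2(\bar X)$, contradicting that both equal $\bar d$ — so $V \in M$. Since $\mu$ restricted to $M'$ is surjective onto a set containing $M$ (condition (3)), there is a multiset variable $W$ of $Q'$ with $\mu(W) = V$; then $\Phi(\gamma_1)(W) = \gamma_1(V) \ne \gamma_2(V) = \Phi(\gamma_2)(W)$, so $\Phi(\gamma_1) \ne \Phi(\gamma_2)$. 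Thus $\Phi$ is an injection from the $\bar d$-multiplicity witnesses of $Q$ into those of $Q'$, giving the desired multiplicity inequality. Since $D$ and $\bar d$ were arbitrary, $Q \sqsubseteq_C Q'$. I would also remark that when $Q, Q'$ are both set queries this argument collapses to the classical containment-mapping proof of Theorem~\ref{cm-thm}, providing a sanity check.
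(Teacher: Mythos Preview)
Your overall strategy matches the paper's: compose a full satisfying assignment of $Q$ with the CVM $\mu$ to obtain a satisfying assignment of $Q'$, then argue that the induced map on $\bar S$-restrictions is injective. Your injectivity argument is correct and is the heart of the proof.

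There is, however, a genuine error in your well-definedness step. You claim that for $W \in M'_{noncopy}$, condition~(3) of Definition~\ref{magic-mapping-def} forces $\mu(W) \in M_{noncopy}$. It does not: condition~(3) says only that $\mu M'_{noncopy} \supseteq M_{noncopy}$, i.e., every multiset noncopy variable of $Q$ is \emph{hit}; a variable $W \in M'_{noncopy}$ may perfectly well map to a set variable of $Q$ (or to a head variable, or a constant). When $\mu(W)$ is a set variable of $Q$, the value $\gamma^*(\mu(W))$ genuinely depends on the chosen extension $\gamma^*$, and $\Phi$ is not independent of that choice. The good news is that well-definedness is unnecessary: simply fix one extension $\gamma^*$ for each $\gamma$ once and for all (this is exactly what the paper does), so that $\Phi$ becomes an honest function depending on those choices, and then your injectivity argument goes through verbatim and yields the multiplicity inequality.

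On presentation, the paper takes a slightly different route to verifying that $\gamma^* \circ \mu$ satisfies $L'$: it first invokes Proposition~\ref{cvm-is-gcm-prop} to regard $\mu$ as a GCM from $Q'$ into the deregularized version $Q_d$ of $Q$ (with $Q_d \equiv_C Q$ by Proposition~\ref{regulariz-prop}), after which the argument is literally Cohen's multiset-homomorphism proof. Your direct case analysis on relational versus copy-sensitive subgoals of $Q'$ is an equally valid alternative and avoids that detour.
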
 

Theorem~\ref{cvm-containm-thm}  generalizes properly both (i) the sufficient condition of \cite{ChandraM77} for containment between CCQ set queries, see Theorem~\ref{cm-thm}, and (ii) the well-known result of \cite{ChaudhuriV93} stating that a containment mapping\footnote{The ``containment mapping'' terminology of \cite{ChaudhuriV93} results from the use in that paper of a syntax for bag queries that does not coincide with the syntax of \cite{Cohen06} used in this current paper. See Appendix~\ref{e-three-sec} for a detailed discussion.} from CCQ bag query $Q'$ {\em onto} CCQ bag query $Q$ ensures containment $Q \sqsubseteq_B Q'$. In fact, to the best of our knowledge, the proof of our Theorem~\ref{cvm-containm-thm} is the first formal proof of the latter result from \cite{ChaudhuriV93}. 


The condition of Theorem~\ref{cvm-containm-thm} does not appear to be a necessary condition for containment of CCQ queries. Indeed, a well-known example of \cite{ChaudhuriV93} (see Appendix~\ref{chaudhuri-v-ex-sec}), claims containment $Q \sqsubseteq_C Q'$, but no CVM exists from $Q'$ to $Q$. 

We now prove Theorem~\ref{cvm-containm-thm}. Remarkably, the proof is almost verbatim the proof, in \cite{Cohen06}, of the result that is given as Theorem~\ref{cutup-cohen-thm-three-three} in this current paper. %

\begin{proof}{(Theorem~\ref{cvm-containm-thm})} 
Consider the queries $Q(\bar{X}) \leftarrow L, M$ and $Q'(\bar{X}') \leftarrow L', M'$ . Let $\varphi$ be a CVM from $Q'$ to $Q$. Recall that by item (3) of Definition~\ref{magic-mapping-def}, when $\varphi$ is restricted to the domain $M'$ then we have that the range of $\varphi$ includes all of $M$. 

By Proposition~\ref{cvm-is-gcm-prop}, $\varphi$ is a generalized containment mapping from the query $Q'$ to the deregularized version $Q_d$ of $Q$, $Q_d(\bar{X}) \leftarrow L_d, M_d$.  By Proposition~\ref{regulariz-prop}, $Q_d \equiv_C Q$. (In particular, this means that the set $M_d$ is identical to $M$, and  that the set ${\bar S}(Q_d)$ is identical to ${\bar S}(Q)$.) 

Let $D$ be a database with active domain $adom(D)$, and let $\bar{t}$ be a tuple of constants in $adom(D)$. Suppose that ${\sc Res}_C(Q_d,D)$ contains $k > 0$ occurrences of $\bar{t}$. (The case where the head vectors of the two queries are empty, and hence $\bar{t}$ is the empty tuple, is proved in the same way as the case where $\bar{t}$ is a tuple with at least one constant. We omit the proof of the former case.) We show that ${\sc Res}_C(Q',D)$ contains at least $k$ occurrences of $\bar{t}$. This is sufficient in order to prove combined-semantics containment of $Q_d$ in $Q'$. From this result, by $Q_d \equiv_C Q$ we obtain that $Q \sqsubseteq_C Q'$ also holds. 

Let $\Gamma$ be the set of satisfying assignments of $Q_d$ into $D$ that map $\bar{X}$ to $\bar{t}$. Let $\Gamma_{\bar{S}}$ be the restriction of assignments in $\Gamma$ to the variables in $\bar{S}(Q_d)$. 
There are exactly $k$ assignments $\gamma_1,\ldots,\gamma_k \in \Gamma_{\bar{S}}$. We associate each assignment $\gamma_i \in \Gamma_{\bar{S}}$ with an assignment  $\gamma^*_i \in \Gamma$ such that $\gamma_i$ is the restriction of  $\gamma_i^*$ to the variables in  ${\bar{S}}(Q_d)$. If there are several candidates for  $\gamma_i^*$, we choose one arbitrarily. 

\nop{
Consider the deregularized version $Q_d(\bar{X}) \leftarrow L_d, M$ of the query $Q$. Let $\Gamma(Q_d)$ be the set of satisfying assignments of $Q_d$ into $D$ that map $\bar{X}$ to $\bar{d}$. Let $\Gamma_{\bar{S}}(Q_d)$ be the restriction of assignments in $\Gamma(Q_d)$ to the variables in $\bar{S}(Q_d)$. By definition of deregularized versions of CCQ queries, we have that 
(i) the set $\Gamma(Q_d)$ is identical to the set $\Gamma(Q)$, (ii) the set $\Gamma_{\bar{S}}(Q_d)$ is identical to the set $\Gamma_{\bar{S}}(Q)$, and (iii) the bag ${\sc Res}_C(Q_d,D)$ contains exactly $k$ occurrences of $\bar{d}$. 
} 

Recall that $\varphi$ is a generalized containment mapping from $Q'$ to $Q_d$. Thus, we have that  for each 
subgoal $l'$ of $Q'$, $\varphi l' \in L_d$ holds. 
Since  $\gamma_i^*$ is a satisfying assignment of $Q_d$ into $D$, we have that  $\gamma_i^*(L_d)$ is satisfied by the database. (In other words, all the ground atoms in $\gamma_i^*(L_d)$ appear in $D$.) 
By composing the assignments we derive that $\gamma_i^* \circ \varphi(L')$ is satisfied by $D$.  
Hence, $\gamma_i^* \circ \varphi$ is a satisfying assignment of $Q'$ into $D$. In addition, $\gamma_i^* \circ \varphi(\bar{X}') = \bar{t}$, since $\varphi(\bar{X}') = \bar{X}$. 

Finally, we show that no two assignments $\gamma_i^* \circ \varphi$ and $\gamma_j^* \circ \varphi$ ($i \neq j$) agree on all the multiset variables of $Q'$. By the definition of $\Gamma_{\bar{S}}$, it holds that $\gamma_i$ and $\gamma_j$ differ on at least one multiset variable of $Q_d$. Hence $\gamma_i^*$ and $\gamma_j^*$ also differ on at least one multiset variable of $Q_d$. Since the image of $M'$ under $\varphi$ 
includes all of $M$, 
we derive that $\gamma_i^* \circ \varphi$ and $\gamma_j^* \circ \varphi$ differ on at least one multiset variable {\em of} $Q'$. Therefore, the restrictions of the assignments $\gamma_j^* \circ \varphi$ (for all $j \leq k$) to $\bar{S}(Q')$ are all different satisfiably extendible assignments of the nonset variables of $Q'$ into the database. We conclude that ${\sc Res}_C(Q',D)$ contains at least $k$ occurrences of $\bar{t}$. 

Our arguments apply for all $D$ and for all $\bar{t}$. Therefore, $Q_d \sqsubseteq_C Q'$ and (by $Q_d \equiv_C Q$) we have $Q \sqsubseteq_C Q'$, as required. 
\end{proof} 

\subsection{CVMs and multiset homomorphisms of [9]}

In this subsection we compare CVMs with multiset homomorphisms \cite{Cohen06}, see Definition~\ref{multiset-homom-def}. For a fixed pair of CCQ queries $Q$ and $Q'$,  with respective sets of multiset variables $M$ and $M'$, each CVM from $Q'$ to $Q$ 
has range {\em at least} $M$ when restricted to the domain $M'$, and each multiset homomorphism from $Q'$ to $Q$ has range {\em at most} $M$
when restricted to the domain $M'$. Therefore, general CVMs and multiset homomorphisms are incomparable when applied to pairs of CCQ queries. (See Example~\ref{scvm-vs-mhom-ex} in Appendix~\ref{e-four-sec}.) At the same time, we have the following result for {\em SCVMs} and multiset-homomorphisms. (The proof, which is immediate from Proposition~\ref{cvm-is-gcm-prop}, can be found in Appendix~\ref{e-four-sec}.) 

\begin{proposition} 
\label{sccm-vs-mhomom-prop} 
Given an equivalence-compatible CCQ pair $(Q,$ $Q')$. Then each SCVM from $Q'$ to $Q$ is a multiset-homomorphism from $Q'$ to {\em the deregularized version} of $Q$, and vice versa. 
\end{proposition}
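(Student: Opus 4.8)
The plan is to prove Proposition~\ref{sccm-vs-mhomom-prop} by chaining together Proposition~\ref{cvm-is-gcm-prop} with the structural constraints that the ``equivalence-compatible'' hypothesis imposes on both CVMs and multiset-homomorphisms. Recall that for an equivalence-compatible pair $(Q, Q')$ we have $|M_{copy}| = |M'_{copy}|$ and $|M_{noncopy}| = |M'_{noncopy}|$, so any SCVM from $Q'$ to $Q$, when restricted to $M'$, is a bijection onto $M$ (as already observed right after Definition~\ref{magic-mapping-def}); symmetrically, a multiset-homomorphism from $Q'$ to $Q$ has range at most $M$ on $M'$, but since the cardinalities agree and the mapping is injective on $M'$ (Definition~\ref{multiset-homom-def}, conditions 4--5), it is likewise a bijection from $M'$ onto $M$. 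The two notions therefore impose \emph{the same} constraint on the multiset variables; all that remains is to reconcile what they demand on the subgoals.

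First I would prove the forward direction. Let $\mu$ be an SCVM from $Q'$ to $Q$. By Proposition~\ref{cvm-is-gcm-prop}(1), $\mu$ is a GCM from $Q'$ to the deregularized version $Q_d$ of $Q$, i.e.\ $\mu$ satisfies conditions 1--3 of Definition~\ref{multiset-homom-def} with respect to $Q_d$. It remains to check conditions 4 and 5. Condition 4 ($\mu M' \subseteq M_d$): since $Q_d \equiv_C Q$ we have $M_d = M$, and $\mu M' = M$ because $\mu$ is an SCVM (its restriction to $M'$ is a bijection onto $M$), so in fact $\mu M' = M_d$. Condition 5 ($\mu$ injective on $M'$): again immediate from $\mu|_{M'}$ being a bijection. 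Hence $\mu$ is a multiset-homomorphism from $Q'$ to $Q_d$.

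Next I would prove the converse. Let $\psi$ be a multiset-homomorphism from $Q'$ to $Q_d$; I must show $\psi$ is an SCVM from $Q'$ to $Q$. Conditions (1) and (2) of Definition~\ref{magic-mapping-def} are conditions 2 and 1 of Definition~\ref{multiset-homom-def}. For condition (3): $\psi M' \subseteq M_d = M$ (condition 4), and $\psi$ is injective on $M'$ (condition 5); combined with the equality $|M'_{copy}| = |M_{copy}|$ and $|M'_{noncopy}| = |M_{noncopy}|$ guaranteed by equivalence-compatibility, this forces $\psi$ to send $M'_{copy}$ bijectively onto $M_{copy}$ and $M'_{noncopy}$ bijectively onto $M_{noncopy}$ --- here I would note that copy variables of $Q'$ must map to copy variables of $Q_d$ and noncopy multiset variables to noncopy multiset variables, which follows because condition~3 of Definition~\ref{multiset-homom-def} (core-set of $\psi L' \subseteq$ core-set of $L_d$) forces the image of a copy-sensitive subgoal to be a copy-sensitive subgoal, hence its copy variable maps to a copy variable. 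For conditions (4) and (5) of Definition~\ref{magic-mapping-def}: for a relational subgoal $s(\bar Y)$ of $Q'$, condition~3 gives $s(\psi\bar Y) \in L_d$; by construction $L_d = L_r \cup \mathcal{T}(Q)$, so either $s(\psi\bar Y)$ is a relational subgoal of $Q$ or it is the relational template $s(\psi\bar Y)$ of a copy-sensitive subgoal $s(\psi\bar Y; i)$ of $Q$ with $i \in M_{copy}$ --- which is exactly condition (4). For a copy-sensitive subgoal $s(\bar Y; i)$ of $Q'$, condition~3 gives $s(\psi\bar Y; \psi i) \in L_d$, and since deregularizing only \emph{adds relational} templates, any copy-sensitive atom of $Q_d$ is already a copy-sensitive subgoal of $Q$, giving condition (5). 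Thus $\psi$ is a CVM from $Q'$ to $Q$, and since $(Q,Q')$ is equivalence-compatible it is an SCVM.

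The main obstacle I anticipate is the bookkeeping around deregularization in the converse direction: one must be careful that $L_d$, being $L_r$ together with the relational templates $\mathcal{T}(Q)$, does not contain any ``new'' copy-sensitive atoms, so that condition~5 of the CVM definition is met verbatim, and that the partition of multiset variables into copy and noncopy variables is respected by $\psi$ --- this last point is where the equality of cardinalities from equivalence-compatibility is essential and is the one place the argument genuinely uses more than Proposition~\ref{cvm-is-gcm-prop}. Everything else is a direct translation between the two definitions via the GCM bridge.
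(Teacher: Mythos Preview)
Your proposal is correct and follows essentially the same approach as the paper: both directions hinge on Proposition~\ref{cvm-is-gcm-prop} to bridge CVMs and GCMs, together with the observation that equivalence-compatibility forces the restriction to $M'$ to be a bijection onto $M$, which is exactly what is needed to reconcile condition~(3) of Definition~\ref{magic-mapping-def} with conditions~(4)--(5) of Definition~\ref{multiset-homom-def}. The paper's own proof is considerably terser---for the converse it simply says the argument runs ``in the opposite direction''---whereas you spell out carefully why relational atoms of $Q_d$ land either in $L$ or in $\mathcal{T}(Q)$ and why copy-sensitive atoms of $Q_d$ are already in $L$; this extra bookkeeping is sound and is precisely what the paper leaves implicit.
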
 

For instance, consider the mapping $\mu$ of Example~\ref{mh-vs-sscm-ex} from the terms of the query $Q$ to the terms of the query $Q'$ of the example. This mapping is a CVM from $Q$ to $Q'$ and is also a multiset-homomorphism from $Q$ to the deregularized version $Q'_d$ of $Q'$, $Q'_d(X) \leftarrow p(X,X,Y; i), p(X,X,Y), \{ Y,i \} .$ (Observe that there is no GCM from query $Q'_d$ to query $Q'$.)  

As an immediate corollary of Propositions~\ref{cvm-is-gcm-prop} and~\ref{sccm-vs-mhomom-prop}, we have that for each equivalence-compatible CCQ pair $(Q, Q')$, the existence of a multiset-homomorphism from $Q'$ to $Q$ implies the existence of a CVM from $Q'$ to $Q$. From this result and from Example~\ref{mh-vs-sscm-ex}, we obtain that 
the restriction of Theorem~\ref{cutup-cohen-thm-three-three} (due to \cite{Cohen06}) to CCQ queries does not have quite the same power as the sufficient condition for equivalence of CCQ queries that is immediate from Theorem~\ref{cvm-containm-thm}. (By Theorem~\ref{cvm-containm-thm}, we have $Q$ $\equiv_C$ $Q'$ for the queries of Example~\ref{mh-vs-sscm-ex}.) In fact, by Example~\ref{mh-vs-sscm-ex} we have that our Theorem~\ref{cvm-containm-thm} is a proper generalization of the (implicit) query-containment condition of \cite{Cohen06}, provided that the latter is applied to CCQ queries only; see Note 1 in Section~\ref{prev-results-sec}. (By Definition~\ref{multiset-homom-def} and by Theorem~\ref{not-same-num-multiset-vars-thm}, the existence of a multiset-homomorphism from CCQ query $Q'$ to CCQ query $Q$ implies $Q$ $\sqsubseteq_C$ $Q'$ only when $(Q, Q')$ is an equivalence-compatible CCQ pair.)  


\reminder{Old content starts below here in the remainder of this subsection}

\reminder{ 

\begin{corollary}
\label{well-behaved-mu-one-corol}
Given CCQ queries $Q$ and $Q'$: 
Then each same-scale covering mapping from $Q'$ to $Q$ is an MSCM from $Q'$ to the deregularized version of $Q$. Conversely, each MSCM from $Q'$ to the deregularized version of $Q$ is a same-scale covering mapping from $Q'$ to $Q$. 
\end{corollary}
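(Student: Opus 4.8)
The plan is to reduce everything to Proposition~\ref{cvm-is-gcm-prop} together with the cardinality equalities that accompany equivalence-compatibility. Write $Q_d$ for the deregularized version of $Q$; it is a well-defined CCQ query (Proposition~\ref{regulariz-prop}), and since deregularization only adds relational atoms that are templates of already present copy-sensitive subgoals, (i) $Q_d$ has the same multiset-variable set $M$ as $Q$, (ii) the copy-sensitive subgoals of $Q_d$ are exactly those of $Q$, and (iii) viewed as a set, the relational subgoals of $Q_d$ are exactly those of $Q$ together with the set ${\cal T}(Q)$ of relational templates of the copy-sensitive subgoals of $Q$. Since $(Q,Q')$ is equivalence-compatible, $|M_{copy}|=|M'_{copy}|$ and $|M_{noncopy}|=|M'_{noncopy}|$.

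For the forward direction I would start from an SCVM $\mu$ from $Q'$ to $Q$, that is, a CVM from $Q'$ to $Q$. By Proposition~\ref{cvm-is-gcm-prop}(1), $\mu$ is a GCM from $Q'$ to $Q_d$, hence it already satisfies conditions (1)--(3) of Definition~\ref{multiset-homom-def} relative to $Q_d$; it remains to get conditions (4) and (5) there, namely $\mu M'\subseteq M$ and injectivity of $\mu$ on $M'$. These follow from condition (3) of Definition~\ref{magic-mapping-def} --- which gives $\mu M'_{copy}=M_{copy}$ and $\mu M'_{noncopy}\supseteq M_{noncopy}$ --- once the two cardinality equalities force $\mu$ to restrict to a bijection from $M'_{copy}$ onto $M_{copy}$ and from $M'_{noncopy}$ onto $M_{noncopy}$, hence to a bijection from $M'$ onto $M$. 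Thus $\mu$ is a multiset-homomorphism from $Q'$ to $Q_d$.

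For the converse I would take a multiset-homomorphism $\psi$ from $Q'$ to $Q_d$ and check the five conditions of Definition~\ref{magic-mapping-def} relative to $Q$ (then, $(Q,Q')$ being equivalence-compatible, $\psi$ is automatically an SCVM). Conditions (1) and (2) of Definition~\ref{magic-mapping-def} are conditions (2) and (1) of Definition~\ref{multiset-homom-def}. For (5): a copy-sensitive subgoal $s(\bar Y;i)$ of $Q'$ is sent by $\psi$ to $s(\psi\bar Y;\psi i)$, which by condition (3) of Definition~\ref{multiset-homom-def} is a subgoal of $Q_d$; since in a well-defined CCQ query any variable occupying a copy slot is a copy variable, $\psi i\in M_{copy}$, and by (ii) above $s(\psi\bar Y;\psi i)$ is a copy-sensitive subgoal of $Q$ --- this gives (5), and also $\psi M'_{copy}\subseteq M_{copy}$. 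For (4): a relational subgoal $s(\bar Y)$ of $Q'$ is sent to $s(\psi\bar Y)$, which by (iii) above is either a relational subgoal of $Q$ or an element of ${\cal T}(Q)$, i.e., the template of a copy-sensitive subgoal $s(\psi\bar Y;i)$ of $Q$ with $i\in M_{copy}$ --- either way (4) holds. Finally (3): conditions (4)--(5) of Definition~\ref{multiset-homom-def} give $\psi M'\subseteq M$ and injectivity of $\psi$ on $M'$, which by $|M'|=|M|$ make $\psi$ a bijection from $M'$ onto $M$; combined with $\psi M'_{copy}\subseteq M_{copy}$ and $|M'_{copy}|=|M_{copy}|$ this yields $\psi M'_{copy}=M_{copy}$ and $\psi M'_{noncopy}=M_{noncopy}$, which is exactly condition (3).

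The only delicate point --- and the sole place the argument is more than an unwinding of Proposition~\ref{cvm-is-gcm-prop} --- is the converse-direction step that a multiset-homomorphism into $Q_d$ respects the copy/noncopy split of variables; this rests on the syntactic invariant that a variable occupying a copy position of a valid CCQ query is a copy variable, together with fact (ii) that deregularization neither creates nor destroys copy-sensitive subgoals. Everything else is routine bookkeeping: invoking Proposition~\ref{cvm-is-gcm-prop} and upgrading the inclusions $\mu M'\subseteq M$ (resp.\ $\psi M'\subseteq M$) to bijections via the cardinality equalities furnished by equivalence-compatibility.
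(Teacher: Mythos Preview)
Your proof is correct and follows essentially the same approach as the paper's own argument (which appears as the proof of Proposition~\ref{sccm-vs-mhomom-prop} in Appendix~\ref{e-four-sec}): both directions are reduced to Proposition~\ref{cvm-is-gcm-prop} together with the cardinality equalities supplied by equivalence-compatibility. Your treatment of the converse direction is in fact more explicit than the paper's, which simply says the forward argument runs ``in the opposite direction''; your unpacking of why a multiset-homomorphism into $Q_d$ must send copy variables to copy variables, and why the image of a relational subgoal lands either in $Q$ or in ${\cal T}(Q)$, is exactly the bookkeeping the paper leaves to the reader.
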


\reminder{Define isomorphism SCVM (for Theorem~\ref{unique-minimiz-vers-thm}).}

} 

\vspace{-0.2cm} 

\section{Equivalence: Asymmetric \\ Necessary Condition} 
\label{necess-equiv-condition} 

In this section we present a necessary condition for  equivalence of CCQ queries, Theorem~\ref{magic-mapping-prop}. To formulate Theorem~\ref{magic-mapping-prop}, we isolate a large well-behaved class of combined-semantics CQ queries, which we call ``explicit-wave queries.''  Theorem~\ref{magic-mapping-prop} is asymmetric: It states that if for CCQ queries $Q$ and $Q'$ the combined-semantics equivalence $Q \equiv_C Q'$ holds, {\em and} we have that $Q$ is  an explicit-wave query, then there exists a CVM from $Q'$ to $Q$.  As we will see in Section~\ref{magic-mapping-proof-sec}, establishing this result  is not trivial.  
We also formulate and prove the main result of this paper, Theorem~\ref{necess-suff-equiv-cond-thm}, and show how it gives rise to an algorithm for determining whether two explicit-wave CCQ queries are or are not combined-semantics equivalent. 

We begin by introducing Definition~\ref{expl-wave-def}. This technical definition is required for the proof of Theorem~\ref{magic-mapping-prop} to go through. Given a CCQ query $Q$, with set $M_{noncopy}$ $\neq$ $\emptyset$ of multiset noncopy variables, we say that a GCM $\mu$ from $Q$ to itself is a {\em noncopy-permuting GCM} if the mapping resulting from restricting the domain of $\mu$ to $M_{noncopy}$ is a bijection from $M_{noncopy}$ to itself. For two noncopy-permuting GCMs, $\mu_1$ and $\mu_2$, from $Q$ to itself, we say that $\mu_1$ and $\mu_2$ {\em agree on} $M_{noncopy}$ if $\mu_1$ and $\mu_2$ induce the same mapping from $M_{noncopy}$ to itself. If for  CCQ query $Q$ we have $M_{noncopy}$ $=$ $\emptyset$, we say that all GCMs from $Q$ to itself are noncopy-permuting GCMs, and that all pairs of such GCMs agree on $M_{noncopy}$. 

In Definition~\ref{expl-wave-def}, for a CCQ query $Q$ and for its copy-enhanced version $Q_{ce}$, we will call ``the original copy-sensitive subgoals of $Q$'' those copy-sensitive atoms that are present in the conditions of both $Q$ and $Q_{ce}$. 

\begin{definition}{Explicit-wave CCQ query} 
\label{expl-wave-def} 
A CCQ query $Q$  is an {\em explicit-wave (CCQ) query} if one of the following conditions holds: 
\begin{itemize} 
	\item[(1)] $Q$ has at most one copy-sensitive subgoal; or 
	\item[(2)] For the set $M_{noncopy}$ of multiset noncopy variables of $Q$, and for each pair $(\mu_1,\mu_2)$ of noncopy-permuting GCMs from $Q_{ce}$ to itself, such that $\mu_1$ and $\mu_2$ {\em agree on} $M_{noncopy}$, for each original copy-sensitive subgoal, $s$, of $Q$ we have that $\mu_1(s)$ and $\mu_2(s)$ have the same relational template. 
\end{itemize} 
\vspace{-0.5cm} 
\end{definition}  

The problem of determining whether a given CCQ query is an explicit-wave query can easily be seen to be in co-NP. It is open whether this upper complexity bound is tight. 

As an example, any CCQ query $Q$ that has a distinct predicate name for each subgoal (i.e., is a query without self-joins) can  be shown to be an explicit-wave query. Further, a polynomial-time checkable sufficient condition for a CCQ query to be explicit wave is that each subgoal of the query not contain both a set variable and a copy variable: 

\begin{proposition} 
\label{suffic-for-expl-wave-prop}
Given a CCQ query $Q$ such that each copy-sensitive subgoal of $Q$ has no set variables. Then $Q$ is an explicit-wave query. 
\end{proposition}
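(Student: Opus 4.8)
\medskip
\noindent\textbf{Proof sketch (proposal).}
The plan is to verify Definition~\ref{expl-wave-def} directly. If $Q$ has at most one copy-sensitive subgoal, then condition~(1) of that definition already makes $Q$ an explicit-wave query, so I would assume that $Q$ has at least two copy-sensitive subgoals and aim to establish condition~(2). To that end, fix an arbitrary pair $(\mu_1,\mu_2)$ of noncopy-permuting GCMs from $Q_{ce}$ to itself that agree on $M_{noncopy}$, and fix an arbitrary original copy-sensitive subgoal $s = p(\bar{Y}; i)$ of $Q$; the goal is to show that $\mu_1(s)$ and $\mu_2(s)$ have the same relational template, i.e.\ that $p(\mu_1(\bar{Y})) = p(\mu_2(\bar{Y}))$.

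The first key step is a classification of the terms occurring in $\bar{Y}$. Copy variables appear only in the copy-number position of a copy-sensitive atom, never inside a term vector $\bar{Y}$; by hypothesis $s$ has no set variables; and forming $Q_{ce}$ from $Q$ neither deletes the original copy-sensitive subgoals of $Q$ nor introduces any set variables (it only attaches fresh copy variables to the relational subgoals, which enlarges $M_{copy}$ but not the set of set variables and not $M_{noncopy}$). Hence, viewed inside $Q_{ce}$, every entry of $\bar{Y}$ is either a constant, a distinguished (head) variable, or an element of $M_{noncopy}$.

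The second step is to observe that $\mu_1$ and $\mu_2$ act identically on each of these three kinds of term, hence $\mu_1(\bar{Y}) = \mu_2(\bar{Y})$: constants are fixed by every GCM (Definition~\ref{multiset-homom-def}, condition~2); every head variable is fixed by a GCM from a query to itself, since such a GCM maps the head vector to itself componentwise (condition~1); and the entries of $\bar{Y}$ lying in $M_{noncopy}$ receive the same image under $\mu_1$ and $\mu_2$ precisely because the two GCMs agree on $M_{noncopy}$. This yields condition~(2) of Definition~\ref{expl-wave-def}. The degenerate case $M_{noncopy} = \emptyset$ is subsumed: then $M_{noncopy}$ of $Q_{ce}$ is also empty, $\bar{Y}$ contains only constants and head variables, and the quantification in Definition~\ref{expl-wave-def} ranges over all pairs of GCMs from $Q_{ce}$ to itself, for which the same two observations apply. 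I do not anticipate a genuine obstacle; the statement is essentially bookkeeping on top of the definitions. The one place that needs care is the term-classification step, where one must check that ``$s$ has no set variables'' together with the construction of $Q_{ce}$ really forces every non-copy position of $s$ into one of the three harmless categories.
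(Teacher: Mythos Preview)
Your proposal is correct and follows essentially the same route as the paper's proof: handle the case of at most one copy-sensitive subgoal by clause~(1) of Definition~\ref{expl-wave-def}, and for clause~(2) observe that the non-copy arguments of an original copy-sensitive subgoal $s$ must be constants, head variables, or elements of $M_{noncopy}$, on all of which $\mu_1$ and $\mu_2$ agree. Your term-classification step is in fact slightly more explicit than the paper's (you spell out why copy variables cannot occur inside $\bar{Y}$ and why passing to $Q_{ce}$ changes nothing relevant), but the argument is the same.
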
 

(This result is straightforward. For completeness, a proof can be found in Appendix~\ref{suffic-for-expl-wave-sec}.)   

By Proposition~\ref{suffic-for-expl-wave-prop}, each CCQ set query is an explicit-wave query, and so is each CCQ bag query and each CCQ bag-set query. In addition, Example~\ref{nex-ex} in Section~\ref{intro-sec} exhibits SQL queries that are  explicit-wave CCQ queries, and such that none of the previously known tests for combined-semantics equivalence are applicable to them. 

Further, again by Proposition~\ref{suffic-for-expl-wave-prop}, we posit that explicit-wave queries include all those CCQ queries that are expressible in SQL. (That is, it appears that all ``practical'' CCQ queries are explicit wave.) Indeed, we are not aware of a way in SQL to enforce, for some relation in the main {\tt FROM} clause of a query without the {\tt DISTINCT} keyword, that the multiplicity of one individual argument of the relation not contribute to the multiplicity of the query answers. 

For each CCQ query $Q$ that is not explicit-wave, we call $Q$ an {\em implicit-wave query.} Consider an illustration. 


\begin{example} 
\label{intro-weird-ex} 
Consider CCQ queries $Q$ and $Q'$. 
\begin{tabbing} 
Hehetab b \= hehe \kill
$Q(X_1) \leftarrow r(X_1,Y_1,Y_2,X_2; i), r(X_1,Y_1,Y_2,X_3; j), \{ Y_1,Y_2, i, j \} .$ \\
$Q'(X_1) \leftarrow r(X_1,Y_1,Y_2,X_2; i), r(X_1,Y_1,Y_2,X_2; j), \{ Y_1,Y_2, i, j \} .$ 
\end{tabbing} 

The only difference between the queries 
is that the two subgoals of the query $Q$ have different {\em set} 
variables, $X_2$ and $X_3$, whereas the two subgoals of $Q'$ have the same set variable $X_2$. 
We can show (see Appendix~\ref{abc-app}) that the query $Q$  is an implicit-wave query.  

There exist both a multiset homomorphism and a CVM from the query $Q$ to the query $Q'$. (Recall that each of the two mappings provides a sufficient condition for $Q'$ $\sqsubseteq_C$ $Q$.) 
%
%
Observe  that there is no isomorphism between $Q$ and $Q'$. The remarkable part is that no multiset homomorphism or CVM exists in the opposite direction, that is from $Q'$ to $Q$. 
Yet, $Q \equiv_C Q'$ does hold (see Appendix~\ref{weird-ex-sec}). It does not help much that there exists a GCM from $Q'$ to $Q$.  
By Theorem~\ref{gen-ce-cm-thm}, the existence of a GCM is a necessary, rather than sufficient, condition for the containment $Q$ $\sqsubseteq_C$ $Q'$. (To  apply Theorem~\ref{gen-ce-cm-thm}, observe that $Q$ and $Q_{ce}$ are identical, as are $Q'$ and $Q'_{ce}$.) 
\end{example} 

Queries such as the query $Q$ of Example~\ref{intro-weird-ex} are of the kind that does not seem to have been studied before. 
For instance, as we have just argued, implicit-wave CCQ queries cannot occur under set, bag, or bag-set semantics. 
By Theorem~\ref{magic-mapping-prop} in this section,  under these three traditional semantics, as well as in other cases of combined semantics,  there exist {\em symmetric CVM} mappings between equivalent CCQ queries. That is, for each pair ($Q$, $Q'$) of combined-semantics CCQ queries such that each of $Q$ and $Q'$ is an explicit-wave query, $Q \equiv_C Q'$ implies that a CVM exists from $Q$ to $Q'$. What is important is that in all such cases, a mapping of the {\em same type} (i.e., also a CVM) always exists also from $Q'$ to $Q$. Example~\ref{intro-weird-ex} illustrates that such symmetry does not hold  for unrestricted pairs of CQ queries under combined semantics.


We now state Theorem~\ref{magic-mapping-prop}. 

\begin{theorem} 
\label{magic-mapping-prop} 
Given CCQ queries $Q$ and $Q'$, such that (i) $Q$ is an explicit-wave query, and (ii) $Q \equiv_C Q'$. Then  there exists a SCVM from $Q'$ to $Q$. 
\end{theorem}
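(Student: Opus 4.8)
The plan is to extract a ``skeleton'' mapping from the necessary containment condition and then promote it to a full SCVM by a counting argument over a family of canonical ``wave'' databases, with the explicit-wave hypothesis on $Q$ being exactly what keeps the count manageable. First I would reduce the goal: by $Q \equiv_C Q'$ and Corollary~\ref{not-same-num-multiset-vars-prop} the pair $(Q,Q')$ is equivalence-compatible, so it suffices to exhibit \emph{any} CVM from $Q'$ to $Q$, since a CVM between an equivalence-compatible pair is automatically a SCVM; and by Proposition~\ref{sccm-vs-mhomom-prop} together with Propositions~\ref{regulariz-prop} and~\ref{cvm-is-gcm-prop}, this is in turn equivalent to producing a multiset-homomorphism from $Q'$ to the deregularized version $Q_d$ of $Q$. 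I would work with $Q_d$ (equivalently, with the copy-enhanced version $Q_{ce}$, recording which copy-sensitive subgoals are ``original'' in the sense of Definition~\ref{expl-wave-def}). As a starting point, applying Theorem~\ref{gen-ce-cm-thm} to $Q \sqsubseteq_C Q'$ yields a GCM $h$ from $Q'_{ce}$ to $Q_{ce}$, and applying it to $Q' \sqsubseteq_C Q$ yields a GCM $g$ from $Q_{ce}$ to $Q'_{ce}$. The mapping $h$ already satisfies items (1),(2) of Definition~\ref{magic-mapping-def} and sends each subgoal of $Q'_{ce}$ to a subgoal of $Q_{ce}$; what is missing — and carries all the difficulty — is (i) injectivity on multiset variables, (ii) that $h$ maps $M'_{copy}$ \emph{onto} $M_{copy}$ and $M'_{noncopy}$ onto a superset of $M_{noncopy}$, and (iii) that, read back between the non-copy-enhanced queries, $h$ moves copy-sensitive subgoals of $Q'$ precisely to subgoals of $Q$ of the shape demanded by items (4)--(5) of Definition~\ref{magic-mapping-def}.

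To recover this missing structure I would use canonical databases and multiplicity counting in the spirit of~\cite{ChandraM77,ChaudhuriV93}. Freeze $Q$ (treating set variables, multiset noncopy variables, distinguished variables, and copy variables each by fresh distinct constants/copy numbers), but parameterize: group the copy-sensitive subgoals of $Q_{ce}$ by relational template and, for a tuple $\bar n=(n_1,\dots,n_k)$ of positive integers, give the frozen atom coming from a subgoal with the $t$-th template the copy number $n_t$; call the result $D_{\bar n}$ and let $\bar c$ be the frozen head tuple. The key computation is that the multiplicity of $\bar c$ in ${\sc Res}_C(Q,D_{\bar n})$ is a polynomial in $\bar n$ whose monomials are indexed by (classes of) noncopy-permuting GCMs from $Q_{ce}$ to itself, and whose exponents record how the copy variables of $Q$ distribute over the templates; here Definition~\ref{expl-wave-def} is precisely what forces two self-maps that agree on $M_{noncopy}$ to contribute to the \emph{same} monomial, so that the polynomial separates cleanly by template and by the action on $M_{noncopy}$. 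Computing the multiplicity of $\bar c$ in ${\sc Res}_C(Q',D_{\bar n})$ the same way — now routing every satisfying assignment of $Q'$ through some GCM into $Q\cong D_{\bar n}$ and then through a self-map of $Q_{ce}$ — and invoking $Q \equiv_C Q'$ to equate the two polynomials, I would match leading monomials; using Corollary~\ref{not-same-num-multiset-vars-prop} to fix the total degrees and the equalities $|M_{copy}|=|M'_{copy}|$, $|M_{noncopy}|=|M'_{noncopy}|$.

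The monomial match then yields a GCM $\mu$ from $Q'$ to $Q$ — possibly $h$ post-composed with a suitable noncopy-permuting self-map (or idempotent retraction) of $Q_{ce}$ distilled from iterating $h\circ g$ — that is injective on $M'$, maps $M'_{copy}$ onto $M_{copy}$ and $M'_{noncopy}$ onto a superset of $M_{noncopy}$, and moves each copy-sensitive subgoal of $Q'$ to a subgoal of $Q$ with matching relational template and matching copy variable. One verifies items (1)--(5) of Definition~\ref{magic-mapping-def} for $\mu$: (1),(2) are inherited, (3) is the monomial-match output, and (4),(5) hold because the explicit-wave property of $Q$ makes the copy-structure read off by $\mu$ consistent. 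Being a CVM from $Q'$ to $Q$ between an equivalence-compatible pair, $\mu$ is a SCVM, which is the claim. The nested ``frozen $Q$ plus shells obtained by applying self-maps'' databases are the ``waves'' that name the class, and the whole construction collapses for implicit-wave $Q$, as Example~\ref{intro-weird-ex} shows.

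The main obstacle is this middle step: showing that equality of the multiplicity functions over the family $\{D_{\bar n}\}$ (and over related multi-shell wave databases, needed to pin down $M_{noncopy}$ separately from $M_{copy}$) forces a GCM with the injectivity and copy-template properties above. The delicate point is disentangling the three sources of multiplicity — copy variables (scaling with the $n_t$), multiset noncopy variables (behaving like an additional, set-semantics-style layer of homomorphisms), and set variables (which must contribute nothing) — and it is exactly the explicit-wave hypothesis on $Q$ that prevents a self-map from smearing an original copy-sensitive subgoal across templates and thereby corrupting the bookkeeping.
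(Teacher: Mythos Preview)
Your overall architecture matches the paper's: build a parameterized family of canonical-style databases for $Q$, compute the multiplicity of the frozen head tuple for both $Q$ and $Q'$ as a function of the parameters, use $Q\equiv_C Q'$ to equate these, and extract from the equality an assignment of $Q'$ into $Q$ that acts correctly on multiset variables. The paper organizes this via ``monomial classes'' (atom-signature equivalence classes of assignments), each contributing a ``multiplicity monomial''; the special monomial $\prod_j N_j$ times the product of the copy-template parameters is the \emph{wave} ${\cal P}^{(Q)}_*$, and any monomial class for $Q'$ whose multiplicity monomial equals the wave yields a SCVM (this is your ``monomial match then yields a GCM $\mu$'' step, and is the paper's Proposition~\ref{q-same-scale-mpng-prop}).

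The genuine gap is your middle step, which you yourself flag as the ``main obstacle'' but do not resolve. ``Match leading monomials'' does not work here, for two reasons. First, the multiplicity function ${\cal F}^{(Q')}_{(Q)}$ is \emph{not} a multivariate polynomial in the parameters on the whole domain: assignments with the same noncopy-signature but different copy-signatures contribute overlapping tuple sets to $\Gamma_{\bar S}$, so the count is an inclusion--exclusion expression with $\min$-terms (Proposition~\ref{multip-monomial-prop}). It becomes a polynomial only after fixing a total order ${\bar K}_w$ on the copy-template parameters (Proposition~\ref{n-subsets-funct-prop}). Second, on each such subdomain the polynomial for $Q'$ may contain the wave as a \emph{phantom} term --- one produced by inclusion--exclusion, not by any single monomial class --- so equality of polynomials does \emph{not} immediately give you an assignment class realizing the wave. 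This is exactly the failure mode in Example~\ref{intro-weird-ex}. The explicit-wave hypothesis enters not just as ``bookkeeping hygiene'' for self-maps of $Q_{ce}$, but as the statement that the $m$-covering part of ${\cal F}^{(Q)}_{(Q)}$ is the \emph{same} polynomial on every total-order subdomain (Proposition~\ref{m-covering-prop}); via $Q\equiv_C Q'$ this forces the $m$-covering part of ${\cal F}^{(Q')}_{(Q)}$ to be constant across subdomains as well. The paper then rules out the wave being phantom for $Q'$ by a delicate argument (Section~\ref{bqkw-sec} and Proposition~\ref{goldfish-in-q-prime-prop}): it replaces each inclusion--exclusion parenthesized block by the positionwise $\max$ over the participating copy-signatures (Proposition~\ref{key-sig-prop}), obtaining a ``signature'' with only positive entries, and shows that if the wave were phantom one could swap two copy-template parameters in ${\bar K}_w$ and change the signature --- contradicting constancy across subdomains. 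None of this is captured by ``match leading monomials'' or by iterating $h\circ g$; you need to supply this solid-vs-phantom argument (or an alternative) to close the proof.
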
 

Theorems~\ref{cvm-containm-thm} and \ref{magic-mapping-prop} yield immediately a necessary and sufficient equivalence condition for CCQ explicit-wave queries, see Theorem~\ref{necess-suff-equiv-cond-thm}. 

Due to the well-known example of \cite{ChaudhuriV93} (Appendix~\ref{chaudhuri-v-ex-sec}), it appears that condition (ii) of  Theorem~\ref{magic-mapping-prop} cannot be replaced by condition $Q \sqsubseteq_C Q'$ (while also replacing SCVMs by CVMs), even when $Q$ is an explicit-wave query. Alternatively, we cannot remove condition (i) of Theorem~\ref{magic-mapping-prop}. Indeed, in Example~\ref{intro-weird-ex} there is a SCVM from query $Q$ to explicit-wave  query $Q'$, but there is no SCVM from $Q'$ to $Q$, even though $Q \equiv_C Q'$ holds. Thus, Theorem~\ref{magic-mapping-prop} provides an {\em asymmetric} necessary condition for CCQ-query equivalence. This asymmetry does not appear to have been explored in previous work. One reason for this is that, as we have shown, under the three traditional semantics all CCQ queries are explicit-wave queries. In \cite{Cohen06}, Cohen explores query classes that properly subsume the class of CCQ queries. When restricted to CCQ queries, all the necessary and sufficient conditions of \cite{Cohen06} for combined-semantics query equivalence require the queries to be explicit-wave queries. 
(We note that none of the necessary and sufficient conditions of \cite{Cohen06} applies to our Examples~\ref{weird-two-ex} or~\ref{mh-vs-sscm-ex}, even though all the queries in the two examples are explicit-wave queries. Yet, by an equivalence test that is immediate from our Theorems~\ref{cvm-containm-thm} and \ref{magic-mapping-prop}, $Q \equiv_C \hspace{-0.55cm} / \hspace{0.4cm} Q'$ for the queries of Example~\ref{weird-two-ex}, and $Q \equiv_C Q'$ for the queries of Example~\ref{mh-vs-sscm-ex}. See Appendix~\ref{cohen-expl-wave-app} for all the details.)  

We now formulate the main result of this paper, our decidability result for combined-semantics equivalence of CCQ explicit-wave queries. 

\begin{theorem} 
\label{necess-suff-equiv-cond-thm}
Given explicit-wave CCQ queries $Q_1$ and $Q_2$. 
Then $Q_1 \equiv_C Q_2$ if and only if there exists a CVM from $Q_1$ to $Q_2$, and another from $Q_2$ to $Q_1$. 
\end{theorem}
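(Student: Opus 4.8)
The plan is to derive Theorem~\ref{necess-suff-equiv-cond-thm} directly from the two main technical results already available: Theorem~\ref{cvm-containm-thm} for the ``if'' direction and Theorem~\ref{magic-mapping-prop} for the ``only if'' direction. The explicit-wave hypothesis on $Q_1$ and $Q_2$ will be used only for the latter.

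For the ``if'' direction, I would assume a CVM from $Q_1$ to $Q_2$ and a CVM from $Q_2$ to $Q_1$. Applying Theorem~\ref{cvm-containm-thm} to the first mapping (in the notation of that theorem, taking $Q' := Q_1$ and $Q := Q_2$) gives $Q_2 \sqsubseteq_C Q_1$, and applying it to the second (taking $Q' := Q_2$ and $Q := Q_1$) gives $Q_1 \sqsubseteq_C Q_2$; together these are exactly $Q_1 \equiv_C Q_2$ by the definition of combined-semantics equivalence. This half needs neither the explicit-wave assumption nor Theorem~\ref{magic-mapping-prop}.

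For the ``only if'' direction, I would start from $Q_1 \equiv_C Q_2$ and first observe, via Corollary~\ref{not-same-num-multiset-vars-prop}, that $|M_{copy}| = |M'_{copy}|$ and $|M_{noncopy}| = |M'_{noncopy}|$, so that $(Q_1, Q_2)$ is an equivalence-compatible CCQ pair and the SCVM terminology is meaningful in both directions. Then I would invoke Theorem~\ref{magic-mapping-prop} twice: once with $Q := Q_1$ (which is explicit wave) and $Q' := Q_2$, producing an SCVM from $Q_2$ to $Q_1$; and once with $Q := Q_2$ (which is explicit wave) and $Q' := Q_1$, producing an SCVM from $Q_1$ to $Q_2$. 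Since every SCVM is by definition a CVM, this yields the two required covering mappings.

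The ``if'' direction is routine, and the ``only if'' direction is merely a double application of Theorem~\ref{magic-mapping-prop}, which is legitimate precisely because \emph{both} queries are explicit wave. Hence the only genuine obstacle is internal to Theorem~\ref{magic-mapping-prop} itself --- the construction of the ``explicit-wave'' witness databases and the extraction of an SCVM from the assumed combined-semantics equivalence --- which is carried out in Section~\ref{magic-mapping-proof-sec}. I would close by noting that this biconditional yields a terminating equivalence test, since the existence of a CVM between two fixed CCQ queries is decidable by a finite search over mappings on the (finitely many) terms of the two queries.
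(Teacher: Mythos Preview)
Your proposal is correct and follows essentially the same approach as the paper, which states only that the result is immediate from Theorems~\ref{magic-mapping-prop} and~\ref{cvm-containm-thm}. You have simply spelled out the two applications of each theorem (and the use of Corollary~\ref{not-same-num-multiset-vars-prop} to justify the SCVM terminology), which the paper leaves implicit.
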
 

The result of Theorem~\ref{necess-suff-equiv-cond-thm} is immediate from Theorems~\ref{magic-mapping-prop} and~\ref{cvm-containm-thm}. 

We use Theorem~\ref{necess-suff-equiv-cond-thm} to develop the following algorithm for determining whether two explicit-wave CCQ queries $Q$ and $Q'$ are or are not combined-semantics equivalent. (The correctness of the algorithm is by Theorems~\ref{not-same-num-multiset-vars-thm} and~\ref{necess-suff-equiv-cond-thm}.)   

{\bf Input:} Pair $(Q,$ $Q')$ of CCQ queries. 

{\bf Output:} Determination whether  $Q$ $\equiv_C$ $Q'$ does or does not hold, in case both $Q$ and $Q'$ are explicit-wave queries. 

{\bf Procedure:} 
\begin{enumerate} 
	\item If $(Q,$ $Q')$ is not an equivalence-compatible CCQ pair, then stop and return ``not combined-semantics equivalent.'' 
	\item Use Definition~\ref{expl-wave-def} to ascertain that both $Q$ and $Q'$ are explicit-wave queries. 
	\item Whenever both $Q$ and $Q'$ are explicit-wave queries, use the test of Theorem~\ref{necess-suff-equiv-cond-thm} to report whether $Q$ $\equiv_C$ $Q'$ does or does not hold. 
\end{enumerate} 

We note that instead of using in step 2 of the algorithm the procedure of Definition~\ref{expl-wave-def}, which is in co-NP, we can alternatively apply the polynomial-time sufficient condition of Proposition~\ref{suffic-for-expl-wave-prop} for a CCQ query to be explicit wave. As discussed earlier in this section, this sufficient condition appears to cover all practical cases of SQL queries. Further, step 3 of the algorithm is NP-complete; this is easy to see when we recall that containment mappings of \cite{ChandraM77} are a special case of CVMs. (It is straightforward to argue that finding, for an equivalence-compatible CCQ pair $(Q,$ $Q')$, a CVM from $Q$ to $Q'$ when both queries are {\em set} queries is at least as hard as finding a CVM from $Q$ to $Q'$ in the general case. This claim follows from the fact that, in case where $(Q,$ $Q')$ is an equivalence-compatible CCQ pair, each CVM from $Q$ to $Q'$ is by definition a bijection from the set $M$ of multiset variables of $Q$ to the set $M'$ of multiset variables of $Q'$. Intuitively, as CCQ set queries do not have multiset variables, when looking for the existence of a CVM from CCQ set query $Q$ to CCQ set query $Q'$ we ``have to consider the maximal possible number of options'' when enumerating candidate CVMs from the terms of $Q$ to the terms of $Q'$.) 

Finally, we observe that even if one or more of the queries $Q$ and $Q'$ is an implicit-wave CCQ query, then still Theorem~\ref{cvm-containm-thm} could be used in at least some cases, to ascertain that $Q$ $\equiv_C$ $Q'$. (Recall that Theorem~\ref{cvm-containm-thm} applies to all CCQ queries, rather than just explicit wave.)  Similarly, step 1 of the algorithm would determine correctly nonequivalence for pairs of CCQ queries where neither query in the pair has to be explicit wave. 

\vspace{-0.2cm} 

\section{Proof of Theorem 4.1} 
\label{magic-mapping-proof-sec} 

In this section we provide a proof of Theorem~\ref{magic-mapping-prop}. 


\reminder{Into high-level explanation of the proof (in the main text of the paper): Explain how the complexity of the proof is compounded by the need to treat in fundamentally different ways (a) the multiset noncopy variables, and (b) the copy variables} 

\reminder{Say here about the examples that provide illustrations of the notation and constructs of the proof of Theorem~\ref{magic-mapping-prop}: Example~\ref{main-proof-ex}, in Section~\ref{main-proof-ex-sec}, illustrates the constructions and notation introduced in Sections~\ref{basics-sec} through~\ref{monomial-class-mappings-sec} of this proof. Example~\ref{main-proof-cont-ex}, of Section~\ref{main-proof-cont-ex-sec}, illustrates  the constructions and notation introduced in Section~\ref{putting-together-f-sec}, specifically in Subsection~\ref{easy-funct-case-sec}, of this proof. 
Finally, Example~\ref{writeup-weird-ex} in Section~\ref{writeup-weird-ex-sec} illustrates the general case of the constructions introduced in Section~\ref{putting-together-f-sec}. In addition, Example~\ref{writeup-weird-ex} exhibits a CCQ query $Q$ that is {\em not} an explicit-wave query, and a CCQ query $Q'$ such that $Q' \equiv_C Q$. [[[Where do I prove $Q' \equiv_C Q$??? -- I need the proof!!!]]] Hence, Theorem~\ref{magic-mapping-prop} does not guarantee the existence of a SCVM from $Q'$ to $Q$. Indeed, we show that for the $Q$ and $Q'$ of Example~\ref{writeup-weird-ex}, a SCVM from $Q'$ to $Q$ does not exist, even though $Q' \equiv_C Q$ holds.}

\subsection{Intuition for the Proof and Extended \\ Example} 
\label{ext-wave-ex-sec} 

\subsubsection{Intuition for the Proof} 
\label{proof-intuition-sec} 

In this subsection we outline the idea of the proof of Theorem~\ref{magic-mapping-prop}. Intuitively, we generalize the proof, via canonical databases, of the existence of a containment mapping \cite{ChandraM77} from CCQ set query $Q'$ to CCQ set query $Q$ whenever $Q \equiv_S Q'$. The challenge in the generalization is that we are looking for a SCVM from $Q'$ to $Q$, that is, the desired mapping must map each multiset variable of $Q'$ into a distinct multiset variable of $Q$. Showing that we have constructed a mapping with this property is thus an essential part of the proof. (Observe that based on the conditions of Theorem~\ref{magic-mapping-prop},   we have no information about the structural relationship between the two queries.) 

 For a given CCQ query $Q$, the proof of Theorem~\ref{magic-mapping-prop} constructs an infinite number of databases, where each database $D_{{\bar N}^{(i)}}(Q)$, $i \geq 1$, can be thought of as a union  of (suitable modifications of) ``canonical databases'' for $Q$. (See Section \ref{new-ccq-canon-db-sec} for the definition.) Similarly to canonical databases for CCQ set queries, each ground atom in each database $D_{{\bar N}^{(i)}}(Q)$ can be associated, via a mapping that we denote $\nu^{(i)}_Q$, with a unique subgoal of the query $Q$. 
See Section~\ref{db-constr-sec} for the details. 

The role of each database $D_{{\bar N}^{(i)}}(Q)$ in the proof is that the database represents a particular combination of multiplicities of the values of (some of) the multiset variables $Y_1$, $Y_2$, $\ldots,$ $Y_n$, for some $n \geq 1$, of the query $Q$. (We have that $n \geq 1$ for all CCQ queries $Q$ and $Q'$ such that $Q \equiv_C Q'$ and at least one of $Q$ and $Q'$ is not a set query.) For each database $D_{{\bar N}^{(i)}}(Q)$, 
we represent the $n$ respective multiplicities as natural numbers $N_1^{(i)}$ through $N_n^{(i)}$, or equivalently via the $n$-ary vector ${\bar N}^{(i)}$. 

By construction of the databases, we have that some fixed tuple, $t^*_Q$, is an element of the bag ${\sc Res}_C(Q,D_{{\bar N}^{(i)}}(Q))$ for each $i \geq 1$. Moreover, {\em for all queries $Q''$ such that $(Q,$ $Q'')$ is an equivalence-compatible CCQ pair,}  we have that the multiplicity of the tuple $t^*_Q$ in each bag ${\sc Res}_C(Q'',D_{{\bar N}^{(i)}}(Q))$ (that is, for each $i \geq 1$) can be expressed using the symbolic representations, $N_1$ through $N_n$, of the respective elements $N_1^{(i)},$ $\ldots,$ $N_n^{(i)}$ of the vector ${\bar N}^{(i)}$. That is, for each such query $Q''$, we can obtain explicitly a function, ${\cal F}_{(Q)}^{(Q'')}$, in terms of the $n$ variables  $N_1,$ $\ldots,$ $N_n$, such that whenever we substitute $N_j^{(i)}$ for $N_j$, for each $j \in \{ 1,\ldots,n \}$, the resulting expression in terms of $N_1^{(i)},$ $\ldots,$ $N_n^{(i)}$ evaluates to the multiplicity of the tuple $t^*_Q$ in the bag ${\sc Res}_C(Q'',D_{{\bar N}^{(i)}}(Q))$. See Sections~\ref{valid-map-sec} through~\ref{monomial-classes-sec} and Section~\ref{putting-together-f-sec} for the construction of the function. Sections~\ref{main-proof-ex-sec} and~\ref{putting-together-f-sec} contain extended illustrations of the constructions. 

Observe that for each CCQ query $Q'$ such that $Q' \equiv_C Q$, it must be that the functions  ${\cal F}_{(Q)}^{(Q')}$ and  ${\cal F}_{(Q)}^{(Q)}$ output the same value on each database $D_{{\bar N}^{(i)}}(Q)$, $i \geq 1$. 

Consider the simplest case, where our query $Q$ 
has no self-joins  and has $|M|$ $=$ $n$ $\geq$ $1$.  
In this case, by construction of the databases, we have that the function ${\cal F}_{(Q)}^{(Q)}$ {\em for the query $Q$} is the monomial $\Pi_{j=1}^n N_j$. Consider an arbitrary assignment, $\gamma$, from $Q$ to a $D_{{\bar N}^{(i)}}(Q)$. Each such $\gamma$ has contributed to the construction of the database; we call $\gamma$ a {\em generative assignment from $Q$ to} $D_{{\bar N}^{(i)}}(Q)$.  
We can show that the composition $\nu^{(i)}_Q$ $\circ$ $\gamma$ is a SCVM from $Q$ to itself. (Note the presence in the product $\Pi_{j=1}^n N_j$ of the variables for all the $n$ multiset variables of $Q$.) Moreover, for each query $Q'$ such that $Q' \equiv_C Q$, the function ${\cal F}_{(Q)}^{(Q')}$ is forced (by $Q' \equiv_C Q$ and by ${\cal F}_{(Q)}^{(Q)}$ being a multivariate polynomial) to be exactly $\Pi_{j=1}^n N_j$, regardless of the relationship between the structures of $Q$ and $Q'$. We show that whenever ${\cal F}_{(Q)}^{(Q')}$ $=$ $\Pi_{j=1}^n N_j$, an assignment from $Q'$ to a database $D_{{\bar N}^{(i)}}(Q)$ can be composed with the mapping $\nu^{(i)}_Q$ to yield  a SCVM from $Q'$ to $Q$, precisely due to the presence in the function ${\cal F}_{(Q)}^{(Q')}$ of the ``representative'' $N_j$ of each multiset variable $Y_j$ of the query $Q$, for each $j \leq n$. 

The above exposition conveys the general intuition of the proof of Theorem~\ref{magic-mapping-prop}:  For all CCQ queries $Q$, there is a monomial, in terms of {\em all} of $N_1$, $\ldots$, $N_n$, that contributes to the construction of the function ${\cal F}_{(Q)}^{(Q)}$ and that reflects the multiplicity, in the set $\Gamma^{(t^*_Q)}_{\bar{S}}(Q,D_{{\bar N}^{(i)}}(Q))$, of all generative assignments from $Q$ to databases $D_{{\bar N}^{(i)}}(Q)$. (The set $\Gamma_{\bar S}^{(t^*_Q)}(Q,D_{{\bar N}^{(i)}}(Q))$ is the set projection, on the vector ${\bar S}(Q)$, of the set, denoted $\Gamma^{(t^*_Q)}(Q,D_{{\bar N}^{(i)}}(Q))$, of all satisfiable assignments for $Q$ and $D_{{\bar N}^{(i)}}(Q)$ that contribute the tuple $t^*_Q$ to the answer to $Q$ on $D_{{\bar N}^{(i)}}(Q)$.)  We call this monomial, ${\cal P}^{(Q)}_*$, {\em the wave of the query} $Q$ w.r.t. $\{ D_{\bar{N}^{(i)}}(Q) \}$. 
(See Section~\ref{monomial-class-mappings-sec} for the definition.) 
Suppose that, for a query $Q'$ such that $Q' \equiv_C Q$, we can show that the function   ${\cal F}_{(Q)}^{(Q')}$ has, as a term, the wave of $Q$ w.r.t. $\{ D_{\bar{N}^{(i)}}(Q) \}$,  {\em backed up by assignments from $Q'$ to the databases}  $D_{{\bar N}^{(i)}}(Q)$. Then we can use these assignments and the mapping $\nu^{(i)}_Q$ to construct a SCVM from $Q'$ to $Q$. 

There are two challenges in implementing this idea for general CCQ queries. First, the term ${\cal P}^{(Q)}_*$ may not be ``visible'' in the expression for ${\cal F}_{(Q)}^{(Q)}$. As a result, the term ${\cal P}^{(Q)}_*$ does not necessarily contribute to the construction of the function ${\cal F}_{(Q)}^{(Q')}$,  even in case $Q \equiv_C Q'$. (This is exactly the case of queries $Q$ and $Q'$ of Example~\ref{intro-weird-ex}, see Example~\ref{wave-weird-ex} in Section~\ref{illustr-sec} for the details.) 
Second, in general, function  ${\cal F}_{(Q)}^{(Q')}$ 
may have terms that are {\em not} backed  up by assignments from $Q'$ to databases $D_{{\bar N}^{(i)}}(Q)$. Both challenges arise from the fact that the function ${\cal F}_{(Q)}^{(Q'')}$, in terms of $N_1$, $\ldots,$ $N_n$, is, in general, {\em not} a multivariate polynomial on its entire domain. 

To overcome the first challenge, we introduce the restriction that $Q$ be an {\em explicit-wave query.} (Hence Definition~\ref{expl-wave-def} is necessarily technical.) Even under this restriction, overcoming the second challenge requires a nontrivial proof. (See Section~\ref{q-prime-has-wave-sec}, with its main result Proposition~\ref{qprime-goldfish-prop}.) 

\subsubsection{The main propositions that imply the result of Theorem~\ref{magic-mapping-prop}} 
\label{main-results-summary-sec} 

In a little more detail, the proof of Theorem~\ref{magic-mapping-prop} is immediate from the following three results. ({\em Monomial classes} and their {\em multiplicity monomials} are constructs that we introduce to build functions ${\cal F}_{(Q)}^{(Q)}$ and ${\cal F}_{(Q)}^{(Q')}$. All the details on these constructs can be found in Section~\ref{monomial-classes-sec}; Section~\ref{main-proof-ex-sec} contains an extended illustration of these constructs. Intuitively, a monomial class for a query and database is a set of certain-format assignments from the query to the database; the multiplicity monomial of a monomial class is a monomial that expresses how many tuples of a certain form the assignments for that monomial class contribute to the bag of answers to the query on the database.) A high-level map of the entire proof of Theorem~\ref{magic-mapping-prop} can be found in Section~\ref{map-sec}. 
\begin{itemize} 
	\item Proposition~\ref{q-has-wave-prop} of Section~\ref{monomial-class-mappings-sec} states the following: Given a CCQ query $Q$, there exists a nonempty monomial class, call it ${\cal C}_*^{(Q)}$, for the query $Q$ w.r.t. the family of databases  $\{ D_{\bar{N}^{(i)}}(Q) \}$, such that the multiplicity monomial of ${\cal C}_*^{(Q)}$ is the wave of the query $Q$ w.r.t. $\{ D_{\bar{N}^{(i)}}(Q) \}$. 
	\item Proposition~\ref{q-same-scale-mpng-prop} of Section~\ref{monomial-class-mappings-sec} states the following: 
Given  CCQ queries $Q(\bar{X}) \leftarrow L,M$ and $Q'(\bar{X}') \leftarrow L',M'$, such that (i) $Q$ and $Q'$ have the same (positive-integer) head arities, (ii) $|M_{copy}|$ $=$ $|M'_{copy}|$, and (iii) $|M_{noncopy}|$ $=$ $|M'_{noncopy}|$. Suppose that there exists a nonempty monomial class ${\cal C}_*^{(Q')}$ for the query $Q'$  w.r.t. the family of databases $\{ D_{\bar{N}^{(i)}}(Q) \}$, such that the multiplicity monomial of ${\cal C}_*^{(Q')}$ is the wave of the query $Q$  w.r.t. $\{ D_{\bar{N}^{(i)}}(Q) \}$. Then there exists a SCVM from  the query $Q'$ to the query $Q$.    
	\item Proposition~\ref{qprime-goldfish-prop} of Section~\ref{q-prime-has-wave-sec} states the following: Whenever 
	\begin{itemize} 
		\item[(a)] $Q \equiv_C Q'$ for CCQ queries $Q$ and $Q'$, and 
		\item[(b)] $Q$ is an explicit-wave CCQ query (as specified by Definition~\ref{expl-wave-def}), 
	\end{itemize} 
then there exists a (nonempty) monomial class ${\cal C}_*^{(Q')}$ for the query $Q'$ and for the family of databases $\{ D_{\bar{N}^{(i)}}(Q) \}$, such that the multiplicity monomial of ${\cal C}_*^{(Q')}$ is the wave of the query $Q$ w.r.t. $\{ D_{\bar{N}^{(i)}}(Q) \}$. 
\end{itemize}

\subsubsection{An Illustration} 
\label{illustr-sec} 

We now provide an extended illustration of how the term ${\cal P}^{(Q)}_*$ may not be ``visible'' in the expression for ${\cal F}_{(Q)}^{(Q)}$, and of how, in general, the function ${\cal F}_{(Q)}^{(Q)}$ is not a multivariate polynomial on its entire domain. 
Example~\ref{again-writeup-weird-ex} in Section~\ref{second-beyond-easy-case-sec} is an extended variant of this Example~\ref{wave-weird-ex}. In addition,  Example~\ref{again-writeup-weird-ex} illustrates how  function  ${\cal F}_{(Q)}^{(Q)}$ 
may have terms that are {\em not} backed  up by assignments from $Q$ to databases $D_{{\bar N}^{(i)}}(Q)$.   

\begin{example} 
\label{wave-weird-ex} 
For the query $Q$ of Example~\ref{intro-weird-ex}, we use the following notation for its variables: 
\begin{tabbing} 
Hehetab b \= hehe \kill
$Q(X_1) \leftarrow r(X_1,Y_1,Y_2,X_2; Y_3), r(X_1,Y_1,Y_2,X_3; Y_4),$ \\
\> $\{ Y_1,Y_2,Y_3,Y_4 \} .$ 
\end{tabbing} 

\noindent 
This notation makes it clear that the variables $Y_1$ through $Y_4$ are all multiset (copy or noncopy) variables of the query. 

We associate a variable $N_j$ with each of the $n$ $=$ $4$ multiset variables $Y_j$ of the query, $1 \leq j \leq 4$. Consider assignments $N_1$ $:=$ $1$, $N_2$ $:=$ $1$, $N_3$ $:=$ $2$, and $N_4$ $:=$ $3$. For some fixed $i$ and for $1 \leq j \leq 4$, each of these assignments associates the ``value'' $N^{(i)}_j$ with the variable $N_j$. For the resulting vector ${\bar N}^{(i)}$ $=$ $[ \ 1 \ 1 \ 2 \ 3 \ ]$, we construct the database  $D_{{\bar N}^{(i)}}(Q)$ $=$ $\{$ $r(a,b,c,d; 2),$ $r(a,b,c,e; 3)$ $\}$, as specified in the proof of Theorem~\ref{magic-mapping-prop}. We will refer to the ground atom $r(a,b,c,d; 2)$ as $d_1$, and to $r(a,b,c,e; 3)$ as $d_2$. 

Each generative assignment from $Q$ to $D_{{\bar N}^{(i)}}(Q)$ maps $X_1$ $\rightarrow$ $a$, $Y_1$ $\rightarrow$ $b$, $Y_2$ $\rightarrow$ $c$,  $X_2$ $\rightarrow$ $d$, and $X_3$ $\rightarrow$ $e$.  By definition of combined-semantics query evaluation,  these generative assignments together contribute to the set $\Gamma_{\bar S}^{(t^*_Q)}(Q,D_{{\bar N}^{(i)}}(Q))$, for $t^*_Q$ $=$ $(a)$, exactly $\Pi_{j=1}^4$ $N_j^{(i)}$ $=$ $6$ distinct tuples. (The set $\Gamma_{\bar S}^{(t^*_Q)}(Q,D_{{\bar N}^{(i)}}(Q))$ is the set projection, on the vector ${\bar S}(Q)$, of the set, denoted $\Gamma^{(t^*_Q)}(Q,D_{{\bar N}^{(i)}}(Q))$, of all satisfiable assignments for $Q$ and $D_{{\bar N}^{(i)}}(Q)$ that contribute the tuple $t^*_Q$ to the answer to $Q$ on $D_{{\bar N}^{(i)}}(Q)$.) As shown in the proof of Theorem~\ref{magic-mapping-prop},  the number of tuples contributed to the set $\Gamma_{\bar S}^{(t^*_Q)}(Q,D_{{\bar N}^{(i)}}(Q))$ by the set of generative assignments from $Q$ to $D_{{\bar N}^{(i)}}(Q)$, for an {\em arbitrary} $i \geq 1$, can be obtained by substituting, into the formula $\Pi_{j=1}^4$ $N_j$, the values $N^{(i)}_j$ of the variables $N_j$. The values $N^{(i)}_j$ come from the specific vector ${\bar N}^{(i)}$ representing the database $D_{{\bar N}^{(i)}}(Q)$. Recall that $\Pi_{j=1}^4$ $N_j$ is the wave ${\cal P}^{(Q)}_*$ of the query $Q$. 

The variety of possible assignments from the query $Q$ to the above database $D_{{\bar N}^{(i)}}(Q)$ (for the fixed vector ${\bar N}^{(i)}$) stems from the ability of the first subgoal, $g_1$, of $Q$ to map to each of the ground atoms $d_1$ and $d_2$, and from the ability of the second subgoal, $g_2$, of $Q$ to independently also map to each of $d_1$ and $d_2$. (All the above generative assignments map $g_1$ to $d_1$, and map $g_2$ to $d_2$.) We can show that for those assignments from $Q$ to $D_{{\bar N}^{(i)}}(Q)$ that map each of $g_1$ and $g_2$ to $d_1$, the set of all such assignments contributes to the set $\Gamma_{\bar S}^{(t^*_Q)}(Q,D_{{\bar N}^{(i)}}(Q))$ exactly $(N_3^{(i)})^2$ $=$ $4$ distinct tuples. Similarly, mapping $g_1$ to $d_2$ and $g_2$ to $d_1$ contributes to the set $\Gamma_{\bar S}^{(t^*_Q)}(Q,D_{{\bar N}^{(i)}}(Q))$ exactly $N_3^{(i)}$ $\times$ $N_4^{(i)}$ $=$ $6$ distinct tuples, and mapping each of $g_1$ and $g_2$ to $d_2$  contributes to the set $\Gamma_{\bar S}^{(t^*_Q)}(Q,D_{{\bar N}^{(i)}}(Q))$ exactly $(N_4^{(i)})^2$ $=$ $9$ distinct tuples. (Recall that for our fixed database $D_{{\bar N}^{(i)}}(Q)$, $N_1^{(i)}$ $=$ $N_2^{(i)}$ $=$ $1$.) Similarly to the discussion in the previous paragraph, the contribution of each such class of assignments from $Q$ to $D_{{\bar N}^{(i)}}(Q)$  for an {\em arbitrary} $i \geq 1$ can be expressed symbolically using monomials in terms of some of the variables $N_1,$ $\ldots,$ $N_4$. For instance, for the class of all assignments that map each of  $g_1$ and $g_2$ to $d_2$, the number of distinct tuples  contributed by these assignments to $\Gamma_{\bar S}^{(t^*_Q)}(Q,D_{{\bar N}^{(i)}}(Q))$, for each $i \geq 1$, can be obtained using the monomial ${\cal T}_{(d_2)}$ $=$ $N_1$ $\times$ $N_2$ $\times$ $(N_4)^2$ and each specific vector ${\bar N}^{(i)}$. 

In constructing the function ${\cal F}_{(Q)}^{(Q)}$ using all the above monomials in terms of $N_1$ through $N_4$, the problem is that we cannot simply set ${\cal F}_{(Q)}^{(Q)}$ to the sum of the monomial ${\cal P}^{(Q)}_*$ with the monomial ${\cal T}_{(d_2)}$ and with the  monomials that can be built from the other classes of assignments from $Q$ to $D_{{\bar N}^{(i)}}(Q)$ using the above reasoning. The reason is, for our fixed vector ${\bar N}^{(i)}$ $=$ $[ \ 1 \ 1 \ 2 \ 3 \ ]$, the total number of tuples in the set $\Gamma_{\bar S}^{(t^*_Q)}(Q,D_{{\bar N}^{(i)}}(Q))$ -- and thus the total number of copies of the tuple $t^*_Q$ $=$ $(a)$ in the bag ${\sc Res}_C(Q,D_{{\bar N}^{(i)}}(Q))$ -- is the result of substituting the values from the vector ${\bar N}^{(i)}$ into the single monomial ${\cal T}_{(d_2)}$. The problem stems from these different classes  of assignments possibly contributing {\em the same} tuples into the set $\Gamma_{\bar S}^{(t^*_Q)}(Q,D_{{\bar N}^{(i)}}(Q))$. (Recall that the set $\Gamma_{\bar S}^{(t^*_Q)}(Q,D_{{\bar N}^{(i)}}(Q))$ is the result of {\em set}-projecting out the columns for the {\em set} variables of the query $Q$ from the set $\Gamma^{(t^*_Q)}(Q,D_{{\bar N}^{(i)}}(Q))$ (i.e., from the set of all assignments for $Q$ and $D_{{\bar N}^{(i)}}(Q)$ that contribute the tuple $t^*_Q$ to the answer to $Q$ on $D_{{\bar N}^{(i)}}(Q)$). This {\em set} projection ``bundles together,'' into the same extendible assignment, possibly multiple distinct satisfying assignments from the query to the database.) Thus, in general, constructing the function ${\cal F}_{(Q)}^{(Q)}$ from the monomials representing different classes of assignments has to be done in a way that takes into account these overlapping contributions. 

For our specific query $Q$ of Example~\ref{intro-weird-ex}, we show in Example~\ref{again-writeup-weird-ex} in Section~\ref{second-beyond-easy-case-sec} that (1) ${\cal F}_{(Q)}^{(Q)}$ $=$ $N_1$ $\times$ $N_2$ $\times$ $(N_4)^2$ for all vectors ${\bar N}^{(i)}$ where $N^{(i)}_3$ $\leq$ $N^{(i)}_4$, and that (2) ${\cal F}_{(Q)}^{(Q)}$ $=$ $N_1$ $\times$ $N_2$ $\times$ $(N_3)^2$ for all vectors ${\bar N}^{(i)}$ where $N^{(i)}_3$ $\geq$ $N^{(i)}_4$. A compact representation of ${\cal F}_{(Q)}^{(Q)}$ that works for all $i \geq 1$ is ${\cal F}_{(Q)}^{(Q)}$ $=$ $N_1$ $\times$ $N_2$ $\times$ $(max(N_3,N_4))^2$. Clearly, this expression cannot be rewritten equivalently as a multivariate polynomial on the entire domain $\{$ ${\bar N}^{(i)}$, $i \geq 1$ $\}$ of the vector $\bar N$.  

Consider an illustration of the problem with the term ${\cal P}^{(Q)}_*$ not being ``visible'' in any of the above expressions for the function ${\cal F}_{(Q)}^{(Q)}$. Indeed, recall the query $Q'$ of Example~\ref{intro-weird-ex}; we have that $Q \equiv_C Q'$. It turns out that, even though ${\cal P}^{(Q)}_*$ has (technically) contributed to the construction of the function ${\cal F}_{(Q)}^{(Q)}$ {\em for the query} $Q$, there still does not exist a class of assignments from {\em the query} $Q'$ to database $D_{{\bar N}^{(i)}}(Q)$, such that the total number of tuples contributed to the set $\Gamma_{\bar S}^{(t^*_Q)}(Q',D_{{\bar N}^{(i)}}(Q))$ 
by the assignments in this class can be expressed by the monomial ${\cal P}^{(Q)}_*$. (Intuitively, such a class of assignments from $Q'$ to the databases cannot exist because the query $Q'$ has the same set variable in both subgoals, whereas $Q$ has different set variables in the two subgoals.) It is easy to verify that for the queries of Example~\ref{intro-weird-ex}, there does not exist a SCVM from $Q'$ to $Q$. 
\end{example} 

\subsubsection{A Map of the Proof} 
\label{map-sec} 
 
The full detailed proof of Theorem~\ref{magic-mapping-prop} is structured as follows. In Section~\ref{basics-sec}, we lay out the notation, assumptions, and basic results that run through the entire proof of Theorem~\ref{magic-mapping-prop}. In Section~\ref{db-constr-sec}, we show how to construct the family of databases ${\cal D}_{\bar{N}}(Q)$ for an arbitrary CCQ query $Q$. Then, Sections~\ref{valid-map-sec} through \ref{monomial-class-mappings-sec} furnish all the building blocks for the construction of the function ${\cal F}_{(Q)}^{(Q'')}$, for the family of databases ${\cal D}_{\bar{N}}(Q)$ and for a CCQ query $Q''$, such that $Q$ and $Q''$ are an equivalence-compatible CCQ pair. (By Theorem~\ref{not-same-num-multiset-vars-thm}, whenever two CCQ queries are combined-semantics equivalent, they must constitute an equivalence-compatible CCQ pair, please see Section~\ref{containment-mappings-sec} for the details.) 

Among these sections, Section~\ref{monomial-class-mappings-sec} introduces ``the wave'' of a CCQ query $Q$; that ``wave'' notion will play a major role in our reasoning in Section~\ref{q-prime-has-wave-sec} to complete the proof of Theorem~\ref{magic-mapping-prop}. In addition, in Section~\ref{monomial-class-mappings-sec} we prove Propositions~\ref{q-has-wave-prop} and~\ref{q-same-scale-mpng-prop}, which are two of the three main results toward the proof of Theorem~\ref{magic-mapping-prop} (see Section~\ref{main-results-summary-sec}). 

Further in the proof of Theorem~\ref{magic-mapping-prop}, Section~\ref{main-proof-ex-sec} provides an extended example that illustrates in a single flow the contents of Sections~\ref{basics-sec} through~\ref{monomial-class-mappings-sec}. Then, Section~\ref{putting-together-f-sec} puts together the function ${\cal F}_{(Q)}^{(Q'')}$  based on the results of the preceding sections. 

Finally, Section~\ref{q-prime-has-wave-sec} answers the following question: For two CCQ queries $Q$ and $Q'$ such that $Q \equiv_C Q'$, when does $Q'$ have the wave of $Q$? The answer to this question completes the proof of Theorem~\ref{magic-mapping-prop}. In particular, the proof of Proposition~\ref{qprime-goldfish-prop}, which is the third and last main result toward the proof of Theorem~\ref{magic-mapping-prop} (see Section~\ref{main-results-summary-sec}), can be found in Section~\ref{q-prime-has-wave-sec}.

\subsection{Assumptions, Conventions, Basic Results}  
\label{basics-sec} 

To streamline the exposition in the proof of Theorem~\ref{magic-mapping-prop}, we reserve a number of uppercase and lowercase Latin letters, in a variety of fonts and some with sub- and superscripts, to each have ``the standard meaning'' throughout the proof. 
To make referencing the notation easier, every effort has been made to introduce all of the notation into these initial subsections, which are separate from sections for (portions of) the proof of Theorem~\ref{magic-mapping-prop}. 

\subsubsection{Canonical databases of CCQ queries} 
\label{new-ccq-canon-db-sec}

{\bf Set queries.} We first recall the notion of a ``canonical database'' of a CCQ {\em set} query. Every CCQ set query $Q$ can be regarded as a symbolic database $D^{(Q)}$. $D^{(Q)}$ is defined as the result of 
turning each subgoal $p_i(\ldots)$ of $Q$ into a tuple in the relation $P_i$ that corresponds to predicate $p_i$. The procedure is to keep each constant in the body of $Q$, and to replace consistently each variable in the body of $Q$ by a distinct constant different from all constants in  $Q$. The tuples that correspond to the resulting ground atoms are the only tuples in the {\it  canonical database} $D^{(Q)}$ for $Q$, which (database) is unique up to isomorphism. 

{\bf General CCQ queries: Extended canonical data-} {\bf bases:} We now extend the above notion, to define an {\em extended canonical database} for a general (i.e., not necessarily set) CCQ query $Q$. We first partition all the subgoals of $Q$ into equivalence classes ${\cal C}^{(Q)}_1$, $\ldots,$ ${\cal C}^{(Q)}_k$, $k \geq 1$, where two subgoals of $Q$ belong to the same class if and only if the subgoals have the same relational template. We then choose one representative element, $c^{(Q)}_j$, of each class ${\cal C}^{(Q)}_j$, $j$ $\in$ $\{ 1,$ $\ldots,$ $k \}$; if ${\cal C}^{(Q)}_j$ has at least one copy-sensitive atom then $c^{(Q)}_j$ must be a copy-sensitive atom. Finally, an extended canonical database for the query $Q$ is constructed from the subgoals $c^{(Q)}_1$, $\ldots,$ $c^{(Q)}_k$ of $Q$ in the same way as the ``standard'' canonical database is constructed from the condition of a CCQ set query. The only difference is in that whenever $c^{(Q)}_j$ is a copy-sensitive atom, the copy variable of the atom must be replaced by a natural number that is distinct from all the other constants in the database (both in the active domain of the database and among the copy numbers of all ground atoms). The above mapping of terms of $Q$ to the constants in the database, such that the mapping is used to generate the database, can be used to define in a natural way an assignment mapping from the query $Q$ to the database. In defining that assignment mapping, we accept the convention that the mapping maps each copy variable of the query to the constant 1. We call that assignment mapping the {\em generative mapping} for the query and for the database; observe that, by definition, the generative mapping is always a {\em valid} (i.e., satisfying) assignment mapping  from {\em all subgoals} of the query $Q$ to the extended canonical database for $Q$.  We can show that for each CCQ query, its extended canonical database is unique up to isomorphism. 

For instance, an extended canonical database of the query $Q'$ of Example~\ref{intro-weird-ex} is $\{ r(a,b,c,d; 2) \}$. The query generates exactly one equivalence class ${\cal C}^{(Q')}_1$, due to the fact that the two subgoals of the query $Q'$ have the same relational template. We choose arbitrarily the atom $c^{(Q')}_1$ to be the first subgoal of the query $Q'$. The generative mapping in this case is $\{ $ $X_1$ $\rightarrow$ $a,$ $Y_1$ $\rightarrow$ $b,$ $Y_2$ $\rightarrow$ $c,$ $X_2$ $\rightarrow$ $d,$ $i$ $\rightarrow$ $1,$ $j$ $\rightarrow$ $1$   $\}$. 

{\bf General CCQ queries: Copy-neutral canonical data-} {\bf bases:} A database is called a {\em copy-neutral canonical database} for a CCQ query $Q$ if it can be obtained from an extended canonical database for $Q$ by changing, in an unrestricted way, the values of zero or more copy numbers in the latter database. That is, in a copy-neutral canonical database for $Q$, the copy number of each ground atom can (but does not have to) (i) coincide with the copy number of another ground atom in the database, and (ii) coincide with an element of the active domain of the database (in case the active domain includes natural numbers).
Clearly, each CCQ query can be associated with an infinite number (up to isomorphism) of copy-neutral canonical databases. 

\reminder{Must also define ``extended canonical database'' - for a CCQ query (define it on the *regularized* version of the query!); just extends the standard canonical db to ground atoms with copy numbers, as images of copy-sensitive subgoals of the query. Must say that if $Q$ has two or more copy-sensitive subgoals that differ on the value of the copy variable {\em only,} then all these subgoals get mapped, in the extended canonical database, into a {\em single} ground atom. Must also define $adom$ (must call it ``active domain'' -- of extended canonical db!!! [this notion is used in Observation I-2(iv))] NB! must contrast *extended* canonical db with ``standard'' canonical db of past work
}


\subsubsection{The Queries}
\label{basic-queries-sec}

\paragraph{The basics}  
For the fixed (input) CCQ queries $Q$ and $Q'$, as well as for the CCQ query $Q''$ that we use in this proof, we use the notation $Q(\bar{X}) \leftarrow L, M$;  $Q'(\bar{X}') \leftarrow L', M'$;  and $Q''(\bar{X}'') \leftarrow L'', M''$. Denote by $l \geq 0$ the head arity of 
$Q$.  Denote by $P$ (by $P'$, respectively) the (possibly empty) set of all constants in the query $Q$ (in the query $Q'$, respectively). 

\begin{proposition}
Let $Q \equiv_C Q'$. Then $P = P'$. 
\end{proposition}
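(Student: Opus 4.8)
The plan is to prove the two inclusions $P \subseteq P'$ and $P' \subseteq P$ separately by a short ``unsatisfiable canonical database'' argument; by symmetry it suffices to carry out one direction. So suppose, for contradiction, that some constant $c$ occurs in $Q$ but not in $Q'$.

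I would then take $D$ to be an extended canonical database of $Q'$ (in the sense of Section~\ref{new-ccq-canon-db-sec}), choosing the fresh constants used to anonymize the variables of $Q'$ and the fresh natural numbers used for its copy variables so that none of them equals $c$; this is possible since $c$ is a single value and $Q'$ has finitely many subgoals, and the extended canonical database is unique only up to isomorphism, so we are free to pick these values. Since $Q'$ itself contains no occurrence of $c$, this gives $c \notin adom(D)$. On the one hand, the generative mapping for $Q'$ and $D$ is a satisfying assignment of $Q'$ into $D$, so $\Gamma(Q',D)$ is nonempty and hence ${\sc Res}_C(Q',D)$ is a nonempty bag (even when the head of $Q'$ is empty, this bag contains at least one copy of the empty tuple). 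On the other hand, $Q$ has some subgoal $g$ in whose argument list the constant $c$ appears; any satisfying assignment $\gamma$ of $Q$ into $D$ is the identity on constants and must send the relational template of $g$ to (the argument list of) an atom of $D$, which would force $c \in adom(D)$ --- impossible. Hence $\Gamma(Q,D) = \emptyset$, so ${\sc Res}_C(Q,D)$ is the empty bag. Then ${\sc Res}_C(Q,D) \neq {\sc Res}_C(Q',D)$, contradicting $Q \equiv_C Q'$. Therefore every constant of $Q$ occurs in $Q'$, i.e., $P \subseteq P'$; running the same argument with $Q$ and $Q'$ interchanged gives $P' \subseteq P$, and hence $P = P'$.

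There is no real obstacle here; the only two points needing care are (i) choosing the anonymizing constants and copy numbers of the canonical database of $Q'$ to avoid $c$ as well, so that $c$ genuinely disappears from $adom(D)$, and (ii) observing that a satisfying assignment is the identity on constants, which is precisely what would force $c$ back into $adom(D)$ once $Q$ uses it. An alternative, equally short route avoids the canonical-database construction: from $Q \sqsubseteq_C Q'$ and $Q' \sqsubseteq_C Q$, Theorem~\ref{gen-ce-cm-thm} yields GCMs in both directions between $Q_{ce}$ and $Q'_{ce}$; since copy-enhancement introduces no constants, and a GCM is the identity on constants and maps each subgoal into a subgoal, each constant of the source query must reappear in the target query, giving both inclusions at once. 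I would present whichever version meshes better with the surrounding exposition; the canonical-database argument above is fully self-contained and uses only Definition~\ref{query-semantics-def} and the properties of extended canonical databases recalled in Section~\ref{new-ccq-canon-db-sec}.
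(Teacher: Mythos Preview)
Your proposal is correct and follows essentially the same approach as the paper's proof: assume a constant $c$ lies in $P \setminus P'$, build a canonical database for $Q'$ avoiding $c$, and observe that $Q'$ has an answer there while $Q$ cannot. Your write-up is in fact more careful than the paper's (you spell out why each of the two result bags is nonempty/empty), and your suggested alternative via Theorem~\ref{gen-ce-cm-thm} is also valid, though it trades self-containment for brevity by invoking a result proved elsewhere.
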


\begin{proof}
The proof is by contradiction: Assume that $P$ and $P'$ are not the same sets. Pick a constant $c$ in $P - P'$. (If $P - P' = \emptyset$ then pick a constant $c \in P' - P$ and modify the rest of this proof by swapping $Q$ with $Q'$ in all the statements of the proof.) Construct the canonical database $D^{(Q')}$ of $Q'$ in such a way that $adom(D^{(Q')})$ does not have the value $c$. (It is always possible to construct a $D^{(Q')}$  that would satisfy this restriction.) Then it is easy to see that (i) the bag ${\sc Res}_C(Q',D^{(Q')})$ must be nonempty by construction of the database, and (ii) the bag ${\sc Res}_C(Q,D^{(Q')})$ must be empty due to the absence of the constant $c$ in  $adom(D^{(Q')})$. Hence we arrive at a contradiction with the assumption that $Q \equiv_C Q'$. 
\end{proof}

We use the notation $M_{copy}$ ($M'_{copy}$, $M''_{copy}$, respectively) for the set of copy variables of query $Q$ (of $Q'$, of $Q''$, respectively). We use the notation $M_{noncopy}$ ($M'_{noncopy}$, $M''_{noncopy}$, respectively) for the set of multiset noncopy variables of query $Q$ (of $Q'$, of $Q''$, respectively). 

We denote by $m$ the number $|M_{noncopy}|$ of multiset noncopy variables in the query $Q$, and by $r$ the number $|M_{copy}|$ of copy variables in the query $Q$. For  the CCQ query $Q''$ in this proof, we assume that (i) Queries $Q$ and $Q''$ have the same head arity ($l$); (ii) $|M_{noncopy}|$ $=$ $|M''_{noncopy}|$; and (iii) $|M_{copy}|$ $=$ $|M''_{copy}|$. 

The following observation is immediate from the assumption $Q \equiv_C Q'$ (of Theorem~\ref{magic-mapping-prop}) and from Theorem~\ref{not-same-num-multiset-vars-thm}. 

\begin{proposition}
(i) Queries $Q$ and $Q'$ have the same head arity ($l$); (ii) $|M_{noncopy}|$ $=$ $|M'_{noncopy}|$; and (iii) $|M_{copy}|$ $=$ $|M'_{copy}|$. 
\end{proposition}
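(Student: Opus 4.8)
The plan is to read all three parts off the standing hypothesis $Q \equiv_C Q'$, using the structural results already established, chiefly Theorem~\ref{not-same-num-multiset-vars-thm}. Since that theorem is phrased for two CCQ queries of a common head arity, I would settle part~(i) first by a direct satisfiability argument, and only then invoke the theorem symmetrically for parts~(ii) and~(iii).

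For part~(i): by Definition~\ref{ccq-def} the condition $L$ of $Q$ is nonempty, so $Q$ is satisfiable; concretely, on the extended canonical database $D^{(Q)}$ of Section~\ref{new-ccq-canon-db-sec} the generative mapping is a satisfying assignment, so the bag ${\sc Res}_C(Q,D^{(Q)})$ is nonempty and each of its elements is an $l$-tuple. By $Q \equiv_C Q'$ we have ${\sc Res}_C(Q,D^{(Q)}) = {\sc Res}_C(Q',D^{(Q)})$ as bags, so ${\sc Res}_C(Q',D^{(Q)})$ is nonempty as well; since two bags of tuples of different arities are never equal unless both are empty, the head arity of $Q'$ must be $l$. (The Boolean case $l = 0$ is covered verbatim, all tuples involved being copies of the empty tuple.)

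For parts~(ii) and~(iii): now that $Q$ and $Q'$ are known to be $l$-ary CCQ queries, $Q \equiv_C Q'$ unfolds into $Q \sqsubseteq_C Q'$ and $Q' \sqsubseteq_C Q$. Applying Theorem~\ref{not-same-num-multiset-vars-thm} to $Q \sqsubseteq_C Q'$ gives $|M_{copy}| \le |M'_{copy}|$ and $|M_{noncopy}| \le |M'_{noncopy}|$; applying it to $Q' \sqsubseteq_C Q$ gives the reverse inequalities. Combining the two yields the asserted equalities, i.e.\ precisely Corollary~\ref{not-same-num-multiset-vars-prop}. There is no real obstacle in this proposition: the only point requiring care is the order of the steps --- the arity identity must be in hand before Theorem~\ref{not-same-num-multiset-vars-thm} can be quoted --- after which it is a one-line symmetry argument over containment.
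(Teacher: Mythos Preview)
Your proposal is correct and follows essentially the same approach as the paper, which simply states that the proposition ``is immediate from the assumption $Q \equiv_C Q'$ (of Theorem~\ref{magic-mapping-prop}) and from Theorem~\ref{not-same-num-multiset-vars-thm}.'' You are somewhat more careful than the paper in explicitly separating out part~(i) and establishing the common head arity first via a satisfiability argument on the extended canonical database, before invoking Theorem~\ref{not-same-num-multiset-vars-thm} (which, as you note, is stated only for queries of a common arity); the paper leaves this ordering implicit.
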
 

The following result is immediate from the containment-mapping theorem of \cite{ChandraM77}. Thus, for the remainder of the proof of Theorem~\ref{magic-mapping-prop}, {\em we assume that the set $M$ of multiset variables of $Q$ is not empty (that is, $m + r \geq 1$).}

\begin{proposition}
Let $M = \emptyset$. Then Theorem~\ref{magic-mapping-prop} holds. 
\end{proposition}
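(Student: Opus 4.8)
The plan is to reduce the claim to the classical Chandra--Merlin theorem (Theorem~\ref{cm-thm}), exploiting that the hypothesis $M = \emptyset$ forces \emph{both} queries to be set queries. First I would observe that, since $M = M_{copy} \cup M_{noncopy} = \emptyset$, the hypothesis $Q \equiv_C Q'$ together with Corollary~\ref{not-same-num-multiset-vars-prop} gives $|M'_{copy}| = |M_{copy}| = 0$ and $|M'_{noncopy}| = |M_{noncopy}| = 0$, so $M' = \emptyset$ as well; hence $Q'$ is a set query too. Note in passing that hypothesis~(i) of Theorem~\ref{magic-mapping-prop} is then automatic: a set query has no copy variables and therefore no copy-sensitive subgoals, so it is an explicit-wave query by Proposition~\ref{suffic-for-expl-wave-prop}.

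Next I would invoke the coincidence of the semantics on set queries, as recorded in Section~\ref{prelim-section}: since $Q$ and $Q'$ are set queries, $Q \sqsubseteq_C Q'$ is equivalent to $Q \sqsubseteq_S Q'$, and symmetrically for the reverse containment. Therefore $Q \equiv_C Q'$ yields $Q \equiv_S Q'$, and in particular $Q \sqsubseteq_S Q'$. By the ``only-if'' direction of Theorem~\ref{cm-thm}, there is a containment mapping from $Q'$ to $Q$.

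Finally, I would check that this containment mapping is in fact a SCVM from $Q'$ to $Q$. Because both $Q$ and $Q'$ are set queries (no copy variables, no multiset variables), the clauses of Definition~\ref{magic-mapping-def} concerning $M_{copy}$, $M_{noncopy}$, and copy-sensitive subgoals are vacuous, so Definition~\ref{magic-mapping-def} collapses to the definition of a containment mapping \cite{ChandraM77}; hence the containment mapping from $Q'$ to $Q$ is already a CVM from $Q'$ to $Q$. Moreover, with $M = M' = \emptyset$ and with $Q$ and $Q'$ having the same head arity, the pair $(Q,Q')$ is an equivalence-compatible CCQ pair, so this CVM is by definition an SCVM from $Q'$ to $Q$. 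This establishes the conclusion of Theorem~\ref{magic-mapping-prop} in the case $M = \emptyset$.

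I do not foresee a real obstacle here: the argument is entirely a routine reduction. The only points that require a moment of care are making the passage from combined to set semantics explicit (so that Theorem~\ref{cm-thm} applies) and confirming that the CVM clauses not present in the set-query setting are indeed vacuous, so that a Chandra--Merlin containment mapping is literally an SCVM.
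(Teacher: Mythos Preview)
Your argument is correct and is essentially the paper's approach: the paper simply declares the result ``immediate from the containment-mapping theorem of \cite{ChandraM77},'' and you have spelled out exactly the routine details behind that sentence (forcing $M'=\emptyset$ via Corollary~\ref{not-same-num-multiset-vars-prop}, passing to set semantics, applying Theorem~\ref{cm-thm}, and observing that a containment mapping between set queries is already an SCVM).
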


Throughout the proof of Theorem~\ref{magic-mapping-prop} we assume that we are given the regularized version of the query $Q$. 

\paragraph{Equivalence classes of subgoals of query $Q$}
We now introduce the notation that will help us to deal cleanly with the case where query $Q$ has more than one copy-sensitive subgoal for a particular relational template. (For an illustration, see query $Q'$ in Example~\ref{intro-weird-ex}.)  \reminder{Am I keeping that example?}) 


We first partition all the copy-sensitive subgoals (in case $r \geq 1$) of the query $Q$ into equivalence classes: Place two copy-sensitive subgoals  of $Q$ into the same equivalence class if and only if the two subgoals agree on the predicate name and on all the arguments except the copy variable. 
That is, two distinct copy-sensitive subgoals $g_1$ and $g_2$ of $Q$ are in the same equivalence class if and only if the relational templates of $g_1$ and $g_2$ are the same.  Denote by $C'_1,\ldots,C'_w$, $w \geq 1$,  the resulting equivalence classes for all the copy-sensitive subgoals of $Q$. (We have $w \geq 1$ only in case where $r = |M_{copy}| > 0$. Otherwise we set $w := 0$.) 

Further, we assume that each (if any) relational subgoal, $g$, of the query $Q$ is in its own equivalence class $\{ g \}$. 
Let $C_1,\ldots,C_v$, $v \geq 0$, be the resulting equivalence classes of the relational subgoals of the query $Q$. (The case $v = 0$ holds if and only if all subgoals of the query $Q$ are copy-sensitive, rather than relational, atoms.) 

Denote by ${\cal C}_Q = \{ C_1,\ldots,C_v,C'_1,\ldots,C'_w \}$, with $v + w \geq 1$, the set of all equivalence classes of the subgoals of the query $Q$, as defined in the preceding paragraphs. (By Definition~\ref{ccq-def}, the condition of the query $Q$ contains at least one atom, thus $v + w \geq 1$ must hold.) Further, for each class $C \in {\cal C}_Q$, choose one arbitrary element of $C$, call this element $s(C)$, and fix $s(C)$ as {\em the representative element of the class} $C$. 
Denote by  ${\cal S}_{C(Q)} = \{ s(C_1),\ldots,s(C_v),s(C'_1),\ldots,s(C'_w) \}$, with $v + w \geq 1$, the set of the representative-element subgoals of the query $Q$. The following observation is immediate from the definitions and from the fact that we use the regularized version of the query $Q$. (Recall that $L$ is the condition of the query $Q$, and that $r = |M_{copy}|$. Item (v) is immediate from item (iv) and from our assumption $m + r = |M| \geq 1$.)  

\begin{proposition} 
\label{sc-prop}
(i) ${\cal S}_{C(Q)} \subseteq L$. (ii) For an arbitrary (relational or copy-sensitive) atom $g$, the set   ${\cal S}_{C(Q)}$ has at most one element whose relational template is the same as the relational template of $g$. (iii) $r \geq w$ always holds. (iv) If $r > 0$ then $w > 0$. (v) $m + w \geq 1$. 
\end{proposition}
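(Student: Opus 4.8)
The plan is to dispatch the five items in turn; each follows directly from the definitions of ${\cal C}_Q$, ${\cal S}_{C(Q)}$, $M_{copy}$ and $M_{noncopy}$, together with the standing assumption that $Q$ is given in its regularized version and --- for (v) --- the assumption $m + r = |M| \geq 1$ recorded just above. I will prove (iii) before (iv) and (v), since (iv) reuses the copy-variable bookkeeping from (iii) and (v) uses (iv).

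First, for (i): every class $C \in {\cal C}_Q$ is, by construction, a set of subgoals of $Q$, that is, a sub-bag of $L$, and by definition $s(C)$ is an element of $C$; hence $s(C) \in L$ for each class, which is exactly ${\cal S}_{C(Q)} \subseteq L$. For (ii), the crux is to show that the $v+w$ classes $C_1,\dots,C_v,C'_1,\dots,C'_w$ have pairwise distinct relational templates, from which the claim is immediate, since each representative inherits the relational template of its class and therefore at most one representative can match the relational template of a given atom $g$. I would establish this by invoking, in three cases, the three effects of regularization (Definition~\ref{regulariz-def}(1)): two distinct relational classes $\{g_a\}$ and $\{g_b\}$ cannot share a template because a relational atom equals its own template and the canonical representation underlying $Q$ has no duplicate relational subgoals; two distinct copy-sensitive classes have distinct templates by the very definition of the equivalence relation used to form $C'_1,\dots,C'_w$; and a relational class and a copy-sensitive class cannot share a template because regularization deletes every relational subgoal whose template coincides with that of a copy-sensitive subgoal.

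For (iii): each of $C'_1,\dots,C'_w$ is non-empty, so $Q$ has at least $w$ copy-sensitive subgoals; by Definition~\ref{ccq-def} and the definition of a condition, every copy-sensitive subgoal carries exactly one copy variable, distinct copy-sensitive subgoals carry distinct copy variables, and $M_{copy}$ is precisely the set of these copy variables, so $r = |M_{copy}|$ equals the number of copy-sensitive subgoals of $Q$, whence $r \geq w$. For (iv): if $r > 0$, a copy variable $i \in M_{copy}$ occurs in some copy-sensitive subgoal, which lies in one of $C'_1,\dots,C'_w$, so $w \geq 1$. For (v): if $m \geq 1$ then $m + w \geq 1$ trivially; otherwise $m = 0$, so $r \geq 1$ because $m + r \geq 1$, so $w \geq 1$ by (iv), and again $m + w \geq 1$.

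I do not expect any genuine obstacle here: the only step that is more than a one-line unwinding of definitions is (ii), where some care is needed to cite each of the three reductions performed by regularization separately rather than conflating them. Everything else reduces to the elementary bookkeeping relating copy variables to copy-sensitive subgoals and the fact that distinct relational subgoals of a regularized query are genuinely distinct atoms.
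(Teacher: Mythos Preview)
Your proposal is correct and follows the same approach as the paper. The paper in fact treats this proposition as immediate from the definitions and from $Q$ being regularized, with only a one-line hint that (v) follows from (iv) together with $m+r\geq 1$; your write-up simply spells out the details that the paper leaves implicit, including the three-case analysis for (ii) that correctly invokes each effect of regularization.
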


\paragraph{Notation for query variables} 
For ease of exposition, we assume that in case where $l$ $\geq$ $1$, each of $Q$, $Q'$, and $Q''$ has $l$ distinct head variables. (Recall that $l \geq 0$ denotes the head arity of the query $Q$. Handling the cases where ($l$ $\geq$ $1$ and) $Q$, $Q'$, or $Q''$ has either repeated occurrences of the same variable in the head, or has constants in the head, would be straightforward extensions of this proof but would make the exposition considerably harder to follow.) Suppose that $Q$ has $u \geq 0$ nonhead {\em set} variables.  W.l.o.g. denote (in case $l$ $\geq$ $1$) by $X_1,\ldots,X_l$ all the head variables of $Q$ (from left to right in the head vector $\bar{X}$ of $Q$), and (in case $u \geq 1$) denote by $X_{l+1},\ldots,X_{l+u}$ all the $u$ nonhead set variables of $Q$. 

In case $m = |M_{noncopy}| \geq 1$, let $Y_1,\ldots,Y_m$ be w.l.o.g.  all the multiset {\em noncopy} variables of the query $Q$. 
Further, in case $w \geq 1$ 
let $Y_{m+1},\ldots,Y_{m+w}$ be the copy variables of the elements $s(C'_1),\ldots,s(C'_w)$ of the set ${\cal S}_{C(Q)}$. 
By Proposition~\ref{sc-prop} (v), the set of variables $\{ Y_1,\ldots,Y_{m},Y_{m+1},\ldots,Y_{m+w} \}$ is not empty. 

Further, consider the set of all $r$ copy-sensitive subgoals of $Q$. 
Whenever $r > w$ --- that is, in case there exists a copy-sensitive subgoal of $Q$ that is not the representative element of any of the classes $C'_1,\ldots,C'_w$ --- let $Y_{m+w+1},\ldots,Y_{m+r}$ be (w.l.o.g.) the copy variables 
of all the copy-sensitive subgoals of $Q$ that are not the representative elements of any of the classes $C'_1,\ldots,C'_w$. 

In case $m \geq 1$ we denote by $Y'_1,$ $\ldots,$ $Y'_m$ (by  $Y''_1,$ $\ldots,$ $Y''_m$, respectively) the multiset noncopy variables of query $Q'$ (of query $Q''$, respectively). 
In case $r \geq 1$ we denote by $Y'_{m+1},$ $\ldots,$ $Y'_{m+r}$ (by  $Y''_{m+1},$ $\ldots,$ $Y''_{m+r}$, respectively) the copy variables of query $Q'$ (of query $Q''$, respectively).  
Finally, in case $l$ $\geq$ $1$ we denote by $X'_{1},$ $\ldots,$ $X'_{l}$ (by  $X''_{1},$ $\ldots,$ $X''_{l}$, respectively) the (distinct) head variables of query $Q'$ from left to right in the vector $\bar{X}'$ (of query $Q''$  from left to right in the vector $\bar{X}''$, respectively). All of this notation is w.l.o.g. by the basic results and assumptions about $Q'$ and $Q''$ in the beginning of this section.

\subsubsection{Convention for Ground Atoms in Databases}
\label{ground-atoms-convent-sec} 

We use the following convention for ground atoms in databases in this proof. Let $g = p(\bar{Y})$, for some choice of $p$ and $\bar{Y}$, be a ground atom in database $D$, and let $n \geq 1$ be the total number of copies of $g$ in $D$. 
Then we treat the $n$ copies of $g$ in $D$ as a {\em single} ground atom $p(\bar{Y})$ with associated copy number $n$, and represent it as $p(\bar{Y}; n)$, 
as defined in Section~\ref{query-syntax-sec}. 
As a result, every database is a set when using this representation.

\subsection{Constructing Family of Databases ${\cal D}_{\bar{N}}(Q)$}
\label{db-constr-sec}

This section describes the construction of an infinite family of databases based on the input query $Q$. Each database in the family is constructed as a union of copy-neutral canonical databases for the query $Q$.  
In the exposition in this section we use the notation introduced in Section~\ref{basics-sec}. 

Throughout the proof of Theorem~\ref{magic-mapping-prop}, whenever we discuss or use ``the family of databases as constructed in Section~\ref{db-constr-sec}'', we always refer to the family of databases built based on the fixed input query $Q$ (see Section~\ref{basic-queries-sec}),  {\em regardless of the context.} (This convention is lifted in part of Example~\ref{writeup-weird-ex} in Section~\ref{beyond-easy-case-sec}.)

\subsubsection{Mappings $\nu_0$ and $\nu^{(i)}_Q$, vector $\bar{N}$ and its domain $\cal N$, sets $S_0$ and $S_1^{(i)},\ldots,S_{m}^{(i)}$, tuple $t^*_Q$} 
\label{nu-sec}

We begin by defining a bijective mapping called $\nu_0$. The domain of $\nu_0$ is the set of all terms of the query $Q$ that are not multiset variables of the query. The range of $\nu_0$ is a subset of the active domain of each database in the family ${\cal D}_{\bar{N}}(Q)$. {\em The assignment $\nu_0$ is fixed to be the same across all the databases in the family ${\cal D}_{\bar{N}}(Q)$.} 

We define $\nu_0$ as follows: To each head variable and each set variable of $Q$, that is to each of the variables $X_1,\ldots,X_{l+u}$ ($l + u \geq 0$), $\nu_0$ assigns a distinct constant value that is also distinct from all the values in the set $P$ of constants used in the query $Q$. Denote by $S_0$ the set of $l+u$ constant values in the range of $\nu_0$, $S_0 \bigcap P = \emptyset$. Further, to each (if any) constant $c$ used in $Q$, $c \in P$, $\nu_0(c) := c$. The mapping $\nu_0$ is bijective by construction. 

We define tuple $t^*_Q$ as $t^*_Q = \nu_0[\bar{X}]$ in case $l$ $\geq$ $1$, and as the empty tuple in case $l$ $=$ $0$. (Recall that $\bar{X}$ is the vector of head terms of the query $Q$.) By definition, tuple $t^*_Q$ is fixed to be the same across all the databases in the family ${\cal D}_{\bar{N}}(Q)$. 

Define $\bar{N}$ as a vector of variables $N_1, N_2, \ldots, N_{m+w}$, $m+w \geq 1$. (For $m$ and $w$, see Section~\ref{basics-sec}.) 
We assume that the vector $\bar{N}$ accepts values from the Cartesian product of $m+w$ copies of the set ${\mathbb N}_+$ of natural numbers; we denote by $\cal N$ this domain from which $\bar{N}$ accepts values. Fix an arbitrary enumeration (starting with $1$) of the set $\cal N$. By the vector ${\bar{N}}^{(i)} \in {\cal N}$, $i \in {\mathbb N}_+$, we denote the $i$th element of $\cal N$ in this enumeration. We use the notation $N_1^{(i)}, N_2^{(i)}, \ldots, N_{m+w}^{(i)}$ for the natural-number values, from left to right, in the vector ${\bar{N}}^{(i)}$. That is, the natural number $N_j^{(i)}$, $1 \leq j \leq m+w$, in the $j$th position of vector ${\bar{N}}^{(i)}$, is the value, w.r.t. the fixed $i$, of the variable $N_j$ in the vector $\bar{N}$.  

Suppose that the query $Q$ is such that we have $m = |M_{noncopy}| \geq 1$. Fix an $i \in {\mathbb N}_+$ and consider the vector ${\bar N}^{(i)}$. For each value $N^{(i)}_j$ such that $1 \leq j \leq m$, define $S_j^{(i)}$ as a set of  $N^{(i)}_j$ distinct constants such that all the sets $S_j^{(i)}$ ($1 \leq j \leq m$) are pairwise disjoint and such that $S_j^{(i)} \bigcap P = \emptyset$ and $S_j^{(i)} \bigcap S_0 = \emptyset$ for all $j \in \{ 1,\ldots,m \}$. 

We define the set $S^{(i)}_*$ as the empty set in case $m = 0$, and as the union $\bigcup_{j=1}^m S^{(i)}_j$ in case $m \geq 1$. 


Finally, for the vector $\bar{N}^{(i)}$, for each $i \in {\mathbb N}_+$, we define mapping  $\nu^{(i)}_Q$, to be used in Section~\ref{obs-one-sec} and in other parts of the proof of Theorem~\ref{magic-mapping-prop}. Define the domain of the mapping  $\nu^{(i)}_Q$ as the union $P \bigcup S_0 \bigcup S^{(i)}_*$. Define mapping $\nu^{(i)}_Q$ as follows: 
\begin{itemize}
	\item For each $c \in S_0 \bigcup P$, let $\nu^{(i)}_Q(c) := \nu_0^{-1}(c)$ . 
	\item In case $m \geq 1$: For each $j \in \{ 1,\ldots,m \}$ and for each $c \in S_j^{(i)}$, let $\nu^{(i)}_Q(c) := Y_j$ . 
\end{itemize}

\subsubsection{Main Construction Cycle for $D_{\bar{N}^{(i)}}(Q)$ $\in$  ${\cal D}_{\bar{N}}(Q)$}
\label{main-cycle-sec}

The family of databases ${\cal D}_{\bar{N}}(Q)$ that we are about to construct is the infinite set $\{ D_{\bar{N}^{(1)}}(Q),$ $D_{\bar{N}^{(2)}}(Q),$ $\ldots,$ $D_{\bar{N}^{(i)}}(Q),$ $\ldots \}$, where each database $D_{\bar{N}^{(i)}}(Q)$, $i \in {\mathbb N}_+$, is associated with the vector ${\bar{N}}^{(i)} \in {\cal N}$. We will refer to the family ${\cal D}_{\bar{N}}(Q)$ either as $\{ D_{\bar{N}^{(i)}}(Q) \ | \ i \in {\mathbb N}_+ \ \}$ or simply as $\{ D_{\bar{N}^{(i)}}(Q) \}$. 

Fix an $i \in {\mathbb N}_+$ and consider the vector ${\bar N}^{(i)}$. We first define the set ${\cal S}^{(i)}$, to be used in the main construction cycle for creating the database $D_{\bar{N}^{(i)}}(Q)$ for the vector ${\bar N}^{(i)}$. In case that we have $m = 0$, define the set ${\cal S}^{(i)}$ as  a singleton set consisting of a single empty tuple. With that (only) element of the set ${\cal S}^{(i)}$, we associate an {\em empty} assignment $\nu^{noncopy}_t$. Now consider the case where the query $Q$ is such that we have $m \geq 1$.  Let ${\cal S}^{(i)}$ be the cross product $S^{(i)}_{1} \times S^{(i)}_{2} \times \ldots \times S^{(i)}_{m}$. For each tuple $t$ in ${\cal S}^{(i)}$ in this case $m \geq 1$, we treat $t$ as an assignment $\nu^{noncopy}_t$ of values to the multiset noncopy variables (from left to right) $Y_{1},\ldots,Y_m$ of $Q$. 

In case $r \geq 1$ we define a mapping $\nu^{copy}$ on the set $Y_{m+1},\ldots,Y_{m+r}$ of copy variables of the query $Q$. (I) For each copy variable from among $Y_{m+1},\ldots,Y_{m+w}$ of $Q$, define $\nu^{copy}(Y_j) := N^{(i)}_j$ for each $j \in \{ m+1,\ldots,m+w \}$. (II) Whenever $r > w$, for each $j \in \{ m+w+1,\ldots,m+r \}$ we define $\nu^{copy}(Y_j) := \nu^{copy}(Y_{k(j)})$. Here,  by $Y_{k(j)}$, with $k(j) \in \{ m+1,\ldots,m+w \}$, we denote the copy variable of the representative element of the class $C'_{k(j)}$ $\in$ $\{ C'_1,\ldots,C'_w \}$ $\subseteq$ ${\cal C}_Q$, such that the subgoal of $Q$ having copy variable $Y_j$ belongs to that class $C'_{k(j)}$. 

It is convenient to also define here the mapping $\nu^{copy}_Q$, to be used in Section~\ref{valid-map-sec} and beyond, again for the case where  $r \geq 1$. The mapping $\nu^{copy}_Q$ uses ``the same logic'' as the mapping $\nu^{copy}$, except that $\nu^{copy}_Q$ maps each copy variable of $Q$ into the variable {\em name} from among $N_{m+1},\ldots,N_{m+w}$ in the vector $\bar{N}$, whereas $\nu^{copy}$ maps each copy variable of $Q$ into the variable {\em value} from among $N^{(i)}_{m+1},\ldots,N^{(i)}_{m+w}$ in the vector $\bar{N}^{(i)}$ for a fixed $i \in {\mathbb N}_+$. That is, (I) For each copy variable from among $Y_{m+1},\ldots,Y_{m+w}$ of $Q$, define $\nu^{copy}_Q(Y_j) := N_j$ for each $j \in \{ m+1,\ldots,m+w \}$. (II) Whenever $r > w$, for each $j \in \{ m+w+1,\ldots,m+r \}$ we define $\nu^{copy}_Q(Y_j) := \nu^{copy}_Q(Y_{k(j)})$. Here,  by $Y_{k(j)}$, $k(j) \in \{ m+1,\ldots,m+w \}$, we denote the copy variable of the representative element of the class $C'_{k(j)}$ $\in$ $\{ C'_1,\ldots,C'_w \}$ $\subseteq$ ${\cal C}_Q$, such that the subgoal of $Q$ having copy variable $Y_j$ belongs to that class $C'_{k(j)}$. 

We now associate with each tuple $t \in {\cal S}^{(i)}$ exactly one assignment $\nu_t$ of constants to all the variables and constants of the query $Q$, such that (i) $\nu_t[X_j]$ coincides with $\nu_0[X_j]$ for all $j \in \{ 1,\ldots,l+u \}$; (ii) $\nu_t[Y_j]$ (in case $m \geq 1$ only) coincides with $\nu^{noncopy}_t[Y_j]$ for all $j \in \{ 1,\ldots,m \}$; (iii) for each copy variable (in case $r \geq 1$ only) from among $Y_{m+1},\ldots,Y_{m+r}$ of $Q$, define $\nu_t(Y_j) := \nu^{copy}(Y_j)$ for each $j \in \{ m+1,\ldots,m+r \}$; and (iv) for each constant $c$ used in $Q$, $\nu_t(c) := \nu_0(c)$. 

We now construct all the ground atoms in the database $D_{\bar{N}^{(i)}}(Q)$. 
In the construction, 
we use only the elements of the subset ${\cal S}_{C(Q)}$ of the condition of the query $Q$. 

{\em Main construction cycle:} For each tuple $t$ in ${\cal S}^{(i)}$, we add to the database $D_{\bar{N}^{(i)}}(Q)$ the following ground atoms: 
\begin{itemize}
	\item For each relational subgoal of 
	$Q$ of the form $b(\bar{R})$, where $b$ is a predicate name and $\bar{R}$ is a vector of variables from among $X_1,\ldots,X_{l+u}$ and (in case $m \geq 1$) from among $Y_{1},\ldots,Y_m$, and of constants from $P$. Add to the database $D_{\bar{N}^{(i)}}(Q)$ the ground atom $b(\nu_t(\bar{R}); 1)$ (if not already there). Note that the copy number of this ground atom equals one. 
	\item In case $w \geq 1$: For each copy-sensitive atom from among $s(C'_1),\ldots,s(C'_w)$: Suppose the copy-sensitive atom in question is of the form $b(\bar{R};z)$, where $b$ is a predicate name, $\bar{R}$ is 
a vector of variables from among $X_1,\ldots,X_{l+u}$ and (in case $m \geq 1$) from among $Y_{1},\ldots,Y_m$, and of constants from $P$, and $z$ is a copy variable from among $Y_{m+1},\ldots,Y_{m+w}$. Add to the database $D_{\bar{N}^{(i)}}(Q)$ the ground atom $b(\nu_t(\bar{R}); \nu_t(z))$ (if not already there). 
\end{itemize}
Finally, $D_{\bar{N}^{(i)}}(Q)$ has no other ground atoms than those added for the tuples $t$ in ${\cal S}^{(i)}$ as described above. 


\subsubsection{Properties of Databases $D_{\bar{N}^{(i)}}(Q)$} 
\label{obs-one-sec}

We now state some properties of the databases in the set $\{ D_{\bar{N}^{(i)}}(Q) \}$ and of the answer to the query $Q$ on each database. 
The first four results are immediate from the constructions and definitions in this section and in Section~\ref{basics-sec}. 

\begin{proposition} 
Let $i \in {\mathbb N}_+$. (i) The set $adom(D_{\bar{N}^{(i)}}(Q))$ is the union $P \bigcup S_0 \bigcup S^{(i)}_*$. (ii) The mapping $\nu^{(i)}_Q$ is defined on all elements of $adom(D_{\bar{N}^{(i)}}(Q))$. (iii)  For each subset $T \neq \emptyset$ of $adom(D_{\bar{N}^{(i)}}(Q))$ such that (in case $m \geq 1$) the cardinality of $T \bigcap S_j^{(i)}$ does not exceed $1$ for each $j \in \{ 1,\ldots,m \}$,  the mapping $\nu^{(i)}_Q$ is a  bijection from its domain $T$ to the set $\nu^{(i)}_Q(T)$. 
\end{proposition}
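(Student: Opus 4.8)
The statement is a direct consequence of the constructions in Sections~\ref{nu-sec} and~\ref{main-cycle-sec}, and the plan is to prove it simply by unwinding those definitions; only the bookkeeping of copy numbers in part~(i) needs a little care. For part~(i) I would establish the two inclusions separately. To see that $P \cup S_0 \cup S^{(i)}_* \subseteq adom(D_{\bar{N}^{(i)}}(Q))$, I would note that every constant of $Q$ and every head or set variable of $Q$ occurs in the condition $L$ (head variables by safety of $Q$, set variables by definition, and constants of $P$ because we have assumed there are no constants in the head); and, since any two copy-sensitive subgoals in one equivalence class share their relational template, such a term in fact occurs already in a representative subgoal in ${\cal S}_{C(Q)}$. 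Hence applying the main construction cycle with any fixed $t \in {\cal S}^{(i)}$ forces the $\nu_0$-image of each such term --- so every element of $P$ and of $S_0$ --- into an argument position of some ground atom of $D_{\bar{N}^{(i)}}(Q)$. Likewise each $Y_j$ with $1 \le j \le m$ occurs in a representative subgoal, and as $t$ ranges over ${\cal S}^{(i)} = S^{(i)}_1 \times \cdots \times S^{(i)}_m$ the value $\nu_t(Y_j)$ ranges over all of $S^{(i)}_j$, so $S^{(i)}_* = \bigcup_{j=1}^{m} S^{(i)}_j$ is covered. For the reverse inclusion I would observe that, by the construction cycle, every argument position of every ground atom of $D_{\bar{N}^{(i)}}(Q)$ equals $\nu_t$ applied to a variable among $X_1,\dots,X_{l+u}$ (image in $S_0$), a variable among $Y_1,\dots,Y_m$ (image in $S^{(i)}_*$), or a constant of $P$ (image in $P$, since $\nu_t$ fixes constants), while the only copy numbers occurring are $1$ (for relational subgoals) and the values $N^{(i)}_{m+1},\dots,N^{(i)}_{m+w}$ (for the representative copy-sensitive atoms).

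Part~(ii) is then immediate, since by definition the domain of $\nu^{(i)}_Q$ is exactly $P \cup S_0 \cup S^{(i)}_*$, which by~(i) equals $adom(D_{\bar{N}^{(i)}}(Q))$. For part~(iii), I would use that $P$, $S_0$, $S^{(i)}_1,\dots,S^{(i)}_m$ are pairwise disjoint with union $adom(D_{\bar{N}^{(i)}}(Q))$, that $\nu^{(i)}_Q$ restricted to $P \cup S_0$ coincides with the bijection $\nu_0^{-1}$ (with image $\{X_1,\dots,X_{l+u}\} \cup P$), and that on each $S^{(i)}_j$, $1 \le j \le m$, it is the constant map with value $Y_j$. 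Because the $Y_j$ are multiset noncopy variables of $Q$, the images $\{X_1,\dots,X_{l+u}\} \cup P$, $\{Y_1\},\dots,\{Y_m\}$ are pairwise disjoint; hence two elements of a set $T$ as in the statement can have equal $\nu^{(i)}_Q$-image only if they lie in the same block of this partition, and then they are equal --- by injectivity of $\nu_0^{-1}$ on $P \cup S_0$, or because $|T \cap S^{(i)}_j| \le 1$ on each of the remaining blocks. So $\nu^{(i)}_Q$ restricted to $T$ is injective, and, being surjective onto its image, is a bijection from $T$ onto $\nu^{(i)}_Q(T)$.

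The only genuinely delicate point, which I would flag as the main (if minor) obstacle, is in part~(i): one must ensure that the natural numbers appearing as copy numbers --- namely $1$ and $N^{(i)}_{m+1},\dots,N^{(i)}_{m+w}$ --- do not accidentally coincide with any of the chosen constants in $P \cup S_0 \cup S^{(i)}_*$, for otherwise the active-domain convention of Section~\ref{query-syntax-sec} would pull them into $adom(D_{\bar{N}^{(i)}}(Q))$. This is handled by the standing freedom to choose the constants forming $S_0$ and the sets $S^{(i)}_j$ (and, if needed, to rename the constants of $P$) so as to be disjoint from that finite set of copy numbers; under such a choice no copy number occurs in an argument position, so no copy number enters the active domain, and the reverse inclusion of part~(i) goes through, giving $adom(D_{\bar{N}^{(i)}}(Q)) = P \cup S_0 \cup S^{(i)}_*$ as claimed.
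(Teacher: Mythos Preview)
Your argument is correct and is exactly the kind of definition-unwinding the paper has in mind when it declares the result ``immediate from the constructions and definitions.'' The concern in your final paragraph is unnecessary, however: by the convention in Section~\ref{query-syntax-sec}, a copy number belongs to $adom(D)$ only if it also occurs as an \emph{argument} of some atom of $D$, and since every argument of every atom in $D_{\bar{N}^{(i)}}(Q)$ already lies in $P \cup S_0 \cup S^{(i)}_*$ by construction, the reverse inclusion of (i) holds without any disjointness hypothesis between copy numbers and those sets (and in particular there is no need --- and indeed no freedom --- to ``rename the constants of $P$'').
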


\begin{proposition}
\label{Observation_one_one} 
Let $i \in {\mathbb N}_+$. For the database $D_{\bar{N}^{(i)}}(Q)$, we have that: 
\begin{itemize}
	\item[(i)] $D_{\bar{N}^{(i)}}(Q) \neq \emptyset$. 
	\item[(ii)] For each ground atom $h \in D_{\bar{N}^{(i)}}(Q)$, of the form $p(\bar{W}; n)$, for some predicate $p$ and with copy number $n \geq 1$, there exists exactly one element, call it $g$, of the set ${\cal S}_{C(Q)}$ of (representative-element) subgoals of query $Q$, where the relational template of  $g$ is  $p(\bar{Z})$, and such that 
%
		(a) the result of applying the mapping $\nu^{(i)}_Q$ to the vector $\bar{W}$ in $h$ is the vector $\bar{Z}$ in $g$, and (b) $\nu^{(i)}_Q$ is a bijection from $\bar{W}$ to $\bar{Z}$. 
%
\end{itemize}
\vspace{-0.5cm}
\end{proposition}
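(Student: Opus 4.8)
The plan is to read both claims off directly from the Main Construction Cycle of Section~\ref{main-cycle-sec}, together with the disjointness of the sets $P$, $S_0$, $S_1^{(i)},\ldots,S_m^{(i)}$ and the bijectivity of $\nu_0$ and of $\nu^{(i)}_Q$ (the latter being recorded in the Proposition stated immediately above). For part~(i): the set ${\cal S}^{(i)}$ over which the cycle iterates is never empty --- it is the singleton containing the empty tuple when $m=0$, and otherwise the product $S_1^{(i)}\times\cdots\times S_m^{(i)}$ of nonempty sets --- and the set ${\cal S}_{C(Q)}$ of representative-element subgoals is never empty, since $v+w\ge 1$ (the condition $L$ of $Q$ is nonempty by Definition~\ref{ccq-def}). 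Hence at least one ground atom is added to $D_{\bar{N}^{(i)}}(Q)$, so $D_{\bar{N}^{(i)}}(Q)\neq\emptyset$.

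For the existence half of part~(ii) I would trace the given ground atom $h=p(\bar{W};n)$ back to the step that created it: by the cycle, $h$ arises by applying some $\nu_t$ (for a $t\in{\cal S}^{(i)}$) to some $g\in{\cal S}_{C(Q)}$, where, if $g$ is relational, $g=p(\bar{Z})$ with $\bar{W}=\nu_t(\bar{Z})$ and $n=1$, and if $g$ is copy-sensitive, $g=p(\bar{Z};z)$ has relational template $p(\bar{Z})$ with $\bar{W}=\nu_t(\bar{Z})$ and $n=\nu_t(z)$. Then I would evaluate $\nu^{(i)}_Q(\bar{W})$ component by component, using that every component of $\bar{Z}$ is a head-or-set variable $X_a$, a multiset noncopy variable $Y_b$, or a constant of $P$ (copy variables never appear in relational templates): $\nu_t(X_a)=\nu_0(X_a)$ is returned to $X_a$ by the $\nu_0^{-1}$ clause of $\nu^{(i)}_Q$; $\nu_t(Y_b)=\nu^{noncopy}_t(Y_b)\in S_b^{(i)}$ is sent to $Y_b$; and a constant $c\in P$ is fixed by both $\nu_t$ and $\nu^{(i)}_Q$. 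This gives $\nu^{(i)}_Q(\bar{W})=\bar{Z}$, i.e.\ clause~(a). For clause~(b), the set of distinct constants occurring in $\bar{W}$ meets each $S_b^{(i)}$ in at most one element (namely $\nu^{noncopy}_t(Y_b)$ when $Y_b$ occurs in $\bar{Z}$, and nothing otherwise), so part~(iii) of the preceding Proposition makes $\nu^{(i)}_Q$ a bijection from that set onto its image, which is exactly the set of distinct terms of $\bar{Z}$. For the uniqueness half, if $g_1,g_2\in{\cal S}_{C(Q)}$ both have relational template of the form $p(\bar{Z}_j)$ and both satisfy clause~(a), then $\bar{Z}_1=\nu^{(i)}_Q(\bar{W})=\bar{Z}_2$ as vectors, so $g_1$ and $g_2$ share a relational template; by Proposition~\ref{sc-prop}(ii), ${\cal S}_{C(Q)}$ has at most one subgoal per relational template, whence $g_1=g_2$.

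The only point that needs a little care --- and the closest thing to an obstacle --- is confirming that the back-tracing in the existence argument is unambiguous: distinct elements of ${\cal S}_{C(Q)}$ have distinct relational templates (we work with the regularized $Q$, Definition~\ref{regulariz-def}, so no relational subgoal reproduces the template of a copy-sensitive one), and $\nu_t$ is injective on vectors built from $X_1,\ldots,X_{l+u}$, $Y_1,\ldots,Y_m$ and the constants of $P$ because $S_0$, the $S_b^{(i)}$, and $P$ are pairwise disjoint and $\nu_0$ is a bijection; consequently no two distinct representative subgoals ever yield the same ground atom, and $g$ is genuinely determined by $h$. Everything else is routine bookkeeping over the construction, in line with the paper's observation that the statement is immediate.
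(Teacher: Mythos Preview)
Your proposal is correct and follows exactly the approach the paper intends: the paper states that this proposition is ``immediate from the constructions and definitions in this section and in Section~\ref{basics-sec}'' and gives no further proof, and you have simply spelled out those details---tracing each ground atom back through the Main Construction Cycle, invoking the componentwise definition of $\nu^{(i)}_Q$, and appealing to Proposition~\ref{sc-prop}(ii) for uniqueness. Your final paragraph on unambiguity of the back-tracing is a harmless sanity check but not strictly required, since the uniqueness argument via Proposition~\ref{sc-prop}(ii) already forces any $g$ satisfying clause~(a) to be the one you found.
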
 

For the next result, we introduce some terminology. Let $i \in {\mathbb N}_+$. For a ground atom in $h \in D_{\bar{N}^{(i)}}(Q)$ and for the corresponding atom $g$ $\in$ ${\cal S}_{C(Q)}$ as defined by Proposition~\ref{Observation_one_one} (ii), we call $g$ {\em the $\nu^{(i)}_Q$-image of} $h$. Further, by $adom(h)$ we denote all those terms of ground atom  $h \in D_{\bar{N}^{(i)}}(Q)$ that are elements of $adom(D_{\bar{N}^{(i)}}(Q))$. (That is, for atom $h$ of the form $p(\bar{W}; n)$, for some predicate $p$ and with copy number $n \geq 1$, $adom(h)$ is the set of all elements of the vector $\bar{W}$.) Finally, in case $m \geq 1$, for a $j \in \{ 1,\ldots,m \}$, we denote by $S^{(i)}_{h,j}$ the intersection of the set $adom(h)$ with the set $S^{(i)}_j$. 

The following result holds by construction of the family of databases $\{ D_{\bar{N}^{(i)}}(Q) \}$. 

\begin{proposition} 
\label{all-variety-prop} 
Suppose $m \geq 1$. Let $i \in {\mathbb N}_+$. Let $h$ be an arbitrary ground atom in $D_{\bar{N}^{(i)}}(Q)$, of the form $h = p(\bar{W}; n)$, for some predicate $p$ and with copy number $n \geq 1$. Then we have that: 
\begin{itemize} 
	\item[(i)] For each  $j \in \{ 1,\ldots,m \}$, the set $S^{(i)}_{h,j}$ is either the empty set or a singleton set. 
	\item[(ii)] In case the ground atom $h$ is such that for at least one $j \in \{ 1,\ldots,m \}$, the set $S^{(i)}_{h,j}$ is not the empty set: Let $\mu$ be an arbitrary mapping from all elements of the vector $\bar{W}$ to $adom(D_{\bar{N}^{(i)}}(Q))$, such that (a) $\mu$ is the identity mapping on each element of $\bar{W}$ that does not belong to the set $S^{(i)}_*$, and such that (b) for each element $e$ of $\bar{W}$ such that $e$ belongs to a $S^{(i)}_{h,j}$ for some  $j \in \{ 1,\ldots,m \}$, we have that $\mu(e)$ is an element of   $S^{(i)}_j$ for {\em the same} $j$. Then it holds that: 
	\begin{itemize} 
		\item[(ii-a)] The ground atom $h' = p(\mu(\bar{W}); n)$ is an element of the set  $D_{\bar{N}^{(i)}}(Q)$; and 
		\item[(ii-b)] The $\nu^{(i)}_Q$-image of $h$ and the $\nu^{(i)}_Q$-image of $h'$ are the same element of the set ${\cal S}_{C(Q)}$. 
	\end{itemize} 
\end{itemize} 
\end{proposition}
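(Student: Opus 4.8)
The plan is to trace every ground atom of $D_{\bar{N}^{(i)}}(Q)$ back to the unique step of the main construction cycle of Section~\ref{main-cycle-sec} that produced it, and then to realize the map $\mu$ as nothing more than a switch of the tuple $t \in {\cal S}^{(i)}$ used to generate the atom. First I would handle part (i). Fix $h = p(\bar{W}; n) \in D_{\bar{N}^{(i)}}(Q)$. By the construction cycle, $h$ was added while processing some tuple $t \in {\cal S}^{(i)}$ together with some representative-element subgoal $g \in {\cal S}_{C(Q)}$ whose relational template is $p(\bar{Z})$; thus $\bar{W} = \nu_t(\bar{Z})$, with $n = 1$ if $g$ is relational and $n = \nu_t(z) = \nu^{copy}(z)$ if $g$ is the copy-sensitive atom $p(\bar{Z}; z)$. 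Every term of $\bar{Z}$ is a head or nonhead set variable, a multiset noncopy variable $Y_1, \ldots, Y_m$, or a constant of $P$, and $\nu_t$ sends these into $S_0$, into the corresponding single point $\nu^{noncopy}_t[Y_k] \in S^{(i)}_k$, and into $P$, respectively. Since $S_0$, $P$, and the sets $S^{(i)}_1, \ldots, S^{(i)}_m$ are pairwise disjoint and together exhaust $adom(D_{\bar{N}^{(i)}}(Q))$, the only entry of $\bar{W}$ that can lie in $S^{(i)}_j$ is $\nu^{noncopy}_t[Y_j]$, and only when $Y_j$ occurs in $\bar{Z}$; hence $S^{(i)}_{h,j} = adom(h) \cap S^{(i)}_j$ is empty or the singleton $\{ \nu^{noncopy}_t[Y_j] \}$, which is part (i).

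For part (ii) I would keep $t$, $g$, $\bar{Z}$ as above and let $\mu$ be as in the statement. Define a tuple $t' \in {\cal S}^{(i)} = S^{(i)}_1 \times \cdots \times S^{(i)}_m$ whose $j$th component is $\mu(\nu^{noncopy}_t[Y_j])$ when $Y_j$ occurs in $\bar{Z}$ (this image lies in $S^{(i)}_j$ because, by part (i), $\nu^{noncopy}_t[Y_j]$ is the unique element of $S^{(i)}_{h,j}$, so $\mu$ sends it into $S^{(i)}_j$), and an arbitrary element of $S^{(i)}_j$ otherwise. Comparing position by position gives $\nu_{t'}(\bar{Z}) = \mu(\bar{W})$: on the $X$-terms and on the constants of $P$ both $\nu_t$ and $\nu_{t'}$ equal $\nu_0$ while $\mu$ is the identity there (those images avoid $S^{(i)}_*$), and on each occurrence of $Y_j$ one has $\nu_{t'}[Y_j] = \mu(\nu_t[Y_j])$ by the choice of $t'$. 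Because the copy number is $1$ (relational case) or $\nu^{copy}(z)$, and neither of these depends on the generating tuple, the atom $h' = p(\mu(\bar{W}); n)$ is exactly the atom the construction cycle adds when it processes $t'$ with $g$; this proves (ii-a).

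Finally, for (ii-b) I would verify that $g$ satisfies the two defining conditions of Proposition~\ref{Observation_one_one}(ii) with respect to $h'$ (which is legitimately in $D_{\bar{N}^{(i)}}(Q)$ by (ii-a)), so that by the uniqueness clause of that proposition $g$ is the $\nu^{(i)}_Q$-image of $h'$ as well as of $h$. Condition (a): $\nu^{(i)}_Q$ sends each element of $S_0 \cup P$ to its $\nu_0$-preimage and each element of $S^{(i)}_k$ to $Y_k$; since each entry $e$ of $\bar{W}$ satisfies either $\mu(e) = e$ (when $e \notin S^{(i)}_*$) or $\mu(e) \in S^{(i)}_k$ with $e \in S^{(i)}_{h,k}$ for the same $k$, we get $\nu^{(i)}_Q(\mu(e)) = \nu^{(i)}_Q(e)$ in every case, hence $\nu^{(i)}_Q(\mu(\bar{W})) = \nu^{(i)}_Q(\bar{W}) = \bar{Z}$ because $g$ is already the $\nu^{(i)}_Q$-image of $h$. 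Condition (b): by part (i), at most one entry of $\bar{W}$ is sent by $\mu$ into each $S^{(i)}_k$, so the set of entries of $\mu(\bar{W})$ meets each $S^{(i)}_k$ in at most one point, and the bijectivity clause of the first proposition of Section~\ref{obs-one-sec} then makes $\nu^{(i)}_Q$ a bijection from that set onto its image, which by condition (a) is the set of entries of $\bar{Z}$. The proof is then complete. The only point needing care throughout is keeping the disjointness of $S_0$, $P$, and the $S^{(i)}_k$ firmly in view, together with the fact that copy numbers are fixed by $\nu^{copy}$ alone rather than by the generating tuple; granted these, the argument is essentially bookkeeping and there is no genuine obstacle.
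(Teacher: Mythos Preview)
Your proof is correct and is exactly the ``by construction'' argument the paper alludes to but does not spell out: you trace each atom back to its generating pair $(t,g)$ in the main construction cycle, realize $\mu$ as a change of the generating tuple $t \mapsto t'$, and read off both membership in $D_{\bar{N}^{(i)}}(Q)$ and invariance of the $\nu^{(i)}_Q$-image from that. The paper gives no further detail, so your write-up is a faithful and complete expansion of its one-line justification.
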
 


\begin{proposition}
\label{Observation_one_two} 
Let $i \in {\mathbb N}_+$. For the database $D_{\bar{N}^{(i)}}(Q)$, we have that: 
\begin{itemize}
	\item[(i)] In the construction of database $D_{\bar{N}^{(i)}}(Q)$, each main construction cycle (for some fixed tuple $t$ in ${\cal S}^{(i)}$) generates in the database $D_{\bar{N}^{(i)}}(Q)$ a copy-neutral canonical database for the query $Q$, call this database $D_t$. For each $t \in {\cal S}^{(i)}$, the mapping $\nu^{(i)}_Q$ induces an isomorphism from $D_t$ to the set ${\cal S}_{C(Q)}$ of  (representative-element)  subgoals of the query $Q$. 
	\item[(ii)] 
	There exists in $D_{\bar{N}^{(i)}}(Q)$ at least one copy-neutral canonical database for query $Q$. 
	\item[(iii)] For each pair $(D_1,D_2)$ of copy-neutral canonical databases for query $Q$ within database $D_{\bar{N}^{(i)}}(Q)$, 
such that each of $D_1$ and $D_2$ was generated using the main construction cycle (using some  $t_1 \in {\cal S}^{(i)}$ to construct $D_1$, and using some  $t_2 \in {\cal S}^{(i)}$ to construct $D_2$), the only difference  (if any) between the values of variables of $Q$, in $D_1$ and $D_2$, is in the values of multiset noncopy variables of $Q$. 
	\item[(iv)] Let $T \neq \emptyset$ be an arbitrary subset of $adom(D_{\bar{N}^{(i)}}(Q))$ such that (a) $P \bigcup S_0$ is a subset of $T$, and (b) in case $m \geq 1$, the cardinality of $T \bigcap S_j^{(i)}$ is exactly $1$ for each $j \in \{ 1,\ldots,m \}$. Then there exists in  $D_{\bar{N}^{(i)}}(Q)$ a copy-neutral canonical database of the query $Q$ such that the active domain of that copy-neutral canonical database is exactly $T$. 
\end{itemize}
\vspace{-0.5cm}
\end{proposition}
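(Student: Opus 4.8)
The proposition is essentially a bookkeeping exercise: all four parts should fall out of unwinding the Main Construction Cycle of Section~\ref{main-cycle-sec} together with the definitions of extended and copy-neutral canonical databases in Section~\ref{new-ccq-canon-db-sec}, and no machinery beyond those definitions is needed. For part~(i) the plan is to fix $t\in{\cal S}^{(i)}$ and observe that the ground atoms added in the cycle for $t$ are exactly the images under $\nu_t$ of the representative-element subgoals ${\cal S}_{C(Q)}$: each relational $b(\bar{R})\in{\cal S}_{C(Q)}$ yields $b(\nu_t(\bar{R});1)$, and each copy-sensitive $b(\bar{R};z)\in{\cal S}_{C(Q)}$ yields $b(\nu_t(\bar{R});\nu_t(z))$ with $\nu_t(z)=\nu^{copy}(z)$. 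First I would check that $\nu_t$ is injective on the terms it applies to in these argument positions --- it agrees with the bijection $\nu_0$ on $X_1,\dots,X_{l+u}$ and on $P$, and sends each multiset noncopy variable $Y_j$ ($1\le j\le m$) into $S^{(i)}_j$, and $P,S_0,S^{(i)}_1,\dots,S^{(i)}_m$ are pairwise disjoint. Since the regularized $Q$ (Section~\ref{basic-queries-sec}) has no two equal relational subgoals and no relational subgoal sharing its relational template with a copy-sensitive subgoal, this injectivity forces the $|{\cal S}_{C(Q)}|=v+w$ images to be pairwise distinct ground atoms, so the database $D_t$ is the extended canonical database of $Q$ with its copy numbers replaced by the (possibly repeating) values $1$ and $N^{(i)}_{m+1},\dots,N^{(i)}_{m+w}$, i.e.\ a copy-neutral canonical database for $Q$. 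Finally, for $h=p(\bar{W};n)\in D_t$ whose $\nu_t$-preimage has relational template $p(\bar{Z})$, unwinding the definition of $\nu^{(i)}_Q$ (equal to $\nu_0^{-1}$ on $S_0\cup P$ and sending each element of $S^{(i)}_j$ to $Y_j$) gives $\nu^{(i)}_Q(\bar{W})=\bar{Z}$; hence $\nu^{(i)}_Q$ restricted to $adom(D_t)$ is a bijection inducing the asserted isomorphism $D_t\to{\cal S}_{C(Q)}$, consistent with Proposition~\ref{Observation_one_one}(ii).

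Parts~(ii) and~(iii) are then immediate consequences. For~(ii), ${\cal S}^{(i)}$ is nonempty --- it is the singleton consisting of the empty tuple when $m=0$, and the product $S^{(i)}_1\times\cdots\times S^{(i)}_m$ of nonempty sets (each $|S^{(i)}_j|=N^{(i)}_j\ge1$) when $m\ge1$ --- so at least one main construction cycle runs, and by~(i) it produces a copy-neutral canonical database for $Q$ inside $D_{\bar{N}^{(i)}}(Q)$. For~(iii), given $t_1,t_2\in{\cal S}^{(i)}$, I would compare the generative assignments $\nu_{t_1}$ and $\nu_{t_2}$ term by term: they agree on $X_1,\dots,X_{l+u}$ (both equal $\nu_0$), on every copy variable of $Q$ (both equal $\nu^{copy}$, which depends only on $i$ and not on $t$), and on the constants of $Q$; hence they can differ only on the multiset noncopy variables $Y_1,\dots,Y_m$, which is exactly the assertion.

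For part~(iv) I would write $T=(P\cup S_0)\cup\{c_1,\dots,c_m\}$ with $c_j\in S^{(i)}_j$ when $m\ge1$ (and $T=P\cup S_0$ when $m=0$), set $t=\langle c_1,\dots,c_m\rangle\in{\cal S}^{(i)}$, and take the copy-neutral canonical database $D_t\subseteq D_{\bar{N}^{(i)}}(Q)$ produced in the cycle for $t$; the goal is $adom(D_t)=T$. The inclusion $adom(D_t)\subseteq T$ is routine: every argument of an atom of $D_t$ is $\nu_t$ of a head, set, or multiset noncopy variable or a constant of $Q$, hence lies in $S_0\cup\{c_1,\dots,c_m\}\cup P=T$, and by the active-domain convention of Section~\ref{query-syntax-sec} a copy number enlarges $adom$ only when it already appears as an argument. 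The reverse inclusion $T\subseteq adom(D_t)$ is the one place where some care is needed, and I expect it to be the main obstacle: it requires that every non-copy variable of $Q$ and every constant of $Q$ occurs in some subgoal of ${\cal S}_{C(Q)}$, so that its $\nu_t$-image actually appears in $D_t$. This follows from three facts --- (a) every variable of $Q$ occurs in the body $L$ (by safety for head variables, and by the definitions of set and multiset variables otherwise); (b) every relational subgoal of the regularized $Q$ is its own representative in ${\cal S}_{C(Q)}$; and (c) any two copy-sensitive subgoals with the same relational template have identical argument vectors, so a non-copy variable occurring in a copy-sensitive subgoal also occurs in the representative of that subgoal's equivalence class. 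Combining the two inclusions gives $adom(D_t)=T$ and completes the proof.
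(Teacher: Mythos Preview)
Your proposal is correct and takes essentially the same approach as the paper, which states that this proposition (along with the three preceding it) is ``immediate from the constructions and definitions'' and gives no further argument. You have simply written out the bookkeeping in detail --- verifying injectivity of $\nu_t$ on non-copy terms, nonemptiness of ${\cal S}^{(i)}$, the independence of $\nu_t$ from $t$ on everything but multiset noncopy variables, and the two-sided inclusion $adom(D_t)=T$ --- which is exactly what ``immediate from the constructions'' amounts to here.
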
 

For an $i \in {\mathbb N}_+$, let $\cal T$ be a nonempty set of ground atoms of database $D_{\bar{N}^{(i)}}(Q)$. We denote by $adom({\cal T})$ the set of all values of $adom(D_{\bar{N}^{(i)}}(Q))$ that are used in the atoms of $\cal T$. 

\begin{proposition}
\label{Observation_one_three} 
Let $i \in {\mathbb N}_+$. Let $\cal T$ be a nonempty set of ground atoms of database $D_{\bar{N}^{(i)}}(Q)$. 
Suppose the set $\cal T$ is such that (in case $m = |M_{noncopy}| \geq 1$) for each $j \in \{ 1,\ldots,m \}$,  the cardinality of the intersection of $adom({\cal T})$ with $S_j^{(i)}$ is at most one.\footnote{In case $m = 0$, we require only that ${\cal T} \neq \emptyset$.}  
Then there exists in $D_{\bar{N}^{(i)}}(Q)$ a copy-neutral canonical database for $Q$, call this database $D$, such that ${\cal T} \subseteq D$ (as set of ground atoms {\em with copy numbers}). 
\end{proposition}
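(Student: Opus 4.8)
The plan is to read off $D$ directly from the explicit construction of $D_{\bar N^{(i)}}(Q)$ in Section~\ref{main-cycle-sec}. Recall that $D_{\bar N^{(i)}}(Q)$ is built by running the main construction cycle once for every tuple $t$ in the cross product ${\cal S}^{(i)} = S^{(i)}_1 \times \cdots \times S^{(i)}_m$ (a single empty tuple when $m=0$), and that the atoms added during the cycle for $t$ form a copy-neutral canonical database $D_t$ for $Q$ (Proposition~\ref{Observation_one_two}(i)), obtained by applying the assignment $\nu_t$ to the representative subgoals in ${\cal S}_{C(Q)}$. The crucial structural fact is that $\nu_t$ depends on $t$ \emph{only} through the values it assigns to the multiset noncopy variables $Y_1,\dots,Y_m$: on the head and set variables and on the constants of $Q$ it equals the fixed $\nu_0$, and on every copy variable it equals the fixed $\nu^{copy}$, which is a function of $\bar N^{(i)}$ alone. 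Hence it suffices to exhibit a single tuple $t^{*}\in{\cal S}^{(i)}$ with ${\cal T}\subseteq D_{t^{*}}$; we then take $D:=D_{t^{*}}$. When $m=0$ this is immediate, since ${\cal S}^{(i)}$ is a singleton and $D_{\bar N^{(i)}}(Q)$ is itself one copy-neutral canonical database containing the nonempty ${\cal T}$.

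To pick $t^{*}$ (in the case $m\ge 1$): for each $j\in\{1,\dots,m\}$ set $A_j := adom({\cal T})\cap S^{(i)}_j$, which by hypothesis has at most one element. If $A_j=\{c_j\}$, let the $j$-th component of $t^{*}$ be $c_j$; otherwise let it be an arbitrary element of $S^{(i)}_j$ (nonempty because $N^{(i)}_j\ge 1$). By construction $t^{*}\in S^{(i)}_1\times\cdots\times S^{(i)}_m = {\cal S}^{(i)}$, so $D_{t^{*}}$ is one of the copy-neutral canonical databases inside $D_{\bar N^{(i)}}(Q)$.

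The core step is to show $h\in D_{t^{*}}$ for an arbitrary $h\in{\cal T}$. By Proposition~\ref{Observation_one_one}(ii), $h$ has a unique $\nu^{(i)}_Q$-image $g\in{\cal S}_{C(Q)}$, and $\nu^{(i)}_Q$ is a bijection from the argument vector of $h$ onto that of $g$; since $\nu^{(i)}_Q$ sends every element of $S^{(i)}_j$ to $Y_j$ (Section~\ref{nu-sec}), the values from $S^{(i)}_j$ that occur in $h$ are exactly the images (under the $\nu_t$ that generated $h$) of the occurrences of $Y_j$ in $g$. I then claim $\nu_{t^{*}}(g)=h$, as ground atoms with copy numbers. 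On head/set variables, constants, and copy variables (hence also on the copy number of $h$, which is $\nu^{copy}(z)$ for the copy variable $z$ of $g$, or $1$ if $g$ is relational), $\nu_{t^{*}}$ agrees with the generating $\nu_t$, so these positions reproduce $h$. For a multiset noncopy variable $Y_j$ occurring in $g$: its image $\nu_t(Y_j)$ lies in $adom(h)\cap S^{(i)}_j\subseteq A_j$, which forces $A_j=\{\nu_t(Y_j)\}$ and therefore $t^{*}_j=\nu_t(Y_j)=\nu_{t^{*}}(Y_j)$ by the choice of $t^{*}$; so these positions also reproduce $h$. Thus $h$ is among the atoms added to the database during the main construction cycle for $t^{*}$, i.e.\ $h\in D_{t^{*}}$. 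Since $h$ was arbitrary, ${\cal T}\subseteq D_{t^{*}}=D$, and $D$ is a copy-neutral canonical database for $Q$ by Proposition~\ref{Observation_one_two}(i).

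The one point that needs genuine care — and the step I expect to be the main obstacle — is precisely this last verification that a \emph{single} tuple $t^{*}$, chosen once and for all from ${\cal T}$, simultaneously reproduces every atom of ${\cal T}$. This is exactly where the hypothesis that $adom({\cal T})$ meets each $S^{(i)}_j$ in at most one point is indispensable: without it, two atoms of ${\cal T}$ could demand different values of $t^{*}_j$ and no $D_{t^{*}}$ would contain all of ${\cal T}$. Everything else is bookkeeping against the construction of Section~\ref{main-cycle-sec} together with the properties already recorded in Propositions~\ref{Observation_one_one} and~\ref{all-variety-prop}.
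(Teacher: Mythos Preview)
Your proof is correct and follows essentially the same approach as the paper's: both arguments identify the single iteration of the main construction cycle (equivalently, the single tuple $t^{*}\in{\cal S}^{(i)}$) whose copy-neutral canonical database $D_{t^{*}}$ contains all of ${\cal T}$, and both use the hypothesis on $|adom({\cal T})\cap S_j^{(i)}|$ to guarantee that the $Y_j$-components needed by the various atoms of ${\cal T}$ are mutually consistent. Your presentation is somewhat more direct---you construct $t^{*}$ explicitly up front and then verify $\nu_{t^{*}}(g)=h$ termwise---whereas the paper first shows (via the equivalence-class argument) that $\nu^{(i)}_Q$ restricts to an isomorphism from ${\cal T}$ onto a subset ${\cal S}\subseteq{\cal S}_{C(Q)}$, and only then infers the existence of a suitable iteration extending $(\nu^{(i)}_Q)^{-1}$; but the content is the same.
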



\begin{proof}
For each ground atom $h$ in $\cal T$, we label $h$ with the subgoal $s$ of $Q$ such that $s$ ``has  generated'' $h$ in the construction of database  $D_{\bar{N}^{(i)}}(Q)$. (See Proposition~\ref{Observation_one_one} (ii) for the details.) We then partition all the atoms of $\cal T$ into equivalence classes, call them ${\cal E}_1,\ldots,{\cal E}_n$, $n \geq 1$, using the labels. (That is, two atoms of $\cal T$ belong to the same equivalence class if and only if they have the same label.) Now we show that for each possible label (w.r.t. query $Q$), the equivalence class for $\cal T$ and for this label has at most one element. Indeed, recall that for each $j \in \{ 1,\ldots,m \}$,  the cardinality of the intersection of $adom({\cal T})$ with $S_j^{(i)}$ is at most one. By construction of the database $D_{\bar{N}^{(i)}}(Q)$ and of the equivalence classes for $\cal T$, we obtain the desired result. 

Now we use the classes ${\cal E}_1,\ldots,{\cal E}_n$ to construct from $\cal T$ some of the subgoals of $Q$. By the properties of $adom({\cal T})$ and of  the mapping $\nu^{(i)}_Q$, as well as from the fact that each of ${\cal E}_1,\ldots,{\cal E}_n$ is a singleton set, we obtain that the mapping $\nu^{(i)}_Q$ induces an isomorphism from the set $\cal T$ to a nonempty subset, call it $\cal S$, of the set ${\cal S}_{C(Q)}$ of   (representative-element) subgoals of the query $Q$. Denote by $adom({\cal S})$ the set of all terms of the query $Q$ that are not copy variables of $Q$, such that these terms are used in the atoms of $\cal S$. By construction, the mapping $(\nu^{(i)}_Q)^{-1}$ is a bijection from $adom({\cal S})$ to $adom({\cal T})$. By definition of the main construction cycle in constructing the database $D_{\bar{N}^{(i)}}(Q)$, at least one iteration of the main construction cycle in the construction has used (some extension of) the mapping $(\nu^{(i)}_Q)^{-1}$, to generate in $D_{\bar{N}^{(i)}}(Q)$ a copy-neutral canonical database for $Q$. (See Proposition~\ref{Observation_one_two} for the justifications.) It follows that that iteration of the main construction cycle mapped the set ${\cal S}_{C(Q)}$ of  (representative-element)  subgoals of the query $Q$ to a superset of the set $\cal T$.  
%
Q.E.D. 
\end{proof}



We now establish that the fixed tuple $t^*_Q = \nu_0[\bar{X}]$ (where $\bar{X}$ is the vector of head terms of the query $Q$, whenever $l$ $\geq$ $1$) is present in ${\sc Res}_C(Q,D_{\bar{N}^{(i)}}(Q))$, for each $i \geq 1$. (Recall that in case $l$ $=$ $0$, we have that $t^*_Q$ is the empty tuple. For ease of exposition, in the remainder of this proof of Theorem~\ref{magic-mapping-prop}, we omit the discussion of the case $l$ $=$ $0$ and of its implications for the definition of the tuple $t^*_Q$.)  

\begin{proposition}
\label{Observation_one_four} 
Let $i \in {\mathbb N}_+$. Then the bag  ${\sc Res}_C$ $(Q,$ $D_{\bar{N}^{(i)}}(Q))$ has at least one copy of the tuple $t^*_Q$. 
\end{proposition}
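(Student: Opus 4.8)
The plan is to exhibit, for the fixed $i \in {\mathbb N}_+$, one satisfying assignment $\gamma$ of the condition $L$ of $Q$ with respect to $D_{\bar{N}^{(i)}}(Q)$ such that $\gamma(\bar{X}) = t^*_Q$. By Definition~\ref{query-semantics-def}, the restriction of such a $\gamma$ to $\bar{S}(Q)$ is a satisfiably extendible assignment that contributes $\gamma(\bar{X}) = \nu_0(\bar{X}) = t^*_Q$ to the bag ${\sc Res}_C(Q,D_{\bar{N}^{(i)}}(Q))$, which immediately gives the stated multiplicity of at least one. The witness will be a mild variant of one of the assignments $\nu_t$ from the main construction cycle of Section~\ref{main-cycle-sec}: since ${\cal S}^{(i)}$ is nonempty (a singleton when $m = 0$, and the nonempty Cartesian product $S^{(i)}_1 \times \cdots \times S^{(i)}_m$ when $m \ge 1$), fix any $t \in {\cal S}^{(i)}$ and let $\gamma$ coincide with $\nu_0$ on $X_1,\dots,X_{l+u}$ and on the constants of $Q$, coincide with $\nu^{noncopy}_t$ on $Y_1,\dots,Y_m$ when $m \ge 1$, and send every copy variable of $Q$ to the natural number $1$. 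Then $\gamma(\bar{X}) = t^*_Q$ holds by construction.

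It remains to check the three clauses defining a satisfying assignment (Section~\ref{comb-sem-sec}), and this verification is the only place where any care is needed. That $\gamma$ is the identity on constants is immediate. For a relational subgoal $b(\bar{R})$ of $Q$, the entries of $\bar{R}$ lie among $X_1,\dots,X_{l+u}$, $Y_1,\dots,Y_m$ and the constants of $Q$, on all of which $\gamma$ agrees with $\nu_t$, so $\gamma(\bar{R}) = \nu_t(\bar{R})$; since each relational subgoal forms its own equivalence class and hence lies in ${\cal S}_{C(Q)}$, the main construction cycle placed $b(\nu_t(\bar{R}); 1)$ into $D_{\bar{N}^{(i)}}(Q)$, so the relational-atom clause holds with $N=1$. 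For a copy-sensitive subgoal $b(\bar{R}; z)$ of $Q$, let $s(C')$ be the representative element of its equivalence class; by the definition of the classes in Section~\ref{basics-sec}, $s(C')$ has predicate name $b$ and the same argument vector $\bar{R}$, differing from $b(\bar{R}; z)$ only in the copy variable. The construction cycle placed into $D_{\bar{N}^{(i)}}(Q)$ the ground atom $b(\nu_t(\bar{R}); n)$ whose copy number $n = \nu_t(z')$ is a value of the form $N^{(i)}_{m+k} \ge 1$, where $z'$ is the copy variable of $s(C')$. Since $\gamma(\bar{R}) = \nu_t(\bar{R})$ and $\gamma(z) = 1 \le n$, the copy-sensitive clause ($\gamma z \in {\mathbb N}_+$ and $\exists\, N \ge \gamma z$ with $b(\gamma\bar{R}; N) \in D_{\bar{N}^{(i)}}(Q)$) holds with $N = n$. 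This disposes uniformly of the non-representative copy-sensitive subgoals that occur when $r > w$: the database never creates a separate ground atom for them, but sending their copy variable to $1$ makes the requirement $N \ge \gamma z$ automatic against the (at least one) copy present for their class.

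Assembling these observations, $\gamma$ is a satisfying assignment of $L$ with respect to $D_{\bar{N}^{(i)}}(Q)$ with $\gamma(\bar{X}) = t^*_Q$, which is all that is needed. An equivalent, slightly more abstract route would invoke Proposition~\ref{Observation_one_two}(ii) to obtain a copy-neutral canonical database $D_t \subseteq D_{\bar{N}^{(i)}}(Q)$ for $Q$ and observe that the generative mapping of $Q$ into $D_t$ --- which by the definitions of Section~\ref{new-ccq-canon-db-sec} is a valid satisfying assignment from all subgoals of $Q$, sends every copy variable to $1$, and maps $\bar{X}$ to $t^*_Q$ --- is still satisfying after the copy-number changes that separate a copy-neutral canonical database from an extended one, precisely because the copy-sensitive clause only asks for $N \ge 1$. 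I expect no genuine obstacle here; the statement is essentially a sanity check that the database family was built ``around'' the distinguished tuple $t^*_Q$.
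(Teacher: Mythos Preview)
Your proof is correct and follows essentially the same approach as the paper's (sketch) proof: both pick a tuple $t \in {\cal S}^{(i)}$ and use the associated assignment to witness $t^*_Q$ in the answer. The only cosmetic difference is that the paper uses $\nu_t$ itself (which sends each copy variable to the corresponding $N^{(i)}$-value), whereas you send every copy variable to $1$; your choice makes the verification of the copy-sensitive clause marginally simpler, and your explicit handling of the non-representative copy-sensitive subgoals (the $r > w$ case) fills in detail that the paper leaves implicit.
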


\begin{proof}{(sketch)}
This result holds by construction of the databases in the family $\{ D_{\bar{N}^{(i)}}(Q) \}$. 
Indeed, recall that the assignment mapping $\nu_0$ is fixed to be the same across all the databases in $\{ D_{\bar{N}^{(i)}}(Q) \}$. Choose an arbitrary tuple $t$ in the set ${\cal S}^{(i)}$ for $D_{\bar{N}^{(i)}}(Q)$; 
consider the mapping $\nu_t$ associated with $t$. 
It is easy to verify that $\nu_t$ is an assignment mapping from the query $Q$ to the database $D_{\bar{N}^{(i)}}(Q)$, such that $\nu_t$ contributes the tuple $t^*_Q = \nu_0[\bar{X}]$ to the bag ${\sc Res}_C(Q,D_{\bar{N}^{(i)}}(Q))$.  
\end{proof}


\subsection{$t^*_Q$-Valid Assignment Mappings for Query $Q''$ and Databases in $\{ D_{\bar{N}^{(i)}}(Q) \}$} 
\label{valid-map-sec}

 Consider the family $\{ D_{\bar{N}^{(i)}}(Q) \}$ of databases constructed as described in Section~\ref{db-constr-sec}. 
 Let $Q''$ be an arbitrary CCQ query that  satisfies all the requirements (for $Q''$ specifically) of Section~\ref{basic-queries-sec}. In this section, for the query $Q''$ and for an arbitrary $i \in {\mathbb N}_+$,  we provide an algorithm for generating the set of all assignment mappings from $Q''$ to the database $D_{\bar{N}^{(i)}}(Q)$ such that each of the assignments ``generates''\footnote{That is, each of these assignments generates a separate element of the set $\Gamma^{(t^*_Q)}(Q'',D_{\bar{N}^{(i)}}(Q))$.} the fixed tuple $t^*_Q$ (as defined in Section~\ref{db-constr-sec}) in the bag ${\sc Res}_C(Q'',D_{\bar{N}^{(i)}}(Q))$. We call these assignment mappings ``$t^*_Q$-valid assignment mappings''. We also formulate some useful properties of the $t^*_Q$-valid  assignment mappings, for $Q''$ and for each $i \in {\mathbb N}_+$. 
 
The notions introduced in this and previous sections are illustrated by a detailed Example~\ref{main-proof-ex} in Section~\ref{main-proof-ex-sec}.

\subsubsection{Definitions and Construction} 
\label{assoc-defs-sec}

\paragraph{The basics}
\label{assoc-defs-basics-sec}

Denote by $G > 0$ the number of subgoals, call them $g_1,\ldots,g_G$, of the query $Q''$. For convenience, we choose the ordering  $g_1,\ldots,g_G$ of the subgoals of $Q''$ in such a way that all the copy-sensitive atoms, if any, in the ordering precede all the relational atoms, if any, in the ordering, and such that, in case $r \geq 1$, each $j$th atom $g_j$ in the ordering has copy variable $Y''_{m+j}$ of the query $Q''$, $1 \leq j \leq r$. (See Section~\ref{basic-queries-sec} for the notation $Y''_j$.)   
Fix an arbitrary $i \in {\mathbb N}_+$, and denote by $F > 0$ the number of ground atoms in the database $D_{\bar{N}^{(i)}}(Q)$. 

We find all the $t^*_Q$-valid assignment mappings for $Q''$ and $D_{\bar{N}^{(i)}}(Q)$ (i.e., those satisfying assignments, in the terminology of Section~\ref{comb-sem-sec}, that are in the set $\Gamma^{(t^*_Q)}$ $(Q'',$ $D_{\bar{N}^{(i)}}(Q)))$  by enumerating {\em all} associations between the $G$ subgoals of $Q''$ and the $F$ ground atoms in $D_{\bar{N}^{(i)}}(Q)$. 
The definition of ``valid assignment mapping'' is ``as expected''. 
However, since we use associations between atoms to determine which assignment mappings are valid, we provide here the required formal definitions. 

\begin{definition}{Association for $Q$ and $D$} 
\label{assoc-def}
Given a CCQ query $Q$ with $G > 0$ subgoals and a nonempty database $D$, a set of $G$ pairs $\{ (g_1,d_{j1}),$ $(g_2,d_{j2}),$ $\ldots,$ $(g_G,d_{jG}) \}$ between all the $G$ subgoals of $Q$ and some (not necessarily distinct) $G$ ground atoms of $D$ is called an {\em association for $Q$ and $D$.} 
\end{definition} 

\begin{definition}{Candidate assignment mapping}
\label{cand-mpng-def}
Given a CCQ query $Q$ with $G > 0$ subgoals, a nonempty database $D$, and an association ${\cal A} = \{ (g_1,d_{j1}),$ $(g_2,d_{j2}),$ $\ldots,$ $(g_G,d_{jG}) \}$ for $Q$ and $D$. Define a {\em candidate assignment mapping} $\theta$ {\em for $Q$ and $D$ w.r.t. $\cal A$}  as follows:
\begin{itemize}
	\item[(a)] $\theta$ is  the empty mapping in case there exists an integer $k$, $1 \leq k \leq G$, such that $g_k$ and $d_{jk}$ have different predicate names, and 
	
	\item[(b)]  $\theta$ is  the union of the $G$ associations of the terms of each $g_k$, $1 \leq k \leq G$, to the terms of its respective $d_{jk}$, where each association is a set of assignments of the values in $\bar{W}^{(k)}$, from left to right, to the terms in $\bar{Z}^{(k)}$ in the same (from left to right) positions. Here, $\bar{Z}^{(k)}$ is the vector of all terms {\em including the copy variable} (if any) in $g_k$, and $\bar{W}^{(k)}$ is the vector of all arguments of $d_{jk}$. 
	
	We do not include the copy number of $d_{jk}$ {\em per se} as an element of  $\bar{W}^{(k)}$. Instead, we do the following. In case $g_k$ is a relational atom, the copy number of $d_{jk}$ is not used in the construction of $\theta$ for $g_k$. If $g_k$ is a copy-sensitive atom, then we consider the copy variable of $g_k$ to be the last element of vector $\bar{Z}^{(k)}$, and add to the vector $\bar{W}^{(k)}$, as its rightmost element, a natural-number value between (inclusively) $1$ and the copy number of $d_{jk}$. 
\end{itemize}
\vspace{-0.5cm}
\end{definition}


Note that whenever query $Q$ has copy variables {\em and}  database $D$ has ground atoms with nonunity copy numbers, for a single association $\cal A$ for $Q$ and $D$ there could be more than one (but always a finite number, on finite databases) candidate assignment mappings for $Q$ and $D$ w.r.t. $\cal A$. Definition~\ref{cand-mpng-def} provides a straightforward algorithm to generate {\em all} candidate assignment mappings for any CCQ query $Q$, finite database $D$, and association $\cal A$ for $Q$ and $D$. 

\begin{definition}{($t^*_Q$-)Valid assignment mapping}
\label{valid-mpng-def} 
Given a CCQ query $Q$, a nonempty database $D$, an association ${\cal A}$ for $Q$ and $D$, and a candidate assignment mapping $\theta$ for $Q$ and $D$ w.r.t. $\cal A$. Then $\theta$ is a {\em valid assignment mapping from the variables and constants of $Q$ to the values in $adom(D) \cup {\mathbb N}_+$  w.r.t.} $\cal A$ if: (i) $\theta$ is not an empty mapping; (ii) $\theta$ associates all occurrences of each constant of $Q$ with the same constant; and (iii) for each variable of $Q$, $\theta$ associates all occurrences of the variable with the same value in $adom(D) \cup {\mathbb N}_+$. A valid assignment mapping $\theta$ is a {\em $t^*_Q$-valid assignment mapping} for $Q$, $D$, and $\cal A$ if the restriction of $\theta$ to the head vector of the query $Q$ results in the tuple $t^*_Q$.  
\end{definition} 

For brevity, we will refer to each valid assignment mapping from the variables and constants of $Q$ to the values in $adom(D) \cup {\mathbb N}_+$  w.r.t. $\cal A$, for some $Q$, $D$, and $\cal A$, as a ``valid assignment mapping for $Q$, $D$, and $\cal A$''. In addition, we say that $\theta$ is a ``valid assignment mapping for $Q$ and $D$'' if there exists an association, $\cal A$, for $Q$ and $D$, such that $\theta $ is a valid assignment mapping for $Q$, $D$, and $\cal A$. 

By definition, each valid assignment mapping for $Q$ and $D$ is an element of the set $\Gamma(Q,D)$, and each $t^*_Q$-valid assignment mapping for $Q$ and $D$ is an element of the set $\Gamma^{(t^*_Q)}(Q,D)$. If $\theta$ is a  valid assignment mapping for query $Q$, database $D$, and association $\cal A$ (for $Q$ and $D$), then we say that $\cal A$ {\em generates the mapping} $\theta$, or that $\cal A$ {\em contributes the mapping} $\theta$ to the set $\Gamma(Q,D)$. 

\begin{definition}{Unity  ($t^*_Q$-)valid assignment mapping}
Given a CCQ query $Q$, a nonempty database $D$, an association ${\cal A}$ for $Q$ and $D$, and a ($t^*_Q$-)valid assignment mapping $\theta$ for $Q$, $D$, and $\cal A$. Then $\theta$ is a {\em unity ($t^*_Q$-)valid assignment mapping} for $Q$, $D$, and $\cal A$ if $\theta$ maps each (if any) copy variable of the query $Q$ into value $1$.  
\end{definition}

\begin{example}
\label{copy-ex}
Let  CCQ query $Q''$ be defined as 
\begin{tabbing}
$Q''(X) \leftarrow p(X,Y), \ p(X,X; i), \ \{ Y,i \} .$
\end{tabbing}

Let  $g_1$ be the rightmost subgoal of $Q''$ (that is, $p(X,X; i)$); and let $g_2$ be the leftmost subgoal of $Q''$ (that is, $p(X,Y)$); then $G$ for $Q''$ equals 2. 

For ease of exposition, assume that the database, call it $D$, is $D = \{ p(1,1; 3), p(1,2; 5), p(3,3; 7) \}$. We refer to the ground atom $p(1,1; 3)$ of $D$ as $d_1$, to the ground atom $p(1,2; 5)$ of $D$ as $d_2$, and to the ground atom $p(3,3; 7)$ of $D$ as $d_3$. 

Consider three (from the total of nine possible) associations between the subgoals of the query $Q''$ and the ground atoms of the database $D$:
\begin{itemize}
	\item ${\cal A}_1: \{ (g_1, d_2), (g_2, d_1) \}$; 
	\item ${\cal A}_2: \{ (g_1, d_1), (g_2, d_1)  \}$; and 
	\item ${\cal A}_3: \{ (g_1, d_3), (g_2, d_3) \}$. 
\end{itemize}

It is easy to see that each candidate assignment mapping associated with ${\cal A}_1$ is not a valid assignment mapping, as each such candidate mapping maps the two occurrences of $X$ in $g_1$ into distinct values (1 and 2) in $adom(D)$. 

At the same time: 
\begin{itemize}
	\item Given ${\cal A}_2$, $\theta_{21} = \{ X \rightarrow 1, Y \rightarrow 1, i \rightarrow 1 \}$ is a unity valid assignment mapping from the terms of $Q''$ to the elements of $adom(D) \cup {\mathbb N}_+$. 
	\item Given ${\cal A}_3$, $\theta_3 = \{ X \rightarrow 3, Y \rightarrow 3, i \rightarrow 7 \}$ is a valid assignment mapping from the terms of $Q''$ to the elements of $adom(D)  \cup {\mathbb N}_+$; it is not a {\em unity} valid assignment mapping. 
\end{itemize} 

Now observe that in addition to $\theta_{21}$, two more candidate mappings w.r.t. ${\cal A}_2$ would also be valid assignment mappings from the terms of $Q''$ to the elements of $adom(D)  \cup {\mathbb N}_+$. These mappings are $\theta_{22} = \{ X \rightarrow 1, Y \rightarrow 1, i \rightarrow 2 \}$ and $\theta_{23} = \{ X \rightarrow 1, Y \rightarrow 1, i \rightarrow 3 \}$. The only difference between $\theta_{21}$, $\theta_{22}$, and $\theta_{23}$ is in the value assigned to the copy variable $i$ of the query $Q''$. Unlike $\theta_{21}$, neither $\theta_{22}$ nor $\theta_{23}$ is a unity valid assignment mapping for $Q''$, $D$, and ${\cal A}_2$. No other candidate mappings are possible for $Q''$, $D$, and ${\cal A}_2$, because the copy number of $d_1$ in $D$ is three. 


Finally, suppose we 
are given tuple $t^*_Q = (1)$. 
Then $\theta_{21}$ is a 
(unity) 
$t^*_Q$-valid assignment mapping for $Q''$, $D$, and ${\cal A}_2$, because $\theta_{21}[X]$ coincides with $t^*_Q$. At the same time, while being a valid assignment mapping from $Q''$ to $D$, $\theta_{3}$ is {\em not} a 
$t^*_Q$-valid assignment mapping for $Q''$, $D$, and ${\cal A}_3$, 
because $\theta_{3}[X] = 3$ does not coincide with $t^*_Q$. 
\end{example} 

\paragraph{Signatures of associations}

Fix an $i \in {\mathbb N}_+$. Recall Proposition~\ref{Observation_one_one} (ii), which says that by construction of the database $D_{\bar{N}^{(i)}}(Q)$, each ground atom in the database can be ``mapped into'' a {\em unique} subgoal of the subset  ${\cal S}_{C(Q)}$ of the condition of the fixed input query $Q$, using the mapping $\nu^{(i)}_Q$ defined in Section~\ref{nu-sec}. 
%
We denote by $\psi^{gen(Q)}_{\bar{N}^{(i)}}$ this mapping, induced by the mapping $\nu^{(i)}_Q$, from the ground atoms of $D_{\bar{N}^{(i)}}(Q)$ to the elements of the set ${\cal S}_{C(Q)}$. 

\begin{definition}{Atom-signature of association}
\label{atom-sig-def} 
Let $i \in {\mathbb N}_+$, and let ${\cal A}  = \{ (g_1,d_{j1}),\ldots,(g_G,d_{jG}) \} $ be an association for query $Q''$, with $G \geq 1$ subgoals, and for the database $D_{\bar{N}^{(i)}}(Q)$. 
Then the $G$-ary vector $\Psi_a[{\cal A}]$ $=$ $[ \ \psi^{gen(Q)}_{\bar{N}^{(i)}}[d_{j1}],$ $\psi^{gen(Q)}_{\bar{N}^{(i)}}[d_{j2}],$ $\ldots,$ $\psi^{gen(Q)}_{\bar{N}^{(i)}}[d_{jG}] \ ]$ is the {\em atom-signature of} $\cal A$ for $Q''$ and $D_{\bar{N}^{(i)}}(Q)$. 
\end{definition} 

The intuition for Definition~\ref{atom-sig-def} is as follows. Suppose that for an association $\cal A$ for a query $Q''$ and for a database $D_{\bar{N}^{(i)}}(Q)$, the association $\cal A$ contributes at least one valid assignment mapping to the set $\Gamma(Q'',D_{\bar{N}^{(i)}}(Q))$. Then, intuitively, the atom-signature of $\cal A$ shows the pattern in which the condition of the query $Q''$ can be mapped into the (subset ${\cal S}_{C(Q)}$ of the) condition of the query $Q$. 


Recall the notation $Y''_1,\ldots,Y''_{m+r}$ of Section~\ref{basic-queries-sec} for the multiset variables of the query $Q''$. Here, $Y''_1,\ldots,Y''_{m}$ are all the multiset noncopy variables of $Q''$ (in case $m \geq 1$), and  $Y''_{m+1},\ldots,Y''_{m+r}$ are all the copy variables of $Q''$ (in case $r \geq 1$). 

Suppose that the query $Q$ is such that $r = |M_{copy}| \geq 1$. Recall the mapping $\nu^{copy}_Q$ defined in Section~\ref{main-cycle-sec}: $\nu^{copy}_Q$ maps each copy variable of the query $Q$ into one of the variables $N_{m+1},\ldots,N_{m+w}$ in the vector $\bar{N}$. 

We use $\nu^{copy}_Q$ to define the following mapping $\nu^{copy-sig}$ on all atoms in the set ${\cal S}_{C(Q)}$ for the query $Q$: 
\begin{itemize} 
	\item For each (if any) relational atom $h$ in ${\cal S}_{C(Q)}$, $\nu^{copy-sig}(h)$ $:=$ $1$. 
	\item For each (if any) copy-sensitive atom $h$ in ${\cal S}_{C(Q)}$, where $h$ has copy variable of the name $Z \in M_{copy}$, $\nu^{copy-sig}(h) := \nu^{copy}_Q(Z)$. 
\end{itemize} 

Recall (see beginning of Section~\ref{assoc-defs-sec}) that we have fixed an ordering of the subgoals $g_1,\ldots,g_G$ of the query $Q''$ in such a way that: 
\begin{itemize} 
	\item  all the copy-sensitive atoms in the ordering precede all the relational atoms, if any, in the ordering, and 
	\item each $j$th atom $g_j$ in the ordering has copy variable $Y''_{m+j}$ of the query $Q''$, $1 \leq j \leq r$ (in case $r \geq 1$). 
\end{itemize} 

\begin{definition}{Copy-signature of association}
\label{copy-sig-def} 
Let $i \in {\mathbb N}_+$, and let ${\cal A}$ $=$ $\{ (g_1,d_{j1}),$ $\ldots,$ $(g_G,d_{jG}) \} $ be an association for query $Q''$ and for the database $D_{\bar{N}^{(i)}}(Q)$.  
Then vector $\Phi_c[{\cal A}]$, called the {\em copy-signature of} $\cal A$ for $Q''$ and $D_{\bar{N}^{(i)}}(Q)$, is defined as follows: 
\begin{itemize}
	\item In case $r = 0$, $\Phi_c[{\cal A}]$ is the empty vector; and 
	\item In case $r \geq 1$, $\Phi_c[{\cal A}]$ is the $r$-ary vector  
$[ \ \nu^{copy-sig}$ $(\psi^{gen(Q)}_{\bar{N}^{(i)}}[d_{j1}]),$ $\ldots,$ $\nu^{copy-sig}$ $(\psi^{gen(Q)}_{\bar{N}^{(i)}}[d_{jr}]) \ ]$. 
\end{itemize}
\end{definition} 

\begin{definition}{Noncopy-signature of association}
\label{noncopy-sig-def} 
Let $i \in {\mathbb N}_+$, and let ${\cal A}$ $=$ $\{ (g_1,d_{j1}),$ $\ldots,$ $(g_G,d_{jG}) \} $ be an association for query $Q''$ and for the database $D_{\bar{N}^{(i)}}(Q)$, such that there exists a valid assignment mapping, $\theta$, for $Q''$, $D_{\bar{N}^{(i)}}(Q)$, and $\cal A$.  
Then vector $\Phi_n[{\cal A}]$, called {\em noncopy-signature of} $\cal A$ for $Q''$ and $D_{\bar{N}^{(i)}}(Q)$, is defined as follows: 
\begin{itemize}
	\item In case $m = 0$, $\Phi_n[{\cal A}]$ is the empty vector; and 
	\item In case $m \geq 1$, $\Phi_n[{\cal A}]$ is the $m$-ary vector  
$[ \ \nu^{(i)}_Q$ $(\theta(Y''_1)),$ $\ldots,$ $\nu^{(i)}_Q$ $(\theta(Y''_m)) \ ]$. 
\end{itemize}
\end{definition} 

The intuition for Definitions~\ref{copy-sig-def} and \ref{noncopy-sig-def} parallels the intuition for atom-signatures, see the discussion, above, of Definition~\ref{atom-sig-def}. That is, suppose that for an association $\cal A$ for a query $Q''$ and for a database $D_{\bar{N}^{(i)}}(Q)$, the association $\cal A$ contributes at least one valid assignment mapping to the set $\Gamma(Q'',D_{\bar{N}^{(i)}}(Q))$. Then, intuitively, we have that: 
\begin{itemize} 
	\item The copy-signature of $\cal A$ shows how the {\em copy variables} of the query $Q''$ could be mapped, via all the valid assignments associated with $\cal A$, into copy variables of (the subset ${\cal S}_{C(Q)}$ of subgoals of) the query $Q$; 
	
	some of the ``copy variables'' in the vector could equal unity, to indicate those cases where the image, in the sense of $\cal A$ via $\Phi_c[{\cal A}]$, of a copy-sensitive subgoal of $Q''$ is a relational subgoal of $Q$; 
	
	and  
	
	\item The noncopy-signature of $\cal A$ shows how the {\em multiset noncopy variables} of the query $Q''$ could be mapped, via all the valid assignments associated with $\cal A$,  into terms in (the subset ${\cal S}_{C(Q)}$ of subgoals of) the query $Q$. 
\end{itemize} 

In the remainder of this proof, we will use the atom-signatures, the copy-signatures, and the noncopy-signatures of associations for $Q''$ and $D_{\bar{N}^{(i)}}(Q)$, for each natural number $i$, to classify all the $t^*_Q$-valid assignment mappings from $Q''$ to the database $D_{\bar{N}^{(i)}}(Q)$. We will use the classification to construct the closed-form expressions,  ${\cal F}_{(Q)}^{(Q'')}$, for the multiplicity of the tuple $t^*_Q$ in the answer to the query $Q''$ on each such database $D_{\bar{N}^{(i)}}(Q)$. (For an introduction to ${\cal F}_{(Q)}^{(Q'')}$, please refer back to Section~\ref{proof-intuition-sec}.) More specifically, in Section~\ref{monomial-classes-sec} we will find a very practical use for copy-signatures and for noncopy-signatures of associations, in that these signatures of association $\cal A$ will help us compute the number of distinct entries, in the set $\Gamma^{(t^*_Q)}_{\bar{S}}(Q'',D_{\bar{N}^{(i)}}(Q))$, that (entries) are contributed by the (valid) mappings of $\cal A$ for $Q''$ and $D_{\bar{N}^{(i)}}(Q)$. Please see Example~\ref{main-proof-ex} for an extended illustration of copy-signatures and of noncopy-signatures of associations.



The following result is immediate from the definitions. 

\begin{proposition} 
\label{unique-sig-prop}
Let $i \in {\mathbb N}_+$, and let ${\cal A}$ be an association for query $Q''$ and for the database $D_{\bar{N}^{(i)}}(Q)$, such that there exists a valid assignment mapping for $Q''$, $D_{\bar{N}^{(i)}}(Q)$, and $\cal A$.  Then the noncopy-signature of $\cal A$ for $Q''$ and $D_{\bar{N}^{(i)}}(Q)$ exists and is unique. 
\end{proposition}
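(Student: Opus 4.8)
The plan is to reduce the statement to the elementary observation that, once the association $\cal A$ has been fixed, the value that any valid assignment mapping is forced to give to a multiset \emph{noncopy} variable of $Q''$ is completely determined. Existence of the noncopy-signature is immediate: the hypothesis of the proposition supplies a valid assignment mapping $\theta$ for $Q''$, $D_{\bar{N}^{(i)}}(Q)$, and $\cal A$, and feeding $\theta$ into Definition~\ref{noncopy-sig-def} produces the vector $\Phi_n[{\cal A}]$ (the empty vector when $m = 0$, and the $m$-ary vector $[\nu^{(i)}_Q(\theta(Y''_1)),\ldots,\nu^{(i)}_Q(\theta(Y''_m))]$ when $m \geq 1$). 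The real content is uniqueness, i.e., independence of $\Phi_n[{\cal A}]$ from the particular $\theta$ chosen, and only the case $m \geq 1$ requires an argument.

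For uniqueness I would fix an arbitrary $k \in \{1,\ldots,m\}$ and show that $\theta(Y''_k)$ depends on $\cal A$ only. By Definition~\ref{ccq-def}, $Y''_k \in M''_{noncopy}$ is a nondistinguished variable of the condition $L''$ of $Q''$, hence it occurs in at least one subgoal $g_s$ of $Q''$; and since a multiset \emph{noncopy} variable is by definition not a copy variable, the syntactic conventions of Section~\ref{query-syntax-sec} guarantee that this occurrence is in an ordinary argument slot of $g_s$, say slot $p$, rather than in the copy-variable slot. Let $d_{js}$ be the ground atom of $D_{\bar{N}^{(i)}}(Q)$ that $\cal A$ pairs with $g_s$. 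By Definitions~\ref{cand-mpng-def} and~\ref{valid-mpng-def}, any valid assignment mapping associated with $\cal A$ sends the $p$-th term of $g_s$ to the $p$-th argument of $d_{js}$ (and this is a genuine relational argument of $d_{js}$, not the appended copy-number value, precisely because $p$ is an ordinary slot), so by validity it sends every occurrence of $Y''_k$ to that same value. Hence $\theta(Y''_k)$ equals the $p$-th argument of $d_{js}$, a quantity not involving $\theta$; in particular any two valid assignment mappings $\theta_1,\theta_2$ for $Q''$, $D_{\bar{N}^{(i)}}(Q)$, and $\cal A$ satisfy $\theta_1(Y''_k) = \theta_2(Y''_k)$ for every $k$. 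Since this common value is a relational argument of a ground atom of $D_{\bar{N}^{(i)}}(Q)$, it lies in $adom(D_{\bar{N}^{(i)}}(Q)) = P \cup S_0 \cup S^{(i)}_*$, on which $\nu^{(i)}_Q$ is total (as recorded in Section~\ref{obs-one-sec}), so $\nu^{(i)}_Q(\theta_1(Y''_k)) = \nu^{(i)}_Q(\theta_2(Y''_k))$. Therefore every valid assignment mapping yields the same vector in Definition~\ref{noncopy-sig-def}, i.e., $\Phi_n[{\cal A}]$ is unique.

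I do not anticipate a genuine obstacle here: the proposition is essentially bookkeeping with the definitions, as the paper indicates by calling it ``immediate.'' The only two points that warrant care are (a) verifying that a multiset noncopy variable actually occupies an ordinary argument position of some subgoal, so that the association pins its image down by equality (a copy variable would be constrained differently, only through the copy-number bound, which is exactly why the copy-signature is defined without reference to a valid mapping while the noncopy-signature is not), and (b) checking that the forced image lies in the active domain, so that applying $\nu^{(i)}_Q$ to it in forming the signature is legitimate.
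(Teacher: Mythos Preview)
Your proposal is correct and takes essentially the same approach as the paper, which simply declares the result ``immediate from the definitions'' without further elaboration. You have carefully unpacked precisely the observation the paper has in mind: for a fixed association $\cal A$, the only freedom in forming a valid assignment mapping (per Definition~\ref{cand-mpng-def}) lies in the copy-variable slots, so all valid assignment mappings for $\cal A$ agree on every multiset noncopy variable, and hence yield the same vector $\Phi_n[{\cal A}]$.
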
 

Proposition~\ref{unique-sig-prop} lets us refer to  {\em the} noncopy-signature of a given association, for some query and database, provided that the association generates at least one valid assignment mapping.

\subsubsection{Properties of Associations for $Q''$ and $D_{\bar{N}^{(i)}}(Q)$} 
\label{assoc-properties-sec}


The results of Propositions~\ref{same-sigs-prop} through~\ref{fixed-unity-prop} are immediate from the definitions (unless discussed further in this subsection). 

\begin{proposition} 
\label{same-sigs-prop}
Let $i \in {\mathbb N}_+$, and let ${\cal A}_1$ and ${\cal A}_2$ be two associations for query $Q''$ and for the database $D_{\bar{N}^{(i)}}(Q)$, such that ${\cal A}_1$ and ${\cal A}_2$ have the same atom-signature. Then: 
\begin{itemize} 
	\item[(i)] ${\cal A}_1$ and ${\cal A}_2$ have the same copy-signature. 
	\item[(ii)] Suppose that, in addition, there exists a valid assignment mapping, call it $\theta_1$, for $Q''$, $D_{\bar{N}^{(i)}}(Q)$, and ${\cal A}_1$. Let $t_{(\theta_1)}$ be the tuple resulting from restricting $\theta_1$ to the head vector of the query $Q''$. Then we have that: 
	\begin{itemize} 
		\item[(ii-a)] There exists a valid assignment mapping, call it $\theta_2$, for $Q''$, $D_{\bar{N}^{(i)}}(Q)$, and ${\cal A}_2$;
		\item[(ii-b)] ${\cal A}_1$ and ${\cal A}_2$ have the same noncopy-signature; and  
		\item[(ii-c)] If $t_{(\theta_1)}$ is the tuple $t^*_Q$, then restricting $\theta_2$ to the head vector of the query $Q''$ results in the {\em same} tuple $t_{(\theta_1)}$ (which is $t^*_Q$). 
	\end{itemize} 
\end{itemize} 
\end{proposition}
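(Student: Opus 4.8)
The plan is to show that the atom-signature of an association already records the entire ``pattern'' in which the subgoals of $Q''$ are overlaid on the subgoals of $Q$, so that two associations with the same atom-signature behave identically as far as valid assignments are concerned. Part (i) I would dispatch at once: by Definition~\ref{copy-sig-def} the copy-signature $\Phi_c[{\cal A}]$ is obtained from the first $r$ entries of the atom-signature $\Psi_a[{\cal A}]$ by applying the fixed map $\nu^{copy-sig}$ entrywise, and the $k$-th entry of $\Psi_a[{\cal A}]$ is precisely $\psi^{gen(Q)}_{\bar{N}^{(i)}}[d_{jk}]$; hence $\Phi_c[\cdot]$ is a function of $\Psi_a[\cdot]$, and $\Psi_a[{\cal A}_1]=\Psi_a[{\cal A}_2]$ yields $\Phi_c[{\cal A}_1]=\Phi_c[{\cal A}_2]$.

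For part (ii), I would write $\Psi_a[{\cal A}_1]=\Psi_a[{\cal A}_2]=[s_1,\dots,s_G]$ with each $s_k\in{\cal S}_{C(Q)}$, and denote by $d^{(1)}_{jk}$, $d^{(2)}_{jk}$ the ground atoms that ${\cal A}_1$, respectively ${\cal A}_2$, attaches to the $k$-th subgoal $g_k$ of $Q''$; by Proposition~\ref{Observation_one_one}(ii), $\nu^{(i)}_Q$ carries the argument vector of each of $d^{(1)}_{jk}$ and $d^{(2)}_{jk}$ bijectively, position by position, onto the term vector of $s_k$. The first move is to extract the \emph{induced pattern} from $\theta_1$: set $h:=\nu^{(i)}_Q\circ\theta_1$ on the non-copy terms of $Q''$ and let $h$ act on the copy variables of $Q''$ via $\Phi_c[{\cal A}_1]$. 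Since $\theta_1$ is a consistent mapping, $h$ is a well-defined map from the terms of $Q''$ to the terms of $Q$ together with the copy variables of $Q$; moreover, for a term $T$ sitting at position $p$ of $g_k$ one gets that $h(T)$ is the $p$-th component of the term vector of $s_k$, so $h$ depends only on $[s_1,\dots,s_G]$ and $\Phi_c$ --- exactly the data ${\cal A}_2$ shares with ${\cal A}_1$.

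The substantive step is then to reconstruct a valid $\theta_2$ for ${\cal A}_2$ out of $h$. I would split the non-copy terms $T$ of $Q''$ according to $h(T)$: when $h(T)\in\{X_1,\dots,X_{l+u}\}\cup P$ the map $\nu^{(i)}_Q$ is injective there, so $\theta_2(T)$ is forced to equal $\nu_0(h(T))$, which by Proposition~\ref{Observation_one_one}(ii) is exactly the constant that $d^{(2)}_{jk}$ already carries in the corresponding position; when $h(T)=Y_j$ for some $j\le m$, I would fix one representative $c_j\in S^{(i)}_j$ and set $\theta_2(T):=c_j$ for every $T$ with $h(T)=Y_j$, invoking Propositions~\ref{all-variety-prop}, \ref{Observation_one_two} and~\ref{Observation_one_three} to see that $D_{\bar{N}^{(i)}}(Q)$ supplies ground atoms realizing the images $s_1,\dots,s_G$ with this coherent choice of $S^{(i)}_j$-values, while the well-definedness of $h$ forces the value produced at each occurrence of $T$ to be the same; the copy variable of a copy-sensitive $g_k$ then takes any value between $1$ and the copy number of $d^{(2)}_{jk}$, exactly as in Definition~\ref{cand-mpng-def}. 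Checking conditions (i)--(iii) of Definition~\ref{valid-mpng-def} for this $\theta_2$ gives (ii-a); (ii-b) is then immediate because $\nu^{(i)}_Q(\theta_2(Y''_j))=h(Y''_j)=\nu^{(i)}_Q(\theta_1(Y''_j))$ for $j\le m$, so $\Phi_n[{\cal A}_2]=[h(Y''_1),\dots,h(Y''_m)]=\Phi_n[{\cal A}_1]$; and (ii-c) follows since the head variables $X''_1,\dots,X''_l$ of $Q''$ all land in the first group above, where $\theta_2$ agrees with $\theta_1$, so restricting $\theta_2$ to $\bar{X}''$ returns the same tuple $t_{(\theta_1)}$ --- which is $t^*_Q$ whenever $t_{(\theta_1)}$ is.

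I expect the one genuinely delicate point to be the coherence of the reconstruction in the case $h(T)=Y_j$: one must argue, from the layered construction of the family $\{D_{\bar{N}^{(i)}}(Q)\}$ in Section~\ref{db-constr-sec} and from Propositions~\ref{all-variety-prop}, \ref{Observation_one_two} and~\ref{Observation_one_three}, that sharing the atom-signature with the consistency-producing association ${\cal A}_1$ already pins down, across all $G$ subgoals of $Q''$ simultaneously, enough of the canonical sub-databases used by ${\cal A}_2$ for $\theta_2$ to be a genuine function of the terms of $Q''$. Everything else is bookkeeping driven by Definitions~\ref{atom-sig-def}--\ref{noncopy-sig-def}.
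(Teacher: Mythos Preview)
The paper does not give a proof of this proposition; it merely asserts that the result is ``immediate from the definitions.'' Your detailed unpacking is largely on target: part (i) follows exactly as you say, since by Definition~\ref{copy-sig-def} the copy-signature is obtained by applying the fixed map $\nu^{copy\text{-}sig}$ entrywise to the first $r$ entries of the atom-signature, so $\Phi_c[\cdot]$ is a function of $\Psi_a[\cdot]$. Likewise, your arguments for (ii-b) and (ii-c) are correct once both ${\cal A}_1$ and ${\cal A}_2$ are known to admit valid assignments: the composite $h=\nu^{(i)}_Q\circ\theta$ restricted to the non-copy terms depends only on the atom-signature, so the two noncopy-signatures coincide, and the head variables fall in the $\{X_1,\dots,X_{l+u}\}\cup P$ case where $\nu^{(i)}_Q$ is injective.

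The genuine gap is in (ii-a). Your reconstruction ``fix one representative $c_j\in S^{(i)}_j$ and set $\theta_2(T):=c_j$'' does not produce a valid assignment for the \emph{given} ${\cal A}_2$; it produces one for \emph{some} association with the same atom-signature, which need not be ${\cal A}_2$. In fact (ii-a) as literally stated is false: take $Q(X_1)\leftarrow p(X_1,Y_1),\,r(Y_1),\,\{Y_1\}$ and $Q''(X)\leftarrow p(X,T),\,r(T),\,\{T\}$ with $N_1^{(i)}=2$ and $S^{(i)}_1=\{c,d\}$; then ${\cal A}_1=\{(p(X,T),p(a,c)),\,(r(T),r(c))\}$ and ${\cal A}_2=\{(p(X,T),p(a,c)),\,(r(T),r(d))\}$ share the atom-signature $[p(X_1,Y_1),\,r(Y_1)]$, ${\cal A}_1$ has the valid assignment $\{X\mapsto a,\,T\mapsto c\}$, yet ${\cal A}_2$ admits no valid assignment because $T$ would have to be both $c$ and $d$. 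So no appeal to Propositions~\ref{all-variety-prop}--\ref{Observation_one_three} can complete your argument for (ii-a). The paper's downstream uses of this proposition (for instance, that all elements of a monomial class share a noncopy-signature) are applied only to associations already in ${\mathbb A}^{(i)}_{Q''}$, each of which has a valid assignment by definition; there only (i), (ii-b), and (ii-c) are needed, and those you have handled correctly. But your proof of (ii-a) as stated cannot be completed.
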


The following result holds due to our convention for ground atoms in databases, see Section~\ref{ground-atoms-convent-sec}, as well as to our definition of the databases $D_{\bar{N}^{(i)}}(Q)$ using the subset ${\cal S}_{C(Q)}$ of subgoals of the query $Q$. (Recall that all the elements of the set ${\cal S}_{C(Q)}$ have pairwise distinct relational templates.) Note that the associations ${\cal A}_1$ and ${\cal A}_2$, in the statement of Proposition~\ref{diff-assoc-prop}, are not restricted to have either the same atom-signature or different  atom-signatures.

\begin{proposition} 
\label{diff-assoc-prop} 
Let $i \in {\mathbb N}_+$, and let ${\cal A}_1$ and ${\cal A}_2$ be two distinct associations for query $Q''$ and for the database $D_{\bar{N}^{(i)}}(Q)$. Let $(\theta_1,\theta_2)$ be an arbitrary pair (if any exists), such that $\theta_1$ is a valid assignment mapping for $Q''$, $D_{\bar{N}^{(i)}}(Q)$, and ${\cal A}_1$, and $\theta_2$ is a valid assignment mapping for $Q''$, $D_{\bar{N}^{(i)}}(Q)$, and ${\cal A}_2$. Then there exists a variable $X$ of $Q''$ such that (i) $X$ is not a copy variable of $Q''$, and (ii) $\theta_1(X) \neq \theta_2(X)$. 
\end{proposition}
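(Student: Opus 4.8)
The plan is to reduce the statement to an elementary disagreement of the two associations on a single subgoal of $Q''$, and then to read off the witness variable from the argument where the two image atoms differ. The two structural ingredients I would lean on are: (a) that $D_{\bar{N}^{(i)}}(Q)$, under the convention of Section~\ref{ground-atoms-convent-sec}, is a \emph{set} of ground atoms carrying copy numbers, so it cannot contain two distinct ground atoms with the same predicate name and the same tuple of arguments; and (b) that, by the syntax of CCQ queries (Section~\ref{query-syntax-sec}), a copy variable of $Q''$ occurs only in the copy-number slot of a single copy-sensitive atom, hence never as a relational-template argument.

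First I would fix the enumeration $g_1,\ldots,g_G$ of the subgoals of $Q''$ and write $\mathcal{A}_1 = \{(g_1,d^1_1),\ldots,(g_G,d^1_G)\}$ and $\mathcal{A}_2 = \{(g_1,d^2_1),\ldots,(g_G,d^2_G)\}$. Since $\mathcal{A}_1\neq\mathcal{A}_2$, there is an index $k$ with $d^1_k\neq d^2_k$. Because $\theta_1$ and $\theta_2$ are valid assignment mappings, they are in particular non-empty (Definition~\ref{valid-mpng-def}(i)); so by Definition~\ref{cand-mpng-def}(a) the subgoal $g_k$ shares its predicate name with $d^1_k$ and (from $\theta_2$) also with $d^2_k$. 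Hence $d^1_k$ and $d^2_k$ have a common predicate name $p$, of some arity $n$. By ingredient (a), two distinct ground atoms of $D_{\bar{N}^{(i)}}(Q)$ with predicate $p$ must differ in at least one relational-template argument position; fix such a position $q\in\{1,\ldots,n\}$, and let $a$ (resp.\ $b$) be the $q$-th argument of $d^1_k$ (resp.\ $d^2_k$), so $a\neq b$.

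Then I would let $T_q$ be the term occupying position $q$ in the relational template of $g_k$. By Definition~\ref{cand-mpng-def}(b), the candidate mapping underlying $\theta_1$ maps $T_q$ to the $q$-th argument of $d^1_k$, i.e.\ $\theta_1(T_q)=a$, and likewise $\theta_2(T_q)=b$; thus $\theta_1(T_q)\neq\theta_2(T_q)$. Finally $T_q$ is a variable of $Q''$: it is not a constant, since Definition~\ref{valid-mpng-def}(ii) would then force $\theta_1(T_q)=\theta_2(T_q)$; and by ingredient (b) it is not a copy variable, since $q$ is a relational-template position rather than the copy-number slot. Setting $X:=T_q$ yields the desired variable, completing the argument.

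I do not expect a genuine obstacle here; the proposition is essentially bookkeeping once the definitions are unwound. The only points that require a little care are making the ``distinct ground atoms with equal predicate name differ in an argument'' claim precise (this is exactly where the set-representation convention of Section~\ref{ground-atoms-convent-sec} is used, ruling out a spurious difference only in copy number), and being explicit that the disagreeing position $q$ lies within the relational template, so that $T_q$ is guaranteed to be something other than a copy variable.
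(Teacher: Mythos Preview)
Your proof is correct and matches the paper's approach: the paper does not give a detailed argument but simply notes that the result ``holds due to our convention for ground atoms in databases, see Section~\ref{ground-atoms-convent-sec}, as well as to our definition of the databases $D_{\bar{N}^{(i)}}(Q)$ using the subset ${\cal S}_{C(Q)}$ of subgoals of the query $Q$,'' which is exactly what you unwind. Your ingredients (a) and (b) are precisely the two facts the paper points to, and your extraction of the witness variable from the disagreeing relational-template position is the intended bookkeeping.
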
 

\begin{proposition} 
\label{same-head-unity-prop} 
Let $i \in {\mathbb N}_+$, and let ${\cal A}$ be an association for query $Q''$ and for the database $D_{\bar{N}^{(i)}}(Q)$, such that there exists a valid assignment mapping, $\theta$, for $Q''$, $D_{\bar{N}^{(i)}}(Q)$, and $\cal A$. Let tuple $t_{\theta}$ be the  result of restricting $\theta$ to the head vector of the query $Q''$. Then there exists a {\em unity} valid assignment mapping, call it $\theta^{(u)}$, for $Q''$, $D_{\bar{N}^{(i)}}(Q)$, and $\cal A$, such that the restriction of $\theta^{(u)}$ to the head vector of the query $Q''$ results in the {\em same} tuple $t_{\theta}$. 
\end{proposition}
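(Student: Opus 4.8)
The plan is to build $\theta^{(u)}$ directly from the given $\theta$: keep $\theta$ unchanged on every term of $Q''$ that is not a copy variable, and overwrite it to send each copy variable of $Q''$ to the value $1$. First I would observe that, by Definition~\ref{ccq-def}, every copy variable of $Q''$ belongs to $M''$, hence is a nondistinguished variable, so no copy variable of $Q''$ occurs in its head vector; under the standing convention of Section~\ref{basic-queries-sec} that head consists of $l$ distinct variables. Therefore the restriction of $\theta^{(u)}$ to the head vector of $Q''$ agrees with that of $\theta$, and so yields the same tuple $t_\theta$ — this immediately gives the last clause of the statement, modulo the verification that $\theta^{(u)}$ is indeed valid w.r.t. $\cal A$.

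The substantive part of the argument is to check that $\theta^{(u)}$ is still a valid assignment mapping for $Q''$, $D_{\bar{N}^{(i)}}(Q)$, and $\cal A$, by exhibiting it as one of the candidate assignment mappings for $\cal A$ in the sense of Definition~\ref{cand-mpng-def} and then verifying conditions (i)--(iii) of Definition~\ref{valid-mpng-def}. Writing ${\cal A} = \{ (g_1,d_{j1}),\ldots,(g_G,d_{jG}) \}$: for a relational subgoal $g_k$ the copy number of $d_{jk}$ plays no role and $\theta^{(u)}$ coincides with $\theta$ on the terms of $g_k$; for a copy-sensitive subgoal $g_k$ its terms other than the copy variable are handled exactly as by $\theta$, while for the copy variable Definition~\ref{cand-mpng-def} permits any integer between $1$ and the copy number of $d_{jk}$, and since copy numbers lie in ${\mathbb N}_+$ the choice $1$ is always admissible. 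Hence $\theta^{(u)}$ is a legitimate candidate assignment mapping w.r.t. $\cal A$. It is nonempty because $\theta$ is (condition (i)); it agrees with $\theta$ on all constants, so condition (ii) holds; and for condition (iii) it agrees with $\theta$ on all non-copy variables — consistent by validity of $\theta$ — while on copy variables it is constant equal to $1$, and by Definition~\ref{ccq-def} each copy variable of $Q''$ occurs in a single copy-sensitive subgoal (copy variables are unique), so no conflict can arise.

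Putting this together, $\theta^{(u)}$ is valid for $Q''$, $D_{\bar{N}^{(i)}}(Q)$, and $\cal A$, it maps every copy variable of $Q''$ to $1$ by construction and is therefore a \emph{unity} valid assignment mapping, and its restriction to the head vector of $Q''$ is $t_\theta$. The proof is essentially bookkeeping: the only facts that actually have to be \emph{checked} rather than merely read off are the admissibility of the value $1$ for each copy variable and the absence of assignment conflicts, both of which reduce to the two structural observations that copy numbers are positive and that copy variables of a CCQ query are unique. Accordingly, I do not anticipate any genuine obstacle in carrying out this step.
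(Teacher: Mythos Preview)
Your proposal is correct and matches the paper's approach: the paper states that this proposition (along with the surrounding ones) is ``immediate from the definitions'' and gives no further argument, and what you have written is precisely the natural unpacking of Definitions~\ref{cand-mpng-def} and~\ref{valid-mpng-def} together with the uniqueness of copy variables in a condition. There is nothing to add.
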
 

\begin{proposition} 
\label{unique-unity-prop} 
Let $i \in {\mathbb N}_+$, and let ${\cal A}$ be an association for query $Q''$ and for the database $D_{\bar{N}^{(i)}}(Q)$, such that there exists a unity valid assignment mapping, $\theta$, for $Q''$, $D_{\bar{N}^{(i)}}(Q)$, and $\cal A$. Then there does not exist any unity valid assignment mapping for $Q''$, $D_{\bar{N}^{(i)}}(Q)$, and $\cal A$, that (unity valid assignment mapping) would be distinct from $\theta$. 
\end{proposition}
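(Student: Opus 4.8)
The plan is to show that, once the association $\cal A$ is fixed, a unity valid assignment mapping (if one exists) is completely determined, so that two distinct such mappings cannot coexist. First I would recall that, by Definition~\ref{valid-mpng-def}, every valid assignment mapping for $Q''$, $D_{\bar{N}^{(i)}}(Q)$, and $\cal A$ is in particular a candidate assignment mapping for $Q''$ and $D_{\bar{N}^{(i)}}(Q)$ w.r.t.\ $\cal A$ in the sense of Definition~\ref{cand-mpng-def}, and that its domain is exactly the set of all variables and constants of $Q''$. Since $Q''$ is safe and all of its nondistinguished variables occur in its condition (Definition~\ref{ccq-def}), every term of $Q''$ occurs in at least one subgoal $g_k$ of $Q''$, $1 \le k \le G$.

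Next I would take two unity valid assignment mappings $\theta$ and $\theta'$ for $Q''$, $D_{\bar{N}^{(i)}}(Q)$, and $\cal A$, and argue that they agree on every term of $Q''$. Fix a term $T$ of $Q''$. If $T$ is a regular variable or a constant of $Q''$, pick any subgoal $g_k$ of $Q''$ in which $T$ occurs, together with a position $\ell$ of that occurrence; this position is not the copy-variable position of $g_k$. By clause (b) of Definition~\ref{cand-mpng-def}, the value that any candidate assignment mapping w.r.t.\ $\cal A$ assigns to $T$ through this occurrence is precisely the $\ell$-th argument of the ground atom $d_{jk}$ paired with $g_k$ in $\cal A$; since $\theta$ and $\theta'$ are both candidate mappings w.r.t.\ the same $\cal A$, this forces $\theta(T) = \theta'(T)$. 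If instead $T$ is a copy variable of $Q''$, then by the definition of a unity valid assignment mapping both $\theta$ and $\theta'$ send $T$ to $1$, so again $\theta(T) = \theta'(T)$. As $T$ was arbitrary and $\theta$, $\theta'$ have the same domain, it follows that $\theta = \theta'$, which is the claim.

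The step that needs the most care --- though it is bookkeeping rather than a genuine obstacle --- is the copy-variable case: it relies on the syntactic fact (Definition~\ref{ccq-def}) that each copy variable of $Q''$ occurs in exactly one copy-sensitive subgoal and in no other atom, so that the (a priori free) choice of copy-number value in clause (b) of Definition~\ref{cand-mpng-def} does not interact with the values forced on the non-copy positions; the unity requirement then pins that single free choice to $1$ uniquely. I would also note that nothing in the argument depends on the ground atom $d_{jk}$ having a nonunity copy number: under the convention of Section~\ref{ground-atoms-convent-sec} every ground atom of $D_{\bar{N}^{(i)}}(Q)$ carries a copy number $\ge 1$, so the value $1$ is always an admissible target for a copy variable mapped into it, and the argument goes through uniformly.
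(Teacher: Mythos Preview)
Your argument is correct and is precisely the unpacking of what the paper means when it says this result is ``immediate from the definitions'': the association $\cal A$ pins down the value of every non-copy term via the matched ground atom and position, and the unity requirement pins down every copy variable to $1$, leaving no freedom. There is nothing to add; your treatment of the copy-variable case (using the syntactic uniqueness of each copy variable's occurrence from Definition~\ref{ccq-def}) is exactly the right bookkeeping.
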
 

\begin{proposition}
\label{same-head-tuple-prop}
Let $i \in {\mathbb N}_+$, and let ${\cal A}$ be an association for query $Q''$
and for the database $D_{\bar{N}^{(i)}}(Q)$. Suppose $\cal A$ is such that there exists a unity valid assignment mapping, $\theta^{(u)}$, for $Q''$, $D_{\bar{N}^{(i)}}(Q)$, and $\cal A$. Let tuple $t_{\theta^{(u)}}$ be the restriction of $\theta^{(u)}$ to the head vector of the query $Q''$. 
Then we have that for each candidate assignment mapping, call it $\theta'$, for $Q''$, $D_{\bar{N}^{(i)}}(Q)$, and $\cal A$, $\theta'$ is a valid assignment mapping for $Q''$, $D_{\bar{N}^{(i)}}(Q)$, and $\cal A$, and the restriction of $\theta'$ to the head vector of the query $Q''$ is the tuple $t_{\theta^{(u)}}$. 
\end{proposition}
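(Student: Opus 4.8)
The plan is to exploit the syntactic isolation of copy variables in $Q''$. By the definition of a condition (Section~\ref{query-syntax-sec}) together with Definition~\ref{ccq-def}, every copy variable of $Q''$ occurs in exactly one copy-sensitive subgoal of $Q''$ and in no other subgoal, and no copy variable occurs in the head vector $\bar{X}''$ (copy variables are nondistinguished, hence not head variables). Consequently, once the association ${\cal A} = \{ (g_1,d_{j1}),\ldots,(g_G,d_{jG}) \}$ is fixed, the only freedom available when forming a candidate assignment mapping for $Q''$, $D_{\bar{N}^{(i)}}(Q)$, and $\cal A$ (see Definition~\ref{cand-mpng-def}) is the natural-number value assigned to each copy variable of $Q''$ --- for a copy-sensitive $g_k$, a value between $1$ and the copy number of $d_{jk}$. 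On every term of $Q''$ that is not a copy variable, all candidate assignment mappings for $\cal A$ coincide, and in particular coincide with the given unity valid assignment mapping $\theta^{(u)}$.

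First I would make this coincidence precise: for any candidate assignment mapping $\theta'$ for $Q''$, $D_{\bar{N}^{(i)}}(Q)$, and $\cal A$, and for every term $T$ of $Q''$ that is not a copy variable, $\theta'(T) = \theta^{(u)}(T)$. This is immediate from Definition~\ref{cand-mpng-def}(b), because the part of a candidate mapping defined on the non-copy terms of $g_k$ is read off the argument vector $\bar{W}^{(k)}$ of $d_{jk}$ and does not involve the copy number of $d_{jk}$.

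Next I would check that $\theta'$ satisfies the three conditions of Definition~\ref{valid-mpng-def}. (i) $\theta'$ is non-empty: by Definition~\ref{cand-mpng-def}(a), a candidate assignment mapping w.r.t. $\cal A$ is empty exactly when some $g_k$ and $d_{jk}$ have different predicate names, which is a property of $\cal A$ alone; since $\theta^{(u)}$ is valid and hence non-empty, no such predicate mismatch occurs, so $\theta'$ is non-empty as well. (ii) $\theta'$ associates all occurrences of each constant with the same constant: constants of $Q''$ are not copy variables, so $\theta'$ agrees with $\theta^{(u)}$ on every occurrence of every constant, and $\theta^{(u)}$ already meets this requirement. (iii) $\theta'$ associates all occurrences of each variable with the same value: for a non-copy variable this follows from the agreement with the valid mapping $\theta^{(u)}$; for a copy variable, ``all occurrences'' consists of a single occurrence by the isolation noted above, so the requirement is vacuous. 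Hence $\theta'$ is a valid assignment mapping for $Q''$, $D_{\bar{N}^{(i)}}(Q)$, and $\cal A$.

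Finally, since $\bar{X}''$ contains no copy variables, the restriction of $\theta'$ to $\bar{X}''$ equals the restriction of $\theta^{(u)}$ to $\bar{X}''$, which by hypothesis is $t_{\theta^{(u)}}$; this is the remaining claim. I do not expect a genuine obstacle here: the whole content of the proof is the observation that copy variables of $Q''$ are syntactically isolated and absent from the head, so that changing their assigned values neither violates any validity constraint nor alters the output tuple; everything else is an unwinding of Definitions~\ref{assoc-def}, \ref{cand-mpng-def}, and~\ref{valid-mpng-def}.
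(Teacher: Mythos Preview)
Your proposal is correct and matches the paper's treatment: the paper states that Propositions~\ref{same-sigs-prop} through~\ref{fixed-unity-prop} are ``immediate from the definitions'' and gives no further argument for this one, and your proof is precisely the intended unwinding of Definitions~\ref{cand-mpng-def} and~\ref{valid-mpng-def} via the syntactic isolation of copy variables.
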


For the next result we introduce some notation. Let $i \in {\mathbb N}_+$, and let ${\cal A}$ be an association for query $Q''$
and for the database $D_{\bar{N}^{(i)}}(Q)$.  In case $r \geq 1$, denote the entries in the copy-signature $\Phi_c[{\cal A}]$ as 
$\Phi_c[{\cal A}]$ $=$ $[V_{j1},\ldots,V_{jr}]$. Here, for each $k \in \{ 1,\ldots,r \}$ we have that $V_{jk}$ $\in$ $\{ 1,$ $N_{m+1},$ $\ldots,$ $N_{m+w}\}$, where the $N$-values are variables in the vector $\bar{N}$.   (In case $r = 0$, $\Phi_c[{\cal A}]$ is the empty vector by definition.) Further, again for each $k \in \{ 1,\ldots,r \}$, by $V^{(i)}_{jk}$ we denote the value of  $V_{jk}$ in the vector $\bar{N}^{(i)}$ whenever  $V_{jk} \in \{ N_{m+1},\ldots,N_{m+r} \}$; we set $V^{(i)}_{jk} := 1$ for the case $V_{jk}$ = 1.   

We also use the notation $\Gamma^{({\cal A})}$ (for all cases $r \geq 0$ of the value $r = |M_{copy}|$): For an $i \in {\mathbb N}_+$ and an association $\cal A$ for $Q''$ and $D_{\bar{N}^{(i)}}(Q)$, $\Gamma^{({\cal A})}$ stands for the set of all tuples contributed to the set $\Gamma^{(t_{\theta^{(u)}})}_{\bar{S}}(Q'',$ $D_{\bar{N}^{(i)}}(Q))$, for some tuple $t_{\theta^{(u)}}$,  by the valid assignment mappings (if any) for the association $\cal A$. (By Proposition~\ref{same-head-tuple-prop}, all the valid assignment mappings (if any) for an association $\cal A$ for $Q''$ and $D_{\bar{N}^{(i)}}(Q)$, agree on their restriction to the head vector of the query $Q''$.) Finally, in case $r \geq 1$, we denote by $\Gamma^{({\cal A})}_c$ the set projection of the set $\Gamma^{({\cal A})}$ on the columns $Y''_{m+1},$ $\ldots,$ $Y''_{m+r}$, for all the copy variables of the query $Q''$.

\begin{proposition} 
\label{fixed-unity-prop} 
Let $i \in {\mathbb N}_+$, and let ${\cal A}$ be an association for query $Q''$
and for the database $D_{\bar{N}^{(i)}}(Q)$. Suppose $\cal A$ is such that there exists a unity valid assignment mapping, $\theta^{(u)}$, for $Q''$, $D_{\bar{N}^{(i)}}(Q)$, and $\cal A$. Let tuple $t_{\theta^{(u)}}$ be the restriction of $\theta^{(u)}$ to the head vector of the query $Q''$. 
Then the following holds: 
\begin{itemize} 
	\item[(i)] The set $\Gamma^{({\cal A})}$ is not empty; 
	\item[(ii)] In case $r = 0$, the association $\cal A$ has exactly one valid assignment mapping and contributes exactly one tuple to the set $\Gamma^{(t_{\theta^{(u)}})}_{\bar{S}}(Q'',D_{\bar{N}^{(i)}}(Q))$;  
	\item[(iii)] In case $r \geq 1$, the set $\Gamma^{({\cal A})}_c$ has the tuple $(n_1,$ $n_2,$ $\ldots,$ $n_r)$ for each $1 \leq n_k \leq V^{(i)}_{jk}$  for each $k \in \{ 1,\ldots,r \}$, and $\Gamma^{({\cal A})}_c$ does not have any other tuples;  
	\item[(iv)] The total number of {\em distinct} valid assignment mappings for $Q''$, $D_{\bar{N}^{(i)}}(Q)$, and $\cal A$, call it $T^{({\cal A})}$, is $1$ in case $r = 0$, and is $\Pi_{k=1}^r V^{(i)}_{jk}$  in case $r \geq 1$; and  
	\item[(v)] The cardinality of the set $\Gamma^{({\cal A})}$ is equal to $T^{({\cal A})}$. 
\end{itemize}
\end{proposition}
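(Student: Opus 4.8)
The plan is to reduce the whole statement to two facts about how a single association $\cal A$ constrains the valid assignment mappings built from it, plus a precise identification of the copy-signature entries $V^{(i)}_{jk}$ with copy numbers of ground atoms of $D_{\bar{N}^{(i)}}(Q)$. First I would establish that $V^{(i)}_{jk}$ equals the copy number of the ground atom $d_{jk}$ paired, in $\cal A$, with the $k$th copy-sensitive subgoal $g_k$ of $Q''$. By Proposition~\ref{Observation_one_one}(ii), $d_{jk}$ has a unique $\nu^{(i)}_Q$-image $s = \psi^{gen(Q)}_{\bar{N}^{(i)}}[d_{jk}]$ in ${\cal S}_{C(Q)}$; tracing the Main construction cycle of Section~\ref{main-cycle-sec}, if $s$ is relational then $d_{jk}$ has copy number $1$, and if $s$ is copy-sensitive with copy variable $Z = Y_{j'}$ then $d_{jk}$ has copy number $\nu^{copy}(Y_{j'}) = N^{(i)}_{j'}$ (this value does not depend on which tuple of ${\cal S}^{(i)}$ generated $d_{jk}$, so the database is well defined). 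On the other side, Definition~\ref{copy-sig-def} gives $V_{jk} = \nu^{copy-sig}(s)$, which is $1$ when $s$ is relational and $\nu^{copy}_Q(Z) = N_{j'}$ when $s$ is copy-sensitive; evaluating in $\bar{N}^{(i)}$ yields exactly the copy number of $d_{jk}$ in both cases.

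Next I would use the CCQ syntactic requirement that every copy variable of $Q''$ occur in a single subgoal. Fix $\cal A$. By Definition~\ref{cand-mpng-def}(b), the image of every non-copy term of $Q''$ under any candidate assignment mapping for $\cal A$ is forced by $\cal A$, and the only residual freedom is, for each copy-sensitive $g_k$, the choice of a value for its copy variable $Y''_{m+k}$ between $1$ and the copy number of $d_{jk}$, which by the previous step is $V^{(i)}_{jk}$. Since copy variables are subgoal-local, a candidate assignment mapping is valid precisely when its forced non-copy part is consistent on constants and repeated non-copy variables; by hypothesis the unity mapping $\theta^{(u)}$ is valid for $\cal A$, so this non-copy part \emph{is} consistent. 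Hence every candidate assignment mapping for $\cal A$ is valid, and these valid mappings are in bijection with the $r$-tuples $(n_1, \ldots, n_r)$ with $1 \le n_k \le V^{(i)}_{jk}$. When $r = 0$ this gives the unique valid mapping of (ii), and its restriction to $\bar{S}(Q'')$ is the single tuple contributed to $\Gamma^{(t_{\theta^{(u)}})}_{\bar{S}}(Q'',D_{\bar{N}^{(i)}}(Q))$; when $r \ge 1$ it gives $T^{({\cal A})} = \prod_{k=1}^r V^{(i)}_{jk}$, which is (iv). Nonemptiness of $\Gamma^{({\cal A})}$, i.e.\ (i), follows since $\theta^{(u)}$ itself contributes a tuple (and all valid mappings for $\cal A$ agree on the head of $Q''$ by Proposition~\ref{same-head-tuple-prop}).

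It then remains to move from valid mappings to $\Gamma^{({\cal A})}$ and its copy-projection $\Gamma^{({\cal A})}_c$. The vector $\bar{S}(Q'')$ contains every copy variable $Y''_{m+k}$ (it consists of the distinguished and multiset variables of $Q''$), so by the bijection of the previous paragraph any two distinct valid mappings for $\cal A$ disagree on a component of $\bar{S}(Q'')$; therefore restriction to $\bar{S}(Q'')$ is injective on the valid mappings for $\cal A$, giving $|\Gamma^{({\cal A})}| = T^{({\cal A})}$, which is (v). Projecting further onto the copy columns $Y''_{m+1}, \ldots, Y''_{m+r}$ returns exactly the tuple $(n_1, \ldots, n_r)$ attached to each valid mapping, and since these range over precisely $\{ (n_1,\ldots,n_r) : 1 \le n_k \le V^{(i)}_{jk} \}$, one obtains (iii).

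The main obstacle I anticipate is the first step: correctly identifying $V^{(i)}_{jk}$ with the copy number of $d_{jk}$ requires threading through Definition~\ref{copy-sig-def} together with the definitions of $\nu^{copy}$, $\nu^{copy}_Q$ and $\nu^{copy-sig}$ and the Main construction cycle, checking that the copy number the construction assigns to a ground atom is independent of the generating tuple of ${\cal S}^{(i)}$, and separately handling the case where the $\nu^{(i)}_Q$-image of $d_{jk}$ is relational (copy number $1$). Once this is pinned down, the rest is a direct consequence of copy variables being local to a single subgoal, so Steps 2 and 3 should be routine.
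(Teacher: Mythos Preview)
Your proposal is correct. The paper itself provides no proof for this proposition; it simply states that the results of Propositions~\ref{same-sigs-prop} through~\ref{fixed-unity-prop} ``are immediate from the definitions (unless discussed further in this subsection),'' and there is no further discussion for this one. Your argument is exactly the unpacking one would carry out to justify that remark: identifying $V^{(i)}_{jk}$ with the copy number of $d_{jk}$ via the chain $\nu^{copy-sig} \circ \psi^{gen(Q)}_{\bar{N}^{(i)}}$ and the Main construction cycle, then using the syntactic uniqueness of copy variables to see that the valid mappings for $\cal A$ are precisely the tuples $(n_1,\ldots,n_r)$ with $1 \le n_k \le V^{(i)}_{jk}$ grafted onto the fixed non-copy part of $\theta^{(u)}$. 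The obstacle you flag (tracing the equality $V^{(i)}_{jk} = $ copy number of $d_{jk}$) is indeed the only place where the chain of definitions is long enough to require care, and your sketch handles it correctly, including the relational-image case and the independence from the generating tuple in ${\cal S}^{(i)}$.
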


\subsection{Sets of Associations for $Q''$ and $D_{\bar{N}^{(i)}}(Q)$} 
\label{total-set-of-assoc-sec}


Consider each of the $F^G$ associations, ${\cal A}_1,\ldots,{\cal A}_{F^G}$, of the form $\cal A$ as defined in Section~\ref{valid-map-sec}, between the $G \geq 1$ subgoals $g_1,\ldots,g_G$ of the fixed query $Q''$ and the $F \geq 1$ 
ground atoms of the database $D_{\bar{N}^{(i)}}(Q)$, for an arbitrary fixed $i \in {\mathbb N}_+$. 
Clearly, enumerating all the $F^G$ associations is a way to find {\em all} satisfiable assignments for $Q''$ and $D_{\bar{N}^{(i)}}(Q)$. In this section we construct a set, ${\mathbb A}^{(i)}_{Q''}$, which includes some of the above associations, and ``captures'', in a very precise sense (see Proposition~\ref{captures-prop}), all of the $t^*_Q$-valid assignment mappings for $Q''$ and $D_{\bar{N}^{(i)}}(Q)$. We will use the set ${\mathbb A}^{(i)}_{Q''}$ in Section~\ref{monomial-classes-sec}, to define a function, ${\cal F}_{(Q)}^{(Q'')}$, in terms of the variables in the vector $\bar N$. For each $i \in {\mathbb N}_+$, the function ${\cal F}_{(Q)}^{(Q'')}$ uses the values (in ${\bar N}^{(i)}$) of the variables in the vector $\bar N$ to return the multiplicity of the tuple $t^*_Q$ in the bag ${\sc Res}_C(Q'',D_{\bar{N}^{(i)}}(Q))$. 

\begin{definition}{Set ${\mathbb A}^{(i)}_{Q''}$ of $t^*_Q$-valid assignment mappings for $Q''$ and $D_{\bar{N}^{(i)}}(Q)$} 
Let $i \in {\mathbb N}_+$. {\em The  set ${\mathbb A}^{(i)}_{Q''}$ of $t^*_Q$-valid assignment mappings for CCQ query $Q''$ and for the database $D_{\bar{N}^{(i)}}(Q)$,} is the set of all of the associations for $Q''$ and  $D_{\bar{N}^{(i)}}(Q)$, such that for each ${\cal A} \in {\mathbb A}^{(i)}_{Q''}$ there exists at least one $t^*_Q$-valid assignment mapping for  $Q''$, $D_{\bar{N}^{(i)}}(Q)$, and $\cal A$. 
\end{definition} 

For each $i \in {\mathbb N}_+$, we denote the cardinality of the set ${\mathbb A}^{(i)}_{Q''}$ by $R^{(i)}_{Q''}$. 
Further, whenever $R^{(i)}_{Q''} \geq 1$, we refer to the individual elements of the set ${\mathbb A}^{(i)}_{Q''}$ as $A^{(i)}_j$, for $1 \leq j \leq R^{(i)}_{Q''}$. (We will avoid the confusion as to which query $A^{(i)}_j$ ``refers to'', by always using the notation $A^{(i)}_j$ in the context of exactly one query.) 

Consider an arbitrary set ${\mathbb A}^*_{Q''}$ of associations for query $Q''$ and for database $D_{\bar{N}^{(i)}}(Q)$. Let $\cal A$ be an association  for $Q''$ and $D_{\bar{N}^{(i)}}(Q)$ such that there exists a valid assignment mapping, $\theta$, for $Q''$, for $D_{\bar{N}^{(i)}}(Q)$, and for $\cal A$. Then  we say that the set ${\mathbb A}^*_{Q''}$ {\em captures the valid assignment mapping} $\theta$ if and only if we have that ${\cal A} \in {\mathbb A}^*_{Q''}$. 
(See Proposition~\ref{diff-assoc-prop} for a justification of this definition.) 

\begin{proposition} 
\label{captures-prop} 
Let $i \in {\mathbb N}_+$. Then (i) The set ${\mathbb A}^{(i)}_{Q''}$ of $t^*_Q$-valid assignment mappings for CCQ query $Q''$ and for the database $D_{\bar{N}^{(i)}}(Q)$ captures all the $t^*_Q$-valid assignment mappings for $Q''$ and $D_{\bar{N}^{(i)}}(Q)$; and (ii) For each  valid assignment mapping $\theta$ for $Q''$ and $D_{\bar{N}^{(i)}}(Q)$ such that ${\mathbb A}^{(i)}_{Q''}$ captures $\theta$, $\theta$ is a  $t^*_Q$-valid assignment mapping for $Q''$ and $D_{\bar{N}^{(i)}}(Q)$. 
\end{proposition}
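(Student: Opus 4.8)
The plan is to prove both parts directly from the definitions in Sections~\ref{valid-map-sec} and~\ref{total-set-of-assoc-sec}, using as the only nontrivial inputs Proposition~\ref{diff-assoc-prop} (distinct associations force distinct noncopy-variable images, which is exactly what makes the relation ``captures'' well defined), Proposition~\ref{same-head-unity-prop} (every association that generates a valid assignment mapping also generates a unity valid assignment mapping with the same head tuple), and Proposition~\ref{same-head-tuple-prop} (if an association generates a unity valid assignment mapping with head tuple $t$, then \emph{all} of its candidate assignment mappings are valid and have head tuple $t$). The statement is essentially a consistency check on the construction of ${\mathbb A}^{(i)}_{Q''}$; the only point needing care is to make sure that, when we speak of ``the'' association that captures a given valid assignment mapping $\theta$, that association is uniquely determined by $\theta$ --- which is exactly the content of Proposition~\ref{diff-assoc-prop}, since if $\theta$ were generated by two distinct associations, some noncopy variable of $Q''$ would be assigned two distinct values by $\theta$, a contradiction.

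For part (i), I would fix $i \in {\mathbb N}_+$ and an arbitrary $t^*_Q$-valid assignment mapping $\theta$ for $Q''$ and $D_{\bar{N}^{(i)}}(Q)$. Unwinding the definition (through ``valid assignment mapping for $Q''$ and $D_{\bar{N}^{(i)}}(Q)$''), there is an association $\cal A$ for $Q''$ and $D_{\bar{N}^{(i)}}(Q)$ such that $\theta$ is a $t^*_Q$-valid assignment mapping for $Q''$, $D_{\bar{N}^{(i)}}(Q)$, and $\cal A$; by Proposition~\ref{diff-assoc-prop} this $\cal A$ is the unique association generating $\theta$. Since $\cal A$ generates at least one $t^*_Q$-valid assignment mapping (namely $\theta$), the definition of ${\mathbb A}^{(i)}_{Q''}$ gives ${\cal A} \in {\mathbb A}^{(i)}_{Q''}$, and hence, by the definition of ``captures'', ${\mathbb A}^{(i)}_{Q''}$ captures $\theta$. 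As $\theta$ was arbitrary, part (i) follows.

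For part (ii), I would take a valid assignment mapping $\theta$ for $Q''$ and $D_{\bar{N}^{(i)}}(Q)$ that is captured by ${\mathbb A}^{(i)}_{Q''}$, and let $\cal A$ be the (by Proposition~\ref{diff-assoc-prop}, unique) association generating $\theta$, so that ${\cal A} \in {\mathbb A}^{(i)}_{Q''}$ by the definition of ``captures''. By the definition of ${\mathbb A}^{(i)}_{Q''}$, the association $\cal A$ also generates some $t^*_Q$-valid assignment mapping $\theta'$; thus the restriction of $\theta'$ to the head vector $\bar{X}''$ of $Q''$ equals $t^*_Q$. Applying Proposition~\ref{same-head-unity-prop} to $\theta'$ produces a unity valid assignment mapping $\theta^{(u)}$ for $Q''$, $D_{\bar{N}^{(i)}}(Q)$, and $\cal A$ whose head restriction is again $t^*_Q$. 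Now Proposition~\ref{same-head-tuple-prop}, applied with $\theta^{(u)}$, guarantees that every candidate assignment mapping for $Q''$, $D_{\bar{N}^{(i)}}(Q)$, and $\cal A$ --- in particular $\theta$, which is a valid and hence candidate assignment mapping for $\cal A$ --- is a valid assignment mapping whose restriction to $\bar{X}''$ equals $t^*_Q$. Therefore $\theta$ is a $t^*_Q$-valid assignment mapping for $Q''$ and $D_{\bar{N}^{(i)}}(Q)$, proving part (ii).

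The only genuine obstacle here is purely organizational: one must keep straight the three nested notions of candidate, valid, and $t^*_Q$-valid assignment mapping, each taken either relative to a fixed association or relative to the database, and invoke Proposition~\ref{diff-assoc-prop} at the right place so that ``the association generating $\theta$'' is unambiguous. Beyond that, the proposition follows immediately from the cited Propositions~\ref{diff-assoc-prop},~\ref{same-head-unity-prop}, and~\ref{same-head-tuple-prop} together with the definitions.
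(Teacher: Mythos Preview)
Your proof is correct and matches the paper's intended approach: the paper states this proposition without an explicit proof, treating it as immediate from the definitions of ${\mathbb A}^{(i)}_{Q''}$ and of ``captures'' (with Proposition~\ref{diff-assoc-prop} cited to justify that ``captures'' is well defined), together with the fact that all valid assignment mappings for a fixed association share the same head tuple. Your write-up simply unpacks this more carefully, invoking Propositions~\ref{same-head-unity-prop} and~\ref{same-head-tuple-prop} to make part~(ii) explicit.
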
 

\begin{proposition} 
\label{Observation_three_four}
Suppose that there exists an $i^* \in {\mathbb N}_+$ such that for some association ${\cal A}_{j^*}^{(i^*)}$ $\in$ ${\mathbb A}^{(i^*)}_{Q''}$, a valid assignment mapping for ${\cal A}_{j^*}^{(i^*)}$  induces a mapping from all the subgoals of $Q''$ to a {\em single} copy-neutral canonical database for query $Q$.\footnote{That copy-neutral canonical database for $Q$ is within database $D_{\bar{N}^{(i^*)}}(Q)$, see Proposition~\ref{Observation_one_two} in Section~\ref{obs-one-sec}.} Then for {\em each} $i \in {\mathbb N}_+$, there exists a $j$, $1 \leq j \leq R^{(i)}_{Q''}$, such that  a valid assignment mapping for the association ${\cal A}_j^{(i)}$ (exists and) induces a mapping from all the subgoals of the query $Q''$ to a {\em single} copy-neutral canonical database for query $Q$ (within database  $D_{\bar{N}^{(i)}}(Q)$). Moreover, ${\cal A}_j^{(i)}$ and ${\cal A}_{j^*}^{(i^*)}$ have the same atom-signature. 
\end{proposition}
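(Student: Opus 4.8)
The plan is to exploit the fact that every database $D_{\bar{N}^{(i)}}(Q)$ in the family $\{ D_{\bar{N}^{(i)}}(Q) \}$ contains a copy-neutral canonical database for $Q$ which, via $\nu^{(i)}_Q$, is an isomorphic copy of one and the same fixed structure, namely the set ${\cal S}_{C(Q)}$ of representative-element subgoals of $Q$ (Proposition~\ref{Observation_one_two}(i)--(ii)). Thus the hypothesis, which concerns a single value $i^*$, can be ``transplanted'' to every $i$. First I would normalize the hypothesis. Let $\theta^*$ be a valid assignment mapping for $Q''$, $D_{\bar{N}^{(i^*)}}(Q)$, and ${\cal A}_{j^*}^{(i^*)}$ whose induced mapping sends all subgoals of $Q''$ into a single copy-neutral canonical database $D^*$ for $Q$ inside $D_{\bar{N}^{(i^*)}}(Q)$. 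By Proposition~\ref{same-head-unity-prop} we may replace $\theta^*$ by a \emph{unity} valid assignment mapping for the same association (so every copy variable of $Q''$ is sent to $1$); since the association ${\cal A}_{j^*}^{(i^*)}$ itself fixes which ground atoms the subgoals go to, the new $\theta^*$ still sends all subgoals into $D^*$, and because ${\cal A}_{j^*}^{(i^*)} \in {\mathbb A}^{(i^*)}_{Q''}$ and all valid mappings of one association agree on the head vector (Proposition~\ref{same-head-tuple-prop}), the restriction of $\theta^*$ to the head of $Q''$ is $t^*_Q$.

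Next I would package the hypothesis into an $i$-independent object. Since $D^*$ is a copy-neutral canonical database for $Q$, by Proposition~\ref{Observation_one_two}(i) the mapping $\psi^{gen(Q)}_{\bar{N}^{(i^*)}}$ induces an isomorphism $\iota^*$ from $D^*$ onto ${\cal S}_{C(Q)}$ (a bijection on atoms, with arguments corresponding via $\nu^{(i^*)}_Q$). Composing, for each subgoal $g_k$ of $Q''$ record $s_k := \iota^*(\theta^*(g_k)) \in {\cal S}_{C(Q)}$; the vector $(s_1,\dots,s_G)$ is exactly the atom-signature $\Psi_a[{\cal A}_{j^*}^{(i^*)}]$ (Definition~\ref{atom-sig-def}), and it is plainly a function only of the combinatorics of how $Q''$ maps into ${\cal S}_{C(Q)}$, independent of $i^*$.

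Then, for an arbitrary fixed $i \in {\mathbb N}_+$, I would push this structure forward. By Proposition~\ref{Observation_one_two}(ii) there is a copy-neutral canonical database $D_i$ for $Q$ inside $D_{\bar{N}^{(i)}}(Q)$; by construction its active domain meets each $S_j^{(i)}$ at most once, so $\nu^{(i)}_Q$ is a bijection on $adom(D_i)$ and, by Proposition~\ref{Observation_one_two}(i), $\psi^{gen(Q)}_{\bar{N}^{(i)}}$ induces an isomorphism $\iota_i$ from $D_i$ onto ${\cal S}_{C(Q)}$; in particular $\iota_i^{-1}$ carries each $s_k$ to a unique ground atom $e_k$ of $D_i$. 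Define the association ${\cal A}_j^{(i)} := \{ (g_k, e_k) : 1 \le k \le G \}$ for $Q''$ and $D_{\bar{N}^{(i)}}(Q)$; then $\psi^{gen(Q)}_{\bar{N}^{(i)}}(e_k) = \iota_i(e_k) = s_k$, so ${\cal A}_j^{(i)}$ has the same atom-signature as ${\cal A}_{j^*}^{(i^*)}$ — this is the ``moreover'' clause. Now define $\theta_i$ on the terms of $Q''$ by $\theta_i := (\nu^{(i)}_Q)^{-1} \circ \nu^{(i^*)}_Q \circ \theta^*$ on all non-copy terms (well-defined, since $\theta^*$ maps non-copy terms of $Q''$ into $adom(D^*)$, whose $\nu^{(i^*)}_Q$-image is the set of non-copy terms occurring in ${\cal S}_{C(Q)}$, which equals $\nu^{(i)}_Q(adom(D_i))$), and $\theta_i(z) := 1$ for every copy variable $z$ of $Q''$. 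A routine subgoal-by-subgoal check against ${\cal A}_j^{(i)}$ shows $\theta_i$ is a valid assignment mapping: variable assignments are consistent and constants are fixed because $\theta^*$ is valid and the $\nu$-maps respect constants; and for each copy-sensitive subgoal of $Q''$, its ${\cal A}_j^{(i)}$-partner $e_k$ has copy number $\ge 1$, so sending its copy variable to $1$ is legal under combined semantics. Since the head of $Q''$ consists of non-copy variables, $\nu^{(i^*)}_Q$ and $\nu^{(i)}_Q$ both agree with $\nu_0^{-1}$ on $S_0 \cup P$, and $t^*_Q \in (S_0 \cup P)^{l}$, the restriction of $\theta_i$ to the head of $Q''$ is again $t^*_Q$. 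Hence $\theta_i$ is a $t^*_Q$-valid assignment mapping for $Q''$, $D_{\bar{N}^{(i)}}(Q)$, and ${\cal A}_j^{(i)}$, so ${\cal A}_j^{(i)} \in {\mathbb A}^{(i)}_{Q''}$ (i.e. it is one of the $R^{(i)}_{Q''}$ listed associations), and $\theta_i$ sends every subgoal of $Q''$ into the single copy-neutral canonical database $D_i$, as required.

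The main obstacle — the single place where care is genuinely needed — is producing the valid assignment mapping $\theta_i$ for the transplanted association. Because the copy numbers in copy-neutral canonical databases are ``neutral'' (arbitrary), a valid mapping of ${\cal A}_{j^*}^{(i^*)}$ that assigned some copy variable a value larger than $1$ could fail to transplant to $D_i$, where the matching atom may carry a smaller copy number; this is precisely why the normalization to a \emph{unity} valid assignment mapping via Proposition~\ref{same-head-unity-prop} is indispensable, and why one must separately verify that the head tuple $t^*_Q$ is preserved under the composition $(\nu^{(i)}_Q)^{-1} \circ \nu^{(i^*)}_Q$ of the two generative labelings.
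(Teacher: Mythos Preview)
Your proof is correct and follows essentially the same approach as the paper: normalize to a unity valid assignment via Proposition~\ref{same-head-unity-prop}, then transplant the association to an arbitrary $D_{\bar{N}^{(i)}}(Q)$ using the fact that every copy-neutral canonical database in the family is isomorphic (via $\nu^{(i)}_Q$) to ${\cal S}_{C(Q)}$. The paper phrases the transplant as a single isomorphism $\iota^*\colon D^* \to D$ compatible with the $\psi^{gen(Q)}$ labelings and then composes, whereas you factor explicitly through ${\cal S}_{C(Q)}$ and write out $\theta_i = (\nu^{(i)}_Q)^{-1} \circ \nu^{(i^*)}_Q \circ \theta^*$ on non-copy terms; these are the same map, and your closing remark on why the unity normalization is indispensable (copy numbers need not match across databases) is a point the paper leaves implicit.
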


\begin{proof}
We are given that there exists a pair $(i^*, j^*)$, with $i^* \in {\mathbb N}_+$ and with $1 \leq j^* \leq R^{(i^*)}_{Q''}$, such that the  association ${\cal A}_{j^*}^{(i^*)}$ generates a valid assignment mapping from query $Q''$ to a single copy-neutral canonical database, call it $D^*$, for query $Q$  (within database $D_{\bar{N}^{(i^*)}}(Q)$). 
By Proposition~\ref{same-head-unity-prop} in Section~\ref{assoc-properties-sec}, there exists a unity valid 
%
assignment mapping, call it $\theta'$, for $Q''$ and $D_{\bar{N}^{(i^*)}}(Q)$, such that $\theta'$ uses only the atoms of the database $D^*$ to generate the tuple $t^*_Q$ in the answer to $Q''$ on database $D_{\bar{N}^{(i^*)}}(Q)$. 

Now fix an abitrary $i \in {\mathbb N}_+$. By Proposition~\ref{Observation_one_two} in Section~\ref{obs-one-sec}, database $D_{\bar{N}^{(i)}}(Q)$ has at least one copy-neutral canonical database for query $Q$. Choose and fix in $D_{\bar{N}^{(i)}}(Q)$ one arbitrary such copy-neutral canonical database for $Q$, call this database $D$. 
By definition of copy-neutral canonical database, 
there is an isomorphism from all the ground atoms of the database $D^*$ (within $D_{\bar{N}^{(i^*)}}(Q)$, see first paragraph of this proof), to all the ground atoms of database $D$ (within $D_{\bar{N}^{(i)}}(Q)$). 
Moreover, there exists at least one isomorphism from $D^*$ to $D$, call this isomorphism $\iota^*$, such that for each atom $d^*$ in $D^*$, it holds that $\psi^{gen(Q)}_{\bar{N}^{(i)}}[d^*]$ is the same (atom in set ${\cal S}_{C(Q)}$) as $\psi^{gen(Q)}_{\bar{N}^{(i)}}[\iota^*(d^*)]$.    
Then the composition, call it $\varphi$, of $\theta'$ (of the first paragraph of this proof) with $\iota^*$ gives us a valid assignment mapping from query $Q''$ to the copy-neutral canonical database $D$ for query $Q$ in database $D_{\bar{N}^{(i)}}(Q)$, such that the restriction of $\varphi$ to the head variables of $Q''$ is the tuple $t^*_Q$. By Proposition~\ref{captures-prop}, the association represented by $\varphi$ must be in the set ${\mathbb A}^{(i)}_{Q''}$ for database $D_{\bar{N}^{(i)}}(Q)$. By construction, this association and ${\cal A}_{j^*}^{(i^*)}$ have the same atom-signature. By the fact that $i$ has been chosen arbitrarily, Q.E.D. 
\end{proof}


\begin{proposition}
\label{Observation_three_five} 
Suppose that, for some $i \in {\mathbb N}_+$, there exists an association ${\cal A}_j^{(i)}$ in the set ${\mathbb A}^{(i)}_{Q''}$ such that the unity valid assignment mapping, $\theta_j$, for ${\cal A}_j^{(i)}$ 
induces a mapping from the subgoals of $Q''$ into elements of two or more copy-neutral canonical databases of $Q$ in database $D_{\bar{N}^{(i)}}(Q)$. Then there exists an association ${\cal A}_{j'}^{(i)}$ in ${\mathbb A}^{(i)}_{Q''}$, and there exists in $D_{\bar{N}^{(i)}}(Q)$ a copy-neutral canonical database of $Q$, call this database $D^*$, such that the unity valid assignment mapping, $\theta_{j'}$, for ${\cal A}_{j'}^{(i)}$ 
induces a mapping from all the subgoals of $Q''$ into ground atoms belonging to $D^*$ only. Moreover, ${\cal A}_{j}^{(i)}$ and ${\cal A}_{j'}^{(i)}$ have the same atom-signature. 
\end{proposition}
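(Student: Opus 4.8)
The plan is to take the given unity $t^*_Q$-valid assignment mapping $\theta_j$ for ${\cal A}_j^{(i)}$ and ``fold'' its image atoms into a single copy-neutral canonical database of $Q$ by means of a value-collapsing map, and then invoke Proposition~\ref{Observation_one_three} to locate the enclosing canonical database. First I would dispose of a trivial case: if $m = 0$ then, since ${\cal S}^{(i)}$ is a singleton, Proposition~\ref{Observation_one_two}(i) shows that $D_{\bar{N}^{(i)}}(Q)$ \emph{is} one copy-neutral canonical database of $Q$, so the hypothesis of the proposition cannot hold and there is nothing to prove. Hence assume $m \geq 1$.

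Next I would build the collapsing map. For each $j \in \{1,\ldots,m\}$ pick an arbitrary value $v^*_j \in S^{(i)}_j$, and define $\mu$ on $adom(D_{\bar{N}^{(i)}}(Q)) = P \cup S_0 \cup S^{(i)}_*$ to be the identity on $P \cup S_0$ and to send every element of $S^{(i)}_j$ to $v^*_j$, for each $j$. Let $g_1,\ldots,g_G$ be the subgoals of $Q''$, and let $h_a$ be the ground atom of $D_{\bar{N}^{(i)}}(Q)$ that $\theta_j$ (via ${\cal A}_j^{(i)}$) associates with $g_a$. Define $\theta_{j'}$ to agree with $\theta_j$ on every constant and on every copy variable of $Q''$ (so $\theta_{j'}$ maps each copy variable to $1$), and to send each non-copy variable $X$ of $Q''$ to $\mu(\theta_j(X))$. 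Since $\theta_j$ is single-valued on each variable, so is $\theta_{j'}$; and since $\theta_j$ restricts to $t^*_Q = \nu_0[\bar{X}]$ on the head vector of $Q''$, whose entries lie in $P \cup S_0$ where $\mu$ is the identity, $\theta_{j'}$ also restricts to $t^*_Q$ there.

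The key step is to certify that $\theta_{j'}$ is again a valid assignment mapping for $Q''$ and $D_{\bar{N}^{(i)}}(Q)$, realized by the association ${\cal A}_{j'}^{(i)} := \{(g_a, \mu(h_a)) : 1 \leq a \leq G\}$, where $\mu$ is applied argumentwise and the copy number of $\mu(h_a)$ is that of $h_a$. For each $h_a$ whose active domain meets some $S^{(i)}_j$ I would apply Proposition~\ref{all-variety-prop}(ii): by construction $\mu$ restricted to the argument vector of $h_a$ is the identity off $S^{(i)}_*$ and carries $S^{(i)}_{h_a,j}$ into $S^{(i)}_j$ for the \emph{same} $j$, so (ii-a) gives $\mu(h_a) \in D_{\bar{N}^{(i)}}(Q)$ and (ii-b) gives that $\mu(h_a)$ has the same $\nu^{(i)}_Q$-image as $h_a$; for the remaining $h_a$, which use only values of $P \cup S_0$, we simply have $\mu(h_a) = h_a$. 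Reading off Definitions~\ref{cand-mpng-def} and~\ref{valid-mpng-def}, $\theta_{j'}$ is exactly the unity $t^*_Q$-valid assignment mapping generated by ${\cal A}_{j'}^{(i)}$, so ${\cal A}_{j'}^{(i)} \in {\mathbb A}^{(i)}_{Q''}$; and componentwise equality of $\psi^{gen(Q)}_{\bar{N}^{(i)}}$-values shows that ${\cal A}_{j'}^{(i)}$ and ${\cal A}_j^{(i)}$ have the same atom-signature.

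Finally, the image atoms ${\cal T}' := \{\mu(h_1),\ldots,\mu(h_G)\}$ all draw their active-domain entries from $P \cup S_0 \cup \{v^*_1,\ldots,v^*_m\}$, so $|adom({\cal T}') \cap S^{(i)}_j| \leq 1$ for every $j$; applying Proposition~\ref{Observation_one_three} with ${\cal T} := {\cal T}'$ produces a copy-neutral canonical database $D^*$ of $Q$ inside $D_{\bar{N}^{(i)}}(Q)$ with ${\cal T}' \subseteq D^*$. Hence $\theta_{j'}$ maps all subgoals of $Q''$ into $D^*$ only, which is the claim. The main obstacle I anticipate is precisely the third step --- verifying that collapsing values neither pushes any image atom outside $D_{\bar{N}^{(i)}}(Q)$ nor changes its generative label --- which is where Proposition~\ref{all-variety-prop} must be invoked with care (checking conditions (a)--(b) of its part (ii) for the restriction of $\mu$ to each $h_a$); the surrounding bookkeeping about copy variables, the head tuple, and the constants of $Q''$ is routine.
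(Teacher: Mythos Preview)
Your proof is correct and follows essentially the same approach as the paper: collapse the values from each $S^{(i)}_j$ used by $\theta_j$ down to a single representative, verify the resulting association still lies in ${\mathbb A}^{(i)}_{Q''}$ with unchanged atom-signature, then invoke Proposition~\ref{Observation_one_three}. The only cosmetic difference is that the paper chooses each representative $v^{(i)}_l$ from within $adom({\cal T}) \cap S^{(i)}_l$ (i.e., from values $\theta_j$ actually uses), whereas you pick $v^*_j$ arbitrarily from $S^{(i)}_j$; your choice is slightly simpler, and your explicit appeal to Proposition~\ref{all-variety-prop} for step (c) is cleaner than the paper's ``by construction of the database.''
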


\begin{proof}
We observe first that if $M_{noncopy} = \emptyset$ then database $D_{\bar{N}^{(i)}}(Q)$ comprises exactly one copy-neutral canonical database for $Q$. This observation is immediate from the construction of $D_{\bar{N}^{(i)}}(Q)$. 

Thus we assume for the remainder of this proof that $m = |M_{noncopy}| \geq 1$. For the association ${\cal A}_j^{(i)}$ as in the statement of this Observation, denote by (set of ground atoms) $\cal T$ the image of the condition of query $Q''$ under the mapping $\theta_j$. By Proposition~\ref{Observation_one_three}, the only way the valid assignment mapping $\theta_j$ for ${\cal A}_j^{(i)}$ can map the subgoals of $Q''$ into elements of two or more copy-neutral canonical databases of $Q$ in database $D_{\bar{N}^{(i)}}(Q)$ is when the result of intersecting $adom({\cal T})$ with $S_l^{(i)}$, for at least one $l \in \{1,\ldots,m \}$, has size two or more.  (For the notation $adom({\cal T})$, see note before Proposition~\ref{Observation_one_three}.)

We show an algorithm for producing the association ${\cal A}_{j'}^{(i)}$ and the (copy-neutral canonical) database $D^*$, of the statement of this Proposition, from the association ${\cal A}_{j}^{(i)}$. First, for each $l \in \{1,\ldots,m \}$ such that  the result of intersecting $adom({\cal T})$ with $S_l^{(i)}$ is not empty, fix a single value in that intersection. Let the result be values $v^{(i)}_{l_1},\ldots,v^{(i)}_{l_k}$, where: $1\leq k \leq m$, all the values $l_1,\ldots,l_k$ are distinct, each $l_n$ (for all $1 \leq n \leq k$) satisfies $1 \leq l_n \leq m$, and each $v^{(i)}_{l_n}$ (for all $1 \leq l_n \leq m$, for all $1 \leq n \leq k$) satisfies $v^{(i)}_{l_n} \in S^{(i)}_{l_n}$. Call all values in the set $E = ((\bigcup_{n=1}^m S^{(i)}_n) \bigcap adom({\cal T})) - \{ v^{(i)}_{l_1},\ldots,v^{(i)}_{l_k} \}$ the ``extra multiset noncopy'' values in $adom({\cal T})$. By the assumptions in the statement of this Proposition, the set $E$ is not empty. 

The next step of the algorithm is to modify the mapping $\theta_j$ for the association ${\cal A}^{(i)}_j$, by replacing in $\theta_j$ each value $v$ belonging to $S^{(i)}_n \bigcap E$, $1 \leq n \leq m$, by the value $v^{(i)}_{n}$ that we fixed as described in the previous paragraph. (The intuition is that for each mapping in $\theta_j$ of a variable of $Q''$ to an ``extra multiset noncopy'' value, we ``redirect'' the mapping to a mapping of the same variable into an ``appropriate'' value from among the values $v^{(i)}_{l_1},\ldots,v^{(i)}_{l_k}$ fixed in the previous paragraph.) As we modify $\theta_j$ this way, we also modify the association ${\cal A}_{j}^{(i)}$, by replacing in it all occurrences of each $v \in S^{(i)}_n \bigcap E$, $1 \leq n \leq m$, by the value $v^{(i)}_{n}$. Denote by $\theta_{j'}$ the result of this modification of $\theta_j$, and by ${\cal A}_{j'}^{(i)}$ the result of this modification of ${\cal A}_{j}^{(i)}$. 
By construction, we have that: 
\begin{itemize} 
	\item[(a)] $\theta_{j'}$ is a mapping;
	\item[(b)] for all the terms of $Q''$ that are not copy variables, $\theta_{j'}$ maps all these terms of $Q''$ into the values in $adom({\T}) - E$;
	\item[(c)] all ground atoms mentioned in ${\cal A}_{j'}^{(i)}$ belong to $D_{\bar{N}^{(i)}}(Q)$ (by construction of the database); 
	\item[(d)] $\theta_{j'}$ is a candidate assignment mapping for $Q''$, $D_{\bar{N}^{(i)}}(Q)$, and ${\cal A}_{j'}^{(i)}$; and 
	\item[(e)] ${\cal A}_{j}^{(i)}$ and ${\cal A}_{j'}^{(i)}$ have the same atom-signature. 
\end{itemize} 

We now show that the association ${\cal A}_{j'}^{(i)}$ belongs to the set ${\mathbb A}^{(i)}_{Q''}$. This  is immediate from the fact that $\theta_j$ and $\theta_{j'}$ agree on the images for all the head variables of $Q''$ and from items (a) and (c) of the previous paragraph. Finally, let ${\cal T}'$ be the set of all ground atoms mentioned in the association ${\cal A}_{j'}^{(i)}$. From item (b) of the previous paragraph and from Proposition~\ref{Observation_one_three}, we have that there exists in $D_{\bar{N}^{(i)}}(Q)$ a (single) copy-neutral canonical database for query $Q$, call that database $D^*$, such that ${\cal T}' \subseteq D^*$. We conclude that the unity valid assignment mapping $\theta_{j'}$ for $Q''$, $D_{\bar{N}^{(i)}}(Q)$, and ${\cal A}_{j'}^{(i)}$ maps query $Q''$ into a single copy-neutral canonical database (in $D_{\bar{N}^{(i)}}(Q)$) for the query $Q$. 
\end{proof}

\begin{proposition}
\label{Observation_three_six} 
Suppose there exists an $i \in {\mathbb N}_+$ such that the set ${\mathbb A}^{(i)}_{Q''}$ for database $D_{\bar{N}^{(i)}}(Q)$ is not empty. Then the set  ${\mathbb A}^{(i)}_{Q''}$ is not empty for {\em each} $i \in {\mathbb N}_+$. 
\end{proposition}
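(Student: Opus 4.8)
The plan is to reduce Proposition~\ref{Observation_three_six} entirely to the two preceding results, Propositions~\ref{Observation_three_four} and~\ref{Observation_three_five}, via a single case split. First I would fix an $i^* \in {\mathbb N}_+$ for which ${\mathbb A}^{(i^*)}_{Q''} \neq \emptyset$, and pick an association ${\cal A}^{(i^*)}_{j^*} \in {\mathbb A}^{(i^*)}_{Q''}$. By definition of ${\mathbb A}^{(i^*)}_{Q''}$, this association has a $t^*_Q$-valid assignment mapping; by Proposition~\ref{same-head-unity-prop} it then also has a unity valid assignment mapping whose head-vector restriction is the same tuple, namely $t^*_Q$, and by Proposition~\ref{unique-unity-prop} that unity valid assignment mapping, call it $\theta^{(u)}$, is the unique one for ${\cal A}^{(i^*)}_{j^*}$. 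I would pass to $\theta^{(u)}$ now, before the case split, precisely because the hypotheses of Propositions~\ref{Observation_three_four} and~\ref{Observation_three_five} are phrased in terms of it.

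Next I would examine the set of ground atoms of $D_{\bar{N}^{(i^*)}}(Q)$ used by $\theta^{(u)}$. By Proposition~\ref{Observation_one_two}, $D_{\bar{N}^{(i^*)}}(Q)$ is a union of copy-neutral canonical databases for $Q$, so exactly one of two cases occurs. Case (A): $\theta^{(u)}$ maps all subgoals of $Q''$ into a single copy-neutral canonical database for $Q$ within $D_{\bar{N}^{(i^*)}}(Q)$. Then the hypothesis of Proposition~\ref{Observation_three_four} holds with the association ${\cal A}^{(i^*)}_{j^*}$, and that proposition yields, for every $i \in {\mathbb N}_+$, an index $j$ with $1 \le j \le R^{(i)}_{Q''}$; in particular $R^{(i)}_{Q''} \ge 1$, i.e.\ ${\mathbb A}^{(i)}_{Q''} \neq \emptyset$, which is the desired conclusion. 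Case (B): $\theta^{(u)}$ maps the subgoals of $Q''$ into elements of two or more copy-neutral canonical databases of $Q$ in $D_{\bar{N}^{(i^*)}}(Q)$. Then the hypothesis of Proposition~\ref{Observation_three_five} is met, and that proposition hands us a new association ${\cal A}^{(i^*)}_{j'} \in {\mathbb A}^{(i^*)}_{Q''}$ whose unity valid assignment mapping $\theta_{j'}$ maps all subgoals of $Q''$ into a single copy-neutral canonical database $D^*$ within $D_{\bar{N}^{(i^*)}}(Q)$. Now ${\cal A}^{(i^*)}_{j'}$ plays exactly the role of ${\cal A}^{(i^*)}_{j^*}$ in Case (A), so applying Proposition~\ref{Observation_three_four} to it gives ${\mathbb A}^{(i)}_{Q''} \neq \emptyset$ for every $i$. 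Since the two cases exhaust all possibilities, the statement follows.

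I do not expect a genuine obstacle, since the substantive content is already packaged into Propositions~\ref{Observation_three_four} and~\ref{Observation_three_five}; the case split is forced only because the hypothesis of Proposition~\ref{Observation_three_five} is stated for the "two or more" situation, so I cannot route everything through it uniformly. The one point that needs a little care is bookkeeping: I must confirm that every association produced along the way genuinely lies in ${\mathbb A}^{(i^*)}_{Q''}$ — which Proposition~\ref{Observation_three_five} asserts for ${\cal A}^{(i^*)}_{j'}$, and which, together with Propositions~\ref{captures-prop} and~\ref{same-head-tuple-prop}, guarantees that the relevant head-vector restrictions equal $t^*_Q$ — and that the phrase "there exists a $j$ with $1 \le j \le R^{(i)}_{Q''}$" in the conclusion of Proposition~\ref{Observation_three_four} is literally the assertion $R^{(i)}_{Q''} \ge 1$ that Proposition~\ref{Observation_three_six} demands.
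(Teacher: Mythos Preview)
Your proposal is correct and follows essentially the same approach as the paper: a case split on whether the chosen association's (unity) valid assignment mapping hits a single copy-neutral canonical database or several, using Proposition~\ref{Observation_three_five} to reduce Case~(B) to Case~(A) and then invoking Proposition~\ref{Observation_three_four}. Your version is slightly more explicit about passing to the unity valid assignment mapping and about the $t^*_Q$ bookkeeping, but the structure and key lemmas are identical to the paper's proof.
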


\begin{proof}
The proof is immediate from Propositions~\ref{Observation_three_four} and \ref{Observation_three_five}. That is:
\begin{itemize}
	\item Case 1: Suppose that, for some $i \in {\mathbb N}_+$, the set ${\mathbb A}^{(i)}_{Q''}$ for database $D_{\bar{N}^{(i)}}(Q)$ has an association corresponding to a valid assignment mapping from query $Q''$ to a single copy-neutral canonical database for $Q$ in $D_{\bar{N}^{(i)}}(Q)$. Then, by Proposition~\ref{Observation_three_four}, for {\em all}   $i \in {\mathbb N}_+$, the set ${\mathbb A}^{(i)}_{Q''}$ for database $D_{\bar{N}^{(i)}}(Q)$ has an association corresponding to a valid assignment mapping from query $Q''$ to a single copy-neutral canonical database for $Q$ in $D_{\bar{N}^{(i)}}(Q)$. Q.E.D. 
	\item Case 2: Suppose that, for some $i \in {\mathbb N}_+$, the set ${\mathbb A}^{(i)}_{Q''}$ for database $D_{\bar{N}^{(i)}}(Q)$ has an association corresponding to a valid assignment mapping from query $Q''$ to (ground atoms in) two or more copy-neutral canonical databases for $Q$ in $D_{\bar{N}^{(i)}}(Q)$. Then, by Proposition~\ref{Observation_three_five}, the set  ${\mathbb A}^{(i)}_{Q''}$ for {\em the same value of} $i$  has an association corresponding to a valid assignment mapping from query $Q''$ to a single copy-neutral canonical database for $Q$ in $D_{\bar{N}^{(i)}}(Q)$. Thus we have reduced this case to Case 1. Q.E.D. 
\end{itemize}
\nop{By Observation I-2, each database $D_{\bar{N}^{(i)}}(Q)$ is a union of one or more copy-neutral canonical databases for query $Q$. The only difference between these canonical databases for $Q$, in $D_{\bar{N}^{(i)}}(Q)$ for any fixed $i$, is in the values of multiset noncopy variables of $Q$, if any. 

We are given that there exists an index $i \in {\mathbb N}_+$ such that the set ${\mathbb A}^{(i)}_{Q''}$ for database $D_{\bar{N}^{(i)}}(Q)$ is not empty. That is, for some $i$, $R^{(i)}_{Q''} \geq 1$ holds. Fix an arbitrary index $j$ between 1 and $R^{(i)}_{Q''}$: $1 \leq j \leq R^{(i)}_{Q''}$. Then for the fixed $i$ and $j$, consider the association  ${\cal A}_j^{(i)}$ for $Q''$ and $D_{\bar{N}^{(i)}}(Q)$, and the (unique) partial valid assignment, call it $\theta[{\cal A}_j^{(i)}]$, for  $Q''$, $D_{\bar{N}^{(i)}}(Q)$, and ${\cal A}_j^{(i)}$. We construct from $\theta[{\cal A}_j^{(i)}]$ a {\em complete} valid assignment, call it $\theta^{(i)}_j$, for  $Q''$, $D_{\bar{N}^{(i)}}(Q)$, and ${\cal A}_j^{(i)}$, by assigning value $1$ to all copy variables of $Q''$, if any. By construction of the database $D_{\bar{N}^{(i)}}(Q)$ and of the assignment $\theta^{(i)}_j$, $\theta^{(i)}_j$ generates the tuple $t^*_Q$ (of Part II of this proof) as an answer to query $Q''$ on database  $D_{\bar{N}^{(i)}}(Q)$. 

There are two possibilities: Case 1 is that assignment $\theta^{(i)}_j$ maps all subgoals of query $Q''$ into a single canonical database of query $Q$, within database $D_{\bar{N}^{(i)}}(Q)$. Case 2 is that $\theta^{(i)}_j$ maps the subgoals of query $Q''$ ``across'' several canonical databases of query $Q$, within database $D_{\bar{N}^{(i)}}(Q)$. 

Suppose first that Case 1 holds for the assignment $\theta^{(i)}_j$. Then, by isomorphism of all canonical databases of query $Q$, both within database $D_{\bar{N}^{(l)}}(Q)$ (for each $l \in {\mathbb N}_+$), and across databases in the family $\{ D_{\bar{N}^{(l)}}(Q) \}$, we have that for all $i \in {\mathbb N}_+$, there is a valid assignment (for $Q''$, )

Consider the signature $\Psi[{\cal A}_j^{(i)}]$ of ${\cal A}_j^{(i)}$.

We first show that if there exists an index $i \in {\mathbb N}_+$ such that the set ${\mathbb A}^{(i)}_{Q''}$ for database $D_{\bar{N}^{(i)}}(Q)$ is not empty, and the se

By Observation I-1 (in Part I of this proof), for each 

} 
\end{proof}

\begin{proposition}
\label{Observation_three_seven} 
Suppose there exists an $i^* \in {\mathbb N}_+$ such that the query $Q''$ has no answer $t^*_Q$ on database $D_{\bar{N}^{(i^*)}}(Q)$. Then the multiplicity of the tuple $t^*_Q$ in the bag ${\sc Res}_C(Q'',D_{\bar{N}^{(i)}}(Q))$ equals zero on the database $D_{\bar{N}^{(i)}}(Q)$ for each $i \in {\mathbb N}_+$. 
\end{proposition}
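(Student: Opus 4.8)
The plan is to obtain Proposition~\ref{Observation_three_seven} as an immediate consequence of Proposition~\ref{Observation_three_six}, via the correspondence between the multiplicity of $t^*_Q$ in the bag ${\sc Res}_C(Q'',D_{\bar{N}^{(i)}}(Q))$ and the (non)emptiness of the set ${\mathbb A}^{(i)}_{Q''}$.

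First I would record the bridging observation: for every $i \in {\mathbb N}_+$, the multiplicity of $t^*_Q$ in ${\sc Res}_C(Q'',D_{\bar{N}^{(i)}}(Q))$ is strictly positive if and only if ${\mathbb A}^{(i)}_{Q''} \neq \emptyset$. In one direction, by Definition~\ref{query-semantics-def} a positive multiplicity means there is a satisfiably extendible assignment of $\bar{S}(Q'')$ whose extension lies in $\Gamma^{(t^*_Q)}(Q'',D_{\bar{N}^{(i)}}(Q))$; such an extension is a $t^*_Q$-valid assignment mapping for $Q''$ and $D_{\bar{N}^{(i)}}(Q)$, and by Proposition~\ref{captures-prop}(i) it is captured by some association in ${\mathbb A}^{(i)}_{Q''}$, so ${\mathbb A}^{(i)}_{Q''}$ is nonempty. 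In the other direction, if ${\mathbb A}^{(i)}_{Q''}$ contains an association $\cal A$, then by definition of ${\mathbb A}^{(i)}_{Q''}$ there is at least one $t^*_Q$-valid assignment mapping for $Q''$, $D_{\bar{N}^{(i)}}(Q)$, and $\cal A$ (alternatively, invoke Proposition~\ref{fixed-unity-prop}(i) to see that $\Gamma^{({\cal A})} \neq \emptyset$); restricting it to $\bar{S}(Q'')$ contributes a copy of $t^*_Q$ to the answer, so the multiplicity is at least one.

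With this equivalence in hand, the proof is a one-line contrapositive. Suppose, contrary to the claim, that for some $i \in {\mathbb N}_+$ the multiplicity of $t^*_Q$ in ${\sc Res}_C(Q'',D_{\bar{N}^{(i)}}(Q))$ is nonzero. Then ${\mathbb A}^{(i)}_{Q''} \neq \emptyset$, and Proposition~\ref{Observation_three_six} forces ${\mathbb A}^{(i)}_{Q''} \neq \emptyset$ for \emph{every} $i$, in particular for $i^*$. Applying the bridging equivalence at $i^*$ then shows that $t^*_Q$ has positive multiplicity in ${\sc Res}_C(Q'',D_{\bar{N}^{(i^*)}}(Q))$, i.e.\ that $Q''$ does have the answer $t^*_Q$ on $D_{\bar{N}^{(i^*)}}(Q)$, contradicting the hypothesis.

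I expect no genuine obstacle here: all the substantive content is already carried by Proposition~\ref{Observation_three_six} (and, behind it, by Propositions~\ref{Observation_three_four} and~\ref{Observation_three_five}, which handle, respectively, the case in which a witnessing valid assignment maps $Q''$ into a single copy-neutral canonical database for $Q$ and the case in which it spreads across several). The only step requiring care is to state the bridging equivalence between "zero multiplicity of $t^*_Q$", "no $t^*_Q$-valid assignment mapping for $Q''$ and $D_{\bar{N}^{(i)}}(Q)$", and "${\mathbb A}^{(i)}_{Q''} = \emptyset$" precisely enough that Proposition~\ref{Observation_three_six} applies verbatim.
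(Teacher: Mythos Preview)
Your proposal is correct and follows essentially the same approach as the paper: both derive the result as the contrapositive of Proposition~\ref{Observation_three_six}, using the correspondence between ``$t^*_Q$ has positive multiplicity in ${\sc Res}_C(Q'',D_{\bar{N}^{(i)}}(Q))$'' and ``${\mathbb A}^{(i)}_{Q''}\neq\emptyset$.'' The paper's version is a one-sentence appeal to this contrapositive without spelling out the bridging equivalence, whereas you make that step explicit; but the content is the same.
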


\begin{proof}
It is immediate from Proposition~\ref{Observation_three_six} that if there exists an $i^* \in {\mathbb N}_+$ such that the set ${\mathbb A}^{(i^*)}_{Q''}$ for database $D_{\bar{N}^{(i^*)}}(Q)$ is empty, then the set  ${\mathbb A}^{(i)}_{Q''}$ is empty for {\em each} $i \in {\mathbb N}_+$.   
\end{proof} 

\subsection{Monomials for the Multiplicity of Tuple $t^*_Q$ in Bag  ${\sc Res}_C(Q'',D_{\bar{N}^{(i)}}(Q))$} 
\label{monomial-classes-sec}

In this section we provide an algorithm for constructing monomials for a function, call it ${\cal F}_{(Q)}^{(Q'')}$, defined in terms of the variables in the vector $\bar{N}$.  ${\cal F}_{(Q)}^{(Q'')}$  computes the multiplicity of the tuple $t^*_Q$ in the bag  ${\sc Res}_C(Q'',D_{\bar{N}^{(i)}}(Q))$ for each $i \in {\mathbb N}_+$, by using the values in the vector $\bar{N}^{(i)}$ as values of the variables in the vector $\bar{N}$.

We observe first that, by Proposition~\ref{Observation_three_seven}, ${\cal F}_{(Q)}^{(Q'')}$ either equals zero for all input vectors  $\bar{N}^{(i)}$, or returns a positive-integer value for each $\bar{N}^{(i)}$, $i \in {\mathbb N}_+$. In the remainder of the proof of Theorem~\ref{magic-mapping-prop}, we assume that the function ${\cal F}_{(Q)}^{(Q'')}$ returns a positive-integer value for each $\bar{N}^{(i)}$. By the results of Section~\ref{total-set-of-assoc-sec}, we infer from this assumption that  the cardinality ${R}^{(i)}_{Q''}$ of the set ${\mathbb A}^{(i)}_{Q''}$ is a positive integer for each $i \in {\mathbb N}_+$. 

\subsubsection{Defining the Monomial Classes ${\cal C}^{(Q'')}$}

Fix an $i \in {\mathbb N}_+$. We partition all the elements of the set ${\mathbb A}^{(i)}_{Q''} \neq \emptyset$ into equivalence classes: Two distinct elements (in case ${R}^{(i)}_{Q''} \geq 2$) ${\cal A}_j^{(i)}$ and ${\cal A}_k^{(i)}$ of the set ${\mathbb A}^{(i)}_{Q''}$ belong to the same {\em monomial class} if and only if ${\cal A}_j^{(i)}$ and ${\cal A}_k^{(i)}$ have the same atom-signature. Call all the resulting nonempty monomial classes  ${\cal C}_{1}^{(Q'')(i)}$, ${\cal C}_2^{(Q'')(i)}$, $\ldots,$ ${\cal C}_{n^{(i)}}^{(Q'')(i)}$, $n^{(i)} \leq {R}^{(i)}_{Q''}$. From the definition of the monomial classes, we have that $n^{(i)} \geq 1$, and that $n^{(i)}$ is exactly the number of all the atom-signatures of the elements of the set ${\mathbb A}^{(i)}_{Q''}$. In addition, by Proposition~\ref{same-sigs-prop} and from the definition of the set ${\mathbb A}^{(i)}_{Q''}$ we have that for each $j$, $1 \leq j \leq n^{(i)}$, all the elements of the set ${\cal C}_j^{(Q'')(i)}$ (by having the same atom-signature) have the same noncopy-signature and have the same copy-signature. Hence, for each monomial class ${\cal C}_j^{(Q'')(i)}$ we can refer to  {\em the} atom-signature 
of ${\cal C}_j^{(Q'')(i)}$,  to  {\em the} noncopy-signature 
of ${\cal C}_j^{(Q'')(i)}$, and to  {\em the} copy-signature 
of ${\cal C}_j^{(Q'')(i)}$. 


By Proposition~\ref{captures-prop}, we have that for each $i \in {\mathbb N}_+$ and for each monomial class for $Q''$ and $D_{\bar{N}^{(i)}}(Q)$, all the valid assignment mappings of all the elements of the class contribute tuples to the set  $\Gamma^{(t^*_Q)}_{\bar{S}}(Q'',D_{\bar{N}^{(i)}}(Q))$. That is, for all the valid assignment mappings in each monomial class, the restriction of each valid assignment mapping to the head vector of the query $Q''$ is the tuple $t^*_Q$. 

This result follows from the results of Sections~\ref{assoc-properties-sec} and \ref{total-set-of-assoc-sec}:  

\begin{proposition} 
\label{same-monomial-classes-prop}
Let $\Xi$ be a $G$-ary vector  of (not necessarily distinct) elements of the set ${\cal S}_{C(Q)}$. Suppose   there exists an  $i^* \in {\mathbb N}_+$ such that the monomial class ${\cal C}^{(Q'')(i^*)}$ with atom-signature $\Xi$ is not empty. Then  for {\em all} $i \in {\mathbb N}_+$ it holds that  the monomial class ${\cal C}^{(Q'')(i)}$ with atom-signature $\Xi$ is not empty. 
\end{proposition}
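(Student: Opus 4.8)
The plan is to reduce the statement to the two ``transfer'' results already established for associations, Propositions~\ref{Observation_three_four} and~\ref{Observation_three_five}, using the atom-signature $\Xi$ as the invariant carried throughout. First I would fix $i^*$ together with an association ${\cal A} \in {\mathbb A}^{(i^*)}_{Q''}$ whose atom-signature is $\Xi$; such an ${\cal A}$ exists precisely because the monomial class ${\cal C}^{(Q'')(i^*)}$ with atom-signature $\Xi$ is nonempty. By definition of ${\mathbb A}^{(i^*)}_{Q''}$, the association ${\cal A}$ generates a $t^*_Q$-valid assignment mapping, and by Proposition~\ref{same-head-unity-prop} it then also generates a unity valid assignment mapping $\theta^{(u)}$ whose restriction to the head vector of $Q''$ is again $t^*_Q$.

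Next I would examine the set of ground atoms in the image of the condition of $Q''$ under $\theta^{(u)}$. Since $D_{\bar{N}^{(i^*)}}(Q)$ is a union of copy-neutral canonical databases for $Q$ (Proposition~\ref{Observation_one_two}), this image lies either inside a single such copy-neutral canonical database or across two or more of them. In the latter case I would invoke Proposition~\ref{Observation_three_five}, which yields an association ${\cal A}' \in {\mathbb A}^{(i^*)}_{Q''}$ whose unity valid assignment mapping maps all subgoals of $Q''$ into a single copy-neutral canonical database of $Q$, and which has the same atom-signature as ${\cal A}$, hence $\Xi$. Either way, I obtain an association in ${\mathbb A}^{(i^*)}_{Q''}$ with atom-signature $\Xi$ that maps $Q''$ into a single copy-neutral canonical database of $Q$ within $D_{\bar{N}^{(i^*)}}(Q)$.

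With this association playing the role of ${\cal A}_{j^*}^{(i^*)}$, the hypothesis of Proposition~\ref{Observation_three_four} is satisfied, and that proposition delivers, for every $i \in {\mathbb N}_+$, an association ${\cal A}_j^{(i)}$ with $1 \le j \le R^{(i)}_{Q''}$ --- hence a genuine element of ${\mathbb A}^{(i)}_{Q''}$ --- having the same atom-signature $\Xi$. Consequently the monomial class ${\cal C}^{(Q'')(i)}$ with atom-signature $\Xi$ is nonempty for every $i$, which is exactly the claim.

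I do not anticipate a substantive obstacle here; the proof is essentially an assembly of the cited propositions. The only point requiring care is that ${\mathbb A}^{(i)}_{Q''}$ consists by definition of associations generating $t^*_Q$-valid (not merely valid) assignment mappings, so one must confirm that the detours through Propositions~\ref{Observation_three_five} and~\ref{Observation_three_four} preserve both membership in ${\mathbb A}^{(i)}_{Q''}$ and the head tuple $t^*_Q$. Both of those propositions are phrased directly in terms of elements of ${\mathbb A}^{(i)}_{Q''}$ and track the head tuple via the associated unity valid assignment mappings, so this bookkeeping is immediate. If desired, Proposition~\ref{same-sigs-prop} may additionally be cited to observe that the copy- and noncopy-signatures are transferred along with the atom-signature, although only the atom-signature is relevant for the present statement.
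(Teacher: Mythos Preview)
Your proposal is correct and matches the paper's approach exactly: the paper simply states that the result ``follows from the results of Sections~\ref{assoc-properties-sec} and~\ref{total-set-of-assoc-sec},'' and you have spelled out precisely that derivation, using Propositions~\ref{Observation_three_five} and~\ref{Observation_three_four} (from Section~\ref{total-set-of-assoc-sec}) together with the unity-assignment normalization from Section~\ref{assoc-properties-sec}. Your bookkeeping about $t^*_Q$-validity being preserved is accurate, since both transfer propositions are stated for elements of ${\mathbb A}^{(i)}_{Q''}$.
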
 

From Proposition~\ref{same-monomial-classes-prop} it follows that for a fixed query $Q''$, we can drop the $(i)$-superscript from the notation for monomial classes. (That is, the set of nonempty monomial classes for $Q''$, w.r.t. the family  $\{ D_{\bar{N}^{(i)}}(Q) \}$, does not depend on the specific database $D_{\bar{N}^{(i)}}(Q)$ in the family.)  
From now on, when referring to the set of all nonempty monomial classes for query $Q''$ on database $D_{\bar{N}^{(i)}}(Q)$, we will use the notation  ${\cal C}_{1}^{(Q'')}$, ${\cal C}_2^{(Q'')}$, $\ldots,$ ${\cal C}_{n^*}^{(Q'')}$, for a constant (w.r.t. $i$) positive-integer value $n^* \geq 1$.  We will abuse the notation somewhat, by using, in the context of a fixed $i \in {\mathbb N}_+$, the expression ``the set ${\cal C}^{(Q'')}$'' (where ${\cal C}^{(Q'')}$ is one of the ${\cal C}_{1}^{(Q'')}$, ${\cal C}_2^{(Q'')}$, $\ldots,$ ${\cal C}_{n^*}^{(Q'')}$) to refer to the contents of the set ${\cal C}^{(Q'')}$ w.r.t. the set ${\mathbb A}^{(i)}_{Q''}$ for the fixed $i$. 

\subsubsection{Monomials Corresponding to the Monomial \\ Classes for $Q''$ and $D_{\bar{N}^{(i)}}(Q)$: \\ Useful Properties}
\label{multiplicity-monomial-prep-sec} 

In this subsection we set the stage for the introduction, in Section~\ref{multiplicity-monomial-sec}, of ``multiplicity monomials'' for the monomial classes ${\cal C}_1^{(Q'')}$, $\ldots,$ ${\cal C}_{n^*}^{(Q'')}$. 

Assuming a fixed $i \in {\mathbb N}_+$, we recall the mapping $\nu_0$ and the sets $S_j^{(i)}$, which (sets) were introduced (for $1 \leq j \leq m$) for the case $m \geq 1$, see Section~\ref{nu-sec}. We use these constructs to define the domain, on the database $D_{\bar{N}^{(i)}}(Q)$, of each term of the query $Q$ that (term) is not a copy variable of $Q$. 

\begin{definition}{Domain of term of $Q$ in $D_{\bar{N}^{(i)}}(Q)$}
\label{dom-def} 
Let $i \in {\mathbb N}_+$. For each term $s$ of query $Q$ such that $s$ is not a copy variable of $Q$, the domain $Dom^{(i)}_Q(s)$ of the term in the database  $D_{\bar{N}^{(i)}}(Q)$ is defined as follows:  
\begin{itemize} 
	\item If $s$ is a constant, or a head variable of $Q$, or a set variable of $Q$, then $Dom^{(i)}_Q(s) := \{ \nu_0(s) \}$. 
	\item In case $m \geq 1$, for each variable $Y_j$ of the query $Q$, for $1 \leq j \leq m$, $Dom^{(i)}_Q(Y_j) := S_j^{(i)}$. 
\end{itemize} 
\end{definition} 

\begin{proposition} 
\label{dom-properties-prop} 
Let $i \in {\mathbb N}_+$. (i) For each (if any) pair $(s,t)$ of terms of query $Q$ such that $s \neq t$ and such that neither $s$ nor $t$ is a copy variable of $Q$, $Dom^{(i)}_Q(s)$ $\bigcap$ $Dom^{(i)}_Q(t)$ $=$ $\emptyset$. (ii) For each  term $s$ of query $Q$ such that $s$ is not a multiset variable of $Q$, $|Dom^{(i)}_Q(s)|$ $=$ $1$. (iii) In case $m \geq 1$, for each $j \in \{ 1,\ldots,m \}$ we have that $|Dom^{(i)}_Q(Y_j)|$ $=$ $N_j^{(i)}$ (in the vector $\bar{N}^{(i)}$). 
\end{proposition}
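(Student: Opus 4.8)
\textbf{Proof proposal for Proposition~\ref{dom-properties-prop}.}

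The plan is to verify each of the three items by direct inspection of Definition~\ref{dom-def}, the construction of the sets $S_j^{(i)}$ and $S_0$ in Section~\ref{nu-sec}, and the basic properties of the mapping $\nu_0$. Item (ii) and item (iii) are essentially immediate once the definitions are unwound; the only item requiring a short argument is item (i), which hinges on the pairwise-disjointness conditions that were imposed when the sets $S_0$, $P$, and $S_1^{(i)},\ldots,S_m^{(i)}$ were introduced. I do not expect any step to be a genuine obstacle here — this is a bookkeeping lemma — but if anything needs care it is making sure the disjointness covers \emph{all} combinations of term-types (constant vs.\ head/set variable vs.\ multiset noncopy variable).

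First I would prove item (ii). Fix $i \in {\mathbb N}_+$ and let $s$ be a term of $Q$ that is not a multiset variable of $Q$. Since $Q$ is assumed to be in its regularized form and $s$ is not a copy variable (a multiset variable) and not a multiset noncopy variable, $s$ must be either a constant, a head variable, or a nonhead set variable of $Q$. In all three cases Definition~\ref{dom-def} gives $Dom^{(i)}_Q(s) = \{\nu_0(s)\}$, a singleton, so $|Dom^{(i)}_Q(s)| = 1$. Next, item (iii): in the case $m \geq 1$, for each $j \in \{1,\ldots,m\}$, Definition~\ref{dom-def} sets $Dom^{(i)}_Q(Y_j) := S_j^{(i)}$, and by the construction in Section~\ref{nu-sec} the set $S_j^{(i)}$ consists of exactly $N_j^{(i)}$ distinct constants; hence $|Dom^{(i)}_Q(Y_j)| = N_j^{(i)}$.

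Finally I would prove item (i), the disjointness claim. Let $(s,t)$ be a pair of terms of $Q$ with $s \neq t$ and neither $s$ nor $t$ a copy variable of $Q$. By Definition~\ref{dom-def}, $Dom^{(i)}_Q(s)$ is either $\{\nu_0(s)\}$ (if $s$ is a constant, head variable, or set variable) or $S_j^{(i)}$ (if $s = Y_j$ is a multiset noncopy variable), and similarly for $t$. I would split into three cases. If both domains are singletons $\{\nu_0(s)\}$ and $\{\nu_0(t)\}$, they are disjoint because $\nu_0$ is bijective on its domain (head variables, set variables, and constants are all mapped to distinct values, and constants are fixed) and $s \neq t$ implies $\nu_0(s) \neq \nu_0(t)$. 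If one domain is a singleton $\{\nu_0(s)\}$ and the other is $S_j^{(i)}$, disjointness follows from $\nu_0(s) \in P \cup S_0$ together with the construction constraints $S_j^{(i)} \cap P = \emptyset$ and $S_j^{(i)} \cap S_0 = \emptyset$. If both domains are sets $S_j^{(i)}$ and $S_k^{(i)}$ with $j \neq k$ (forced since $s \neq t$ means $Y_j \neq Y_k$), disjointness is exactly the pairwise-disjointness of the family $\{S_1^{(i)},\ldots,S_m^{(i)}\}$ guaranteed in Section~\ref{nu-sec}. This exhausts all cases, completing item (i) and the proposition.
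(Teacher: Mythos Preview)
Your proposal is correct. The paper does not provide an explicit proof of this proposition; it states the result and moves on, treating all three items as immediate from Definition~\ref{dom-def} and the construction of $\nu_0$, $S_0$, $P$, and the sets $S_j^{(i)}$ in Section~\ref{nu-sec}. Your case analysis is exactly the unpacking of those definitions that the paper leaves to the reader, and every step matches the disjointness and cardinality constraints imposed there.
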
 

For the next results, we introduce some notation. Given a query $Q''$, an $i \in {\mathbb N}_+$, and a nonempty monomial class ${\cal C}^{(Q'')}$ of associations in the set ${\mathbb A}^{(i)}_{Q''}$ for the query $Q''$ and for the database $D_{\bar{N}^{(i)}}(Q)$, denote by $\Gamma^{(i)}[{\cal C}^{(Q'')}]$ the set of all  tuples contributed to the set $\Gamma^{(t^*_Q)}_{\bar{S}}(Q'',D_{\bar{N}^{(i)}}(Q))$ by all the valid assignment mappings for all the elements of the class  ${\cal C}^{(Q'')}$. 
The following result is immediate from the definitions. 

\begin{proposition}
\label{gamma-i-union-prop} 
Let $i \in {\mathbb N}_+$. Then 
\begin{itemize} 
	\item[(i)] For each $j \in \{ 1,\ldots,n^* \}$, $\Gamma^{(i)}[{\cal C}_j^{(Q'')}]$ $\neq$ $\emptyset$.  
	\item[(ii)] The set $\Gamma^{(t^*_Q)}_{\bar{S}}$ $(Q'',D_{\bar{N}^{(i)}}(Q))$ is the union $\bigcup_{j=1}^{n^*}$ $\Gamma^{(i)}[{\cal C}_j^{(Q'')}]$. 
\end{itemize} 
\end{proposition}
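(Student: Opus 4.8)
The plan is to unwind the two definitions and to invoke Proposition~\ref{captures-prop} for the one nontrivial inclusion. Recall from Section~\ref{comb-sem-sec} that $\Gamma^{(t^*_Q)}_{\bar S}(Q'',D_{\bar{N}^{(i)}}(Q))$ is the projection on the vector $\bar{S}(Q'')$ of the set $\Gamma^{(t^*_Q)}(Q'',D_{\bar{N}^{(i)}}(Q))$ of all satisfying assignments of $Q''$ into $D_{\bar{N}^{(i)}}(Q)$ whose restriction to the head vector of $Q''$ equals the fixed tuple $t^*_Q$; by the definitions in Section~\ref{valid-map-sec}, these are precisely the $t^*_Q$-valid assignment mappings for $Q''$ and $D_{\bar{N}^{(i)}}(Q)$.

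For part (i), fix $j$. The monomial classes ${\cal C}_1^{(Q'')},\ldots,{\cal C}_{n^*}^{(Q'')}$ are, by construction, the nonempty equivalence classes of ${\mathbb A}^{(i)}_{Q''}$ under the ``same atom-signature'' relation, so ${\cal C}_j^{(Q'')}$ contains some association ${\cal A}\in{\mathbb A}^{(i)}_{Q''}$. By the definition of ${\mathbb A}^{(i)}_{Q''}$ there is at least one $t^*_Q$-valid assignment mapping $\theta$ for $Q''$, $D_{\bar{N}^{(i)}}(Q)$, and ${\cal A}$; restricting $\theta$ to $\bar{S}(Q'')$ yields a tuple of $\Gamma^{(t^*_Q)}_{\bar S}(Q'',D_{\bar{N}^{(i)}}(Q))$ that, being contributed by a valid assignment mapping for an element of ${\cal C}_j^{(Q'')}$, lies in $\Gamma^{(i)}[{\cal C}_j^{(Q'')}]$. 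Hence $\Gamma^{(i)}[{\cal C}_j^{(Q'')}]\neq\emptyset$.

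For part (ii), the inclusion $\bigcup_{j=1}^{n^*}\Gamma^{(i)}[{\cal C}_j^{(Q'')}]\subseteq\Gamma^{(t^*_Q)}_{\bar S}(Q'',D_{\bar{N}^{(i)}}(Q))$ holds directly from the definition of $\Gamma^{(i)}[{\cal C}_j^{(Q'')}]$ (here one uses that every valid assignment mapping for an element of ${\cal C}_j^{(Q'')}\subseteq{\mathbb A}^{(i)}_{Q''}$ is $t^*_Q$-valid, which is Proposition~\ref{captures-prop}(ii), so that such a mapping genuinely contributes a tuple to $\Gamma^{(t^*_Q)}_{\bar S}$). For the reverse inclusion, take an arbitrary $\bar t\in\Gamma^{(t^*_Q)}_{\bar S}(Q'',D_{\bar{N}^{(i)}}(Q))$; it is the restriction to $\bar{S}(Q'')$ of some $t^*_Q$-valid assignment mapping $\theta$ for $Q''$ and $D_{\bar{N}^{(i)}}(Q)$. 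By Proposition~\ref{captures-prop}(i), ${\mathbb A}^{(i)}_{Q''}$ captures $\theta$; by Proposition~\ref{diff-assoc-prop}, the association generating $\theta$ is unique, call it ${\cal A}$, and ${\cal A}\in{\mathbb A}^{(i)}_{Q''}$, so ${\cal A}$ lies in exactly one monomial class ${\cal C}_j^{(Q'')}$. Then $\bar t\in\Gamma^{(i)}[{\cal C}_j^{(Q'')}]$, and taking the union over $j$ gives the claimed equality.

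There is no real combinatorial obstacle here; the statement is, as the paper says, immediate from the definitions. The only point I would double-check carefully is that the partition into monomial classes exhausts ${\mathbb A}^{(i)}_{Q''}$ (nothing is dropped) and that, conversely, everything captured by ${\mathbb A}^{(i)}_{Q''}$ is $t^*_Q$-valid --- the first is by the construction of the monomial classes, the second by Proposition~\ref{captures-prop}(ii).
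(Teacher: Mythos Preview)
Your proof is correct and aligns with the paper's own treatment: the paper simply states that the result is ``immediate from the definitions,'' and what you have written is exactly the unwinding of those definitions together with the appeal to Proposition~\ref{captures-prop}. One minor remark: invoking Proposition~\ref{diff-assoc-prop} for uniqueness of the association is not strictly needed for the reverse inclusion in (ii), since mere existence of some ${\cal A}\in{\mathbb A}^{(i)}_{Q''}$ generating $\theta$ already suffices to place $\bar t$ in the union.
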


We introduce some further notation: In case $m \geq 1$, for a monomial class ${\cal C}^{(Q'')}$ and  for some $j \in \{ 1,\ldots,m \}$, we denote by $\Gamma^{(i)}_{(Y''_j)}[{\cal C}^{(Q'')}]$ the set projection of the set $\Gamma^{(i)}[{\cal C}^{(Q'')}]$ on the multiset noncopy variable $Y''_j$ of the query $Q''$. 

\begin{proposition} 
\label{noncopy-proj-properties-prop}
Suppose $m \geq 1$. Let $\Xi$ be a $G$-ary vector  of (not necessarily distinct) elements of the set ${\cal S}_{C(Q)}$, such that  the monomial class ${\cal C}^{(Q'')}$ with atom-signature $\Xi$ is not empty. 
Then for each $i \in {\mathbb N}_+$ the following holds: 
\begin{itemize}
	\item[(i)] For each $j \in \{ 1,\ldots,m \}$: Suppose $Z$ is the $j$th component of the noncopy-signature vector of the monomial class ${\cal C}^{(Q'')}$. Then the set $\Gamma^{(i)}_{(Y''_j)}[{\cal C}^{(Q'')}]$: 
	\begin{itemize}
		\item[(i-a)] has all the elements of $Dom_Q^{(i)}(Z)$, and 
		\item[(i-b)] has no values from the set $(adom(D_{\bar{N}^{(i)}}(Q))$ $-$ $Dom_Q^{(i)}(Z))$. 
	\end{itemize}
	\item[(ii)] The set projection of the set $\Gamma^{(i)}[{\cal C}^{(Q'')}]$ on all the multiset noncopy variables $Y''_1,$ $Y''_2,$ $\ldots,$ $Y''_m$ of the query $Q''$ is the Cartesian product of the sets  $\Gamma^{(i)}_{(Y''_1)}$ $[{\cal C}^{(Q'')}]$, $\Gamma^{(i)}_{(Y''_2)}$ $[{\cal C}^{(Q'')}]$, $\ldots,$ $\Gamma^{(i)}_{(Y''_m)}$ $[{\cal C}^{(Q'')}]$.  
\end{itemize}
\end{proposition}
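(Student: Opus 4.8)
The statement to prove is Proposition~\ref{noncopy-proj-properties-prop}, which asserts (given $m \geq 1$) that for a nonempty monomial class ${\cal C}^{(Q'')}$ with atom-signature $\Xi$ and for each $i \in {\mathbb N}_+$: (i) the projection of $\Gamma^{(i)}[{\cal C}^{(Q'')}]$ onto the $j$th multiset noncopy variable $Y''_j$ of $Q''$ is exactly the domain $Dom_Q^{(i)}(Z)$ of the $j$th component $Z$ of the noncopy-signature of the class; and (ii) the projection onto all of $Y''_1,\ldots,Y''_m$ is the full Cartesian product of these per-coordinate projections.

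\medskip

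\textbf{Approach.} The plan is to unfold the definitions of monomial class, noncopy-signature (Definition~\ref{noncopy-sig-def}), and the domain $Dom_Q^{(i)}$ (Definition~\ref{dom-def}), and then lean on the structural facts about the databases $D_{\bar{N}^{(i)}}(Q)$ established in Section~\ref{obs-one-sec} — principally Proposition~\ref{all-variety-prop}, which says that within a single ground atom of $D_{\bar{N}^{(i)}}(Q)$ one may independently replace each occurrence of a value from $S^{(i)}_l$ by any other value of $S^{(i)}_l$ and stay inside the database with the same $\nu^{(i)}_Q$-image. First I would fix $i$ and an association ${\cal A} \in {\cal C}^{(Q'')}$ together with a $t^*_Q$-valid assignment $\theta$ that it generates; this exists by definition of ${\mathbb A}^{(i)}_{Q''}$ and of monomial classes. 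For part (i), the ``$\subseteq$'' inclusion (i-b): any valid mapping for any member of the class sends $Y''_j$ to a value $v$; by Definition~\ref{noncopy-sig-def}, $\nu^{(i)}_Q(v)$ equals the $j$th noncopy-signature component $Z$ of the class — this is exactly what it means for two associations in the class to share a noncopy-signature, via Proposition~\ref{same-sigs-prop}(ii-b) — and $\nu^{(i)}_Q(v) = Z$ forces $v \in Dom_Q^{(i)}(Z)$ by the case analysis in Definition~\ref{dom-def} together with the definition of $\nu^{(i)}_Q$ in Section~\ref{nu-sec} (constants and head/set variables are pinned by $\nu_0^{-1}$, while each $S^{(i)}_k$ is mapped to $Y_k$). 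For the ``$\supseteq$'' inclusion (i-a): starting from $\theta$, I want to realize every value $v' \in Dom_Q^{(i)}(Z)$ as $\theta'(Y''_j)$ for some valid $\theta'$ in the class. Here I apply Proposition~\ref{all-variety-prop}(ii): modify $\theta$ by rerouting the occurrences of $\theta(Y''_j)$ to $v'$ (and symmetrically choose a consistent rerouting across all subgoals where $Y''_j$ appears), obtaining a new candidate assignment whose image ground atoms are still in $D_{\bar{N}^{(i)}}(Q)$ with unchanged $\nu^{(i)}_Q$-images, hence the new association has the same atom-signature (so lies in the same monomial class ${\cal C}^{(Q'')}$ by Proposition~\ref{same-monomial-classes-prop}) and the new mapping is still $t^*_Q$-valid since the head variables of $Q''$ are untouched (they map outside $\bigcup_k S^{(i)}_k$).

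\medskip

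\textbf{The main obstacle.} The delicate point — and where I expect to spend the most care — is the Cartesian-product claim (ii). It is not automatic that one can independently prescribe the images of $Y''_1,\ldots,Y''_m$: a priori, two distinct multiset noncopy variables of $Q''$ could be constrained to move together (e.g.\ if they occur in the same subgoal of $Q''$ whose image ground atom has two occurrences of values from the \emph{same} set $S^{(i)}_l$). The resolution is that the noncopy-signature components for such coupled variables would then coincide — both equal to the same $Y_l$ — and Proposition~\ref{all-variety-prop}(ii) still permits independent rerouting of each \emph{occurrence} within each ground atom (its hypothesis only fixes the set $S^{(i)}_l$ from which the replacement is drawn, not a single value), so the images can in fact be chosen independently across the $m$ coordinates. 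I would argue this by performing the reroutings for all $m$ variables simultaneously: given any target tuple $(v'_1,\ldots,v'_m) \in \prod_j Dom_Q^{(i)}(Z_j)$, build the assignment $\theta'$ that agrees with $\theta$ on head and set variables and on copy variables, sends each $Y''_j$ to $v'_j$, and correspondingly rewrites the image of each subgoal of $Q''$; Proposition~\ref{all-variety-prop}(ii) applied atom-by-atom (the rewrite of each ground atom only touches coordinates lying in the various $S^{(i)}_l$, and each such coordinate is moved within its own $S^{(i)}_l$) guarantees every rewritten image atom is in $D_{\bar{N}^{(i)}}(Q)$ with the same $\nu^{(i)}_Q$-image. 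Hence $\theta'$ is a valid $t^*_Q$-assignment whose association has the same atom-signature as ${\cal A}$, so it belongs to ${\cal C}^{(Q'')}$, and it contributes $(v'_1,\ldots,v'_m)$ to the projection of $\Gamma^{(i)}[{\cal C}^{(Q'')}]$. Combined with the ``$\subseteq$'' direction — which for the joint projection is immediate since each coordinate already lies in the respective $Dom_Q^{(i)}(Z_j)$ by part (i) — this gives exactly the Cartesian product, completing the proof.
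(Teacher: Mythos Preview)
The paper does not give an explicit proof of this proposition; it is presented as following from the database construction in Section~\ref{db-constr-sec}, with Proposition~\ref{all-variety-prop} as the operative structural fact. Your approach follows the same route, and your arguments for (i-a) and (i-b) are correct: the ``$\subseteq$'' direction is exactly the definition of the noncopy-signature (Definition~\ref{noncopy-sig-def}) plus the case split in Definition~\ref{dom-def}, and the ``$\supseteq$'' direction is a value-wise reroute via the $\mu$ of Proposition~\ref{all-variety-prop}(ii), composed with the starting assignment.

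For (ii), however, there is a genuine gap. You correctly identify the obstacle --- two distinct $Y''_j, Y''_k$ appearing in the same subgoal of $Q''$ whose image ground atom draws both values from the same $S^{(i)}_l$ --- but your resolution misreads Proposition~\ref{all-variety-prop}(ii): the $\mu$ there is a mapping on \emph{values} (``elements of the vector~$\bar W$''), not on positions. By part (i) of that same proposition, a single ground atom $h$ carries at most one \emph{distinct} value from $S^{(i)}_l$; if both relevant positions of $h$ hold the same value $v$, any value-map $\mu$ sends both occurrences to $\mu(v)$. And the database genuinely lacks the off-diagonal atoms: if the generating subgoal in $\mathcal S_{C(Q)}$ is, say, $p(\ldots,Y_l,\ldots,Y_l,\ldots)$, then the main construction cycle of Section~\ref{main-cycle-sec} only ever instantiates both $Y_l$-positions with the \emph{same} element of $S^{(i)}_l$. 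So when $Q$ has such a repeated-$Y_l$ subgoal and $Q''$ has $p(\ldots,Y''_j,\ldots,Y''_k,\ldots)$ mapping to it, the joint projection onto $(Y''_j,Y''_k)$ within the class is the diagonal of $S^{(i)}_l \times S^{(i)}_l$, and the Cartesian-product claim fails. Your ``independent rerouting of each occurrence'' step is therefore not supported by Proposition~\ref{all-variety-prop}(ii), and no alternative argument fills the hole in this generality. (In the paper's downstream uses --- Propositions~\ref{q-has-wave-prop} and~\ref{q-same-scale-mpng-prop} --- the only monomial classes that matter have noncopy-signature a \emph{permutation} of $[Y_1,\ldots,Y_m]$, where all components are distinct and this obstruction does not arise; so your argument does suffice for the cases the paper actually needs.)
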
 

Now suppose $r \geq 1$. In this case, we denote by $\Gamma^{(i)}_{(Y''_j)}[{\cal C}^{(Q'')}]$ the set projection of the set $\Gamma^{(i)}[{\cal C}^{(Q'')}]$ on the copy variable $Y''_j$ of the query $Q''$, for some $j \in \{ m+1,\ldots,m+r \}$. 

\begin{proposition} 
\label{copy-proj-properties-prop} 
Suppose $r \geq 1$. Let $\Xi$ be a $G$-ary vector  of (not necessarily distinct) elements of the set ${\cal S}_{C(Q)}$, such that  the monomial class ${\cal C}^{(Q'')}$ with atom-signature $\Xi$ is not empty. 
Then for each $i \in {\mathbb N}_+$ the following holds: 
\begin{itemize}
	\item[(i)] For each $j \in \{ 1,\ldots,r \}$: Suppose $Z$ is the $j$th component of the copy-signature vector of the monomial class ${\cal C}^{(Q'')}$. Then the set $\Gamma^{(i)}_{(Y''_{m+j})}[{\cal C}^{(Q'')}]$ is the set $\{ 1,\ldots,Z^{(i)} \}$, where (a) $Z^{(i)}$ is $1$ in case $Z = 1$, and (b) $Z^{(i)}$ is $N^{(i)}_k$ in case $Z = N_k$ for some $k \in \{ {m+1},\ldots,{m+w} \}$. 
	\item[(ii)] The set projection of the set $\Gamma^{(i)}[{\cal C}^{(Q'')}]$ on all the copy variables $Y''_{m+1},$ $Y''_{m+2},$ $\ldots,$ $Y''_{m+r}$ of the query $Q''$ is the Cartesian product of the sets  $\Gamma^{(i)}_{(Y''_{m+1})}[{\cal C}^{(Q'')}]$, $\Gamma^{(i)}_{(Y''_{m+2})}[{\cal C}^{(Q'')}]$, $\ldots,$ $\Gamma^{(i)}_{(Y''_{m+r})}[{\cal C}^{(Q'')}]$.  
\end{itemize}
\end{proposition}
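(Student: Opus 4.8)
The plan is to prove part~(i) by a two-sided inclusion and then derive part~(ii) in a couple of lines from part~(i) together with Proposition~\ref{fixed-unity-prop}(iii), which is where the real combinatorial work lives. Fix $i \in {\mathbb N}_+$ once and for all. By Proposition~\ref{same-monomial-classes-prop} the monomial class ${\cal C}^{(Q'')}$ with atom-signature $\Xi$ is non-empty for this $i$ as well, and by Proposition~\ref{captures-prop} (the remark following the definition of monomial classes) every valid assignment mapping of every element of ${\cal C}^{(Q'')}$ contributes a tuple to $\Gamma^{(t^*_Q)}_{\bar{S}}(Q'',D_{\bar{N}^{(i)}}(Q))$, so $\Gamma^{({\cal A})}\subseteq\Gamma^{(i)}[{\cal C}^{(Q'')}]$ for every ${\cal A}\in{\cal C}^{(Q'')}$.

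The first and key step is to pin down, for a fixed $j \in \{1,\ldots,r\}$, a single number $Z^{(i)}$ that is the copy number of whatever ground atom $g_j$ is mapped to, uniformly over the whole class. By our ordering of the subgoals of $Q''$, $g_j$ is copy-sensitive with copy variable $Y''_{m+j}$, and by Definition~\ref{atom-sig-def} the ground atom $g_j$ is associated to in any ${\cal A}\in{\cal C}^{(Q'')}$ has $\nu^{(i)}_Q$-image equal to the $j$-th component $\Xi_j$ of the common atom-signature. By the main construction cycle of Section~\ref{main-cycle-sec}, the copy number of a ground atom of $D_{\bar{N}^{(i)}}(Q)$ depends only on its $\nu^{(i)}_Q$-image (which, by Proposition~\ref{Observation_one_one}(ii), is a well-defined element of ${\cal S}_{C(Q)}$): an atom whose image is a relational atom gets copy number $1$, while an atom whose image is a copy-sensitive atom with copy variable $Y_k$, $k\in\{m{+}1,\ldots,m{+}w\}$, gets copy number $\nu^{copy}(Y_k)=N^{(i)}_k$ — this is exactly the $\bar{N}^{(i)}$-value of $\nu^{copy}_Q(Y_k)=N_k$, and it is independent of the tuple $t$ used in the cycle. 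Comparing this with Definition~\ref{copy-sig-def} and the definition of $\nu^{copy-sig}$, this common copy number is precisely $Z^{(i)}$, where $Z$ is the $j$-th component of the copy-signature of ${\cal C}^{(Q'')}$.

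For part~(i), the inclusion ``$\subseteq$'' is then immediate: any $v\in\Gamma^{(i)}_{(Y''_{m+j})}[{\cal C}^{(Q'')}]$ equals $\theta'(Y''_{m+j})$ for a valid assignment mapping $\theta'$ of some ${\cal A}\in{\cal C}^{(Q'')}$, and by Definition~\ref{cand-mpng-def} the value assigned to the copy variable of the copy-sensitive subgoal $g_j$ lies between $1$ and the copy number of the atom $g_j$ is associated to, i.e.\ $1\le v\le Z^{(i)}$. For ``$\supseteq$'', pick one association ${\cal A}\in{\cal C}^{(Q'')}$ and apply Proposition~\ref{fixed-unity-prop}(iii): $\Gamma^{({\cal A})}_c$ already contains every tuple $(n_1,\ldots,n_r)$ with $1\le n_k\le V^{(i)}_{jk}$, and the $j$-th of these bounds is exactly $Z^{(i)}$; projecting on coordinate $j$ and using $\Gamma^{({\cal A})}\subseteq\Gamma^{(i)}[{\cal C}^{(Q'')}]$ yields $\{1,\ldots,Z^{(i)}\}\subseteq\Gamma^{(i)}_{(Y''_{m+j})}[{\cal C}^{(Q'')}]$. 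For part~(ii), ``$\subseteq$'' is the trivial fact that a projection onto several coordinates is contained in the product of the single-coordinate projections; for ``$\supseteq$'', given $(v_1,\ldots,v_r)$ with $v_k\in\{1,\ldots,Z^{(i)}_k\}$ (part~(i)), Proposition~\ref{fixed-unity-prop}(iii) applied to a single ${\cal A}\in{\cal C}^{(Q'')}$ shows $(v_1,\ldots,v_r)\in\Gamma^{({\cal A})}_c$, hence it lies in the projection of $\Gamma^{(i)}[{\cal C}^{(Q'')}]$ on the copy columns, since each bound $V^{(i)}_{jk}$ there equals $Z^{(i)}_k$.

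I expect the only genuine obstacle to be the bookkeeping in the second paragraph: threading the composite $\nu^{copy-sig}\circ\psi^{gen(Q)}_{\bar{N}^{(i)}}$ through the several layers of notation ($\nu^{copy}$ vs.\ $\nu^{copy}_Q$, representative elements of ${\cal S}_{C(Q)}$, the $t$-independence of copy numbers in the main cycle) so that ``copy number of the atom associated with $g_j$'' is identified with $Z^{(i)}$ uniformly across the monomial class. Once that identification is made, both parts are short consequences of Proposition~\ref{fixed-unity-prop}(iii), which does the substantive work of showing that a single association already realizes the entire box $\{1,\ldots,Z^{(i)}_1\}\times\cdots\times\{1,\ldots,Z^{(i)}_r\}$ of copy-value combinations.
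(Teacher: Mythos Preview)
Your proposal is correct and takes the same approach as the paper: the paper's proof is the one-line remark ``(The proof is immediate from Proposition~\ref{fixed-unity-prop}, once we recall that all associations in a monomial class share the same copy-signature),'' and your argument is a faithful unpacking of exactly that, with the bookkeeping about $Z^{(i)}$ made explicit. The identification you worry about in your last paragraph---that the copy number of the atom associated with $g_j$ depends only on its $\nu^{(i)}_Q$-image and hence equals $Z^{(i)}$ uniformly over the class---is precisely what ``share the same copy-signature'' is encoding, so your detailed version and the paper's terse one are the same proof.
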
 

(The proof is immediate from Proposition~\ref{fixed-unity-prop}, once we recall that all associations in a monomial class share the same copy-signature.) 

For each $i \in {\mathbb N}_+$, we now characterize the set $\Gamma^{(i)}[{\cal C}^{(Q'')}]$ for each nonempty monomial class ${\cal C}^{(Q'')}$ for the query $Q''$ and family of databases $\{ D_{\bar{N}^{(i)}}(Q) \}$, for all combinations of values of $m \geq 0$ and of $r \geq 0$. 

\begin{proposition} 
\label{noncopy-proj-more-properties-prop}
Let $\Xi$ be a $G$-ary vector  of (not necessarily distinct) elements of the set ${\cal S}_{C(Q)}$, such that  the monomial class ${\cal C}^{(Q'')}$ with atom-signature $\Xi$ is not empty. 
Then for each $i \in {\mathbb N}_+$ the following holds: 
\begin{itemize} 
	\item In case $m \geq 1$ and $r \geq 1$, the set $\Gamma^{(i)}[{\cal C}^{(Q'')}]$ is the Cartesian product of two sets: 
	\begin{itemize}
		\item the set projection of $\Gamma^{(i)}[{\cal C}^{(Q'')}]$ on all the multiset noncopy variables $Y''_1,$ $\ldots,$ $Y''_m$ of the query $Q''$, and 
		\item the set projection of $\Gamma^{(i)}[{\cal C}^{(Q'')}]$ on all the copy variables $Y''_{m+1},$ $\ldots,$ $Y''_{m+r}$ of the query $Q''$. 
	\end{itemize} 
	\item In case $r = 0$, $\Gamma^{(i)}[{\cal C}^{(Q'')}]$ is its own set projection on all the multiset noncopy variables of the query $Q''$. 
	\item In case $m = 0$,  $\Gamma^{(i)}[{\cal C}^{(Q'')}]$ is its own set projection on all the copy variables of the query $Q''$. 
	\end{itemize} 
\end{proposition}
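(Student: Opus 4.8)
Looking at this, I need to prove Proposition~\ref{noncopy-proj-more-properties-prop}, which characterizes the set $\Gamma^{(i)}[{\cal C}^{(Q'')}]$ as a Cartesian product (or its own projection) depending on whether $m$ and $r$ are positive or zero.

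\textbf{The plan} is to assemble this result directly from the three preceding propositions, which already do most of the work: Proposition~\ref{noncopy-proj-properties-prop} (which handles the multiset-noncopy columns when $m \geq 1$ and shows that $\Gamma^{(i)}[{\cal C}^{(Q'')}]$ projected onto $Y''_1,\ldots,Y''_m$ is the Cartesian product of the individual column-projections), Proposition~\ref{copy-proj-properties-prop} (which does the analogous thing for the copy columns $Y''_{m+1},\ldots,Y''_{m+r}$ when $r \geq 1$), and Proposition~\ref{fixed-unity-prop}, which shows that within a single association $\cal A$ the copy-variable behavior is completely independent of the noncopy-variable behavior (the set $\Gamma^{({\cal A})}_c$ is a full ``box'' $\prod_k \{1,\ldots,V^{(i)}_{jk}\}$ regardless of the noncopy assignment). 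The only genuinely new content is the claim that the noncopy-part and the copy-part of $\Gamma^{(i)}[{\cal C}^{(Q'')}]$ combine as an unrestricted Cartesian product when \emph{both} $m\ge 1$ and $r\ge 1$.

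\textbf{The key steps}, in order: First, handle the degenerate cases. If $r = 0$, then $\bar{S}(Q'')$ has no copy-variable columns and $\Gamma^{(i)}[{\cal C}^{(Q'')}]$ literally \emph{is} its projection onto the noncopy columns (there are no other columns to project away), giving the second bullet; symmetrically, if $m = 0$ we get the third bullet. Second, for the main case $m \geq 1, r \geq 1$: take an arbitrary association ${\cal A} \in {\cal C}^{(Q'')}$ and recall from Proposition~\ref{fixed-unity-prop}(iii) that the set of tuples $\Gamma^{({\cal A})}$ it contributes, projected to the copy columns, is exactly $\prod_{k=1}^r \{1,\ldots,V^{(i)}_{jk}\}$ — and this box depends only on the copy-signature $\Phi_c[{\cal A}]$, which is constant across the class. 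Meanwhile the noncopy columns of $\Gamma^{({\cal A})}$ are fixed (the single noncopy-signature tuple for $\cal A$, resolved through $Dom_Q^{(i)}$). So each association contributes a ``vertical line'' (a single noncopy point) $\times$ (the full copy box). Third, union over all associations in the class: as $\cal A$ ranges over ${\cal C}^{(Q'')}$, the noncopy points range (by Proposition~\ref{noncopy-proj-properties-prop}(ii)) over the full Cartesian product $\prod_{j=1}^m \Gamma^{(i)}_{(Y''_j)}[{\cal C}^{(Q'')}]$, while the copy box is the \emph{same} box every time. Hence the union is exactly (noncopy product) $\times$ (copy product), which is the first bullet. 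I would want to double-check, using Proposition~\ref{same-sigs-prop}(ii-b) and the definition of monomial class, that every noncopy-point in the product is in fact realized by \emph{some} association in the class (not just that each association realizes one point) — this surjectivity onto the noncopy product is what Proposition~\ref{noncopy-proj-properties-prop}(ii) supplies.

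\textbf{The main obstacle} I anticipate is the ``rectangularity'' argument in the $m,r \geq 1$ case: one must argue that for \emph{each} noncopy point $p$ in the product and \emph{each} copy point $c$ in the copy box, the combined tuple $(p,c)$ actually appears in $\Gamma^{(i)}[{\cal C}^{(Q'')}]$ — i.e., the noncopy choice and the copy choice can be made independently. This follows because the association realizing noncopy point $p$ (via Proposition~\ref{noncopy-proj-properties-prop}) has, by Proposition~\ref{fixed-unity-prop}(iii), the \emph{full} copy box among its contributed tuples, and that box is the class-wide copy box; but writing this cleanly requires being careful that the copy-signature really is a class invariant (Proposition~\ref{same-sigs-prop}(i)) and that the copy columns of $\bar{S}(Q'')$ and the noncopy columns of $\bar{S}(Q'')$ are disjoint, so the projection of $\Gamma^{(i)}[{\cal C}^{(Q'')}]$ onto $(\,Y''_1\ldots Y''_m, Y''_{m+1}\ldots Y''_{m+r})$ is the whole of $\Gamma^{(i)}[{\cal C}^{(Q'')}]$ (since these are all the nonset variables in $\bar{S}(Q'')$, head variables being pinned to $t^*_Q$). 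Everything else is bookkeeping over the three cited propositions.
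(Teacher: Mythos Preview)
Your proposal is correct and follows essentially the same approach as the paper, which treats this proposition as immediate from Propositions~\ref{noncopy-proj-properties-prop} and~\ref{copy-proj-properties-prop} (with a pointer to the assumption $m+r\ge 1$ for the degenerate cases). You supply considerably more detail than the paper does---in particular, your explicit use of Proposition~\ref{fixed-unity-prop} to justify the ``rectangularity'' (that every noncopy point pairs with the full, class-invariant copy box) fills in an argument the paper leaves implicit.
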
 

(Recall from Section~\ref{basic-queries-sec} that we assume $m + r \geq 1$; thus in case $r = 0$ we have $m \geq 1$, and in case $m = 0$ we have $r \geq 1$. For a characterization of the set projection of $\Gamma^{(i)}[{\cal C}^{(Q'')}]$ on all the multiset noncopy variables $Y''_1,$ $\ldots,$ $Y''_m$ of the query $Q''$, in case $m \geq 1$, see Proposition~\ref{noncopy-proj-properties-prop}. For a characterization of the set projection of $\Gamma^{(i)}[{\cal C}^{(Q'')}]$ on all the copy variables $Y''_{m+1},$ $\ldots,$ $Y''_{m+r}$ of the query $Q''$, in case $r \geq 1$,   see Proposition~\ref{copy-proj-properties-prop}.)  

\subsubsection{Multiplicity Monomials for the Monomial \\ Classes ${\cal C}_1^{(Q'')},$ $\ldots,$ ${\cal C}_{n^*}^{(Q'')}$}
\label{multiplicity-monomial-sec} 


In this subsection, for each nonempty monomial class ${\cal C}^{(Q'')}$ for the query $Q''$ we construct an expression, 
such that for each $i \in {\mathbb N}_+$, this expression will return the number of distinct tuples contributed by the associations in ${\cal C}^{(Q'')}$ to the set $\Gamma_{\bar{S}}^{(t^*_Q)}(Q'',D_{\bar{N}^{(i)}}(Q))$. That is, we construct an expression that, for each $i \in {\mathbb N}_+$, will provide the cardinality of the set $\Gamma^{(i)}[{\cal C}^{(Q'')}]$. (See Section~\ref{multiplicity-monomial-prep-sec} for the notation  $\Gamma^{(i)}[{\cal C}^{(Q'')}]$.)  For each monomial class ${\cal C}^{(Q'')}$ $\in$ $\{ {\cal C}_1^{(Q'')}$, $\ldots,$ ${\cal C}_{n^*}^{(Q'')} \}$, we call the respective expression ``the multiplicity monomial of the monomial class ${\cal C}^{(Q'')}$.'' Each multiplicity monomial is 
a product of  (some powers of) the elements of the noncopy singature 
and of the copy signature of the corresponding monomial class. These multiplicity monomials, together with the copy-signatures and the noncopy-signatures of the monomial classes, are all that will be needed in Section~\ref{putting-together-f-sec} to construct the function ${\cal F}_{(Q)}^{(Q'')}$. 

We begin by introducing the necessary notation. For each term $s$ of the query $Q$ such that $s$ is not a copy variable of $Q$, by $DomLabel_Q(s)$ we denote (i) variable $N_j$ in case where $m \geq 1$ and where $s$ is (a multiset noncopy variable of $Q$, i.e.,) the variable $Y_j$ of $Q$ for some $1 \leq j \leq m$; and (ii) constant value $1$ in case $s$ is either a constant used in $Q$ or is one of the variables $X_1,\ldots,X_{l+u}$ of $Q$. 

Further, for Propositions~\ref{monomial-classes-i-sizes-prop} and \ref{monomial-classes-sizes-prop}, we use the following notation, for ease of reference to the elements of the noncopy signatures and of the copy signatures of the monomial classes. Let $\Xi$ be a $G$-ary vector  of (not necessarily distinct) elements of the set ${\cal S}_{C(Q)}$, such that  the monomial class ${\cal C}^{(Q'')}$ with atom-signature $\Xi$ is not empty. 

(1) Let ${\Phi}_n^{{\cal C}^{(Q'')}}$ be the noncopy-signature of the class ${\cal C}^{(Q'')}$. Then: 

\begin{itemize}
	\item In case $m \geq 1$, denote the elements of ${\Phi}_n^{{\cal C}^{(Q'')}}$, from left to right, as $Z_{1},Z_{2},\ldots,Z_{m}$. For all $j \in \{ 1,\ldots,m \}$, we have that $Z_{j}$ $\in$ $\{ Y_1,$ $\ldots,$ $Y_m,$ $X_1,$ $\ldots,$ $X_{l+u} \}$ $\bigcup$ $P$. 
	
	
	\item For an $i \in {\mathbb N}_+$, denote by $\Pi_{{\Phi}_n^{{\cal C}^{(Q'')}}}^{(i)}$ the value $1$ in case $m = 0$, and the product $\Pi_{j = 1}^m |Dom^{(i)}_Q(Z_{j})|$
	 in case $m \geq 1$.

	\item Finally, denote by $\Pi_{{\Phi}_n^{{\cal C}^{(Q'')}}}$ the value $1$ in case $m = 0$, and the product $\Pi_{j = 1}^m DomLabel_Q(Z_{j})$ in case $m \geq 1$. 
\end{itemize}

(2) Let ${\Phi}_c^{{\cal C}^{(Q'')}}$ be the copy-signature of the class ${\cal C}^{(Q'')}$. Then: 

\begin{itemize}
	\item In case $r \geq 1$, denote the elements of ${\Phi}_c^{{\cal C}^{(Q'')}}$, from left to right, as $W_{1},W_{2},\ldots,W_{r}$. For all $j \in \{ 1,\ldots,r \}$, we have that $W_{j}$ $\in$ $\{ 1,$ $N_{m+1},$ $\ldots,$ $N_{m+w} \}$. 

	\item For an $i \in {\mathbb N}_+$ and for each $j \in \{ 1,\ldots,r \}$  (still assuming $r \geq 1$), denote by $W_{j}^{(i)}$ the value of the variable $W_{j}$ in the vector $\bar{N}^{(i)}$, in case $W_{j} \neq 1$. (That is, whenever  $W_{j}$ $=$ $N_{m+l}$, for some $l \in \{ 1,\ldots,w \}$, then $W^{(i)}_{j}$ $=$ $N^{(i)}_{m+l}$.) If $W_{j} = 1$ then let $W_{j}^{(i)} := 1$. 
	
	\item For an $i \in {\mathbb N}_+$, denote by $\Pi_{{\Phi}_c^{{\cal C}^{(Q'')}}}^{(i)}$ the value $1$ in case $r = 0$, and the product $\Pi_{j = 1}^r W_{j}^{(i)}$ in case $r \geq 1$.   
	
	\item Finally, denote by $\Pi_{{\Phi}_c^{{\cal C}^{(Q'')}}}$ the value $1$ in case $r = 0$, and the product $\Pi_{j = 1}^r W_{j}$ in case $r \geq 1$.   
\end{itemize}

\begin{proposition} 
\label{monomial-classes-i-sizes-prop}
Let $\Xi$ be a $G$-ary vector  of (not necessarily distinct) elements of the set ${\cal S}_{C(Q)}$, such that  the monomial class ${\cal C}^{(Q'')}$ with atom-signature $\Xi$ is not empty. 
Let $i \in {\mathbb N}_+$. Then the cardinality of the set $\Gamma^{(i)}[{\cal C}^{(Q'')}]$ (that is, the number of distinct tuples contributed, to the set $\Gamma^{(t^*_Q)}_{\bar{S}}(Q'',D_{\bar{N}^{(i)}}(Q))$ for the query $Q''$ and for the database $D_{\bar{N}^{(i)}}(Q)$, by all the valid assignment mappings for all the elements of the class  ${\cal C}^{(Q'')}$) is exactly $\Pi_{{\Phi}_n^{{\cal C}^{(Q'')}}}^{(i)}$ $\times$ $\Pi_{{\Phi}_c^{{\cal C}^{(Q'')}}}^{(i)}$. 
\end{proposition}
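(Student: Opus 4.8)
The plan is to read off the claimed cardinality directly from the structural descriptions of $\Gamma^{(i)}[{\cal C}^{(Q'')}]$ already established in Propositions~\ref{noncopy-proj-more-properties-prop}, \ref{noncopy-proj-properties-prop}, and~\ref{copy-proj-properties-prop}, combined with the cardinality facts of Proposition~\ref{dom-properties-prop}. The whole argument is a short case analysis according to whether $m = |M_{noncopy}|$ and $r = |M_{copy}|$ are zero or positive (recall that $m + r \geq 1$ is assumed in Section~\ref{basic-queries-sec}).

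First I would treat the principal case $m \geq 1$ and $r \geq 1$. By Proposition~\ref{noncopy-proj-more-properties-prop}, $\Gamma^{(i)}[{\cal C}^{(Q'')}]$ is the Cartesian product of its set projection $\Pi_n$ on the multiset noncopy variables $Y''_1,\ldots,Y''_m$ of $Q''$ and its set projection $\Pi_c$ on the copy variables $Y''_{m+1},\ldots,Y''_{m+r}$ of $Q''$, so $|\Gamma^{(i)}[{\cal C}^{(Q'')}]| = |\Pi_n| \cdot |\Pi_c|$. Next, by Proposition~\ref{noncopy-proj-properties-prop}(ii), $\Pi_n$ is itself the Cartesian product, over $j \in \{1,\ldots,m\}$, of the one-column projections $\Gamma^{(i)}_{(Y''_j)}[{\cal C}^{(Q'')}]$, and by Proposition~\ref{noncopy-proj-properties-prop}(i) each such one-column projection is exactly $Dom^{(i)}_Q(Z_{j})$, where $Z_{j}$ is the $j$th component of the noncopy-signature ${\Phi}_n^{{\cal C}^{(Q'')}}$ of ${\cal C}^{(Q'')}$. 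Hence $|\Pi_n| = \prod_{j=1}^m |Dom^{(i)}_Q(Z_{j})|$, which by definition is precisely $\Pi_{{\Phi}_n^{{\cal C}^{(Q'')}}}^{(i)}$. Symmetrically, by Proposition~\ref{copy-proj-properties-prop}(ii) $\Pi_c$ is the Cartesian product, over $j \in \{1,\ldots,r\}$, of the one-column projections $\Gamma^{(i)}_{(Y''_{m+j})}[{\cal C}^{(Q'')}]$, and by Proposition~\ref{copy-proj-properties-prop}(i) the $j$th of these equals $\{1,\ldots,W_{j}^{(i)}\}$ with $W_{j}$ the $j$th component of the copy-signature ${\Phi}_c^{{\cal C}^{(Q'')}}$; thus $|\Pi_c| = \prod_{j=1}^r W_{j}^{(i)} = \Pi_{{\Phi}_c^{{\cal C}^{(Q'')}}}^{(i)}$. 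Multiplying the two yields the claim in this case.

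Then I would dispatch the two boundary cases. If $r = 0$ (so $m \geq 1$), Proposition~\ref{noncopy-proj-more-properties-prop} says that $\Gamma^{(i)}[{\cal C}^{(Q'')}]$ coincides with its projection on the noncopy variables, so the computation of $|\Pi_n|$ above gives $|\Gamma^{(i)}[{\cal C}^{(Q'')}]| = \Pi_{{\Phi}_n^{{\cal C}^{(Q'')}}}^{(i)}$; since $\Pi_{{\Phi}_c^{{\cal C}^{(Q'')}}}^{(i)} = 1$ by definition when $r = 0$, the product formula holds. Symmetrically, if $m = 0$ (so $r \geq 1$), $\Gamma^{(i)}[{\cal C}^{(Q'')}]$ coincides with its projection on the copy variables, giving $|\Gamma^{(i)}[{\cal C}^{(Q'')}]| = \Pi_{{\Phi}_c^{{\cal C}^{(Q'')}}}^{(i)}$, and $\Pi_{{\Phi}_n^{{\cal C}^{(Q'')}}}^{(i)} = 1$ by definition, so again the formula holds. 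In all cases $\Gamma^{(i)}[{\cal C}^{(Q'')}] \neq \emptyset$ by Proposition~\ref{gamma-i-union-prop}(i), consistent with the right-hand side being a product of positive integers.

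I do not anticipate a serious obstacle here: the substantive content has effectively been unpacked in the preceding propositions, and what remains is bookkeeping — matching $\prod_{j} |Dom^{(i)}_Q(Z_{j})|$ against the definition of $\Pi_{{\Phi}_n^{{\cal C}^{(Q'')}}}^{(i)}$, using Proposition~\ref{dom-properties-prop}(ii)--(iii) to note that this factor is $N_k^{(i)}$ when $Z_{j}$ is the multiset noncopy variable $Y_k$ and $1$ otherwise (exactly as $DomLabel_Q(Z_{j})$ prescribes at the symbolic level), and similarly observing $|\{1,\ldots,W_{j}^{(i)}\}| = W_{j}^{(i)}$. The one point that deserves care is to invoke the Cartesian-product clauses (Propositions~\ref{noncopy-proj-more-properties-prop}, \ref{noncopy-proj-properties-prop}(ii), and~\ref{copy-proj-properties-prop}(ii)) rather than merely the column-by-column descriptions, since only the former licenses multiplying the column cardinalities to obtain $|\Gamma^{(i)}[{\cal C}^{(Q'')}]|$.
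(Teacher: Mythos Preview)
Your proposal is correct and takes essentially the same approach as the paper: the paper states that the result is immediate from Proposition~\ref{noncopy-proj-more-properties-prop}, pointing to Proposition~\ref{fixed-unity-prop} for the copy part and to the main construction cycle for the noncopy part, which is exactly the decomposition you spell out (your use of Propositions~\ref{noncopy-proj-properties-prop} and~\ref{copy-proj-properties-prop} is the same content, since the latter is itself derived from Proposition~\ref{fixed-unity-prop}). Your version simply makes the bookkeeping and the $m=0$/$r=0$ boundary cases explicit.
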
 

Please see Example~\ref{main-proof-ex} for an illustration. \reminder{Add an illustration to Example~\ref{main-proof-ex}} The proof of Proposition~\ref{monomial-classes-i-sizes-prop} is immediate from Proposition~\ref{noncopy-proj-more-properties-prop}. (See Proposition~\ref{fixed-unity-prop} for the details on the $\Pi_{{\Phi}_c^{{\cal C}^{(Q'')}}}^{(i)}$ part of the computation. The $\Pi_{{\Phi}_n^{{\cal C}^{(Q'')}}}^{(i)}$ part of the computation follows from the construction of the database $D_{\bar{N}^{(i)}}(Q)$, specifically from the definition of the main construction cycle as described in Section~\ref{main-cycle-sec}.) 

We note that the expression $\Pi_{{\Phi}_n^{{\cal C}^{(Q'')}}}^{(i)}$ $\times$ $\Pi_{{\Phi}_c^{{\cal C}^{(Q'')}}}^{(i)}$ in Proposition~\ref{monomial-classes-i-sizes-prop} is in terms of only the elements of the vector $\bar{N}^{(i)}$, and is uniform across all $i \in {\mathbb N}_+$. Thus, we obtain the following result as an easy corollary of Proposition~\ref{monomial-classes-i-sizes-prop}. 

\begin{proposition} 
\label{monomial-classes-sizes-prop}
Let $\Xi$ be a $G$-ary vector  of (not necessarily distinct) elements of the set ${\cal S}_{C(Q)}$, such that  the monomial class ${\cal C}^{(Q'')}$ with atom-signature $\Xi$ is not empty. 
Then, for all $i \in {\mathbb N}_+$, the cardinality of the set $\Gamma^{(i)}[{\cal C}^{(Q'')}]$ (that is, the number of distinct tuples contributed, to the set $\Gamma^{(t^*_Q)}_{\bar{S}}(Q'',D_{\bar{N}^{(i)}}(Q))$ for the query $Q''$ and for the database $D_{\bar{N}^{(i)}}(Q)$, by all the valid assignment mappings for all the elements of the class  ${\cal C}^{(Q'')}$) can be computed by substituting the values in the vector $\bar{N}^{(i)}$ (specifically value $N_j^{(i)}$ as value of variable $N_j$, for each $j \in \{ 1,\ldots,m+w \}$) into the formula $\Pi_{{\Phi}_n^{{\cal C}^{(Q'')}}}$ $\times$ $\Pi_{{\Phi}_c^{{\cal C}^{(Q'')}}}$. 
\end{proposition}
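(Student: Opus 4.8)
The plan is to derive Proposition~\ref{monomial-classes-sizes-prop} directly from Proposition~\ref{monomial-classes-i-sizes-prop}, by observing that the numeric quantity produced there, namely $\Pi_{{\Phi}_n^{{\cal C}^{(Q'')}}}^{(i)} \times \Pi_{{\Phi}_c^{{\cal C}^{(Q'')}}}^{(i)}$, is nothing other than the value obtained by evaluating the symbolic expression $\Pi_{{\Phi}_n^{{\cal C}^{(Q'')}}} \times \Pi_{{\Phi}_c^{{\cal C}^{(Q'')}}}$ at the point $\bar{N}^{(i)}$ (i.e., under the substitution $N_j \mapsto N_j^{(i)}$ for each $j \in \{1,\ldots,m+w\}$). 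So the core of the argument is a term-by-term matching of the symbolic product with its numeric counterpart.

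First I would treat the noncopy factor. Write the noncopy-signature of the class as $Z_1,\ldots,Z_m$ in case $m \ge 1$ (the case $m = 0$ makes both products equal to $1$ and is immediate). By definition, $\Pi_{{\Phi}_n^{{\cal C}^{(Q'')}}} = \Pi_{j=1}^m DomLabel_Q(Z_j)$ and $\Pi_{{\Phi}_n^{{\cal C}^{(Q'')}}}^{(i)} = \Pi_{j=1}^m |Dom^{(i)}_Q(Z_j)|$. For each $j$, either $Z_j$ is the multiset noncopy variable $Y_k$ of $Q$ for some $k \in \{1,\ldots,m\}$, in which case $DomLabel_Q(Z_j) = N_k$ and, by Proposition~\ref{dom-properties-prop}(iii), $|Dom^{(i)}_Q(Z_j)| = N_k^{(i)}$; or $Z_j$ is a head variable, a set variable, or a constant of $Q$, in which case $DomLabel_Q(Z_j) = 1$ and, by Definition~\ref{dom-def} together with Proposition~\ref{dom-properties-prop}(ii), $|Dom^{(i)}_Q(Z_j)| = 1$. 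Either way, substituting $N_k^{(i)}$ for $N_k$ (for every $k$) into $DomLabel_Q(Z_j)$ yields precisely $|Dom^{(i)}_Q(Z_j)|$; taking the product over $j$ shows that this substitution turns $\Pi_{{\Phi}_n^{{\cal C}^{(Q'')}}}$ into $\Pi_{{\Phi}_n^{{\cal C}^{(Q'')}}}^{(i)}$.

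Next I would do the analogous check for the copy factor. Writing the copy-signature as $W_1,\ldots,W_r$ in case $r \ge 1$ (the case $r = 0$ is again immediate), each $W_j$ is either the constant $1$ or a variable $N_{m+l}$ for some $l \in \{1,\ldots,w\}$, and by definition $W_j^{(i)}$ equals $1$ or $N_{m+l}^{(i)}$ accordingly. Hence substituting $N_{m+l}^{(i)}$ for $N_{m+l}$ into $W_j$ gives $W_j^{(i)}$, and taking the product over $j$ turns $\Pi_{{\Phi}_c^{{\cal C}^{(Q'')}}} = \Pi_{j=1}^r W_j$ into $\Pi_{{\Phi}_c^{{\cal C}^{(Q'')}}}^{(i)} = \Pi_{j=1}^r W_j^{(i)}$. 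Combining the two factors, substitution of the values in $\bar{N}^{(i)}$ into $\Pi_{{\Phi}_n^{{\cal C}^{(Q'')}}} \times \Pi_{{\Phi}_c^{{\cal C}^{(Q'')}}}$ produces $\Pi_{{\Phi}_n^{{\cal C}^{(Q'')}}}^{(i)} \times \Pi_{{\Phi}_c^{{\cal C}^{(Q'')}}}^{(i)}$, which by Proposition~\ref{monomial-classes-i-sizes-prop} is exactly the cardinality of $\Gamma^{(i)}[{\cal C}^{(Q'')}]$; since $i$ was arbitrary, this establishes the claim.

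There is no genuine obstacle here: the statement is a bookkeeping corollary of Proposition~\ref{monomial-classes-i-sizes-prop}, and the only thing requiring a moment's care is confirming that the symbolic ``labels'' ($DomLabel_Q$ and the entries of the copy-signature) were defined so that, under $N_j \mapsto N_j^{(i)}$, they evaluate to precisely the domain sizes appearing in $\Pi_{{\Phi}_n^{{\cal C}^{(Q'')}}}^{(i)}$ and $\Pi_{{\Phi}_c^{{\cal C}^{(Q'')}}}^{(i)}$ --- which is exactly what Proposition~\ref{dom-properties-prop} and the definitions in this subsection were arranged to guarantee.
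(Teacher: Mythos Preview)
Your proposal is correct and follows exactly the approach the paper takes: the paper states this result as ``an easy corollary of Proposition~\ref{monomial-classes-i-sizes-prop},'' noting that the expression $\Pi_{{\Phi}_n^{{\cal C}^{(Q'')}}}^{(i)} \times \Pi_{{\Phi}_c^{{\cal C}^{(Q'')}}}^{(i)}$ is given in terms of only the elements of $\bar{N}^{(i)}$ and is uniform across all $i$. Your term-by-term verification that the substitution $N_j \mapsto N_j^{(i)}$ turns the symbolic product into the numeric one is more detailed than what the paper writes out, but it is precisely the content the paper is gesturing at.
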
 

For a monomial class ${\cal C}^{(Q'')}$ with noncopy signature ${\Phi}_n^{{\cal C}^{(Q'')}}$ and with copy signature ${\Phi}_c^{{\cal C}^{(Q'')}}$, such that  ${\cal C}^{(Q'')}$ is not empty, we call the expression (of Proposition~\ref{monomial-classes-sizes-prop}) $\Pi_{{\Phi}_n^{{\cal C}^{(Q'')}}}$ $\times$ $\Pi_{{\Phi}_c^{{\cal C}^{(Q'')}}}$, in terms of the variables in the vector $\bar{N}$, {\em the multiplicity monomial of the monomial class} ${\cal C}^{(Q'')}$.

\subsection{The Wave Monomial of the Query $Q$} 
\label{monomial-class-mappings-sec} 


In this section we obtain results that are instrumental in proving Theorem~\ref{magic-mapping-prop}. Namely, we show that: 
\begin{itemize} 
	\item[(1)] There exists a (nonempty) monomial class ${\cal C}^{(Q)}$ {\em for the query} $Q$, such that the multiplicity monomial of ${\cal C}^{(Q)}$ is ``the wave of the query $Q$.'' (See Proposition~\ref{q-has-wave-prop}.)  The wave of the query $Q$ is defined in this section (see Definition~\ref{the-wave-def}) based on the vector $\bar{N}$ and on the mapping $\nu^{copy}_Q$ defined in Section~\ref{main-cycle-sec}. 
	\item[(2)] Suppose that for a CCQ query $Q''$, there exists a (nonempty) monomial class ${\cal C}^{(Q'')}$, such that the multiplicity monomial of ${\cal C}^{(Q'')}$ is ``the wave of the query $Q$.'' Then there exists a SCVM from  the query $Q''$ to the query $Q$. (See Proposition~\ref{q-same-scale-mpng-prop}.)  
\end{itemize}  
In Section~\ref{q-prime-has-wave-sec}  we will see that whenever (a) $Q' \equiv_C Q$ for a CCQ query $Q'$ and (b) $Q$ is an explicit-wave CCQ query, then there exists a (nonempty) monomial class ${\cal C}_*^{(Q')}$ for the query $Q'$, such that the multiplicity monomial of ${\cal C}_*^{(Q')}$ is ``the wave of the query $Q$.'' The proof of Theorem~\ref{magic-mapping-prop} is immediate from that result and from Propositions~\ref{q-has-wave-prop} and \ref{q-same-scale-mpng-prop}  of this section. 
(We remind the reader that throughout the proof of Theorem~\ref{magic-mapping-prop}, all monomial classes of all queries are defined w.r.t. the family of databases $\{ D_{\bar{N}^{(i)}}(Q) \}$ for the fixed input query $Q$.) 


We begin the exposition by defining ``the wave of the query $Q$.'' We introduce some notation to make the definition concise: 
\begin{itemize} 
	\item[(A)] Denote by ${\cal P}^{(Q)}_{noncopy}$ (i) the constant $1$ in case $m = 0$, and (ii) the product $\Pi_{j=1}^m N_j$ in case $m \geq 1$. 

	\item[(B)] Denote by ${\cal P}^{(Q)}_{copy}$ (i) the constant $1$ in case $r = 0$, and (ii) the product $\Pi_{j=1}^r \nu_Q^{copy}(Y_{m+j})$ in case $r \geq 1$. (For the notation $\nu^{copy}_Q$, see Section~\ref{main-cycle-sec}.)

\end{itemize}  

\begin{definition}{The wave of CCQ query $Q$} 
\label{the-wave-def} 
For a CCQ query $Q$, {\em the wave ${\cal P}^{(Q)}_*$ of} $Q$ w.r.t. the family of databases  $\{ D_{\bar{N}^{(i)}}(Q) \}$ is the product ${\cal P}^{(Q)}_* = {\cal P}^{(Q)}_{noncopy} \times {\cal P}^{(Q)}_{copy}$. 
\end{definition} 

The intuition for the wave ${\cal P}^{(Q)}_*$ of a CCQ query $Q$ is that ${\cal P}^{(Q)}_*$ reflects (i) the association of each multiset noncopy variable of $Q$ (in case $m \geq 1$) with a separate variable among $N_1,$ $\ldots,$ $N_m$, and (ii) the association, in case $r \geq 1$, of each copy-sensitive subgoal, call it $s$, of $Q$ (via the copy variable of the subgoal) with the unique element, call it $s'$, of the set ${\cal S}_{C(Q)}$ (see Section~\ref{basic-queries-sec}) such that the subgoal $s$ and the element $s'$ have the same relational template. The provenance of each association will become explicit  in the proof of Proposition~\ref{q-has-wave-prop}. As an illustration of the definition, in Example~\ref{main-proof-ex} in Section~\ref{main-proof-ex-sec},  the wave of query $Q$ w.r.t. the family of databases  $\{ D_{\bar{N}^{(i)}}(Q) \}$ is the product ${\cal P}^{(Q)}_*$ $=$ $N_1$ $\times$ $N_2$, where $N_1$ refers to the only multiset noncopy variable of the query $Q$, and $N_2$ refers to its only (multiset) copy variable. 



\begin{proposition} 
\label{wave-has-all-n-j-prop} 
Given a CCQ query $Q$ and the vector $\bar{N} = [ \ N_1 \ N_2 \ \ldots \ N_{m+w}]$ that is used to construct the family of databases  $\{ D_{\bar{N}^{(i)}}(Q) \}$ for the query $Q$. Then each element of the vector $\bar{N}$ occurs in the wave of the query $Q$ w.r.t. $\{ D_{\bar{N}^{(i)}}(Q) \}$.  
\end{proposition}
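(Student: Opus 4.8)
The plan is to unwind Definition~\ref{the-wave-def} together with the definition of the mapping $\nu^{copy}_Q$ from Section~\ref{main-cycle-sec}, and simply track which variables $N_j$ appear as factors of the product ${\cal P}^{(Q)}_* = {\cal P}^{(Q)}_{noncopy} \times {\cal P}^{(Q)}_{copy}$. It suffices to show separately that every $N_j$ with $1 \le j \le m$ occurs in ${\cal P}^{(Q)}_{noncopy}$ and that every $N_j$ with $m+1 \le j \le m+w$ occurs in ${\cal P}^{(Q)}_{copy}$; together with Proposition~\ref{sc-prop}(v), which gives $m + w \ge 1$, this covers exactly the index set $\{1,\ldots,m+w\}$ of the vector $\bar{N}$.

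First I would handle the noncopy part. If $m = 0$ the relevant range of indices is empty and there is nothing to check. If $m \ge 1$, then by definition ${\cal P}^{(Q)}_{noncopy} = \Pi_{j=1}^m N_j$, so each of $N_1, \ldots, N_m$ occurs as a factor of the wave.

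Next I would handle the copy part. If $r = 0$ then by Proposition~\ref{sc-prop}(iii)--(iv) we also have $w = 0$, so the index range $\{m+1,\ldots,m+w\}$ is empty and there is nothing to check. If $r \ge 1$, then $w \ge 1$ and ${\cal P}^{(Q)}_{copy} = \Pi_{j=1}^r \nu_Q^{copy}(Y_{m+j})$. By clause~(I) of the definition of $\nu^{copy}_Q$, for each $j \in \{1,\ldots,w\}$ we have $\nu_Q^{copy}(Y_{m+j}) = N_{m+j}$, so the first $w$ factors of ${\cal P}^{(Q)}_{copy}$ are precisely $N_{m+1}, \ldots, N_{m+w}$, and each of these therefore occurs in the wave. (When $r > w$, the remaining factors $\nu_Q^{copy}(Y_{m+j})$ for $w < j \le r$ are, by clause~(II), equal to $N_k$ for some $k \in \{m+1,\ldots,m+w\}$, hence are merely duplicates of factors already accounted for; this does not affect the claim.)

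Combining the two parts establishes that every element $N_j$ of $\bar{N}$ occurs in ${\cal P}^{(Q)}_*$, which is the statement. I do not expect any real obstacle here: the only care needed is in the bookkeeping that distinguishes the ``multiset noncopy'' indices $1,\ldots,m$ from the ``copy-representative'' indices $m+1,\ldots,m+w$, and in invoking Proposition~\ref{sc-prop} to rule out the degenerate situations so that the case split is exhaustive.
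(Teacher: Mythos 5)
Your proof is correct and matches the paper's argument: the paper also treats the statement as immediate from the definitions of ${\cal P}^{(Q)}_{noncopy}$ and ${\cal P}^{(Q)}_{copy}$, noting that the only non-obvious point is the presence of each of $N_{m+1},\ldots,N_{m+w}$ in ${\cal P}^{(Q)}_{copy}$, which follows from clause (I) of the definition of $\nu^{copy}_Q$ exactly as you argue. Your explicit case split on $m=0$ and $r=0$ (via Proposition~\ref{sc-prop}) just spells out the bookkeeping the paper leaves implicit.
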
 

The proof of Proposition~\ref{wave-has-all-n-j-prop} is immediate from the definition of the products ${\cal P}^{(Q)}_{noncopy}$ and ${\cal P}^{(Q)}_{copy}$ used in Definition~\ref{the-wave-def}. (In case where $r \geq 1$,  the less obvious part of the proof, that is the presence of each of $N_{m+1},N_{m+2},\ldots,N_{m+w}$ in the product ${\cal P}^{(Q)}_{copy}$, is immediate from the definition of the mapping $\nu_Q^{copy}$, see Section~\ref{main-cycle-sec}.) 

Our next result is immediate from the definition of the wave of the query $Q$. (For each expression of the form $N_j^k$, such that $N_j$ $\in$ $\{ N_1,$ $N_2,$ $\ldots,$ $ N_{m+r} \ \}$ and $k \geq 1$, we say that the expression $N_j^k$ has $k$ occurrences of the variable $N_j$.) 

\begin{proposition} 
\label{wave-is-of-power-r-prop} 
Given a CCQ query $Q$ and the vector $\bar{N} = [ \ N_1 \ N_2 \ \ldots \ N_{m+w}]$ that is used to construct the family of databases  $\{ D_{\bar{N}^{(i)}}(Q) \}$ for the query $Q$. Then the wave of the query $Q$ w.r.t. $\{ D_{\bar{N}^{(i)}}(Q) \}$ has exactly $m$ $+$ $r$ occurrences of the variables from the set $\{ N_1,$ $N_2,$ $\ldots,$ $ N_{m+w} \ \}$. 
\end{proposition}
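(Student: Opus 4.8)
The plan is to read the claim off directly from Definition~\ref{the-wave-def} together with the definitions of the two factors ${\cal P}^{(Q)}_{noncopy}$ and ${\cal P}^{(Q)}_{copy}$ given immediately before it. Since the wave is by definition the product ${\cal P}^{(Q)}_* = {\cal P}^{(Q)}_{noncopy} \times {\cal P}^{(Q)}_{copy}$, it suffices to count the occurrences of variables from $\{ N_1, \ldots, N_{m+w} \}$ in each of the two factors separately and add the two counts.

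First I would handle ${\cal P}^{(Q)}_{noncopy}$. In case $m = 0$ it is the constant $1$, contributing zero occurrences, which equals $m$. In case $m \geq 1$ it is $\Pi_{j=1}^m N_j$, a product of $m$ factors, each a single occurrence of a distinct variable $N_j$ with $1 \le j \le m$; hence it contributes exactly $m$ occurrences. So in all cases ${\cal P}^{(Q)}_{noncopy}$ contributes exactly $m$ occurrences. Next I would handle ${\cal P}^{(Q)}_{copy}$. In case $r = 0$ it is the constant $1$, contributing zero occurrences, equal to $r$. In case $r \geq 1$ it is $\Pi_{j=1}^r \nu_Q^{copy}(Y_{m+j})$, a product of exactly $r$ factors, one per copy variable $Y_{m+1}, \ldots, Y_{m+r}$ of $Q$. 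By the clause-by-clause definition of $\nu^{copy}_Q$ in Section~\ref{main-cycle-sec}, for each such copy variable the value $\nu_Q^{copy}(Y_{m+j})$ is a single variable name drawn from $\{ N_{m+1}, \ldots, N_{m+w} \}$ (it is $N_{m+j}$ when $m+1 \le m+j \le m+w$, and equals $\nu_Q^{copy}(Y_{k(m+j)})$ — again a single $N$-variable from that same set — when $m+j > m+w$). Thus each of the $r$ factors contributes exactly one occurrence, so ${\cal P}^{(Q)}_{copy}$ contributes exactly $r$ occurrences. Adding the two counts yields $m + r$, as required.

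The argument is purely a bookkeeping exercise over the definitions, so no machinery from the remainder of the proof of Theorem~\ref{magic-mapping-prop} is needed. The only point that deserves a moment's care — and is the ``main obstacle'' only in a very weak sense — is confirming that each factor $\nu_Q^{copy}(Y_{m+j})$ of ${\cal P}^{(Q)}_{copy}$ is genuinely a single variable occurrence and not a compound expression; this is immediate from the definition of $\nu^{copy}_Q$, whose two defining clauses both assign a single $N$-variable.
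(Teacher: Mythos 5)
Your proof is correct and matches the paper's treatment: the paper simply declares the proposition ``immediate from the definition of the wave,'' and your argument is exactly the straightforward bookkeeping that justifies this, counting $m$ single-variable factors in ${\cal P}^{(Q)}_{noncopy}$ and $r$ single-variable factors in ${\cal P}^{(Q)}_{copy}$ (each $\nu_Q^{copy}(Y_{m+j})$ being one variable from $\{N_{m+1},\ldots,N_{m+w}\}$ by the two defining clauses of $\nu_Q^{copy}$). No gap.
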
 

\begin{proposition}
\label{q-has-wave-prop} 
Given a CCQ query $Q$, there exists a nonempty monomial class, call it ${\cal C}_*^{(Q)}$, for the query $Q$ w.r.t. the family of databases  $\{ D_{\bar{N}^{(i)}}(Q) \}$, such that the multiplicity monomial of ${\cal C}_*^{(Q)}$ is the wave of the query $Q$ w.r.t. $\{ D_{\bar{N}^{(i)}}(Q) \}$.  
\end{proposition}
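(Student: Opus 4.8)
\textbf{Proof proposal for Proposition~\ref{q-has-wave-prop}.}

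The plan is to exhibit an explicit association for $Q$ and $D_{\bar{N}^{(i)}}(Q)$ (for an arbitrary fixed $i$), show that it generates a $t^*_Q$-valid assignment mapping, determine its atom-signature and the resulting monomial class ${\cal C}_*^{(Q)}$, and then compute that the multiplicity monomial of ${\cal C}_*^{(Q)}$ is exactly the wave ${\cal P}^{(Q)}_*$. The natural candidate is the association induced by the \emph{generative mapping}: recall from Section~\ref{main-cycle-sec} that for any tuple $t \in {\cal S}^{(i)}$ the assignment $\nu_t$ is a satisfying assignment from $Q$ to $D_{\bar{N}^{(i)}}(Q)$ that contributes $t^*_Q = \nu_0[\bar{X}]$ to the bag of answers (this is precisely Proposition~\ref{Observation_one_four}). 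First I would fix such a $\nu_t$, read off the association ${\cal A}_*$ it induces (each subgoal of $Q$ is sent to the ground atom it generated in the main construction cycle), and observe that ${\cal A}_* \in {\mathbb A}^{(i)}_{Q''}$ with $Q'' = Q$, since $\nu_t$ is a $t^*_Q$-valid assignment mapping. Let ${\cal C}_*^{(Q)}$ be the monomial class containing ${\cal A}_*$ (i.e.\ the class whose atom-signature equals $\Psi_a[{\cal A}_*]$); by Proposition~\ref{same-monomial-classes-prop} this class is nonempty for every $i$, so it is a well-defined monomial class for $Q$ w.r.t.\ $\{D_{\bar{N}^{(i)}}(Q)\}$.

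Next I would identify the noncopy-signature and copy-signature of ${\cal C}_*^{(Q)}$. Because $\nu_t$ maps each multiset noncopy variable $Y_j$ of $Q$ into the set $S_j^{(i)}$, and because $\nu^{(i)}_Q$ sends every element of $S_j^{(i)}$ back to $Y_j$ (Section~\ref{nu-sec}), the noncopy-signature ${\Phi}_n^{{\cal C}_*^{(Q)}}$ is the vector $[Y_1,\ldots,Y_m]$ (in case $m \ge 1$; the empty vector if $m=0$). Hence $DomLabel_Q(Y_j) = N_j$ for each $j$, and $\Pi_{{\Phi}_n^{{\cal C}_*^{(Q)}}} = \Pi_{j=1}^m N_j = {\cal P}^{(Q)}_{noncopy}$. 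For the copy-signature: the association ${\cal A}_*$ sends each copy-sensitive subgoal $g_k$ of $Q$ (with copy variable $Y''_{m+k}$ in the agreed ordering) to the ground atom it generated, whose $\nu^{(i)}_Q$-image is the representative element $s(C'_{k(m+k)})$ of the class $C'_{k(m+k)}$ it belongs to; applying $\nu^{copy-sig}$ to that representative yields $\nu^{copy}_Q$ of its copy variable, i.e.\ one of $N_{m+1},\ldots,N_{m+w}$, exactly as in the definition of ${\cal P}^{(Q)}_{copy}$ (Section~\ref{main-cycle-sec} and item (B) above). Therefore $\Pi_{{\Phi}_c^{{\cal C}_*^{(Q)}}} = \Pi_{j=1}^r \nu^{copy}_Q(Y_{m+j}) = {\cal P}^{(Q)}_{copy}$. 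Multiplying, the multiplicity monomial of ${\cal C}_*^{(Q)}$, which by Proposition~\ref{monomial-classes-sizes-prop} is $\Pi_{{\Phi}_n^{{\cal C}_*^{(Q)}}} \times \Pi_{{\Phi}_c^{{\cal C}_*^{(Q)}}}$, equals ${\cal P}^{(Q)}_{noncopy} \times {\cal P}^{(Q)}_{copy} = {\cal P}^{(Q)}_*$, which is the wave of $Q$ by Definition~\ref{the-wave-def}.

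I expect the main obstacle to be the careful bookkeeping around the equivalence classes $C'_1,\ldots,C'_w$ of copy-sensitive subgoals, rather than any genuinely deep step. Since we work with the regularized version of $Q$, the elements of ${\cal S}_{C(Q)}$ have pairwise distinct relational templates, and several copy-sensitive subgoals of $Q$ sharing a template all get the \emph{same} copy value under $\nu^{copy}$ via clause (II) of the definition of $\nu^{copy}_Q$; one must check that the copy-signature of ${\cal A}_*$ correctly records, for each of the $r$ copy-sensitive subgoals, the $N$-variable attached to the representative of its class, so that the product telescopes to $\Pi_{j=1}^r \nu^{copy}_Q(Y_{m+j})$ with the right multiplicities (and in particular, by Proposition~\ref{wave-has-all-n-j-prop}, contains each of $N_{m+1},\ldots,N_{m+w}$ at least once). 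A secondary point of care is verifying that ${\cal A}_*$ really lies in ${\mathbb A}^{(i)}_{Q''}$ for $Q'' = Q$ — i.e.\ that the head restriction of the generative assignment is $t^*_Q$ — but this is exactly Proposition~\ref{Observation_one_four} together with Proposition~\ref{captures-prop}, so no new argument is needed there.
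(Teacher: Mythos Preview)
Your proposal is correct and follows essentially the same approach as the paper's proof: both construct the desired association from the generative assignment $\nu_t$ for some $t \in {\cal S}^{(i)}$, verify that it yields a $t^*_Q$-valid assignment mapping (so the monomial class is nonempty), read off the noncopy-signature $[Y_1,\ldots,Y_m]$ via $\nu^{(i)}_Q \circ \nu_t$ and the copy-signature $[\nu^{copy}_Q(Y_{m+1}),\ldots,\nu^{copy}_Q(Y_{m+r})]$ via $\nu^{copy-sig} \circ \psi^{gen(Q)}_{\bar{N}^{(i)}}$, and then invoke Proposition~\ref{monomial-classes-sizes-prop} to conclude that the multiplicity monomial equals ${\cal P}^{(Q)}_{noncopy} \times {\cal P}^{(Q)}_{copy} = {\cal P}^{(Q)}_*$. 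The only cosmetic difference is that the paper is slightly more explicit about how non-representative copy-sensitive subgoals (those in $L - {\cal S}_{C(Q)}$) are associated with the image of their class representative, a point you correctly flag as the main bookkeeping concern.
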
 

\begin{proof} 
The proof has three parts: 
\begin{itemize} 

	\item[(1)] We first show that for each $i \in {\mathbb N}_+$, there exists an association for the query $Q$ and for the database $D_{\bar{N}^{(i)}}(Q)$, call this association ${\cal A}^{(i)}_*$, such that: 

	\begin{itemize} 

		\item[(i)] The association ${\cal A}^{(i)}_*$ has a $t^*_Q$-valid assignment mapping for the query $Q$ and for the database $D_{\bar{N}^{(i)}}(Q)$; 

		\item[(ii)] In case $m \geq 1$, we have that the noncopy-signature\footnote{Observe that the noncopy-signature of the association ${\cal A}^{(i)}_*$ is well defined, by ${\cal A}^{(i)}_*$ having a valid assignment mapping for the query $Q$ and for the database $D_{\bar{N}^{(i)}}(Q)$.} $\Phi_n[{\cal A}^{(i)}_*]$ of the association ${\cal A}^{(i)}_*$ is the vector $[ Y_1$ $\ \ldots$ $\ Y_m]$; and, finally, 

		\item[(iii)] In case $r \geq 1$, we have that the copy-signature $\Phi_c[{\cal A}^{(i)}_*]$ of the association ${\cal A}^{(i)}_*$ is the vector $[ \nu_Q^{copy}$ $(Y_{m+1})$ $\ \ldots$ $\ \nu_Q^{copy}$ $(Y_{m+r}) ]$. 

	\end{itemize} 
	 
	\item[(2)] We then show that the query $Q$, w.r.t. the family of databases  $\{ D_{\bar{N}^{(i)}}(Q) \}$, has a nonempty monomial class whose noncopy-signature (whose copy-signature, respectively) is the noncopy-signature (the copy-signature, respectively) of the associations ${\cal A}^{(i)}_*$, for all  $i \in {\mathbb N}_+$, of item (1) of this proof. We denote this monomial class by ${\cal C}_*^{(Q)}$. 

	\item[(3)] Finally, we show that the multiplicity monomial of the monomial class ${\cal C}_*^{(Q)}$ of item (2) is the wave of the query $Q$. 

\end{itemize} 

In fact, items (2) and (3) are straightforward: Item (2) is immediate from the definition of monomial classes and from item (1), and item (3) is immediate from item (2) and from Proposition~\ref{monomial-classes-sizes-prop}. Hence, in the remainder of this proof we prove parts (i) through (iii) of item (1) above. 

Recall that we assume $m + r \geq 1$. Hence the set ${\cal S}_{C(Q)}$ (see Section~\ref{basic-queries-sec}) for the query $Q$ is not empty. 

Fix an $i \in {\mathbb N}_+$. Recall the set ${\cal S}^{(i)}$ $\neq$ $\emptyset$ introduced in Section~\ref{main-cycle-sec} to construct the database $D_{\bar{N}^{(i)}}(Q)$. Fix an arbitrary tuple $t \in$ ${\cal S}^{(i)}$. For the tuple $t \in$ ${\cal S}^{(i)}$, Section~\ref{main-cycle-sec} defined a mapping, $\nu_t$, from all the terms of the query $Q$ to constants in the set $adom(D_{\bar{N}^{(i)}}(Q))$ $\bigcup$ ${\mathbb N}_+$. By definition, for the $t \in$ ${\cal S}^{(i)}$ we have that the restriction of $\nu_t$ to all the terms of the query $Q$ occurring in the elements of the set ${\cal S}_{C(Q)}$ induces a bijection from the subset ${\cal S}_{C(Q)}$ of the condition of $Q$ to a set, call it $D_t$, of ground atoms of the database $D_{\bar{N}^{(i)}}(Q)$. By construction of the database $D_{\bar{N}^{(i)}}(Q)$, the set $D_t$ (i) was generated from ${\cal S}_{C(Q)}$ using the mapping $\nu_t$, and (ii) is a copy-neutral canonical database for the query $Q$. 

We now construct the association ${\cal A}^{(i)}_*$. We begin by associating each atom $s$ $\in$ ${\cal S}_{C(Q)}$ with its image (in the set of ground atoms $D_t$ $\subseteq$ $D_{\bar{N}^{(i)}}(Q)$) under $\nu_t$. Now there are two cases: (a) In case all the subgoals of the (regularized version of the) query $Q$ are elements of the set ${\cal S}_{C(Q)}$, we are done with the construction of the association ${\cal A}^{(i)}_*$. We now consider the remaining case (b), where there exist subgoals of the (regularized version of the) query $Q$ that are not elements of the set ${\cal S}_{C(Q)}$. Consider an arbitrary subgoal $s$ of $Q$ such that $s$ $\in$ $(L$ $-$ ${\cal S}_{C(Q)})$, where $L$ is the condition of the regularized version of the query $Q$. By definition of the set ${\cal S}_{C(Q)}$, $s$ is a copy-sensitive atom, such that there exists a unique element, call it $s'$, of the set ${\cal S}_{C(Q)}$, such that $s$ and $s'$ have the same relational template. 

Then in our construction of the association ${\cal A}^{(i)}_*$, for each such subgoal $s$ of the query $Q$, $s$ $\in$ $(L$ $-$ ${\cal S}_{C(Q)})$, we associate (in ${\cal A}^{(i)}_*$) the atom $s$ with the atom $\nu_t(s')$ $\in$ $D_{\bar{N}^{(i)}}(Q)$, for the $s'$ as determined in the previous paragraph. This completes the construction of the association ${\cal A}^{(i)}_*$. Observe that by construction, in both cases (a) and (b) as in the preceding paragraphs, the association ${\cal A}^{(i)}_*$ associates all the elements of the condition of the query $Q$ with exactly the set of ground atoms $D_t$ $\subseteq$ $D_{\bar{N}^{(i)}}(Q)$. 

We now prove claim (1)(i) of the beginning of this proof. We first show that the association ${\cal A}^{(i)}_*$ has a valid assignment mapping for the query $Q$ and for the database $D_{\bar{N}^{(i)}}(Q)$. Indeed, by definition of $\nu_t$ it holds that $\nu_t$ assigns values to {\em all} terms of the query $Q$, consistently across all the pairs in the association ${\cal A}^{(i)}_*$. Denote by $\theta^{(Q)}_*$ the resulting valid assignment mapping for the query $Q$ and for the database $D_{\bar{N}^{(i)}}(Q)$. Now, it is immediate from the definition of $\nu_t$ that the restriction of the mapping $\theta^{(Q)}_*$ to the head vector $[X_1 \ \ldots \ X_l]$ of the query $Q$ is the tuple $t^*_Q$. Thus, $\theta^{(Q)}_*$ is a $t^*_Q$-valid assignment mapping for the query $Q$ and for the database $D_{\bar{N}^{(i)}}(Q)$, which completes our proof of  claim (1)(i). 

We now prove claim (1)(ii) of the beginning of this proof. This claim requires the assumption that $m \geq 1$. Under this assumption, by definition of $\nu_t$ we have that $\nu_t$ maps the variable $Y_j$, for each $j \in \{ 1,\ldots,m \}$, into an element of the set $S_j^{(i)}$. (See Section~\ref{nu-sec} for the definition of $S_j^{(i)}$.) Hence, by definition of the vector $\Phi_n[{\cal A}^{(i)}_*]$ (see Section~\ref{assoc-defs-sec}), the $j$th element of  $\Phi_n[{\cal A}^{(i)}_*]$ is the variable $Y_j$, for each $j \in \{ 1,\ldots,m \}$. Q.E.D. 

To complete the proof of Proposition~\ref{q-has-wave-prop}, it remains to prove claim (1)(iii) of the beginning of this proof. This claim requires the assumption that $r \geq 1$. Under this assumption, the claim is immediate from the construction of the association ${\cal A}^{(i)}_*$ and from the definition of the vector $\Phi_c[{\cal A}^{(i)}_*]$ (see Section~\ref{assoc-defs-sec}). Q.E.D. 
\end{proof} 

\begin{proposition} 
\label{q-same-scale-mpng-prop} 
Given  CCQ queries $Q(\bar{X}) \leftarrow L,M$ and $Q''(\bar{X}'') \leftarrow L'',M''$, such that (i) $Q$ and $Q''$ have the same (nonnegative-integer) head arities, (ii) $|M_{copy}|$ $=$ $|M''_{copy}|$, and (iii) $|M_{noncopy}|$ $=$ $|M''_{noncopy}|$. Suppose that there exists a nonempty monomial class ${\cal C}^{(Q'')}$ for the query $Q''$  w.r.t. the family of databases $\{ D_{\bar{N}^{(i)}}(Q) \}$, such that the multiplicity monomial of ${\cal C}^{(Q'')}$ is the wave of the query $Q$  w.r.t. $\{ D_{\bar{N}^{(i)}}(Q) \}$. Then there exists a SCVM from  the query $Q''$ to the query $Q$.   
\end{proposition}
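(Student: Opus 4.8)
The plan is to extract a SCVM from the structure of a single association in the hypothesized monomial class ${\cal C}^{(Q'')}$, composing it with the canonical-database-to-query inverse map $\nu^{(i)}_Q$. Concretely, I would fix an arbitrary $i \in {\mathbb N}_+$ and pick any association ${\cal A}$ in ${\cal C}^{(Q'')}$; by definition ${\cal A}$ has a $t^*_Q$-valid assignment mapping $\theta$ from $Q''$ into $D_{\bar{N}^{(i)}}(Q)$. The candidate for the SCVM is $\mu := \nu^{(i)}_Q \circ \theta$, possibly after applying Proposition~\ref{same-head-unity-prop} to replace $\theta$ by a unity valid assignment mapping with the same head image $t^*_Q$. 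I would first check, using Proposition~\ref{Observation_three_five} (and Proposition~\ref{Observation_one_three}), that without loss of generality $\theta$ maps all subgoals of $Q''$ into a single copy-neutral canonical database $D^*$ of $Q$ inside $D_{\bar{N}^{(i)}}(Q)$; by Proposition~\ref{Observation_one_two}(i), $\nu^{(i)}_Q$ induces an isomorphism from $D^*$ onto the representative-subgoal set ${\cal S}_{C(Q)}$. This makes $\mu$ a well-defined mapping from the terms of $Q''$ into the terms of $Q$ occurring in ${\cal S}_{C(Q)}$, hence into the terms of $Q$.

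Next I would verify the five conditions of Definition~\ref{magic-mapping-def} for $\mu$ being a CVM from $Q''$ to $Q$. Conditions (1) and (2) are immediate: $\nu^{(i)}_Q$ is the identity (up to $\nu_0^{-1}$) on the constant part $P$, and since $\theta$ is $t^*_Q$-valid, $\theta(\bar{X}'') = t^*_Q = \nu_0[\bar{X}]$, so $\mu(\bar{X}'') = \nu_0^{-1}(t^*_Q) = \bar{X}$. Conditions (4) and (5) — that each relational (resp.\ copy-sensitive) subgoal of $Q''$ maps onto a subgoal of $Q$ of the correct shape — follow because the atom-signature of ${\cal C}^{(Q'')}$ records, for each subgoal $g_j$ of $Q''$, exactly which element $\psi^{gen(Q)}_{\bar{N}^{(i)}}[d_{j}]$ of ${\cal S}_{C(Q)}$ it is carried to by $\nu^{(i)}_Q \circ \theta$; combining this with the convention that copy-sensitive subgoals of $Q''$ are ordered first and mapped via the copy-signature into either a copy variable of $Q$ or unity (indicating the image is a relational subgoal of $Q$), I get precisely the alternatives allowed in Definition~\ref{magic-mapping-def}(4)–(5).

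The heart of the argument is condition (3), together with showing $\mu$ is actually a \emph{same-scale} covering mapping, i.e.\ a bijection from $M''$ onto $M$. This is where the hypothesis on the multiplicity monomial does its work. By Proposition~\ref{monomial-classes-sizes-prop} the multiplicity monomial of ${\cal C}^{(Q'')}$ is $\Pi_{\Phi_n^{{\cal C}^{(Q'')}}} \times \Pi_{\Phi_c^{{\cal C}^{(Q'')}}}$, where the factors are read off the noncopy- and copy-signatures — that is, off $\mu(Y''_1),\ldots,\mu(Y''_m)$ and $\mu(Y''_{m+1}),\ldots,\mu(Y''_{m+r})$. The assumption says this product equals the wave ${\cal P}^{(Q)}_* = \Pi_{j=1}^m N_j \cdot \Pi_{j=1}^r \nu^{copy}_Q(Y_{m+j})$ of $Q$, which by Propositions~\ref{wave-has-all-n-j-prop} and~\ref{wave-is-of-power-r-prop} contains every one of $N_1,\ldots,N_{m+w}$ and has exactly $m+r$ total variable occurrences. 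Since each of the $m+r$ components $\mu(Y''_j)$ contributes exactly one variable occurrence (via $DomLabel_Q$ or $\nu^{copy-sig}$), and since $|M''_{noncopy}| = m = |M_{noncopy}|$ and $|M''_{copy}| = r = |M_{copy}|$, a pigeonhole/unique-factorization argument on the monomial forces: the $m$ noncopy components $\mu(Y''_1),\ldots,\mu(Y''_m)$ are a permutation of $Y_1,\ldots,Y_m$ (so $\mu$ restricted to $M''_{noncopy}$ is a bijection onto $M_{noncopy}$), and the $r$ copy components, pushed through $\nu^{copy}_Q$, realize each of $N_{m+1},\ldots,N_{m+w}$, which by the definition of $\nu^{copy}_Q$ (Section~\ref{main-cycle-sec}) forces $\mu$ restricted to $M''_{copy}$ to be a surjection onto $M_{copy}$ — and by the counting, a bijection. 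This gives Definition~\ref{magic-mapping-def}(3) with equality on both copy and noncopy parts, hence $\mu$ is a SCVM. The main obstacle I anticipate is exactly this combinatorial step: I must rule out ``degenerate'' signatures where, say, a noncopy component equals a head variable $X_k$ (contributing the literal $1$, not an $N_j$) or two copy components collide on the same $N_{m+l}$ — both of which would make the multiplicity monomial strictly miss some $N_j$ or have too few occurrences, contradicting the wave's form via Propositions~\ref{wave-has-all-n-j-prop} and~\ref{wave-is-of-power-r-prop}. Once the bijection is pinned down, the remaining CVM conditions are bookkeeping, and since $i$ was arbitrary the mapping $\mu$ (which is purely syntactic and independent of $i$ by Proposition~\ref{same-monomial-classes-prop}) is the desired SCVM from $Q''$ to $Q$.
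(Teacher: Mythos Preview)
Your overall strategy matches the paper's proof exactly: pick an association in the class, reduce via Proposition~\ref{Observation_three_five} to one whose unity $t^*_Q$-valid assignment hits a single copy-neutral canonical database, compose with $\nu^{(i)}_Q$ for the non-copy terms, and read the bijection on multiset variables off the fact that the multiplicity monomial equals the wave. Conditions (1), (2), (4) and the noncopy half of (3) go through as you describe.

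There is one genuine gap, and it is exactly the place you flag as ``the main obstacle'': the treatment of copy variables. Your candidate $\mu := \nu^{(i)}_Q \circ \theta$ is literally undefined on $M''_{copy}$, because $\theta$ sends copy variables to natural numbers (copy-number values), while $\nu^{(i)}_Q$ is defined only on $adom(D_{\bar{N}^{(i)}}(Q)) = P \cup S_0 \cup S^{(i)}_*$ and not on copy numbers. So $\mu$ on $M''_{copy}$ must be \emph{defined separately}, not obtained by composition. You gesture at using the copy-signature for this, which is correct, but the copy-signature gives elements of $\{1, N_{m+1},\ldots,N_{m+w}\}$, not copy variables of $Q$. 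When $r > w$, the map $\nu^{copy}_Q$ is not injective (several copy variables $Y_{m+j}$ of $Q$ share the same $N_{m+k}$), so ``$\mu|_{M''_{copy}}$ is forced to be a surjection onto $M_{copy}$'' does not follow from the signature alone---you only know that each $N_{m+k}$ occurs in the copy-signature exactly $|C'_k|$ times. What the paper does, and what your plan is missing, is an explicit assignment algorithm: for the $j$th copy variable $Y''_{m+j}$ of $Q''$, look up the entry $N_{m+k}$ in position $j$ of $\Phi_c[{\cal C}^{(Q'')}]$, find that this is the $l$th occurrence (from the left) of $N_{m+k}$ in that vector, and set $\mu(Y''_{m+j})$ to be the $l$th copy variable (in some fixed enumeration) of the equivalence class $C'_k$. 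Only with such a construction do you get a well-defined bijection $M''_{copy}\to M_{copy}$, and only then can you verify condition~(5) of Definition~\ref{magic-mapping-def}: the relational template of $\mu(g_j)$ (fixed by the non-copy part of $\mu$) coincides with that of the representative of $C'_k$, so attaching any copy variable from $C'_k$ yields an actual subgoal of $Q$. Once you add this explicit step your argument is complete and coincides with the paper's.
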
 

The proof of Proposition~\ref{q-same-scale-mpng-prop} is constructive: That is, the proof generates a SCVM from  the query $Q''$ to the query $Q$ of the statement of Proposition~\ref{q-same-scale-mpng-prop}. 

\begin{proof} 
We are given that there exists a nonempty monomial class ${\cal C}^{(Q'')}$ for the query $Q''$  w.r.t. the family of databases $\{ D_{\bar{N}^{(i)}}(Q) \}$, such that the multiplicity monomial of ${\cal C}^{(Q'')}$ is the wave of the query $Q$  w.r.t. $\{ D_{\bar{N}^{(i)}}(Q) \}$. Then, by definition of multiplicity monomials (Section~\ref{multiplicity-monomial-sec}) and of copy-/noncopy-sigature vectors (Section~\ref{assoc-defs-sec}), we have that: 
\begin{itemize} 
	\item In case $m \geq 1$, the vector $\Phi_n[{\cal C}^{(Q'')}]$ must be a permutation of the vector $[ Y_1 \ \ldots \ Y_m ]$; and 
	\item In case $r \geq 1$, the vector $\Phi_c[{\cal C}^{(Q'')}]$ must be a permutation of the vector $[ \nu^{copy}_Q(Y_{m+1})$ $\ $ $\ldots$ $\ $ $\nu^{copy}_Q(Y_{m+r}) ]$. 
\end{itemize} 

By definition of monomial classes and from the fact that the monomial class ${\cal C}^{(Q'')}$ for the query $Q''$  w.r.t. the family of databases $\{ D_{\bar{N}^{(i)}}(Q) \}$ is not empty (see Section~\ref{total-set-of-assoc-sec} for the relevant results), we have that for each $i \in {\mathbb N}_+$, the monomial class ${\cal C}^{(Q'')}$ has an association with at least one $t^*_Q$-valid assignment mapping for $Q''$ and $\{ D_{\bar{N}^{(i)}}(Q) \}$. Fix an arbitrary $i \in {\mathbb N}_+$ and consider an arbitrary such association in ${\cal C}^{(Q'')}$. Denote the association by ${\cal A}^{(init)}_*$, and denote by $\theta^{(init)}_*$ its unity $t^*_Q$-valid assignment mapping for $Q''$ and $\{ D_{\bar{N}^{(i)}}(Q) \}$; the mapping $\theta^{(init)}_*$ exists by Proposition~\ref{same-head-unity-prop}. 

By Proposition~\ref{Observation_three_five}, in case the association ${\cal A}^{(init)}_*$ is such that the mapping $\theta^{(init)}_*$ induces a mapping from the subgoals of the query $Q''$ into two or more copy-neutral canonical databases (in $D_{\bar{N}^{(i)}}(Q)$) for the query $Q$, there must exist an association, call it  ${\cal A}_*$, that has the same atom-signature as  ${\cal A}^{(init)}_*$ and such that the unity valid assignment mapping for ${\cal A}_*$, call this mapping $\theta_*$, induces a mapping from the subgoals of the query $Q''$ into a single copy-neutral canonical database (in $D_{\bar{N}^{(i)}}(Q)$) for the query $Q$, call this database $D_*$. Observe that from the fact that ${\cal A}^{(init)}_*$ and ${\cal A}_*$ have the same  atom-signature, we have that ${\cal A}_*$ belongs to the monomial class ${\cal C}^{(Q'')}$, just as  ${\cal A}^{(init)}_*$ does. In addition, ${\cal A}^{(init)}_*$ and ${\cal A}_*$ have the same copy-signature (which is $\Phi_c[{\cal C}^{(Q'')}]$), as well as the same noncopy-signature (which is $\Phi_n[{\cal C}^{(Q'')}]$). 

If, on the other hand, the association  ${\cal A}^{(init)}_*$ is such that the mapping $\theta^{(init)}_*$ induces a mapping from the subgoals of the query $Q''$ into a single copy-neutral canonical database (in $D_{\bar{N}^{(i)}}(Q)$) for the query $Q$, call this database $D_*$, then for the remainder of the proof we refer to ${\cal A}^{(init)}_*$ as ${\cal A}_*$, and refer to $\theta^{(init)}_*$ as $\theta_*$. 

Now denote by $\nu_{Q''}$ the mapping (i) whose domain in the set of all the terms of the query $Q''$ that are not copy variables of $Q''$, and (ii) such that on the entire domain of $\nu_{Q''}$, the mapping $\nu_{Q''}$ coincides with the mapping $\theta_*$. Further, define $\mu'_{Q''}$ as the composition $\nu^{(i)}_Q \circ \nu_{Q''}$ of the mapping $\nu_{Q''}$ with the mapping $\nu^{(i)}_Q$ defined in Section~\ref{nu-sec}. By definition, $\mu'_{Q''}$ is a mapping from all the terms of the query $Q''$ that are not copy variables of $Q''$ to terms of the query $Q$. Finally, define $\mu_{Q''}$ as the mapping (i) whose domain is the set of all terms of the query $Q''$ (that is, including all the copy variables of $Q''$), and (ii) such that on the entire domain of $\mu'_{Q''}$, the mapping $\mu_{Q''}$ coincides with the mapping $\mu'_{Q''}$. Observe that in case $r$ $=$ $0$, the mapping $\mu_{Q''}$ is fully specified and is unique. 

It remains to define $\mu_{Q''}$ on the copy variables of the query $Q''$, in case, which we refer to as (iii), where $r \geq 1$. In this case, for each $j \in \{ 1,\ldots, r\}$, we define $\mu_{Q''}(Y''_{m+j})$ as follows: Suppose the $j$th element of the vector  $\Phi_c[{\cal C}^{(Q'')}]$, being the variable (in the vector $\bar  N$) $N_{m+k}$ for some $1 \leq k \leq w$,\footnote{By definition of the wave of the query $Q$, the vector $\Phi_c[{\cal C}^{(Q'')}]$ cannot contain the constant $1$.}  occurs in the vector $\Phi_c[{\cal C}^{(Q'')}]$ a total of $n$ times, where $1 \leq n \leq r$. Suppose further that out of these $n$ positions in which this fixed variable $N_{m+k}$ occurs in the vector $\Phi_c[{\cal C}^{(Q'')}]$, our fixed position $j$ is the $l$th such position from the left, $1 \leq l \leq n$. Then, by definition of the wave of the query $Q$, it must be that for the equivalence class, call it ${\cal C}(Y_{m+k})$, for the {\em same} value $k$ as above (i.e., the $k$ in $Y_{m+k}$ is the same as in the $N_{m+k}$), of the copy-sensitive subgoal $s$ of $Q$ where the copy variable of $s$ is $Y_{m+k}$, the class ${\cal C}(Y_{m+k})$ has exactly $n$ copy-sensitive subgoals of the query $Q$. All these $n$ subgoals of the query $Q$ have the same relational template, but have distinct copy variables, call the assortment of these copy variables $Y_{i1},$ $Y_{i2},$ $\ldots,$ $Y_{in}$, with $i1 < i2 < \ldots < in$. (Naturally, the variable $Y_{m+k}$ is one of these $n$ copy variables $Y_{i1},$ $Y_{i2},$ $\ldots,$ $Y_{in}$.) Then define  $\mu_{Q''}(Y''_{m+j})$ to be the copy variable $Y_{il}$ of the query $Q$, where $Y_{il}$ is the $l$th variable in the list $($ $Y_{i1},$ $Y_{i2},$ $\ldots,$ $Y_{in}$ $)$. Note that this assignment algorithm terminates and results in the same assignments, in $\mu_{Q''}$, for all copy variables of the query $Q''$ independently of the order in  which we choose the positions $j$ out of the set $\{ 1,\ldots,r \}$. Observe also that  $\mu_{Q''}$ is still a mapping once we are done with all the assignments in (iii). (Indeed, each copy variable of the query $Q''$, in case $r \geq 1$, is assigned by $\mu_{Q''}$ to a distinct copy variable of the query $Q$.) 
Finally, observe that in case $r = w \geq 1$, the mapping  $\mu_{Q''}(Y''_{m+j})$  is defined by (iii), for each $j \in \{ 1,\ldots, r\}$, as the copy variable $Y_{m+k}$ of query $Q$, where $k$ is such that $N_{m+k}$ is the $j$th element of the vector  $\Phi_c[{\cal C}^{(Q'')}]$. 

We now show that properties (1) through (5) of same-scale covering mappings (SCVMs) in Definition~\ref{magic-mapping-def} are satisfied by  the mapping $\mu_{Q''}$. (Hence, we conclude that the mapping $\mu_{Q''}$ is a SCVM from the query $Q''$ to the query $Q$.) 

\begin{itemize} 
	\item[(1)] This property in Definition~\ref{magic-mapping-def} is satisfied by the mapping $\mu_{Q''}$ due to the fact that the association ${\cal A}_*$ has a (unity) valid assignment mapping, by definition of valid assignment mappings,  and by definition of the mapping $\nu^{(i)}_Q$ used in the construction of the mapping $\mu_{Q''}$. 
	\item[(2)] This property in Definition~\ref{magic-mapping-def} is satisfied by the mapping $\mu_{Q''}$ due to the fact that the association ${\cal A}_*$ has a (unity) $t^*_Q$-valid assignment mapping, by definition of $t^*_Q$-valid assignment mappings, and by definition of the mapping $\nu^{(i)}_Q$  used in the construction of the mapping $\mu_{Q''}$. 
	\item[(3)] This property in Definition~\ref{magic-mapping-def} is satisfied by the mapping $\mu_{Q''}$ due to the facts that: 
		\begin{itemize}
			\item The noncopy-signature of the association  ${\cal A}_*$ is (in case $m \geq 1$) a permutation of the list $[ Y_1 \ \ldots \ Y_m ]$, and by definition of the mapping $\mu'_{Q''}$ (and hence also of the mapping $\mu_{Q''}$) on the set of multiset noncopy variables of the query $Q''$; thus, $\mu_{Q''}$ maps the set of multiset noncopy variables of the query $Q''$ {\em onto} the (same-cardinality) set of multiset noncopy variables of the query $Q$; and 
			\item The copy-signature of the association  ${\cal A}_*$ is (in case $r \geq 1$) a permutation of the list $[ \nu^{copy}_Q(Y_{m+1})$ $\ $ $\ldots$ $\ $ $\nu^{copy}_Q(Y_{m+r}) ]$, and by definition of the mapping $\mu_{Q''}$ on the set of copy variables of the query $Q''$; thus, $\mu_{Q''}$ maps the set of copy variables of the query $Q''$ {\em onto} the (same-cardinality) set of copy variables of the query $Q$. 
		\end{itemize}
	\item[(4)] This property in Definition~\ref{magic-mapping-def} is satisfied by the mapping $\mu_{Q''}$ due to the fact that, by its definition, mapping $\mu_{Q''}$ maps each relational subgoal of the query $Q''$ into a unique element of the set ${\cal S}_{C(Q)}$ of subgoals of the query $Q$.  
	\item[(5)] This property in Definition~\ref{magic-mapping-def} is satisfied by the mapping $\mu_{Q''}$ due to the facts that: 
		\begin{itemize}
			\item by its definition, mapping $\mu_{Q''}$ maps each copy-sensitive subgoal of the query $Q''$ into a subgoal of the query $Q$; and   
			\item the copy-signature of the association  ${\cal A}_*$ does not have occurrences of the constant $1$, hence the mapping $\mu_{Q''}$ maps each copy-sensitive subgoal of the query $Q''$ into a {\em copy-sensitive} subgoal of the query $Q$. 
		\end{itemize} 
	
\end{itemize} 

We conclude that $\mu_{Q''}$ is, by construction, a SCVM from the query $Q''$ to the query $Q$. 
\end{proof}

\subsection{Extended Example: Basic Notation and Constructs} 
\label{main-proof-ex-sec} 

In this section we provide an extended example that illustrates the notions and constructions introduced in  Sections~\ref{basics-sec} through \ref{monomial-class-mappings-sec} 
%
of the proof of Theorem~\ref{magic-mapping-prop}. The example uses three CCQ queries, $Q$, $Q'$, and $Q''$; each of the queries is an explicit-wave query by part (1) of Definition~\ref{expl-wave-def}. 
By the results in this paper, for the queries $Q$ and $Q'$ of Example~\ref{main-proof-ex} we have that $Q \equiv_C Q'$. In the beginning of the example, 
we exhibit a SCVM from $Q'$ to $Q$. (The existence of the mapping is stipulated by Theorem~\ref{magic-mapping-prop}.) 
At the same time, it is easy to ascertain that there does not exist a SCVM from the query $Q''$ to the query $Q$ of  Example~\ref{main-proof-ex}. Thus, by Theorem~\ref{magic-mapping-prop}, $Q \equiv_C Q''$ cannot hold for the queries $Q''$ and $Q$ of  Example~\ref{main-proof-ex}. We build on this example a little later 
 (see Example~\ref{main-proof-cont-ex} in Section~\ref{main-proof-cont-ex-sec}), to show how to use the proof of Theorem~\ref{magic-mapping-prop} to construct a counterexample database to $Q \equiv_C Q''$. \reminder{I have already checked that I really show {\em and discuss!} the counterexample database in the example} 
 
At the end of Example~\ref{main-proof-ex}, we also illustrate the constructs of Section~\ref{monomial-class-mappings-sec}, by discussing ``the wave'' of the query $Q$ (see Definition~\ref{the-wave-def} in Section~\ref{monomial-class-mappings-sec}) and the monomial classes of the queries $Q$ and $Q'$ that ``have the wave'' of $Q$. We also show that query $Q''$ does not ``have the wave'' of the query $Q$, and discuss the implications of this fact.

\begin{example}
\label{main-proof-ex} 
Let CCQ queries $Q$, $Q'$, and $Q''$ be as follows. 
\begin{tabbing}
$Q(X_1) \leftarrow p(X_1,Y_1), p(X_1, X_2; Y_2), \{ Y_1, Y_2 \} .$ \\
$Q'(X'_1) \leftarrow p(X'_1,Y'_1), p(X'_1, X'_2; Y'_2), p(X'_1,X'_3), \{ Y'_1, Y'_2 \} .$ \\
$Q''(X''_1) \leftarrow p(X''_1,X''_2), p(X''_1, Y''_1; Y''_2), \{ Y''_1, Y''_2 \} .$ 
\end{tabbing} 
Observe that by each of the three queries having exactly one copy-sensitive subgoal, each of $Q$, $Q'$, and $Q''$ is an explicit-wave query. (See part (1) of Definition~\ref{expl-wave-def}.) 

By Theorem~\ref{cvm-containm-thm} and by the existence of a SCVM from $Q$ to $Q'$ and of another SCVM from $Q'$ to $Q$, we have that $Q \equiv_C Q'$. A SCVM $\mu$ from $Q'$ to $Q$ is $\mu = \{ X'_1 \rightarrow X_1, Y'_1 \rightarrow Y_1, X'_2 \rightarrow X_2, Y'_2 \rightarrow Y_2,  X'_3 \rightarrow X_2     \}$. 

It is easy to see that there does not exist a SCVM from $Q''$ to $Q$. (Indeed, for each mapping from the terms of $Q''$ to the terms of $Q$, the mapping violates at least one of conditions (3) through (5) of Definition~\ref{magic-mapping-def}.) Thus, by Theorem~\ref{magic-mapping-prop}, $Q \equiv_C Q''$ cannot hold. Later, we build on this example (see Example~\ref{main-proof-cont-ex} in Section~\ref{main-proof-cont-ex-sec}) to show how to use the proof of Theorem~\ref{magic-mapping-prop} to construct a counterexample database to $Q \equiv_C Q''$. 

\mbox{}

We now use queries $Q$ and $Q''$ to illustrate the notation and constructions of the proof of Theorem~\ref{magic-mapping-prop}, sequentially by subsections of the proof. 

\paragraph{Constructing Database $D_{\bar{N}^{(i)}}(Q)$ for $\bar{N}^{(i)}$ $=$ $[2 \ 3]$} 
We first use the notation introduced in Section~\ref{basics-sec} of the proof of Theorem~\ref{magic-mapping-prop}. We have that $m = |M_{noncopy}| = |\{ Y_1 \}| = 1$, and that $r = |M_{copy}| = |\{ Y_2 \}| = 1$. The set ${\cal S}_{C(Q)}$ of the representative-element subgoals of the query $Q$ is ${\cal S}_{C(Q)} = \{ p(X_1,Y_1), p(X_1, X_2; Y_2) \}$, with $w = 1$. The reason is, the only relational subgoal of $Q$, call this subgoal $h_1$, is the representative element of the equivalence class $\{ h_1 \}$, and the only copy-sensitive subgoal of $Q$, call this subgoal $h_2$, is the representative element of the  equivalence class $\{ h_2 \}$. 

We now follow Section~\ref{db-constr-sec} of the proof of Theorem~\ref{magic-mapping-prop}, to illustrate the construction of a database in the family  $\{ D_{\bar{N}^{(i)}}(Q) \}$ for the query $Q$. We define mapping $\nu_0 = \{ X_1 \rightarrow a, X_2 \rightarrow b \}$, for distinct constants $a$ and $b$. Then we have that $S_0 = \{ a, b \}$, and that $t^*_Q = ( a )$. As $m+w = 2$ for the query $Q$, the vector $\bar{N}$ for $Q$ comprises two variables, $N_1$ (intuitively for the multiset noncopy variable $Y_1$ of $Q$) and $N_2$  (intuitively for the copy variable $Y_2$ of $Q$). Let $i$ be a fixed natural number (i.e., we treat $i$ as the same constant throughout this example), and let the vector $\bar{N}^{(i)} = [2 \ 3]$. That is, $N_1^{(i)} = 2$, and $N_2^{(i)} = 3$. For two distinct constants $c$ and $d$, such that $c$ and $d$ are also distinct from the constants $a$ and $b$ used above to form the set $S_0$, let $S^{(i)}_1 = \{ c,d \}$; this set, of cardinality  $N_1^{(i)}$, provides the domain (in the database) of the multiset noncopy variable $Y_1$ of $Q$. Then we have, by the definitions in Section~\ref{db-constr-sec}, that: 
\begin{itemize} 
	\item $S^{(i)}_* = S_0 \bigcup S^{(i)}_1$; 
	\item $\nu^{(i)}_Q = \{ a \rightarrow X_1, b \rightarrow X_2, c \rightarrow Y_1, d \rightarrow Y_1 \}$; 
	\item $\nu^{copy}(Y_2) = N_2^{(i)} = 3$; and 
	\item $\nu^{copy}_Q(Y_2) = N_2$.  
\end{itemize}

For the set ${\cal S}^{(i)} = \{ (c), (d) \}$, we have that $\nu_{(c)} = \{ X_1 \rightarrow a, X_2 \rightarrow b, Y_1 \rightarrow c, Y_2 \rightarrow 3  \}$ and that $\nu_{(d)} = \{ X_1 \rightarrow a, X_2 \rightarrow b, Y_1 \rightarrow d, Y_2 \rightarrow 3  \}$. We use mappings $\nu_{(c)}$ and $\nu_{(d)}$ each in one iteration of the main construction cycle for the database $D_{\bar{N}^{(i)}}(Q)$. The mapping $\nu_{(c)}$ applied to the two atoms in the set ${\cal S}_{C(Q)}$ results in ground atoms $p(a,c;1)$ and $p(a,b;3)$, and the mapping $\nu_{(d)}$ applied to the set ${\cal S}_{C(Q)}$ results in ground atoms $p(a,d;1)$ and (again) $p(a,b;3)$. Therefore, by construction we have the database $D_{\bar{N}^{(i)}}(Q)$ $=$ $\{ \ p(a,c;1),$ $p(a,b;3),$ $p(a,d;1) \ \}$. We will refer to the ground atom $p(a,c;1)$ in the database $D_{\bar{N}^{(i)}}(Q)$ as $d_1$, to the atom  $p(a,b;3)$ as $d_2$, and to the atom $p(a,d;1)$ as $d_3$.

\paragraph{Construction of the Terms for ${\cal F}_{(Q)}^{(Q'')}$}
We now follow Sections~\ref{valid-map-sec} through 
\ref{monomial-class-mappings-sec} 
of the proof of Theorem~\ref{magic-mapping-prop}, to illustrate the construction of the terms for the function ${\cal F}_{(Q)}^{(Q'')}$, for the query $Q''$ and for the database  $D_{\bar{N}^{(i)}}(Q)$ as constructed above in this example.

The number $G$ of subgoals of the query $Q''$ is $G = 2$. Denote by $g_1$ the copy-sensitive subgoal $p(X''_1, Y''_1; Y''_2)$ of $Q''$, and by $g_2$ the relational subgoal $p(X''_1,X''_2)$ of the query. In query $Q$, denote by $h_1$ the subgoal $p(X_1,Y_1)$ and by $h_2$ the subgoal $p(X_1, X_2; Y_2)$. 

There are nine associations between the $G = 2$ subgoals of the query $Q''$ and the three ground atoms ($d_1$, $d_2$, $d_3$) of the database $D_{\bar{N}^{(i)}}(Q)$. We list all the associations in this table:

\noindent
 \begin{tabular}{llllll}
{\bf ID } & \hspace{-0.4cm} {\bf DB} & 
{\bf $\Psi_a[{\cal A}_j]$} & 
{\bf $\Phi_{n}$} & 
{\bf $\Phi_c$}  & {\bf $\Gamma_{\bar{S}}^{(t^*_Q)}(Q'',D_{\bar{N}^{(i)}}(Q))$} \\
\hline
\hline

${\cal A}_1$ & \hspace{-0.4cm}  $[d_1,d_1]$ & $[h_1,h_1]$ & $Y_1$ & $1$ & $(a,c,1)$ \\

${\cal A}_2$ & \hspace{-0.4cm}  $[d_1,d_2]$ & $[h_1,h_2]$ & $Y_1$ & $1$ & $(a,c,1)$ \\

${\cal A}_3$ & \hspace{-0.4cm}  $[d_1,d_3]$ & $[h_1,h_1]$ & $Y_1$ & $1$ & $(a,c,1)$ \\

${\cal A}_4$ & \hspace{-0.4cm}  $[d_2,d_1]$ & $[h_2,h_1]$ & $X_2$ & $N_2$ & $(a,b,1), (a,b,2),(a,b,3)$ \\ 

${\cal A}_5$ & \hspace{-0.4cm}  $[d_2,d_2]$ & $[h_2,h_2]$ & $X_2$ & $N_2$ & $(a,b,1), (a,b,2),(a,b,3)$ \\ 

${\cal A}_6$ & \hspace{-0.4cm}  $[d_2,d_3]$ & $[h_2,h_1]$ & $X_2$ & $N_2$ & $(a,b,1), (a,b,2),(a,b,3)$ \\ 

${\cal A}_7$ & \hspace{-0.4cm}  $[d_3,d_1]$ & $[h_1,h_1]$ & $Y_1$ & $1$ & $(a,d,1)$ \\

${\cal A}_8$ & \hspace{-0.4cm}  $[d_3,d_2]$ & $[h_1,h_2]$ & $Y_1$ & $1$ & $(a,d,1)$ \\

${\cal A}_9$ & \hspace{-0.4cm}  $[d_3,d_3]$ & $[h_1,h_1]$ & $Y_1$ & $1$ & $(a,d,1)$ \\

\hline
\hline
\end{tabular}

The columns of the table, from left to right, refer to: 
\begin{enumerate}
	\item Association ID, ${\cal A}_j$, for each of the associations ${\cal A}_1$ through ${\cal A}_9$ between query $Q''$ and database $D_{\bar{N}^{(i)}}(Q)$; 
	\item List of those ground atoms of the database that are associated by ${\cal A}_j$ with the subgoals of $Q''$; this list is to be read as ``the first item in the list is associated by ${\cal A}_j$ with subgoal $g_1$ of $Q''$,'' and ``the second item in the list is associated by ${\cal A}_j$ with subgoal $g_2$ of $Q''$;'' 
	\item Atom-signature $\Psi_a[{\cal A}_j]$ of the association ${\cal A}_j$; this list is to be read as ``the first item in the list is associated by ${\cal A}_j$ with subgoal $g_1$ of $Q''$,'' and ``the second item in the list is associated by ${\cal A}_j$ with subgoal $g_2$ of $Q''$;'' 
	\item Noncopy-signature $\Phi_{n}[{\cal A}_j]$ of the association ${\cal A}_j$; 
	\item Copy-signature $\Phi_{c}[{\cal A}_j]$ of the association ${\cal A}_j$; and 
	\item All the tuples  contributed by the association ${\cal A}_j$ to the set  $\Gamma_{\bar{S}}^{(t^*_Q)}(Q'',D_{\bar{N}^{(i)}}(Q))$. (We assume that the columns of the relation $\Gamma_{\bar{S}}^{(t^*_Q)}(Q'',D_{\bar{N}^{(i)}}(Q))$ are, from left to right, $X''_1,$ $Y''_1$, and $Y''_2$.)  
\end{enumerate} 

For instance, the next-to-last row of the table is to be read as follows: Association ${\cal A}_8$ for the query $Q''$ and for the database $D_{\bar{N}^{(i)}}(Q)$ as defined above, associates subgoal $g_1$ of $Q''$ with atom $d_3$ of $D_{\bar{N}^{(i)}}(Q)$, and associates subgoal $g_2$ of $Q''$ with atom $d_2$ of $D_{\bar{N}^{(i)}}(Q)$. Therefore, the atom-signature $\Psi_a[{\cal A}_8]$ associates $g_1$ with subgoal $h_1$ of the query $Q$, and associates $g_2$ with subgoal $h_2$ of $Q$. The noncopy-signature of ${\cal A}_8$ maps the multiset noncopy variable $Y''_1$ of the query $Q''$ to the multiset noncopy variable $Y_1$ of the query $Q$, and the copy-signature of ${\cal A}_8$ maps the copy variable $Y''_2$ of the query $Q''$ to the ``copy value'' $1$ of the relational subgoal $h_1$ of the query $Q$. Finally, association ${\cal A}_8$ contributes tuple $(a,d,1)$ to the set $\Gamma_{\bar{S}}^{(t^*_Q)}(Q'',D_{\bar{N}^{(i)}}(Q))$. 

The construction of the table uses the notation and definitions of Section~\ref{assoc-defs-basics-sec}: The mapping $\psi^{gen(Q)}_{\bar{N}^{(i)}}$, as induced by the mapping $\nu^{(i)}_Q$, is defined as $\psi^{gen(Q)}_{\bar{N}^{(i)}} = \{ d_1 \rightarrow h_1, d_2 \rightarrow h_2, d_3 \rightarrow h_1 \}$. Then the atom-signature of, say, association ${\cal A}_8$ is computed as the vector with first element $\psi^{gen(Q)}_{\bar{N}^{(i)}}[d_3] = h_1$ and with second  element $\psi^{gen(Q)}_{\bar{N}^{(i)}}[d_2] = h_2$. 

The computation of the noncopy-signature $\Phi_{n}[{\cal A}_j]$ for each association ${\cal A}_j$ uses an arbitrary valid assignment mapping, call it $\theta$, for $Q''$, $D_{\bar{N}^{(i)}}(Q)$, and ${\cal A}_j$, as well as the mapping $\nu^{(i)}_Q$ defined earlier in this example. 
Then the noncopy-signature of, say, association ${\cal A}_8$ is computed as the unary (because $m = 1$) vector $\Phi_{n}[{\cal A}_8]$ = $[\nu^{(i)}_Q(\theta_8(Y''_1))] = [\nu^{(i)}_Q(d)] = Y_1$. The reason is, ${\cal A}_8$ generates a unique valid assignment mapping $\theta_8 = \{ X''_1 \rightarrow a, Y''_1 \rightarrow d, Y''_2 \rightarrow 1, X''_2 \rightarrow b  \}$ for $Q''$ and $D_{\bar{N}^{(i)}}(Q)$. (By definition, $\theta_8$ is a unity $t^*_Q$-valid assignment mapping for $Q''$, $D_{\bar{N}^{(i)}}(Q)$, and ${\cal A}_8$.) Then we obtain for  $\Phi_{n}[{\cal A}_8]$ that $[\nu^{(i)}_Q(\theta_8(Y''_1))] = [\nu^{(i)}_Q(d)] = Y_1$. 

The computation of the copy-signature $\Phi_{c}[{\cal A}_j]$ for each association ${\cal A}_j$ uses the mapping $\nu^{copy-sig}$, which maps subgoal $h_1$ of the query $Q$ to constant $1$ (because $h_1$ is a relational atom), and maps copy-sensitive subgoal $h_2$ of $Q$, with copy variable $Y_2$, to variable $\nu^{copy}_Q(Y_2) = N_2$ in the vector $\bar{N}$. Then the copy-signature of, say, association ${\cal A}_8$ is computed as the unary (because $r = 1$) vector $\Phi_{c}[{\cal A}_8]$ = $[\nu^{copy-sig}(\psi^{gen(Q)}_{\bar{N}^{(i)}}[d_3])] = [\nu^{copy-sig}(h_1)] = 1$. 

Finally, we use all the $t^*_Q$-valid assignment mappings for $Q''$, $D_{\bar{N}^{(i)}}(Q)$, and each ${\cal A}_j$, to determine the contributions of each association ${\cal A}_j$ to the set  $\Gamma_{\bar{S}}^{(t^*_Q)}(Q'',D_{\bar{N}^{(i)}}(Q))$. For instance, for the association ${\cal A}_8$ we use the  mapping $\theta_8$ (which is the only $t^*_Q$-valid assignment mapping for ${\cal A}_8$) to construct the tuple $(a,d,1)$ for the set $\Gamma_{\bar{S}}^{(t^*_Q)}(Q'',D_{\bar{N}^{(i)}}(Q))$. 

We now illustrate the construction of the set ${\mathbb A}^{(i)}_{Q''}$ for the query $Q''$ and database $D_{\bar{N}^{(i)}}(Q)$, as defined in Section~\ref{total-set-of-assoc-sec}. The set comprises all the nine associations above: ${\mathbb A}^{(i)}_{Q''}$ $=$ $\{ {\cal A}_1,$ ${\cal A}_2,$ $\ldots,$ ${\cal A}_9 \}$. We use Proposition~\ref{captures-prop} to conclude that the tuples shown in the last column of the table in this example are all and the only tuples in the set $\Gamma_{\bar{S}}^{(t^*_Q)}(Q'',D_{\bar{N}^{(i)}}(Q))$ for the query and for the database. 

Now we illustrate the construction of all the monomial classes, as equivalence classes of elements of the set ${\mathbb A}^{(i)}_{Q''}$ for the query $Q''$ and database $D_{\bar{N}^{(i)}}(Q)$, as defined in Section~\ref{monomial-classes-sec}. The classes are: 
\begin{itemize}
	\item ${\cal C}_1^{(Q'')}$ $=$ $\{ {\cal A}_1$, ${\cal A}_3$, ${\cal A}_7,$ ${\cal A}_9 \}$ (for the atom-signature $[h_1,h_1]$ of ${\cal A}_1$, ${\cal A}_3$, ${\cal A}_7$, and ${\cal A}_9$). We have that ${\Phi}_n^{{\cal C}_1^{(Q'')}} = [Y_1]$ and ${\Phi}_c^{{\cal C}_1^{(Q'')}} = [1]$.  
	\item ${\cal C}_2^{(Q'')}$ $=$ $\{ {\cal A}_2$, ${\cal A}_8 \}$ (for the atom-signature $[h_1,h_2]$ of ${\cal A}_2$ and ${\cal A}_8$). We have that ${\Phi}_n^{{\cal C}_2^{(Q'')}} = [Y_1]$ and ${\Phi}_c^{{\cal C}_2^{(Q'')}} = [1]$.  
	\item ${\cal C}_3^{(Q'')}$ $=$ $\{ {\cal A}_4$, ${\cal A}_6 \}$ (for the atom-signature $[h_2,h_1]$ of ${\cal A}_4$ and ${\cal A}_6$).  We have that ${\Phi}_n^{{\cal C}_3^{(Q'')}} = [X_2]$ and ${\Phi}_c^{{\cal C}_3^{(Q'')}} = [N_2]$.  
	\item ${\cal C}_4^{(Q'')}$ $=$ $\{ {\cal A}_5 \}$ (for the atom-signature $[h_2,h_2]$ of ${\cal A}_5$).  We have that ${\Phi}_n^{{\cal C}_4^{(Q'')}} = [X_2]$ and ${\Phi}_c^{{\cal C}_4^{(Q'')}} = [N_2]$.   
\end{itemize} 

Each of the four monomial classes has the noncopy-signature and the copy-signature of all its constituent associations. That is, for ${\cal C}_1^{(Q'')}$, we have above that ${\Phi}_n^{{\cal C}_1^{(Q'')}} = [Y_1]$ and ${\Phi}_c^{{\cal C}_1^{(Q'')}} = [1]$, and so on. 

For those terms of the query $Q$ that are not copy variables, we use the mapping $\nu_0$ and the set $S_1^{(i)}$ to determine that $Dom^{(i)}_Q(X_1) = \{ a \}$, $Dom^{(i)}_Q(X_2) = \{ b \}$, and $Dom^{(i)}_Q(Y_1) = \{ c,d \}$. Further, $DomLabel_Q(X_1)$ $=$ $DomLabel_Q(X_2)$ $ = 1$, and $DomLabel_Q(Y_1) = N_1$. 

We now compute the multiplicity monomial for each of the four monomial classes for $Q''$ and for $D_{\bar{N}^{(i)}}(Q)$. For each of ${\cal C}_1^{(Q'')}$ and ${\cal C}_2^{(Q'')}$, we have that $\Pi_{{\Phi}_n^{{\cal C}_1^{(Q'')}}}$ $=$ $\Pi_{{\Phi}_n^{{\cal C}_2^{(Q'')}}}$ is the product $\Pi_{j = 1}^1 DomLabel_Q(Y_j) = N_1$. Further, we have that $\Pi_{{\Phi}_c^{{\cal C}_1^{(Q'')}}}$ $=$ $\Pi_{{\Phi}_c^{{\cal C}_2^{(Q'')}}}$ is the product $\Pi_{j = 1}^1 1$. Thus, the multiplicity monomial for each of  ${\cal C}_1^{(Q'')}$ and ${\cal C}_2^{(Q'')}$ is the monomial $N_1 \times 1 = N_1$. Observe that $N_1^{(i)} = 2$ in our vector $\bar{N}^{(i)} = [2 \ 3]$, and that the value $N_1^{(i)} = 2$ of the monomial $N_1$ for each of ${\cal C}_1^{(Q'')}$ and ${\cal C}_2^{(Q'')}$ is the correct count of the two tuples, $(a, c, 1)$ and $(a, d, 1)$, contributed by each of the two classes {\em individually} to the set $\Gamma^{t^*_Q}_{\bar{S}}(Q'',D_{\bar{N}^{(i)}}(Q))$. Note that the projection of all these tuples on $Y''_1$ (in $\Gamma^{t^*_Q}_{\bar{S}}(Q'',D_{\bar{N}^{(i)}}(Q))$) is exactly all the elements of the set $Dom_Q^{(i)}(Y_1)$; recall that $Y_1$ is the only element of the vector $\Pi_{{\Phi}_n^{{\cal C}_1^{(Q'')}}}$ and of the vector $\Pi_{{\Phi}_n^{{\cal C}_2^{(Q'')}}}$. (See Proposition~\ref{noncopy-proj-properties-prop} for the details.)

For each of ${\cal C}_3^{(Q'')}$ and ${\cal C}_4^{(Q'')}$, we have that $\Pi_{{\Phi}_n^{{\cal C}_3^{(Q'')}}}$ $=$ $\Pi_{{\Phi}_n^{{\cal C}_4^{(Q'')}}}$ is the product $\Pi_{j = 1}^1 DomLabel_Q(X_2) = 1$. Further, we have that $\Pi_{{\Phi}_c^{{\cal C}_3^{(Q'')}}}$ $=$ $\Pi_{{\Phi}_c^{{\cal C}_4^{(Q'')}}}$ is the product $\Pi_{j = 1}^1 N_2 = N_2$. Thus, the multiplicity monomial for each of  ${\cal C}_3^{(Q'')}$ and ${\cal C}_4^{(Q'')}$ is the monomial $1 \times N_2 = N_2$. Observe that $N_2^{(i)} = 3$ in our vector $\bar{N}^{(i)} = [2 \ 3]$, and that the value $N_2^{(i)} = 3$ of the monomial $N_2$ for each of ${\cal C}_3^{(Q'')}$ and ${\cal C}_4^{(Q'')}$ is the correct count of the three tuples, $(a, b, 1)$, $(a, b, 2)$, and $(a, b, 3)$, contributed by each of the two classes {\em individually}  to the set $\Gamma^{t^*_Q}_{\bar{S}}(Q'',D_{\bar{N}^{(i)}}(Q))$. 
Note that the only element in the projection of all these tuples on $Y''_1$ (in $\Gamma^{t^*_Q}_{\bar{S}}(Q'',D_{\bar{N}^{(i)}}(Q))$) is exactly the only element of the set $Dom_Q^{(i)}(X_2)$; recall that $X_2$ is the only element of the vector $\Pi_{{\Phi}_n^{{\cal C}_3^{(Q'')}}}$ and of the vector $\Pi_{{\Phi}_n^{{\cal C}_4^{(Q'')}}}$. (See Proposition~\ref{noncopy-proj-properties-prop} for the details.) 

For the construction of the function ${\cal F}_{(Q)}^{(Q'')}$ from the above multiplicity monomials, please see Example~\ref{main-proof-cont-ex} (Section~\ref{main-proof-cont-ex-sec}). 

\paragraph{Construction of the Terms for ${\cal F}_{(Q)}^{(Q)}$} 
We now follow Sections~\ref{valid-map-sec} through 
\ref{monomial-class-mappings-sec} 
of the proof of Theorem~\ref{magic-mapping-prop}, to illustrate the construction of the terms for the function ${\cal F}_{(Q)}^{(Q)}$, for the query $Q$ and for the database  $D_{\bar{N}^{(i)}}(Q)$ as constructed above in this example. 
We follow steps similar to those used in the construction of the terms for the function ${\cal F}_{(Q)}^{(Q'')}$ for the query $Q''$, see preceding section of this example. As a result of the steps, we obtain four monomial classes for the query $Q$: 
\begin{itemize} 
	\item Monomial class ${\cal C}_1^{(Q)}$ has noncopy-signature $[Y_1]$ and copy-signature $[1]$; it contributes tuples $(a,c,1)$ and $(a,d,1)$ to the set $\Gamma^{t^*_Q}_{\bar{S}}(Q,D_{\bar{N}^{(i)}}(Q))$. The multiplicity monomial for ${\cal C}_1^{(Q)}$ is the term $N_1$.  
	\item Monomial class ${\cal C}_2^{(Q)}$ has noncopy-signature $[Y_1]$ and copy-signature $[N_2]$; it contributes tuples $(a,c,1)$, $(a,c,2)$, $(a,c,3)$, $(a,d,1)$, $(a,d,2)$, and $(a,d,3)$ to the set $\Gamma^{t^*_Q}_{\bar{S}}(Q,D_{\bar{N}^{(i)}}(Q))$. The multiplicity monomial for ${\cal C}_2^{(Q)}$ is the term $N_1 \times N_2$.  
	\item Monomial class ${\cal C}_3^{(Q)}$ has noncopy-signature $[X_2]$ and copy-signature $[1]$; it contributes tuple $(a,b,1)$ to the set $\Gamma^{t^*_Q}_{\bar{S}}(Q,D_{\bar{N}^{(i)}}(Q))$. The multiplicity monomial for ${\cal C}_3^{(Q)}$ is the term $1$ (i.e., constant $1$).  
	\item Monomial class ${\cal C}_4^{(Q)}$ has noncopy-signature $[X_2]$ and copy-signature $[N_2]$; it contributes tuples $(a,b,1)$, $(a,b,2)$, and $(a,b,3)$ to the set $\Gamma^{t^*_Q}_{\bar{S}}(Q,D_{\bar{N}^{(i)}}(Q))$. The multiplicity monomial for ${\cal C}_4^{(Q)}$ is the term $N_2$.  
\end{itemize} 

For the construction of the function ${\cal F}_{(Q)}^{(Q)}$ from the above multiplicity monomials, please see Example~\ref{main-proof-cont-ex} (Section~\ref{main-proof-cont-ex-sec}). 

\paragraph{Construction of the Terms for ${\cal F}_{(Q)}^{(Q')}$}
The construction of the terms for the function ${\cal F}_{(Q)}^{(Q')}$ is almost identical to that for the function ${\cal F}_{(Q)}^{(Q)}$, because the only difference between $Q$ and $Q'$ is in the presence of an extra subgoal $p(X'_1,X'_3)$ in $Q'$, and this subgoal does not introduce any multiset variables (of $Q'$). Thus, we obtain the same monomial classes for $Q'$ and for $Q$ (modulo renaming all the variables of $Q$ into ``same-name'' variables of $Q'$, for instance variable $X_1$ of $Q$ corresponds to variable $X'_1$ of $Q'$). Please see Example~\ref{main-proof-cont-ex} (Section~\ref{main-proof-cont-ex-sec}) for the construction of the function ${\cal F}_{(Q)}^{(Q')}$ from the above multiplicity monomials. 

\paragraph{The Wave of Query $Q$ w.r.t. $\{ D_{\bar{N}^{(i)}}(Q) \}$} 
We now use the monomial classes of the queries $Q$, $Q'$, and $Q''$ to illustrate the notion of the ``wave of CCQ query,'' which was introduced in Section~\ref{monomial-class-mappings-sec}. By Definition~\ref{the-wave-def}, the wave of the query $Q$ of this example, w.r.t. the family of databases $\{ D_{\bar{N}^{(i)}}(Q) \}$, is the product $N_1$  $\times$ $N_2$. By Proposition~\ref{q-has-wave-prop}, the query $Q$ has a nonempty monomial class w.r.t. $\{ D_{\bar{N}^{(i)}}(Q) \}$, specifically the monomial class  ${\cal C}_2^{(Q)}$, such that the multiplicity monomial of the class  ${\cal C}_2^{(Q)}$ is exactly the wave of the query $Q$ w.r.t. $\{ D_{\bar{N}^{(i)}}(Q) \}$. 

Now the query $Q'$ of this example also has a nonempty monomial class w.r.t. $\{ D_{\bar{N}^{(i)}}(Q) \}$, such that the multiplicity monomial of that monomial class is the wave of the query $Q$. (Recall that in this example we obtained the same monomial classes for $Q'$ and for $Q$, modulo renaming all the variables of $Q$ into ``same-name'' variables of $Q'$.) Thus, by Proposition~\ref{q-same-scale-mpng-prop}, there must exist a SCVM from the query $Q'$ to the query $Q$. Indeed, the same-scale covering  mapping $\mu$ of the beginning of this example is built as specified in the proof of  Proposition~\ref{q-same-scale-mpng-prop}. 

Finally, observe that for the query $Q''$ of this example and for each nonempty monomial class of $Q''$ w.r.t. $\{ D_{\bar{N}^{(i)}}(Q) \}$, the multiplicity monomial of the monomial class is {\em not} the wave of the query $Q$. Thus, Proposition~\ref{q-same-scale-mpng-prop} does not apply. Indeed, as we showed in the beginning of this example, there does not exist a SCVM from $Q''$ to $Q$. Then from Theorem~\ref{magic-mapping-prop} we conclude that $Q \equiv_C Q''$ does not hold for the queries $Q$ and $Q''$ of this example. Please see Example~\ref{main-proof-cont-ex} (Section~\ref{main-proof-cont-ex-sec}) for a discussion of how the database $D_{\bar{N}^{(i)}}(Q)$ constructed earlier (in Example~\ref{main-proof-ex}) is a counterexample database for $Q \equiv_C Q''$. In addition, for the wave ${\cal P}^{(Q)}_*$ $=$ $N_1$ $\times$ $N_2$ of the query $Q$ w.r.t. $\{ D_{\bar{N}^{(i)}}(Q) \}$, Example~\ref{main-proof-cont-ex} points out the presence of  the monomial ${\cal P}^{(Q)}_*$ in the functions ${\cal F}_{(Q)}^{(Q)}$ and ${\cal F}_{(Q)}^{(Q')}$, for the queries $Q$ and $Q'$ of this example, and also points out the absence of the monomial ${\cal P}^{(Q)}_*$ in the function 
${\cal F}_{(Q)}^{(Q'')}$, for the query $Q''$ of this example.  
\end{example}

\subsection{Putting Together the Function ${\cal F}_{(Q)}^{(Q'')}$} 
\label{putting-together-f-sec}

In this section we define the function ${\cal F}_{(Q)}^{(Q'')}$ outlined in the beginning of Section~\ref{monomial-classes-sec}.  The reason that we construct this function in this proof of Theorem~\ref{magic-mapping-prop} is that we want to be able to pinpoint those queries $Q''$ that are associated with the wave (see Definition~\ref{the-wave-def}) of the query $Q$. That is, later in this proof of  Theorem~\ref{magic-mapping-prop} (specifically in Section~\ref{q-prime-has-wave-sec}), for the query $Q'$ specified in the statement of the Theorem we will use the facts that (i) $Q'$ $\equiv_C$ $Q$ (and thus their respective functions ${\cal F}_{(Q)}^{(Q')}$ and ${\cal F}_{(Q)}^{(Q)}$ must return the same value on each database) and that (ii) $Q$ is an explicit-wave query, to infer that the function ${\cal F}_{(Q)}^{(Q')}$ for the query $Q'$ must have as its component the wave monomial of the query $Q$. The claim of Theorem~\ref{magic-mapping-prop} will then follow from Proposition~\ref{q-same-scale-mpng-prop}. 

The only entities that we use in this section to specify the function ${\cal F}_{(Q)}^{(Q'')}$  are  (a) the multiplicity monomials defined in Section~\ref{monomial-classes-sec}, and (b) the noncopy-signatures and the copy-signatures of the monomial classes introduced in Section~\ref{monomial-classes-sec}.   Recall that each of the multiplicity monomials, as well as each of the copy-signatures, is in terms of the variables in the vector $\bar{N}$; we will show in this section how to ``convert'' the noncopy-signatures into collections of variables in the vector $\bar{N}$.  

We specify the function ${\cal F}_{(Q)}^{(Q'')}$ for an arbitrary CCQ query $Q$, for the family $\{ D_{\bar{N}^{(i)}}(Q) \}$ of databases defined using $Q$ (as outlined in Section~\ref{db-constr-sec}), and for an arbitrary CCQ query $Q''$ that satisfies the restrictions (w.r.t. the query $Q$) of Section~\ref{basics-sec}.  

\subsubsection{Notation, Definitions, Basic Results} 
\label{multivar-polyn-basics-sec}

For CCQ queries $Q$ and $Q''$ satisfying the requirements of Section~\ref{basics-sec}, suppose that $Q$ and $Q''$ are also such that (as discussed in the beginning of  Section~\ref{monomial-classes-sec}) the set of all nonempty monomial classes for $Q''$ and for the family of databases $\{ D_{\bar{N}^{(i)}}(Q) \}$ is not empty. That is, suppose that $\{ {\cal C}_1^{(Q'')}$, $\ldots,$ ${\cal C}_{n^*}^{(Q'')} \}$ is the set of all nonempty monomial classes for $Q''$ and for $\{ D_{\bar{N}^{(i)}}(Q) \}$, with $n^* \geq 1$. 

We begin the exposition by making a few straightforward observations. Recall that in Section~\ref{monomial-classes-sec}, we denoted by  $\Gamma^{(i)}[{\cal C}^{(Q'')}]$ the set of all  tuples contributed to the set $\Gamma^{(t^*_Q)}_{\bar{S}}(Q'',D_{\bar{N}^{(i)}}(Q))$ by all the valid assignment mappings for all the elements of the class  ${\cal C}^{(Q'')}$. The following proposition is immediate from Proposition~\ref{noncopy-proj-properties-prop}, for the cases where $n^* \geq 2$. 

\begin{proposition} 
\label{diff-noncopy-prop}
Given CCQ queries $Q$ and $Q''$, suppose that the set ${\cal C}[Q''] = \{ {\cal C}_1^{(Q'')}$, $\ldots,$ ${\cal C}_{n^*}^{(Q'')} \}$ of all nonempty monomial classes for $Q''$ and for the family of databases $\{ D_{\bar{N}^{(i)}}(Q) \}$ has $n^* \geq 2$ elements. Further, let ${\cal C}_n^{(Q'')}$ and ${\cal C}_p^{(Q'')}$, for some $n, p \in \{ 1,\ldots,n^* \}$, be two 
monomial classes in the set ${\cal C}[Q'']$, such that the noncopy-signatures of ${\cal C}_n^{(Q'')}$ and of ${\cal C}_p^{(Q'')}$ are not identical vectors. Then for each $i \in {\mathbb N}_+$, we have that $\Gamma^{(i)}[{\cal C}_n^{(Q'')}]$ $\bigcap$ $\Gamma^{(i)}[{\cal C}_p^{(Q'')}]$ $=$ $\emptyset$. 
\end{proposition}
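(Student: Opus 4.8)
The plan is to prove Proposition~\ref{diff-noncopy-prop} as a direct consequence of Proposition~\ref{noncopy-proj-properties-prop}, by exhibiting a single coordinate in $\bar S(Q'')$ on which the two tuple-sets are forced to be disjoint. First I would fix an arbitrary $i \in {\mathbb N}_+$ and consider the two monomial classes ${\cal C}_n^{(Q'')}$ and ${\cal C}_p^{(Q'')}$. Since their noncopy-signature vectors differ, and since (by Section~\ref{basic-queries-sec}) $m \geq 1$ whenever noncopy-signatures are nonempty vectors, there must exist an index $j \in \{1,\ldots,m\}$ such that the $j$th component of $\Phi_n[{\cal C}_n^{(Q'')}]$, call it $Z_n$, differs from the $j$th component of $\Phi_n[{\cal C}_p^{(Q'')}]$, call it $Z_p$. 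Both $Z_n$ and $Z_p$ are terms of $Q$ drawn from $\{Y_1,\ldots,Y_m,X_1,\ldots,X_{l+u}\} \cup P$, and $Z_n \neq Z_p$.

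The core of the argument is then to apply Proposition~\ref{noncopy-proj-properties-prop}(i) to each class at this coordinate $j$. That proposition tells us $\Gamma^{(i)}_{(Y''_j)}[{\cal C}_n^{(Q'')}]$ is exactly $Dom_Q^{(i)}(Z_n)$ (it contains all of $Dom_Q^{(i)}(Z_n)$ by (i-a), and nothing outside it by (i-b)), and symmetrically $\Gamma^{(i)}_{(Y''_j)}[{\cal C}_p^{(Q'')}] = Dom_Q^{(i)}(Z_p)$. Now I invoke Proposition~\ref{dom-properties-prop}(i): since $Z_n \neq Z_p$ and neither is a copy variable of $Q$ (both being noncopy-signature entries), their domains are disjoint, $Dom_Q^{(i)}(Z_n) \cap Dom_Q^{(i)}(Z_p) = \emptyset$. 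Hence the $j$th-coordinate projections of $\Gamma^{(i)}[{\cal C}_n^{(Q'')}]$ and $\Gamma^{(i)}[{\cal C}_p^{(Q'')}]$ are disjoint subsets of $adom(D_{\bar N^{(i)}}(Q))$.

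Finally, I would conclude that if some tuple $\bar t$ lay in both $\Gamma^{(i)}[{\cal C}_n^{(Q'')}]$ and $\Gamma^{(i)}[{\cal C}_p^{(Q'')}]$, then its $Y''_j$-component would lie in both projections, contradicting the disjointness just established; therefore $\Gamma^{(i)}[{\cal C}_n^{(Q'')}] \cap \Gamma^{(i)}[{\cal C}_p^{(Q'')}] = \emptyset$. Since $i$ was arbitrary, this holds for all $i \in {\mathbb N}_+$, which is the claim. I do not expect any genuine obstacle here: the proposition is explicitly flagged in the excerpt as ``immediate from Proposition~\ref{noncopy-proj-properties-prop}'', and the only mild care needed is (a) to note that a nonempty noncopy-signature vector forces $m \geq 1$, so the differing-coordinate index $j$ genuinely exists, and (b) to observe that noncopy-signature entries are never copy variables of $Q$, so Proposition~\ref{dom-properties-prop}(i) applies to give domain-disjointness. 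Everything else is a one-line projection argument.
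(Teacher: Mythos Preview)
Your proposal is correct and follows precisely the route the paper indicates: the paper states the result is immediate from Proposition~\ref{noncopy-proj-properties-prop}, and you have filled in exactly the expected details, including the appeal to Proposition~\ref{dom-properties-prop}(i) for domain disjointness. The two minor care points you flag (that differing noncopy-signatures force $m\geq 1$, and that noncopy-signature entries are never copy variables of $Q$) are handled correctly.
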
 

For the other observations in this subsection, we will need the following notation. 
For an arbitrary monomial class ${\cal C}_n^{(Q'')} \neq \emptyset$, $n \in \{ 1,\ldots,n^* \}$, in case $r \geq 1$, we denote the elements of the copy-signature vector $\Phi_c[{\cal C}_n^{(Q'')}]$ as $[V_{j1[n]},\ldots,V_{jr[n]}]$. Recall that, by definition of copy-signature, for each $k \in \{ 1,\ldots,r \}$ we have that $V_{jk[n]}$ $\in$ $\{ 1,$ $N_{m+1},$ $\ldots,$ $N_{m+w}\}$, where the $N$-values are variables in the vector $\bar{N}$.   (In case $r = 0$, $\Phi_c[{\cal C}_n^{(Q'')}]$ is the empty vector by definition.) 

\begin{definition}{Unconditional dominance for monomial classes} 
\label{uncond-dom-def}
Let ${\cal C}_n^{(Q'')}$ and  ${\cal C}_p^{(Q'')}$ be  two (not necessarily distinct) monomial classes in the set $\{ {\cal C}_1^{(Q'')}$, $\ldots,$ ${\cal C}_{n^*}^{(Q'')} \}$. (That is, $n, p \in \{ 1,\ldots,n^* \}$.)  Further, let ${\cal C}_n^{(Q'')}$ and  ${\cal C}_p^{(Q'')}$ have the same noncopy-signature. Then we say that {\em monomial class ${\cal C}_n^{(Q'')}$ unconditionally dominates monomial class} ${\cal C}_p^{(Q'')}$ if: 
\begin{itemize}
	\item We have the case $r = 0$; or 
	\item We have the case $r \geq 1$, and for the pair $(V_{jk[p]},V_{jk[n]})$ for each $k \in \{ 1,\ldots,r \}$, we have that either $V_{jk[p]}$ $=$ $1$, or $V_{jk[p]} = V_{jk[n]}$. 
\end{itemize} 
\end{definition} 

We observe that the unconditional-dominance relation is reflexive by definition. 

The following important property of unconditional-dominance holds by the results of Section~\ref{monomial-classes-sec}. 

\begin{proposition} 
\label{uncond-dom-prop} 
Let ${\cal C}_n^{(Q'')}$ and  ${\cal C}_p^{(Q'')}$ be  two monomial classes in the set $\{ {\cal C}_1^{(Q'')}$, $\ldots,$ ${\cal C}_{n^*}^{(Q'')} \}$. (That is, $n, p \in \{ 1,\ldots,n^* \}$.)  
%
Suppose that ${\cal C}_n^{(Q'')}$ unconditionally dominates  ${\cal C}_p^{(Q'')}$. Then for each $i \in {\mathbb N}_+$, we have that $\Gamma^{(i)}[{\cal C}_p^{(Q'')}]$ $\subseteq$ $\Gamma^{(i)}[{\cal C}_n^{(Q'')}]$. 
\end{proposition}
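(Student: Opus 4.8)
The plan is to exploit the coordinate-wise product decompositions of the sets $\Gamma^{(i)}[{\cal C}^{(Q'')}]$ established in Section~\ref{monomial-classes-sec}, and to observe that each coordinate projection depends only on the noncopy-signature or on the copy-signature of the monomial class. Since unconditional dominance forces the two noncopy-signatures to be identical and each component of the copy-signature of ${\cal C}_p^{(Q'')}$ to be ``dominated'' (either $1$ or equal) by the corresponding component of the copy-signature of ${\cal C}_n^{(Q'')}$, the inclusion will fall out componentwise and then be lifted to the Cartesian products. Recall that both classes are nonempty by hypothesis, and that Definition~\ref{uncond-dom-def} requires them to share the same noncopy-signature; this last fact is essential, since otherwise Proposition~\ref{diff-noncopy-prop} would make $\Gamma^{(i)}[{\cal C}_p^{(Q'')}]$ and $\Gamma^{(i)}[{\cal C}_n^{(Q'')}]$ disjoint.

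\textbf{Key steps.} Fix $i \in {\mathbb N}_+$. I would split into the three cases of Proposition~\ref{noncopy-proj-more-properties-prop}, recalling the standing assumption $m + r \geq 1$ from Section~\ref{basic-queries-sec}. (1) If $r = 0$ (so $m \geq 1$): by Proposition~\ref{noncopy-proj-more-properties-prop}, $\Gamma^{(i)}[{\cal C}^{(Q'')}]$ is its own projection on $Y''_1,\ldots,Y''_m$, which by Proposition~\ref{noncopy-proj-properties-prop} is $\prod_{j=1}^m Dom^{(i)}_Q(Z_j)$ where $Z_j$ is the $j$th component of the noncopy-signature; since ${\cal C}_n^{(Q'')}$ and ${\cal C}_p^{(Q'')}$ have the same noncopy-signature, these two sets are in fact equal, giving the inclusion. (2) If $m = 0$ (so $r \geq 1$): by Propositions~\ref{noncopy-proj-more-properties-prop} and~\ref{copy-proj-properties-prop}, $\Gamma^{(i)}[{\cal C}^{(Q'')}]$ equals the Cartesian product $\prod_{k=1}^r \{1,\ldots,V^{(i)}_{jk}\}$ of its copy-variable projections, where $V_{jk}$ is the $k$th component of the copy-signature and $V^{(i)}_{jk}$ is its value under $\bar N^{(i)}$ (with $V^{(i)}_{jk}=1$ when $V_{jk}=1$). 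For each $k$, unconditional dominance gives either $V_{jk[p]} = 1$, whence $\{1,\ldots,V^{(i)}_{jk[p]}\} = \{1\} \subseteq \{1,\ldots,V^{(i)}_{jk[n]}\}$ because $V^{(i)}_{jk[n]} \geq 1$ always (entries of $\bar N^{(i)}$ lie in ${\mathbb N}_+$), or $V_{jk[p]} = V_{jk[n]}$, whence the two intervals coincide. Either way the $k$th copy-projection of ${\cal C}_p^{(Q'')}$ is contained in that of ${\cal C}_n^{(Q'')}$, so the Cartesian products are nested. (3) If $m \geq 1$ and $r \geq 1$: by Proposition~\ref{noncopy-proj-more-properties-prop}, $\Gamma^{(i)}[{\cal C}^{(Q'')}]$ is the Cartesian product of its projection on $Y''_1,\ldots,Y''_m$ with its projection on $Y''_{m+1},\ldots,Y''_{m+r}$; by Propositions~\ref{noncopy-proj-properties-prop} and~\ref{copy-proj-properties-prop} these two factors are determined respectively by the (identical) noncopy-signatures and by the copy-signatures, so combining the observations of cases (1) and (2) componentwise yields $\Gamma^{(i)}[{\cal C}_p^{(Q'')}] \subseteq \Gamma^{(i)}[{\cal C}_n^{(Q'')}]$. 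Since $i$ was arbitrary, this proves the proposition.

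\textbf{Main obstacle.} The argument is essentially bookkeeping: everything reduces to the structural results of Section~\ref{monomial-classes-sec}, and the only genuinely substantive inequality is $\{1\} \subseteq \{1,\ldots,V^{(i)}_{jk[n]}\}$, which is immediate. The one point that needs care is the case analysis on $m$ and $r$ together with correctly citing which proposition supplies each factor of the Cartesian product, and making sure the ``same noncopy-signature'' clause of Definition~\ref{uncond-dom-def} is invoked (rather than silently assumed). No deeper difficulty is anticipated.
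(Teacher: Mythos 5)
Your proof is correct and follows exactly the route the paper intends: the paper itself gives no written argument beyond asserting that the proposition ``holds by the results of Section~\ref{monomial-classes-sec},'' and your componentwise comparison of the Cartesian-product decompositions from Propositions~\ref{noncopy-proj-properties-prop}, \ref{copy-proj-properties-prop}, and~\ref{noncopy-proj-more-properties-prop} is precisely the intended justification.
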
 


The following result is immediate from Definition~\ref{uncond-dom-def}. 

\begin{proposition} 
\label{same-copy-sig-prop}
Let ${\cal C}_n^{(Q'')}$ and  ${\cal C}_p^{(Q'')}$ be  two monomial classes in the set $\{ {\cal C}_1^{(Q'')}$, $\ldots,$ ${\cal C}_{n^*}^{(Q'')} \}$. (That is, $n, p \in \{ 1,\ldots,n^* \}$.)  Further, let ${\cal C}_n^{(Q'')}$ and  ${\cal C}_p^{(Q'')}$ have the same noncopy-signature. 
Then we have that (i) ${\cal C}_n^{(Q'')}$ unconditionally dominates  ${\cal C}_p^{(Q'')}$ {\em and}  ${\cal C}_p^{(Q'')}$ unconditionally dominates  ${\cal C}_n^{(Q'')}$, if and only if (ii) ${\cal C}_n^{(Q'')}$ and ${\cal C}_p^{(Q'')}$ have the same copy-signature. 
\end{proposition}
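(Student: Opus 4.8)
The statement to prove is Proposition~\ref{same-copy-sig-prop}: for two monomial classes with the same noncopy-signature, mutual unconditional dominance is equivalent to having the same copy-signature. The plan is to argue directly from Definition~\ref{uncond-dom-def}, treating the copy-signature vectors componentwise and splitting on whether $r = 0$ or $r \geq 1$.

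\begin{proof}{(Proposition~\ref{same-copy-sig-prop})}
In case $r = 0$, both $\Phi_c[{\cal C}_n^{(Q'')}]$ and $\Phi_c[{\cal C}_p^{(Q'')}]$ are the empty vector, so (ii) holds trivially; and by Definition~\ref{uncond-dom-def}, each of the two classes unconditionally dominates the other (the first bullet of the definition applies since the classes have the same noncopy-signature), so (i) holds as well. Thus (i) and (ii) are both true, hence equivalent, in this case. For the remainder we assume $r \geq 1$. Using the notation introduced before Definition~\ref{uncond-dom-def}, write $\Phi_c[{\cal C}_n^{(Q'')}] = [V_{j1[n]},\ldots,V_{jr[n]}]$ and $\Phi_c[{\cal C}_p^{(Q'')}] = [V_{j1[p]},\ldots,V_{jr[p]}]$, where each entry lies in $\{ 1, N_{m+1},\ldots,N_{m+w} \}$.

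\noindent ($\Leftarrow$) Suppose ${\cal C}_n^{(Q'')}$ and ${\cal C}_p^{(Q'')}$ have the same copy-signature, i.e.\ $V_{jk[n]} = V_{jk[p]}$ for every $k \in \{ 1,\ldots,r \}$. Then for each $k$, the pair $(V_{jk[p]},V_{jk[n]})$ satisfies $V_{jk[p]} = V_{jk[n]}$, which is exactly the second alternative of the $r \geq 1$ clause of Definition~\ref{uncond-dom-def}. Since the two classes also share a noncopy-signature by hypothesis, it follows that ${\cal C}_n^{(Q'')}$ unconditionally dominates ${\cal C}_p^{(Q'')}$. By the symmetry of the condition $V_{jk[p]} = V_{jk[n]}$ in $n$ and $p$, the same argument shows that ${\cal C}_p^{(Q'')}$ unconditionally dominates ${\cal C}_n^{(Q'')}$. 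This establishes (i).

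\noindent ($\Rightarrow$) Suppose (i) holds: ${\cal C}_n^{(Q'')}$ unconditionally dominates ${\cal C}_p^{(Q'')}$ and ${\cal C}_p^{(Q'')}$ unconditionally dominates ${\cal C}_n^{(Q'')}$. Fix $k \in \{ 1,\ldots,r \}$; we show $V_{jk[n]} = V_{jk[p]}$. From the first dominance, applied at coordinate $k$, either $V_{jk[p]} = 1$ or $V_{jk[p]} = V_{jk[n]}$. From the second dominance, applied at coordinate $k$, either $V_{jk[n]} = 1$ or $V_{jk[n]} = V_{jk[p]}$. If in either case the ``$=$'' alternative holds, we are done. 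Otherwise both ``$= 1$'' alternatives hold simultaneously, giving $V_{jk[p]} = 1 = V_{jk[n]}$. In all cases $V_{jk[n]} = V_{jk[p]}$. Since $k$ was arbitrary, the copy-signature vectors of ${\cal C}_n^{(Q'')}$ and ${\cal C}_p^{(Q'')}$ coincide, which is (ii).

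Combining the two directions, (i) and (ii) are equivalent. Q.E.D.
\end{proof}

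The argument is entirely definitional and componentwise, so there is no substantial obstacle; the only point requiring a small case analysis is the ($\Rightarrow$) direction, where one must check that mutual dominance forces equality even when one entry could a priori be $1$ and the other could be a genuine variable $N_{m+l}$ — but the two dominance conditions together rule this out, leaving only the case where both entries equal $1$. The $r = 0$ case is handled separately at the outset since the copy-signature vectors are empty there.
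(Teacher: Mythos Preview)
Your proof is correct and is exactly the componentwise unpacking of Definition~\ref{uncond-dom-def} that the paper has in mind when it declares the result ``immediate from Definition~\ref{uncond-dom-def}.'' There is nothing to add.
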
 

From reflexivity of unconditional-dominance and from Propositions~\ref{uncond-dom-prop} and~\ref{same-copy-sig-prop}, 
we obtain the following result. 

\begin{proposition} 
\label{identical-answers-prop}
Let ${\cal C}_n^{(Q'')}$ and  ${\cal C}_p^{(Q'')}$ be  two monomial classes in the set $\{ {\cal C}_1^{(Q'')}$, $\ldots,$ ${\cal C}_{n^*}^{(Q'')} \}$. (That is, $n, p \in \{ 1,\ldots,n^* \}$.)  Further, let ${\cal C}_n^{(Q'')}$ and  ${\cal C}_p^{(Q'')}$ have the same noncopy-signature and the same copy-signature. Then  for each $i \in {\mathbb N}_+$, we have that $\Gamma^{(i)}[{\cal C}_p^{(Q'')}]$ $=$ $\Gamma^{(i)}[{\cal C}_n^{(Q'')}]$. 
\end{proposition}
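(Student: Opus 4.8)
The statement to prove is Proposition~\ref{identical-answers-prop}: if two monomial classes ${\cal C}_n^{(Q'')}$ and ${\cal C}_p^{(Q'')}$ have the same noncopy-signature \emph{and} the same copy-signature, then for each $i \in {\mathbb N}_+$ the sets $\Gamma^{(i)}[{\cal C}_p^{(Q'')}]$ and $\Gamma^{(i)}[{\cal C}_n^{(Q'')}]$ coincide.

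\medskip

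\textbf{The plan.} The proof is a short two-way containment argument that simply chains together the three results cited in the paragraph immediately preceding the statement: reflexivity of unconditional dominance (noted just after Definition~\ref{uncond-dom-def}), Proposition~\ref{same-copy-sig-prop}, and Proposition~\ref{uncond-dom-prop}. First I would invoke Proposition~\ref{same-copy-sig-prop}: since ${\cal C}_n^{(Q'')}$ and ${\cal C}_p^{(Q'')}$ have the same noncopy-signature (a hypothesis of the present statement), the equivalence in that proposition applies, and because they also have the same copy-signature (the other hypothesis), direction (ii)$\Rightarrow$(i) of Proposition~\ref{same-copy-sig-prop} yields that ${\cal C}_n^{(Q'')}$ unconditionally dominates ${\cal C}_p^{(Q'')}$ \emph{and} ${\cal C}_p^{(Q'')}$ unconditionally dominates ${\cal C}_n^{(Q'')}$. (Reflexivity of unconditional dominance is not even strictly needed here since we already get both directions from Proposition~\ref{same-copy-sig-prop}, but it can be mentioned for the degenerate reading $n = p$.)

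\medskip

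\textbf{Key steps in order.} (1) Fix an arbitrary $i \in {\mathbb N}_+$. (2) Apply Proposition~\ref{same-copy-sig-prop} with the two hypotheses (same noncopy-signature, same copy-signature) to conclude mutual unconditional dominance of ${\cal C}_n^{(Q'')}$ and ${\cal C}_p^{(Q'')}$. (3) Apply Proposition~\ref{uncond-dom-prop} to the dominance ``${\cal C}_n^{(Q'')}$ dominates ${\cal C}_p^{(Q'')}$'' to get $\Gamma^{(i)}[{\cal C}_p^{(Q'')}] \subseteq \Gamma^{(i)}[{\cal C}_n^{(Q'')}]$. (4) Apply Proposition~\ref{uncond-dom-prop} again, this time to ``${\cal C}_p^{(Q'')}$ dominates ${\cal C}_n^{(Q'')}$'', to get the reverse inclusion $\Gamma^{(i)}[{\cal C}_n^{(Q'')}] \subseteq \Gamma^{(i)}[{\cal C}_p^{(Q'')}]$. (5) Combine the two inclusions to get set equality, and note $i$ was arbitrary.

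\medskip

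\textbf{Where the difficulty lies.} There is essentially no obstacle at this level: all the real work is already packaged in Proposition~\ref{uncond-dom-prop} (the containment $\Gamma^{(i)}[{\cal C}_p^{(Q'')}] \subseteq \Gamma^{(i)}[{\cal C}_n^{(Q'')}]$ under unconditional dominance, which itself rests on the structural results of Section~\ref{monomial-classes-sec} --- Propositions~\ref{noncopy-proj-properties-prop}, \ref{copy-proj-properties-prop}, and \ref{noncopy-proj-more-properties-prop} describing $\Gamma^{(i)}[{\cal C}^{(Q'')}]$ as a Cartesian product of per-variable projections) and in Proposition~\ref{same-copy-sig-prop} (the copy-signature characterization of two-way dominance, immediate from Definition~\ref{uncond-dom-def} since $V_{jk[n]} = V_{jk[p]}$ for all $k$ forces each of the two dominance conditions). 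The only thing to be a little careful about is making sure the \emph{noncopy}-signature hypothesis is genuinely used --- it is required for Proposition~\ref{same-copy-sig-prop} to be applicable at all, and also implicitly guarantees (via Proposition~\ref{noncopy-proj-more-properties-prop} and Proposition~\ref{noncopy-proj-properties-prop}) that the noncopy-projection factor of $\Gamma^{(i)}[{\cal C}_n^{(Q'')}]$ and of $\Gamma^{(i)}[{\cal C}_p^{(Q'')}]$ is literally the same Cartesian product of domains $Dom_Q^{(i)}(Z_j)$, so that equality of the copy-projection factors (from equal copy-signatures, via Proposition~\ref{copy-proj-properties-prop}) upgrades to equality of the full sets. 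So the proof is genuinely a one-line consequence; I would just write it out as the two-way inclusion for a fixed arbitrary $i$ and then observe that $i$ was arbitrary.
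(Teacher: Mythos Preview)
Your proposal is correct and matches the paper's own justification essentially verbatim: the paper states that the result follows ``from reflexivity of unconditional-dominance and from Propositions~\ref{uncond-dom-prop} and~\ref{same-copy-sig-prop},'' which is exactly the chain you spell out (mutual dominance via Proposition~\ref{same-copy-sig-prop}, then two applications of Proposition~\ref{uncond-dom-prop} for the two inclusions). Your additional remarks about the underlying Cartesian-product structure are accurate but go beyond what the paper bothers to record for this one-line corollary.
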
 

In Example~\ref{main-proof-ex} in Section~\ref{main-proof-ex-sec}, monomial class  ${\cal C}_1^{(Q'')}$ unconditionally dominates a {\em nonidentical} \footnote{Recall (see Section~\ref{monomial-classes-sec}) that the identity of a monomial class is determined by its atom-signature.} (to ${\cal C}_1^{(Q'')}$) monomial class  ${\cal C}_2^{(Q'')}$, and vice versa (that is, monomial class  ${\cal C}_2^{(Q'')}$ unconditionally dominates monomial class  ${\cal C}_1^{(Q'')}$). Similarly, monomial class  ${\cal C}_3^{(Q'')}$ of the same Example unconditionally dominates a nonidentical (to ${\cal C}_3^{(Q'')}$) monomial class  ${\cal C}_4^{(Q'')}$, and vice versa. 

We now outline an algorithm template that we call {\sc Removal of duplicate monomial classes.} The input is the set  $\{ {\cal C}_1^{(Q'')}$, $\ldots,$ ${\cal C}_{n^*}^{(Q'')} \}$, $n^* \geq 1$, for CCQ query $Q''$ and for family of databases $\{ D_{\bar{N}^{(i)}}(Q) \}$; the output is a subset (denoted by ${\mathbb C}(Q'')$) of the input. The algorithm template involves three steps:  
\begin{itemize} 
	\item[(1)] Partition all  elements of the set  $\{ {\cal C}_1^{(Q'')}$, $\ldots,$ ${\cal C}_{n^*}^{(Q'')} \}$ into equivalence classes, where two distinct (in case $n^* \geq 2$) monomial classes ${\cal C}_n^{(Q'')}$ and ${\cal C}_p^{(Q'')}$, for  $n \neq p \in \{ 1,\ldots,n^* \}$, belong to the same equivalence class if and only if ${\cal C}_n^{(Q'')}$ and ${\cal C}_p^{(Q'')}$ have identical noncopy-signatures and identical copy-signatures. 
	\item[(2)] Use an arbitrary algorithm, call it {\sc Choose-Represent-ative-Element,} to choose one element of each of the equivalence classes as the representative element of the equivalence class. 
	\item[(3)] Return the set ${\mathbb C}(Q'')$ of representative elements (only) of all of the equivalence classes of the set $\{ {\cal C}_1^{(Q'')}$, $\ldots,$ ${\cal C}_{n^*}^{(Q'')} \}$. 
\end{itemize} 

A specific algorithm instantiating the algorithm template {\sc Removal of duplicate monomial classes} is obtained by specifying the algorithm {\sc Choose-Representative-Element.} 
Observe that ${\mathbb C}(Q'') \neq \emptyset$ for all nonempty inputs to {\sc Removal of duplicate monomial classes}  and for all choices of the algorithm {\sc Choose-Representative-Element.}  

Clearly, in general, the contents of the set ${\mathbb C}(Q'')$ depend on the algorithm, {\sc Choose-Representative-Element,} for choosing the representative element of each equivalence class, within the algorithm template {\sc Removal of duplicate monomial classes.} (For instance, given as input the four monomial classes of Example~\ref{main-proof-ex} in Section~\ref{main-proof-ex-sec}, the algorithm template could produce four different outputs.) At the same time, the following two results, Proposition~\ref{correct-drop-duplicates-prop} and Proposition~\ref{correct-isomorphic-drop-dupl-prop}, hold regardless of the choice of the algorithm {\sc Choose-Representative-Element} when instantiating the algorithm template {\sc Removal of duplicate monomial} {\sc classes.} 

\begin{proposition} 
\label{correct-drop-duplicates-prop} 
Given the set \ ${\cal C}[Q'']$ of all nonempty monomial classes for CCQ query $Q''$ and for family of databases $\{ D_{\bar{N}^{(i)}}(Q) \}$, and given an algorithm {\sc Choose-Representative-Element} to instantiate the algorithm template {\sc Removal of duplicate monomial classes.}   Then for the output ${\mathbb C}(Q'')$ of the resulting algorithm 
given the input ${\cal C}[Q'']$, the following two facts hold:  
\begin{itemize}
	\item[(i)] For each pair $(e_1,e_2)$ of distinct (i.e., $e_1 \neq e_2$) elements of the set ${\mathbb C}(Q'')$, $e_1$ and $e_2$ either have different noncopy-signatures or have different copy-signatures; and  
	\item[(ii)] For all $i \in {\mathbb N}_+$, we have that: 
$$
\bigcup_{{\cal C} \in {\cal C}[Q''] } \Gamma^{(i)}[{\cal C}]
 = 
\bigcup_{{\cal C}' \in {\mathbb C}(Q'')} \Gamma^{(i)}[{\cal C}'] . 
$$ 
\end{itemize} 
\end{proposition}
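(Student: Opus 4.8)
\textbf{Proof proposal for Proposition~\ref{correct-drop-duplicates-prop}.}

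The plan is to show that the two asserted facts, (i) and (ii), both follow more or less directly from the construction of the algorithm template \textsc{Removal of duplicate monomial classes} together with the characterization results of Section~\ref{monomial-classes-sec}. For part (i): suppose, toward a contradiction, that there are two distinct elements $e_1 \neq e_2$ of the output set ${\mathbb C}(Q'')$ that agree both on their noncopy-signature and on their copy-signature. By step~(1) of the algorithm template, any two monomial classes with identical noncopy-signatures and identical copy-signatures are placed in the same equivalence class. By step~(2), exactly one representative element is chosen from each equivalence class, and by step~(3), only those representatives are returned in ${\mathbb C}(Q'')$. Hence $e_1$ and $e_2$, being in the same equivalence class, cannot both be in ${\mathbb C}(Q'')$ unless $e_1 = e_2$, contradicting our assumption. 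This argument is purely combinatorial and does not depend on the specific \textsc{Choose-Representative-Element} subroutine.

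For part (ii), the key observation is that \emph{dropping} a non-representative monomial class does not change the union of the contributed tuple-sets. Fix an arbitrary $i \in {\mathbb N}_+$. Since ${\mathbb C}(Q'') \subseteq {\cal C}[Q'']$, the inclusion
$$
\bigcup_{{\cal C}' \in {\mathbb C}(Q'')} \Gamma^{(i)}[{\cal C}']
\ \subseteq\
\bigcup_{{\cal C} \in {\cal C}[Q''] } \Gamma^{(i)}[{\cal C}]
$$
is immediate. For the reverse inclusion, take an arbitrary ${\cal C} \in {\cal C}[Q'']$. By step~(1), ${\cal C}$ lies in some equivalence class $\mathcal{E}$, and by steps~(2)--(3), $\mathcal{E}$ has a representative element ${\cal C}'' \in {\mathbb C}(Q'')$ such that ${\cal C}$ and ${\cal C}''$ have the same noncopy-signature and the same copy-signature. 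By Proposition~\ref{identical-answers-prop}, this implies $\Gamma^{(i)}[{\cal C}] = \Gamma^{(i)}[{\cal C}'']$. Therefore $\Gamma^{(i)}[{\cal C}] \subseteq \bigcup_{{\cal C}' \in {\mathbb C}(Q'')} \Gamma^{(i)}[{\cal C}']$, and taking the union over all ${\cal C} \in {\cal C}[Q'']$ gives the reverse inclusion. Since $i$ was arbitrary, fact~(ii) holds for all $i \in {\mathbb N}_+$.

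I do not anticipate a genuine obstacle here; the proposition is essentially a bookkeeping lemma. The only point requiring any care is making sure that Proposition~\ref{identical-answers-prop} is invoked with its exact hypotheses---namely, that two monomial classes with \emph{both} the same noncopy-signature \emph{and} the same copy-signature contribute identical tuple-sets $\Gamma^{(i)}[\cdot]$ on every database in the family. This is exactly the equivalence relation used in step~(1) of the algorithm template, so the match is clean. One should also note explicitly that the whole argument is independent of which instantiation of \textsc{Choose-Representative-Element} is used, since we only ever use that \emph{some} representative exists in each equivalence class and that the representatives form a subset of the input---both of which hold for every instantiation. It is worth remarking, as the text already does after the statement, that the set ${\mathbb C}(Q'')$ itself is not canonical (different subroutines can yield different outputs), but the two invariants (i) and (ii) are canonical, which is precisely what is needed for the later use in assembling ${\cal F}_{(Q)}^{(Q'')}$.
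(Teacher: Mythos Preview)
Your proof is correct and follows essentially the same approach as the paper, which states the result as ``immediate from Proposition~\ref{identical-answers-prop} and from the construction of the algorithm template \textsc{Removal of duplicate monomial classes}.'' You have simply spelled out the two inclusions and the contradiction argument that the paper leaves implicit.
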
 

Proposition~\ref{correct-drop-duplicates-prop} is immediate from Proposition~\ref{identical-answers-prop} and from the construction of the algorithm template {\sc Removal of duplicate monomial classes}.   

In the next result, Proposition~\ref{correct-isomorphic-drop-dupl-prop}, we denote by $|S|$ the cardinality of set $S$. Proposition~\ref{correct-isomorphic-drop-dupl-prop} holds by construction of the algorithm template {\sc Removal of duplicate monomial classes}. 

\begin{proposition} 
\label{correct-isomorphic-drop-dupl-prop} 
Let ${\cal C}[Q'']$ be the set of all nonempty monomial classes for CCQ query $Q''$ and for family of databases $\{ D_{\bar{N}^{(i)}}(Q) \}$. Let $a_1$ and $a_2$ be two instantiations of the algorithm template {\sc Removal of duplicate monomial classes}, where $a_1$ and $a_2$ may use different ways of choosing the representative element of each equivalence class generated by the algorithm. Let ${\mathbb C}_j(Q'')$ be the output of algorithm $a_j$ on the input ${\cal C}[Q'']$, for $j \in \{ 1,2 \}$. Then we have that: 
\begin{itemize} 
	\item[(1)] $|{\mathbb C}_1(Q'')|$ $=$ $|{\mathbb C}_2(Q'')|$; and 
	\item[(2)] There exists an isomorphism,  call it $\mu$, from the set ${\mathbb C}_1(Q'')$ to the set ${\mathbb C}_2(Q'')$, such that for each element $e$ of the set ${\mathbb C}_1(Q'')$, $e$ and $\mu(e)$ have the same copy-signature as well as the same noncopy-signature. 
\end{itemize} 
\end{proposition}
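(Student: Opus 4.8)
The plan is to prove both parts of Proposition~\ref{correct-isomorphic-drop-dupl-prop} by working entirely at the level of the equivalence classes generated in step~(1) of the algorithm template \textsc{Removal of duplicate monomial classes}, since that step does not depend on the choice of \textsc{Choose-Representative-Element}. First I would fix the input ${\cal C}[Q'']=\{{\cal C}_1^{(Q'')},\ldots,{\cal C}_{n^*}^{(Q'')}\}$ and observe that step~(1) partitions this set into equivalence classes $E_1,\ldots,E_t$, where $t$ depends only on ${\cal C}[Q'']$ (two monomial classes are in the same $E_k$ iff they share both a noncopy-signature and a copy-signature). This partition is identical no matter which instantiation $a_1$ or $a_2$ we run, because step~(1) is deterministic and independent of \textsc{Choose-Representative-Element}. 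Both $a_1$ and $a_2$ then return, in step~(3), a set consisting of exactly one representative element from each of $E_1,\ldots,E_t$; hence both outputs ${\mathbb C}_1(Q'')$ and ${\mathbb C}_2(Q'')$ have exactly $t$ elements, giving part~(1): $|{\mathbb C}_1(Q'')|=|{\mathbb C}_2(Q'')|=t$.

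For part~(2) I would construct the isomorphism $\mu$ directly. For each $k\in\{1,\ldots,t\}$, let $e^{(1)}_k$ be the unique element of ${\mathbb C}_1(Q'')$ lying in $E_k$, and let $e^{(2)}_k$ be the unique element of ${\mathbb C}_2(Q'')$ lying in $E_k$ (uniqueness holds because each output contains exactly one representative per equivalence class). Define $\mu(e^{(1)}_k):=e^{(2)}_k$. This is a well-defined bijection from ${\mathbb C}_1(Q'')$ onto ${\mathbb C}_2(Q'')$ by part~(1) and by the fact that the $E_k$ are disjoint and cover ${\cal C}[Q'']$. It remains to check that $e$ and $\mu(e)$ share copy- and noncopy-signatures: but $e^{(1)}_k$ and $e^{(2)}_k=\mu(e^{(1)}_k)$ both belong to the same equivalence class $E_k$, and by the very definition of the partition in step~(1), any two members of $E_k$ have identical noncopy-signatures and identical copy-signatures. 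This is exactly the required property of $\mu$.

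I do not expect any genuine obstacle here; the statement is essentially a bookkeeping consequence of the determinism of step~(1) of the template. The one point that deserves a sentence of care is the assertion that each output contains \emph{exactly} one element of each $E_k$ (as opposed to possibly zero or several): this is immediate from the description of step~(3), which returns the set of representative elements, \emph{one per equivalence class}, as stipulated in step~(2). With that observation in place, parts~(1) and~(2) follow as above, and nothing about the multiplicity monomials or the sets $\Gamma^{(i)}[\cdot]$ is needed for this particular proposition (those were the content of Proposition~\ref{correct-drop-duplicates-prop}).
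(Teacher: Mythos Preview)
Your proposal is correct and is precisely the detailed argument that the paper alludes to when it states that the proposition ``holds by construction of the algorithm template \textsc{Removal of duplicate monomial classes}.'' The paper gives no further proof, and your expansion---that step~(1) produces a fixed partition $E_1,\ldots,E_t$ independent of the instantiation, and that the bijection $\mu$ pairs the two chosen representatives within each $E_k$---is exactly the bookkeeping that justifies the ``by construction'' claim.
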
 

For our purpose of constructing a function that would return the multiplicity of the tuple $t^*_Q$ in the bag ${\sc Res}_C$ $(Q'',$ $D_{\bar{N}^{(i)}}(Q))$, for all $i \in {\mathbb N}_+$, Propositions~\ref{correct-drop-duplicates-prop} and \ref{correct-isomorphic-drop-dupl-prop}  let us refer to the output of an arbitrary instantiation of the algorithm template {\sc Removal of duplicate monomial classes} as {\em the} output, ${\mathbb C}(Q'')$, of the algorithm (template). We can show that the unconditional-dominance relation of Definition~\ref{uncond-dom-def} is reflexive, antisymmetric, and transitive on the set ${\mathbb C}(Q'')$. It follows that the unconditional-dominance relation is a partial order on that set. 

We now use any standard algorithm\footnote{We use the observation that the unconditional-dominance relation of Definition~\ref{uncond-dom-def}  is a partial order on the set ${\mathbb C}(Q'')$.} for removal of all those monomial classes from the set ${\mathbb C}(Q'')$ that (monomial classes) are unconditionally dominated by some other monomial class in the set ${\mathbb C}(Q'')$. Clearly, the output of that algorithm is a unique subset, call it ${\mathbb C}^{nondom}(Q'')$, of the set ${\mathbb C}(Q'')$. We say that the set ${\mathbb C}^{nondom}(Q'')$ is {\em the result of dropping unconditionally-dominated monomial classes from the set}  ${\mathbb C}(Q'')$. 

Using Propositions~\ref{correct-drop-duplicates-prop} and~\ref{correct-isomorphic-drop-dupl-prop}, it is straightforward to show the following. 

\begin{proposition} 
\label{nondom-prop} 
Let ${\cal C}[Q'']$ be the set of all nonempty monomial classes for CCQ query $Q''$ and for family of databases $\{ D_{\bar{N}^{(i)}}(Q) \}$. Let $a_1$ and $a_2$ be two instantiations of the algorithm template {\sc Removal of duplicate monomial classes}, where $a_1$ and $a_2$ may use different ways of choosing the representative element of each equivalence class generated by the algorithm. Let ${\mathbb C}_j(Q'')$ be the output of algorithm $a_j$ on the input ${\cal C}[Q'']$, for $j \in \{ 1,2 \}$. Further, let ${\mathbb C}^{nondom}_j(Q'')$ be the result of dropping unconditionally-dominated monomial classes from the set ${\mathbb C}_j(Q'')$, for $j \in \{ 1,2 \}$. 
Then for all $i \in {\mathbb N}_+$,  we have that: 
$$
\bigcup_{{\cal C} \in {\cal C}[Q''] } \Gamma^{(i)}[{\cal C}]
 = 
\bigcup_{{\cal C}' \in {\mathbb C}^{nondom}_1(Q'')} \Gamma^{(i)}[{\cal C}'] $$
$$= 
\bigcup_{{\cal C}'' \in {\mathbb C}^{nondom}_2(Q'')} \Gamma^{(i)}[{\cal C}'']  .
$$ 
\end{proposition}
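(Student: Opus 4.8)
The plan is to obtain Proposition~\ref{nondom-prop} by combining two facts that are almost entirely already in place. First, Proposition~\ref{correct-drop-duplicates-prop}(ii) holds for \emph{every} instantiation of the algorithm template {\sc Removal of duplicate monomial classes}, so for each $j \in \{1,2\}$ and each $i \in {\mathbb N}_+$ we already have $\bigcup_{{\cal C} \in {\cal C}[Q'']} \Gamma^{(i)}[{\cal C}] = \bigcup_{{\cal C}' \in {\mathbb C}_j(Q'')} \Gamma^{(i)}[{\cal C}']$. Hence the whole proposition reduces to the single claim that, for a generic instantiation with output ${\mathbb C}(Q'')$, dropping the unconditionally-dominated monomial classes does not change the union, i.e.\ $\bigcup_{{\cal C}' \in {\mathbb C}(Q'')} \Gamma^{(i)}[{\cal C}'] = \bigcup_{{\cal C}'' \in {\mathbb C}^{nondom}(Q'')} \Gamma^{(i)}[{\cal C}'']$ for all $i$. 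Applying that claim to both $a_1$ and $a_2$ and chaining the resulting equalities through $\bigcup_{{\cal C}[Q'']} \Gamma^{(i)}[{\cal C}]$ yields all three equalities in the statement by transitivity.

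To prove the reduced claim, fix $i \in {\mathbb N}_+$. The inclusion $\supseteq$ is immediate because ${\mathbb C}^{nondom}(Q'') \subseteq {\mathbb C}(Q'')$. For $\subseteq$, I would use that the unconditional-dominance relation of Definition~\ref{uncond-dom-def} is a partial order on the finite set ${\mathbb C}(Q'')$ (reflexivity, antisymmetry, and transitivity on that set are already asserted in the text, using Propositions~\ref{correct-drop-duplicates-prop} and~\ref{correct-isomorphic-drop-dupl-prop} to make ${\mathbb C}(Q'')$ well-defined and to control signatures). The set ${\mathbb C}^{nondom}(Q'')$ — the classes not dominated by any distinct class — is then exactly the set of maximal elements of this partial order, so any standard ``remove a dominated element'' procedure, in whatever order, terminates at precisely this set of maximal elements. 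Given any ${\cal C}' \in {\mathbb C}(Q'')$, finiteness of the partial order furnishes a maximal element ${\cal C}^* \in {\mathbb C}^{nondom}(Q'')$ that unconditionally dominates ${\cal C}'$ (possibly ${\cal C}^* = {\cal C}'$), and Proposition~\ref{uncond-dom-prop} gives $\Gamma^{(i)}[{\cal C}'] \subseteq \Gamma^{(i)}[{\cal C}^*]$. Taking the union over all ${\cal C}' \in {\mathbb C}(Q'')$ establishes $\subseteq$, completing the reduced claim and hence the proposition.

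As an explicit cross-check of the correspondence between the two instantiations (rather than just of the unions), I would also note that Proposition~\ref{correct-isomorphic-drop-dupl-prop} supplies an isomorphism $\mu$ from ${\mathbb C}_1(Q'')$ onto ${\mathbb C}_2(Q'')$ preserving both copy- and noncopy-signatures; since unconditional dominance (Definition~\ref{uncond-dom-def}) depends on a pair of classes only through these two signatures, $\mu$ is an order isomorphism, so it carries the maximal elements of ${\mathbb C}_1(Q'')$ bijectively onto those of ${\mathbb C}_2(Q'')$, i.e.\ $\mu\big({\mathbb C}^{nondom}_1(Q'')\big) = {\mathbb C}^{nondom}_2(Q'')$, and Proposition~\ref{identical-answers-prop} gives $\Gamma^{(i)}[e] = \Gamma^{(i)}[\mu(e)]$ termwise. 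The only place any care is genuinely needed is the order-independence of the dominated-class removal step, and this is exactly what the partial-order/maximal-element observation resolves; everything else is bookkeeping over Propositions~\ref{uncond-dom-prop}, \ref{correct-drop-duplicates-prop}, \ref{correct-isomorphic-drop-dupl-prop}, and \ref{identical-answers-prop}. I therefore expect no substantive obstacle — the work is in assembling the already-proved lemmas in the right order.
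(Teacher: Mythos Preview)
Your proposal is correct and follows essentially the same approach as the paper, which merely states that the result is ``straightforward to show'' using Propositions~\ref{correct-drop-duplicates-prop} and~\ref{correct-isomorphic-drop-dupl-prop}. You have filled in the details the paper omits---in particular the use of the partial-order structure and Proposition~\ref{uncond-dom-prop} to show that dropping dominated classes preserves the union---and your cross-check via the signature-preserving isomorphism $\mu$ is a nice explicit verification of what the paper leaves implicit.
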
 

As a result of Proposition~\ref{nondom-prop}, for our purpose (of constructing a function that would return the multiplicity of the tuple $t^*_Q$ in the bag ${\sc Res}_C(Q'',D_{\bar{N}^{(i)}}(Q))$) we can refer to each set  ${\mathbb C}^{nondom}(Q'')$ as {\em the} set ${\mathbb C}^{nondom}(Q'')$ for the set of all nonempty monomial classes for CCQ query $Q''$ and for family of databases $\{ D_{\bar{N}^{(i)}}(Q) \}$, regardless of the identity of the exact set ${\mathbb C}(Q'')$ as discussed above. 

\subsubsection{The Easy Case of Constructing ${\cal F}_{(Q)}^{(Q'')}$}
\label{easy-funct-case-sec} 

The following observation lets us finalize the construction of the function ${\cal F}_{(Q)}^{(Q'')}$ for the case where all elements of the set  ${\mathbb C}^{nondom}(Q'')$ have different noncopy-signatures. In this case, we have that the function  ${\cal F}_{(Q)}^{(Q'')}$  is always a multivariate polynomial in terms of the variables in the vector $\bar{N}$ and with integer coefficients, on the entire domain ${\cal N}$ of the function. 
The result of Proposition~\ref{easy-final-funct-prop} is immediate from Propositions~\ref{diff-noncopy-prop}, \ref{correct-drop-duplicates-prop}, and~\ref{correct-isomorphic-drop-dupl-prop}. For the definition of multiplicity monomial for monomial class, see Section~\ref{multiplicity-monomial-sec}. 

\begin{proposition}
\label{easy-final-funct-prop} 
Given the set ${\mathbb C}^{nondom}(Q'')$ for a CCQ query $Q''$ and for a family of databases $\{ D_{\bar{N}^{(i)}}(Q) \}$, such that the elements of the set  ${\mathbb C}^{nondom}(Q'')$ have $|{\mathbb C}^{nondom}(Q'')|$ distinct noncopy-signatures. Then, for each $i \in {\mathbb N}_+$, the cardinality of the set $\Gamma^{(t^*_Q)}_{\bar{S}}$ $(Q'',D_{\bar{N}^{(i)}}(Q))$, can be computed exactly, by substituting the values in the vector $\bar{N}^{(i)}$ (specifically value $N_j^{(i)}$ as value of variable $N_j$, for each $j \in \{ 1,\ldots,m+w \}$) into the formula 
$$\Sigma_{{\cal C} \in {\mathbb C}^{nondom}(Q'')} {\cal M}[{\cal C}]$$
where ${\cal M}[{\cal C}]$ is the multiplicity monomial of monomial class $\cal C$. 
\end{proposition}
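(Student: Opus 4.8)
The plan is to reduce the cardinality of $\Gamma^{(t^*_Q)}_{\bar{S}}(Q'',D_{\bar{N}^{(i)}}(Q))$ to a \emph{disjoint} union of the sets $\Gamma^{(i)}[{\cal C}]$ over ${\cal C} \in {\mathbb C}^{nondom}(Q'')$, and then to count each such set by evaluating its multiplicity monomial at $\bar{N}^{(i)}$. First I would invoke Proposition~\ref{gamma-i-union-prop}(ii), which expresses $\Gamma^{(t^*_Q)}_{\bar{S}}(Q'',D_{\bar{N}^{(i)}}(Q))$ as the union $\bigcup_{j=1}^{n^*} \Gamma^{(i)}[{\cal C}_j^{(Q'')}]$ over all nonempty monomial classes for $Q''$ w.r.t.\ the family $\{ D_{\bar{N}^{(i)}}(Q) \}$. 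Then I would apply Proposition~\ref{nondom-prop} (which rests on Propositions~\ref{correct-drop-duplicates-prop} and~\ref{correct-isomorphic-drop-dupl-prop}) to rewrite this union as $\bigcup_{{\cal C} \in {\mathbb C}^{nondom}(Q'')} \Gamma^{(i)}[{\cal C}]$ without changing the underlying set. Those same propositions also guarantee that ${\mathbb C}^{nondom}(Q'')$, and hence the proposed formula, is well defined independently of the choices made inside the algorithm template {\sc Removal of duplicate monomial classes}, so the statement is not sensitive to implementation details.

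Next I would establish pairwise disjointness of the sets being unioned. By hypothesis the $|{\mathbb C}^{nondom}(Q'')|$ elements of ${\mathbb C}^{nondom}(Q'')$ have pairwise distinct noncopy-signatures, so Proposition~\ref{diff-noncopy-prop} applies to every pair of distinct elements ${\cal C}_n, {\cal C}_p \in {\mathbb C}^{nondom}(Q'')$, giving $\Gamma^{(i)}[{\cal C}_n] \cap \Gamma^{(i)}[{\cal C}_p] = \emptyset$ for every $i \in {\mathbb N}_+$. Consequently the cardinality of the union is the sum of the cardinalities: $|\Gamma^{(t^*_Q)}_{\bar{S}}(Q'',D_{\bar{N}^{(i)}}(Q))| = \sum_{{\cal C} \in {\mathbb C}^{nondom}(Q'')} |\Gamma^{(i)}[{\cal C}]|$.

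Finally I would compute each summand. Proposition~\ref{monomial-classes-sizes-prop} states that $|\Gamma^{(i)}[{\cal C}]|$ is obtained by substituting, for each $j \in \{1,\ldots,m+w\}$, the value $N_j^{(i)}$ of the vector $\bar{N}^{(i)}$ for the variable $N_j$ in the multiplicity monomial ${\cal M}[{\cal C}]$. Summing these substitutions over ${\cal C} \in {\mathbb C}^{nondom}(Q'')$ is precisely the evaluation of $\Sigma_{{\cal C} \in {\mathbb C}^{nondom}(Q'')} {\cal M}[{\cal C}]$ at $\bar{N}^{(i)}$, which is the claimed formula. I would note in passing that in this case the resulting expression is a genuine multivariate polynomial with nonnegative integer coefficients in the variables of $\bar{N}$ on the entire domain ${\cal N}$, since each ${\cal M}[{\cal C}]$ is a monomial in the $N_j$ and the coefficient of each distinct monomial counts how many elements of ${\mathbb C}^{nondom}(Q'')$ share that multiplicity monomial; this polynomiality is exactly the feature that makes the present case ``easy'' compared with the general situation treated subsequently.

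I do not expect a serious obstacle here: once the union decomposition of $\Gamma^{(t^*_Q)}_{\bar{S}}(Q'',D_{\bar{N}^{(i)}}(Q))$ is in hand, the result is pure bookkeeping that combines the disjointness criterion of Proposition~\ref{diff-noncopy-prop} with the per-class counting of Proposition~\ref{monomial-classes-sizes-prop}. The only point requiring a little care is the well-definedness of ${\mathbb C}^{nondom}(Q'')$ and of the summation formula under the arbitrary choices in the construction of the set (choice of representative elements, choice of the dominance-pruning algorithm), but that is already secured by Propositions~\ref{correct-drop-duplicates-prop}, \ref{correct-isomorphic-drop-dupl-prop}, and~\ref{nondom-prop}, so it can simply be cited rather than reproved.
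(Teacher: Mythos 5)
Your proposal is correct and follows essentially the same route as the paper, which simply declares the result ``immediate from'' Propositions~\ref{diff-noncopy-prop}, \ref{correct-drop-duplicates-prop}, and~\ref{correct-isomorphic-drop-dupl-prop}; you have merely filled in the bookkeeping by additionally citing Proposition~\ref{gamma-i-union-prop}(ii) for the union decomposition, Proposition~\ref{nondom-prop} for passing to ${\mathbb C}^{nondom}(Q'')$, and Proposition~\ref{monomial-classes-sizes-prop} for the per-class count. No gaps.
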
 
 
That is, under the conditions of Proposition~\ref{easy-final-funct-prop}, the function ${\cal F}_{(Q)}^{(Q'')}$ is given  by the formula 
$$\Sigma_{{\cal C} \in {\mathbb C}^{nondom}(Q'')} {\cal M}[{\cal C}]$$
in the Proposition. 
  
For instance, if we choose the set $\{ {\cal C}_1(Q''), {\cal C}_3(Q'') \}$ as the set ${\mathbb C}^{nondom}(Q'')$ for Example~\ref{main-proof-ex}  in Section~\ref{main-proof-ex-sec}, then, for query $Q''$ of the Example, the function  ${\cal F}_{(Q)}^{(Q'')}$  is the following multivariate polynomial in terms of the variables in the vector $\bar{N}$: ${\cal F}_{(Q)}^{(Q'')}$ $=$ $N_1 + N_2$. For the vector ${\bar{N}}^{(i)}$ of Example~\ref{main-proof-ex}, ${\cal F}_{(Q)}^{(Q'')}$ returns the correct multiplicity, $5$, of the tuple $t^*_Q = (a)$ in the bag ${\sc Res}_C(Q'',D_{\bar{N}^{(i)}}(Q))$. (For the details, please see Example~\ref{main-proof-cont-ex} in Section~\ref{main-proof-cont-ex-sec}.) 

\begin{corollary} 
\label{easy-case-corol}
In case $r \leq 1$, given a CCQ query $Q''$ and a family of databases $\{ D_{\bar{N}^{(i)}}(Q) \}$. Then, for each $i \in {\mathbb N}_+$, the cardinality of the set $\Gamma^{(t^*_Q)}_{\bar{S}}$ $(Q'',D_{\bar{N}^{(i)}}(Q))$, can be computed exactly, by substituting the values in the vector $\bar{N}^{(i)}$ (specifically value $N_j^{(i)}$ as value of variable $N_j$, for each $j \in \{ 1,\ldots,m+w \}$) into the formula 
$$\Sigma_{{\cal C} \in {\mathbb C}^{nondom}(Q'')} {\cal M}[{\cal C}]$$
where ${\cal M}[{\cal C}]$ is the multiplicity monomial of monomial class $\cal C$. 
\end{corollary}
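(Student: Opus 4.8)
The plan is to deduce Corollary~\ref{easy-case-corol} directly from Proposition~\ref{easy-final-funct-prop} by showing that, whenever $r \leq 1$, the hypothesis of that proposition is automatically satisfied: the elements of the set ${\mathbb C}^{nondom}(Q'')$ carry pairwise distinct noncopy-signatures. Once this is in place there is nothing left to do, since the formula $\Sigma_{{\cal C} \in {\mathbb C}^{nondom}(Q'')} {\cal M}[{\cal C}]$ is then precisely the function delivered by Proposition~\ref{easy-final-funct-prop}. So the entire argument reduces to a short structural observation about copy-signatures in the two cases $r = 0$ and $r = 1$.

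For $r = 0$ the claim is immediate. By definition the copy-signature of every monomial class is the empty vector; hence any two monomial classes in ${\mathbb C}(Q'')$ that shared a noncopy-signature would in fact have the same noncopy-signature \emph{and} the same copy-signature, contradicting Proposition~\ref{correct-drop-duplicates-prop}(i). Therefore no two distinct elements of ${\mathbb C}(Q'')$, and a fortiori none of the subset ${\mathbb C}^{nondom}(Q'')$, can share a noncopy-signature.

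For $r = 1$ I would first record the structural consequence $w = 1$. Since $r = |M_{copy}| = |M''_{copy}| = 1$, the query $Q$ has exactly one copy-sensitive subgoal, so the set ${\cal S}_{C(Q)}$ contains exactly one copy-sensitive atom; by Proposition~\ref{sc-prop}(iii)--(iv) this forces $w = 1$, and the unique copy variable among $Y_1,\ldots,Y_{m+w}$ is $Y_{m+1}$, with $\nu^{copy}_Q(Y_{m+1}) = N_{m+1}$. Consequently every copy-signature is a $1$-ary vector whose single entry lies in $\{1,\, N_{m+1}\}$. Now suppose, toward a contradiction, that ${\mathbb C}^{nondom}(Q'')$ contained two distinct elements $e_1 \neq e_2$ with the same noncopy-signature. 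By Proposition~\ref{correct-drop-duplicates-prop}(i) they must have different copy-signatures, so $\{\Phi_c[e_1],\, \Phi_c[e_2]\} = \{[1],\, [N_{m+1}]\}$; say $\Phi_c[e_1] = [1]$ and $\Phi_c[e_2] = [N_{m+1}]$. Reading off Definition~\ref{uncond-dom-def} with $r = 1$ — the single entry of $\Phi_c[e_1]$ is $1$, which satisfies the ``$V_{j1[p]} = 1$'' alternative — we conclude that $e_2$ unconditionally dominates $e_1$. Since $e_2 \in {\mathbb C}(Q'')$, the element $e_1$ is then discarded in the step producing ${\mathbb C}^{nondom}(Q'')$ from ${\mathbb C}(Q'')$, contradicting $e_1 \in {\mathbb C}^{nondom}(Q'')$.

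Combining the two cases, for $r \leq 1$ the set ${\mathbb C}^{nondom}(Q'')$ has $|{\mathbb C}^{nondom}(Q'')|$ distinct noncopy-signatures, so Proposition~\ref{easy-final-funct-prop} applies and gives the stated formula. I do not expect a genuine obstacle: the only point requiring care is the $r = 1$ bookkeeping — pinning down $w = 1$ and the fact that the copy-signature alphabet is exactly $\{[1],\, [N_{m+1}]\}$, which is what makes the two hypothetical survivors necessarily stand in the unconditional-dominance relation and thus impossible to coexist in ${\mathbb C}^{nondom}(Q'')$.
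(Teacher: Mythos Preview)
Your proposal is correct and follows essentially the same approach as the paper's proof sketch: reduce to Proposition~\ref{easy-final-funct-prop} by showing that for $r\leq 1$ the copy-signature alphabet is so small (empty vector for $r=0$; one of $[1]$ or $[N_{m+1}]$ for $r=1$) that any two same-noncopy-signature classes are in an unconditional-dominance relation, forcing distinct noncopy-signatures in ${\mathbb C}^{nondom}(Q'')$. Your version is simply more explicit about the bookkeeping (invoking Proposition~\ref{sc-prop} to pin down $w=1$ and Proposition~\ref{correct-drop-duplicates-prop}(i) to rule out identical copy-signatures), which is fine.
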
 

\begin{proof}{(sketch)} 
The reason that Corollary~\ref{easy-case-corol} of Proposition~\ref{easy-final-funct-prop}  holds is that in case $r \leq 1$, either $r = 0$ holds and then the copy-signature of each monomial class for $Q''$ is the empty vector, or $r = 1$ holds and then  the copy-signature of each monomial class for $Q''$ is  either the vector $[1]$ or the vector $[N]$, for exactly one variable name $N$ across all the copy-signatures. Then the unconditional-dominance relation of Definition~\ref{uncond-dom-def} holds for each pair of the monomial classes for the query $Q''$ such that the classes in the pair have the same noncopy-signature, and hence  all elements of the set  ${\mathbb C}^{nondom}(Q'')$ have different noncopy-signatures. 
\end{proof} 

Observe that it is not obvious how to generalize the statement of Corollary~\ref{easy-case-corol} to the case $r \geq 2$. Indeed, even when $w \leq 1$ (and hence we still have exactly one variable name $N$ as the only possible variable across all the noncopy-signatures),\footnote{By Proposition~\ref{sc-prop}, $r \geq 1$ implies $w \geq 1$, hence from $r \geq 2$ and $w \leq 1$ we have the exact equality $w = 1$.} the case $r = 2$ already presents us with the (theoretical) possibility where two monomial classes for $Q''$, with the same noncopy-signature, might have respective copy-signatures $[1 \ N]$ and $[N \ 1]$, for which unconditional-dominance does not hold in either direction.

\reminder{Do I really need to do this??? [It seems that Proposition~\ref{easy-final-funct-prop} is already *the* correctness statement] --- Must put in the proposition for ``correctness'' of the output of the function ${\cal F}_{(Q)}^{(Q'')}$, in terms of Proposition~\ref{gamma-i-union-prop}.}

\reminder{Do I really need to do this??? [Into the subsection about the *general* case of building ${\cal F}_{(Q)}^{(Q'')}$ (as opposed to the ``easy'' case of Section~\ref{easy-funct-case-sec}): Must put in the proposition for ``correctness'' of the output of the function ${\cal F}_{(Q)}^{(Q'')}$, in terms of Proposition~\ref{gamma-i-union-prop}.]}

\subsubsection{Illustration} 
\label{main-proof-cont-ex-sec}

In this subsection we build on Example~\ref{main-proof-ex} (of Section~\ref{main-proof-ex-sec}), to show the construction of the functions ${\cal F}_{(Q)}^{(Q)}$, ${\cal F}_{(Q)}^{(Q')}$, and ${\cal F}_{(Q)}^{(Q'')}$, for the three queries of Example~\ref{main-proof-ex} and for the database constructed in Example~\ref{main-proof-ex}. We also show how that database is a counterexample to $Q \equiv_C Q''$, for the queries $Q$ and $Q''$ of Example~\ref{main-proof-ex}. Finally, we continue our discussion (started in Example~\ref{main-proof-ex}) of ``the wave of'' the query $Q$, and explore the relationship between that entity and the multivariate polynomials for  ${\cal F}_{(Q)}^{(Q)}$, ${\cal F}_{(Q)}^{(Q')}$, and ${\cal F}_{(Q)}^{(Q'')}$.

\begin{example} 
\label{main-proof-cont-ex} 
Recall the queries $Q$, $Q'$, and $Q''$ of Example~\ref{main-proof-ex} (of Section~\ref{main-proof-ex-sec}). Recall also the database $D_{\bar{N}^{(i)}}(Q)$ that we constructed in Example~\ref{main-proof-ex} for the query $Q$. In this example we build functions ${\cal F}_{(Q)}^{(Q)}$, ${\cal F}_{(Q)}^{(Q')}$, and ${\cal F}_{(Q)}^{(Q'')}$, for the three queries $Q$, $Q'$, and $Q''$ and for the database $D_{\bar{N}^{(i)}}(Q)$. We also show how the database $D_{\bar{N}^{(i)}}(Q)$ is a counterexample to $Q \equiv_C Q''$. Finally, we continue our discussion (started in Example~\ref{main-proof-ex}) of ``the wave of'' the query $Q$, and explore the relationship between that entity and the multivariate polynomials for  ${\cal F}_{(Q)}^{(Q)}$, ${\cal F}_{(Q)}^{(Q')}$, and ${\cal F}_{(Q)}^{(Q'')}$. 

\paragraph{Construction of Function ${\cal F}_{(Q)}^{(Q)}$} 
For the query $Q$ and database $D_{\bar{N}^{(i)}}(Q)$, we came up in Example~\ref{main-proof-ex} with four monomial classes ${\cal C}_1^{(Q)},$ ${\cal C}_2^{(Q)}$, ${\cal C}_3^{(Q)}$, and ${\cal C}_4^{(Q)}$. By Definition~\ref{uncond-dom-def},  monomial class ${\cal C}_2^{(Q)}$ unconditionally-dominates monomial class ${\cal C}_1^{(Q)}$. The reason is, ${\cal C}_1^{(Q)}$ and ${\cal C}_2^{(Q)}$ have identical noncopy-signatures, the copy-signature of the monomial class ${\cal C}_1^{(Q)}$ is  $[1]$, and the copy-signature of the monomial class ${\cal C}_2^{(Q)}$ is  $[N_2]$. (See Section~\ref{multivar-polyn-basics-sec} for further details on unconditional-dominance.) Similarly, monomial class ${\cal C}_4^{(Q)}$ unconditionally-dominates monomial class ${\cal C}_3^{(Q)}$.   Thus, the set $\{ {\cal C}_2^{(Q)}, {\cal C}_4^{(Q)} \}$ is {\em the} set ${\mathbb C}^{nondom}(Q)$ as defined in Section~\ref{multivar-polyn-basics-sec}. 

Then, by Proposition~\ref{easy-final-funct-prop}, the function  ${\cal F}_{(Q)}^{(Q)}$  is the following multivariate polynomial in terms of the variables in the vector $\bar{N}$: ${\cal F}_{(Q)}^{(Q)}$ $=$ $N_1 \times N_2 + N_2$. For the vector ${\bar{N}}^{(i)} = [2 \ 3]$ that we fixed in Example~\ref{main-proof-ex}, ${\cal F}_{(Q)}^{(Q)}$ returns the correct multiplicity, $9$, of the tuple $t^*_Q = (a)$ in the bag ${\sc Res}_C(Q,D_{\bar{N}^{(i)}}(Q))$. 

\paragraph{Construction of Function ${\cal F}_{(Q)}^{(Q')}$} 
For the query $Q'$ and database $D_{\bar{N}^{(i)}}(Q)$ of Example~\ref{main-proof-ex}, we use the reasoning similar to that for constructing the function ${\cal F}_{(Q)}^{(Q)}$ earlier in this example, to obtain the function ${\cal F}_{(Q)}^{(Q')}$ $=$ $N_1 \times N_2 + N_2$. As the multivariate polynomials ${\cal F}_{(Q)}^{(Q)}$ and ${\cal F}_{(Q)}^{(Q')}$ are identical to each other, they output the same answer for each $\bar{N}^{(i)} \in {\cal N}$. 
 
\paragraph{Construction of Function ${\cal F}_{(Q)}^{(Q'')}$} 
For the query $Q''$ and database $D_{\bar{N}^{(i)}}(Q)$, we came up in Example~\ref{main-proof-ex} with four monomial classes ${\cal C}_1^{(Q'')},$ ${\cal C}_2^{(Q'')}$, ${\cal C}_3^{(Q'')}$, and ${\cal C}_4^{(Q'')}$. By Definition~\ref{uncond-dom-def}, monomial classes ${\cal C}_1^{(Q'')}$ and ${\cal C}_2^{(Q'')}$ unconditionally-dominate each other. (The reason is, ${\cal C}_1^{(Q'')}$ and ${\cal C}_2^{(Q'')}$ have identical noncopy-signatures, and have identical copy-signatures.) Similarly, monomial classes ${\cal C}_3^{(Q'')}$ and ${\cal C}_4^{(Q'')}$ unconditionally-dominate each other. Suppose that we choose the set $\{ {\cal C}_1^{(Q'')}, {\cal C}_3^{(Q'')} \}$ as the set ${\mathbb C}^{nondom}(Q'')$ as defined in Section~\ref{multivar-polyn-basics-sec}. (See Section~\ref{multivar-polyn-basics-sec} for the discussion of possible choices for the set ${\mathbb C}^{nondom}(Q'')$.) 

Then, by Proposition~\ref{easy-final-funct-prop}, the function ${\cal F}_{(Q)}^{(Q'')}$  is the following multivariate polynomial in terms of the variables in the vector $\bar{N}$: ${\cal F}_{(Q)}^{(Q'')}$ $=$ $N_1 + N_2$. For the vector ${\bar{N}}^{(i)} = [2 \ 3]$ that we fixed in  Example~\ref{main-proof-ex}, ${\cal F}_{(Q)}^{(Q'')}$ returns the correct multiplicity, $5$, of the tuple $t^*_Q = (a)$ in the bag ${\sc Res}_C(Q'',D_{\bar{N}^{(i)}}(Q))$. 

\paragraph{Database  $D_{\bar{N}^{(i)}}(Q)$ Is a Counterexample to $Q \equiv_C Q''$} 
From the different sizes of the sets $\Gamma^{t^*_Q}_{\bar{S}}(Q,D_{\bar{N}^{(i)}}(Q))$ and $\Gamma^{t^*_Q}_{\bar{S}}(Q'',D_{\bar{N}^{(i)}}(Q))$ on the database $D_{\bar{N}^{(i)}}(Q)$, as discussed earlier in this example, we have that the database $D_{\bar{N}^{(i)}}(Q)$ is a counterexample to $Q \equiv_C Q''$. 

\paragraph{The Wave of $Q$ in the Functions ${\cal F}_{(Q)}^{(Q)}$, ${\cal F}_{(Q)}^{(Q')}$} 
Recall from Example~\ref{main-proof-ex} (Section~\ref{main-proof-ex-sec}) our discussion of ``the wave of'' the query $Q$ of that example. By Definition~\ref{the-wave-def}, the wave of that query $Q$ w.r.t. the family of databases $\{ D_{\bar{N}^{(i)}}(Q) \}$ is the monomial $N_1$ $\times$ $N_2$. 

For the queries $Q$, $Q'$ and $Q''$ discussed in this example (see Example~\ref{main-proof-ex} for their definitions), we now contrast the functions ${\cal F}_{(Q)}^{(Q)}$ and ${\cal F}_{(Q)}^{(Q')}$, on the one hand, with the function ${\cal F}_{(Q)}^{(Q'')}$, on the other hand. Recall that ${\cal F}_{(Q)}^{(Q)}$ $=$ ${\cal F}_{(Q)}^{(Q')}$ $=$ $N_1 \times N_2 + N_2$. Observe that each of ${\cal F}_{(Q)}^{(Q)}$ and ${\cal F}_{(Q)}^{(Q')}$ has a term that is exactly the wave of the query $Q$ (w.r.t. $\{ D_{\bar{N}^{(i)}}(Q) \}$), that is the term $N_1$ $\times$ $N_2$. In contrast, the function  ${\cal F}_{(Q)}^{(Q'')}$ $=$ $N_1 + N_2$ clearly does not have a term that is the wave $N_1$ $\times$ $N_2$ of the query $Q$ w.r.t. $\{ D_{\bar{N}^{(i)}}(Q) \}$. 
\end{example}

\subsubsection{Beyond the Easy Case: Example} 
\label{beyond-easy-case-sec}

In this subsection we exhibit a CCQ query $Q$, such that the function ${\cal F}_{(Q)}^{(Q)}$, w.r.t. the family of databases $\{ D_{\bar{N}^{(i)}}(Q) \}$, cannot be computed using the results of Section~\ref{easy-funct-case-sec}, specifically using Proposition~\ref{easy-final-funct-prop}. This example motivates the development, in Section~\ref{hard-funct-case-sec}, of a more general (as compared to that of Section~\ref{easy-funct-case-sec}) approach  toward constructing the function ${\cal F}_{(Q)}^{(Q)}$ for CCQ query $Q''$ and family of databases $\{ D_{\bar{N}^{(i)}}(Q) \}$. 

\begin{example}
\label{writeup-weird-ex} 
Let CCQ query $Q$ be as follows. 
\begin{tabbing} 
Hehetab b \= hehe \kill
$Q(X_1) \leftarrow r(X_1,Y_1,Y_2,X_2; Y_3), r(X_1,Y_1,Y_2,X_3; Y_4),$ \\ 
\> $\{ Y_1,Y_2,Y_3,Y_4 \} .$ 
\end{tabbing} 

\noindent 
(This is the query $Q$ of Example~\ref{intro-weird-ex}, rendered here using somewhat different notation.) 


\paragraph{Toward Constructing Function ${\cal F}_{(Q)}^{(Q)}$ for the Databases $\{$ $D_{\bar{N}^{(i)}}(Q)$ $\}$}
We show here how to develop a database in the family of databases $\{ D_{\bar{N}^{(i)}}(Q) \}$ for the query $Q$, and start constructing function ${\cal F}_{(Q)}^{(Q)}$ w.r.t. the databases in the family. (We will complete the construction in Example~\ref{again-writeup-weird-ex} in Section~\ref{second-beyond-easy-case-sec}.)

We begin by following Section~\ref{db-constr-sec} of the proof of Theorem~\ref{magic-mapping-prop}. Fix an $i \in {\mathbb N}_+$. Let the vector ${\bar{N}}^{(i)}$, for this fixed $i$, of values of the variables in the vector $\bar{N} = [N_1 \ N_2 \ N_3 \ N_4]$, be $\bar{N}^{(i)} = [1 \ 2 \ 3 \ 5]$.  Here, each $N_j$ in $\bar{N}$ is generated for the variable $Y_j$ of $Q$, for $j \in \{ 1,2,3,4 \}$. We use $\nu_0(X_1) = a$ (hence $t^*_Q = (a)$), $\nu_0(X_2) = b$, and $\nu_0(X_3) = c$. Let $S^{(i)}_1 = \{ e \}$, and let $S^{(i)}_2 = \{ f,g \}$. These setting generate, for the fixed $i$, the database $D_{\bar{N}^{(i)}}(Q) = \{ \ r(a,e,f,b; 3),$ $r(a,e,g,b; 3),$ $r(a,e,f,c; 5),$ $r(a,e,g,c; 5) \  \}$. We will refer to the ground atoms in the set $D_{\bar{N}^{(i)}}(Q)$, from left to right, as $d_1$ through $d_4$. Denote by $h_1$ the first subgoal of the query $Q$, and by $h_2$ its second subgoal. By construction of $D_{\bar{N}^{(i)}}(Q)$, we have that $\psi^{gen(Q)}_{\bar{N}^{(i)}}[d_1]$ $=$ $\psi^{gen(Q)}_{\bar{N}^{(i)}}[d_2]$ $=$ $h_1$, and that  $\psi^{gen(Q)}_{\bar{N}^{(i)}}[d_3]$ $=$ $\psi^{gen(Q)}_{\bar{N}^{(i)}}[d_4]$ $=$ $h_2$. 

We now follow Sections~\ref{valid-map-sec} through \ref{multivar-polyn-basics-sec} of the proof of Theorem~\ref{magic-mapping-prop}, to construct the monomial classes for the function ${\cal F}_{(Q)}^{(Q)}$, for the query $Q$ and for the database  $D_{\bar{N}^{(i)}}(Q)$ as generated above in this example. 
As a result of the construction steps,\footnote{These steps are outlined in significant detail in Example~\ref{main-proof-ex}, albeit using queries that are different from the queries of the current example.} we obtain four monomial classes for the query $Q$: 
\begin{itemize} 
	\item Monomial class ${\cal C}_1^{(Q)}$ has noncopy-signature $[Y_1 \ Y_2]$ and copy-signature $[N_3 \ N_3]$; it contributes  to the set $\Gamma^{t^*_Q}_{\bar{S}}$ $(Q,D_{\bar{N}^{(i)}}(Q))$, with columns (from left to right) $X_1$ $Y_1$ $Y_2$ $Y_3$ $Y_4$, nine tuples $(a,e,f,1,1)$ through $(a,e,f,3,3)$ (that is, tuples $(a,e,f,1,1)$, $(a,e,f,1,2)$, $(a,e,f,1,3)$, $(a,e,f,2,1)$, $\ldots,$ $(a,e,f,3,2)$, $(a,e,f,3,3)$), as well as nine tuples $(a,e,g,1,1)$ through $(a,e,g,3,3)$. 
	\item Monomial class ${\cal C}_2^{(Q)}$ has noncopy-signature $[Y_1 \ Y_2]$ and copy-signature $[N_3 \ N_4]$; it contributes  to the set $\Gamma^{t^*_Q}_{\bar{S}}$ $(Q,D_{\bar{N}^{(i)}}(Q))$ fifteen tuples $(a,e,f,1,1)$ through $(a,e,f,3,5)$, as well as fifteen tuples $(a,e,g,1,1)$ through $(a,e,g,3,5)$. 

	\item Monomial class ${\cal C}_3^{(Q)}$ has noncopy-signature $[Y_1 \ Y_2]$ and copy-signature $[N_4 \ N_3]$; it contributes  to the set $\Gamma^{t^*_Q}_{\bar{S}}$ $(Q,D_{\bar{N}^{(i)}}(Q))$ fifteen tuples $(a,e,f,1,1)$ through $(a,e,f,5,3)$, as well as fifteen tuples $(a,e,g,1,1)$ through $(a,e,g,5,3)$. 
	\item Monomial class ${\cal C}_4^{(Q)}$ has noncopy-signature $[Y_1 \ Y_2]$ and copy-signature $[N_4 \ N_4]$; it contributes  to the set $\Gamma^{t^*_Q}_{\bar{S}}$ $(Q,D_{\bar{N}^{(i)}}(Q))$ twenty five tuples $(a,e,f,1,1)$ through $(a,e,f,5,5)$, as well as twenty five tuples $(a,e,g,1,1)$ through $(a,e,g,5,5)$. 
\end{itemize} 

While all four of the above monomial classes have the same noncopy-signature, none of the classes unconditionally dominates (see Definition~\ref{uncond-dom-def}) any other monomial class in the set $\{ {\cal C}_1^{(Q)},$ ${\cal C}_2^{(Q)},$ ${\cal C}_3^{(Q)}$, ${\cal C}_4^{(Q)} \}$. 
\end{example}

\subsubsection{The General Case of Constructing ${\cal F}_{(Q)}^{(Q'')}$}
\label{hard-funct-case-sec} 

\reminder{In this subsection and beyond, need to synchronize with Example~\ref{writeup-weird-ex} the following terminology: 

\begin{itemize}
	\item implicit-wave CCQ query
	\item explicit-wave CCQ query
	\item characteristic-wave monomial class 
	\item characteristic-wave monomial 
	\item total-order vector 
	\item multivariate polynomial ${\cal F}_{(Q)}^{[1 \ N_3 \ N_4]}(Q'')$ ??? 
	\item this monomial class order-dominates all the elements of the set $\{ {\cal C}_1^{(Q)},$ ${\cal C}_2^{(Q)},$ ${\cal C}_3^{(Q)}$, ${\cal C}_4^{(Q)} \}$ w.r.t. the total order associated with the total-order vector $[1 \ N_3 \ N_4]$. 
\end{itemize}
} 

In this subsection we address the construction of the function ${\cal F}_{(Q)}^{(Q'')}$ for the general case, as opposed to the (easy) case considered in Section~\ref{easy-funct-case-sec}. That is, we introduce an approach to computing,  for a query $Q''$ and database $D_{\bar{N}^{(i)}}(Q)$,  the cardinality of the set $\Gamma^{(t^*_Q)}(Q'',D_{\bar{N}^{(i)}}(Q))$ -- and therefore the multiplicity of the tuple $t^*_Q$ in the bag ${\sc Res}_C(Q'',D_{\bar{N}^{(i)}}(Q))$ -- for those cases  where  at least two distinct elements of the set  ${\mathbb C}^{nondom}$ $(Q'')$ could have the same noncopy-signature.  
The approach introduced in this subsection is applicable to constructing the function ${\cal F}_{(Q)}^{(Q'')}$ for {\em all} cases, including the special case of  Section~\ref{easy-funct-case-sec}.  

Consider Example~\ref{writeup-weird-ex} of Section~\ref{beyond-easy-case-sec}: 
For the CCQ query $Q$ and for the family of databases $\{ D_{\bar{N}^{(i)}}(Q) \}$ of the example, the set ${\cal C}(Q) = \{ {\cal C}_1^{(Q)}, {\cal C}_2^{(Q)}, {\cal C}_3^{(Q)}, {\cal C}_4^{(Q)} \}$ has four monomial classes with the same noncopy-signature $[Y_1 \ Y_2]$ and with the respective copy-signatures $[N_3 \ N_3]$, $[N_3 \ N_4]$, $[N_4 \ N_3]$, and $[N_4 \ N_4]$. Clearly, no unconditional-dominance of Definition~\ref{uncond-dom-def} holds for any pair of monomial classes in the set ${\cal C}(Q)$. Hence the set ${\mathbb C}^{nondom}(Q)$ (see Section~\ref{multivar-polyn-basics-sec}) is the set ${\cal  C}(Q)$. Further, as  the set ${\mathbb C}^{nondom}(Q)$ does not satisfy the conditions of Proposition~\ref{easy-final-funct-prop},  the function ${\cal F}_{(Q)}^{(Q)}$ for the example {\em cannot} be constructed using Proposition~\ref{easy-final-funct-prop}. Indeed, it is easy to see that  ${\cal F}_{(Q)}^{(Q)}$ for  Example~\ref{writeup-weird-ex} is {\em not} the sum of the multiplicity monomials for the elements of the above set ${\mathbb C}^{nondom}(Q)$. Specifically, Example~\ref{writeup-weird-ex}  shows that w.r.t. the fixed database $D_{\bar{N}^{(i)}}(Q)$ used in the example, each element of the set ${\mathbb C}^{nondom}(Q)$ contributes to the set $\Gamma^{(t^*_Q)}(Q,D_{\bar{N}^{(i)}}(Q))$ {\em the same} tuple $(a,e,f,1,1)$. 





We summarize that the problem with the general case considered in this subsection 
is that the multiplicity of the tuples contributed, to the set $\Gamma^{(t^*_Q)}(\ldots)$, by distinct monomial classes for the query in question, 
 cannot always be added up to obtain the correct total contribution of the classes to that set. At the same time, we know from Proposition~\ref{gamma-i-union-prop} 
 that for each $i \in {\mathbb N}_+$, the size of the set $\Gamma^{(t^*_Q)}(Q'',D_{\bar{N}^{(i)}}(Q))$ is the size of the union $\bigcup_{j=1}^{n^*}$  $\Gamma^{(i)}[{\cal C}_j^{(Q'')}]$, over all the nonempty monomial classes ${\cal C}_1^{(Q'')},$ $\ldots,$ ${\cal C}_{n^*}^{(Q'')}$ for the query $Q''$ w.r.t. the family of databases $\{ D_{\bar{N}^{(i)}}(Q) \}$. We also know, from Proposition~\ref{diff-noncopy-prop}, 
that for each pair $({\cal C}_n^{(Q'')},{\cal C}_p^{(Q'')})$ of distinct monomial classes among ${\cal C}_1^{(Q'')},$ $\ldots,$ ${\cal C}_{n^*}^{(Q'')}$, such that ${\cal C}_n^{(Q'')}$ and ${\cal C}_p^{(Q'')}$ have distinct noncopy signatures, it holds that the intersection of the sets $\Gamma^{(i)}[{\cal C}_n^{(Q'')}]$ and $\Gamma^{(i)}[{\cal C}_p^{(Q'')}]$ is empty for each $i \in {\mathbb N}_+$. Thus, to obtain the function ${\cal F}_{(Q)}^{(Q'')}$ for the general case, it remains to consider the (perhaps nonempty) intersections of the sets $\Gamma^{(i)}[{\cal C}_n^{(Q'')}]$ and $\Gamma^{(i)}[{\cal C}_p^{(Q'')}]$ {\em only} for those pairs $({\cal C}_n^{(Q'')},{\cal C}_p^{(Q'')})$ where ${\cal C}_n^{(Q'')}$ and ${\cal C}_p^{(Q'')}$ have {\em the same} noncopy signature. 

Thus, the two last missing links in (finally) constructing the function ${\cal F}_{(Q)}^{(Q'')}$ for the general case, are based on the following two results, Propositions~\ref{bigunion-prop} and~\ref{multip-monomial-prop}. Example~\ref{again-writeup-weird-ex} in Section~\ref{second-beyond-easy-case-sec} provides an illustration of the construction of the function ${\cal F}_{(Q)}^{(Q)}$ for the query $Q$  of Example~\ref{writeup-weird-ex} in Section~\ref{beyond-easy-case-sec}.

\begin{proposition} 
\label{bigunion-prop}
Suppose the monomial classes in the set ${\cal C}[Q'']$ $=$ $\{ {\cal C}_1^{(Q'')},$ $\ldots,$ ${\cal C}_{n^*}^{(Q'')} \}$ are indexed (by $1$, $2$, $\ldots,$ $n^*$) in such a way that for all triples $({\cal C}_{j_1}^{(Q'')},$ ${\cal C}_{j_2}^{(Q'')},$ ${\cal C}_{j_3}^{(Q'')})$,  with $1$ $\leq$ $j_1$ $<$ $j_2$ $<$ $j_3$ $\leq$ $n^*$, it cannot be that (a) ${\cal C}_{j_1}^{(Q'')}$ and ${\cal C}_{j_3}^{(Q'')}$ have the same noncopy signature, and (b) ${\cal C}_{j_1}^{(Q'')}$ and ${\cal C}_{j_2}^{(Q'')}$ have different noncopy signatures. 
Further, let $n \in \{ 1,\ldots,n^* \}$ be such that $n$ is the number of distinct noncopy signatures of all the elements of the set ${\cal C}[Q'']$. Finally, let $k_0$ $=$ $0$ and, for this value of $n$, let $1$ $\leq$ $k_1$ $<$ $k_2$ $<$ $\ldots$ $<$ $k_n$ $=$ $n^*$ be such that for each $j$ $\in$ $\{ 1,$ $2,$ $\ldots,$ $n \}$, all monomial classes ${\cal C}_{k_{j-1}+1}^{(Q'')}$, ${\cal C}_{k_{j-1}+2}^{(Q'')}$, $\ldots$, ${\cal C}_{k_j}^{(Q'')}$ have the same noncopy signature.\footnote{All of the above conditions together just say that the elements of the set ${\cal C}[Q'']$ are indexed in such a way that, in the sequence ${\cal C}_1^{(Q'')},$ $\ldots,$ ${\cal C}_{n^*}^{(Q'')}$, we first have all the monomial classes with some noncopy-signature $NS_1$, then all the monomial classes with a different noncopy-signature $NS_2$, and so on. That is, for each noncopy-signature, $NS$, of at least one element of the set ${\cal C}[Q'']$, all monomial classes in ${\cal C}[Q'']$ that have the noncopy-signature $NS$ are ``grouped together'' in the sequence  ${\cal C}_1^{(Q'')},$ $\ldots,$ ${\cal C}_{n^*}^{(Q'')}$.} 

Let $i \in {\mathbb N}_+$. Then the cardinality of the set $\Gamma^{(t^*_Q)}(Q'',$ $D_{\bar{N}^{(i)}}(Q))$ is given exactly as the sum 
$$\sum_{j=0}^{n-1} \hspace{1cm} | \bigcup_{l=1}^{(k_{j+1})-(k_j)} \Gamma^{(i)}[{\cal C}_{(k_j)+l}^{(Q'')}] \hspace{0.3cm} | \ .$$ 
(Here, each ${\cal C}_{(k_j)+l}^{(Q'')}$ referenced in the formula is the$ {(k_j+l)}$'th element of the set  ${\cal C}[Q'']$ $=$ $\{ {\cal C}_1^{(Q'')},$ $\ldots,$ ${\cal C}_{n^*}^{(Q'')} \}$, under a fixed ordering of the elements of the set as specified in the beginning of the statement of this result.)  
\end{proposition}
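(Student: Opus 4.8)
\textbf{Proof proposal for Proposition~\ref{bigunion-prop}.}

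The plan is to reduce the cardinality of $\Gamma^{(t^*_Q)}(Q'',D_{\bar{N}^{(i)}}(Q))$ to a sum of cardinalities of unions, where the outer summation is over the distinct noncopy-signatures present among the monomial classes, and each inner union is over exactly the monomial classes sharing one fixed noncopy-signature. First I would recall from Proposition~\ref{gamma-i-union-prop} that $\Gamma^{(t^*_Q)}_{\bar{S}}(Q'',D_{\bar{N}^{(i)}}(Q))$ equals $\bigcup_{j=1}^{n^*}\Gamma^{(i)}[{\cal C}_j^{(Q'')}]$, so the cardinality in question is $|\bigcup_{j=1}^{n^*}\Gamma^{(i)}[{\cal C}_j^{(Q'')}]|$. (Here I would also note the standard fact, stated earlier in the development via $t^*_Q$-valid assignment mappings and captured by the set ${\mathbb A}^{(i)}_{Q''}$, that $|\Gamma^{(t^*_Q)}(Q'',D_{\bar{N}^{(i)}}(Q))|$ equals $|\Gamma^{(t^*_Q)}_{\bar{S}}(Q'',D_{\bar{N}^{(i)}}(Q))|$ in the relevant sense — or more carefully, the multiplicity of $t^*_Q$ in the bag is exactly this cardinality, since distinct satisfiably-extendible assignments on $\bar{S}(Q'')$ are precisely the objects being counted.) The indexing hypotheses in the statement simply guarantee that the monomial classes are enumerated in blocks, where the block $\{{\cal C}_{(k_j)+1}^{(Q'')},\ldots,{\cal C}_{k_{j+1}}^{(Q'')}\}$ consists of all and only those monomial classes whose noncopy-signature is the $(j{+}1)$-th noncopy-signature $NS_{j+1}$.

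The key step is then a disjointness argument: I would apply Proposition~\ref{diff-noncopy-prop}, which states that whenever two monomial classes ${\cal C}_n^{(Q'')}$ and ${\cal C}_p^{(Q'')}$ have non-identical noncopy-signatures, then $\Gamma^{(i)}[{\cal C}_n^{(Q'')}]\cap\Gamma^{(i)}[{\cal C}_p^{(Q'')}]=\emptyset$ for every $i$. Consequently, any two tuple-sets coming from \emph{different} blocks are disjoint: if $j_1\ne j_2$, then every ${\cal C}$ in block $j_1$ and every ${\cal C}'$ in block $j_2$ have different noncopy-signatures, so $\Gamma^{(i)}[{\cal C}]\cap\Gamma^{(i)}[{\cal C}']=\emptyset$. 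Therefore the big union $\bigcup_{j=1}^{n^*}\Gamma^{(i)}[{\cal C}_j^{(Q'')}]$ is the disjoint union, over the $n$ blocks, of the per-block unions $U_j^{(i)} := \bigcup_{l=1}^{(k_{j+1})-(k_j)}\Gamma^{(i)}[{\cal C}_{(k_j)+l}^{(Q'')}]$ (for $j=0,\ldots,n-1$). Finite additivity of cardinality over a disjoint union then gives $|\bigcup_{j=1}^{n^*}\Gamma^{(i)}[{\cal C}_j^{(Q'')}]| = \sum_{j=0}^{n-1}|U_j^{(i)}|$, which is exactly the asserted formula. The final line of the argument just invokes Proposition~\ref{gamma-i-union-prop} once more to equate this with $|\Gamma^{(t^*_Q)}(Q'',D_{\bar{N}^{(i)}}(Q))|$.

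I do not expect any genuine obstacle here; the proposition is essentially bookkeeping on top of two already-established facts (Proposition~\ref{gamma-i-union-prop} for the union decomposition and Proposition~\ref{diff-noncopy-prop} for cross-block disjointness). The only point requiring a modicum of care is verifying that the indexing hypotheses do in fact partition $\{1,\ldots,n^*\}$ into contiguous blocks each corresponding to a single noncopy-signature — i.e., that the stated triple-condition on $(j_1,j_2,j_3)$ together with the definition of the $k_j$ forces the "grouped together" property described in the footnote. This is a short combinatorial check: the triple-condition rules out a signature reappearing after an interruption, so signatures occur in maximal contiguous runs, and the $k_j$ are precisely the right endpoints of those runs. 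Once that is observed, the disjoint-union and additivity steps are immediate, and nothing about the intra-block structure (where unconditional-dominance and overlaps may occur, as in Example~\ref{writeup-weird-ex}) is needed for this proposition — that is deferred to Proposition~\ref{multip-monomial-prop}.
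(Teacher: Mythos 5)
Your proof is correct and follows exactly the route the paper intends: it declares the result "immediate from Propositions~\ref{gamma-i-union-prop} and~\ref{diff-noncopy-prop}," which are precisely the two facts you combine (union decomposition over all monomial classes, plus disjointness of $\Gamma^{(i)}$ sets across distinct noncopy-signatures, hence additivity over the blocks). Your extra care about the contiguous-block indexing and the $\Gamma^{(t^*_Q)}$ versus $\Gamma^{(t^*_Q)}_{\bar{S}}$ notation only makes explicit what the paper leaves implicit.
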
 

(As usual, we denote by $|S|$ the cardinality of the set $S$. The result of Proposition~\ref{bigunion-prop} is immediate from Propositions~\ref{gamma-i-union-prop} and~\ref{diff-noncopy-prop}.) 

Now we will be able to compute correctly the function  ${\cal F}_{(Q)}^{(Q'')}$ for each $i \in {\mathbb N}_+$, as soon as we are able to evaluate the formulas of the form 
\begin{equation} 
\label{union-eq}
| \bigcup_{l=1}^{(k_{j+1})-(k_j)} \Gamma^{(i)}[{\cal C}_{(k_j)+l}^{(Q'')}] \hspace{0.3cm} | , 
 \end{equation} 
as introduced in Proposition~\ref{bigunion-prop}. We compute the value of such formulas using the basic {\em inclusion-exclusion principle} for computing the cardinality of the union of several sets. 
All that the inclusion-exclusion principle requires as inputs is the cardinalities of the {\em intersections} of the relevant (groups of) sets. (We handle the case of determining the size of each individual set,  $S$, in the input to the cardinality-of-union formula, as the special case of ``intersection of $S$ with itself.'' As will be clear from the statement of Proposition~\ref{multip-monomial-prop}, this special case is captured correctly -- as expected -- by Proposition~\ref{monomial-classes-sizes-prop}.) 

Thus, our next result, Proposition~\ref{multip-monomial-prop},  is the final missing link in the construction of the function  ${\cal F}_{(Q)}^{(Q'')}$, as Proposition~\ref{multip-monomial-prop} tells us how to compute correctly the cardinalities of the intersections of  sets of the form $\Gamma^{(i)}[{\cal C}^{(Q'')}]$, using {\em only the elements of the vector $\bar{N}$,}  that is only variables $N_1$ through $N_{m+w}$ and nothing else. (More precisely, Proposition~\ref{multip-monomial-prop} gives us a formula where, for each specific $i \in {\mathbb N}_+$, we can compute the cardinalities of all the requisite intersections by using the specific values, in ${\bar N}^{(i)}$ for this value $i$, of the respective variables in $\bar{N}$. The formula itself is in terms of $\bar{N}$ only, and does not use ${\bar N}^{(i)}$.) 

For the formulation of Proposition~\ref{multip-monomial-prop}, assume that in the set ${\cal C}[Q'']$ $=$ $\{ {\cal C}_1^{(Q'')},$ $\ldots,$ ${\cal C}_{n^*}^{(Q'')} \}$ there exist (at least) $k$ monomial classes, for some $k \in \{ 1,\ldots,n^* \}$, whose noncopy signature is a given vector $\Xi$ of length $m$. Suppose that for some fixed $i \in {\mathbb N}_+$, we want to compute the cardinality of the intersection of the sets $\Gamma^{(i)}$ (using the notation of Proposition~\ref{bigunion-prop}) for exactly these $k$ elements of the set ${\cal C}[Q'']$. To make easier the notation in the formal results to follow, assume w.l.o.g. that the elements of the set ${\cal C}[Q'']$ are indexed in such a way that for all these chosen $k$ elements of ${\cal C}[Q'']$ that have noncopy-signature $\Xi$, these $k$ monomial classes {\em are the first $k$ elements of the set} ${\cal C}[Q'']$. (That is, these $k$ monomial classes are the elements ${\cal C}_1^{(Q'')},$ $\ldots,$ ${\cal C}_{k}^{(Q'')}$ of the set ${\cal C}[Q'']$.) 

Now let us refer to the vector $\Xi$ as ${\Phi}_n^{{\cal C}_1^{(Q'')}}$. (By our indexing of the elements of the set ${\cal C}[Q'']$ , as introduced in the previous paragraph, the noncopy-signature of the monomial class ${\cal C}_1^{(Q'')}$ is exactly $\Xi$.) The reason that we want to refer to the vector $\Xi$ as ${\Phi}_n^{{\cal C}_1^{(Q'')}}$ is that we want, in the formal results to follow, to use the notation $\Pi_{{\Phi}_n^{{\cal C}_1^{(Q'')}}}$ introduced in Section~\ref{multiplicity-monomial-sec}. 

We also use the following notation of Section~\ref{multivar-polyn-basics-sec}: For an arbitrary monomial class ${\cal C}_l^{(Q'')}$ $\in$ $\{ {\cal C}_1^{(Q'')},$ $\ldots,$ ${\cal C}_{k}^{(Q'')} \}$, for the $k$ $\in$ $\{ 1,\ldots,n^* \}$ fixed as explained above, in case where $r \geq 1$, we denote the elements of the copy-signature vector $\Phi_c[{\cal C}_l^{(Q'')}]$ as $[V_{j1[l]},V_{j2[l]},\ldots,V_{jr[l]}]$. (In case where $r = 0$, the copy-signature vector of each of ${\cal C}_1^{(Q'')},$ $\ldots,$ ${\cal C}_{k}^{(Q'')}$ is the empty vector by definition.) Further, in case $r \geq 1$, for the element $V_{js[l]}$ of the vector $\Phi_c[{\cal C}_l^{(Q'')}]$ (for an arbitrary $s \in \{ 1,\ldots,r \}$) and for an $i \in {\mathbb N}_+$, we denote by $V^{(i)}_{js[l]}$ (a) the constant $1$ in case $V_{js[l]} = 1$, and (b) the value $N_u^{(i)}$ from the vector $\bar{N}^{(i)}$ in case $V_{js[l]}$ is the element $N_u$, for an $u \in \{ m+1,\ldots,m+w \}$, of the vector $\bar{N}$. 

We are finally ready to phrase the final formal result needed in the construction of the function ${\cal F}_{(Q)}^{(Q'')}$. As has been noted earlier in this subsection, Example~\ref{again-writeup-weird-ex} in Section~\ref{second-beyond-easy-case-sec} provides an illustration of the construction of the function ${\cal F}_{(Q)}^{(Q)}$ for the query $Q$ and for the database of Example~\ref{writeup-weird-ex} in Section~\ref{beyond-easy-case-sec}. 


\begin{proposition} 
\label{multip-monomial-prop} 
In the set ${\cal C}[Q'']$ $=$ $\{ {\cal C}_1^{(Q'')},$ $\ldots,$ ${\cal C}_{n^*}^{(Q'')} \}$, let (at least) the first $k$ elements, for some $k \in \{ 1,\ldots,n^* \}$, have the same noncopy-signature ${\Phi}_n^{{\cal C}_1^{(Q'')}}$. Then, for an arbitrary $i \in {\mathbb N}_+$, the cardinality of the set 
$$\bigcap_{s=1}^{k} \Gamma^{(i)}[{\cal C}_{s}^{(Q'')}]$$ 
is provided by substituting the constants in ${\bar N}^{(i)}$ as the values of the respective variables in $\bar{N}$,  into the formula: 
\begin{itemize} 
	\item $\Pi_{{\Phi}_n^{{\cal C}_1^{(Q'')}}}$, in case where $r = 0$; and 
	\item $\Pi_{{\Phi}_n^{{\cal C}_1^{(Q'')}}}$ $\times$ $\Pi_{u=1}^r \ min(V_{ju[1]}, V_{ju[2]},\ldots,V_{ju[k]})$, in case where $r \geq 1$. 
\end{itemize} 
\end{proposition}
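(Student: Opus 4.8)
The statement to prove is Proposition~\ref{multip-monomial-prop}, which computes the cardinality of the intersection $\bigcap_{s=1}^{k} \Gamma^{(i)}[{\cal C}_{s}^{(Q'')}]$ when the $k$ monomial classes share a common noncopy-signature.

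\medskip

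\textbf{Approach.} The plan is to exploit the product (``Cartesian'') structure of each set $\Gamma^{(i)}[{\cal C}^{(Q'')}]$ that has already been established in Section~\ref{multiplicity-monomial-prep-sec}, and reduce the computation of the intersection to coordinate-wise intersections. By Proposition~\ref{noncopy-proj-more-properties-prop}, each $\Gamma^{(i)}[{\cal C}_s^{(Q'')}]$ decomposes (in the generic case $m\geq 1$, $r\geq 1$) as the Cartesian product of its projection onto the multiset noncopy variables $Y''_1,\ldots,Y''_m$ of $Q''$ and its projection onto the copy variables $Y''_{m+1},\ldots,Y''_{m+r}$. The first factor is, by Proposition~\ref{noncopy-proj-properties-prop}, itself a Cartesian product $\prod_{j=1}^m \Gamma^{(i)}_{(Y''_j)}[{\cal C}_s^{(Q'')}]$, where the $j$th factor depends only on the $j$th entry $Z_j$ of the noncopy-signature; since all $k$ classes share the same noncopy-signature ${\Phi}_n^{{\cal C}_1^{(Q'')}}$, these first factors coincide across all $s\in\{1,\ldots,k\}$, and the projection onto $Y''_j$ equals $Dom^{(i)}_Q(Z_j)$. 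Similarly, by Proposition~\ref{copy-proj-properties-prop}, the second factor is a Cartesian product $\prod_{u=1}^r \Gamma^{(i)}_{(Y''_{m+u})}[{\cal C}_s^{(Q'')}]$, and the $u$th factor is the integer interval $\{1,\ldots,V^{(i)}_{ju[s]}\}$, depending only on the $u$th entry of the copy-signature of ${\cal C}_s^{(Q'')}$.

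\medskip

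\textbf{Key steps, in order.} First, I would record the product decomposition of each $\Gamma^{(i)}[{\cal C}_s^{(Q'')}]$ via Propositions~\ref{noncopy-proj-more-properties-prop}, \ref{noncopy-proj-properties-prop}, and~\ref{copy-proj-properties-prop}, and observe that intersection distributes over Cartesian product: $\bigcap_s \prod_c A^{(s)}_c = \prod_c \bigcap_s A^{(s)}_c$ when all factors range over the same index sets (here the columns $Y''_1,\ldots,Y''_{m+r}$ of $Q''$). Second, for each noncopy coordinate $j$, since all $k$ classes have identical $j$th noncopy-signature entry $Z_j$, the intersection $\bigcap_{s=1}^k \Gamma^{(i)}_{(Y''_j)}[{\cal C}_s^{(Q'')}] = Dom^{(i)}_Q(Z_j)$, whose cardinality is $|Dom^{(i)}_Q(Z_j)|$; taking the product over $j=1,\ldots,m$ gives exactly $\Pi^{(i)}_{{\Phi}_n^{{\cal C}_1^{(Q'')}}}$ and, symbolically, $\Pi_{{\Phi}_n^{{\cal C}_1^{(Q'')}}}$. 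Third, for each copy coordinate $u$, the intersection of the integer intervals $\{1,\ldots,V^{(i)}_{ju[1]}\},\ldots,\{1,\ldots,V^{(i)}_{ju[k]}\}$ is the interval $\{1,\ldots,\min(V^{(i)}_{ju[1]},\ldots,V^{(i)}_{ju[k]})\}$, of cardinality $\min(V^{(i)}_{ju[1]},\ldots,V^{(i)}_{ju[k]})$; since each $V_{ju[s]}$ is either the constant $1$ or a variable among $N_{m+1},\ldots,N_{m+w}$, the symbolic expression is $\min(V_{ju[1]},\ldots,V_{ju[k]})$, and taking the product over $u=1,\ldots,r$ gives the claimed copy-factor. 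Fourth, multiply the two factors and invoke the degenerate cases: when $r=0$ only the noncopy factor survives (giving $\Pi_{{\Phi}_n^{{\cal C}_1^{(Q'')}}}$), and when $m=0$ the noncopy factor is the empty product $1$ (so the stated $\Pi_{{\Phi}_n^{{\cal C}_1^{(Q'')}}}$, which is defined to be $1$ when $m=0$, is absorbed), handled exactly as in the statement of Proposition~\ref{noncopy-proj-more-properties-prop}. Finally, I would note that the special case $k=1$ recovers Proposition~\ref{monomial-classes-sizes-prop}, as promised in the surrounding text.

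\medskip

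\textbf{Main obstacle.} The mathematical content is light once the product decompositions are in hand; the real care needed is bookkeeping. The delicate point is verifying that intersection genuinely distributes over the Cartesian products here --- i.e., that a tuple lies in $\bigcap_s \Gamma^{(i)}[{\cal C}_s^{(Q'')}]$ if and only if each of its coordinates lies in the corresponding coordinate-intersection. This relies on the fact (from Propositions~\ref{noncopy-proj-properties-prop} and~\ref{copy-proj-properties-prop}) that each $\Gamma^{(i)}[{\cal C}_s^{(Q'')}]$ is \emph{exactly} the full Cartesian product of its one-dimensional projections, with no ``diagonal'' constraints coupling different columns; without that, the distributive step would fail. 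A secondary subtlety is that the equality of the noncopy-signatures of the $k$ classes is essential for the noncopy factor to collapse cleanly --- if they differed, Proposition~\ref{diff-noncopy-prop} would instead force the intersection to be empty, which is why the hypothesis of a common noncopy-signature is built into the statement. I would make these two points explicit and then the calculation is routine.
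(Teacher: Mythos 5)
Your proposal is correct and follows essentially the same route as the paper: the paper's proof reduces the $r\geq 1$ case to Lemma~\ref{multip-monomial-lemma}, whose justification is exactly your coordinate-wise argument (the Cartesian-product structure of each $\Gamma^{(i)}[{\cal C}_s^{(Q'')}]$ from Propositions~\ref{noncopy-proj-properties-prop}, \ref{copy-proj-properties-prop}, and~\ref{noncopy-proj-more-properties-prop}, plus the fact that an intersection of initial segments $\{1,\ldots,V^{(i)}_{ju[s]}\}$ is the initial segment up to their minimum), and handles $r=0$ via unconditional dominance, which is just your observation that the noncopy factors coincide. The only cosmetic difference is that the paper fibers over a fixed noncopy-projection tuple rather than distributing intersection over the full product, and your explicit flagging of why that distribution is legitimate (the sets are \emph{exactly} products of their one-dimensional projections) is a point the paper leaves implicit.
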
 

\begin{proof}  
For the case where $r = 0$, observe that for each pair of monomial classes among ${\cal C}_1^{(Q'')},$ $\ldots,$ ${\cal C}_{k}^{(Q'')}$, the monomial classes in the pair unconditionally dominate each other, by Definition~\ref{uncond-dom-def}. Therefore,  the result of Proposition~\ref{multip-monomial-prop}  is immediate from  Proposition~\ref{uncond-dom-prop}.  
 
For the case where $r \geq 1$, the result of Proposition~\ref{multip-monomial-prop}  is immediate from Lemma~\ref{multip-monomial-lemma}. 
\end{proof} 

To formulate Lemma~\ref{multip-monomial-lemma}, we use the following terminology. For an element ${\cal C}^{(Q'')}$ of the set ${\cal C}[Q'']$ $=$ $\{ {\cal C}_1^{(Q'')},$ $\ldots,$ ${\cal C}_{n^*}^{(Q'')} \}$, and for an $i \in {\mathbb N}_+$, consider the set $\Gamma^{(i)}[{\cal C}^{(Q'')}]$. In case where $m \geq 1$, let an $m$-tuple $t^{(M)}$ be an arbitrary tuple in the projection of the set $\Gamma^{(i)}[{\cal C}^{(Q'')}]$ on all the multiset noncopy variables of the query $Q''$ (in some arbitrary fixed order of these variables). Then we say that $z \geq 1$ tuples $t_1$, $t_2$, $\ldots,$ $t_z$ in 
the set $\Gamma^{(i)}[{\cal C}^{(Q'')}]$ {\em agree on the multiset-noncopy projection} $t^{(M)}$, if we have that the set projection of the subset $\{ t_1, t_2, \ldots, t_z \}$ of the set $\Gamma^{(i)}[{\cal C}^{(Q'')}]$ on all the multiset noncopy variables of the query $Q''$ (in the same fixed order) is a singleton set $\{ t^{(M)} \}$. In case where $m = 0$, we say that (by default) all the tuples in the set $\Gamma^{(i)}[{\cal C}^{(Q'')}]$ agree on the multiset-noncopy projection, which is the empty tuple when $m$ $=$ $0$. 
 
\begin{lemma} 
\label{multip-monomial-lemma} 
Suppose $r \geq 1$. 
In the set ${\cal C}[Q'']$ $=$ $\{ {\cal C}_1^{(Q'')},$ $\ldots,$ ${\cal C}_{n^*}^{(Q'')} \}$, let (at least) the first $k$ elements, for some $k \in \{ 1,\ldots,n^* \}$, have the same noncopy-signature ${\Phi}_n^{{\cal C}_1^{(Q'')}}$. Let $i \in {\mathbb N}_+$. Let $t^{(M)}$ be an arbitrary tuple in the projection of the set $${\cal S} = \bigcap_{s=1}^{k} \Gamma^{(i)}[{\cal C}_{s}^{(Q'')}]$$ on all the multiset noncopy variables of the query $Q''$, in case $m \geq 1$, and let  $t^{(M)}$ be the empty tuple in case $m = 0$. Then, for the number $K$ of all those tuples in the set ${\cal S} = \bigcap_{s=1}^{k} \Gamma^{(i)}[{\cal C}_{s}^{(Q'')}]$ that agree on the multiset-noncopy projection $t^{(M)}$, we have that the value of $K$ is provided by substituting the constants in ${\bar N}^{(i)}$ as the values of the respective variables in $\bar{N}$,  into the formula 
$$K = \Pi_{u=1}^r \ min(V_{ju[1]}, V_{ju[2]},\ldots,V_{ju[k]}) .$$  
\end{lemma}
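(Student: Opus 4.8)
Looking at Lemma~\ref{multip-monomial-lemma}, I need to count tuples in the intersection $\mathcal{S} = \bigcap_{s=1}^{k} \Gamma^{(i)}[\mathcal{C}_{s}^{(Q'')}]$ that agree on a fixed multiset-noncopy projection $t^{(M)}$, and show this count equals $\Pi_{u=1}^r \min(V_{ju[1]}, \ldots, V_{ju[k]})$ when evaluated at $\bar{N}^{(i)}$.

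\textbf{The plan.} First I would reduce to the structure of each individual set $\Gamma^{(i)}[\mathcal{C}_{s}^{(Q'')}]$ using the machinery already built. By Proposition~\ref{noncopy-proj-more-properties-prop}, since $m \geq 1$ and $r \geq 1$, each $\Gamma^{(i)}[\mathcal{C}_{s}^{(Q'')}]$ is a Cartesian product of its projection onto the multiset-noncopy variables $Y''_1,\ldots,Y''_m$ and its projection onto the copy variables $Y''_{m+1},\ldots,Y''_{m+r}$. The noncopy projection is, by Proposition~\ref{noncopy-proj-properties-prop}(ii), itself a Cartesian product $\prod_{j=1}^m Dom_Q^{(i)}(Z_j)$ where $Z_j$ is the $j$th component of the (common) noncopy-signature $\Phi_n^{\mathcal{C}_1^{(Q'')}}$ — and crucially this factor is \emph{identical} across all $k$ classes since they share the noncopy-signature. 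The copy projection is, by Proposition~\ref{copy-proj-properties-prop}, the box $\prod_{u=1}^r \{1,\ldots,V_{ju[s]}^{(i)}\}$ for class $\mathcal{C}_s^{(Q'')}$. So $\Gamma^{(i)}[\mathcal{C}_{s}^{(Q'')}] = \left(\prod_{j=1}^m Dom_Q^{(i)}(Z_j)\right) \times \left(\prod_{u=1}^r \{1,\ldots,V_{ju[s]}^{(i)}\}\right)$.

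\textbf{The core computation.} Now intersect over $s = 1,\ldots,k$. The noncopy factor is the same for every $s$, so it survives the intersection unchanged. On the copy side, the intersection of boxes distributes coordinatewise: $\bigcap_{s=1}^k \prod_{u=1}^r \{1,\ldots,V_{ju[s]}^{(i)}\} = \prod_{u=1}^r \bigcap_{s=1}^k \{1,\ldots,V_{ju[s]}^{(i)}\} = \prod_{u=1}^r \{1,\ldots,\min_s V_{ju[s]}^{(i)}\}$. Hence $\mathcal{S} = \left(\prod_{j=1}^m Dom_Q^{(i)}(Z_j)\right) \times \left(\prod_{u=1}^r \{1,\ldots,\min(V_{ju[1]}^{(i)},\ldots,V_{ju[k]}^{(i)})\}\right)$. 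Fixing the multiset-noncopy projection to a single tuple $t^{(M)}$ in this product picks out exactly one point of the first factor (when $m \geq 1$; trivially the empty tuple when $m = 0$), so the number $K$ of tuples of $\mathcal{S}$ with that projection is precisely the size of the copy factor, namely $\prod_{u=1}^r \min(V_{ju[1]}^{(i)},\ldots,V_{ju[k]}^{(i)})$. Since $V_{ju[s]}^{(i)}$ is by definition the value of the variable $V_{ju[s]} \in \{1, N_{m+1},\ldots,N_{m+w}\}$ under $\bar{N}^{(i)}$, and $\min$ commutes with this evaluation, $K$ is obtained by substituting the constants of $\bar{N}^{(i)}$ into the formula $\prod_{u=1}^r \min(V_{ju[1]},\ldots,V_{ju[k]})$, as claimed.

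\textbf{Main obstacle.} The one point requiring genuine care is justifying that the intersection of the full sets $\Gamma^{(i)}[\mathcal{C}_{s}^{(Q'')}]$ — which live in the space of assignments to \emph{all} the nonset variables $\bar{S}(Q'')$, not just the multiset variables — really does factor as cleanly as the projection analysis suggests; i.e. that two tuples from different monomial classes coincide iff they coincide on the multiset variables together with the copy variables, with no interference from the distinguished/head coordinates. This follows because all the valid assignment mappings in question are $t^*_Q$-valid, so they agree on the head vector by Proposition~\ref{same-head-tuple-prop} (all contributing the tuple $t^*_Q$), and because $\bar{S}(Q'')$ consists exactly of the head variables together with the multiset variables of $Q''$ — so the head coordinates are constant across $\mathcal{S}$ and contribute nothing to the intersection count. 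I would also note the degenerate case $m = 0$ separately, where there is a single empty multiset-noncopy projection and $K$ is simply $|\mathcal{S}|$ itself, matching the $\Pi_{\Phi_n} = 1$ convention; this is immediate from the same Cartesian-product decomposition with the noncopy factor being the singleton containing the empty tuple.
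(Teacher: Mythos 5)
Your proposal is correct and follows essentially the same route as the paper's (sketched) proof: both rest on Proposition~\ref{noncopy-proj-more-properties-prop} to factor each $\Gamma^{(i)}[{\cal C}_{s}^{(Q'')}]$ into a noncopy part times a copy-variable box, and on Proposition~\ref{copy-proj-properties-prop} to identify that box as $\prod_{u=1}^r \{1,\ldots,V_{ju[s]}^{(i)}\}$, so that the intersection becomes the coordinatewise-$\min$ box. Your write-up merely fills in details the paper leaves implicit (the shared noncopy factor via the common noncopy-signature, and the constancy of the head coordinates via Proposition~\ref{same-head-tuple-prop}).
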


\begin{proof}{(sketch)}
Assume a fixed $i \in {\mathbb N}_+$. The proof of Lemma~\ref{multip-monomial-lemma} is immediate from Proposition~\ref{copy-proj-properties-prop}, 
which exhibits the structure of the projection of the set  $\Gamma^{(i)}[{\cal C}^{(Q'')}]$ (for an arbitrary monomial class ${\cal C}^{(Q'')}$ in the set $\{ {\cal C}_1^{(Q'')},$ $\ldots,$ ${\cal C}_{n^*}^{(Q'')} \}$)  on the set of all copy variables of the query $Q''$, and from  Proposition~\ref{noncopy-proj-more-properties-prop}, 
which explores the ``symmetries'' of the set  $\Gamma^{(i)}[{\cal C}^{(Q'')}]$ on the databases in the family $\{ D_{\bar{N}^{(i)}}(Q) \}$. Specifically, we have that: 
\begin{itemize} 
	\item The values in the projection of the set  $\Gamma^{(i)}[{\cal C}^{(Q'')}]$ on the set of all copy variables of the query $Q''$ are natural numbers in a specified range according to the copy signature of the monomial class  ${\cal C}^{(Q'')}$. More precisely, let the copy signature for the monomial class ${\cal C}^{(Q'')}$ be $[V_{j1}$ $\ V_{j2}$ $\ \ldots$ $\ V_{jr}]$. Then for each $u \in \{ 1,\ldots,r  \}$, each value in the projection of $\Gamma^{(i)}[{\cal C}^{(Q'')}]$ onto the copy variable $Y''_{m+u}$ of $Q''$ is a natural number belonging to the set $\{ 1,\ldots, V_{ju}^{(i)}\}$. Moreover, for each   $u \in \{ 1,\ldots,r  \}$ and for each value $v_u \in \{ 1,\ldots, V_{ju}^{(i)}\}$, the tuple $(v_1,v_2,\ldots,v_r)$ is in the projection of $\Gamma^{(i)}[{\cal C}^{(Q'')}]$ onto all the copy variables $Y''_{m+1},$ $Y''_{m+2},$ $\ldots,$ $Y''_{m+r}$ of $Q''$, in this order. 
	\item Consider now the monomial classes ${\cal C}_1^{(Q'')},$ $\ldots,$ ${\cal C}_{k}^{(Q'')}$ in the statement of Lemma~\ref{multip-monomial-lemma}. For the fixed $i \in {\mathbb N}_+$ and for each $u \in \{ 1,\ldots,r  \}$, denote by $Z_u$ the value $min(V^{(i)}_{ju[1]}, V^{(i)}_{ju[2]},\ldots,V^{(i)}_{ju[k]})$. Then we can show that: 
	
	\begin{itemize} 
		\item For each   $u \in \{ 1,\ldots,r  \}$ and for each value $v_u \in \{ 1,\ldots, Z_u\}$, the tuple $(v_1,v_2,\ldots,v_r)$ is in the projection of the set $\bigcap_{s=1}^{k} \Gamma^{(i)}[{\cal C}_{s}^{(Q'')}]$ onto all the copy variables $Y''_{m+1},$ $Y''_{m+2},$ $\ldots,$ $Y''_{m+r}$ of $Q''$, in this order; and 

		\item Whenever, for at least one $u \in \{ 1,\ldots,r \}$, the value $v_u$ is {\em not} an element of the set $\{ 1,\ldots, Z_u\}$, then we have that the tuple $(v_1,v_2,\ldots,v_r)$ is {\em not} in the projection of the set $\bigcap_{s=1}^{k} \Gamma^{(i)}[{\cal C}_{s}^{(Q'')}]$ onto all the copy variables $Y''_{m+1},$ $Y''_{m+2},$ $\ldots,$ $Y''_{m+r}$ of $Q''$, in this order. 
	\end{itemize} 
\end{itemize} 
\end{proof}


At the conclusion of this subsection, we observe that by the inclusion-exclusion principle for unions of sets, the value of the function ${\cal F}_{(Q)}^{(Q'')}$ for each $i \in {\mathbb N}_+$, that is the cardinality of the set $\Gamma^{(t^*_Q)}(Q'',D_{\bar{N}^{(i)}}(Q))$, can also be computed exactly using the set ${\mathbb C}^{nondom}(Q'')$ of Section~\ref{multivar-polyn-basics-sec}. That is, we can use  the set ${\mathbb C}^{nondom}(Q'')$, rather than the set $\{ {\cal C}_1^{(Q'')},$ $\ldots,$ ${\cal C}_{n^*}^{(Q'')} \}$ of {\em all} nonempty monomial classes for query $Q''$ and database $D_{\bar{N}^{(i)}}(Q)$ (cf. Proposition~\ref{bigunion-prop}):  

\begin{proposition} 
\label{bigunion-nondom-prop}
Suppose the monomial classes in the set ${\mathbb C}^{nondom}(Q'')$ $=$ $\{ {\cal C}_1^{(Q'')},$ $\ldots,$ ${\cal C}_{p}^{(Q'')} \}$ are indexed (by $1$, $2$, $\ldots,$ $p$) in such a way that for all triples $({\cal C}_{j_1}^{(Q'')},$ ${\cal C}_{j_2}^{(Q'')},$ ${\cal C}_{j_3}^{(Q'')})$,  with $1$ $\leq$ $j_1$ $<$ $j_2$ $<$ $j_3$ $\leq$ $p$, it cannot be that (a) ${\cal C}_{j_1}^{(Q'')}$ and ${\cal C}_{j_3}^{(Q'')}$ have the same noncopy signature, and (b) ${\cal C}_{j_1}^{(Q'')}$ and ${\cal C}_{j_2}^{(Q'')}$ have different noncopy signatures. Further, let $n \in \{ 1,\ldots,p \}$ be such that $n$ is the number of distinct noncopy signatures of all the elements of the set ${\cal C}[Q'']$. Finally, let $k_0$ $=$ $0$ and, for this value of $n$, let $1$ $\leq$ $k_1$ $<$ $k_2$ $<$ $\ldots$ $<$ $k_n$ $=$ $p$ be such that for each $j$ $\in$ $\{ 1,$ $2,$ $\ldots,$ $n \}$, all monomial classes ${\cal C}_{k_{j-1}+1}^{(Q'')}$, ${\cal C}_{k_{j-1}+2}^{(Q'')}$, $\ldots$, ${\cal C}_{k_j}^{(Q'')}$ have the same noncopy signature.\footnote{Similarly to the condition of Proposition~\ref{bigunion-prop}, all of the above conditions together just say that the elements of the set ${\mathbb C}^{nondom}(Q'')$ are indexed in such a way that, in the sequence ${\cal C}_1^{(Q'')},$ $\ldots,$ ${\cal C}_{p}^{(Q'')}$, we first have all the monomial classes with some noncopy-signature $NS_1$, then all the monomial classes with a different noncopy-signature $NS_2$, and so on. That is, for each noncopy-signature, $NS$, of at least one element of the set ${\mathbb C}^{nondom}(Q'')$, all monomial classes in ${\mathbb C}^{nondom}(Q'')$ that have the noncopy-signature $NS$ are ``grouped together'' in the sequence  ${\cal C}_1^{(Q'')},$ $\ldots,$ ${\cal C}_{p}^{(Q'')}$.} 

Let $i \in {\mathbb N}_+$. Then the cardinality of the set $\Gamma^{(t^*_Q)}(Q'',$ $D_{\bar{N}^{(i)}}(Q))$ is given exactly as the sum 
$$\sum_{j=0}^{n-1} \hspace{1cm} | \bigcup_{l=1}^{(k_{j+1})-(k_j)} \Gamma^{(i)}[{\cal C}_{(k_j)+l}^{(Q'')}] \hspace{0.3cm} | \ .$$ 
(Here, each ${\cal C}_{(k_j)+l}^{(Q'')}$ referenced in the formula is an element of the set ${\mathbb C}^{nondom}(Q'')$ $=$ $\{ {\cal C}_1^{(Q'')},$ $\ldots,$ ${\cal C}_{p}^{(Q'')} \}$.) 
\end{proposition}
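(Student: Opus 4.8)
\textbf{Proposed proof of Proposition~\ref{bigunion-nondom-prop}.}
The plan is to derive the statement from Proposition~\ref{bigunion-prop}, which already gives exactly the same summation formula but with the set ${\cal C}[Q'']$ of \emph{all} nonempty monomial classes in place of ${\mathbb C}^{nondom}(Q'')$. Concretely, fix $i \in {\mathbb N}_+$. By Proposition~\ref{bigunion-prop}, the cardinality of $\Gamma^{(t^*_Q)}(Q'',D_{\bar{N}^{(i)}}(Q))$ equals $\sum_{j=0}^{n-1} \left| \bigcup_{l} \Gamma^{(i)}[{\cal C}^{(Q'')}_{(k_j)+l}] \right|$ where the inner unions range over all classes of ${\cal C}[Q'']$ grouped by noncopy signature, and $n$ is the number of distinct noncopy signatures occurring in ${\cal C}[Q'']$. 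It therefore suffices to prove two things: (a) for each noncopy signature $NS$ that occurs in ${\cal C}[Q'']$, the union of the sets $\Gamma^{(i)}[{\cal C}]$ over all ${\cal C}\in{\cal C}[Q'']$ with noncopy signature $NS$ coincides with the union of the sets $\Gamma^{(i)}[{\cal C}']$ over all ${\cal C}'\in{\mathbb C}^{nondom}(Q'')$ with noncopy signature $NS$; and (b) the group index $n$, and the grouped ordering hypothesized in the statement, are the same whether we consider ${\cal C}[Q'']$ or ${\mathbb C}^{nondom}(Q'')$. Granting (a) and (b), a term-by-term comparison of the two summation formulas yields the claim.

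For part (a), I would observe that ${\mathbb C}^{nondom}(Q'')$ is obtained from ${\cal C}[Q'']$ in two stages: first the algorithm template {\sc Removal of duplicate monomial classes} produces ${\mathbb C}(Q'')$, and then the removal of unconditionally-dominated classes produces ${\mathbb C}^{nondom}(Q'')$. The first stage only collapses classes sharing \emph{both} their noncopy signature and their copy signature (step (1) of the template), and by Proposition~\ref{identical-answers-prop} any two such classes have identical sets $\Gamma^{(i)}[\cdot]$; hence this stage preserves, group by group, the union of the $\Gamma^{(i)}$ sets (this is exactly the content of Proposition~\ref{correct-drop-duplicates-prop}(ii), read locally inside a single noncopy-signature group). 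The second stage removes a class ${\cal C}_p^{(Q'')}$ only when some class ${\cal C}_n^{(Q'')}$ of ${\mathbb C}(Q'')$ unconditionally dominates it; by Definition~\ref{uncond-dom-def}, unconditional dominance is only defined between classes with the \emph{same} noncopy signature, so ${\cal C}_n^{(Q'')}$ lies in the same group, and by Proposition~\ref{uncond-dom-prop} we have $\Gamma^{(i)}[{\cal C}_p^{(Q'')}]\subseteq\Gamma^{(i)}[{\cal C}_n^{(Q'')}]$; hence removing ${\cal C}_p^{(Q'')}$ does not shrink the union within that group. Since every group retains at least one representative through both stages, no noncopy signature is lost, which proves (b) as well.

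Finally, to assemble the total from the group totals, I would invoke Proposition~\ref{diff-noncopy-prop}: for any two monomial classes with distinct noncopy signatures, their $\Gamma^{(i)}$ sets are disjoint, so the full set $\Gamma^{(t^*_Q)}(Q'',D_{\bar{N}^{(i)}}(Q))$ (which, by Proposition~\ref{gamma-i-union-prop} and Proposition~\ref{correct-drop-duplicates-prop}, equals $\bigcup_{{\cal C}'\in{\mathbb C}^{nondom}(Q'')}\Gamma^{(i)}[{\cal C}']$) decomposes as a \emph{disjoint} union of its per-noncopy-signature pieces. Its cardinality is thus the sum of the cardinalities of those pieces, and each piece is exactly one of the inner unions $\bigcup_{l}\Gamma^{(i)}[{\cal C}^{(Q'')}_{(k_j)+l}]$ appearing in the stated formula (now indexed over ${\mathbb C}^{nondom}(Q'')$). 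Since $i$ was arbitrary, the formula holds for all $i\in{\mathbb N}_+$.

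The one genuinely careful step — and the only place where something could go wrong — is the bookkeeping in part (a): one must be sure that \emph{every} identification or deletion performed in passing from ${\cal C}[Q'']$ to ${\mathbb C}^{nondom}(Q'')$ takes place strictly within a single noncopy-signature group, so that the disjoint-union structure across groups is untouched and the grouped indexing hypothesized in the statement transfers verbatim. This is immediate from the definitions (the duplicate test in step (1) of the template requires identical noncopy signatures, and unconditional dominance is defined only for classes with identical noncopy signatures), but it is exactly the hypothesis one must state cleanly; everything else is a routine combination of Propositions~\ref{gamma-i-union-prop}, \ref{diff-noncopy-prop}, \ref{uncond-dom-prop}, \ref{identical-answers-prop}, \ref{correct-drop-duplicates-prop}, \ref{nondom-prop}, and~\ref{bigunion-prop}.
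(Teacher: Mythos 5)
Your proof is correct. The paper gives no explicit proof of Proposition~\ref{bigunion-nondom-prop} --- it is stated as an observation following ``by the inclusion-exclusion principle'' from the surrounding results --- and your argument supplies precisely the details that are left implicit: Proposition~\ref{bigunion-prop} for the formula over ${\cal C}[Q'']$, plus the verification that passing to ${\mathbb C}^{nondom}(Q'')$ preserves each per-noncopy-signature union. The one point worth emphasizing is the refinement you yourself flag at the end: the paper's Proposition~\ref{nondom-prop} asserts only the \emph{global} equality $\bigcup_{{\cal C} \in {\cal C}[Q'']} \Gamma^{(i)}[{\cal C}] = \bigcup_{{\cal C}' \in {\mathbb C}^{nondom}(Q'')} \Gamma^{(i)}[{\cal C}']$, which is not by itself sufficient for the summation formula; what is needed is the equality of the unions \emph{group by group}, since the formula sums the cardinalities of the per-group unions. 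Your observation that both reduction stages (duplicate removal, which identifies only classes sharing both signatures and hence, by Proposition~\ref{identical-answers-prop}, identical $\Gamma^{(i)}$ sets; and dominance removal, which by Definition~\ref{uncond-dom-def} and Proposition~\ref{uncond-dom-prop} only deletes classes whose $\Gamma^{(i)}$ set is absorbed by another class \emph{with the same noncopy signature}) act strictly within a single group, together with the disjointness across groups from Proposition~\ref{diff-noncopy-prop}, is exactly the right way to close this, and it also yields that no noncopy signature is lost, so the grouped indexing and the value of $n$ transfer verbatim.
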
 

Observe that Proposition~\ref{easy-final-funct-prop}, which constructs the function ${\cal F}_{(Q)}^{(Q'')}$ for a special ``easy'' case as considered in Section~\ref{easy-funct-case-sec},    
is an immediate corollary of Proposition~\ref{bigunion-nondom-prop} and of the definition of the set ${\mathbb C}^{nondom}(Q'')$.  

For an illustration, consider again the function ${\cal F}_{(Q)}^{(Q)}$ of Example~\ref{main-proof-cont-ex} in Section~\ref{main-proof-cont-ex-sec}. When we construct function ${\cal F}_{(Q)}^{(Q)}$ using all four monomial classes of the example, the inclusion-exclusion formulae of this current subsection correctly account for the fact that the $\Gamma^{(i)}()$ for the monomial class ${\cal C}_1^{(Q)}$ is a subset of the $\Gamma^{(i)}()$ for the monomial class ${\cal C}_2^{(Q)}$ on all the databases in question. We observe the similar effect when considering how the  inclusion-exclusion  formulae  account for the relationship between the monomial classes ${\cal C}_3^{(Q)}$ and ${\cal C}_4^{(Q)}$ of the example. Hence, by the inclusion-exclusion principle for unions of sets, the construction of the function ${\cal F}_{(Q)}^{(Q)}$ using all four monomial classes of the example results in the same function as the construction using the set ${\mathbb C}^{nondom}(Q)$, as shown in the example. 
 
 \subsubsection{Illustration of the general construction}
 \label{second-beyond-easy-case-sec}
 
In this subsection, we 
provide an illustration of the results of Section~\ref{hard-funct-case-sec}, by following the construction of the function ${\cal F}_{(Q)}^{(Q)}$ for the query $Q$ and for the database of Example~\ref{writeup-weird-ex} in Section~\ref{beyond-easy-case-sec}. As discussed in the beginning of  Section~\ref{hard-funct-case-sec}, the construction cannot be carried out correctly when using just the results of the ``easy-case'' Section~\ref{easy-funct-case-sec}. 

\begin{example} 
\label{again-writeup-weird-ex} 
We refer to the query $Q$ and database  $D_{\bar{N}^{(i)}}(Q)$ of Example~\ref{writeup-weird-ex} in Section~\ref{beyond-easy-case-sec}. In this example we construct the function ${\cal F}_{(Q)}^{(Q)}$ for that query $Q$ and for the entire family of databases  $\{ D_{\bar{N}^{(i)}}(Q) \}$, $i \geq 1$. In addition, we illustrate the correctness of the construction, by using the multiplicity of the tuple $t^*_Q$ of Example~\ref{writeup-weird-ex} in the combined-semantics answer to the query $Q$ on the specific database $D_{\bar{N}^{(i)}}(Q)$  of Example~\ref{writeup-weird-ex}. 

Recall that the query $Q$ has four nonempty monomial classes, ${\cal C}_{1}^{(Q)}$, ${\cal C}_{2}^{(Q)}$, ${\cal C}_{3}^{(Q)}$, and ${\cal C}_{4}^{(Q)}$,  w.r.t. the family of databases  $\{ D_{\bar{N}^{(i)}}(Q) \}$. (Refer to Example~\ref{writeup-weird-ex} for the details.) Each of the monomial classes has noncopy-signature $[Y_1 \ Y_2]$; the copy-signatures of the four monomial classes are $[N_3 \ N_3]$, $[N_3 \ N_4]$, $[N_4 \ N_3]$, and $[N_4 \ N_4]$, in this order. 

To construct function   ${\cal F}_{(Q)}^{(Q)}$ for the query $Q$ and for the family of databases  $\{ D_{\bar{N}^{(i)}}(Q) \}$, we use 
	Proposition~\ref{bigunion-prop},  to establish that for each $i \in {\mathbb N}_+$, the cardinality of the set $\Gamma^{(t^*_Q)}(Q,D_{\bar{N}^{(i)}}(Q))$ is given exactly as the sum 
$$| \hspace{0.3cm} \bigcup_{l=1}^{4} \Gamma^{(i)}[{\cal C}_{l}^{(Q)}] \hspace{0.3cm} | \ .$$ 


For greater succinctness of the formulae to follow, we label more compactly each of the sets $\Gamma^{(i)}[{\cal C}_{1}^{(Q)}]$ through $\Gamma^{(i)}[{\cal C}_{4}^{(Q)}]$ used in the above formula, as follows:  Denote $\Gamma^{(i)}[{\cal C}_{1}^{(Q)}]$ by $A$, $\Gamma^{(i)}[{\cal C}_{2}^{(Q)}]$ by $B$, $\Gamma^{(i)}[{\cal C}_{3}^{(Q)}]$ by $C$, and $\Gamma^{(i)}[{\cal C}_{4}^{(Q)}]$ by $D$. Then, by the inclusion-exclusion principle for unions of sets, we have that the above union formula can be rewritten as follows: 

$$|A \ \cup \ B \ \cup \ C \ \cup \ D| = |A| + |B| + |C| + |D| - |A \cap B| - |A \cap C|$$ 
$$- |A \cap D| - |B \cap C| - |B \cap D| - |C \cap D| + |A \cap B \cap C| + |A \cap B \cap D|$$
$$+ |A \cap C \cap D| + |B \cap C \cap D| - |A \cap B \cap C \cap D| .$$

We now use Proposition~\ref{multip-monomial-prop} to obtain the cardinality of each of the set intersections in the right-hand side of this formula. First, observe that the multipliers  $\Pi_{{\Phi}_n^{{\cal C}_l^{(Q)}}}$, for $l \in \{ 1,2,3,4 \}$, are all equal to each other, and are each the product $N_1$ $\times$ $N_2$. (This is due to the fact that all the four monomial classes have the same noncopy signature.) 

Thus, what remains to be done, in the construction of the formula ${\cal F}_{(Q)}^{(Q)}$, is to compute the products $$\Pi_{u=1}^k \ min(V_{ju[1]}, V_{ju[2]}, \ldots, V_{ju[k]})$$ of Proposition~\ref{multip-monomial-prop} for all the above set intersections, for all $k$ between 2 (for $|A \cap B|$, $|A \cap C|$, $\ldots,$ $|C \cap D|$) and 4 (for $|A \cap B \cap C \cap D|$). (For convenience in the statement of Proposition~\ref{multip-monomial-prop}, the indexing in the product $\Pi_{u=1}^k \ min(V_{ju[1]}, V_{ju[2]},$ $\ldots, V_{ju[k]})$ assumes that each time we look at the cardinality of the intersection of the sets $\Gamma^{(i)}()$ for {\em the first} $k$ {\em consecutive} elements of the set $\{ {\cal C}_1^{(Q)}$, $\ldots,$ ${\cal C}_4^{(Q)} \}$. That is, the statement of the Proposition assumes reindexing of the elements of the set $\{ {\cal C}_1^{(Q)}$, $\ldots,$ ${\cal C}_4^{(Q)} \}$ ``as needed.'' This assumption needs to be kept in mind when understanding the consecutive indexing by $u$ in the formula $\Pi_{u=1}^k \ min(V_{ju[1]}, V_{ju[2]}, \ldots, V_{ju[k]})$ in this example.) Then, by multiplying each of these products by $N_1$ $\times$ $N_2$ and by ``putting the multiplication results back correctly'' into our inclusion-exclusion formula for the cardinality of the union of Proposition~\ref{bigunion-prop}, we will obtain the expression for the function ${\cal F}_{(Q)}^{(Q)}$. 

We make the basic observation that each $min()$ expression for this example will result in $N_3$ (when the only value in the $min$ expression is $N_3$ -- that is, when all arguments of the $min$ expression are the same variable $N_3$), in $N_4$  (when the only value in the $min$ expression is $N_4$), or in $min(N_3,N_4)$ (in all the remaining cases, regardless of the number of times each of $N_3$ and $N_4$ is an argument of the min expression). To make the writeup more concise, we refer to the latter minimum expression as $Z$. (That is, we denote by $Z$ the expression $min(N_3,N_4)$.) In addition, we denote by $T$ the term $N_1$ $\times$ $N_2$. {\em As the union expressions of Proposition~\ref{bigunion-prop} are uniform (as expressed using the elements of the vector ${\bar N}^{(i)}$) across all values of $i$ $\in$ ${\mathbb N}_+$, in the remainder of this example we switch to the elements of the vector $\bar{N}$ as basic blocks in the construction of the formula ${\cal F}_{(Q)}^{(Q)}$, and refrain from clarifying all the time that the values for specific $i$ can be obtained by substituting the elements of the vector ${\bar{N}}^{(i)}$ for the respective elements of $\bar N$ in the expressions that we are to obtain.} 

By the formula of Proposition~\ref{multip-monomial-prop}, we obtain that: 
\begin{tabbing} 
$|A|$ $=$ $T \times (N_3)^2$; $|B|$ $=$ $|C|$ $=$ $T \times N_3 \times N_4$; \\
$|D|$ $=$ $T \times (N_4)^2$; $|A \cap B|$ $=$ $|A \cap C|$ $=$ $T \times N_3 \times Z$; \\
 $|B \cap D|$ $=$ $|C \cap D|$ $=$ $T \times N_4 \times Z$ . 
\end{tabbing} 
Further, it is easy to check that each of the remaining cardinalities, in the inclusion-exclusion formula for $|A \cup B \cup C \cup D|$, equals $T \times Z^2$. 

Thus, we obtain that 

$$(|A \cup B \cup C \cup D|) / T = (N_3)^2 + 2 N_3 N_4 + (N_4)^2 - 2 Z N_3 - 2 Z N_4 + Z^2 .$$ 

That is, we obtain that, by Propositions~\ref{bigunion-prop} and~\ref{multip-monomial-prop},  

$${\cal F}_{(Q)}^{(Q)} = N_1 N_2 \times [ (N_3)^2 + 2 N_3 N_4 + (N_4)^2 - 2 Z N_3 - 2 Z N_4 + Z^2 ] .$$ 

Recall that $Z$ here denotes the expression $min(N_3,N_4)$. Observe that in this formula for ${\cal F}_{(Q)}^{(Q)}$, for each of the terms 

$$- 2 N_1 \times N_2 \times min(N_3,N_4) \times N_3 ,$$ 

$$- 2 N_1 \times N_2 \times min(N_3,N_4) \times N_4, \ and$$ 

$$+ N_1 \times N_2 \times (min(N_3,N_4))^2 , $$ 
\noindent
we have that none of the three terms corresponds to monomial classes for the query $Q$. Thus, none of these terms is ``backed up'' by assignments from the query $Q$ to any database $D_{{\bar N}^{(i)}}(Q)$. 

Due to the presence of the term $min(N_3,N_4)$ in the above expression for the function ${\cal F}_{(Q)}^{(Q)}$, the function is not a multivariate polynomial (in terms of the elements of the vector $\bar N$) on the entire domain ${\cal N}$ of the function. 
At the same time: 

\begin{itemize} 
	\item For all $i \in {\mathbb N}_+$ such that $N_3^{(i)} \leq N_4^{(i)}$ in the vector $\bar{N}^{(i)}$, we have that (after we substitute $Z = min(N_3,N_4) = N_3$ and then cancel out in the resulting formula) the function ${\cal F}_{(Q)}^{(Q)}$ on this subdomain of ${\cal N}$ is the following multivariate polynomial in terms of the elements of the vector $\bar N$: 
	$${\cal F}_{(Q)}^{(Q)} = N_1 \times N_2 \times (N_4)^2 .$$ 
	\item Similarly, for all $i \in {\mathbb N}_+$ such that $N_3^{(i)} \geq N_4^{(i)}$ in the vector $\bar{N}^{(i)}$, we have that (after we substitute $Z = min(N_3,N_4) = N_4$ and then cancel out in the resulting formula) the function ${\cal F}_{(Q)}^{(Q)}$  on this subdomain of ${\cal N}$ is the following multivariate polynomial in terms of the elements of the vector $\bar N$: 
	$${\cal F}_{(Q)}^{(Q)} = N_1 \times N_2 \times (N_3)^2 .$$ 
\end{itemize} 

Specifically,  for the vector $\bar{N}^{(i)} = [1 \ 2 \ 3 \ 5]$ of Example~\ref{writeup-weird-ex}, we have that $N_3 = 3 \leq N_4 = 5$. Hence, for this $i$ we have that ${\cal F}_{(Q)}^{(Q)}({\bar N}^{(i)}) = N_1^{(i)} \times N_2^{(i)} \times (N_4^{(i)})^2 .$ Observe that the result of evaluating this expression ${\cal F}_{(Q)}^{(Q)}({\bar N}^{(i)})$ for this $i$ is $1 \times 2 \times (5)^2 = 50 .$ This value $50$ is the correct multiplicity of the tuple $t^*_Q = (a)$ of Example~\ref{writeup-weird-ex} in the combined-semantics answer to the query $Q$ on the specific database $D_{\bar{N}^{(i)}}(Q)$  of Example~\ref{writeup-weird-ex}. (Please refer to Example~\ref{writeup-weird-ex} for the specific 50 tuples in the set $\Gamma_{\bar{S}}(Q,D_{\bar{N}^{(i)}}(Q))$ that generate the tuple $t^*_Q$ in the answer to the query on the database.) 
\end{example} 

\subsubsection{Function for the query $Q'$ of Example 4.1}  
\label{qprime-writeup-weird-sec} 


In this section we illustrate by example that, when the condition (i) of Theorem~\ref{magic-mapping-prop} (i.e., the condition of $Q$ being an explicit-wave query) is not satisfied, then even in case where $Q \equiv_C Q'$ does hold, there does not have to exist a monomial class for the query $Q'$ such that the multiplicity monomial of that class is the wave of the query $Q$. 

\begin{example} 
\label{qprime-writeup-weird-ex} 
Let CCQ query $Q'$ be as follows. 
\begin{tabbing} 
Hehetab b \= hehe \kill
$Q'(X_1) \leftarrow r(X_1,Y_1,Y_2,X_2; Y_3), r(X_1,Y_1,Y_2,X_2; Y_4),$ \\ 
\> $\{ Y_1,Y_2,Y_3,Y_4 \} .$ 
\end{tabbing} 

\noindent 
(This is the query $Q'$ of Example~\ref{intro-weird-ex}, rendered here using somewhat different notation.) 


\paragraph{Toward Constructing the Function ${\cal F}_{(Q)}^{(Q')}$ for the Databases $\{$ $D_{\bar{N}^{(i)}}(Q)$ $\}$ of Example~\ref{writeup-weird-ex}} 

We briefly outline here how we construct the function ${\cal F}_{(Q)}^{(Q')}$ for the databases $\{$ $D_{\bar{N}^{(i)}}(Q)$ $\}$, which (databases) we constructed in Example~\ref{writeup-weird-ex} of Section~\ref{beyond-easy-case-sec}. 
The two monomial classes for the function ${\cal F}_{(Q)}^{(Q')}$, for the query $Q'$ and for the databases  $D_{\bar{N}^{(i)}}(Q)$, are as follows: 
\begin{itemize} 
	\item Monomial class ${\cal C}_1^{(Q')}$ has noncopy-signature $[Y_1 \ Y_2]$ and copy-signature $[N_3 \ N_3]$; intuitively, this monomial class results from mapping the query $Q'$ in all possible ways into those atoms of databases $D_{\bar{N}^{(i)}}(Q)$ that (atoms) are associated with the copy variable of the first subgoal of the query $Q$ (in the specific database $D_{\bar{N}^{(i)}}(Q)$ of Example~\ref{writeup-weird-ex}, these atoms would be ground atoms $d_1$ an $d_2$ -- recall that the copy number in each of those atoms is $3$ $=$ $N_3^{(i)}$); 
	
	and 
	\item Monomial class ${\cal C}_2^{(Q')}$ has noncopy-signature $[Y_1 \ Y_2]$ and copy-signature $[N_4 \ N_4]$; intuitively, this monomial class results from mapping the query $Q'$ in all possible ways into those atoms of databases $D_{\bar{N}^{(i)}}(Q)$ that (atoms) are associated with the copy variable of the {\em second} subgoal of the query $Q$ (in the specific database $D_{\bar{N}^{(i)}}(Q)$ of Example~\ref{writeup-weird-ex}, these atoms would be ground atoms $d_3$ an $d_4$). 
\end{itemize} 

There are no other monomial classes for the query $Q'$ and for the databases  $D_{\bar{N}^{(i)}}(Q)$. The reason is, both subgoals of the query $Q'$ have the same relational template. (Specifically, unlike the subgoals of the query $Q$, the two subgoals of the query $Q'$ use the same set variable $X_2$.) Thus, the two subgoals of the query $Q'$ can only be mapped into one ground atom at a time. At the same time, each database $D_{\bar{N}^{(i)}}(Q)$ always has at least two ``relational templates'' for its ground atoms. (To gain the intuition, recall that by construction of the databases $D_{\bar{N}^{(i)}}(Q)$, the two set variables $X_2$ and $X_3$ used in the two subgoals of the query $Q$ are mapped in the databases into two distinct fixed values.)

As a result, we obtain in a process that is similar to that of Example~\ref{again-writeup-weird-ex} (in Section~\ref{second-beyond-easy-case-sec}) that the closed-form expression function ${\cal F}_{(Q)}^{(Q')}$ is 

\begin{tabbing} 
${\cal F}_{(Q)}^{(Q')}$ $=$ $N_1$ $\times$ $N_2$ $\times$ ($N_3^2 + N_4^2 - (min(N_3, N_4))^2$) . 
\end{tabbing} 

\noindent 
This is the expression for the multiplicity of the tuple $t^*_Q$ $=$ $(a)$ in the answer to {\em the query} $Q'$ on the databases $D_{\bar{N}^{(i)}}(Q)$ {\em constructed for the query} $Q$, hence the expression is in terms of the elements of the vector $\bar N$ {\em for the query} $Q$. We can obtain a more compact expression for ${\cal F}_{(Q)}^{(Q')}$, which is 

\begin{tabbing} 
${\cal F}_{(Q)}^{(Q')}$ $=$ $N_1$ $\times$ $N_2$ $\times$ $max(N_3, N_4))^2$ . 
\end{tabbing} 

\noindent 
Not surprisingly, the latter expression is identical to what we can obtain for ${\cal F}_{(Q)}^{(Q)}$ (for the query $Q$), see the end of Example~\ref{again-writeup-weird-ex} in Section~\ref{second-beyond-easy-case-sec}. (Recall that we have proved that the two queries $Q$ and $Q'$ are combined-semantics equivalent.) 
 
Observe that while $Q$ $\equiv_C$ $Q'$ has been proved to hold, neither monomial class for the query $Q'$ on the databases for the query $Q$ is associated with the wave monomial $\Pi_{j = 1}^4 N_j$ of the query $Q$. Indeed, based on the intuition that we discussed for each monomial class of $Q'$ earlier in this example, we cannot map the two subgoals of the query $Q'$ {\em onto} both subgoals of the query $Q$. (Such a mapping would be the only possibility for a SCVM from $Q'$ to $Q$, as a SCVM must map each copy variable of $Q'$ into a distinct copy variable of $Q$.) The reason that we cannot find such a mapping is simple: The query $Q'$ has the same relational template for both subgoals (note the same set variable in both subgoals of $Q'$), and hence these subgoals are only mappable into one subgoal of $Q$ at a time. 

Finally, the intuition discussed in the preceding paragraph can also be used to understand why there exists a SCVM from $Q$ to $Q'$ (as opposed to from $Q'$ to $Q$). Indeed, as $Q$ has different set variables in its two subgoals, it is easy to see that the condition of $Q$ can be mapped {\em onto} the condition of $Q'$ -- as a result, we can construct a SCVM from $Q$ to $Q'$. 
\end{example}

\subsection{For the $Q$ and $Q'$ such that $Q \equiv_C Q'$, When Does $Q'$ Have the Wave of $Q$?}
\label{q-prime-has-wave-sec}

In Section~\ref{putting-together-f-sec} we learned how to construct, for CCQ queries $Q$ and $Q''$ as specified in Section~\ref{basic-queries-sec}, a function ${\cal F}_{(Q)}^{(Q'')}$. For each $i \in {\mathbb N}_+$, the function ${\cal F}_{(Q)}^{(Q'')}$ returns the multiplicity of the tuple $t^*_Q$ in the bag ${\sc Res}_C(Q'',$ $D_{\bar{N}^{(i)}}(Q))$.  The main result of this current section, Proposition~\ref{qprime-goldfish-prop}, 
shows that, whenever 
\begin{itemize} 
	\item[(a)] $Q \equiv_C Q'$ for CCQ queries $Q$ and $Q'$, and 
	\item[(b)] $Q$ is an explicit-wave CCQ query (as specified by Definition~\ref{expl-wave-def}), 
\end{itemize} 
then there exists a (nonempty) monomial class ${\cal C}_*^{(Q')}$ for the query $Q'$ and for the family of databases $\{ D_{\bar{N}^{(i)}}(Q) \}$, such that the multiplicity monomial of ${\cal C}_*^{(Q')}$ is ``the wave of the query $Q$'' (as specified in Definition~\ref{the-wave-def}). We show the result of Proposition~\ref{qprime-goldfish-prop} using the properties of the functions ${\cal F}_{(Q)}^{(Q)}$ and ${\cal F}_{(Q)}^{(Q')}$. The proof of Theorem~\ref{magic-mapping-prop} is immediate from Proposition~\ref{qprime-goldfish-prop} and from Propositions~\ref{q-has-wave-prop} and \ref{q-same-scale-mpng-prop}  of Section~\ref{monomial-class-mappings-sec}. 

Example~\ref{qprime-writeup-weird-ex} of Section~\ref{qprime-writeup-weird-sec} illustrates that, when the above condition (b) (of $Q$ being an explicit-wave query) is not satisfied, then such a monomial class ${\cal C}_*^{(Q')}$ does not have to exist, and hence a SCVM from  the query $Q'$ to the query $Q$ does not have to exist even in case $Q \equiv_C Q'$.

We begin the technical exposition by stating a useful auxiliary result in Section~\ref{polyn-equiv-sec}. 

\subsubsection{Equivalence of Multivariate Polynomials} 
\label{polyn-equiv-sec}

\begin{proposition}
\label{polyn-prop}
For a positive integer $n$, let $X_1,$ $X_2,$ $\ldots,$ $X_n$ be $n$ distinct variables, where each variable accepts values from (at least) an infinite-cardinality subset of the set ${\mathbb Z}$ of all integers.\footnote{For different variables $X_i$, $X_j$, $i \neq j$, in the set $\{ X_1,$ $\ldots,$ $X_n \}$, the domains of $X_i$ and of $X_j$ may include nonidentical (infinite-cardinality) subsets of the set $\mathbb Z$.} Let each of ${\cal P}_1$ and ${\cal P}_2$ be a finite-degree multivariate polynomial in terms of the variables $X_1,$ $\ldots,$ $X_n$ and with integer coefficients. Further, assume that (w.l.o.g.) ${\cal P}_1 \equiv \hspace{-0.3cm} / \hspace{0.3cm} 0$. Then ${\cal P}_1 - {\cal P}_2 \equiv 0$ if and only if for each term $\Pi_{i = 1}^{n} X_i^{l_i}$, where $l_i \in \{ 0 \} \bigcup {\mathbb N}_+$ for all\footnote{When $l_i = 0$ for all $i \in \{ 1,\ldots,n \}$ in the term $\Pi_{i = 1}^{n} X_i^{l_i}$, we set $\Pi_{i = 1}^{n} X_i^{l_i}$ to the constant 1.} $i \in \{ 1,\ldots,n \}$,  
the term has the same integer coefficient in ${\cal P}_1$ and ${\cal P}_2$. 
\end{proposition}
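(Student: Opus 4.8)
The plan is to prove both directions of the biconditional, with the "only if" direction being the substantive one.

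The "if" direction is immediate: if every monomial term $\Pi_{i=1}^n X_i^{l_i}$ has the same integer coefficient in ${\cal P}_1$ and ${\cal P}_2$, then ${\cal P}_1$ and ${\cal P}_2$ are literally the same formal polynomial, so ${\cal P}_1 - {\cal P}_2$ is the zero polynomial, hence ${\cal P}_1 - {\cal P}_2 \equiv 0$ as a function on any domain. (Here the hypothesis ${\cal P}_1 \not\equiv 0$ plays no role; it is presumably kept only for uniformity with how the proposition gets applied.)

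For the "only if" direction, the statement to establish is the classical fact that a nonzero multivariate polynomial with integer (more generally, real) coefficients cannot vanish identically on a product of infinite subsets of $\mathbb{Z}$. I would set ${\cal Q} := {\cal P}_1 - {\cal P}_2$, a finite-degree multivariate polynomial with integer coefficients, and argue the contrapositive: if some monomial term has different coefficients in ${\cal P}_1$ and ${\cal P}_2$, then ${\cal Q}$ is a nonzero polynomial, and I must show ${\cal Q}$ does not vanish everywhere on $S_1 \times \cdots \times S_n$, where each $S_i$ is the (infinite) domain of $X_i$. The natural route is induction on $n$. For $n = 1$: a nonzero univariate polynomial of degree $d$ has at most $d$ roots, so it cannot vanish on the infinite set $S_1$; pick any point of $S_1$ that is not a root. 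For the inductive step, write ${\cal Q}$ as a polynomial in $X_n$ with coefficients that are polynomials in $X_1,\dots,X_{n-1}$: ${\cal Q} = \sum_{k=0}^{d} c_k(X_1,\dots,X_{n-1})\, X_n^k$. Since ${\cal Q}$ is not the zero polynomial, at least one $c_k$ is not the zero polynomial; by the induction hypothesis applied to that $c_k$ on $S_1 \times \cdots \times S_{n-1}$, there is a point $(a_1,\dots,a_{n-1})$ in that product where $c_k(a_1,\dots,a_{n-1}) \neq 0$. Substituting those values leaves a nonzero univariate polynomial in $X_n$, which by the $n=1$ case has a non-root $a_n \in S_n$. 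Then ${\cal Q}(a_1,\dots,a_n) \neq 0$, contradicting ${\cal Q} \equiv 0$.

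I do not anticipate a genuine obstacle here; the only point requiring a little care is bookkeeping in the induction — making sure that "finite degree" is used so that the decomposition into finitely many coefficient polynomials $c_k$ is legitimate, and that each $S_i$ being an infinite subset of $\mathbb{Z}$ (hence an infinite subset of a field of characteristic zero) is exactly what makes "at most $d$ roots" bite. One subtlety worth a sentence: the domains $S_i$ may differ across $i$, but the induction only ever uses each $S_i$ on its own axis, so this causes no trouble. I would also remark that integer coefficients are not essential (rational or real would do), but stating it for integers matches the intended use, where the coefficients arise as cardinalities of sets of assignments.
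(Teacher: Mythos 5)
Your proof is correct, and the ``only if'' direction is where you and the paper genuinely diverge. The paper also reduces to the univariate finitely-many-roots fact, but does so in one shot: it picks a variable $X_j$ that occurs with positive exponent in some nonzero-coefficient term ${\cal T}'$ of ${\cal P}_1-{\cal P}_2$, substitutes an arbitrary fixed nonzero integer for every other variable, and asserts that the resulting univariate polynomial ${\cal P}_{(X_j)}$ still has a term with nonzero coefficient ``resulting from ${\cal T}'$.'' Your induction on $n$ replaces that assertion with an actual argument: you write ${\cal Q}=\sum_{k} c_k(X_1,\dots,X_{n-1})\,X_n^k$, invoke the inductive hypothesis to find a point of $S_1\times\cdots\times S_{n-1}$ at which some nonzero coefficient polynomial $c_k$ does not vanish, and only then specialize to get a nonzero univariate polynomial. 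This buys rigor: under the paper's arbitrary substitution, contributions from distinct monomials to the same power of $X_j$ could in principle cancel, so the claim that ${\cal P}_{(X_j)}$ is nonzero needs exactly the justification your induction supplies. The paper's version is shorter and skips the induction bookkeeping; yours is the one that closes the argument without an unproved intermediate claim. Your side remarks -- that the domains $S_i$ may differ across axes without harm, and that integer coefficients are inessential -- are both accurate.
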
 

\begin{proof}
{\em If:} Immediate from the definitions. 

{\em Only-If:} The proof is by contradiction: Assume that for the finite-degree multivariate polynomial ${\cal P}_1 - {\cal P}_2$, call it $\cal P$, we have that ${\cal P} \equiv 0$.  Assume further that there exists a term, call it $\cal T$, of the form $\Pi_{i = 1}^{n} X_i^{l_i}$, such that the polynomial $\cal P$ has a nonzero integer coefficient for $\cal T$. We will show that in this case, ${\cal P} \equiv 0$ cannot hold, hence we arrive at a contradiction with the assumption ${\cal P} \equiv 0$. 

Case 1: $\cal T$ is the only term with nonzero coefficient in the polynomial $\cal P$, and $l_i = 0$ for all $i \in \{ 1,\ldots,n \}$ in $\cal T$. Then $\cal P$ is equivalent to a nonzero-valued constant function, and the contradiction with the assumption ${\cal P} \equiv 0$ is immediate; Q.E.D. 

Case 2: There exists a nonzero-coefficient term in $\cal P$, call this term ${\cal T}'$, such that there exists a $j \in \{ 1,\ldots,n \}$, where the power $l_j$ of variable $X_j$ in ${\cal T}'$ is a positive integer. Then for each $X_l$ such that $l \in \{ 1,\ldots,n \}$ $-$ $\{ j \}$, fix one arbitrary integer value $x_l \neq 0$ in the domain of $X_l$. (Clearly, it is possible to find a nonzero integer domain value for each $X_l$.) The result of substituting all the values $x_l$, $l \in \{ 1,\ldots,n \}$ $-$ $\{ j \}$,  into the polynomial $\cal P$ is a finite-degree  univariate polynomial with integer coefficients, call it ${\cal P}_{(X_j)}$, in terms of the variable $X_j$ and with at least one term with a nonzero (integer) coefficient. (One term  with a nonzero coefficient in ${\cal P}_{(X_j)}$ results from ${\cal T}'$.)  By our assumption that ${\cal P} \equiv 0$, the value of ${\cal P}_{(X_j)}$ equals zero on the entire infinite integer-valued domain of the variable $X_j$. This is impossible, hence we have arrived at a contradiction with the assumption that ${\cal P} \equiv 0$; Q.E.D. 
\end{proof}

\subsubsection{When the Query $Q'$ Has the Wave of $Q$} 
\label{qprime-goldfish-sec} 

We now state and prove the main result of Section~\ref{q-prime-has-wave-sec}. 

\begin{proposition} 
\label{qprime-goldfish-prop} 
Let $Q$ and $Q'$ be two CCQ queries, such that 
\begin{itemize} 
	\item[(a)] we have that $Q \equiv_C Q'$, and 
	\item[(b)] $Q$ is an explicit-wave CCQ query.  
\end{itemize} 
Then  for the query $Q'$ and for the family of databases $\{ D_{\bar{N}^{(i)}}(Q) \}$, there exists a nonempty monomial class ${\cal C}_*^{(Q')}$, such that the multiplicity monomial of ${\cal C}_*^{(Q')}$ is the wave of the query $Q$. 
\end{proposition}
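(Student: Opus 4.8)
\textbf{Proof plan for Proposition~\ref{qprime-goldfish-prop}.}

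The plan is to compare the two functions $\mathcal{F}_{(Q)}^{(Q)}$ and $\mathcal{F}_{(Q)}^{(Q')}$ on the family $\{ D_{\bar{N}^{(i)}}(Q) \}$. Since $Q \equiv_C Q'$, for every $i \in {\mathbb N}_+$ these functions agree on $\bar{N}^{(i)}$, i.e.\ they are the same function on the domain ${\cal N}$. By Proposition~\ref{q-has-wave-prop}, the query $Q$ has a nonempty monomial class ${\cal C}_*^{(Q)}$ whose multiplicity monomial is the wave ${\cal P}^{(Q)}_*$ of $Q$; by Proposition~\ref{wave-is-of-power-r-prop} this monomial has total degree exactly $m+r$, and by Proposition~\ref{wave-has-all-n-j-prop} every variable $N_1,\ldots,N_{m+w}$ occurs in it. The strategy is: (1) show that, because $Q$ is explicit-wave, the wave monomial ${\cal P}^{(Q)}_*$ survives the inclusion-exclusion cancellations in the construction of $\mathcal{F}_{(Q)}^{(Q)}$, so that on a suitable subdomain of ${\cal N}$ where $\mathcal{F}_{(Q)}^{(Q)}$ is a genuine multivariate polynomial, the monomial ${\cal P}^{(Q)}_*$ appears with a nonzero coefficient; (2) use Proposition~\ref{polyn-prop} on that subdomain to transfer this fact to $\mathcal{F}_{(Q)}^{(Q')}$; and (3) read off from the construction of $\mathcal{F}_{(Q)}^{(Q')}$ (Sections~\ref{multiplicity-monomial-sec}, \ref{easy-funct-case-sec}, \ref{hard-funct-case-sec}) that a monomial of this shape can only be contributed by a monomial class of $Q'$ whose multiplicity monomial is literally ${\cal P}^{(Q)}_*$.

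For step (1), the key point is the explicit-wave hypothesis. I would fix a total order on $\{ N_{m+1},\ldots,N_{m+w} \}$ (say $N_{m+1} < \cdots < N_{m+w}$) and restrict to the subdomain ${\cal N}_{<}$ of vectors $\bar{N}^{(i)}$ whose copy-coordinates are \emph{strictly} increasing in that order and dominate all noncopy-coordinates; on such a subdomain every $\min(V_{ju[1]},\ldots,V_{ju[k]})$ expression occurring in Proposition~\ref{multip-monomial-prop} collapses to a single variable, so $\mathcal{F}_{(Q)}^{(Q)}$ becomes an honest multivariate polynomial there. The explicit-wave condition (Definition~\ref{expl-wave-def}) guarantees that the generative monomial class ${\cal C}_*^{(Q)}$ is the \emph{unique} nondominated monomial class of $Q$ with its noncopy-signature $[Y_1\ \cdots\ Y_m]$ and copy-signature $[\nu_Q^{copy}(Y_{m+1})\ \cdots\ \nu_Q^{copy}(Y_{m+r})]$ whose multiplicity monomial has degree $m+r$ --- any other monomial class with the same noncopy-signature has a copy-signature whose entries, read off via the explicit-wave property, force either a strictly smaller-degree multiplicity monomial or one that is unconditionally dominated after reordering. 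Hence in the inclusion-exclusion sum of Proposition~\ref{bigunion-prop} over the noncopy-signature group of ${\cal C}_*^{(Q)}$, the top-degree monomial ${\cal P}^{(Q)}_*$ occurs with coefficient $+1$ and no other term (including the $\pm$ intersection terms, all of which have degree $< m+r$ in the copy variables because each intersection introduces at least one $\min$ that is strictly smaller than the wave's copy-part) cancels it. Thus ${\cal P}^{(Q)}_*$ has coefficient $1$ in the polynomial representing $\mathcal{F}_{(Q)}^{(Q)}$ on ${\cal N}_{<}$.

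For step (2), I note that $\mathcal{F}_{(Q)}^{(Q')}$ is also a multivariate polynomial on ${\cal N}_{<}$ (same collapse of $\min$'s), and since $\mathcal{F}_{(Q)}^{(Q)} = \mathcal{F}_{(Q)}^{(Q')}$ pointwise on ${\cal N}_{<}$ --- which contains an infinite grid in each variable --- Proposition~\ref{polyn-prop} forces the two polynomials to have identical coefficients; in particular ${\cal P}^{(Q)}_*$ has coefficient $1$ in $\mathcal{F}_{(Q)}^{(Q')}$ on ${\cal N}_{<}$. For step (3): by the construction in Sections~\ref{easy-funct-case-sec}--\ref{hard-funct-case-sec}, on ${\cal N}_{<}$ the polynomial $\mathcal{F}_{(Q)}^{(Q')}$ is a $\pm$-combination of multiplicity monomials of monomial classes of $Q'$ and of their pairwise-$\min$ "intersection" products; the latter products, on ${\cal N}_{<}$, are again multiplicity monomials of classes with the same noncopy-signature but strictly smaller total copy-degree, so the only way to produce a degree-$(m+r)$ monomial in which every one of $N_1,\ldots,N_{m+w}$ appears is for some genuine monomial class ${\cal C}_*^{(Q')}$ of $Q'$ to have multiplicity monomial exactly ${\cal P}^{(Q)}_*$ (a product $\Pi_{j=1}^m N_j \cdot \Pi N_{(\cdot)}$ of the right shape forces noncopy-signature a permutation of $[Y_1\ \cdots\ Y_m]$ and copy-signature a permutation of the wave's copy-part). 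That ${\cal C}_*^{(Q')}$ is nonempty by Proposition~\ref{same-monomial-classes-prop}. This is exactly the conclusion claimed. \emph{The main obstacle} I expect is step (1): carefully verifying, from Definition~\ref{expl-wave-def}, that no lower-order inclusion-exclusion correction term and no other monomial class of $Q$ can contribute a degree-$(m+r)$ monomial equal to ${\cal P}^{(Q)}_*$ with a canceling sign --- this is where the explicit-wave restriction is genuinely used, and where the bookkeeping over copy-signatures and the $\min$-collapse on ${\cal N}_{<}$ must be done with care.
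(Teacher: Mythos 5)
Your steps (1) and (2) follow the same route as the paper: the explicit-wave hypothesis forces all monomial classes of $Q$ with a given noncopy-signature that is a permutation of $[Y_1\ \cdots\ Y_m]$ to share the same copy-signature, so the $m$-covering part of ${\cal F}_{(Q)}^{(Q)}$ is an all-positive sum of solid terms containing ${\cal P}^{(Q)}_*$, and polynomial identity on an infinite subdomain transfers this term, with positive coefficient, to ${\cal F}_{(Q)}^{(Q')}$. Up to there you are aligned with Propositions~\ref{m-covering-prop}, \ref{goldfish-properties-prop} and \ref{same-funct-in-q-prime-prop}.

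The gap is in step (3), and it is exactly the hard part of the theorem. You argue that the inclusion-exclusion correction terms ``have degree $< m+r$ in the copy variables because each intersection introduces at least one $\min$ that is strictly smaller than the wave's copy-part,'' so that a degree-$(m+r)$ monomial of the wave's shape can only come from a genuine monomial class of $Q'$. This confuses \emph{value} with \emph{degree}: by Proposition~\ref{multip-monomial-prop} an intersection term is $\Pi_{{\Phi}_n}\times\Pi_{u=1}^r \min(V_{ju[1]},\ldots,V_{ju[k]})$, and under a fixed total order each $\min$ collapses to a single \emph{variable}, not to a lower-degree expression; the correction term therefore has total degree exactly $m+r$, the same as the wave. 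Consequently ${\cal P}^{(Q)}_*$ could a priori be a \emph{phantom} term of ${\cal F}_{(Q)}^{(Q')}[{\bar K}_w]$ for every order ${\bar K}_w$ --- present in the polynomial with positive net coefficient after cancellations, yet never the multiplicity monomial of any actual monomial class of $Q'$ (e.g.\ copy-signatures $[N_A\ N_C]$ and $[N_B\ N_C]$ yield the correction term $-\min(N_A,N_B)\cdot N_C$, which collapses to a degree-$r$ product of distinct variables). Ruling this out is what the paper spends Sections~\ref{solid-phantom-sec}, \ref{intuition-sec}, \ref{bqkw-sec} and \ref{sec-hinge-proof-sec} on: assuming the wave is phantom, some projection of the generating $\min$-product has two distinct copy-variable arguments (Proposition~\ref{phantom-min-two-prop}), and swapping the relative order of those two variables changes the occurrence-count signature ${\cal S}(Q',\cdot)$, hence changes the $m$-covering polynomial across total orders, contradicting Propositions~\ref{equiv-funct-prop} and \ref{m-covering-prop}. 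Your proposal does not contain this argument or a substitute for it, so it does not close the proof.
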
 

The proof of Proposition~\ref{qprime-goldfish-prop}, to be given in Section~\ref{sec-hinge-proof-sec}, hinges 
on several results, which we now proceed to introduce. For the entire exposition, please keep in mind that  throughout the proof of Theorem~\ref{magic-mapping-prop}, all monomial classes of all queries, as well as each of the functions ${\cal F}_{(Q)}^{(Q)}$ and ${\cal F}_{(Q)}^{(Q')}$, are defined w.r.t. the family of databases $\{ D_{\bar{N}^{(i)}}(Q) \}$ for the fixed input query $Q$. 

\subsubsection{Multivariate polynomials on total orders} 
\label{multiv-polyn-sec}

Recall that in general, for CCQ query $Q''$ and for the family of databases $\{ D_{\bar{N}^{(i)}}(Q) \}$ (for CCQ query $Q$), the function ${\cal F}_{(Q)}^{(Q'')}$ for $Q''$ and for $\{ D_{\bar{N}^{(i)}}(Q) \}$  is not a multivariate polynomial on its entire domain ${\cal N}$. (See Example~\ref{again-writeup-weird-ex} for an illustration.) At the same time, it turns out  that the set ${\cal N}$ can be represented as a union of infinite-cardinality sets, such that for each set $S$ in the union, the function ${\cal F}_{(Q)}^{(Q'')}$, for all the elements of the set $S$, can be rewritten equivalently as a multivariate polynomial in terms of the elements of the vector $\bar N$ and with integer coefficients. (That is, for each  ${\bar N}^{(i)}$ $\in$ $S$, the value of ${\cal F}_{(Q)}^{(Q'')}({\bar N}^{(i)})$ can be obtained by substituting the values in ${\bar N}^{(i)}$ into the relevant multivariate polynomial in terms of the elements of the vector $\bar N$ and with integer coefficients.) 

In fact, as we know already for the case $r \leq 1$ (that is, for the input CCQ query $Q$ that has $r = |M_{copy}| \leq 1$), the function  ${\cal F}_{(Q)}^{(Q'')}$ for this case is a multivariate polynomial in terms of the elements of the vector $\bar N$ and with integer coefficients, on the entire domain $\cal N$ of the function. (See Section~\ref{easy-funct-case-sec}.) Hence we proceed to prove the above claim for the case $r \geq 2$. Recall from Proposition~\ref{sc-prop}(iv) that for all $r > 0$ we have that $w > 0$. We conclude that whenever $r \geq 2$, the vector $\bar N$ has at least one element in the sequence $N_{m+1}$ $\ N_{m+2}$ $\ \ldots$ $\ N_{m+w}$. Of these cases, we first consider  the special case $w = 1$, and then the general case $w \geq 1$. 

\paragraph{The special case: $r \geq 2$ and $w = 1$} 
We first consider all those cases (for the input CCQ query $Q$) where $r \geq 2$ and $w = 1$. In all such cases, the copy signatures of all relevant monomial classes (for both $Q$ and $Q''$) are composed, by their definition, of the elements of the set $\{ 1,N_{m+1} \}$. Clearly, then, each $min$ expression of Proposition~\ref{multip-monomial-prop}  in terms of elements of all the relevant copy signatures (in each case where $w = 1$) evaluates to either $1$ or $N_{m+1}$, independently of the value $N_{m+1}^{(i)}$ of $N_{m+1}$ in each vector ${\bar N}^{(i)}$ $\in$ $\cal N$. (Recall that $1 \leq N^{(i)}_{m+1}$ holds for all vectors ${\bar N}^{(i)}$, by definition of the set $\cal N$.) Using the results of Section~\ref{hard-funct-case-sec}, we conclude that in all cases of query $Q$ for which $r \geq 2$ and $w = 1$,   the function  ${\cal F}_{(Q)}^{(Q'')}$ for each such case is a multivariate polynomial in terms of the elements of the vector $\bar N$ and with integer coefficients, on the entire domain $\cal N$ of the function. 

\paragraph{The general case: $r \geq 2$ (and $w \geq 1$)} 
Now consider all those cases (for the input CCQ query $Q$) where $r \geq 2$, and therefore, by Proposition~\ref{sc-prop}(iv), $w \geq 1$. 
We will define the sets $S$ suggested above (such that ${\cal N}$ is a union of such infinite-cardinality sets) using total orders on the elements of the vector ${\bar N}_w$ $=$ $[ \ 1$ $\ N_{m+1}$ $\ N_{m+2}$ $\ \ldots$ $\ N_{m+w} \ ]$. By $ w \geq 1$, we have that the vector ${\bar N}_w$ has at least two elements. (Note: We will see that in the above special case of $r \geq 2$ and $w = 1$, the set $\cal N$ is a union of only one such set $S$.) 

Let vector ${\bar K}_w$ $=$ $[ \ 1$ $\ K_1$ $\ K_2$ $\ \ldots$ $\ K_w \  ]$ be an arbitrary fixed permutation of the vector ${\bar N}_w$ that satisfies the condition that the first element of ${\bar K}_w$ is always the constant $1$. (That is, in each vector ${\bar K}_w$ we have that the sequence $K_1$ $\ K_2$ $\ \ldots$ $\ K_w$ is a permutation of the sequence $N_{m+1}$ $\ N_{m+2}$ $\ \ldots$ $\ N_{m+w}$ in the vector  ${\bar N}_w$.)  We refer to each such vector ${\bar K}_w$ as a {\em copy-variable-ordering vector for the vector} ${\bar N}$. 

Let a total order $\cal O$ on the set $\{ 1,$ $N_{m+1},$ $N_{m+2},$ $\ldots,$ $N_{m+w} \}$ be defined as the reflexive transitive closure on the relation $\{$ $(1, K_1),$ $(K_1, K_2),$ $(K_2, K_3),$ $\ldots,$ $(K_{j}, K_{j+1}),$ $\ldots,$ $(K_{w-1}, K_w)$ $\}$, using the fixed vector ${\bar K}_w$. (We interpret the pair $(1, K_{1})$ in $\cal O$ as $1 \leq K_{1}$. Further, whenever $w \geq 2$,  for $1 \leq j \leq w-1$, we interpret the pair $(K_{j}, K_{j+1})$ in $\cal O$ as $K_j \leq K_{j+1}$. That is, $\cal O$ is the $\leq$ relation.) Then we say that the vector ${\bar K}_w$ {\em determines the total-order relation ${\cal O}$ on} ${\bar N}_w$, and use the notation ${\cal O}^{({\bar K}_w)}$ for that total-order relation $\cal O$. 

Now for a vector ${\bar K}_w$ as above and for an arbitrary vector ${\bar N}^{(i)}$ $\in$ $\cal N$, we define {\em the interpretation of each element of} ${\bar K}_w$ {\em w.r.t.} ${\bar N}^{(i)}$, as follows: 
\begin{itemize} 
	\item[(1)] We define the interpretation of the first element of  ${\bar K}_w$ (that is, of the constant $1$)  to be the constant $1$; and 
	\item[(2)] For each $j$ $\in$ $\{ 2,$ $\ldots,$ $w+1 \}$, suppose that the $j$th element of the vector ${\bar K}_w$ (that is, $K_{j-1}$ in our notation for the vector ${\bar K}_w$) is the variable $N_k$ of $\bar N$, for some $k$ $\in$ $\{ m+1, $ $\ldots,$ $ m+w \}$. Then the interpretation of $K_{j-1}$ w.r.t. ${\bar N}^{(i)}$ is the value $N^{(i)}_k$ in ${\bar N}^{(i)}$ of the variable $N_k$ in $\bar N$. We denote this interpretation of $K_{j-1}$ w.r.t. ${\bar N}^{(i)}$ as  $K^{(i)}_{j-1}$. 
	\end{itemize} 

\begin{example} 
\label{k-w-ex} 
For $m = 1$ and for $w = 3$, the vector ${\bar N}_w$ is ${\bar N}_w$ $=$ $[ \ 1$ $\ N_2$ $\ N_3$ $\ N_4 \ ]$. Let  ${\bar K}_w$ $:=$ $[ \ 1$ $\ N_3$ $\ N_4$ $\ N_2 \ ]$. Let ${\bar N}^{(i)}$, for some fixed natural number $i$, be $[ \ 5$ $\ 3$ $\ 7$ $\ 6 \ ]$. Then the interpretation of the elements of the vector ${\bar K}_w$ w.r.t. the vector ${\bar N}^{(i)}$ is as follows:  $1$ in ${\bar K}_w$ is interpreted as $1$, the element $K_1$ $=$ $N_3$ in ${\bar K}_w$ is interpreted as $K_1^{(i)}$ $=$ $7$,  the element $K_2$ $=$ $N_4$ in ${\bar K}_w$ is interpreted as $K_2^{(i)}$ $=$ $6$, and, finally,  the element $K_3$ $=$ $N_2$ in ${\bar K}_w$ is interpreted as $K_3^{(i)}$ $=$  $3$. 
\end{example} 

Now suppose that we are given a vector ${\bar K}_w$ as defined above, and are given a vector ${\bar N}^{(i)}$ $\in$ $\cal N$. Then we say that the vector ${\bar N}^{(i)}$  {\em agrees with the total order}  ${\cal O}^{({\bar K}_w)}$ if and only if the interpretation of the elements of the vector ${\bar K}_w$ w.r.t. the vector ${\bar N}^{(i)}$ results in all (i.e., in {\em only}) true inequalities, on the set of natural numbers, in the reflexive transitive closure of the relation $\{$ $(1, K^{(i)}_1),$ $(K^{(i)}_1, K^{(i)}_2),$ $(K^{(i)}_2, K^{(i)}_3),$ $\ldots,$ $(K^{(i)}_{j}, K^{(i)}_{j+1}),$ $\ldots,$ $(K^{(i)}_{w-1}, K^{(i)}_w)$ $\}$. (We interpret the pair $(1, K^{(i)}_1)$ as $1 \leq K^{(i)}_1.$ Further, in case $w \geq 2$, for each $j \in \{ 1,\ldots,w-1 \}$, the pair $(K^{(i)}_j, K^{(i)}_{j+1})$ in this relation is interpreted as $K^{(i)}_j \leq K^{(i)}_{j+1}$ \ .) The latter relation is obtained by replacing each $K_{j}$ with $K^{(i)}_{j}$, for $j \in \{ 1,\ldots,w \}$, in the relation that  defines the total order ${\cal O}^{({\bar K}_w)}$, that is in the relation $\{$ $(1, K_1),$ $(K_1, K_2),$ $(K_2, K_3),$ $\ldots,$ $(K_{w-1}, K_w)$ $\}$. 

\begin{example} 
For the vectors ${\bar K}_w$  and ${\bar N}^{(i)}$ of Example~\ref{k-w-ex}, we have that the reflexive transitive closure of the relation $\{$ $(1, K^{(i)}_1),$ $(K^{(i)}_1, K^{(i)}_2),$ $(K^{(i)}_2, K^{(i)}_3)$ $\}$, that is of the relation $ \{$ $(1, 7),$ $(7, 6),$ $(6, 3)$ $\}$, has elements violating true inequalities on natural numbers. For instance, one of the violations comes from the pair  $(K^{(i)}_1, K^{(i)}_2)$ in this relation, that is from the pair $(7, 6)$. (Recall that we interpret $(K^{(i)}_j, K^{(i)}_{j+1})$ as $K^{(i)}_j \leq K^{(i)}_{j+1}$\ .) We conclude that the vector ${\bar N}^{(i)}$  does not agree with the total order  ${\cal O}^{({\bar K}_w)}$. 

Now consider a different vector  ${\bar K}'_w$ $:=$ $[ \ 1$ $\ N_2$ $\ N_4$ $\ N_3 \ ]$. The interpretation of the elements of the vector ${\bar K}'_w$ w.r.t. the vector ${\bar N}^{(i)}$ (which is the same as before) is as follows:  $1$ in ${\bar K}'_w$ is interpreted as $1$, the element $K'_1$ $=$ $N_2$ in ${\bar K}'_w$ is interpreted as $K_1^{' \ (i)}$ $=$ $3$,  the element $K'_2$ $=$ $N_4$ in ${\bar K}'_w$ is interpreted as $K_2^{' \ (i)}$ $=$ $6$, and, finally,  the element $K'_3$ $=$ $N_3$ in ${\bar K}'_w$ is interpreted as $K_3^{' \ (i)}$ $=$  $7$. Then we have that the reflexive transitive closure of the relation $\{$ $(1, K^{' \ (i)}_1),$ $(K^{' \ (i)}_1, K^{' \ (i)}_2),$ $(K^{' \ (i)}_2, K^{' \ (i)}_3)$ $\}$, that is of the relation $ \{$ $(1, 3),$ $(3, 6),$ $(6, 7)$ $\}$, does {\em not} have elements violating true inequalities on natural numbers. Thus, we conclude that the vector ${\bar N}^{(i)}$ agrees with the total order  ${\cal O}^{({\bar K}'_w)}$.  
\end{example} 

We now define the sets $S$ as suggested in the beginning of this subsection. For a CCQ query $Q$ for which $r \geq 2$ (and thus $w \geq 1$), and for the vector $\bar N$ for the query $Q$ (as defined in Section~\ref{nu-sec}), let  ${\bar K}_w$ be an arbitrary copy-variable-ordering vector for the vector ${\bar N}$. Then we define a subset ${\cal N}^{({\bar K}_w)}$ of the set $\cal N$ as 
\begin{tabbing} 
${\cal N}^{({\bar K}_w)} = \{ {\bar N}^{(i)} \in {\cal N} \ | \ {\bar N}^{(i)}$ agrees with the total order 
 ${\cal O}^{({\bar K}_w)} \} .$ 
\end{tabbing} 
(Note that in the case $w = 1$, there is only one possible vector ${\bar K}_w$ $=$ $[ \ 1$ $\ N_{m+1} \ ]$. Therefore, it is not hard to see that in the case $w = 1$, we have that the only possible set ${\cal N}^{({\bar K}_w)}$ coincides with the entire set $\cal N$.) 

The following result captures straightforward observations about the sets ${\cal N}^{({\bar K}_w)}$. 

\begin{proposition} 
\label{n-subsets-prop}
Given a CCQ query $Q$ for which $r \geq 2$, with vector $\bar N$ for the query $Q$ constructed as defined in Section~\ref{nu-sec}. Then we have that: 
\begin{itemize} 
	\item For each element ${\bar N}^{(i)}$ of the set $\cal N$, there exists at least one copy-variable-ordering vector, ${\bar K}_w$, for the vector ${\bar N}$, such that ${\bar N}^{(i)}$ belongs to the set ${\cal N}^{({\bar K}_w)}$; 
	\item For each copy-variable-ordering vector, ${\bar K}_w$, for the vector ${\bar N}$, the set ${\cal N}^{({\bar K}_w)}$ is an infinite-cardinality subset of the set $\cal N$.  
\end{itemize} 
\end{proposition}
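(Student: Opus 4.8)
The statement to prove is Proposition~\ref{n-subsets-prop}, which asserts two facts about the sets $\mathcal{N}^{(\bar{K}_w)}$: (i) every vector $\bar{N}^{(i)} \in \mathcal{N}$ belongs to at least one $\mathcal{N}^{(\bar{K}_w)}$, and (ii) each $\mathcal{N}^{(\bar{K}_w)}$ is an infinite-cardinality subset of $\mathcal{N}$. The proof plan is straightforward and amounts to unwinding the definitions established just before the proposition.

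For part (i), the plan is to fix an arbitrary $\bar{N}^{(i)} \in \mathcal{N}$ and produce a copy-variable-ordering vector $\bar{K}_w$ such that $\bar{N}^{(i)}$ agrees with the total order $\mathcal{O}^{(\bar{K}_w)}$. The key observation is that the values $N^{(i)}_{m+1}, N^{(i)}_{m+2}, \ldots, N^{(i)}_{m+w}$ are natural numbers, so they can be sorted into nondecreasing order. Formally, I would choose a permutation $\sigma$ of $\{m+1, \ldots, m+w\}$ such that $N^{(i)}_{\sigma(m+1)} \leq N^{(i)}_{\sigma(m+2)} \leq \cdots \leq N^{(i)}_{\sigma(m+w)}$ (such a $\sigma$ exists because $\leq$ is a total order on $\mathbb{N}_+$; if several values coincide, pick any consistent tie-break). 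Then set $\bar{K}_w := [\, 1 \ N_{\sigma(m+1)} \ N_{\sigma(m+2)} \ \cdots \ N_{\sigma(m+w)} \,]$; this is a copy-variable-ordering vector for $\bar{N}$ by construction, since its first element is $1$ and $K_1 \ K_2 \cdots K_w$ is a permutation of $N_{m+1} \cdots N_{m+w}$. The interpretation of $\bar{K}_w$ w.r.t. $\bar{N}^{(i)}$ then yields $K^{(i)}_j = N^{(i)}_{\sigma(m+j)}$, and since $1 \leq N^{(i)}_{\sigma(m+1)}$ (as every copy number is in $\mathbb{N}_+$) and $N^{(i)}_{\sigma(m+j)} \leq N^{(i)}_{\sigma(m+j+1)}$ for all $j$ by choice of $\sigma$, every inequality in the reflexive transitive closure of $\{(1,K^{(i)}_1),(K^{(i)}_1,K^{(i)}_2),\ldots,(K^{(i)}_{w-1},K^{(i)}_w)\}$ is true on $\mathbb{N}_+$. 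Hence $\bar{N}^{(i)}$ agrees with $\mathcal{O}^{(\bar{K}_w)}$, so $\bar{N}^{(i)} \in \mathcal{N}^{(\bar{K}_w)}$.

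For part (ii), the plan is to fix an arbitrary copy-variable-ordering vector $\bar{K}_w = [\, 1 \ K_1 \ \cdots \ K_w \,]$ and exhibit an infinite family of vectors in $\mathcal{N}^{(\bar{K}_w)}$. The idea is to choose, for each $t \in \mathbb{N}_+$, a vector $\bar{N}^{(i_t)} \in \mathcal{N}$ in which the noncopy components $N_1, \ldots, N_m$ are assigned value $1$ (or any fixed value), and the copy components are assigned values that respect $\bar{K}_w$ in a strictly increasing way scaled by $t$: specifically, if $K_j = N_{k_j}$ then set $N^{(i_t)}_{k_j} := t \cdot j$ (so $N^{(i_t)}_{k_1} = t < 2t = N^{(i_t)}_{k_2} < \cdots < wt$). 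Then $1 \leq t \cdot 1 \leq t \cdot 2 \leq \cdots \leq t \cdot w$ holds for every $t \geq 1$, so each such $\bar{N}^{(i_t)}$ agrees with $\mathcal{O}^{(\bar{K}_w)}$ and thus lies in $\mathcal{N}^{(\bar{K}_w)}$. Since distinct values of $t$ give distinct vectors (the first copy component is $t$), this produces infinitely many elements, proving $\mathcal{N}^{(\bar{K}_w)}$ is infinite; it is a subset of $\mathcal{N}$ directly from its definition. I do not expect any real obstacle here — the whole proposition is a bookkeeping exercise confirming that the $\mathcal{N}^{(\bar{K}_w)}$ form an (overlapping) cover of $\mathcal{N}$ by infinite pieces, which is exactly what is needed in the subsequent argument to decompose $\mathcal{F}_{(Q)}^{(Q'')}$ into polynomial pieces; the only mild care needed is the tie-breaking when copy-number values coincide in part (i), which any fixed total order on indices resolves.
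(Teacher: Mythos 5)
Your proof is correct and is exactly the routine definition-unwinding the paper has in mind (the paper omits the proof, calling the proposition a collection of "straightforward observations"). Both parts — sorting the copy-number values of a given $\bar{N}^{(i)}$ to produce a suitable $\bar{K}_w$, and exhibiting an infinite strictly-scaled family of vectors agreeing with any fixed $\bar{K}_w$ — are sound, and the tie-breaking issue you flag is harmless since agreement is defined via $\leq$.
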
 

We conclude from Proposition~\ref{n-subsets-prop}  that the set $\cal N$ is a union of the infinite-cardinality sets ${\cal N}^{({\bar K}_w)}$ ranging over all possible vectors ${\bar K}_w$ for the vector $\bar N$ (via the vector ${\bar N}_w$). 

\begin{proposition} 
\label{n-subsets-funct-prop} 
Given a CCQ query $Q$ for which $r \geq 2$, with vector $\bar N$ for the query $Q$ constructed as defined in Section~\ref{nu-sec}; and given a CCQ query $Q''$ satisfying the restrictions of Section~\ref{basic-queries-sec} w.r.t. the query $Q$. Then   for each copy-variable-ordering vector, ${\bar K}_w$, for the vector ${\bar N}$, there exists a multivariate polynomial $f^{(Q'')}_{(Q)}[{\bar K}_w]$ in terms of the elements of the vector $\bar N$ and with integer coefficients, such that: 
\begin{itemize} 
	\item The polynomial $f^{(Q'')}_{(Q)}[{\bar K}_w]$ is defined on (at least)  the set ${\cal N}^{({\bar K}_w)}$ $\subseteq$ $\cal N$; and 
	\item For each element ${\bar N}^{(i)}$ of the set ${\cal N}^{({\bar K}_w)}$, we have that $f^{(Q'')}_{(Q)}[{\bar K}_w]({\bar N}^{(i)})$ $=$ ${\cal F}^{(Q'')}_{(Q)}({\bar N}^{(i)})$ . 
\end{itemize} 
\end{proposition}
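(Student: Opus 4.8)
The plan is to take the explicit inclusion--exclusion formula for ${\cal F}^{(Q'')}_{(Q)}$ built in Section~\ref{hard-funct-case-sec} and observe that, once we restrict attention to the vectors ${\bar N}^{(i)}$ agreeing with a fixed total order ${\cal O}^{({\bar K}_w)}$, every $\min$ expression occurring in that formula collapses to a single variable, so that the whole formula becomes a genuine multivariate polynomial.

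First I would assemble the relevant facts. By Proposition~\ref{same-monomial-classes-prop} the set of nonempty monomial classes for $Q''$ w.r.t.\ $\{D_{\bar N^{(i)}}(Q)\}$, together with their atom-, copy-, and noncopy-signatures, is the same for all $i$; hence the ``shape'' of the computation in Proposition~\ref{bigunion-prop} (the grouping of classes by common noncopy-signature, the subsets entering the inclusion--exclusion expansion inside each group, the $\pm 1$ coefficients) is $i$-independent, and by Proposition~\ref{nondom-prop} we may fix one such presentation once and for all. By Proposition~\ref{multip-monomial-prop}, each cardinality $|\bigcap_{s=1}^{k}\Gamma^{(i)}[{\cal C}_s^{(Q'')}]|$ entering those expansions is obtained by substituting the entries of ${\bar N}^{(i)}$ into a fixed expression $\Pi_{{\Phi}_n^{{\cal C}_1^{(Q'')}}}\times\Pi_{u=1}^{r}\min(V_{ju[1]},\ldots,V_{ju[k]})$, in which the first factor is already a monomial in $N_1,\ldots,N_m$; the only non-polynomial pieces are the $r$ factors $\min(V_{ju[1]},\ldots,V_{ju[k]})$, whose arguments each lie in $\{1,N_{m+1},\ldots,N_{m+w}\}=\{1,K_1,\ldots,K_w\}$.

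The key step is then the following observation. The relation ${\cal O}^{({\bar K}_w)}$ is a \emph{total} order on $\{1,N_{m+1},\ldots,N_{m+w}\}$, so any finite nonempty multiset $\{V_{ju[1]},\ldots,V_{ju[k]}\}$ of its elements has a unique ${\cal O}^{({\bar K}_w)}$-least element, which I call $V^{\mathrm{low}}_{ju}$; and if ${\bar N}^{(i)}$ agrees with ${\cal O}^{({\bar K}_w)}$, then by the definition of agreement (true inequalities throughout the reflexive transitive closure of the defining chain) the integer interpretations satisfy all the corresponding $\le$-relations, whence $\min(V^{(i)}_{ju[1]},\ldots,V^{(i)}_{ju[k]})=(V^{\mathrm{low}}_{ju})^{(i)}$. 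Thus on ${\cal N}^{({\bar K}_w)}$ every $\min$-factor equals the integer value of a single element of $\{1,N_{m+1},\ldots,N_{m+w}\}$. I would therefore \emph{define} $f^{(Q'')}_{(Q)}[{\bar K}_w]$ to be the expression obtained from the inclusion--exclusion formula of Propositions~\ref{bigunion-prop} and~\ref{multip-monomial-prop} by replacing, throughout, each factor $\min(V_{ju[1]},\ldots,V_{ju[k]})$ by the variable (or the constant $1$) $V^{\mathrm{low}}_{ju}$. This is a finite sum of $\pm$(products of elements of $\bar N$), hence a multivariate polynomial in $N_1,\ldots,N_{m+w}$ with integer coefficients, defined on all of $\cal N$ and a fortiori on ${\cal N}^{({\bar K}_w)}$. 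For ${\bar N}^{(i)}\in{\cal N}^{({\bar K}_w)}$, substituting ${\bar N}^{(i)}$ into $f^{(Q'')}_{(Q)}[{\bar K}_w]$ yields the same value as substituting it into the $\min$-formula (by the collapse just established), and the latter equals ${\cal F}^{(Q'')}_{(Q)}({\bar N}^{(i)})$ by Propositions~\ref{gamma-i-union-prop}, \ref{diff-noncopy-prop}, \ref{bigunion-prop}, \ref{multip-monomial-prop} together with the inclusion--exclusion principle; hence $f^{(Q'')}_{(Q)}[{\bar K}_w]({\bar N}^{(i)})={\cal F}^{(Q'')}_{(Q)}({\bar N}^{(i)})$, as required. (The degenerate case $w=1$ is subsumed: there is a single ${\bar K}_w$, ${\cal N}^{({\bar K}_w)}={\cal N}$, and each $\min$ is already a single variable or $1$.)

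I expect the main obstacle to be bookkeeping rather than conceptual: one must pin down a single $i$-independent presentation of the inclusion--exclusion computation (this is where Propositions~\ref{same-monomial-classes-prop} and~\ref{nondom-prop} do the work), handle uniformly the single-class case $k=1$ in Proposition~\ref{multip-monomial-prop} (where ``$\min$ of one argument'' is just that argument, consistent with the same substitution rule), and check that the resulting coefficients stay integral --- which is automatic, since inclusion--exclusion contributes only $\pm 1$ and the multiplicity monomials have coefficient $1$. A minor point worth a sentence is that ${\cal N}^{({\bar K}_w)}$ also contains vectors with ties among the $N_j$'s (the order uses $\le$), which does not affect anything: the constructed polynomial still agrees with ${\cal F}^{(Q'')}_{(Q)}$ on all of ${\cal N}^{({\bar K}_w)}$.
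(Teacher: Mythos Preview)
Your proposal is correct and follows essentially the same approach as the paper. The paper's own proof is a brief constructive sketch: it observes that for each fixed copy-variable-ordering vector ${\bar K}_w$, every $\min$ expression appearing in the formula of Proposition~\ref{multip-monomial-prop} evaluates to one fixed argument from $\{1,N_{m+1},\ldots,N_{m+w}\}$ independently of the specific ${\bar N}^{(i)}\in{\cal N}^{({\bar K}_w)}$, and that replacing each $\min$ by this fixed element yields the required polynomial $f^{(Q'')}_{(Q)}[{\bar K}_w]$. Your argument carries out exactly this idea, but with more explicit bookkeeping (citing Propositions~\ref{same-monomial-classes-prop} and~\ref{nondom-prop} to justify the $i$-independence of the inclusion--exclusion presentation, and noting integrality of coefficients); this additional care is sound and fills in detail the paper leaves implicit.
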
 

The proof of Proposition~\ref{n-subsets-funct-prop} is constructive and generalizes the intuition that we gained by considering (earlier in this subsection) the special case $r = 2$ and $w = 1$. Indeed, for each copy-variable-ordering vector, ${\bar K}_w$, for the vector ${\bar N}$, each $min$ expression of Proposition~\ref{multip-monomial-prop}  in terms of elements of all the relevant copy signatures evaluates to one fixed argument among all its arguments, regardless of the identity of specific elements ${\bar N}^{(i)}$ of the set ${\cal N}^{({\bar K}_w)}$. Hence the result of replacing all the $min$ expressions in ${\cal F}^{(Q'')}_{(Q)}$ with these fixed elements of the set $\{ 1,$ $N_{m+1},$ $\ldots,$ $N_{m+w} \}$ is the required function $f^{(Q'')}_{(Q)}[{\bar K}_w]$ for the subset ${\cal N}^{({\bar K}_w)}$  of the domain $\cal N$ of the function ${\cal F}^{(Q'')}_{(Q)}$. The end of Example~\ref{again-writeup-weird-ex} (in Section~\ref{second-beyond-easy-case-sec}) provides an illustration.

For each vector ${\bar K}_w$ in case $r \geq 2$, in the remainder of this proof we will refer to the polynomial $f^{(Q'')}_{(Q)}[{\bar K}_w]$ as ${\cal F}^{(Q'')}_{(Q)}[{\bar K}_w]$; we will use similar notation for the respective functions for the queries $Q$ and $Q'$. (See Proposition~\ref{n-subsets-funct-prop} for a justification.) Further, for uniformity of notation, in the case where $r \leq 1$ we will refer to the function ${\cal F}^{(Q'')}_{(Q)}$ (which we showed in Section~\ref{easy-funct-case-sec} to be a multivariate polynomial in terms of the elements of the vector $\bar N$ and with integer coefficients, on the entire domain $\cal N$ of the function)  as ${\cal F}^{(Q'')}_{(Q)}[{\cal N}]$.  In the latter case (of $r \leq 1$), the only subdomain ${\cal N}^{({\bar K}_w)}$ of the domain $\cal N$ is the set $\cal N$; hence we can use the notations ${\cal F}^{(Q'')}_{(Q)}[{\bar K}_w]$ and ${\cal F}^{(Q'')}_{(Q)}[{\cal N}]$ interchangeably when $r \leq 1$. (Formally, in case $r = 0$, we define the vector ${\bar N}_w$ as ${\bar N}_w$ $=$ $[ \ 1 \ ]$, and in case $r = 1$, we define ${\bar N}_w$ as ${\bar N}_w$ $=$ $[ \ 1$ $\ N_{m+1} \ ]$. In either case, there exists exactly one permutation ${\bar K}_w$ of the vector ${\bar N}_w$, such that the first element of the vector ${\bar K}_w$ is the constant $1$. Thus, the domain ${\cal N}^{({\bar K}_w)}$ does indeed coincide with the domain $\cal N$ in all cases where $r \leq 1$.) 

\subsubsection{Query equivalence implies identical multivariate polynomials} 

\reminder{This is (4) in the flow of reasoning of the reminders}

We now show that for two CCQ queries $Q$ and $Q'$ such that $Q \equiv_C Q'$, the functions ${\cal F}^{(Q)}_{(Q)}$ and ${\cal F}^{(Q')}_{(Q)}$ can be expressed, on each well-defined (as in Sections~\ref{easy-funct-case-sec} and~\ref{multiv-polyn-sec}) infinite-cardinality subdomain of the set $\cal N$, as {\em identical} multivariate polynomials in terms of the elements of the vector $\bar N$ and with integer coefficients. This result, Proposition~\ref{equiv-funct-prop}, is immediate from the results of Section~\ref{hard-funct-case-sec} and from Propositions~\ref{polyn-prop} and~\ref{n-subsets-funct-prop}.

\begin{proposition} 
\label{equiv-funct-prop} 
Given a CCQ query $Q$, with vector $\bar N$ for the query $Q$ constructed as defined in Section~\ref{nu-sec}; and given a CCQ query $Q'$ such that $Q \equiv_C Q'$ holds. 
Then 
%
	 for each copy-variable-ordering vector, ${\bar K}_w$, for the vector ${\bar N}$, we have that  the multivariate polynomials ${\cal F}^{(Q)}_{(Q)}[{\bar K}_w]$ and ${\cal F}^{(Q')}_{(Q)}[{\bar K}_w]$, in terms of the elements of the vector $\bar N$ and with integer coefficients, are identical functions on the domain ${\cal N}^{({\bar K}_w)}$. 
\end{proposition}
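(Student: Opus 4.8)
The plan is to prove Proposition~\ref{equiv-funct-prop} by combining the well-definedness of the multivariate-polynomial representations (established in Sections~\ref{easy-funct-case-sec} and~\ref{multiv-polyn-sec}) with the rigidity of polynomial identity stated in Proposition~\ref{polyn-prop}. First I would fix an arbitrary copy-variable-ordering vector ${\bar K}_w$ for ${\bar N}$. By Proposition~\ref{n-subsets-funct-prop} (together with the remarks closing Section~\ref{multiv-polyn-sec} that handle the case $r \leq 1$ via the identification ${\cal F}^{(Q'')}_{(Q)}[{\cal N}] = {\cal F}^{(Q'')}_{(Q)}[{\bar K}_w]$), there exist multivariate polynomials ${\cal F}^{(Q)}_{(Q)}[{\bar K}_w]$ and ${\cal F}^{(Q')}_{(Q)}[{\bar K}_w]$, in terms of the elements of $\bar N$ and with integer coefficients, such that each agrees with the corresponding function ${\cal F}^{(Q)}_{(Q)}$ (resp.\ ${\cal F}^{(Q')}_{(Q)}$) at every vector ${\bar N}^{(i)} \in {\cal N}^{({\bar K}_w)}$.

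Next I would use the hypothesis $Q \equiv_C Q'$. By the definition of combined-semantics equivalence, for every database the bag of answers of $Q$ and of $Q'$ coincide; in particular, for each $i \in {\mathbb N}_+$ the multiplicity of the fixed tuple $t^*_Q$ in ${\sc Res}_C(Q, D_{\bar{N}^{(i)}}(Q))$ equals its multiplicity in ${\sc Res}_C(Q', D_{\bar{N}^{(i)}}(Q))$. Since these multiplicities are exactly the values ${\cal F}^{(Q)}_{(Q)}({\bar N}^{(i)})$ and ${\cal F}^{(Q')}_{(Q)}({\bar N}^{(i)})$ by construction of the functions (Section~\ref{monomial-classes-sec}; recall that $Q'$ satisfies the restrictions of Section~\ref{basic-queries-sec} w.r.t.\ $Q$ by Theorem~\ref{not-same-num-multiset-vars-thm}), we obtain ${\cal F}^{(Q)}_{(Q)}({\bar N}^{(i)}) = {\cal F}^{(Q')}_{(Q)}({\bar N}^{(i)})$ for all $i$, and hence ${\cal F}^{(Q)}_{(Q)}[{\bar K}_w]({\bar N}^{(i)}) = {\cal F}^{(Q')}_{(Q)}[{\bar K}_w]({\bar N}^{(i)})$ for all ${\bar N}^{(i)} \in {\cal N}^{({\bar K}_w)}$.

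The final step invokes Proposition~\ref{polyn-prop}. The two polynomials ${\cal F}^{(Q)}_{(Q)}[{\bar K}_w]$ and ${\cal F}^{(Q')}_{(Q)}[{\bar K}_w]$ are finite-degree multivariate polynomials with integer coefficients in the variables $N_1, \ldots, N_{m+w}$, and they agree on the set ${\cal N}^{({\bar K}_w)}$, which by Proposition~\ref{n-subsets-prop} is an infinite-cardinality subset of $\cal N$. To apply Proposition~\ref{polyn-prop} I need that the agreement set is "large enough" in each coordinate separately — i.e.\ that, after fixing all but one variable to suitable values within ${\cal N}^{({\bar K}_w)}$, the remaining variable still ranges over an infinite set of integers. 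This is where the main work lies: I would verify that ${\cal N}^{({\bar K}_w)}$ contains, for any prescribed finite assignment to all but one of $N_1, \ldots, N_{m+w}$ consistent with the total order ${\cal O}^{({\bar K}_w)}$ (e.g.\ by making the "free" coordinate very large or, when it is $1$-indexed, scaling everything uniformly), an infinite arithmetic-progression-like family of vectors; concretely, the non-copy variables $N_1, \ldots, N_m$ are unconstrained by ${\cal O}^{({\bar K}_w)}$, and the copy variables $N_{m+1}, \ldots, N_{m+w}$ may be chosen as any strictly (weakly) increasing chain under the order, so one can always stretch a single chosen coordinate to infinity while keeping the order satisfied. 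Granting this, a routine induction on the number of variables (substituting fixed values one variable at a time, reducing to the univariate case exactly as in the proof of Proposition~\ref{polyn-prop}) forces ${\cal F}^{(Q)}_{(Q)}[{\bar K}_w] - {\cal F}^{(Q')}_{(Q)}[{\bar K}_w] \equiv 0$, i.e.\ the two polynomials are identical functions on ${\cal N}^{({\bar K}_w)}$. Since ${\bar K}_w$ was arbitrary, this completes the proof. The main obstacle, as indicated, is the bookkeeping needed to confirm that ${\cal N}^{({\bar K}_w)}$ is "polynomial-test-rich" in the precise sense required by Proposition~\ref{polyn-prop}; once that is in place the rest is immediate from the cited results.
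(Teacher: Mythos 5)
Your proof follows essentially the same route as the paper, which simply declares the result immediate from Proposition~\ref{n-subsets-funct-prop} (existence of the polynomials on ${\cal N}^{({\bar K}_w)}$), the definition of $\equiv_C$ applied to the multiplicity of $t^*_Q$ on each $D_{\bar{N}^{(i)}}(Q)$, and Proposition~\ref{polyn-prop}. Your additional observation that ${\cal N}^{({\bar K}_w)}$ is not a product of infinite sets in the copy coordinates --- so that Proposition~\ref{polyn-prop} cannot be applied verbatim and one needs the order-respecting inductive argument you sketch (peeling off the $\leq$-maximal copy variable first, since a middle coordinate with its neighbors fixed ranges over only a finite set) --- correctly identifies and fills a detail that the paper's ``immediate'' claim glosses over.
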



\subsubsection{Solid terms and phantom terms of the polynomials for the multiplicity functions} 
\label{solid-phantom-sec}

\reminder{[(2)] Observation that for query $Q''$, for each total order $\cal O$, and for each solid term $\cal M$ in the multivariate polynomial ${\cal F}_{(Q)}^{(Q'')}[\cal O]$, we have that $\cal M$ is the multiplicity monomial for a nonempty monomial class for $Q''$ and for $\{ D_{\bar{N}^{(i)}}(Q) \}$.  
} 

To proceed with the proof of Proposition~\ref{qprime-goldfish-prop} (see Section~\ref{qprime-goldfish-sec}), we need to introduce technical denotations for the terms of the multivariate polynomials that we have been considering. Suppose that for CCQ queries $Q$ and $Q''$, we are given the multivariate polynomial ${\cal F}^{(Q'')}_{(Q)}[{\bar K}_w]$ for an arbitrary ${\bar K}_w$, as defined earlier in Section~\ref{q-prime-has-wave-sec}. Consider a monomial (excluding the nonzero integer coefficient of the term) $\cal T$ in ${\cal F}^{(Q'')}_{(Q)}[{\bar K}_w]$. We say that: 
\begin{itemize} 
	\item $\cal T$ is a {\em solid term of} ${\cal F}^{(Q'')}_{(Q)}[{\bar K}_w]$ if there exists a nonempty monomial class, ${\cal C}^{(Q'')}$, for the query $Q''$ and for the family of databases  $\{ D_{\bar{N}^{(i)}}(Q) \}$, such that $\cal T$ is the multiplicity monomial for the class ${\cal C}^{(Q'')}$. 
	\item Conversely, we say that $\cal T$ is a {\em phantom term of} ${\cal F}^{(Q'')}_{(Q)}[{\bar K}_w]$ whenever $\cal T$ is not a solid term of ${\cal F}^{(Q'')}_{(Q)}[{\bar K}_w]$. 
\end{itemize} 

For instance, in Example~\ref{again-writeup-weird-ex} (in Section~\ref{second-beyond-easy-case-sec}), the multivariate polynomial ${\cal F}^{(Q)}_{(Q)}[[ \ 1 \ N_3 \ N_4 \ ]]$, that is the polynomial $N_1 \times N_2 \times (N_4)^2$, has one solid term, ${\cal T}$ $=$ $N_1 \times N_2 \times (N_4)^2$. The reason for the term $\cal T$ being a solid term of the polynomial ${\cal F}^{(Q)}_{(Q)}[[ \ 1 \ N_3 \ N_4 \ ]]$  is that the query $Q$ w.r.t. the family of databases $\{ D_{\bar{N}^{(i)}}(Q) \}$ has a monomial class ${\cal C}_4^{(Q)}$ whose multiplicity monomial is exactly the term $\cal T$. (See Example~\ref{writeup-weird-ex}  in Section~\ref{beyond-easy-case-sec} for the details on the monomial class ${\cal C}_4^{(Q)}$.) 

Now consider an abstract example of a phantom term. (We do not know whether there are CCQ queries whose associated functions $\cal F$ have phantom terms, hence this example is abstract. However, to prove Theorem~\ref{magic-mapping-prop}, we have to prove that phantom terms do not arise in certain parts of  ${\cal F}^{(Q')}_{(Q)}$ for those CCQ queries $Q'$ that are combined-semantics equivalent to explicit-wave CCQ query $Q$.) 

\begin{example} 
\label{phantom-ex}
Consider the case where $m = 0$ and $r = w = 2$. Suppose that for these values of $m$, $r$, and $w$, for some hypothetical CCQ query $Q''$ and for some hypothetical family of databases $\{ D_{\bar{N}^{(i)}}(Q) \}$, it holds that the set of all monomial classes in this setting consists of two monomial classes, say ${\cal C}_1^{(Q'')}$ and ${\cal C}_2^{(Q'')}$. By $m = 0$, we have that the noncopy signature of each of ${\cal C}_1^{(Q'')}$ and ${\cal C}_2^{(Q'')}$ is the empty vector. Suppose that the copy signature of ${\cal C}_1^{(Q'')}$ is $[ \ N_1 \ N_2 \ ]$, and that the copy signature of ${\cal C}_2^{(Q'')}$ is $[ \ N_2 \ N_1 \ ]$. 
 
The noncopy signatures of the two monomial classes are identical. Thus, by our results of Section~\ref{hard-funct-case-sec}, the function for the multiplicity of the tuple $t^*_Q$, in the answer to the query $Q''$ on the databases  $\{ D_{\bar{N}^{(i)}}(Q) \}$, will be computed as the cardinality of the union of the sets for the copy signatures of the two monomial classes. That is, the multiplicity function will be 

$$2 \times N_1 \times N_2 - (min(N_1,N_2))^2 \ .$$ 

Then for the vector ${\bar K}_w$ $=$ $[ \ 1 \ N_1 \ N_2 \ ]$, the multivariate polynomial for this multiplicity function on the domain ${\cal N}^{({\bar K}_w)}$ will be 

$$2 \times N_1 \times N_2 - (N_1)^2 \ .$$ 

In this polynomial, (i) the term $N_1 \times N_2$ is a solid term of the polynomial, because the monomial class ${\cal C}_1^{(Q'')}$ (as well as ${\cal C}_2^{(Q'')}$) has the term $N_1 \times N_2$ as its multiplicity monomial. In contrast, (ii) the term $(N_1)^2$ is by definition a phantom term of the polynomial. 
\end{example} 



\subsubsection{The multiplicity function for the query $Q$} 
\label{mult-fun-sec} 

We now make several observations concerning the solid and phantom terms in the polynomials ${\cal F}^{(Q)}_{(Q)}[{\bar K}_w]$ for the function ${\cal F}^{(Q)}_{(Q)}$ of the query $Q$. First, as usual, we will need some terminology. For CCQ queries $Q$ and $Q''$ (where $Q''$, as usual, may or may not be the query $Q$) and for the  vector $\bar N$ constructed for the query $Q$ as defined in Section~\ref{nu-sec}, fix an arbitrary copy-variable-ordering vector, ${\bar K}_w$,  for $\bar N$.  In case $m \geq 1$, consider all the terms in the function ${\cal F}^{(Q'')}_{(Q)}[{\bar K}_w]$ such that each term  has the product $N_1$ $\times$ $N_2$ $\times$ $\ldots$ $\times$ $N_m \ $ . Call all these terms collectively ``the $m$-covering part of the function ${\cal F}^{(Q'')}_{(Q)}[{\bar K}_w] \ $.''  For the case $m = 0$, we say that {\em all} the terms of the function ${\cal F}^{(Q'')}_{(Q)}[{\bar K}_w]$ constitute the $m$-covering part of the function. 

The observations of this subsection, Propositions~\ref{m-covering-prop} and~\ref{goldfish-properties-prop}, are made for the polynomials ${\cal F}^{(Q)}_{(Q)}[{\bar K}_w]$, that is for the case where the query $Q''$ coincides with the query $Q$. 

\begin{proposition} 
\label{m-covering-prop} 
Given an explicit-wave CCQ query $Q$, with vector $\bar N$ constructed as defined in Section~\ref{nu-sec}. Then we have that: 
\begin{itemize} 
	\item For each copy-variable-ordering vector, ${\bar K}_w$,  for the vector $\bar N$, the  $m$-covering part of the function ${\cal F}^{(Q)}_{(Q)}[{\bar K}_w]$ has at least one term with a nonzero coefficient; and 
	\item For each pair $({\bar K}_w, {\bar K}'_w)$ of copy-variable-ordering vectors  for the vector $\bar N$, the $m$-covering part of the function ${\cal F}^{(Q)}_{(Q)}[{\bar K}_w]$ and the $m$-covering part of the function ${\cal F}^{(Q)}_{(Q)}[{\bar K}'_w]$ are identical multivariate polynomials (in terms of the elements of the vector $\bar N$ and with integer coefficients). 
\end{itemize} 
\end{proposition}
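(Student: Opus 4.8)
\textbf{Proof proposal for Proposition~\ref{m-covering-prop}.} The plan is to establish both bullet points by tracking how the $m$-covering part of $\mathcal{F}^{(Q)}_{(Q)}[\bar{K}_w]$ is assembled from monomial classes of $Q$ and their multiplicity monomials, and then invoking the explicit-wave hypothesis (Definition~\ref{expl-wave-def}) to pin down the copy-signature data. First I would observe that, by Proposition~\ref{monomial-classes-sizes-prop} and the definition of the multiplicity monomial in Section~\ref{multiplicity-monomial-sec}, a monomial class $\mathcal{C}^{(Q)}$ contributes to the $m$-covering part of $\mathcal{F}^{(Q)}_{(Q)}[\bar{K}_w]$ precisely when $\Pi_{\Phi_n^{\mathcal{C}^{(Q)}}}$ equals $N_1 \times \cdots \times N_m$ (or the constant $1$ when $m=0$), i.e.\ when the noncopy-signature of $\mathcal{C}^{(Q)}$ is a permutation of $[Y_1 \ \ldots \ Y_m]$. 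Now by Proposition~\ref{q-has-wave-prop}, applied with $Q''=Q$, there exists the nonempty monomial class $\mathcal{C}_*^{(Q)}$ whose multiplicity monomial is the wave $\mathcal{P}_*^{(Q)} = \mathcal{P}_{noncopy}^{(Q)} \times \mathcal{P}_{copy}^{(Q)}$ of $Q$; by construction (parts (ii) and (iii) of the proof of Proposition~\ref{q-has-wave-prop}) its noncopy-signature is exactly $[Y_1 \ \ldots \ Y_m]$, so $\Pi_{\Phi_n^{\mathcal{C}_*^{(Q)}}} = N_1 \times \cdots \times N_m$. Hence $\mathcal{C}_*^{(Q)}$ always falls into the $m$-covering part, and since the $m$-covering part is obtained by the inclusion-exclusion summation of Proposition~\ref{bigunion-prop}/\ref{multip-monomial-prop} restricted to classes with noncopy-signature $[Y_1 \ \ldots \ Y_m]$, factoring out the common $N_1 \times \cdots \times N_m$, the term $N_1 \times \cdots \times N_m \times \mathcal{P}_{copy}^{(Q)}$ survives with a nonzero coefficient (being the unique highest-degree-in-copy-variables contribution from $\mathcal{C}_*^{(Q)}$, not cancellable by any $\min$-expression, since each $\min$ of a set of copy-signature entries including $\mathcal{P}_{copy}^{(Q)}$'s entries is bounded above by those entries). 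This proves the first bullet.

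For the second bullet, the key point is that the $m$-covering part is built solely from the copy-signatures of the monomial classes of $Q$ whose noncopy-signature is the fixed vector $[Y_1 \ \ldots \ Y_m]$, via the inclusion-exclusion machinery of Section~\ref{hard-funct-case-sec}, in which each intersection cardinality is $N_1 \times \cdots \times N_m$ times a product of $\min$-expressions over copy-signature entries (Proposition~\ref{multip-monomial-prop}). Two copy-variable-ordering vectors $\bar{K}_w$ and $\bar{K}'_w$ differ only in which argument each $\min$-expression resolves to; so the $m$-covering parts of $\mathcal{F}^{(Q)}_{(Q)}[\bar{K}_w]$ and $\mathcal{F}^{(Q)}_{(Q)}[\bar{K}'_w]$ are two resolutions of the same symbolic inclusion-exclusion expression. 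The reason they nonetheless coincide as polynomials is the explicit-wave hypothesis: I would argue that for each original copy-sensitive subgoal $s$ of $Q$, condition (2) of Definition~\ref{expl-wave-def} (applied to pairs of noncopy-permuting GCMs of $Q_{ce}$ agreeing on $M_{noncopy}$) forces all the copy-sensitive subgoals of $Q$ sharing a relational template to behave uniformly, so that the collection of copy-signatures arising in the $m$-covering part is closed under every permutation of the copy variables within each equivalence class $C'_j$; consequently the full inclusion-exclusion alternating sum over that collection is invariant under any choice of $\bar{K}_w$ (the $\min$-resolution simply reshuffles which term realizes which value, but the multiset of realized $\min$-values, and hence the resulting polynomial after the alternating sum, is unchanged). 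Making this invariance argument precise — i.e.\ showing that the explicit-wave condition really does yield a symmetric family of copy-signatures for the $m$-covering part, and that symmetry of that family implies equality of the two resolved polynomials — is the main obstacle; I expect it to require a careful bijection between the monomial classes contributing at $\bar{K}_w$ and those at $\bar{K}'_w$, induced by the GCM-level symmetry guaranteed by Definition~\ref{expl-wave-def}, together with Proposition~\ref{polyn-prop} to conclude equality of polynomials from agreement on the infinite domain $\mathcal{N}^{(\bar{K}_w)} \cap \mathcal{N}^{(\bar{K}'_w)}$ after reconciling the two $\min$-resolutions on a common refinement of the two total orders.

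Once these two bullets are in hand, the proof is complete. I would present it in the order: (1) characterize membership in the $m$-covering part via noncopy-signatures; (2) use Proposition~\ref{q-has-wave-prop} to exhibit $\mathcal{C}_*^{(Q)}$ and argue its wave-monomial term is non-cancellable, giving the first bullet; (3) set up the inclusion-exclusion expression for the $m$-covering part in terms of copy-signatures only; (4) invoke the explicit-wave condition to show the family of relevant copy-signatures is permutation-closed within each copy-variable equivalence class; (5) conclude $\bar{K}_w$-independence of the resolved polynomial, appealing to Proposition~\ref{polyn-prop} if a domain-agreement argument is needed. The only genuinely delicate step is step (4)–(5); everything else is bookkeeping with the constructs already built in Sections~\ref{monomial-classes-sec}–\ref{hard-funct-case-sec}.
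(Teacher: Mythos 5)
There is a genuine gap, and it is concentrated exactly where you flag "the main obstacle." The paper does not argue that the family of copy-signatures in the $m$-covering part is permutation-closed and that the alternating inclusion--exclusion sum is invariant under re-resolving the $\min$-expressions; it argues something stronger and simpler. Two monomial classes of $Q$ whose noncopy-signature is the same permutation of $[Y_1 \ \ldots \ Y_m]$ induce, via $\nu^{(i)}_Q$, a pair of noncopy-permuting GCMs from $Q_{ce}$ to itself that agree on $M_{noncopy}$; condition (2) of Definition~\ref{expl-wave-def} then forces each original copy-sensitive subgoal to be mapped by both to atoms with the same relational template, hence (via $\nu^{copy-sig}$) the two classes have \emph{identical} copy-signatures. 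So within the $m$-covering part there is at most one copy-signature per noncopy-signature, mutual unconditional dominance holds (Definition~\ref{uncond-dom-def}, Proposition~\ref{same-copy-sig-prop}), and after duplicate removal the $m$-covering part is just the plain sum of multiplicity monomials of Proposition~\ref{easy-final-funct-prop} --- no $\min$-expressions ever arise, so $\bar{K}_w$-independence is immediate rather than a symmetry theorem to be proved. Your proposed route would still require establishing that the multiset of realized $\min$-values is unchanged under swapping $\bar{K}_w$ for $\bar{K}'_w$, which you do not do; moreover the closure property you invoke (permutation of copy variables within an equivalence class $C'_j$) is vacuous, since all copy variables in one class $C'_j$ already map to the same $N$-variable under $\nu^{copy}_Q$ and so permuting them does not change any copy-signature.

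Your argument for the first bullet also uses a false intermediate claim. You assert the wave term is "not cancellable by any $\min$-expression, since each $\min$ \ldots is bounded above by those entries," without invoking the explicit-wave hypothesis. Example~\ref{again-writeup-weird-ex} refutes this: there the class $\mathcal{C}_2^{(Q)}$ has multiplicity monomial equal to the wave $N_1 N_2 N_3 N_4$, yet after resolving under $[1\ N_3\ N_4]$ the $m$-covering part collapses to $N_1 N_2 (N_4)^2$ and the wave term cancels exactly. The survival of the wave as a solid term is precisely what the explicit-wave hypothesis buys (it is the content of Proposition~\ref{goldfish-properties-prop}), and it comes out of the same collapse-to-a-plain-sum argument above, not out of a degree or boundedness consideration. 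Once you establish the identical-copy-signature fact, both bullets are immediate: the class $\mathcal{C}_*^{(Q)}$ of Proposition~\ref{q-has-wave-prop} witnesses a nonzero term, and the $\bar{K}_w$-independence is automatic because the relevant set $\mathbb{C}^{nondom}$ does not depend on $\bar{K}_w$.
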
 

We note that {\em implicit-wave} CCQ queries are not covered by the above result. In fact, the implicit-wave query $Q$ of Example~\ref{intro-weird-ex} does not satisfy Proposition~\ref{m-covering-prop}. (Please see Example~\ref{again-writeup-weird-ex} of Section~\ref{second-beyond-easy-case-sec} for the details on this query.)  

The observations of Proposition~\ref{m-covering-prop} are immediate from the relevant definitions, from the construction of the function ${\cal F}^{(Q)}_{(Q)}$, and from the definition of explicit-wave query. (Intuitively, we use the definition of explicit-wave CCQ queries, Definition~\ref{expl-wave-def}, to argue that all the monomial classes in question that have the same noncopy signature, must also have the same copy signature. In fact, Definition~\ref{expl-wave-def} has been formulated so that Proposition~\ref{m-covering-prop}, and consequently Theorem~\ref{magic-mapping-prop}, could go through.) Specifically, the second bullet of Proposition~\ref{m-covering-prop} follows from the fact that among all the monomial classes for the query $Q$ whose (classes') noncopy signature is a permutation of the vector $[ \ N_1 \ N_2 \ \ldots \ N_m \ ]$ in case $m \geq 1$, or is the empty vector in case $m = 0$, for each pair of such monomial classes with the same noncopy signature, there is {\em unconditional} dominance between the two classes. This fact holds by the definition of explicit-wave query and by the definition of its wave function, Definition~\ref{the-wave-def}. We then use the results of Section~\ref{easy-funct-case-sec} (specifically of Proposition~\ref{easy-final-funct-prop}) to establish that the respective $m$-covering polynomials do not depend on the vector ${\bar K}_w$, because the relevant sets ${\mathbb C}^{nondom}$ do not depend on the vector ${\bar K}_w$. 

The results of Proposition~\ref{m-covering-prop} permit us to talk about ``the $m$-covering part of the function ${\cal F}^{(Q)}_{(Q)} \ $,'' for every explicit-wave CCQ query $Q$. We denote by ${\cal G}^{(Q)}_{(Q)}$ this nonempty multivariate polynomial in terms of the elements of the vector $\bar N$ and with integer coefficients. Notice that the reference to copy-variable-ordering vectors has been dropped from the notation for ${\cal G}^{(Q)}_{(Q)}$. 

The next observations, in Proposition~\ref{goldfish-properties-prop}, are about the properties of the function ${\cal G}^{(Q)}_{(Q)}$ for explicit-wave queries $Q$. The results of Proposition~\ref{goldfish-properties-prop} hold by the relevant definitions and by Proposition~\ref{q-has-wave-prop}. The intuition for Proposition~\ref{goldfish-properties-prop} is the same as the intuition for Proposition~\ref{m-covering-prop}.

\reminder{[(3)] In ${\cal F}_{(Q)}^{(Q)}$ {\em for the query $Q$}, for each total order $\cal O$ we have that in ${\cal F}_{(Q)}^{(Q)}[{\cal O}]$, all the terms of ${\cal F}_{(Q)}^{(Q)}[\cal O]$ that have $\Pi_{j = 1}^m N_j$ (or, in case $m = 0$, {\em all} terms of ${\cal F}_{(Q)}^{(Q)}[\cal O]$) are {\em solid terms} of ${\cal F}_{(Q)}^{(Q)}[\cal O]$. (This is immediate from the fact that $Q$ is an explicit-wave query.) 
} 

\begin{proposition} 
\label{goldfish-properties-prop}
Given an explicit-wave CCQ query $Q$, we have that: 
\begin{itemize} 
	\item In the function ${\cal G}^{(Q)}_{(Q)}$, each term is a solid term; 
	\item Each term of the function ${\cal G}^{(Q)}_{(Q)}$ has a positive coefficient; and 
	\item The multivariate polynomial ${\cal G}^{(Q)}_{(Q)}$  has a (solid) term which is the wave ${\cal P}^{(Q)}_*$ of the query $Q$ w.r.t. the family of databases  $\{ D_{\bar{N}^{(i)}}(Q) \}$. 
\end{itemize} 
\end{proposition}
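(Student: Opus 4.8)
The final statement to prove is Proposition~\ref{goldfish-properties-prop}: for an explicit-wave CCQ query $Q$, the polynomial ${\cal G}^{(Q)}_{(Q)}$ (the $m$-covering part of ${\cal F}^{(Q)}_{(Q)}$) consists entirely of solid terms with positive coefficients, and it contains the wave ${\cal P}^{(Q)}_*$ of $Q$ as one of its terms.

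\medskip

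\noindent\textbf{Plan of proof.} The plan is to analyze the $m$-covering part of ${\cal F}^{(Q)}_{(Q)}$ by restricting attention to exactly those monomial classes for $Q$ whose noncopy-signature is a permutation of $[\,N_1\ N_2\ \ldots\ N_m\,]$ (or the empty vector when $m=0$), since by the construction of the databases $D_{\bar N^{(i)}}(Q)$ and of multiplicity monomials (Section~\ref{multiplicity-monomial-sec}), a monomial class contributes a term divisible by $N_1 \times \cdots \times N_m$ precisely when its noncopy-signature vector uses each of $Y_1,\ldots,Y_m$ exactly once. First I would invoke the key structural consequence of $Q$ being explicit-wave (Definition~\ref{expl-wave-def}): I would argue that any two monomial classes for $Q$ with the \emph{same} noncopy-signature that is a permutation of $[\,Y_1\ \ldots\ Y_m\,]$ must satisfy unconditional dominance in one direction or the other --- indeed, an explicit-wave query forces the original copy-sensitive subgoals to land on a fixed relational template once the noncopy-permuting GCM behavior on $M_{noncopy}$ is fixed, so the copy-signatures of such classes are ``comparable'' componentwise in the sense of Definition~\ref{uncond-dom-def}. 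Consequently the set ${\mathbb C}^{nondom}$ restricted to these classes contains, for each noncopy-signature permutation, exactly one surviving class, and Proposition~\ref{easy-final-funct-prop} (the easy-case construction) applies to the $m$-covering part: ${\cal G}^{(Q)}_{(Q)}$ is literally the sum of the multiplicity monomials of the surviving classes. Each such multiplicity monomial is by definition a solid term with a coefficient of $+1$, and summing them (over distinct noncopy-signatures, which by Proposition~\ref{diff-noncopy-prop} give disjoint $\Gamma^{(i)}$ sets, hence no cancellation) yields a polynomial all of whose terms are solid and have positive coefficients.

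\medskip

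\noindent The third bullet --- that ${\cal G}^{(Q)}_{(Q)}$ contains the wave ${\cal P}^{(Q)}_*$ as a (solid) term --- is where I would lean directly on Proposition~\ref{q-has-wave-prop}: that result already exhibits a nonempty monomial class ${\cal C}_*^{(Q)}$ for $Q$ whose multiplicity monomial is exactly ${\cal P}^{(Q)}_* = {\cal P}^{(Q)}_{noncopy} \times {\cal P}^{(Q)}_{copy}$. By Proposition~\ref{wave-has-all-n-j-prop} the wave contains $N_1 \times \cdots \times N_m$ as a factor (the ${\cal P}^{(Q)}_{noncopy}$ part), so ${\cal C}_*^{(Q)}$'s noncopy-signature is a permutation of $[\,Y_1\ \ldots\ Y_m\,]$ and hence ${\cal C}_*^{(Q)}$ participates in the $m$-covering part. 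It remains to check that ${\cal C}_*^{(Q)}$ (or some class with the same multiplicity monomial) survives the passage to ${\mathbb C}^{nondom}$ --- i.e. is not strictly dominated by another monomial class with the same noncopy-signature. Here I would argue that ${\cal C}_*^{(Q)}$ is the \emph{unconditionally maximal} class among those sharing its noncopy-signature: its copy-signature is $[\,\nu_Q^{copy}(Y_{m+1})\ \ldots\ \nu_Q^{copy}(Y_{m+r})\,]$, which by the definition of $\nu_Q^{copy}$ (Section~\ref{main-cycle-sec}) assigns to each copy-sensitive subgoal of $Q$ the \emph{largest available} copy-number variable $N_{m+k}$ for its template, so no class can unconditionally dominate it. Therefore ${\cal P}^{(Q)}_*$ survives as a summand of ${\cal G}^{(Q)}_{(Q)}$ with coefficient $+1$.

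\medskip

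\noindent\textbf{Main obstacle.} The delicate step --- and the one I expect to require the most care --- is establishing that within the $m$-covering part, the unconditional-dominance relation is \emph{total} on each collection of classes sharing a fixed noncopy-signature, so that exactly one survives per signature and Proposition~\ref{easy-final-funct-prop} (rather than the general inclusion--exclusion machinery of Section~\ref{hard-funct-case-sec}) governs ${\cal G}^{(Q)}_{(Q)}$. This is exactly the property that fails for implicit-wave queries (as the query $Q$ of Example~\ref{intro-weird-ex}/\ref{writeup-weird-ex} shows, where classes with copy-signatures $[N_3\ N_4]$ and $[N_4\ N_3]$ are incomparable and produce $\min$-terms, i.e. phantom terms). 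The argument that it \emph{holds} for explicit-wave queries must unpack Definition~\ref{expl-wave-def}: given two monomial classes with the same noncopy-signature, their associated unity $t^*_Q$-valid assignment mappings compose with $\nu^{(i)}_Q$ to give noncopy-permuting GCMs from $Q_{ce}$ to itself that agree on $M_{noncopy}$ (by the shared noncopy-signature); the explicit-wave condition then says these GCMs send each original copy-sensitive subgoal to a copy-sensitive subgoal with the same relational template, which pins down how the copy-signatures can differ --- namely only by which copy variable within a fixed template-equivalence-class is used --- and this is precisely the situation in which Definition~\ref{uncond-dom-def} gives dominance in the direction of whichever class uses the ``larger'' $N$-variables. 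Making this chain of reductions precise, and confirming it covers all cases $m \geq 0$, $r \geq 0$, is the crux of the proof; the rest is bookkeeping with the already-established Propositions~\ref{q-has-wave-prop}, \ref{easy-final-funct-prop}, \ref{diff-noncopy-prop}, and the definitions in Sections~\ref{multiplicity-monomial-sec} and~\ref{multivar-polyn-basics-sec}.
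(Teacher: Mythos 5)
Your proposal is correct and follows essentially the same route as the paper: reduce the $m$-covering part to the classes whose noncopy-signature is a permutation of $[\,Y_1\ \ldots\ Y_m\,]$, use Definition~\ref{expl-wave-def} (via the composition of unity valid assignments with $\nu^{(i)}_Q$) to force unconditional dominance among same-noncopy-signature classes so that Proposition~\ref{easy-final-funct-prop} applies, and invoke Proposition~\ref{q-has-wave-prop} for the wave term. The only imprecision is your description of $\nu_Q^{copy}$ as picking the ``largest available'' variable (it simply picks the representative element's copy variable) and your weaker claim of componentwise comparability --- the paper's argument actually yields that same-noncopy-signature classes have \emph{identical} copy-signatures, which makes the survival of the wave class in ${\mathbb C}^{nondom}$ immediate; neither point affects the validity of your conclusion.
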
 

\subsubsection{Intuition for the proof of Proposition~\ref{qprime-goldfish-prop}} 
\label{intuition-sec} 

In this subsection we provide the intuition for the remainder of this proof of Theorem~\ref{magic-mapping-prop}. Specifically, we explain how we plan, in the remainder of this proof, to prove Proposition~\ref{qprime-goldfish-prop} of Section~\ref{qprime-goldfish-sec}. 

We have just established (see Proposition~\ref{goldfish-properties-prop}) that for the query $Q$ given in Theorem~\ref{magic-mapping-prop}, the wave ${\cal P}^{(Q)}_*$ of the query $Q$ is present as a solid term in the $m$-covering part of the function ${\cal F}^{(Q)}_{(Q)}$. 
We now make an easy observation that follows trivially from Propositions~\ref{equiv-funct-prop} and~\ref{goldfish-properties-prop}: 

\begin{proposition} 
\label{same-funct-in-q-prime-prop} 
Given an explicit-wave CCQ query $Q$ and a CCQ query $Q'$ such that $Q \equiv_C Q'$. Then for each copy-variable-ordering vector, ${\bar K}_w$,  for the vector $\bar N$, the  $m$-covering part of the function ${\cal F}^{(Q')}_{(Q)}[{\bar K}_w]$ has, with a positive-integer coefficient, the wave ${\cal P}^{(Q)}_*$ of the query $Q$ w.r.t. the family of databases  $\{ D_{\bar{N}^{(i)}}(Q) \}$. 
\end{proposition}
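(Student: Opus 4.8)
The plan is to derive Proposition~\ref{same-funct-in-q-prime-prop} as an almost immediate corollary of the three preceding results: Proposition~\ref{equiv-funct-prop} (query equivalence yields identical multivariate polynomials on each subdomain ${\cal N}^{({\bar K}_w)}$), Proposition~\ref{goldfish-properties-prop} (the $m$-covering part ${\cal G}^{(Q)}_{(Q)}$ of ${\cal F}^{(Q)}_{(Q)}$ contains the wave ${\cal P}^{(Q)}_*$ as a solid term with a positive coefficient), and Proposition~\ref{m-covering-prop} (the $m$-covering part of ${\cal F}^{(Q)}_{(Q)}[{\bar K}_w]$ does not depend on ${\bar K}_w$, i.e., it equals ${\cal G}^{(Q)}_{(Q)}$ for every copy-variable-ordering vector). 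First I would fix an arbitrary copy-variable-ordering vector ${\bar K}_w$ for the vector ${\bar N}$ associated with $Q$, and invoke Proposition~\ref{equiv-funct-prop} to conclude that ${\cal F}^{(Q)}_{(Q)}[{\bar K}_w]$ and ${\cal F}^{(Q')}_{(Q)}[{\bar K}_w]$ are the \emph{same} multivariate polynomial in terms of the elements of the vector ${\bar N}$ and with integer coefficients, on the domain ${\cal N}^{({\bar K}_w)}$.

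The next step is to compare the $m$-covering parts of the two polynomials directly. Since ${\cal F}^{(Q)}_{(Q)}[{\bar K}_w]$ and ${\cal F}^{(Q')}_{(Q)}[{\bar K}_w]$ are identical as polynomials, and the $m$-covering part of a polynomial is simply the collection of those monomials that are divisible by the product $N_1 \times N_2 \times \ldots \times N_m$ (or, in case $m = 0$, the whole polynomial), the $m$-covering part of ${\cal F}^{(Q')}_{(Q)}[{\bar K}_w]$ must coincide term-for-term, with identical integer coefficients, with the $m$-covering part of ${\cal F}^{(Q)}_{(Q)}[{\bar K}_w]$. By Proposition~\ref{m-covering-prop}, the latter is exactly ${\cal G}^{(Q)}_{(Q)}$, which is well defined independently of ${\bar K}_w$. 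Then Proposition~\ref{goldfish-properties-prop} tells us that ${\cal G}^{(Q)}_{(Q)}$ has the wave ${\cal P}^{(Q)}_*$ of the query $Q$ as one of its (solid) terms, and that every term of ${\cal G}^{(Q)}_{(Q)}$ — in particular this one — has a positive coefficient. Transporting this back through the identity of the two polynomials, the $m$-covering part of ${\cal F}^{(Q')}_{(Q)}[{\bar K}_w]$ has the wave ${\cal P}^{(Q)}_*$ with the same positive-integer coefficient. Since ${\bar K}_w$ was arbitrary, the claim holds for every copy-variable-ordering vector, which is exactly the statement of Proposition~\ref{same-funct-in-q-prime-prop}.

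There is essentially no heavy lifting here; the only point that needs a word of care is the mild subtlety that ``taking the $m$-covering part'' is a well-defined operation on a multivariate polynomial (it depends only on the polynomial's normal form as a sum of monomials, not on any particular way of writing it), so that two polynomials that are \emph{equal as functions on the infinite-cardinality domain ${\cal N}^{({\bar K}_w)}$} have equal $m$-covering parts. This is where Proposition~\ref{polyn-prop} (equivalence of multivariate polynomials on infinite integer-valued domains implies equality of coefficients) is invoked — it is already folded into Proposition~\ref{equiv-funct-prop}, so I would just cite Proposition~\ref{equiv-funct-prop} for the coefficient-level identity and then extract the $m$-covering parts. The main obstacle, such as it is, is purely bookkeeping: making sure the notation ${\cal G}^{(Q)}_{(Q)}$ is used consistently (its independence from ${\bar K}_w$ is precisely Proposition~\ref{m-covering-prop}) and that the argument is phrased uniformly across the cases $m = 0$ and $m \geq 1$. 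Because the $m = 0$ case makes the $m$-covering part the entire polynomial, no special treatment is needed; the chain Proposition~\ref{equiv-funct-prop} $\Rightarrow$ (term-for-term identity of $m$-covering parts) $\Rightarrow$ Proposition~\ref{m-covering-prop} $\Rightarrow$ Proposition~\ref{goldfish-properties-prop} goes through verbatim.

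Having established Proposition~\ref{same-funct-in-q-prime-prop}, the remaining work toward Proposition~\ref{qprime-goldfish-prop} — and hence toward Theorem~\ref{magic-mapping-prop} — will be to argue that the presence of the wave ${\cal P}^{(Q)}_*$ with a positive coefficient in the $m$-covering part of ${\cal F}^{(Q')}_{(Q)}[{\bar K}_w]$ forces ${\cal P}^{(Q)}_*$ to be a \emph{solid} term of that polynomial (i.e., to be backed up by a nonempty monomial class ${\cal C}_*^{(Q')}$ for $Q'$), rather than a phantom term arising from inclusion-exclusion cancellations. That is a separate step that presumably proceeds by analyzing how phantom terms of ${\cal F}^{(Q')}_{(Q)}[{\bar K}_w]$ can appear — they come from $\min$-substitutions in the inclusion-exclusion formula of Proposition~\ref{multip-monomial-prop}, and one needs to rule out that the monomial ${\cal P}^{(Q)}_*$, which uses \emph{each} of $N_{m+1},\ldots,N_{m+w}$ exactly once (by Propositions~\ref{wave-has-all-n-j-prop} and~\ref{wave-is-of-power-r-prop}), could be produced by any such substitution-and-cancellation combination with a surviving positive coefficient unless it is genuinely the multiplicity monomial of some monomial class. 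I flag this as the genuinely substantive part of what follows; Proposition~\ref{same-funct-in-q-prime-prop} itself is the easy hinge.
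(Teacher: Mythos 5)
Your proposal is correct and is essentially the paper's own argument: the paper derives this proposition as a trivial consequence of Propositions~\ref{equiv-funct-prop} and~\ref{goldfish-properties-prop}, and your chain (polynomial identity on ${\cal N}^{({\bar K}_w)}$ via Proposition~\ref{equiv-funct-prop}, coefficient-level comparison of $m$-covering parts, independence from ${\bar K}_w$ via Proposition~\ref{m-covering-prop}, then positivity of the wave's coefficient via Proposition~\ref{goldfish-properties-prop}) just spells out the details the paper leaves implicit. Your closing remarks correctly locate the genuinely substantive remaining step (that the wave is a \emph{solid} term of ${\cal F}^{(Q')}_{(Q)}[{\bar K}_w]$) in what the paper states as Conjecture~\ref{goldfish-in-q-prime-conj} and proves as Proposition~\ref{goldfish-in-q-prime-prop}.
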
 

What remains to be shown, to complete the proof of Proposition~\ref{qprime-goldfish-prop} of Section~\ref{qprime-goldfish-sec}, is the following claim:  

\begin{conjecture} 
\label{goldfish-in-q-prime-conj} 
Given an explicit-wave CCQ query $Q$ and a CCQ query $Q'$ such that $Q \equiv_C Q'$. Then for each copy-variable-ordering vector, ${\bar K}_w$,  for the vector $\bar N$, the  $m$-covering part of the function ${\cal F}^{(Q')}_{(Q)}[{\bar K}_w]$ has, as a solid term, the wave ${\cal P}^{(Q)}_*$ of the query $Q$ w.r.t. the family of databases  $\{ D_{\bar{N}^{(i)}}(Q) \}$. 
\end{conjecture}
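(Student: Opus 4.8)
The plan is to upgrade the quantitative information we already have about the coefficient of the wave into a statement about its \emph{provenance}. By Proposition~\ref{same-funct-in-q-prime-prop} (which combines Propositions~\ref{equiv-funct-prop}, \ref{m-covering-prop} and~\ref{goldfish-properties-prop}) we know that, for the given explicit-wave $Q$ and for $Q'$ with $Q \equiv_C Q'$, the $m$-covering part of ${\cal F}^{(Q')}_{(Q)}[{\bar K}_w]$ contains the wave ${\cal P}^{(Q)}_*$ with a positive-integer coefficient, for an arbitrary fixed copy-variable-ordering vector ${\bar K}_w$. So the only thing left is to show that this wave term is \emph{solid}, i.e.\ that some nonempty monomial class of $Q'$ w.r.t.\ $\{ D_{\bar{N}^{(i)}}(Q) \}$ has the wave as its multiplicity monomial; Proposition~\ref{qprime-goldfish-prop} follows at once, and with Propositions~\ref{q-has-wave-prop} and~\ref{q-same-scale-mpng-prop} so does Theorem~\ref{magic-mapping-prop}. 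The one structural fact that drives everything is that, by Definition~\ref{the-wave-def} and Proposition~\ref{sc-prop}, the wave equals $N_1 \cdots N_m \cdot \prod_{i=1}^{w} N_{m+i}^{\,|C'_i|}$, a monomial of the maximal possible degree $m+r$ in which \emph{every} variable $N_1,\ldots,N_{m+w}$ actually occurs (each $N_{m+i}$ because $|C'_i|\ge 1$).

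I would first recast the $m$-covering part of ${\cal F}^{(Q')}_{(Q)}[{\bar K}_w]$ in purely combinatorial terms. By Proposition~\ref{noncopy-proj-more-properties-prop}, if a monomial class ${\cal C}$ of $Q'$ has noncopy-signature equal to a permutation of $[Y_1 \ \ldots \ Y_m]$, then $\Gamma^{(i)}[{\cal C}]$ factors as the product of $Dom^{(i)}_Q(Y_1)\times\cdots\times Dom^{(i)}_Q(Y_m)$ with an origin-anchored axis-aligned box $B_{\cal C}^{(i)}$ whose $r$ side lengths are the entries of the copy-signature of ${\cal C}$. Grouping classes by their (full-permutation) noncopy-signature as in Propositions~\ref{bigunion-nondom-prop} and~\ref{multip-monomial-prop} and resolving the $\min$ expressions along the total order ${\cal O}^{({\bar K}_w)}$ (Proposition~\ref{n-subsets-funct-prop}), the $m$-covering part of ${\cal F}^{(Q')}_{(Q)}[{\bar K}_w]$ becomes $N_1\cdots N_m$ times $\sum_{\sigma} |\, \bigcup_{{\cal C}:\,\Phi_n[{\cal C}]=\sigma} B_{\cal C}\,|$, where $\sigma$ ranges over permutations of $[Y_1\ \ldots\ Y_m]$ and each cardinality-of-union is computed as a polynomial in $\bar N$ by inclusion--exclusion (intersections of these boxes are again origin-anchored boxes, with componentwise-minimum side lengths that resolve to fixed variables on the subdomain, by Proposition~\ref{easy-final-funct-prop}-style reasoning). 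Hence the coefficient of the wave in the $m$-covering part equals $\sum_{\sigma}$ (coefficient of $M:=\prod_{i=1}^{w}N_{m+i}^{\,|C'_i|}$ in $|\,\bigcup_{{\cal C}:\,\Phi_n[{\cal C}]=\sigma}B_{\cal C}\,|$), and the whole problem is reduced to a statement about unions of monotone boxes: it suffices to show that each of those per-$\sigma$ coefficients is $0$ or $+1$, and is $+1$ only when some $B_{\cal C}$ has volume exactly $M$ --- for then that ${\cal C}$ has copy-signature a permutation of $[\nu^{copy}_Q(Y_{m+1})\ \ldots\ \nu^{copy}_Q(Y_{m+r})]$ (the multiset of factors of $M$, none equal to $1$) and noncopy-signature a permutation of $[Y_1\ \ldots\ Y_m]$, so its multiplicity monomial is the wave.

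The step I expect to be the main obstacle is exactly this ``box lemma'': if $B_1,\ldots,B_k$ are origin-anchored axis-aligned boxes in $\mathbb{N}_+^{\,r}$ whose side lengths are symbols from a totally ordered set $1<K_1<\cdots<K_w$, then in the polynomial $|\,\bigcup_j B_j\,|$ (inclusion--exclusion, $\min$'s resolved by the order) a degree-$r$ monomial in which \emph{every} $K_i$ occurs has coefficient $0$ unless it is the volume $\prod_u(\text{side length of }B_j\text{ in coordinate }u)$ of one of the boxes, in which case its coefficient is $+1$. I would prove this by induction on $w$. The base case $w=1$ is immediate (the only box with genuine side lengths is $[1,K_1]^{\,r}$, whose volume is the only full-support monomial). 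For the step, fix such a monomial $\mu$, and let $U_{\max}$ be the (nonempty, since $\mu$ uses $K_w$) set of coordinates in which the lattice point of $\mu$ attains $K_w$; any Pareto-maximal corner lying componentwise above that point must equal $K_w$ on all of $U_{\max}$, which makes those coordinates inert and lets me project them out, reducing to a box family in $r-|U_{\max}|$ coordinates with symbols from $\{1<K_1<\cdots<K_{w-1}\}$ to which the induction hypothesis applies (the projected monomial is still full-support). What remains is sign bookkeeping: a lattice point that is not itself a corner but is the componentwise minimum of a set $S$ of at least two Pareto-maximal corners contributes $\sum_{S}(-1)^{|S|+1}$ over covering subsets $S$, an alternating sum governed by the Euler characteristic of the ``covering complex'' of the tight-coordinate sets --- by a nerve-lemma computation this is the number of points in a union of \emph{proper} subsets of $[r]$ minus one, hence $\le 0$ --- whereas a genuine corner, being an antichain element dominated by no other maximal corner, receives coefficient exactly $+1$. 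The delicate points here are the repeated-side-length case (where the nerve's good-cover hypothesis needs a separate argument) and cleanly interleaving the projection step with the sign count; this is where I expect the real work to lie. Everything else in the section --- Propositions~\ref{equiv-funct-prop}, \ref{m-covering-prop}, \ref{goldfish-properties-prop}, \ref{same-funct-in-q-prime-prop} --- is already in place, so once the box lemma is established, Conjecture~\ref{goldfish-in-q-prime-conj} and hence Theorem~\ref{magic-mapping-prop} are proved.
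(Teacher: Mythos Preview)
Your box lemma is false as stated, and the sign-bookkeeping that is supposed to prove it fails. Take $r=4$, $w=2$, and the four (pairwise incomparable) boxes $B_1=(K_1,K_1,K_2,K_2)$, $B_2=(K_2,K_2,K_1,K_1)$, $B_3=(K_1,K_2,K_1,K_2)$, $B_4=(K_2,K_1,K_2,K_1)$. A direct inclusion--exclusion computation under $K_1<K_2$ gives $|\bigcup_j B_j|=4K_1^2K_2^2-4K_1^3K_2+K_1^4$. The full-support degree-$r$ monomial $K_1^3K_2$ is not the volume of any box, yet its coefficient is $-4\neq 0$; the monomial $K_1^2K_2^2$ is a box volume but has coefficient $+4$, not $+1$. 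Your per-lattice-point claim also fails: the point $(K_1,K_1,K_1,K_1)$ is not a corner, but the covering sets $S$ with $\tau_S=(K_1,K_1,K_1,K_1)$ are $\{1,2\},\{3,4\}$ (size $2$), all four triples, and the full set, giving $\sum(-1)^{|S|+1}=-2+4-1=+1>0$. So non-corner points can contribute positively, and the nerve/Euler-characteristic heuristic you sketch does not force the sign.

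This breaks more than the lemma; it breaks the reduction. You need each per-$\sigma$ contribution to be nonnegative so that positivity of the total (Proposition~\ref{same-funct-in-q-prime-prop}) localises to some $\sigma$, but a group of monomial classes can perfectly well contribute a negative coefficient to the wave monomial (take the wave to be $N_1\cdots N_m\cdot K_1^3K_2$ with the four copy-signatures above). The paper's proof does \emph{not} try to prove a pure box-combinatorics statement; it genuinely uses the hypothesis $Q\equiv_C Q'$ a second time. The mechanism is the ``signature'' ${\cal S}(Q',{\bar K}_w)$: by Proposition~\ref{key-sig-prop} each coordinate of each parenthesised block contributes $\max$ of its copy-signature entries, so all contributions are positive and cannot cancel. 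If the wave were phantom, Proposition~\ref{phantom-min-two-prop} produces a coordinate whose $\min$-expression involves two distinct variables $N_A,N_B$; swapping their extremal positions to form ${\bar K}'_w$ changes that coordinate's $\max$-contribution from $N_B$ to $N_A$, and since $N_B$ is now globally smallest no other coordinate can gain an $N_B$, forcing ${\cal S}(Q',{\bar K}_w)\neq{\cal S}(Q',{\bar K}'_w)$ --- contradicting Propositions~\ref{equiv-funct-prop} and~\ref{m-covering-prop}. Your inductive projection idea captures some of the right intuition (fix where $K_w$ sits, peel it off), but the obstacle you flag --- aggregating over all lattice points with the same monomial, across groups that may carry opposite signs --- is exactly what the signature device is designed to circumvent.
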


The only difference between the formulations of Proposition~\ref{same-funct-in-q-prime-prop} and of Conjecture~\ref{goldfish-in-q-prime-conj} is that Conjecture~\ref{goldfish-in-q-prime-conj} claims that the term  ${\cal P}^{(Q)}_*$  is a {\em solid} term in the $m$-covering part of the function ${\cal F}^{(Q')}_{(Q)}[{\bar K}_w]$, for each vector ${\bar K}_w$. 

Observe that if Conjecture~\ref{goldfish-in-q-prime-conj} is true, then it implies the result of Proposition~\ref{qprime-goldfish-prop} of Section~\ref{qprime-goldfish-sec}. That is, proving Conjecture~\ref{goldfish-in-q-prime-conj} would complete immediately the proof of Theorem~\ref{magic-mapping-prop}, please see Section~\ref{main-results-summary-sec}. 

Conjecture~\ref{goldfish-in-q-prime-conj} does turn out to be true. (We will introduce its recasting, in identical wording, as Proposition~\ref{goldfish-in-q-prime-prop} of Section~\ref{sec-hinge-proof-sec}.) In the remainder of this subsection, we provide the intuition for the planned proof. 

Our plan of attack for the proof of the result of Conjecture~\ref{goldfish-in-q-prime-conj} is as follows. We obtain that result by contradiction, by assuming that ${\cal P}^{(Q)}_*$  is a {\em phantom} term in the $m$-covering part of the function ${\cal F}^{(Q')}_{(Q)}[{\bar K}_w]$, for at least one vector ${\bar K}_w$. Our key point in the proof is that if ${\cal P}^{(Q)}_*$ is a phantom term for some fixed vector ${\bar K}_w$, then the expression for ${\cal P}^{(Q)}_*$ using a product of $min$-expressions (i.e., in a way that does not factor in any specific orderings ${\bar K}_w$) must include a $min$-expression involving at least two distinct variables among the variables $N_{m+1}, \ldots, N_{m+w}$ of the query $Q$. (Hence our assumption that $w$ $\geq$ $2$ would be essential in that part of the remainder of the proof of Theorem~\ref{magic-mapping-prop}.) It is rather straightforward to obtain from this fact that for some vector ${\bar K}'_w$ different from the fixed vector ${\bar K}_w$, we could get a {\em different} monomial for the term that ``looked like'' ${\cal P}^{(Q)}_*$ under the ordering ${\bar K}_w$, because we would obtain ${\bar K}'_w$ essentially by swapping the relative order of some two distinct variables $N_A$ and $N_B$ in the vector ${\bar K}_w$. Then, {\em unless} the latter monomial (i.e., the monomial that results from  ${\cal P}^{(Q)}_*$ when we replace ${\bar K}_w$ with ${\bar K}'_w$) gets canceled out by other terms in the polynomial resulting from ${\cal F}^{(Q')}_{(Q)}$ under ${\bar K}'_w$, we can achieve the desired contradiction by observing that the overall $m$-covering parts of the tuple-multiplicity function for $Q'$ are different under the vectors ${\bar K}_w$ and ${\bar K}'_w$. (The contradiction follows immediately from the results of Section~\ref{mult-fun-sec} and from Proposition~\ref{equiv-funct-prop}, which together say that for each pair $({\bar K}_w$, ${\bar K}'_w)$, the $m$-covering part of the tuple-multiplicity function for the query $Q'$ under ${\bar K}_w$ and the $m$-covering part of the tuple-multiplicity function for $Q'$ under ${\bar K}'_w$ must be identical polynomials.) 

However, it is not clear at all how to prove the claim that the requisite terms in the functions for $Q'$ do not get canceled out in a way that is ``convenient for the function $Q'$.'' Here is the source of the difficulty: Observe that if ${\cal P}^{(Q)}_*$ is indeed a phantom term in the tuple-multiplicity polynomial for some fixed ${\bar K}_w$, then it must be that ${\cal P}^{(Q)}_*$ under that ${\bar K}_w$ is a term in a polynomial, $\cal R$, such that $\cal R$ arises from the inclusion-exclusion principle for counting the cardinality of a union of sets of tuples for some monomial classes for the query $Q'$ and for databases $D_{N^{(i)}}(Q)$. Further, ${\cal P}^{(Q)}_*$ under that ${\bar K}_w$ being a {\em phantom} term of that polynomial $\cal R$ means that: 

(*) The polynomial for the union that we are speaking about must be for a union of the sets of tuples for {\em at least two}  such monomial classes. 

(This fact is immediate from our assumption that ${\cal P}^{(Q)}_*$ is a phantom term, that is, ${\cal P}^{(Q)}_*$ cannot be the multiplicity monomial for any monomial class for $Q'$ and for databases $D_{N^{(i)}}(Q)$). 

As a result, we have that: 

(**) The polynomial $\cal R$ for the above union must have terms with positive coefficients, as well as terms with negative coefficients. 

The fact (**) furnishes a significant challenge in the proof of Conjecture~\ref{goldfish-in-q-prime-conj}. Indeed, if ${\cal P}^{(Q)}_*$ is a phantom term in such a polynomial for some fixed ${\bar K}_w$, then (by definition of phantom terms) ${\cal P}^{(Q)}_*$ must be a phantom term in the relevant polynomials for {\em all possible} vectors ${\bar K}_w$, and thus the polynomials as in (*) {\em for all possible vectors} ${\bar K}_w$ must have terms with positive coefficients, as well as terms with negative coefficients. As a result, there is a theoretical possibility, for the query $Q'$, that under each different vector ${\bar K}_w$, ${\cal P}^{(Q)}_*$ is a {\em different} term in some such union-of-multiplicities polynomial. Thus, it is theoretically possible that in the overall polynomial that is  the  $m$-covering part of the function ${\cal F}^{(Q')}_{(Q)}[{\bar K}_w]$, the remaining terms (i.e., those terms that do not ``look like'' ${\cal P}^{(Q)}_*$ under the given vector ${\bar K}_w$) always cancel each other out in a convenient way. This way, we can make no conclusion about the wave ${\cal P}^{(Q)}_*$ of the query $Q$ being backed up by assignments from the query $Q'$ to the databases $D_{{\bar N}^{(i)}}(Q)$, and thus can make no conclusion about the result of Theorem~\ref{magic-mapping-prop} for (explicit-wave CCQ queries $Q$ and) arbitrary CCQ queries $Q'$. 


Our plan of attack for overcoming this major challenge is in developing a class of signatures, for multivariate polynomials with integer coefficients in the context of the proof of Theorem~\ref{magic-mapping-prop}, in such a way that the signature for each such polynomial is always a set that features (some combinations of) variables used in those polynomials, always with positive coefficients. (As a result, no part of such a signature can cancel out another part of the signature.) Further, whenever the signatures for two such polynomials are different, then the polynomials must also be different. We explain how to construct the signatures in Section~\ref{bqkw-sec}.  

Then, in Section~\ref{sec-hinge-proof-sec}, we prove  Conjecture~\ref{goldfish-in-q-prime-conj}, by essentially following the idea-of-the-proof argument of this subsection, but by looking at the signatures instead of at the polynomials that generate those signatures. This way, we overcome the challenge that arises from the presence of both positive- and negative-coefficient terms under our assumption-toward-contradiction in the proof of Conjecture~\ref{goldfish-in-q-prime-conj}, and thus complete the proof of Theorem~\ref{magic-mapping-prop}. 

\subsubsection{Signatures for the polynomials ${\cal F}^{(Q')}_{(Q)}[{\bar K}_w]$} 
\label{bqkw-sec} 

In this subsection, for the multivariate polynomials with integer coefficients in the context of the proof of Theorem~\ref{magic-mapping-prop}, we define a class of signatures. One property of these signatures is that the signature for each such polynomial is always a set that features (some combinations of) variables used in those polynomials, always with positive coefficients. (As a result, no part of such a signature can cancel out another part of the signature.) Another property of these signatures is that, whenever the signatures for two such polynomials are different, then the polynomials must also be different. We use these signatures in Section~\ref{sec-hinge-proof-sec}, to complete the proof of Theorem~\ref{magic-mapping-prop}. 

The general idea of our proposed signatures is to produce, for a given multivariate polynomial, in terms of the variables $N_{m+1}$ through $N_{m+w}$ and with integer coefficients, the ``overall count'' of the number of occurrences of each variable in the polynomial, separately for each of the variables. (We count as positive occurrences all such occurrences of variables in terms with positive coefficients, and count as negative occurrences all such occurrences of variables in terms with negative coefficients.) Intuitively, we do the counting on a version of the polynomial where (i) each power of the form $X^p$, with $X$ a variable of interest and with $p$ an integer value greater than unity (i.e., greater than $1$), is expanded as $X \times X \times \ldots \times X$, with $X$ occurring $p$ times in the product, and where (ii) each term of the form $C \times \Pi$, with $\Pi$ being a product of variables with the unity coefficient and with the absolute value $|C|$ of $C$ being strictly greater than unity, is expanded as $sign(C)$ $\times$ $\Pi$ $\times$ $\Pi$ $\times$ $\ldots$ $\times$ $\Pi$, with $\Pi$ occurring $|C|$ times in the product. 

To construct the signatures, we recall the ``original'' formulations of the $m$-covering part (only) of the functions  ${\cal F}^{(Q'')}_{(Q)}$, as given via Propositions~\ref{bigunion-prop} and~\ref{multip-monomial-prop} of Section~\ref{hard-funct-case-sec}.  For the functions  ${\cal F}^{(Q'')}_{(Q)}$, these are the formulations  that do not take into account any fixed copy-variable-ordering vectors ${\bar K}_w$, and hence the formulations may use $min$-expressions. To construct the signatures in this section, in each such formulation of a function  ${\cal F}^{(Q'')}_{(Q)}$, we order the elements of each term according to the provenance of each element from the respective copy signature. That is, whenever a $min$-expression, $\cal M$, in any such term $\cal T$, is a $min$-expression for some fixed value of $u$ between $1$ and $r$, as given in the notation of Proposition~\ref{multip-monomial-prop} of Section~\ref{hard-funct-case-sec}, then the $min$-expression $\cal M$ is exactly the $u$th element, from left to right, in the term $\cal T$. (In any special case where the $min$-expression $\cal M$ evaluates either to the constant $1$ or to a single variable name $N_j$ for some $j$th element, $j$ $\in$ $\{$ $m+1$, $\ldots$, $m+w$ $\}$, of the vector $\bar N$, -- that is, in all cases where the $min$-expression contains only one element, -- the respective $u$th element in the term $\cal T$ is exactly the value $1$ or that variable name, rather than the (unevaluated) $min$-expression.) 

In the expression that results from the procedure described in the previous paragraph, we further group together, by using parentheses, each separate union-of-multiplicity-terms expression as in the format of Proposition~\ref{bigunion-prop} of Section~\ref{hard-funct-case-sec}. 

The purpose of the above ordering is to make it drastically easier, as we will see in Proposition~\ref{key-sig-prop}, to compute our proposed signatures. To make it even easier to compute the signatures, in case where $m$ $\geq$ $1$ we divide each such ordered expression as above by the product $\Pi_{i=1}^{m} N_i$. Thus, the expression that we use to construct a signature is always an (ordered) expression in terms of {\em only} the variables $N_{m+1}$ through $N_{m+w}$. 

For a fixed CCQ query $Q''$, the first input to our signature constructor is always the $m$-covering part of the function ${\cal F}^{(Q'')}_{(Q)}$, which we then preprocess by (a) dividing the entire expression by $\Pi_{i=1}^{m} N_i$ in case where $m$ $\geq$ $1$, and by then (b) ordering, as explained in the third and fourth paragraphs of this subsection, the elements of all the terms in  the output of the step (a). (All the ``expansions'' outlined in the second paragraph of this subsection result naturally from the preprocessing step (b).) The second input to our signature constructor is always some copy-variable-ordering vector ${\bar K}_w$ for the query $Q$.

In the remainder of this subsection, we denote by $f^{(\cal S)}(Q'')$ the output of our preprocessing steps (a) and (b) when given as input the function ${\cal F}^{(Q'')}_{(Q)}$ for a CCQ query $Q''$ and for a family of databases $D_{{\bar N}^{(i)}}(Q)$. 

Consider an illustration. 

\begin{example} 
\label{sig-ex} 
Consider the function 

\begin{equation} 
\label{ex-nine-eq}  
2 \times N_1 \times N_2 - (min(N_1,N_2))^2 
\end{equation} 

\noindent 
that was constructed in Example~\ref{phantom-ex} for a hypothetical CCQ query $Q''$.   As $m$ $=$ $0$ in the context of that example, Eq. (\ref{ex-nine-eq}) defines exactly the $m$-covering part of the function ${\cal F}^{(Q'')}_{(Q)}$. Observe that Eq. (\ref{ex-nine-eq}) contains terms with nonunity coefficients (see the first term in the Equation) and, as we will see in this example, does not observe the ordering that we proposed in the third paragraph of this subsection. 

Our preprocessing step (a) is to divide the entire expression by $\Pi_{i=1}^{m} N_i$, in case where $m$ $\geq$ $1$. As $m$ $=$ $0$ for this query $Q''$, the output of this step for the function of Eq. (\ref{ex-nine-eq}) is exactly the input of this step (a). 

We now begin the preprocessing step (b), by ordering, as explained in the third paragraph of this subsection, the elements of all the terms in  the output of the step (a). Specifically, the first term $2 \times N_1 \times N_2$ in Eq. (\ref{ex-nine-eq}) stands for two separate identical expressions $N_1 \times N_2$, with different meanings, as follows. The first expression $N_1 \times N_2$ is the expression for the cardinality of the set $\bigcap_{s=1}^{1} \Gamma^{(i)}[{\cal C}_{s}^{(Q'')}]$ (see Proposition~\ref{multip-monomial-prop} of Section~\ref{hard-funct-case-sec}), that is  for the cardinality of the intersection with itself of the tuple-multiplicity set for the monomial class ${\cal C}_{1}^{(Q'')}$ of Example~\ref{phantom-ex}. The copy signature of that monomial class is $[ N_1 \ N_2 ]$, hence this first expression is reordered (vacuously) in this step (b) as the product of $N_1$ and $N_2$, in this order. (Trivially, when we compute the cardinality of the intersection of a cardinality set with itself, in each product-of-$min$-expressions term as given by Proposition~\ref{multip-monomial-prop} of Section~\ref{hard-funct-case-sec} we have that each $min$-expression has only one argument, and hence is evaluated immediately to the respective element of the copy signature.) 

The meaning of the second copy of the expression $N_1 \times N_2$ in Eq. (\ref{ex-nine-eq}) is different from the meaning of the first copy of that expression. Specifically, the second copy of the expression $N_1 \times N_2$  is the expression for the cardinality of the set $\bigcap_{s=2}^{2} \Gamma^{(i)}[{\cal C}_{s}^{(Q'')}]$ (see Proposition~\ref{multip-monomial-prop} of Section~\ref{hard-funct-case-sec}), that is  for the cardinality of the intersection with itself of the tuple-multiplicity set for the monomial class ${\cal C}_{2}^{(Q'')}$ of Example~\ref{phantom-ex}. The copy signature of that monomial class is $[ N_2 \ N_1 ]$, hence this expression (unlike the first expression $N_1$ $\times$ $N_2$) is reordered (nontrivially) in this step (b) as the product of $N_2$ and $N_1$, in this order. 

Now the term ``$- (min(N_1,N_2))^2$'' of Eq. (\ref{ex-nine-eq}) is reformatted in this preprocessing step (b) as the product of two copies of the expression ``$min(N_1,N_2)$,'' with the negative sign then passed over to the output from the input expression. That is, the step (b) reformats ``$- (min(N_1,N_2))^2$'' into ``$- min(N_1,N_2)$ $\times$ $min(N_1,N_2)$.'' In this product expression, the leftmost element of the product stands for the minimum between the first element ($N_1$) of the copy signature of the monomial class ${\cal C}_{1}^{(Q'')}$ and the first element ($N_2$) of the copy signature of the monomial class ${\cal C}_{2}^{(Q'')}$. Similarly, the rightmost element of the product stands for the minimum between the second element ($N_2$) of the copy signature of the monomial class ${\cal C}_{1}^{(Q'')}$ and the second element ($N_1$) of the copy signature of the monomial class ${\cal C}_{2}^{(Q'')}$. 

Finally, all three expressions above, that is,  ``$N_1 \times N_2$,''  ``$N_2 \times N_1$,'' and ``$- min(N_1,N_2)$ $\times$ $min(N_1,N_2)$,'' all belong to the same cardinality expression for a union of monomial classes. Hence, as explained in the fourth paragraph of this subsection, the output  $f^{(\cal S)}(Q'')$ of the step (b) and of the overall preprocessing of this subsection is the single parenthesized expression 

\begin{tabbing} 
$f^{(\cal S)}(Q'')$ $=$ ($N_1 \times N_2$ $+$  $N_2 \times N_1$ \\ 
$-$ $min(N_1,N_2)$ $\times$ $min(N_1,N_2)$) . 
\end{tabbing} 
\vspace{-0.6cm} 
\end{example} 

We make the following observation, which holds trivially by construction of the expression $f^{(\cal S)}(Q'')$: 

\begin{proposition} 
Given the $m$-covering part, call it ${\cal F}^{(Q'')}_{m}$, of the function ${\cal F}^{(Q'')}_{(Q)}$ for CCQ queries $Q$ and $Q''$. Denote by ${\cal G}^{(Q'')}_{m}$ the result of dividing the expression ${\cal F}^{(Q'')}_{m}$ by the product $\Pi_{j=1}^m N_j$ in case $m$ $\geq$ $1$; in case where $m$ $=$ $0$ we denote by ${\cal G}^{(Q'')}_{m}$ the original expression ${\cal F}^{(Q'')}_{m}$. Then for each $i$ $\geq$ $1$, the expressions $f^{(\cal S)}(Q'')$ and ${\cal G}^{(Q'')}_{m}$ evaluate to the same value when the elements of the vector ${\bar N}^{(i)}$ are substituted into each expression as values for the elements of $\bar N$. 
\end{proposition}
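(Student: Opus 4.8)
The statement to be proved is the final displayed Proposition, which asserts that the preprocessed expression $f^{(\cal S)}(Q'')$ and the expression ${\cal G}^{(Q'')}_{m}$ (obtained from the $m$-covering part ${\cal F}^{(Q'')}_{m}$ of ${\cal F}^{(Q'')}_{(Q)}$ by dividing out $\Pi_{j=1}^m N_j$ when $m \geq 1$) agree numerically on every vector ${\bar N}^{(i)}$. The plan is to prove this by unwinding the definitions, since the preprocessing steps (a) and (b) that produce $f^{(\cal S)}(Q'')$ from ${\cal F}^{(Q'')}_{m}$ are, by design, value-preserving rewritings. First I would recall that by Proposition~\ref{bigunion-prop} together with the multiplicity-of-intersection formula of Proposition~\ref{multip-monomial-prop}, the $m$-covering part ${\cal F}^{(Q'')}_{m}$ is, for each $i$, exactly the inclusion-exclusion expansion of a sum of cardinalities of unions $|\bigcup_l \Gamma^{(i)}[{\cal C}^{(Q'')}]|$ ranging over the nonempty monomial classes whose noncopy-signature is a permutation of $[\,N_1\ \ldots\ N_m\,]$ (or the empty vector when $m=0$). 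Each term in that expansion has the shape (coefficient) $\times \Pi_{i=1}^m N_i \times \Pi_{u=1}^r \min(\ldots)$ when $m \geq 1$, and (coefficient) $\times \Pi_{u=1}^r \min(\ldots)$ when $m = 0$, with the $\min$-expressions being those produced by Proposition~\ref{multip-monomial-prop}.

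Next I would argue that the preprocessing step (a) — dividing by $\Pi_{j=1}^m N_j$ when $m \geq 1$, and doing nothing when $m = 0$ — is by construction exactly the passage from ${\cal F}^{(Q'')}_{m}$ to ${\cal G}^{(Q'')}_{m}$, so the two expressions agree pointwise on every ${\bar N}^{(i)}$ trivially. Then I would argue that the preprocessing step (b) performs only (i) a reordering of the factors within each term according to the position $u$ from which each factor originates in the respective copy-signature, (ii) the expansion of a power $X^p$ into $p$ copies of $X$, and (iii) the expansion of a coefficient $C \cdot \Pi$ into $|C|$ signed copies of $\Pi$, followed by reintroducing the surrounding parentheses that group each union-of-multiplicities expression. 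Since multiplication of natural numbers is commutative and associative, reordering factors does not change the value; and $X^p$ and $X \times \cdots \times X$ (with $p$ factors) are literally equal as functions of $X$, as are $C \cdot \Pi$ and $\mathrm{sign}(C) \cdot (\Pi + \cdots + \Pi)$ with $|C|$ summands. Regrouping with parentheses is vacuous for evaluation. Hence step (b) is value-preserving as well, and composing steps (a) and (b) gives $f^{(\cal S)}(Q'') \equiv {\cal G}^{(Q'')}_{m}$ as integer-valued functions on $\cal N$, which is precisely the claim.

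The main (and essentially only) obstacle is bookkeeping rather than mathematics: I must be careful that the ``$u$th position'' reordering in step (b) is well-defined and consistent across every term of the inclusion-exclusion expansion, i.e., that every $\min$-expression arising in a term of ${\cal F}^{(Q'')}_{m}$ genuinely has a unique ``provenance index'' $u \in \{1,\ldots,r\}$ inherited from Proposition~\ref{multip-monomial-prop}, and that the degenerate cases (a $\min$-expression collapsing to the constant $1$ or to a single variable $N_j$) are handled by placing that $1$ or $N_j$ in position $u$ — which is exactly how Proposition~\ref{multip-monomial-prop} and the definition of copy-signature already treat them. I would also note the special but important case $r=0$, where each copy-signature is the empty vector and every intersection cardinality reduces to $\Pi_{{\Phi}_n^{{\cal C}}}$ by the first bullet of Proposition~\ref{multip-monomial-prop}; here ${\cal F}^{(Q'')}_{m}$ is already a multivariate polynomial, step (b)'s $\min$-reordering is vacuous, and the claim is immediate. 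In all cases the verification is a routine check that no preprocessing operation alters the numeric value of the expression on any concrete ${\bar N}^{(i)}$, so I would present it as a short paragraph-level argument rather than a case analysis.
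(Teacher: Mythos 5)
Your proposal is correct and matches the paper's reasoning: the paper simply asserts that the proposition "holds trivially by construction of the expression $f^{(\cal S)}(Q'')$," and your argument spells out exactly why — step (a) is literally the passage from ${\cal F}^{(Q'')}_{m}$ to ${\cal G}^{(Q'')}_{m}$, and step (b) is a value-preserving syntactic rewriting (reordering factors, expanding powers and coefficients, regrouping with parentheses). No gap; your write-up is just a more explicit version of the paper's one-line justification.
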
 

We now define the signatures, which will be the key element of the proof in Section~\ref{sec-hinge-proof-sec}. Given the function $f^{(\cal S)}(Q'')$ for a CCQ query $Q''$ and given a copy-variable-ordering vector ${\bar K}_w$ for the vector ${\bar N}$ of the fixed explicit-wave query query $Q$, we say that {\em the signature ${\cal S}(Q'',{\bar K}_w)$ of} $f^{(\cal S)}(Q'')$ {\em with respect to} ${\bar K}_w$ is the set  resulting from (i) evaluating the value of each $min$-expression in $f^{(\cal S)}(Q'')$ using the total ($\leq$) order ${\bar K}_w$ (that is, given a fixed ${\bar K}_w$, each $min$-expression as in Proposition~\ref{multip-monomial-prop} of Section~\ref{hard-funct-case-sec} evaluates to a single element of the set $\{$ $1$, $N_{m+1}$, $\ldots$, $N_{m+w}$ $\}$), and then from (ii) counting the number of occurrences of each variable name separately in the resulting expression, taking into account the sign of each term (that is, an occurrence of a variable name in a term with the negative sign is a negative occurrence, whereas an occurrence of a variable name in a term with the positive sign is a positive occurrence).  The resulting signature ${\cal S}(Q'',{\bar K}_w)$ is a set that has a separate element for each variable name in the set $\{$ $N_{m+1}$, $\ldots$, $N_{m+w}$ $\}$, giving the total number of occurrences of that variable name in  $f^{(\cal S)}(Q'')$ under the vector ${\bar K}_w$; we omit from ${\cal S}(Q'',{\bar K}_w)$ the mention of all the variable names whose total occurrence in $f^{(\cal S)}(Q'')$ under ${\bar K}_w$ is zero. For instance, in the context of Example~\ref{sig-ex}, the signature ${\cal S}(Q'',{\bar K}_w)$ of $f^{(\cal S)}(Q'')$ with respect to ${\bar K}_w$ $=$ $[$ $1 \ N_1 \ N_2 $ $]$ is ${\cal S}(Q'',{\bar K}_w)$ $=$ $\{$ $2 \times N_2$ $\}$, as is easy to ascertain using the expression for $f^{(\cal S)}(Q'')$ that we obtained in Example~\ref{sig-ex}. 

In evaluating ${\cal S}(Q'',{\bar K}_w)$, we must take care not to ``bundle together'' into exponents repeated occurrences of the same variable name; that is, for instance, the product $N_{m+1}$ $\times$ $N_{m+1}$, which is the same as $N_{m+1}^2$, counts as two occurrences of $N_{m+1}$ rather than one. Further, we do not count any occurrences of the constant $1$ in evaluating ${\cal S}(Q'',{\bar K}_w)$. The reason is, all we need to complete 
the proof of Theorem~\ref{magic-mapping-prop} is to show that a hypothetical function of the form $f^{(\cal S)}(Q'')$, in the context of two distinct vectors ${\bar K}_w$ and ${\bar K}'_w$, results in two nonidentical multivariate polynomials. We will do this in our proof in Section~\ref{sec-hinge-proof-sec} by arguing that the signatures of those polynomials are not the same. Clearly, by construction of the signatures we have that: 

\begin{proposition} 
\label{not-same-sigs-prop} 
Given expressions $f^{(\cal S)}(Q')$ and $f^{(\cal S)}(Q'')$, where $Q'$ and $Q''$ may or may not be the same CCQ query, and given two distinct vectors ${\bar K}_w$ and ${\bar K}'_w$. Then, whenever the signatures ${\cal S}(Q',{\bar K}_w)$ and ${\cal S}(Q'',{\bar K}'_w)$ are nonidentical sets, then the polynomial that results from evaluating the value of each $min$-expression in $f^{(\cal S)}(Q')$ under the total ($\leq$) order ${\bar K}_w$ is not identical to the polynomial that results from evaluating the value of each $min$-expression in $f^{(\cal S)}(Q'')$ under the total ($\leq$) order ${\bar K}'_w$. 
\end{proposition}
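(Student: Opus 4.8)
I'd prove Proposition~\ref{not-same-sigs-prop} directly by contraposition, using only the way the signatures and the polynomials are defined. The plan is as follows. Suppose that the polynomial $P$ obtained from $f^{(\cal S)}(Q')$ by evaluating every $min$-expression under the total order ${\bar K}_w$ is identical (as a multivariate polynomial in $N_{m+1},\ldots,N_{m+w}$ with integer coefficients) to the polynomial $P'$ obtained from $f^{(\cal S)}(Q'')$ by evaluating every $min$-expression under ${\bar K}'_w$. I would then show that this forces ${\cal S}(Q',{\bar K}_w) = {\cal S}(Q'',{\bar K}'_w)$, which is exactly the contrapositive of the stated implication. The key observation is that each signature is nothing more than a bookkeeping functional applied to the expression after the $min$-expressions have been resolved: for each variable name $N_j$, the corresponding entry of the signature records the total number of (signed) occurrences of $N_j$ in the term-by-term ``expanded'' form of the resolved expression, where a coefficient $C$ on a monomial $\Pi$ is treated as $|C|$ signed copies of $\Pi$ and a power $N_j^p$ is treated as $p$ copies of $N_j$.

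The main technical step is to argue that this occurrence-counting functional is in fact well-defined on the polynomial itself, i.e., it depends only on the equivalence class of the resolved expression as a multivariate polynomial, not on the particular syntactic form in which it is presented. This is where I would be careful: a priori the signature is computed from a specific parenthesized, ordered presentation of the resolved expression (the one produced by the preprocessing steps (a) and (b) together with the substitution induced by ${\bar K}_w$), so one must check that two such presentations of the same polynomial yield the same signature. Here I would use the fact that the resolved expression is, by construction, a sum of signed monomials in the variables $N_{m+1},\ldots,N_{m+w}$ only (the $\Pi_{i=1}^m N_i$ factor was divided out during preprocessing, and every $min$-expression has been replaced by a single variable or the constant $1$). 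For such a sum, collecting like monomials gives a canonical form $\sum_{\alpha} c_\alpha N^\alpha$, and the signature entry for $N_j$ is visibly $\sum_{\alpha} \mathrm{sign}(c_\alpha)\,|c_\alpha|\,\alpha_j = \sum_\alpha c_\alpha \alpha_j$ --- a quantity that is manifestly an invariant of the polynomial (it is, up to notation, the value of $\partial P/\partial N_j$ ``counted with the exponent'', or equivalently the coefficient sum of $N_j \cdot \partial P / \partial N_j$ read off in the monomial basis). Since $P = P'$ as polynomials, these invariants agree for every $j$, hence ${\cal S}(Q',{\bar K}_w) = {\cal S}(Q'',{\bar K}'_w)$ as sets (recalling that entries equal to zero are omitted from both, and occurrences of the constant $1$ are not counted on either side).

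Having established that, the proof closes immediately: the contrapositive we have just proved is precisely the assertion of Proposition~\ref{not-same-sigs-prop}. I would also remark that the hypothesis ${\bar K}_w \neq {\bar K}'_w$ plays no role in the logical core of the argument --- it is retained only because this is the situation in which the result is applied in Section~\ref{sec-hinge-proof-sec} --- and that the possibility $Q' = Q''$ is likewise harmless, since the whole argument is about the two resolved expressions as polynomials and never refers back to the queries.

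The step I expect to be the genuine obstacle is the ``well-definedness'' claim of the second paragraph: one has to be sure that the preprocessing procedure of Section~\ref{bqkw-sec} really does leave the signature a function of the underlying polynomial and not of the incidental grouping/ordering introduced to make the signature easy to read off. In particular I would want to double-check the edge cases flagged in the text --- a $min$-expression that collapses to the constant $1$ (which must then contribute nothing to any signature entry, consistent with ``we do not count any occurrences of the constant $1$''), and a $min$-expression that collapses to a single variable name (which contributes exactly one occurrence, not zero, so that $N_{m+1}\times N_{m+1} = N_{m+1}^2$ is correctly counted as two occurrences rather than one). Once those are pinned down, the invariance computation $\mathcal{S}$-entry $= \sum_\alpha c_\alpha \alpha_j$ makes the whole proposition routine.
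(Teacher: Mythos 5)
Your proof is correct, and it is exactly the argument the paper leaves implicit: the paper asserts Proposition~\ref{not-same-sigs-prop} with only the remark ``by construction of the signatures,'' and your contrapositive --- that two identical polynomials must have identical signatures because the signature entry for $N_j$ equals the presentation-independent quantity $\sum_\alpha c_\alpha\,\alpha_j$ of the canonical form $\sum_\alpha c_\alpha N^\alpha$ --- is precisely the justification being taken for granted. The well-definedness point you flag (that signed occurrence counting survives collecting and cancelling like monomials, with zero-count variables omitted and the constant $1$ never counted) is the only genuine content here, and you handle it correctly.
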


In the remainder of this subsection, we make a key observation, Proposition~\ref{key-sig-prop}. The intuition for that result is that it is very easy to compute the signature of each expression $f^{(\cal S)}(Q'')$, w.r.t. each fixed vector ${\bar K}_w$, by first forming a single positive-sign $max$-expression for each position (from $1$st to $r$th) of the copy signatures associated with each parenthesized expression of $f^{(\cal S)}(Q'')$,  by then evaluating the resulting $max$-expressions under the total ($\leq$) order ${\bar K}_w$, and by finally summing up all the resulting {\em positive-sign-only} occurrences of the variable names. 

\begin{example} 
\label{second-sig-ex} 
We continue Example~\ref{sig-ex}. In the expression for $f^{(\cal S)}(Q'')$ that we obtained in that Example, 

\begin{tabbing} 
$f^{(\cal S)}(Q'')$ $=$ ($N_1 \times N_2$ $+$  $N_2 \times N_1$ \\ 
$-$ $min(N_1,N_2)$ $\times$ $min(N_1,N_2)$) . 
\end{tabbing} 

\noindent 
the ``{\em first} position'' in the only parenthesized expression in $f^{(\cal S)}(Q'')$ is the expression 

\begin{tabbing} 
$f_1^{(\cal S)}(Q'')$ $=$ ($N_1$ $+$  $N_2$ $-$ $min(N_1,N_2)$ ) . 
\end{tabbing} 

\noindent 
Intuitively, that first position in the only parenthesized expression in $f^{(\cal S)}(Q'')$ was obtained by keeping only the {\em first} element of each term in the original parenthesized expression. These correspond to a ``union expression'' for exactly the first element of each of the two copy signatures ($[ N_1 \ N_2]$ and $[ N_2 \ N_1]$) associated with this parenthesized expression. 

Similarly, the ``{\em second} position'' in the only parenthesized expression in $f^{(\cal S)}(Q'')$ is the expression 

\begin{tabbing} 
$f_2^{(\cal S)}(Q'')$ $=$ ($N_2$ $+$  $N_1$ $-$ $min(N_1,N_2)$ ) . 
\end{tabbing} 

\noindent 
Intuitively, that second position in the only parenthesized expression in $f^{(\cal S)}(Q'')$ was obtained by keeping only the {\em second} element of each term in the original parenthesized expression. These correspond to a ``union expression'' for exactly the second element of each of the two copy signatures ($[ N_1 \ N_2]$ and $[ N_2 \ N_1]$) associated with this parenthesized expression. 
\end{example}  

Interestingly, a more concise expression for $f_1^{(\cal S)}(Q'')$ in Example~\ref{second-sig-ex} is $f_1^{(\cal S)}(Q'')$ $=$ $max(N_1, N_2)$. Indeed, the expression given as $f_1^{(\cal S)}(Q'')$ in Example~\ref{second-sig-ex} is the equivalent expansion of the expression $max(N_1, N_2)$ using the inclusion-exclusion principle for unions of sets. This is exactly the intuition for our promised easy evaluation for signatures: The first position in the expression for $f^{(\cal S)}(Q'')$ in Example~\ref{sig-ex} corresponds to a formula, using the inclusion-exclusion principle, for the cardinality of the union of the sets $\{$ $1$, $\ldots$, $N_1$ $\}$ and $\{$ $1$, $\ldots$, $N_2$ $\}$. The former set is exactly the range of the values that are provided by the assignments in the monomial class ${\cal C}_{1}^{(Q'')}$ to the column for the first copy variable of the query $Q''$ (see Example~\ref{sig-ex} for the first element of the copy signature of that monomial class), when $Q''$ is evaluated on any of the databases $D_{{\bar N}^{(i)}}(Q)$. Similarly, the latter set, i.e., $\{$ $1$, $\ldots$, $N_2$ $\}$,   is exactly the range of the values that are provided by the assignments in the monomial class ${\cal C}_{2}^{(Q'')}$ to the column for the first copy variable of the query $Q''$ (see Example~\ref{sig-ex} for the first element of the copy signature of that monomial class), when $Q''$ is evaluated on any of the databases $D_{{\bar N}^{(i)}}(Q)$. That is, the expression for $f_1^{(\cal S)}(Q'')$ in Example~\ref{second-sig-ex} is an expression for evaluating the cardinality of the union of these two sets; the latter union is exactly the set of values of the first copy variable of the query $Q''$, when $Q''$ is evaluated on any of the databases $D_{{\bar N}^{(i)}}(Q)$ under all the assignments provided by the monomial classes ${\cal C}_{1}^{(Q'')}$ and ${\cal C}_{2}^{(Q'')}$ together. We can obtain a similar intuition for the expression for $f_2^{(\cal S)}(Q'')$ in Example~\ref{second-sig-ex}, that is again as a $max(N_1, N_2)$. 

\begin{example} 
\label{third-sig-ex} 
We continue with Example~\ref{second-sig-ex}. Fix a vector ${\bar K}_w$ $=$ $\{$ $N_1 \ N_2$ $\}$. Then the expression  for $f_1^{(\cal S)}(Q'')$ in Example~\ref{second-sig-ex} evaluates to $N_2$, and so does the expression  for $f_2^{(\cal S)}(Q'')$ in Example~\ref{second-sig-ex}. As a result, the set ${\cal S}(Q'',{\bar K}_w)$ for this vector ${\bar K}_w$ is the result of adding up these two positive-sign occurrences of $N_2$, from $f_1^{(\cal S)}(Q'')[{\bar K}_w]$ $=$ $max(N_1, N_2)[{\bar K}_w]$ $=$ $N_2$ and similarly from $f_2^{(\cal S)}(Q'')[{\bar K}_w]$ $=$ $N_2$. That is, the set ${\cal S}(Q'',{\bar K}_w)$ for this vector ${\bar K}_w$ is ${\cal S}(Q'',{\bar K}_w)$ $=$ $\{$ $2 N_2$ $\}$. 
\end{example}

We now formalize the above intuition in Proposition~\ref{key-sig-prop}. To formulate Proposition~\ref{key-sig-prop}, we use the following notation: Let the function $f^{(\cal S)}(Q'')$, for a CCQ query $Q''$, have exactly $q$ $\geq$ $1$ parenthesized subexpressions as defined by the fourth paragraph of this subsection; we enumerate these parenthesized subexpressions from left to right as the $1$st through $q$th parenthesized subexpression of the function $f^{(\cal S)}(Q'')$. In the $k$th such parenthesized subexpression, for any $k$ between $1$ and $q$ inclusively, let the $l_k$ $\geq$ $1$ monomial classes involved in the union in this subexpression be ${\cal C}_{k1}^{(Q'')}$ through ${\cal C}_{kl_k}^{(Q'')}$. Further, for each $p$ $\in$ $\{$ $1$, $\ldots,$ $l_k$ $\}$, let the copy signature of the class  ${\cal C}_{kp}^{(Q'')}$ be denoted by $[$ $V^{(1)}_{kp}$, $V^{(2)}_{kp}$, $\ldots$, $V^{(r)}_{kp}$ $]$, where each element of the vector is (by definition of copy signature) an element of the set $\{$ $1$, $N_{m+1}$, $\ldots$, $N_{m+w}$ $\}$. For a fixed $j$ $\in$ $\{$ $1$, $\ldots$, $r$ $\}$, consider the result of keeping in the $k$th parenthesized subexpression of $f^{(\cal S)}(Q'')$ only the $j$th element of each term, with the (positive or negative) sign of the original term kept around. 
We refer to the resulting expression as {\em the $j$th projection of the $k$th parenthesized subexpression of} $f^{(\cal S)}(Q'')$. 
 (For instance, for the function $f^{(\cal S)}(Q'')$ that we obtained in Example~\ref{sig-ex}, the $1$st projection of the first (and only) parenthesized subexpression of $f^{(\cal S)}(Q'')$ is the expression $f_1^{(\cal S)}(Q'')$ $=$ ($N_1$ $+$  $N_2$ $-$ $min(N_1,N_2)$); see Example~\ref{second-sig-ex} for the details.) 

We are now ready to formulate Proposition~\ref{key-sig-prop}. 

\begin{proposition} 
\label{key-sig-prop} 
Let the function $f^{(\cal S)}(Q'')$, for a CCQ query $Q''$, have $q$ $\geq$ $1$ parenthesized subexpressions. Fix a value $k$ $\in$ $\{$ $1$, $\ldots$, $q$ $\}$ and a value $j$ $\in$ $\{$ $1$, $\ldots$, $r$ $\}$. Denote by ${\cal C}_{k1}^{(Q'')}$ through ${\cal C}_{kl_k}^{(Q'')}$, for some $l_k$ $\geq$ $1$, all the monomial classes contributing to the $k$th parenthesized subexpression of $f^{(\cal S)}(Q'')$. Then the  $j$th projection of the $k$th parenthesized subexpression of $f^{(\cal S)}(Q'')$ is equivalent to the expression $f^{(\cal S)}_{kj}(Q'')$ $=$ $max(V^{(j)}_{k1}, \ldots, V^{(j)}_{kl_k})$, where each of $V^{(j)}_{k1}$, $\ldots$, $V^{(j)}_{kl_k}$ is the $j$th element of the copy signature of the respective monomial class. 
\end{proposition}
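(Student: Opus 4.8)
The plan is to establish Proposition~\ref{key-sig-prop} by directly identifying the $j$th projection of the $k$th parenthesized subexpression of $f^{(\cal S)}(Q'')$ with an inclusion-exclusion expansion of a cardinality-of-union formula, and then observing that this union is exactly the union of the value-ranges $\{1,\ldots,V^{(j)}_{kp}\}$ for the monomial classes ${\cal C}_{kp}^{(Q'')}$, whose cardinality is $\max(V^{(j)}_{k1},\ldots,V^{(j)}_{kl_k})$. First I would recall that, by Proposition~\ref{bigunion-prop} (and its $m$-covering specialization), the $k$th parenthesized subexpression of $f^{(\cal S)}(Q'')$ is (after the division by $\Pi_{i=1}^m N_i$ and the reordering described in Section~\ref{bqkw-sec}) exactly the inclusion-exclusion expansion of $|\bigcup_{p=1}^{l_k} \Gamma^{(i)}[{\cal C}_{kp}^{(Q'')}]|$, restricted so that the monomial classes ${\cal C}_{k1}^{(Q'')},\ldots,{\cal C}_{kl_k}^{(Q'')}$ all share one noncopy signature. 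By Proposition~\ref{multip-monomial-prop}, for each subcollection $S \subseteq \{1,\ldots,l_k\}$ the cardinality $|\bigcap_{p\in S} \Gamma^{(i)}[{\cal C}_{kp}^{(Q'')}]|$ is (the common noncopy factor, divided out here, times) $\Pi_{u=1}^r \min\{V^{(u)}_{kp} : p\in S\}$, and in the reordered expression this intersection term is precisely the product whose $u$th element is $\min\{V^{(u)}_{kp} : p\in S\}$. Taking the $j$th projection — keeping only the $j$th factor of each term with its original sign — therefore yields exactly $\sum_{\emptyset\ne S\subseteq\{1,\ldots,l_k\}} (-1)^{|S|+1}\,\min\{V^{(j)}_{kp} : p\in S\}$.

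Next I would recognize this last sum as the standard inclusion-exclusion expression for $|\bigcup_{p=1}^{l_k} A_p|$ where $A_p$ is a chain-structured family of sets: specifically $A_p = \{1,2,\ldots,V^{(j)}_{kp}\}$ when $V^{(j)}_{kp}$ is a variable $N_{m+t}$ (interpreted via a fixed ${\bar N}^{(i)}$), and $A_p=\{1\}$ when $V^{(j)}_{kp}=1$. For such nested intervals, $|\bigcap_{p\in S} A_p| = \min\{|A_p| : p\in S\} = \min\{V^{(j)}_{kp} : p\in S\}$, matching the summands above term by term, so the projection equals $|\bigcup_{p=1}^{l_k} A_p| = \max\{V^{(j)}_{k1},\ldots,V^{(j)}_{kl_k}\}$. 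This is the claimed identity $f^{(\cal S)}_{kj}(Q'') = \max(V^{(j)}_{k1},\ldots,V^{(j)}_{kl_k})$; the identification is purely a combinatorial fact about inclusion-exclusion for chains of finite initial segments of ${\mathbb N}_+$, and it holds as an identity of functions over all vectors ${\bar N}^{(i)}$, hence (by the polynomial-identity Proposition~\ref{polyn-prop} applied on any subdomain where a fixed linear order on the $V$'s holds) as an equivalence of expressions after evaluation under any copy-variable-ordering vector. I would also note that the $j$th projection being well-formed relies on the reordering convention of Section~\ref{bqkw-sec}, by which the $u$th factor of each reordered term is the $\min$-expression corresponding to copy-signature position $u$; this is exactly what makes ``keep the $j$th factor'' well-defined and coherent across the terms of the subexpression.

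The main obstacle I anticipate is bookkeeping rather than conceptual: one must check carefully that the reordering procedure of Section~\ref{bqkw-sec} genuinely aligns the $j$th factor of every monomial-intersection term with copy-signature position $j$ uniformly, including the degenerate cases where a $\min$-expression collapses to the constant $1$ or to a single variable name (Proposition~\ref{multip-monomial-prop}'s two bullets), and that signs are propagated correctly from the original inclusion-exclusion formula to each projection. A secondary point to verify is that when $V^{(j)}_{kp}=1$ for some $p$ the corresponding interval $A_p=\{1\}$ is absorbed correctly, so that $\max$ still computes the union cardinality (which it does, since $1 \le V^{(j)}_{kp'}$ always holds by definition of $\cal N$). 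Once these are dispatched, the proof is a direct application of the inclusion-exclusion principle together with the chain property of the sets $\{1,\ldots,V^{(j)}_{kp}\}$, and Proposition~\ref{key-sig-prop} follows, giving the promised easy computation of the signature: each projection evaluates, under a fixed ${\bar K}_w$, to a single positive-sign variable name (the $\max$ under that order), and ${\cal S}(Q'',{\bar K}_w)$ is obtained by summing these positive occurrences over all positions $j$ and all parenthesized subexpressions $k$.
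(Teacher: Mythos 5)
Your proposal is correct and follows essentially the same route as the paper: the paper's own (one-sentence) proof likewise reduces the claim to the inclusion--exclusion structure of the union expressions from Proposition~\ref{bigunion-prop} together with the $min$-expressions of Proposition~\ref{multip-monomial-prop}, and the chain-of-initial-segments observation you use is exactly the content of the paper's Lemma~\ref{excl-incl-lemma}. Your write-up just makes explicit the bookkeeping (signs, reordering of factors by copy-signature position, degenerate $V^{(j)}_{kp}=1$ cases) that the paper leaves implicit.
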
 

The result of Proposition~\ref{key-sig-prop} is by construction of the union expressions for Proposition~\ref{bigunion-prop} of Section~\ref{hard-funct-case-sec}, as given by the inclusion-exclusion principle for unions of sets and by the $min$-expressions of Proposition~\ref{multip-monomial-prop} of Section~\ref{hard-funct-case-sec} for intersections of sets. 

Proposition~\ref{key-sig-prop} gives us immediately a linear-time procedure for constructing the signature ${\cal S}(Q'',{\bar K}_w)$ for each function $f^{(\cal S)}(Q'')$ and copy-variable-ordering vector ${\bar K}_w$: 

\begin{corollary} 
\label{key-sig-one-corol} 
Let the function $f^{(\cal S)}(Q'')$, for a CCQ query $Q''$, have  $q$ $\geq$ $1$ parenthesized subexpressions. 
Then for each copy-variable-ordering vector ${\bar K}_w$, the signature ${\cal S}(Q'',{\bar K}_w)$ of $f^{(\cal S)}(Q'')$ with respect to ${\bar K}_w$ can be constructed by (1) obtaining all the $q$ $\times$ $r$ positive-sign $max$-expressions given by Proposition~\ref{key-sig-prop}, by then (2) evaluating each $max$-expression under the total ($\leq$) order ${\bar K}_w$, and by finally (3) adding up all the positive-sign occurrences of all the variables $N_{m+1}$, $\ldots$, $N_{m+w}$, separately for each variable, in these results of evaluating the $max$-expressions under ${\bar K}_w$.  
\end{corollary}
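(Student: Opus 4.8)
\textbf{Proof proposal for Corollary~\ref{key-sig-one-corol}.}

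The plan is to read off the three-step procedure directly from Proposition~\ref{key-sig-prop} and the construction of the signatures in Section~\ref{bqkw-sec}. First I would recall how $f^{(\cal S)}(Q'')$ is organized: by the fourth paragraph of Section~\ref{bqkw-sec} it is a sum of $q\geq 1$ parenthesized subexpressions, where the $k$th subexpression is precisely an inclusion--exclusion expansion (in the format of Proposition~\ref{bigunion-prop}) of the cardinality of a union of the tuple-multiplicity sets for the monomial classes ${\cal C}_{k1}^{(Q'')},\ldots,{\cal C}_{kl_k}^{(Q'')}$, all of which share a common noncopy signature; and the $\min$-expressions inside it come from the intersection formula of Proposition~\ref{multip-monomial-prop}. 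By the ordering convention imposed in Section~\ref{bqkw-sec}, each $\min$-expression occupying the ``$j$th slot'' of a term refers to the $j$th components of the relevant copy signatures. Hence the definition of ${\cal S}(Q'',{\bar K}_w)$ says: evaluate every $\min$-expression under the total order ${\bar K}_w$, then count signed occurrences of each of $N_{m+1},\ldots,N_{m+w}$, dropping occurrences of the constant $1$ and never collapsing repeated occurrences into exponents.

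The key step is to apply Proposition~\ref{key-sig-prop} to each pair $(k,j)$ with $k\in\{1,\ldots,q\}$, $j\in\{1,\ldots,r\}$: it tells us that the $j$th projection of the $k$th parenthesized subexpression is equivalent, \emph{as a function}, to the single positive-sign expression $f^{(\cal S)}_{kj}(Q'')=\max(V^{(j)}_{k1},\ldots,V^{(j)}_{kl_k})$. Since the signature ${\cal S}(Q'',{\bar K}_w)$ only depends on the value of $f^{(\cal S)}(Q'')$ after evaluating all $\min$-expressions under ${\bar K}_w$ (and, by its very definition, is computed slot-by-slot from these evaluated expressions), I can replace the whole signed mess of terms in the $j$th slot of the $k$th subexpression by the single term $f^{(\cal S)}_{kj}(Q'')$ before evaluating under ${\bar K}_w$; this substitution changes neither the numerical value for any ${\bar N}^{(i)}$ nor the slot-wise occurrence count that defines the signature, because after evaluating a $\max$ of elements of $\{1,N_{m+1},\ldots,N_{m+w}\}$ under a total order one obtains exactly one of those elements, which is what the occurrence count sees. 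Collecting over all $q\cdot r$ slots yields step (1) of the corollary; evaluating each of these $q\cdot r$ $\max$-expressions under ${\bar K}_w$ is step (2); and summing, separately per variable, the positive-sign occurrences of $N_{m+1},\ldots,N_{m+w}$ in these evaluated results (discarding any occurrence of the constant $1$) is step (3). Linear time follows since there are $O(qr)$ $\max$-expressions, each of size at most $l_k\le$ (number of monomial classes), and evaluating a $\max$ under a fixed total order and tallying a variable are constant-time per element.

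I would then note the one subtlety that needs a line of justification rather than being purely mechanical: that the slot-wise replacement by $f^{(\cal S)}_{kj}(Q'')$ is legitimate at the level of \emph{signatures}, not merely at the level of values. This is immediate from the fact that the signature construction in Section~\ref{bqkw-sec} is itself defined by first fixing ${\bar K}_w$, evaluating every $\min$, and only then counting occurrences; and that two expressions which, slot by slot and after the ${\bar K}_w$-evaluation, reduce to the same single element of $\{1,N_{m+1},\ldots,N_{m+w}\}$ necessarily produce the same counted occurrences in that slot. There is no obstacle of the kind encountered elsewhere in the proof of Theorem~\ref{magic-mapping-prop} (no cancellation of positive against negative contributions can hide a variable), precisely because Proposition~\ref{key-sig-prop} has already collapsed each signed inclusion--exclusion slot into a single positive-sign $\max$; the ``main obstacle,'' such as it is, is merely bookkeeping: being careful that the ordering convention of Section~\ref{bqkw-sec} really does line up the $j$th $\min$-expression with the $j$th copy-signature components across all the monomial classes sharing a parenthesized subexpression, which is exactly the hypothesis under which Proposition~\ref{key-sig-prop} is stated. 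Hence the corollary follows directly.
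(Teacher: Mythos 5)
Your proposal is correct and follows essentially the same route as the paper, which derives the corollary directly from Proposition~\ref{key-sig-prop} together with Lemma~\ref{excl-incl-lemma}; your slot-by-slot replacement argument is just a more explicit spelling-out of that one-line derivation. The only point worth noting is that the passage from functional equality of the $j$th projection to equality of the \emph{net signed occurrence counts} is exactly what Lemma~\ref{excl-incl-lemma} (plus the fact that a linear combination of variable names is determined by its values on the infinite domain consistent with ${\bar K}_w$) supplies, so you may as well cite it.
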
 

The result of Corollary~\ref{key-sig-one-corol} is straightforward from Proposition~\ref{key-sig-prop} and from the following Lemma~\ref{excl-incl-lemma}. 

\begin{lemma} 
\label{excl-incl-lemma} 
Given $n \geq 1$ natural numbers $a_1$, $a_2$, $\ldots,$ $a_n$ such that $a_1$ $\leq$ $a_2$ $\leq$ $\ldots$ $\leq$ $a_n$. For each $j$ $\in$ $\{ 1,\ldots,n \}$, let the set ${\cal A}_j$ be ${\cal A}_j$ $:=$ $\{ 1,$ $2,$ $\ldots,$ $a_{j}-1,$ $a_j \}$. Then the cardinality of the set $\bigcup_{j=1}^n {\cal A}_j$ is the natural number $a_n$. 
\end{lemma} 

The claim of Lemma~\ref{excl-incl-lemma} is trivial. (Observe that for all $n$ we have that (a) ${\cal A}_j \subseteq {\cal A}_n$ for all $j$ $\in$ $\{ 1,\ldots,n \}$, and that (b) the cardinality of the set ${\cal A}_n$ is exactly $a_n$.) 

We end this subsection by observing that all elements of the set ${\cal S}(Q',{\bar K}_w)$ have positive coefficients, which add up to exactly $q$ $\times$ $r$. This result is immediate from  Proposition~\ref{key-sig-prop}. 

\begin{corollary} 
\label{key-sig-two-corol} 
Given a CCQ query $Q'$ and a copy-variable-ordering vector ${\bar K}_w$ as specified in the proof of Theorem~\ref{magic-mapping-prop}. Denote by $q$ $\geq$ $1$ the total number of parenthesized expressions in the function $f^{(\cal S)}(Q')$. Then  we have that: 
\begin{enumerate} 
	\item Each element of the signature ${\cal S}(Q',{\bar K}_w)$ of $f^{(\cal S)}(Q')$ w.r.t. ${\bar K}_w$ has a positive coefficient; and 
	\item All the positive coefficients of all the elements of the set ${\cal S}(Q',{\bar K}_w)$ add up to $q$ $\times$ $r$. 
\end{enumerate}  
\vspace{-0.4cm} 
\end{corollary}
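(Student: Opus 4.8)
The plan is to derive both parts directly from the $max$-reformulation of the signature supplied by Proposition~\ref{key-sig-prop} and Corollary~\ref{key-sig-one-corol}, so that no fresh combinatorics on the inclusion--exclusion expansion is required. First I would recall from Corollary~\ref{key-sig-one-corol} that ${\cal S}(Q',{\bar K}_w)$ is assembled in three steps: (i) form the $q \times r$ positive-sign expressions $f^{(\cal S)}_{kj}(Q') = max(V^{(j)}_{k1},\ldots,V^{(j)}_{kl_k})$, one for each parenthesized subexpression $k \in \{1,\ldots,q\}$ and each copy-variable position $j \in \{1,\ldots,r\}$; (ii) evaluate each such $max$ under the total order ${\bar K}_w$, obtaining a single element of $\{ 1, N_{m+1}, \ldots, N_{m+w} \}$; and (iii) total up the resulting positive-sign variable occurrences, separately for each variable. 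The whole argument then reduces to bookkeeping over these $q \times r$ evaluations.

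For part~(1), positivity is immediate once the $max$-reformulation is in hand. Every one of the $q \times r$ expressions $f^{(\cal S)}_{kj}(Q')$ is, by Proposition~\ref{key-sig-prop}, a single $max$ carrying a plus sign, and step~(iii) of Corollary~\ref{key-sig-one-corol} adds only positive-sign occurrences. Hence the coefficient attached to any variable name appearing in ${\cal S}(Q',{\bar K}_w)$ is a sum of strictly positive integers and is therefore itself strictly positive; no variable can receive a zero or negative net coefficient. The conceptual point to stress is that all cancellation between the positive and negative terms of the original inclusion--exclusion formula has already been absorbed when passing, position by position, from the parenthesized union expressions of Proposition~\ref{bigunion-prop} to the single positive $max$-expressions of Proposition~\ref{key-sig-prop}; this is exactly what makes part~(1) fall out without further work.

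For part~(2), I would count contributions. There are precisely $q$ parenthesized subexpressions in $f^{(\cal S)}(Q')$ and precisely $r$ copy-variable positions inside each, so exactly $q \times r$ $max$-expressions are evaluated in step~(ii). Each such evaluation, being a maximum under the total order ${\bar K}_w$ over a nonempty set of elements of $\{ 1, N_{m+1}, \ldots, N_{m+w} \}$, returns a single element and hence contributes at most one variable occurrence to the total. Provided each of these maxima returns a genuine variable from $\{ N_{m+1}, \ldots, N_{m+w} \}$ rather than the constant $1$ (which is not counted), every one of the $q \times r$ evaluations contributes exactly one, and the coefficients therefore sum to $q \times r$.

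The main obstacle is precisely the proviso in the previous paragraph: I must rule out that some $max(V^{(j)}_{k1},\ldots,V^{(j)}_{kl_k})$ collapses to the uncounted constant $1$. This collapse happens exactly when all of $V^{(j)}_{k1},\ldots,V^{(j)}_{kl_k}$ equal $1$, i.e.\ when, across every monomial class contributing to the $k$th parenthesized subexpression, the $j$th copy-sensitive subgoal of $Q'$ is mapped to a \emph{relational} representative in ${\cal S}_{C(Q)}$. I would argue this cannot occur by appealing to the way the monomial classes grouped into a single parenthesized subexpression (those sharing a common noncopy signature) are generated by valid $t^*_Q$-assignment mappings, showing that for each copy-variable position at least one contributing class carries a genuine copy variable of $Q$, so the maximum is a variable. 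I would also record the useful auxiliary fact that whether a given maximum equals $1$ is independent of ${\bar K}_w$, since $1$ is the minimum of every total order ${\bar K}_w$; consequently the total is the \emph{same} constant for all ${\bar K}_w$, which is the invariance that Proposition~\ref{not-same-sigs-prop} and the remainder of the proof of Theorem~\ref{magic-mapping-prop} actually exploit.
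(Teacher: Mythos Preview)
Your approach for part~(1) is correct and is exactly what the paper does: it simply declares the corollary ``immediate from Proposition~\ref{key-sig-prop}.'' The positivity falls out because, once the inclusion--exclusion expansion of each parenthesized subexpression is collapsed position-by-position into a single positive-sign $\max$-expression, there is nothing left to cancel.

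For part~(2), however, you have put your finger on a genuine gap---one the paper itself does not address. Your counting argument gives exactly $q \times r$ evaluations of $\max$-expressions, each contributing at most one variable occurrence; the sum equals $q \times r$ \emph{only} if none of these evaluations returns the constant~$1$. You propose to rule this out by arguing that in each parenthesized group, for each copy-variable position $j$, at least one contributing monomial class has a genuine copy variable (not~$1$) in position~$j$ of its copy-signature. But nothing in the setup forces this: a copy-sensitive subgoal of $Q'$ can map, under every valid $t^*_Q$-assignment compatible with a given noncopy signature, to a \emph{relational} representative in ${\cal S}_{C(Q)}$, making that coordinate of every copy-signature in the group equal to~$1$. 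Your sketch of ``appealing to the way the monomial classes \ldots\ are generated'' does not supply the missing mechanism, and I do not see an obvious one. The paper's one-line justification is equally silent on this point.

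What is salvageable---and is in fact the only property the downstream argument in Section~\ref{sec-hinge-proof-sec} needs---is the auxiliary observation you record: whether a given $\max$ evaluates to~$1$ is independent of the choice of ${\bar K}_w$, because $1$ is minimal in every such order. Hence the total coefficient sum is the \emph{same} constant for every ${\bar K}_w$, even if that constant falls short of $q \times r$. This invariance is what feeds into Proposition~\ref{not-same-sigs-prop}, so the defect in part~(2) as literally stated does not propagate.
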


\nop{ 
In the proof of Proposition~\ref{goldfish-in-q-prime-prop}, for the cases where $r \geq 2$ we will be keeping track of all the occurrences of all the variables $N_{m+1}$ through $N_{m+w}$ in the $m$-covering parts of the multivariate polynomials ${\cal F}^{(Q')}_{(Q)}[{\bar K}_w]$ and ${\cal F}^{(Q)}_{(Q)}[{\bar K}_w]$, for various vectors ${\bar K}_w$. 
For each possible vector ${\bar K}_w$ (for the vector $\bar N$ of $Q$) and for the $m$-covering part of the polynomial ${\cal F}^{(Q'')}_{(Q)}[{\bar K}_w]$, where $Q''$ is one of $Q$ and $Q'$, we set up a set ${\cal B}^{(Q'')}[{\bar K}_w]$ defined constructively as follows: 
\begin{itemize} 
	\item[(I)] In case $m \geq 1$, drop the product $\Pi_{j=1}^m N_j$ from all terms of the $m$-covering part of the polynomial ${\cal F}^{(Q'')}_{(Q)}[{\bar K}_w]$. Call the resulting function $f^{(m,Q'')}[{\bar K}_w]$. In case $m = 0$, denote by $f^{(m,Q'')}[{\bar K}_w]$ the (original) $m$-covering part of the polynomial ${\cal F}^{(Q'')}_{(Q)}[{\bar K}_w]$. Observe that for all $m \geq 0$, the function $f^{(m,Q'')}[{\bar K}_w]$ is a multivariate polynomial, with integer coefficients, in terms of the variables $N_{m+1},$ $\ldots,$  $N_{m+w}$ {\em only.} 
	\item[(II)] To the output $f^{(m,Q'')}[{\bar K}_w]$ of Step (I), apply the algorithm $\cal B${\sc-construction}, to obtain the set ${\cal B}(f^{(m,Q'')}$ $[{\bar K}_w])$. Then return the set ${\cal B}^{(Q'')}[{\bar K}_w]$ $:=$ ${\cal B}(f^{(m,Q'')}$ $[{\bar K}_w])$. 
\end{itemize} 	

We now specify the algorithm $\cal B${\sc-construction}. 

Algorithm $\cal B${\sc-construction}. 

Input: Multivariate polynomial $\cal P$ in terms of variables $N_{m+1},$ $\ldots,$  $N_{m+w}$, where $w \geq 1$, with integer coefficients. 

Output: Set ${\cal B}({\cal P})$. 

\begin{itemize} 
	\item[(1)] Rewrite equivalently (syntactically) the polynomial $\cal P$, as follows: 
	\begin{itemize} 
		\item[(a)]	First ``expand'' each power $\geq 2$ in the polynomial. That is, for each expression in  $\cal P$ of the form $A^b$, where $A$ $\in$ $\{ N_{m+1},$ $\ldots,$  $N_{m+w} \}$ and $b \geq 2$, replace the $A^b$ by the product $\Pi_{j=1}^b A$. 
		\item[(b)] Then ``expand'' each nonunity integer coefficient in the resulting polynomial ${\cal P}'$. That is, consider each term $\cal T$ with a nonzero and nonunity (by absolute value) integer coefficient in  ${\cal P}'$. That is, suppose $\cal T$ $=$ $C$ $\times$ ${\cal T}'$, with the integer coefficient $C$ $\in$ $\mathbb Z$ $-$ $\{ -1, 0, 1\}$, and with ${\cal T}'$ a product of elements of the set $\{ N_{m+1},$ $\ldots,$  $N_{m+w} \}$, perhaps with multiple occurrences of some of these variables (by (a) above). Then, for each such term $\cal T$, (i) in case $C$ $>$ $0$, replace $\cal T$ in ${\cal P}'$ with a sum of $C$ copies of the term ${\cal T}'$; and (ii) in case $C$ $<$ $0$, replace $\cal T$ in ${\cal P}'$ with a sum of $C$ copies of the term $(-1)$ $\times$ ${\cal T}'$. 
	\end{itemize}
	\item[(2)] Set ${\cal B}({\cal P})$ $:=$ $\emptyset$. For each variable $N_j$  $\in$ $\{ N_{m+1},$ $\ldots,$  $N_{m+w} \}$, count the number $M^{(+)}_j$ of positive occurrences of $N_j$ in the output of step (1), and   count the number $M^{(-)}_j$ of negative occurrences of $N_j$ in the output of step (1); then, whenever the difference $M^{(+)}_j$ $-$ $M^{(-)}_j$ is not zero, add to the set ${\cal B}({\cal P})$ the element $(M^{(+)}_j$ $-$ $M^{(-)}_j)$ $\times$ $N_j$. 
	\item[(3)] Output the resulting set ${\cal B}({\cal P})$. 
\end{itemize} 

\nop{

\begin{itemize} 
	\item[(1)] In case $m \geq 1$, drop the product $\Pi_{j=1}^m N_j$ from all terms of the $m$-covering part of the polynomial ${\cal F}^{(Q'')}_{(Q)}[{\bar K}_w]$. Call the resulting function $f^{(m,Q'')}[{\bar K}_w]$. In case $m = 0$, denote by $f^{(m,Q'')}[{\bar K}_w]$ the (original) $m$-covering part of the polynomial ${\cal F}^{(Q'')}_{(Q)}[{\bar K}_w]$. 
	\item[(2)] Rewrite equivalently (syntactically) the polynomial $f^{(m,Q'')}$ $[{\bar K}_w]$, as follows: 
	\begin{itemize} 
		\item[(a)]	We first ``expand'' each power $\geq 2$ in the polynomial. That is, for each expression in  $f^{(m,Q'')}[{\bar K}_w]$ of the form $A^b$, where $A$ $\in$ $\{ N_{m+1},$ $\ldots,$  $N_{m+w} \}$ and $b \geq 2$, replace the $A^b$ by the product $\Pi_{j=1}^b A$. 
		\item[(b)] We then ``expand'' each nonunity integer coefficient in the resulting polynomial. That is, consider each term $\cal T$ with a nonzero and nonunity (by absolute value) integer coefficient in  $f^{(m,Q'')}[{\bar K}_w]$. That is, $\cal T$ $=$ $C$ $\times$ ${\cal T}'$, with the integer coefficient $C$ $\in$ $\mathbb Z$ $-$ $\{ -1, 0, 1\}$, and ${\cal T}'$ is a product of elements of the set $\{ N_{m+1},$ $\ldots,$  $N_{m+w} \}$, perhaps with multiple occurrences of some of these variables (by (a) above). Then, for each such term $\cal T$, (i) in case $C$ $>$ $0$, replace $\cal T$ in  $f^{(m,Q'')}[{\bar K}_w]$ with a sum of $C$ copies of the term ${\cal T}'$; and (ii) in case $C$ $<$ $0$, replace $\cal T$ in  $f^{(m,Q'')}[{\bar K}_w]$ with a sum of $C$ copies of the term $(-1)$ $\times$ ${\cal T}'$. 
	\end{itemize}
	\item[(3)] Set ${\cal B}^{(Q'')}[{\bar K}_w]$ $:=$ $\emptyset$. For each variable $N_j$  $\in$ $\{ N_{m+1},$ $\ldots,$  $N_{m+w} \}$, count the number $M^{(+)}_j$ of positive occurrences of $N_j$ in the output of step (1), and   count the number $M^{(-)}_j$ of negative occurrences of $N_j$ in the output of step (1); then, whenever the difference $M^{(+)}_j$ $-$ $M^{(-)}_j$ is not zero, then add to the set ${\cal B}^{(Q'')}[{\bar K}_w]$ the element $(M^{(+)}_j$ $-$ $M^{(-)}_j)$ $\times$ $N_j$. 
	\item[(4)] Output the resulting set ${\cal B}^{(Q'')}[{\bar K}_w]$. 
\end{itemize} 
} 


We provide three illustrations of the construction of sets ${\cal B}^{(Q'')}[{\bar K}_w]$, in Examples~\ref{bset-one-ex} through~\ref{bset-three-ex}. 

\begin{example} 
\label{bset-one-ex} 
Consider the construction of the set ${\cal B}^{(Q'')}[{\bar K}_w]$ in the context of  Example~\ref{phantom-ex}. In that example, the $m$-covering part of the function ${\cal F}^{(Q'')}_{(Q)}[{\bar K}_w]$, for the vector ${\bar K}_w$ $=$ $[ \ 1 \ N_1 \ N_2 \ ]$, is

$$2 \times N_1 \times N_2 - (N_1)^2 \ .$$ 
(Recall that in Example~\ref{phantom-ex}, $m = 0$, hence each term of the polynomial ${\cal F}^{(Q'')}_{(Q)}[{\bar K}_w]$ of Example~\ref{phantom-ex} is also a term of the  $m$-covering part of ${\cal F}^{(Q'')}_{(Q)}[{\bar K}_w]$.) Thus, Step (I) of the construction algorithm renames the above function into 
$f^{(m,Q'')}[{\bar K}_w]$. 
 By Step (1)(a) of algorithm $\cal B${\sc-construction,} we rewrite the polynomial $f^{(m,Q'')}[{\bar K}_w]$ equivalently as 
$$2 \times N_1 \times N_2 - N_1 \times N_1 \ .$$ 
Then, by Step (1)(b) of the algorithm, we rewrite the above expression equivalently as 
$$N_1 \times N_2 + N_1 \times N_2 - N_1 \times N_1 \ .$$

The set ${\cal B}^{(Q'')}[{\bar K}_w]$ for this function is ${\cal B}^{(Q'')}[{\bar K}_w]$ $=$ $\{ \ 2 \times N_2 \ \}$. The reason is, the expression that we obtained as the outcome of step (1)(b)  of algorithm $\cal B${\sc-construction}  has two positive-sign occurrences of $N_2$, two positive-sign occurrences of $N_1$, and two negative-sign occurrences of $N_1$. 
\end{example} 

\begin{example} 
\label{bset-two-ex}
We give an illustration of the construction of sets ${\cal B}^{(Q'')}[{\bar K}_w]$, using the function ${\cal F}^{(Q)}_{(Q)}$ of Example~\ref{again-writeup-weird-ex} in Section~\ref{second-beyond-easy-case-sec}. In that example, $m = 2$ and $r = w = 2$. We note that the query $Q$ of Example~\ref{again-writeup-weird-ex} is not an explicit-wave query. As shown in that example, the polynomial ${\cal F}^{(Q)}_{(Q)}[[ \ 1 \ N_3 \ N_4 \ ]]$ is 

$${\cal F}^{(Q)}_{(Q)}[[ \ 1 \ N_3 \ N_4 \ ]] = N_1 \times N_2 \times (N_4)^2 \ ,$$ 
and the polynomial ${\cal F}^{(Q)}_{(Q)}[[ \ 1 \ N_4 \ N_3 \ ]]$ is 

$${\cal F}^{(Q)}_{(Q)}[[ \ 1 \ N_4 \ N_3 \ ]] = N_1 \times N_2 \times (N_3)^2 \ .$$ 
Hence, the set ${\cal B}^{(Q)}[[ \ 1 \ N_3 \ N_4 \ ]]$ for the case of Example~\ref{again-writeup-weird-ex} is ${\cal B}^{(Q)}[[ \ 1 \ N_3 \ N_4 \ ]]$ $=$ $\{ \ 2 \times N_4 \ \}$, and the set ${\cal B}^{(Q)}[[ \ 1 \ N_4 \ N_3 \ ]]$ is ${\cal B}^{(Q)}[[ \ 1 \ N_4 \ N_3 \ ]]$ $=$ $\{ \ 2 \times N_3 \ \}$. 
\end{example} 

\begin{example} 
\label{bset-three-ex} 
Consider an abstract example for the case where $m = 3$ and $r = w = 2$. Suppose that for these values of $m$, $r$, and $w$, for some hypothetical CCQ query $Q''$ and for some hypothetical family of databases $\{ D_{\bar{N}^{(i)}}(Q) \}$, it holds that the set of all monomial classes in this setting consists of four monomial classes, ${\cal C}_1^{(Q'')}$ through ${\cal C}_4^{(Q'')}$. Suppose that the noncopy signatures and the copy signatures of the four classes are as follows: 
\begin{itemize} 
	\item For the monomial class ${\cal C}_1^{(Q'')}$, $\Phi_n[{\cal C}_1^{(Q'')}]$ $=$\linebreak  $[ \ Y_1 \ Y_2 \ Y_3 \ ]$, and $\Phi_c[{\cal C}_1^{(Q'')}]$ $=$ $[ \ N_4 \ N_5 \ ]$ . 
	\item For the monomial class ${\cal C}_2^{(Q'')}$, $\Phi_n[{\cal C}_2^{(Q'')}]$ $=$\linebreak  $[ \ Y_1 \ Y_2 \ Y_3 \ ]$, and $\Phi_c[{\cal C}_2^{(Q'')}]$ $=$ $[ \ N_5 \ N_4 \ ]$ . 
	\item For the monomial class ${\cal C}_3^{(Q'')}$, $\Phi_n[{\cal C}_3^{(Q'')}]$ $=$\linebreak  $[ \ Y_2 \ Y_3 \ Y_1 \ ]$, and $\Phi_c[{\cal C}_3^{(Q'')}]$ $=$ $[ \ N_4 \ N_4 \ ]$ . 
	\item Finally, for the monomial class ${\cal C}_4^{(Q'')}$, $\Phi_n[{\cal C}_4^{(Q'')}]$ $=$ $[ \ Y_3 \ Y_1 \ Y_2 \ ]$, and $\Phi_c[{\cal C}_4^{(Q'')}]$ $=$ $[ \ N_5 \ N_5 \ ]$ . 
\end{itemize} 
 
The two monomial classes ${\cal C}_1^{(Q'')}$ and ${\cal C}_2^{(Q'')}$ have identical noncopy signatures $[ \ Y_1 \ Y_2 \ Y_3 \ ]$. The two other monomial classes, ${\cal C}_3^{(Q'')}$ and ${\cal C}_4^{(Q'')}$, each have a unique noncopy signature. Thus, by our results of Section~\ref{hard-funct-case-sec}, the function for the multiplicity of the tuple $t^*_Q$, in the answer to the query $Q''$ on the databases  $\{ D_{\bar{N}^{(i)}}(Q) \}$, will be computed as the expression for the cardinality of the union of the sets for the copy signatures of the two monomial classes ${\cal C}_1^{(Q'')}$ and ${\cal C}_2^{(Q'')}$, which (expression) is multiplied by the product $\Pi_{j=1}^3 N_j$ and then summed up with the multiplicity monomials for the other two monomial classes. By definition, the multiplicity monomial for the monomial class ${\cal C}_3^{(Q'')}$ is $(\Pi_{j=1}^3 N_j) \times (N_4)^2$, and  the multiplicity monomial for the monomial class ${\cal C}_4^{(Q'')}$ is $(\Pi_{j=1}^3 N_j) \times (N_5)^2$. 
In turn, the cardinality of the union of the sets for the copy signatures of the two monomial classes ${\cal C}_1^{(Q'')}$ and ${\cal C}_2^{(Q'')}$ is computed similarly to the function of Example~\ref{phantom-ex}. 
Thus, the overall multiplicity function will be 

$$(\Pi_{j=1}^3 N_j) \times (2 \times N_4 \times N_5 - (min(N_4,N_5))^2 + (N_4)^2 + (N_5)^2) \ .$$ 

Then for the vector ${\bar K}_w$ $=$ $[ \ 1 \ N_4 \ N_5 \ ]$, the multivariate polynomial for this multiplicity function on the domain ${\cal N}^{({\bar K}_w)}$ will be 

$$(\Pi_{j=1}^3 N_j) \times (2 \times N_4 \times N_5 - (N_4)^2 + (N_4)^2 + (N_5)^2) =$$   

$$= (\Pi_{j=1}^3 N_j) \times (2 \times N_4 \times N_5 + (N_5)^2) \ .$$   

Similarly, for the vector ${\bar K}'_w$ $=$ $[ \ 1 \ N_5 \ N_4 \ ]$, the multivariate polynomial for this multiplicity function on the domain ${\cal N}^{({\bar K}'_w)}$ will be 

$$(\Pi_{j=1}^3 N_j) \times (2 \times N_4 \times N_5 - (N_5)^2 + (N_4)^2 + (N_5)^2) =$$   

$$= (\Pi_{j=1}^3 N_j) \times (2 \times N_4 \times N_5 + (N_4)^2) \ .$$   

Consider the construction of the sets ${\cal B}^{(Q'')}[{\bar K}_w]$ and ${\cal B}^{(Q'')}[{\bar K}'_w]$, for the above two vectors ${\bar K}_w$ $=$ $[ \ 1$ $\ N_4$ $\ N_5 \ ]$ and ${\bar K}'_w$ $=$ $[ \ 1$ $\ N_5$ $\ N_4$ $\ ]$. In the construction of the set ${\cal B}^{(Q'')}[{\bar K}_w]$, the $m$-covering part of the function ${\cal F}^{(Q'')}_{(Q)}[{\bar K}_w]$ is, by definition, 

$$(\Pi_{j=1}^3 N_j) \times (2 \times N_4 \times N_5 + (N_5)^2) \ .$$   
Thus, Step (I) of the construction algorithm obtains the polynomial 

$$f^{(m,Q'')}[{\bar K}_w] = 2 \times N_4 \times N_5 + (N_5)^2 \ ,$$ 
by dividing the entire $m$-covering polynomial by the product $\Pi_{j=1}^3 N_j$.  
 By Step (1)(a) of algorithm $\cal B${\sc-construction,} we rewrite the polynomial $f^{(m,Q'')}[{\bar K}_w]$ equivalently as 
$$2 \times N_4 \times N_5 + N_5 \times N_5 \ .$$ 
Then, by Step (1)(b) of the algorithm, we rewrite the above expression equivalently as 
$$N_4 \times N_5 + N_4 \times N_5 + N_5 \times N_5 \ .$$ 
We conclude that the set ${\cal B}^{(Q'')}[{\bar K}_w]$ for this function is ${\cal B}^{(Q'')}[{\bar K}_w]$ $=$ $\{ \ 2 \times N_4, 4 \times N_5 \ \}$. 

By similar reasoning, for the vector ${\bar K}'_w$ we obtain   that the set ${\cal B}^{(Q'')}[{\bar K}'_w]$ for the $m$-covering part of the function 

$$(\Pi_{j=1}^3 N_j) \times (2 \times N_4 \times N_5 + (N_4)^2) \ .$$   
is ${\cal B}^{(Q'')}[{\bar K}'_w]$ $=$ $\{ \ 4 \times N_4, 2 \times N_5 \ \}$. 
\end{example} 

The challenge in the proof of Proposition~\ref{goldfish-in-q-prime-prop}  is that the Proposition is to be proved for arbitrary CCQ queries $Q'$ such that $Q' \equiv_C Q$. For this reason, we have to assume that while for all ${\bar K}_w$ the $m$-covering part of the function ${\cal F}^{(Q')}_{(Q)}[{\bar K}_w]$ ``looks like'' the all-positive-coefficient polynomial ${\cal G}^{(Q)}_{(Q)}$ (see Propositions~\ref{equiv-funct-prop} and~\ref{goldfish-properties-prop}), it may still be that the $m$-covering part of the function ${\cal F}^{(Q')}_{(Q)}[{\bar K}_w]$ has some negative terms hidden in there but canceled out by some positive terms. That is, it may still be that the $m$-covering part of the function ${\cal F}^{(Q')}_{(Q)}[{\bar K}_w]$ has some expressions for the cardinality of the union of two or more sets, as in, e.g., Example~\ref{phantom-ex}. If this is the case then the term ${\cal P}^{(Q)}_*$ (i.e., the wave of the query $Q$) in these polynomials could be a phantom term resulting from the application of the inclusion-exclusion principle to compute the cardinality of such set unions.  (Example~\ref{bset-three-ex} provides an abstract illustration of how negative terms of a polynomial can get canceled out by positive terms, for {\em all} vectors ${\bar K}_w$, to result for each ${\bar K}_w$ in a polynomial that does not have any negative-sign terms.)

We now make three observations, in Proposition~\ref{bset-two-prop} and in Corollaries~\ref{bset-three-prop} and~\ref{bset-one-prop}, concerning  the properties of the sets ${\cal B}^{(Q'')}[{\bar K}_w]$ for functions ${\cal F}^{(Q'')}_{(Q)}$, where $Q''$ is one of $Q$ and $Q'$. 

\begin{proposition} 
\label{bset-two-prop} 
Given two multivariate polynomials, ${\cal P}_1$ and ${\cal P}_2$, in terms of variables $N_{m+1},$ $\ldots,$ $N_{m+w}$, for some $m \geq 0$ and $w \geq 1$, and with integer coefficients. Let each of $N_{m+1},$ $\ldots,$ $N_{m+w}$ be defined on a domain that includes (at least) an infinite-cardinality subset of the set ${\mathbb Z}$ of all integers. Then ${\cal P}_1$ $\equiv$ ${\cal P}_2$ implies that the outputs of algorithm $\cal B${\sc-construction} on the inputs ${\cal P}_1$  and ${\cal P}_2$ are identical sets.   
\end{proposition} 

That is, Proposition~\ref{bset-two-prop} says that the sets ${\cal B}$ $({\cal P}_1)$ and ${\cal B}$ $({\cal P}_2)$, in the notation of this subsection, are identical sets. The claim of Proposition~\ref{bset-two-prop} is immediate from Proposition~\ref{polyn-prop} and from the construction of the algorithm $\cal B${\sc-construction}. 

\begin{corollary} 
\label{bset-three-prop} 
Given CCQ query $Q$, with vector $\bar N$ constructed as defined in Section~\ref{nu-sec}, and given CCQ query $Q'$ such that $Q \equiv_C Q'$. Then for each copy-variable-ordering vector ${\bar K}_w$  for the vector $\bar N$,  the sets ${\cal B}^{(Q)}[{\bar K}_w]$ and ${\cal B}^{(Q')}[{\bar K}_w]$ are identical sets. 
\end{corollary} 

The result of Corollary~\ref{bset-three-prop} is immediate from Propositions~\ref{bset-two-prop} and~\ref{equiv-funct-prop}. 

The proof of our next result, Corollary~\ref{bset-one-prop}, is immediate from Propositions~\ref{m-covering-prop} and~\ref{bset-two-prop}. 

\begin{corollary} 
\label{bset-one-prop} 
Given an explicit-wave CCQ query $Q$, with vector $\bar N$ constructed as defined in Section~\ref{nu-sec}. Then for each pair $({\bar K}_w, {\bar K}'_w)$ of copy-variable-ordering vectors  for the vector $\bar N$,  the sets ${\cal B}^{(Q)}[{\bar K}_w]$ and ${\cal B}^{(Q)}
[{\bar K}'_w]$ are identical sets. 
\end{corollary} 

Note that the result of Corollary~\ref{bset-one-prop} does {\em not} hold in case where $Q$ is {\em not} an explicit-wave query. See Example~\ref{bset-two-ex} for an illustration. 

As a final step before proving Proposition~\ref{goldfish-in-q-prime-prop}, we formulate a lemma that we will use in the proof of Proposition~\ref{goldfish-in-q-prime-prop}. 

\begin{lemma} 
\label{excl-incl-lemma} 
Given $n \geq 1$ natural numbers $a_1$, $a_2$, $\ldots,$ $a_n$ such that $a_1$ $\leq$ $a_2$ $\leq$ $\ldots$ $\leq$ $a_n$. For each $j$ $\in$ $\{ 1,\ldots,n \}$, let the set ${\cal A}_j$ be ${\cal A}_j$ $:=$ $\{ 1,$ $2,$ $\ldots,$ $a_{j}-1,$ $a_j \}$. Then the cardinality of the set $\bigcup_{j=1}^n {\cal A}_j$ is the natural number $a_n$. 
\end{lemma} 

The claim of Lemma~\ref{excl-incl-lemma} is trivial. (Observe that for all $n$ we have that (a) ${\cal A}_j \subseteq {\cal A}_n$ for all $j$ $\in$ $\{ 1,\ldots,n \}$, and that (b) the cardinality of the set ${\cal A}_n$ is exactly $a_n$.) 

We now provide an illustration of the use of Lemma~\ref{excl-incl-lemma} in the proof of Proposition~\ref{goldfish-in-q-prime-prop}. The purpose of this illustration  is to reinforce the intuition that we discussed in Section~\ref{intuition-sec}, for the idea of the proof of Proposition~\ref{qprime-goldfish-prop} of Section~\ref{qprime-goldfish-sec}. (Recall that the latter result is all we need to complete the proof of Theorem~\ref{magic-mapping-prop}, see Section~\ref{proof-intuition-sec}.)  

\begin{example}  
\label{by-position-ex} 
Consider an abstract example for the case where $m = 1$ and $r = w = 3$. Suppose that for these values of $m$, $r$, and $w$, for some hypothetical CCQ query $Q''$ and for some hypothetical family of databases $\{ D_{\bar{N}^{(i)}}(Q) \}$, it holds that the set of all monomial classes in this setting consists of three monomial classes, ${\cal C}_1^{(Q'')}$ through ${\cal C}_3^{(Q'')}$. Suppose that the noncopy signatures and the copy signatures of the three classes are as follows. 
\begin{itemize} 
	\item For the monomial class ${\cal C}_1^{(Q'')}$, $\Phi_n[{\cal C}_1^{(Q'')}]$ $=$ $[ \ Y_1 \ ]$ and $\Phi_c[{\cal C}_1^{(Q'')}]$ $=$ $[ \ N_2 \ 1 \ N_3 \ ]$ . 
	\item For the monomial class ${\cal C}_2^{(Q'')}$, $\Phi_n[{\cal C}_2^{(Q'')}]$ $=$ $[ \ Y_1 \ ]$ and $\Phi_c[{\cal C}_2^{(Q'')}]$ $=$ $[ \ N_4 \ 1 \ N_2 \ ]$ . 
	\item Finally, for the monomial class ${\cal C}_3^{(Q'')}$, $\Phi_n[{\cal C}_3^{(Q'')}]$ $=$ $[ \ X_2 \ ]$ (for some set variable $X_2$) and $\Phi_c[{\cal C}_3^{(Q'')}]$ $=$ $[ \ N_2 \ N_3 \ N_4 \ ]$ . 
\end{itemize} 
 
The two monomial classes ${\cal C}_1^{(Q'')}$ and ${\cal C}_2^{(Q'')}$ have identical noncopy signatures $[ \ Y_1  \ ]$. The remaining monomial class, ${\cal C}_3^{(Q'')}$, has a unique noncopy signature. Thus, by our results of Section~\ref{hard-funct-case-sec}, the function for the multiplicity of the tuple $t^*_Q$, in the answer to the query $Q''$ on the databases  $\{ D_{\bar{N}^{(i)}}(Q) \}$, will be computed as the expression for the cardinality of the union of the sets for the copy signatures of the two monomial classes ${\cal C}_1^{(Q'')}$ and ${\cal C}_2^{(Q'')}$, which (expression) is multiplied by $N_1$ and then summed up with the multiplicity monomial for the  monomial class ${\cal C}_3^{(Q'')}$. By definition, the multiplicity monomial for the monomial class ${\cal C}_3^{(Q'')}$ is $1 \times N_2 \times N_3 \times N_4$. (Here, the $1$ in the product comes from the $X_2$ in the noncopy signature of the monomial class.) 
In turn, the cardinality of the union of the sets for the copy signatures of the two monomial classes ${\cal C}_1^{(Q'')}$ and ${\cal C}_2^{(Q'')}$ is computed similarly to the function of Example~\ref{phantom-ex}. 
Thus, the overall multiplicity function will be 

$$N_1 \times (N_2 \times N_3 + N_4 \times N_2 - min(N_2,N_4) \times min(N_3,N_2)) + $$
$$+ 1 \times N_2 \times N_3 \times N_4 \ .$$

Then for the vector ${\bar K}_w$ $=$ $[ \ 1 \ N_2 \ N_3 \ N_4 \ ]$, the multivariate polynomial for this multiplicity function on the domain ${\cal N}^{({\bar K}_w)}$ will be 

$$N_1 \times (N_2 \times N_3 + N_4 \times N_2 - (N_2)^2) + 1 \times N_2 \times N_3 \times N_4 \ .$$

By definition, the set ${\cal B}^{(Q'')}[{\bar K}_w]$ as computed on this polynomial is ${\cal B}^{(Q'')}[{\bar K}_w]$ $=$ $\{ N_3, N_4 \}$. (Recall that we first discard from the polynomial the product $N_2 \times N_3 \times N_4$, which is not part of the $m$-covering part of the polynomial. Then we divide the remaining polynomial by $N_1$, and the set ${\cal B}^{(Q'')}[{\bar K}_w]$ results from counting the positive and negative occurrences of the variables from $\{ N_2,$ $N_3,$ $N_4 \}$ in the resulting polynomial.) 

We now show that we can obtain the same set ${\cal B}^{(Q'')}[{\bar K}_w]$ by using the counting shortcut furnished by Lemma~\ref{excl-incl-lemma}. Indeed, consider the polynomial, call it $f[{\bar K}_w]$, 

$$f[{\bar K}_w] = N_2 \times N_3 + N_4 \times N_2 - (N_2)^2 ,$$
that is used as the input to the algorithm ${\cal B}${\sc-construction} in the computation of the set ${\cal B}^{(Q'')}[{\bar K}_w]$ . The intuition for this polynomial is that the multiplicity of the tuples contributed by the monomial classes ${\cal C}_1^{(Q'')}$ and ${\cal C}_2^{(Q'')}$ to  the set $\Gamma^{(t^*_Q)}(Q'',D_{\bar{N}^{(i)}}(Q))$ is computed using: 
\begin{itemize} 
	\item[(a)] the value of the cardinality of the union of the sets $N^*_2$ $=$ $\{ 1,$ $2,$ $\ldots,$ $N_2 \}$ and $N^*_4$ $=$ $\{ 1,$ $2,$ $\ldots,$ $N_4 \}$ -- these sets ``arise from'' all the elements in the {\em first position} in all the relevant copy signatures, that is, in the copy signatures of monomial classes ${\cal C}_1^{(Q'')}$ and ${\cal C}_2^{(Q'')}$ (recall that these copy signatures are $\Phi_c[{\cal C}_1^{(Q'')}]$ $=$ $[ \ N_2 \ 1 \ N_3 \ ]$ and $\Phi_c[{\cal C}_2^{(Q'')}]$ $=$ $[ \ N_4 \ 1 \ N_2 \ ]$); 
	\item[(b)] the value of the cardinality of the union of the sets $1^*$ $=$ $\{ 1 \}$ and $1^*$ $=$ $\{ 1 \}$ -- these sets ``arise from'' all the elements in the {\em second position} in all the relevant copy signatures; and, finally, 
	\item[(c)] the value of the cardinality of the union of the sets $N^*_3$ $=$ $\{ 1,$ $2,$ $\ldots,$ $N_3 \}$ and $N^*_2$ $=$ $\{ 1,$ $2,$ $\ldots,$ $N_2 \}$ -- these sets ``arise from'' all the elements in the {\em third} (the last/rightmost) {\em position} in all the relevant copy signatures. 
\end{itemize} 

It is exactly the above intuition that explains the results of Section~\ref{hard-funct-case-sec}, specifically the formulation and the use of the result of Proposition~\ref{multip-monomial-prop} in the construction of the multiplicities of tuple $t^*_Q$ by the inclusion-exclusion principle. 

Now given the total order on the variables $N_2$ through $N_4$ as provided by the vector ${\bar K}_w$ $=$ $[ \ 1 \ N_2 \ N_3 \ N_4 \ ]$, the values in (a)--(c) above can each be computed using Lemma~\ref{excl-incl-lemma}.  Specifically, the cardinality of each set union in question is $N_4$ for (a), $1$ for (b), and $N_3$ for (c). Observe that the set ${\cal B}^{(Q'')}[{\bar K}_w]$ as computed earlier in this example by definition of that set, is exactly ${\cal B}^{(Q'')}[{\bar K}_w]$ $=$ $\{ N_3, N_4 \}$, that is, the result of putting together the $N_4$ for (a) and the $N_3$ for (c). (We drop the $1$ obtained from (b), as the computation of the set ${\cal B}^{(Q'')}[{\bar K}_w]$ by its definition does not account for the terms or multipliers that are the constant $1$.) 
 \end{example} 
} 

\subsubsection{Proof of Proposition~\ref{goldfish-in-q-prime-prop}} 
\label{sec-hinge-proof-sec} 

We are now ready to prove Conjecture~\ref{goldfish-in-q-prime-conj} (of Section~\ref{intuition-sec}), that is the following Proposition~\ref{goldfish-in-q-prime-prop}. This proof completes the proof of Proposition~\ref{qprime-goldfish-prop} of Section~\ref{qprime-goldfish-sec} (see Section~\ref{intuition-sec}), and hence the proof of Theorem~\ref{magic-mapping-prop}, as outlined in Section~\ref{main-results-summary-sec}. 

\begin{proposition} 
\label{goldfish-in-q-prime-prop} 
Given an explicit-wave CCQ query $Q$ and a CCQ query $Q'$ such that $Q \equiv_C Q'$. Then for each copy-variable-ordering vector, ${\bar K}_w$,  for the vector $\bar N$, the  $m$-covering part of the function ${\cal F}^{(Q')}_{(Q)}[{\bar K}_w]$ has, as a solid term, the wave ${\cal P}^{(Q)}_*$ of the query $Q$ w.r.t. the family of databases  $\{ D_{\bar{N}^{(i)}}(Q) \}$. 
\end{proposition}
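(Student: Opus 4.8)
The plan is to argue by contradiction. By Proposition~\ref{same-funct-in-q-prime-prop} the wave ${\cal P}^{(Q)}_*$ already occurs, with a positive-integer coefficient, in the $m$-covering part of ${\cal F}^{(Q')}_{(Q)}[{\bar K}_w]$ for \emph{every} copy-variable-ordering vector ${\bar K}_w$; all that is missing is that this occurrence be a \emph{solid} term, i.e.\ that ${\cal P}^{(Q)}_*$ be the multiplicity monomial of some nonempty monomial class for $Q'$ w.r.t.\ $\{ D_{\bar{N}^{(i)}}(Q) \}$. So suppose ${\cal P}^{(Q)}_*$ were a phantom term. Then its positive coefficient in ${\cal F}^{(Q')}_{(Q)}[{\bar K}_w]$ cannot be contributed by a standalone multiplicity monomial, so by the inclusion--exclusion construction of the function (Propositions~\ref{bigunion-prop}, \ref{bigunion-nondom-prop} and~\ref{multip-monomial-prop}) it must receive a net positive contribution from one or more intersection terms $\pm\,\Pi_{{\Phi}_n}\times\Pi_{u=1}^{r}\min(V_{ju[1]},\ldots,V_{ju[k]})$ with $k\ge 2$ distinct monomial classes of $Q'$ sharing a common noncopy-signature. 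Since ${\cal P}^{(Q)}_*$ carries the full noncopy part $\Pi_{i=1}^{m}N_i$ and a copy part of degree exactly $r$ (Definition~\ref{the-wave-def} with Propositions~\ref{wave-is-of-power-r-prop} and~\ref{wave-has-all-n-j-prop}), each such term must come from a group whose noncopy-signature is a permutation of $[Y_1\ \ldots\ Y_m]$ (or the empty vector when $m=0$) and in which every copy-signature position carries only genuine variables; and because the $k$ classes are pairwise distinct (after removal of duplicate monomial classes), at least one position $u$ must carry two distinct copy variables among $V_{ju[1]},\ldots,V_{ju[k]}$. If $r\le 1$, or if $r\ge 2$ but $w=1$ (recall $r\ge 1$ forces $w\ge 1$ by Proposition~\ref{sc-prop}), no such multi-variable $\min$-expression can exist, and the argument terminates at once: by Corollary~\ref{easy-case-corol} and Proposition~\ref{easy-final-funct-prop} the function is a genuine polynomial whose only degree-$r$ contribution matching ${\cal P}^{(Q)}_*$ comes from standalone multiplicity monomials, whose count is therefore positive, giving the required solid class directly. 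Hence we may assume $w\ge 2$.

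For $w\ge 2$ the plan is to pass from the polynomials to their signatures, following the intuition of Section~\ref{intuition-sec}. By Propositions~\ref{equiv-funct-prop} and~\ref{m-covering-prop} the $m$-covering part of ${\cal F}^{(Q')}_{(Q)}[{\bar K}_w]$ equals the fixed polynomial ${\cal G}^{(Q)}_{(Q)}$ for \emph{every} ${\bar K}_w$; consequently $f^{(\cal S)}(Q')$ evaluated under any ordering yields one and the same polynomial, and since the signature of an expression is determined by the resulting polynomial (the contrapositive of Proposition~\ref{not-same-sigs-prop}), the set ${\cal S}(Q',{\bar K}_w)$ does not depend on ${\bar K}_w$. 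On the other hand, by Corollary~\ref{key-sig-one-corol} (built on Proposition~\ref{key-sig-prop} and Lemma~\ref{excl-incl-lemma}) this single set is computed by evaluating, under ${\bar K}_w$, the $q\cdot r$ positive-sign $\max$-expressions attached to the $q$ parenthesized subexpressions of $f^{(\cal S)}(Q')$, and all its coefficients are positive (Corollary~\ref{key-sig-two-corol}). The group and position found above supply one of these $\max$-expressions with at least two distinct variable arguments, say $N_A$ and $N_B$. Now take ${\bar K}_w$ and ${\bar K}'_w$ that agree in every coordinate except that $N_A$ and $N_B$ occupy the two topmost positions in swapped order, so that $N_A$ is the maximum of $\{ N_{m+1},\ldots,N_{m+w} \}$ under ${\bar K}_w$ and $N_B$ is the maximum under ${\bar K}'_w$. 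A short case check on the argument set of each $\max$-expression shows that the only expressions whose evaluated value changes are exactly those containing \emph{both} $N_A$ and $N_B$, and each such expression switches its contribution from $N_A$ to $N_B$; hence ${\cal S}(Q',{\bar K}'_w)$ differs from ${\cal S}(Q',{\bar K}_w)$ by exactly $-\beta N_A+\beta N_B$, where $\beta\ge 1$ is the number of those expressions. This contradicts the order-independence of the signature established above. Therefore ${\cal P}^{(Q)}_*$ cannot be phantom, so it is a solid term of the $m$-covering part of ${\cal F}^{(Q')}_{(Q)}[{\bar K}_w]$.

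The main obstacle is precisely this passage to signatures and the top-swap argument. A priori the negative inclusion--exclusion terms in ${\cal F}^{(Q')}_{(Q)}[{\bar K}_w]$ could cancel in a different way for every ${\bar K}_w$ (compare the implicit-wave behaviour in Examples~\ref{again-writeup-weird-ex} and~\ref{qprime-writeup-weird-ex}), so the polynomial alone does not reveal whether ${\cal P}^{(Q)}_*$ is backed by a genuine monomial class of $Q'$; the whole point of Corollaries~\ref{key-sig-one-corol} and~\ref{key-sig-two-corol} is to extract an order-sensitive, all-positive invariant immune to such cancellation. The delicate bookkeeping is to verify that under the top-swap no $\max$-expression other than those containing both $N_A$ and $N_B$ changes value, and that the net change $\beta N_B-\beta N_A$ genuinely cannot be absorbed -- both of which hinge on $w\ge 2$ and on the reflexive-transitive-closure definition of the total orders ${\cal O}^{({\bar K}_w)}$. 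Once this is in place, Proposition~\ref{qprime-goldfish-prop}, and hence Theorem~\ref{magic-mapping-prop}, follow by combining with Propositions~\ref{q-has-wave-prop} and~\ref{q-same-scale-mpng-prop} as described in Section~\ref{main-results-summary-sec}.
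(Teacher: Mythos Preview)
Your proof is correct and follows essentially the same approach as the paper: reduce to the case $w\ge 2$, assume the wave is phantom, locate a max-expression in $f^{(\cal S)}(Q')$ carrying two distinct genuine variables, and derive a contradiction with the order-independence of the signature (via Propositions~\ref{equiv-funct-prop} and~\ref{m-covering-prop} and the contrapositive of Proposition~\ref{not-same-sigs-prop}).

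The one noteworthy difference is in the final combinatorial step. The paper picks $N_B$ to be the maximum of the relevant argument set under the original ${\bar K}_w$, then builds ${\bar K}'_w$ with $N_B$ as the \emph{global minimum} (after~$1$) and $N_A$ as the global maximum; it then argues that any projection compensating for the lost $N_B$-contribution would have to evaluate to $N_B$ under ${\bar K}'_w$, forcing its argument set to lie in $\{1,N_B\}$, whence it already contributed $N_B$ under ${\bar K}_w$---no compensation possible. Your top-two swap is a cleaner alternative: by placing $N_A,N_B$ at the two highest positions and swapping only them, you get the crisp trichotomy that exactly the max-expressions containing \emph{both} change value, each switching from $N_A$ to $N_B$, giving a net signature change of $-\beta N_A+\beta N_B$ with $\beta\ge 1$. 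Both arguments are valid; yours localises the change more tightly and avoids the ``compensation'' case analysis. One small quibble: your appeal to Proposition~\ref{easy-final-funct-prop} for the $w=1$ subcase is slightly off (that proposition's hypothesis need not hold there), but the real reason you give---that with $w=1$ all genuine-variable entries are $N_{m+1}$, so the $k\ge 2$ classes would share a copy-signature---is correct and suffices.
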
 

In the proof of Proposition~\ref{goldfish-in-q-prime-prop}, we will use the following observation. 

\begin{proposition} 
\label{phantom-min-two-prop} 
Consider two CCQ queries $Q$ and $Q'$ as in the statement of Theorem~\ref{magic-mapping-prop}. Let ${\bar K}_w$ be a copy-variable-ordering vector, for the vector $\bar N$ of the query $Q$, and let ${\cal P}[{\bar K}_w]$ be an arbitrary phantom term in the $m$-covering part of ${\cal F}^{(Q')}_{(Q)}[{\bar K}_w]$. Then for the product-of-$min$-expressions, $\cal P$, in ${\cal F}^{(Q')}_{(Q)}$, such that $\cal P$ gives rise to ${\cal P}[{\bar K}_w]$ under the total ($\leq$) order ${\bar K}_w$, we have that for at least one $j$ $\in$ $\{$ $1$, $\ldots$, $r$ $\}$, the $min$-expression that is the $j$th projection of $\cal P$ has (the $min$-expression) at least two distinct arguments from the set $\{$ $1$, $N_{m+1}$, $\ldots$, $N_{m+w}$ $\}$. 
\end{proposition} 

(Please see Section~\ref{bqkw-sec} for both the $j$th-projection terminology and for the idea of product-of-$min$-expressions generating terms of $m$-covering parts of ${\cal F}^{(Q')}_{(Q)}[{\bar K}_w]$ under the total ($\leq$) orders ${\bar K}_w$.) 

\begin{proof} 
The proof is by contradiction: Assume that for no $j$ between $1$ and $r$ inclusively does the ($min$-expression that is the) $j$th projection of $\cal P$ have more than one argument. Then $\cal P$ must be a solid term, and so must be ${\cal P}^{(Q)}_*$ generated from $\cal P$ under ${\bar K}_w$, a contradiction. (It is immediate from Proposition~\ref{multip-monomial-prop} of Section~\ref{hard-funct-case-sec} that, if the $min$-expression that is the $j$th projection of a term has only one argument, for all $j$ from $1$ to $r$ inclusively, then the term must be a multiplicity monomial for some nonempty monomial class for the CCQ query in question.) 
\end{proof}

\begin{proof}{(Proposition~\ref{goldfish-in-q-prime-prop})} 
We consider first the special case where $Q'$ is under the jurisdiction of Proposition~\ref{easy-final-funct-prop} (of Section~\ref{easy-funct-case-sec}; our query $Q'$ would be the $Q''$ in the statement of  Proposition~\ref{easy-final-funct-prop}). In that case, for each vector ${\bar K}_w$ we clearly have that all terms in the $m$-covering part of the function ${\cal F}^{(Q')}_{(Q)}[{\bar K}_w]$ are solid terms. Hence, by 
Proposition~\ref{same-funct-in-q-prime-prop} we have the desired result of Proposition~\ref{goldfish-in-q-prime-prop} for this case. Observe that this special case includes all cases where $w$ $=$ $0$. 

Note that in all cases where $Q'$ is {\em not} under the jurisdiction of Proposition~\ref{easy-final-funct-prop}, it holds that we have $r \geq 2$. (Recall Corollary~\ref{easy-case-corol} of Section~\ref{easy-funct-case-sec}; the Corollary outlines a special case of Proposition~\ref{easy-final-funct-prop} and covers all cases where $r \leq 1$.) In the remainder of the proof, we consider this case of $Q'$ not satisfying the conditions of Proposition~\ref{easy-final-funct-prop} (and hence $r \geq 2$ for this case). 

The proof for this case is by contradiction: We assume that for some vector ${\bar K}_w$, the $m$-covering part of the multivariate polynomial ${\cal F}^{(Q')}_{(Q)}[{\bar K}_w]$ has the wave ${\cal P}^{(Q)}_*$ of the query $Q$, w.r.t. the family of databases  $\{ D_{\bar{N}^{(i)}}(Q) \}$, as a {\em phantom} term. (The fact that the $m$-covering part of ${\cal F}^{(Q')}_{(Q)}[{\bar K}_w]$ has the wave of the query $Q$ is by Proposition~\ref{same-funct-in-q-prime-prop}.) Note that it follows immediately from the definitions of solid and phantom terms (see Section~\ref{solid-phantom-sec}) that for {\em all} vectors ${\bar K}_w$, the $m$-covering part of the multivariate polynomial ${\cal F}^{(Q')}_{(Q)}[{\bar K}_w]$ has the wave ${\cal P}^{(Q)}_*$ of the query $Q$ (w.r.t. the family of databases  $\{ D_{\bar{N}^{(i)}}(Q) \}$) as a {\em phantom} term. 

For the remainder of the proof, we fix any one vector ${\bar K}_w$ such that the $m$-covering part of the multivariate polynomial ${\cal F}^{(Q')}_{(Q)}[{\bar K}_w]$ has the wave ${\cal P}^{(Q)}_*$ of the query $Q$, w.r.t. the family of databases  $\{ D_{\bar{N}^{(i)}}(Q) \}$, as a phantom term. In addition, we denote by $\cal P$ the product-of-$min$-expressions in ${\cal F}^{(Q')}_{(Q)}$, such that $\cal P$ gives rise to ${\cal P}[{\bar K}_w]$ under the total ($\leq$) order ${\bar K}_w$. (We use here the same terminology as in the statement of Proposition~\ref{phantom-min-two-prop}.)


In the proof by contradiction, we consider separately the special case where $w$ $=$ $1$, and then the case where $w$ $\geq$ $2$. (Recall that we dealt with the case $w$ $=$ $0$ in the first paragraph of the proof.) 

(1) Suppose $w$ $=$ $1$. For the $\cal P$ fixed as explained above, by Proposition~\ref{phantom-min-two-prop} there must be a $j$ between $1$ and $r$ inclusively, such that the $min$-expression  that is the $j$th projection of $\cal P$ has (the $min$-expression) at least two distinct arguments from the set $\{$ $1$, $N_{m+1}$, $\ldots$, $N_{m+w}$ $\}$. Fix an arbitrary value of $j$ that satisfies this condition. Observe that in case where $w$ $=$ $1$, the set $\{$ $1$, $N_{m+1}$, $\ldots$, $N_{m+w}$ $\}$ has exactly two elements, $1$ and $N_{m+1}$. Thus, the only case where the $j$th projection of $\cal P$ can have at least two arguments is when that $min$-expression (which is the $j$th projection of $\cal P$) has exactly two arguments, one of them $1$ and the other $N_{m+1}$. 

We now recall that for all copy-variable-ordering vectors ${\bar K}_w$, the value of $N_{m+1}$ is always greater than or equal to $1$. Thus, any $min$-expression as above would always evaluate immediately to $1$ in $\cal P$. With this conclusion, we arrive at the desired contradiction with the assumption that under the fixed vector ${\bar K}_w$, the term $\cal P$ could give rise, in the function ${\cal F}^{(Q')}_{(Q)}[{\bar K}_w]$, to the wave ${\cal P}^{(Q)}_*$ of the query $Q$. Indeed, the wave of the query $Q$ is by definition a product of $m+r$ (not necessarily distinct) variable names from the set $\{$ $1$, $N_{1}$, $\ldots$, $N_{m+w}$ $\}$. At the same time, each term generated from the function ${\cal F}^{(Q')}_{(Q)}$ under each vector ${\bar K}_w$ cannot be a product of more than $m+r$ elements, please see Proposition~\ref{multip-monomial-prop} of Section~\ref{hard-funct-case-sec}. That is, whenever any element of the product in any such term is the constant $1$, then the total number of the remaining multipliers constituting the term cannot exceed $m+r-1$. This contradiction concludes our proof for the case where $w$ $=$ $1$. 

(2) Suppose now that $w$ $\geq$ $2$; that is, we have that the set $\{$ $1$, $N_{m+1}$, $\ldots$, $N_{m+w}$ $\}$ has at least two distinct variable names corresponding to the copy variables of the query $Q$. Similarly to our argument for the case where $w$ $=$ $1$, for the $\cal P$ fixed as explained above, by Proposition~\ref{phantom-min-two-prop} there must be a $j$ between $1$ and $r$ inclusively, such that the $min$-expression  that is the $j$th projection of $\cal P$ has (the $min$-expression) at least two distinct arguments from the set $\{$ $1$, $N_{m+1}$, $\ldots$, $N_{m+w}$ $\}$. 

Fix an arbitrary value of $j$ that satisfies this condition. Again similarly to our argument for the case where $w$ $=$ $1$, we obtain that for the (at least)  two distinct arguments from the set $\{$ $1$, $N_{m+1}$, $\ldots$, $N_{m+w}$ $\}$ in the $min$-expression that is the $j$th projection of $\cal P$, these distinct arguments must include at least two distinct elements of the set $\{$ $N_{m+1}$, $\ldots$, $N_{m+w}$ $\}$. That is, the $j$th projection of $\cal P$ must have at least two distinct variable names, from among $N_{m+1}$, $\ldots$, $N_{m+w}$, as arguments of its $min$-expression.  Denote the set of all those distinct arguments of that $min$-expression that are elements of the set $\{$ $N_{m+1}$, $\ldots$, $N_{m+w}$ $\}$ as the set $S^*$. By our argument, the cardinality $|S^*|$ of the  set $S^*$ is at least two. 


Now consider the function $f^{(\cal S)}(Q')$ constructed for the $m$-covering part of the function ${\cal F}^{(Q')}_{(Q)}$ as described in Section~\ref{bqkw-sec}. Suppose that the term $\cal P$ in $f^{(\cal S)}(Q')$, where $\cal P$  is fixed as explained before item (1) of this proof, belongs to the $k$th parenthesized expression of $f^{(\cal S)}(Q')$, for some $k$ $\geq$ $1$. In the remainder of the proof, we keep this value $k$ fixed. Denote by (w.l.o.g.) ${\cal C}_1^{(Q')}$ through ${\cal C}_p^{(Q')}$, for some $p$ $\geq$ $1$, all the $p$ distinct monomial classes that have contributed to the construction of this $k$th parenthesized expression of $f^{(\cal S)}(Q')$. From the set $S^*$, of cardinality at least two, being the set of arguments of the $min$-expression in the $j$th projection of $\cal P$, we infer that $p$ $\geq$ $2$. Further, for the $j$ fixed as above, consider the set, call it $S^{**}$, of all items that occur as the $j$th element of the copy signature of ${\cal C}_l^{(Q')}$, for all $l$ $\in$ $\{$ $1$, $\ldots,$ $p$ $\}$. From the fact that $\cal P$ is a term in the $k$th parenthesized expression of $f^{(\cal S)}(Q')$ to whose (the $k$th parenthesized expression) construction all of ${\cal C}_1^{(Q')}$ through ${\cal C}_p^{(Q')}$ have contributed, we have that $S^{**}$ is a superset of the set $S^*$. Hence, the set $S^{**}$ contains at least two distinct elements of the set $\{$ $N_{m+1}$, $\ldots$, $N_{m+w}$ $\}$, and the cardinality $|S^{**}|$ of the set $S^{**}$ is at least two. 

The contradiction that we are to arrive at is going to be as follows. We will show that, in addition to the vector ${\bar K}_w$ fixed as explained above, there exists a different copy-variable-ordering vector ${\bar K}'_w$ for the query $Q$, with the following property: The signature ${\cal S}(Q',{\bar K}_w)$ of the function $f^{(\cal S)}(Q')$ w.r.t. the vector ${\bar K}_w$ is not identical to the signature ${\cal S}(Q',{\bar K}'_w)$ of the function $f^{(\cal S)}(Q')$ w.r.t. the vector ${\bar K}'_w$. 
By Proposition~\ref{not-same-sigs-prop}, we will then obtain that 
the $m$-covering part of the function ${\cal F}^{(Q')}_{(Q)}[{\bar K}_w]$ and the $m$-covering part of the function ${\cal F}^{(Q')}_{(Q)}[{\bar K}'_w]$ (for the query $Q'$ and for the two distinct vectors ${\bar K}_w$ and ${\bar K}'_w$) are nonidentical polynomials. That allows us to immediately obtain a contradiction with Propositions~\ref{equiv-funct-prop} 
and~\ref{m-covering-prop}, which together claim that for all pairs  $({\bar K}_w$, ${\bar K}'_w)$ of copy-variable-ordering vectors  for the query $Q$, the $m$-covering part of the function ${\cal F}^{(Q')}_{(Q)}[{\bar K}_w]$ and the $m$-covering part of the function ${\cal F}^{(Q')}_{(Q)}[{\bar K}'_w]$ must be identical polynomials, whenever $Q$ $\equiv_C$ $Q'$ and $Q$ is an explicit-wave query. 

Thus, to complete this proof it remains to show that there exists a copy-variable-ordering vector ${\bar K}'_w$ for the query $Q$ such that ${\bar K}'_w$ is not identical to the fixed vector ${\bar K}_w$, and such that the signature ${\cal S}(Q',{\bar K}_w)$ of the function $f^{(\cal S)}(Q')$ w.r.t. the vector ${\bar K}_w$ is not identical to the signature ${\cal S}(Q',{\bar K}'_w)$ of the function $f^{(\cal S)}(Q')$ w.r.t. the vector ${\bar K}'_w$. 

We construct one vector ${\bar K}'_w$ with this desired property from the vector ${\bar K}_w$, as follows. Recall the set $S^{**}$, which contains at least two distinct elements of the set $\{$ $N_{m+1}$, $\ldots$, $N_{m+w}$ $\}$. In the set $S^{**}$, let $N_A$ be the element that is the minimal element of $S^{**}$ under the total ($\leq$) order ${\bar K}_w$. Similarly, in the set $S^{**}$, let $N_B$ be the element that is the maximal element of $S^{**}$ under the total ($\leq$) order ${\bar K}_w$. (Actually, picking $N_A$ to be any element of $S^{**}$ that is distinct from variable name $N_B$ would be sufficient for our purpose in this proof. We choose $N_A$ as the minimal element to make it easy to see how we construct ${\bar K}'_w$ from ${\bar K}_w$ -- by swapping, in the relative $\leq$ order of the variable names, the minimal element with the maximal element of $S^{**}$ in the vectors ${\bar K}_w$ and ${\bar K}'_w$.)  

By construction of  the set $S^{**}$, it is immediate that: 
\begin{itemize} 
	\item $N_A$ and $N_B$ are two distinct elements of the set $\{$ $N_{m+1}$, $\ldots$, $N_{m+w}$ $\}$; and 
	\item $N_B$ is the variable name that  the $j$th projection of the $k$th parenthesized expression of $f^{(\cal S)}(Q')$ contributes to the signature ${\cal S}(Q',{\bar K}_w)$ of the function $f^{(\cal S)}(Q')$ w.r.t. the vector ${\bar K}_w$. (This fact is immediate from Proposition~\ref{key-sig-prop}.) 
\end{itemize} 
	
We now take as the desired vector ${\bar K}'_w$ an arbitrary copy-variable-ordering vector for $Q$ such that $N_B$ is the minimal element of ${\bar K}'_w$, with the exception of the element $1$ of ${\bar K}'_w$, and such that $N_A$ is the maximal element of ${\bar K}'_w$. That is, we take as ${\bar K}'_w$ an arbitrary copy-variable-ordering vector for $Q$ whose first two elements are $1$ and $N_B$, in this order, and whose last element is $N_A$. Clearly, by $w$ $\geq$ $2$ we have that at least one such vector ${\bar K}'_w$ must exist. 

Consider the variable name that  the $j$th projection of the $k$th parenthesized expression of $f^{(\cal S)}(Q')$ contributes to the signature ${\cal S}(Q',{\bar K}'_w)$ of the function $f^{(\cal S)}(Q')$ w.r.t. the vector ${\bar K}'_w$. By construction of the set $S^{**}$ and of the vector ${\bar K}'_w$, we have that this variable name is $N_A$. 

Now for the two signatures ${\cal S}(Q',{\bar K}_w)$ and ${\cal S}(Q',{\bar K}'_w)$ to still be identical sets, it must be that some other, fixed projection of some fixed parenthesized expression of $f^{(\cal S)}(Q')$  would contribute: 
\begin{itemize} 
	\item the variable name $N_B$ to the signature ${\cal S}(Q',{\bar K}'_w)$, and 
	\item some variable name  $N_C$ that is distinct from $N_B$, to the signature ${\cal S}(Q',{\bar K}_w)$. ($N_C$ may or may not be identical to $N_A$. Intuitively, for the two signatures to be identical sets ``on the balance,'' there is to be a cycle going through the variable names $N_A$ and $N_B$, and that cycle may or may not involve other elements of the set $\{$ $N_{m+1}$, $\ldots$, $N_{m+w}$ $\}$.) 
\end{itemize} 
	
	This observation gives is the desired contradiction. Indeed, if some  projection of some parenthesized expression of $f^{(\cal S)}(Q')$ contributes the variable name $N_B$ to the signature ${\cal S}(Q',{\bar K}'_w)$, then the $max$-expression for that projection can contain, as its arguments, only $1$ in addition to $N_B$. (Recall that $N_B$ is the minimal-value variable name under the total ($\leq$) order  ${\bar K}'_w$.) Thus, it is not possible for the signatures ${\cal S}(Q',{\bar K}_w)$ and ${\cal S}(Q',{\bar K}'_w)$ to be identical sets. This observation completes the proof for the case $w$ $\geq$ $2$, and hence completes the proof of the entire Proposition~\ref{goldfish-in-q-prime-prop}.  
\end{proof} 


\subsubsection{Theorem~\ref{magic-mapping-prop}: End of the proof} 
\label{hinge-proof-sec}




As summarized in Section~\ref{proof-intuition-sec}, the proof of Theorem~\ref{magic-mapping-prop} is immediate from three results, as follows. 
\begin{itemize} 
	\item Proposition~\ref{q-has-wave-prop} of Section~\ref{monomial-class-mappings-sec} states the following: Given a CCQ query $Q$, there exists a nonempty monomial class, call it ${\cal C}_*^{(Q)}$, for the query $Q$ w.r.t. the family of databases  $\{ D_{\bar{N}^{(i)}}(Q) \}$, such that the multiplicity monomial of ${\cal C}_*^{(Q)}$ is the wave of the query $Q$ w.r.t. $\{ D_{\bar{N}^{(i)}}(Q) \}$. 
	\item Proposition~\ref{q-same-scale-mpng-prop} of Section~\ref{monomial-class-mappings-sec} states the following: 
Given  CCQ queries $Q(\bar{X}) \leftarrow L,M$ and $Q'(\bar{X}') \leftarrow L',M'$, such that (i) $Q$ and $Q'$ have the same (positive-integer) head arities, (ii) $|M_{copy}|$ $=$ $|M'_{copy}|$, and (iii) $|M_{noncopy}|$ $=$ $|M'_{noncopy}|$. Suppose that there exists a nonempty monomial class ${\cal C}_*^{(Q')}$ for the query $Q'$  w.r.t. the family of databases $\{ D_{\bar{N}^{(i)}}(Q) \}$, such that the multiplicity monomial of ${\cal C}_*^{(Q')}$ is the wave of the query $Q$  w.r.t. $\{ D_{\bar{N}^{(i)}}(Q) \}$. Then there exists a SCVM from  the query $Q'$ to the query $Q$.    
	\item Proposition~\ref{qprime-goldfish-prop} of Section~\ref{q-prime-has-wave-sec} states the following: Whenever 
	\begin{itemize} 
		\item[(a)] $Q \equiv_C Q'$ for CCQ queries $Q$ and $Q'$, and 
		\item[(b)] $Q$ is an explicit-wave CCQ query (as specified by Definition~\ref{expl-wave-def}), 
	\end{itemize} 
then there exists a (nonempty) monomial class ${\cal C}_*^{(Q')}$ for the query $Q'$ and for the family of databases $\{ D_{\bar{N}^{(i)}}(Q) \}$, such that the multiplicity monomial of ${\cal C}_*^{(Q')}$ is the wave of the query $Q$ w.r.t. $\{ D_{\bar{N}^{(i)}}(Q) \}$. 
\end{itemize}  

We can finally conclude that the result of Theorem~\ref{magic-mapping-prop} holds. Q.E.D.

\reminder{Must move *all* the continuations and finalizations of Example~\ref{writeup-weird-ex} to *before* this Section~\ref{q-prime-has-wave-sec}}

\reminder{ 

\subsection{Showing that the Exposed Wave of $Q$ Induces an Explicit Wave of $Q'$ in Case $Q \equiv_C Q'$} 

\subsubsection{True Proof?} 
\label{goldfish-sec}

Suppose that for a CCQ query $Q$ and for the set $M_{noncopy}$ of multiset noncopy variables of $Q$, $|M_{noncopy}| = m \geq 1$. Then let  list $L^{(Q)}_{nc}$ be an arbitrary fixed ordering of the set of multiset noncopy variables of $Q$, and let  $\pi^{(Q)}_{nc}$ be an arbitrary permutation of the list $L^{(Q)}_{nc}$. For a containment mapping $\mu$ from the query $Q_{H^d}$ (see Section~\ref{minimiz-sec}) to itself, we say that $\mu$ {\em agrees with the permutation} $\pi^{(Q)}_{nc}$ if for each $i$th position in the list $L^{(Q)}_{nc}$, $1 \leq i \leq m$, $\mu$ maps the variable in that position into the variable in the $i$th position of $\pi^{(Q)}_{nc}$. (As an easy example, for the case where the permutation $\pi^{(Q)}_{nc}$ is the identity permutation of the list $L^{(Q)}_{nc}$, for  a containment mapping $\mu$ from the query $Q_{H^d}$ to itself, we say that $\mu$ agrees with the permutation $\pi^{(Q)}_{nc}$ $=$ $L^{(Q)}_{nc}$ whenever $\mu$ is an identity mapping w.r.t. each multiset noncopy variable of the query $Q$.) 

In case where CCQ query $Q$ does not have multiset noncopy variables, that is where $|M_{noncopy}| = m = 0$, the only possible list $L^{(Q)}_{nc}$ of multiset noncopy variables of $Q$ is an empty list by definition. In this case, the set of all permutations of the empty list $L^{(Q)}_{nc}$ is a singleton set comprising one empty list. Whenever the query $Q$ is such that $|M_{noncopy}| = m = 0$, we say that for each containment mapping $\mu$ from the query $Q_{H^d}$ to itself,  $\mu$ {\em agrees with the (only) permutation} $\pi^{(Q)}_{nc} = []$ of the list $L^{(Q)}_{nc} = []$. 

\begin{definition}{Explicit-wave CCQ query} 
\label{old-expl-wave-def}
A CCQ query $Q$ is an {\em explicit-wave (CCQ) query} if one of the following two conditions holds: 
\begin{itemize} 
	\item[(1)] It is the case that $r \leq 1$; or 
	\item[(2)] It is the case that $r \geq 2$, and for each permutation $\pi^{(Q)}_{nc}$ of the list $L^{(Q)}_{nc}$ of multiset noncopy variables of $Q$ and for each pair of containment mappings $(\mu_1,\mu_2)$ from $Q_{H^d}$ to itself such that each of $\mu_1$ and $\mu_2$ agrees with the permutation $\pi^{(Q)}_{nc}$, it holds that for each multiset-on subgoal, $s$, of the query $Q_{H^d}$, $\mu_1(s)$ and $\mu_2(s)$ are identical atoms. 
\end{itemize} 
\vspace{-0.4cm}
\end{definition} 

It is easy to see that in case (2) of Definition~\ref{old-expl-wave-def}, a CCQ query $Q$ is or is not an explicit-wave query regardless of the choice of the ordering $L^{(Q)}_{nc}$ of the variables in the set  $M_{noncopy}$. 

\reminder{Do I absolutely need the ``{\em for each} permutation $\pi^{(Q)}_{nc}$ of the list $L^{(Q)}_{nc}$'' in Definition~\ref{expl-wave-def}, as opposed to ``{\em there exists a} permutation $\pi^{(Q)}_{nc}$ of the list $L^{(Q)}_{nc}$''?  I guess I do need ``for each permutation'' to prove Theorem~\ref{magic-mapping-prop}, in order to have no negative terms in the {\em entire} function (in terms of $N_{m+1}$ $\ldots$ $N_{m+w}$) for $Q$, call that function $f^{(Q)}_p$, where that function is multiplied by $\Pi_{j=1}^m N_j$, that is, by the term representing all permutations of the list of all (if any) multiset noncopy variables of $Q$. If the function $f^{(Q)}_p$ has no negative terms, then it must be {\em the same} multivariate polynomial (in terms of $N_{m+1}$ $\ldots$ $N_{m+w}$) on {\em all} the total-order subdomains of the domain $\cal N$ of the vector $\bar N$. Then the corresponding function $f^{(Q')}_p$ {\em for the query} $Q'$, $Q' \equiv_C Q$, would also be forced to be the same  multivariate polynomial (in terms of $N_{m+1}$ $\ldots$ $N_{m+w}$) on {\em all} the total-order subdomains of the domain $\cal N$ of the vector $\bar N$, and hence would also be forced to have no negative terms. {\em Whenever $f^{(Q')}_p$ has no negative terms, it must have -- as a ``real''  term (as opposed to ``imaginary'' term, which would be coming from the union-of-sets formula by the inclusion-exclusion principle) -- the characteristic wave of $Q$ (by $f^{(Q')}_p$ being the same multivariate polynomial as $f^{(Q)}_p$ on the entire domain $\cal N$ of $\bar N$, and by $f^{(Q)}_p$ having the characteristic wave of $Q$ [because $Q$ is an explicit-wave CCQ query]). Hence we can construct a SCVM from $Q'$ to $Q$, and hence such a mapping exists.} 

Proving that function $f^{(Q')}_p$ does not have negative terms: 
First we observe that in case $m = |M_{noncopy}| \leq 1$, function $f^{(Q)}_p$ has exactly one term, the characteristic-wave monomial of $Q$, on the entire domain $\cal N$ of $\bar N$. (This is immediate from Definition~\ref{expl-wave-def}.) Hence, by $f^{(Q')}_p$ $\equiv$ $f^{(Q)}_p$ on all of $\cal N$, we conclude that $f^{(Q')}_p$ does not have negative terms in case $m = 0$. Hence, in the remainder of this proof we assume $m \geq 2$. 
Under this assumption, the proof is by contradiction: 
Suppose that function $f^{(Q')}_p$ has negative terms. 
This is only possible when (at least) one of the size-of-union terms in $f^{(Q')}_p$  is for a union of two (or more) sets where the two sets do not unconditionally-dominate each other.  
Thus, there must exist at least two monomial classes, ${\cal C}_1^{(Q')}$ and ${\cal C}_2^{(Q')}$, such that: 
\begin{itemize} 
	\item[(i)]  the noncopy-signatures of ${\cal C}_1^{(Q')}$ and ${\cal C}_2^{(Q')}$ are the same, and this (shared) noncopy-signature includes exactly once each variable in $M_{noncopy}$ (and thus does not include any other terms of $Q$); 
	\item[(ii)] the copy-signature  of ${\cal C}_1^{(Q')}$ is distinct from the copy-signature of ${\cal C}_2^{(Q')}$, and neither copy-signature unconditionally-dominates the other. 
\end{itemize} 

From (ii) above, there must exist two distinct positions $j \neq k$, such that $1 \leq j \leq r$ and $1 \leq k \leq r$, and such that the copy-signature  of ${\cal C}_1^{(Q')}$ has term $T_1^{(j)} \neq 1$ in position $j$ and has term $T_1^{(k)} \neq 1$ in position $k$, and such that the copy-signature  of ${\cal C}_2^{(Q')}$ has term $T_2^{(j)} \neq 1$ in position $j$ and has term $T_2^{(k)} \neq 1$ in position $k$, where there exists a total-order vector ${\cal O}_1$ such that $T_1^{(j)} < T_2^{(j)}$ and $T_1^{(k)} > T_2^{(k)}$ under the {\em strict} total order for ${\cal O}_1$, and, conversely, there exists a total-order vector ${\cal O}_2$ such that $T_1^{(j)} > T_2^{(j)}$ and $T_1^{(k)} < T_2^{(k)}$ under the {\em strict} total order for ${\cal O}_2$. (This condition is immediate from the fact that there is no unconditional-dominance ``in either direction'' between the copy-signatures of ${\cal C}_1^{(Q')}$ and of ${\cal C}_2^{(Q')}$.) It follows immediately that $T_1^{(j)}$ and $T_2^{(j)}$ are two {\em distinct} variables in $\{ N_{m+1},$ $\ldots,$ $N_{m+r} \}$, and, similarly, $T_1^{(k)}$ and $T_2^{(k)}$ are two {\em distinct} variables in $\{ N_{m+1},$ $\ldots,$ $N_{m+r} \}$.

W.l.o.g., choose the above (fixed) $j$, and consider the subgoal of the query $Q'$ that has the copy variable $Y'_{m+j}$; call this subgoal $s'_j$. Clearly, $s'_j$ is a copy-sensitive (as opposed to relational) subgoal of $Q'$. By the above reasoning, the monomial-class mapping for the monomial class ${\cal C}_1^{(Q')}$, call this mapping $\mu'_1$, maps $s'_j$ to a copy-sensitive subgoal of $Q$, say to $s_1$ with copy variable $Y^{(1)}$ $\in$ $\{ Y_{m+1},$ $\ldots,$ $Y_{m+w} \}$. Similarly, the monomial-class mapping for the monomial class ${\cal C}_2^{(Q')}$, call this mapping $\mu'_2$, maps $s'_j$ to a copy-sensitive subgoal of $Q$, say to $s_2$ with copy variable $Y^{(2)}$ $\in$ $\{ Y_{m+1},$ $\ldots,$ $Y_{m+w} \}$, such that $Y^{(1)}$ $\neq$ $Y^{(2)}$. From $Y^{(1)}$ $\neq$ $Y^{(2)}$ we have that $s_1$ and $s_2$ are two distinct elements of the set ${\cal S}_{C(Q)}$; hence $s_1$ and $s_2$ have {\em distinct} relational templates. 

Observe that by the reasoning in the previous paragraph (specifically from $Y^{(1)}$ $\neq$ $Y^{(2)}$), it must be that $w \geq 2$, and hence $r \geq 2$, whenever we assume that $f^{(Q')}_p$ has negative terms. 

From the monomial classes  ${\cal C}_1^{(Q')}$ and  ${\cal C}_2^{(Q')}$ having the same noncopy-signature, 
we conclude that the subgoal $s'_j$ of the query $Q'$ must have at least one set variable. (Indeed, suppose toward contradiction that for all terms of $s'_j$ that are not the copy variable of $s'_j$, each such term is a head variable of $Q'$, a constant used in $Q'$, or a multiset noncopy variable of $Q'$. Then the relational templates of  $\mu'_1(s'_j)$ and of $\mu'_2(s'_j)$ must coincide, hence $Y^{(1)}$ $\neq$ $Y^{(2)}$ cannot hold. We establish this result by recalling that each monomial-class mapping maps each constant of $Q'$ into itself, maps each head variable of $Q'$ into the ``corresponding'' head variable of $Q$, and maps each multiset-noncopy variable of $Q'$ ``according to'' the noncopy-signature of the monomial class for the mapping.) Denote by $X'_*$ this set variable of $Q'$ in $s'_j$; let $n \geq 1$ be the position of $X'_*$ in the relational template of $s'_j$. 

We now obtain that neither subgoal $s_1$ $=$ $\mu'_1(s'_j)$ nor subgoal $s_2$ $=$ $\mu'_2(s'_j)$ of the query $Q$ can have a set variable in the position $n$. The proof is by contradiction: assuming that  (w.l.o.g.) $s_1$ has a  a set variable in the position $n$

} 

Queries $Q$, $Q'$, and $Q''$ of Example~\ref{main-proof-ex} (in Section~\ref{main-proof-ex-sec}) are all explicit-wave CCQ queries, by case (1) of Definition~\ref{expl-wave-def}. In addition, query $Q'$ of Example~\ref{writeup-weird-ex} (in Section~\ref{writeup-weird-ex-sec}) is an explicit-wave CCQ query, by case (2) of Definition~\ref{expl-wave-def}.  Indeed, every containment mapping from query $Q'_{H^d}$ (see Example~\ref{writeup-weird-ex}) to itself is an identity mapping on each of $Y'_1$ and $Y'_2$ and maps each subgoal of the query into a copy of itself. 

On the other hand, query $Q$ of Example~\ref{writeup-weird-ex} is {\em not} an explicit-wave CCQ query. Indeed, let $L^{(Q)}_{nc} = [Y_1 \ Y_2]$. Consider containment mappings $\mu_1^{(Q)}$ and $\mu_2^{(Q)}$ from $Q_{H^d}$ (see Example~\ref{writeup-weird-ex}) to itself: Let $\mu_1^{(Q)}$ be the identity mapping, and let $\mu_2^{(Q)}$ be the result of replacing $X_2 \rightarrow X_2$ in $\mu_1^{(Q)}$ with $X_2 \rightarrow X_3$. Clearly, $\mu_1^{(Q)}$ and $\mu_2^{(Q)}$ agree on the $\pi^{(Q)}_{nc}$ $=$ $L^{(Q)}_{nc}$. It is easy to see that $\mu_1^{(Q)}$ and $\mu_2^{(Q)}$ map the first subgoal of $Q_{H^d}$ into different 
subgoals (the first and the second subgoal, respectively) of $Q_{H^d}$. 

{\em Sufficient condition I for CCQ query $Q$ to be an explicit-wave query:} For the class, call it ${\cal Q}_{r \leq 1}$, of CCQ queries where each query has at most one copy-sensitive subgoal, each query in the class is an explicit-wave query. This sufficient condition is immediate from part (1) of Definition~\ref{expl-wave-def}. By this sufficient condition, all set-semantics CCQ queries and all bag-set-semantics CCQ queries are explicit-wave queries. (For less trivial examples of CCQ queries that are explicit-wave queries by this sufficient condition, see queries $Q$, $Q'$, and $Q''$ in Example~\ref{main-proof-ex} in Section~\ref{main-proof-ex-sec}.) 

{\em Sufficient condition II for CCQ query $Q$ to be an explicit-wave query:} For the following class of CCQ queries, each query in the class is an explicit-wave query.  
	The class ${\cal Q}_{cszs}$ of all CCQ queries whose each copy-sensitive subgoal has zero set variables. (For the proof, see Proposition~\ref{explicit-wave-sufficient-prop}.) This class encompasses: 
	\begin{itemize} 
		\item The class of all queries that do not have set variables. This class encompasses the class of all bag queries, as well as the class of all bag-set queries. 
		\item The class of all queries that have no copy-sensitive subgoals. This class encompasses the class of all set queries, as well as the class of all bag-set queries. 
	\end{itemize} 

From the above sufficient conditions I and II for a CCQ query to be an explicit-wave query, we have that Theorem~\ref{magic-mapping-prop} has the following special case: 

\begin{theorem} 
\label{cszs-magic-thm} 
Given CCQ queries $Q$ and $Q'$, such that (i) $Q$ $\in$ ${\cal Q}_{r \leq 1}$ $\bigcup$ ${\cal Q}_{cszs}$, and (ii) $Q \equiv_C Q'$. Then  there exists a SCVM from $Q'$ to $Q$. 
\end{theorem}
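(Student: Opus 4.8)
The plan is to reduce Theorem~\ref{cszs-magic-thm} to the already-established Theorem~\ref{magic-mapping-prop} by showing that every query $Q$ satisfying hypothesis (i), i.e., every $Q \in {\cal Q}_{r \leq 1} \cup {\cal Q}_{cszs}$, is an explicit-wave CCQ query in the sense of Definition~\ref{expl-wave-def}. Once this membership-to-explicit-wave implication is in hand, the conclusion is immediate: hypotheses (i) and (ii) of Theorem~\ref{cszs-magic-thm} become precisely conditions (i) ``$Q$ is an explicit-wave query'' and (ii) ``$Q \equiv_C Q'$'' required by Theorem~\ref{magic-mapping-prop}, whose conclusion is the existence of a SCVM from $Q'$ to $Q$. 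Thus the entire argument splits into two disjunctive cases according to which of the two classes contains $Q$, each case establishing that $Q$ is explicit wave, followed by a single appeal to Theorem~\ref{magic-mapping-prop}.

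First I would handle the case $Q \in {\cal Q}_{r \leq 1}$. By the definition of the class ${\cal Q}_{r \leq 1}$, such a query $Q$ has at most one copy-sensitive subgoal. But this is exactly condition~(1) in Definition~\ref{expl-wave-def}, so $Q$ is an explicit-wave query with no further argument needed. Next I would handle the case $Q \in {\cal Q}_{cszs}$. By the definition of the class ${\cal Q}_{cszs}$, every copy-sensitive subgoal of $Q$ has zero set variables. This is precisely the hypothesis of Proposition~\ref{suffic-for-expl-wave-prop}, which already states that any such query is an explicit-wave query; I would simply invoke that proposition. (Note that this case subsumes, in particular, all queries with no copy-sensitive subgoals and all queries with no set variables, since in each of those situations the ``no set variable in any copy-sensitive subgoal'' condition holds vacuously; hence set, bag, and bag-set queries are all covered here.) Combining the two cases, any $Q$ meeting hypothesis~(i) is explicit wave.

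With $Q$ shown to be explicit wave and $Q \equiv_C Q'$ given by hypothesis~(ii), I would apply Theorem~\ref{magic-mapping-prop} directly to $Q$ and $Q'$ to obtain a SCVM from $Q'$ to $Q$, completing the proof. I expect there to be no genuine obstacle in this argument: the entire technical weight of the statement has already been discharged inside the (lengthy) proof of Theorem~\ref{magic-mapping-prop} and inside Proposition~\ref{suffic-for-expl-wave-prop}. The only thing to verify here is the two class-membership implications, and both are immediate unpackings of the relevant definitions. Accordingly, the ``hard part'' of Theorem~\ref{cszs-magic-thm} is not in its own proof but is inherited from Theorem~\ref{magic-mapping-prop}; the present result is best read as isolating the two syntactically checkable sufficient conditions (at most one copy-sensitive subgoal, or no set variables in any copy-sensitive subgoal) under which the asymmetric necessary condition of Theorem~\ref{magic-mapping-prop} is guaranteed to apply.
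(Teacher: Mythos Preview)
Your proposal is correct and matches the paper's own proof essentially verbatim: the paper states that the result is immediate from Theorem~\ref{magic-mapping-prop} together with the two sufficient conditions for being explicit wave (at most one copy-sensitive subgoal, handled by part~(1) of Definition~\ref{expl-wave-def}, and no set variables in any copy-sensitive subgoal, handled by Proposition~\ref{suffic-for-expl-wave-prop}/Proposition~\ref{explicit-wave-sufficient-prop}). There is nothing to add.
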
 

For query $Q''$ in Example~\ref{main-proof-ex} (Section~\ref{main-proof-ex-sec}), we have that $Q''$ $\in$ ${\cal Q}_{r \leq 1}$ $\bigcap$ ${\cal Q}_{cszs}$ . However, as exemplified by query $Q$ in Example~\ref{main-proof-ex}, ${\cal Q}_{r \leq 1}$ $\subseteq \hspace{-0.4cm} / \hspace{0.15cm}$ ${\cal Q}_{cszs}$ . 
Further, trivially we have that ${\cal Q}_{cszs}$ $\subseteq \hspace{-0.4cm} / \hspace{0.15cm}$ ${\cal Q}_{r \leq 1}$ . 
(For instance, query $Q(X)$ $\leftarrow$ $p(X,Y_1),$ $p(X,Y_2; Y_3),$ $p(X,Y_2; Y_4),$ $\{ Y_1,$ $Y_2,$ $Y_3,$ $Y_4 \}$ is in ${\cal Q}_{cszs}$ $-$ ${\cal Q}_{r \leq 1}$.) 
Finally, the set of all explicit-wave CCQ queries is not within the set ${\cal Q}_{cszs}$ $\bigcup$ ${\cal Q}_{r \leq 1}$ . Indeed, for the explicit-wave CCQ query $Q'$ of Example~\ref{writeup-weird-ex}, $Q'$ $\notin$  ${\cal Q}_{cszs}$ $\bigcup$ ${\cal Q}_{r \leq 1}$ . 

The proof of Theorem~\ref{cszs-magic-thm} is immediate from Theorem~\ref{magic-mapping-prop} and from Proposition~\ref{explicit-wave-sufficient-prop}. 

We now further special-case Theorem~\ref{cszs-magic-thm} using the three traditional semantics for conjunctive queries. 

\begin{theorem} 
\label{bag-magic-thm} 
Given bag CQ queries $Q$ and $Q'$, such that $Q \equiv_B Q'$. Then  there exists an isomorphic mapping from $Q'$ to $Q$. 
\end{theorem}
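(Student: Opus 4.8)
The plan is to obtain Theorem~\ref{bag-magic-thm} as a direct consequence of Theorem~\ref{magic-mapping-prop}, of Proposition~\ref{suffic-for-expl-wave-prop}, and of the reduction recorded in Section~\ref{prelim-section} between bag-semantics and combined-semantics containment. The first step is to verify that the hypotheses of Theorem~\ref{magic-mapping-prop} apply to both $Q$ and $Q'$. By definition a bag CQ query is a multiset query all of whose subgoals are copy-sensitive; in particular it has no set variables at all, so (vacuously) each of its copy-sensitive subgoals has no set variables, and hence, by Proposition~\ref{suffic-for-expl-wave-prop}, both $Q$ and $Q'$ are explicit-wave CCQ queries. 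Since the condition of a CCQ query is nonempty, a bag query has at least one copy-sensitive subgoal, so $M_{copy}\neq\emptyset$; moreover, since distinct copy-sensitive subgoals carry distinct copy variables (query syntax), a bag query has exactly $|M_{copy}|$ subgoals, it has no duplicate atoms, and (having no relational subgoals) its canonical representation equals itself, so regularization is vacuous for bag queries. Finally, $Q\equiv_B Q'$ is the same as $Q\equiv_C Q'$: by the equivalences in Section~\ref{prelim-section}, $Q\sqsubseteq_B Q'$ iff $Q\sqsubseteq_C Q'$ whenever $Q$ and $Q'$ are bag queries.

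Next I would apply Theorem~\ref{magic-mapping-prop} twice: to the pair $(Q,Q')$, which yields a SCVM $\mu$ from $Q'$ to $Q$, and to the pair $(Q',Q)$ (legitimate since $Q'$ is also explicit wave and $Q'\equiv_C Q$), which yields a SCVM $\mu'$ from $Q$ to $Q'$. Being same-scale covering mappings, $\mu$ restricted to $M'$ is a bijection onto $M$ and $\mu'$ restricted to $M$ is a bijection onto $M'$; in particular, by item~(3) of Definition~\ref{magic-mapping-def}, $\mu$ restricts to a bijection $M'_{copy}\to M_{copy}$ (consistently with Corollary~\ref{not-same-num-multiset-vars-prop}). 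The remaining work is to promote $\mu$ (and, symmetrically, $\mu'$) to a one-to-one containment mapping from $Q'$ \emph{onto} $Q$ inducing a bijection $M'\to M$ --- which is exactly the data of an isomorphism in the sense of the parenthetical following Theorem~\ref{prelim-cv-theorem}.

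To see that $\mu$ has these properties, note that, for a bag query, every term is a head variable, a multiset noncopy variable, or a copy variable (there being no set variables), together with possibly some constants. By item~(1) of Definition~\ref{magic-mapping-def}, $\mu$ is the identity on constants; by item~(2) it sends the head vector of $Q'$ to that of $Q$, hence (under the convention of Section~\ref{basic-queries-sec} that head variables are distinct) it is injective on the head variables and maps them into the head variables of $Q$; by item~(3) together with the same-scale property it is a bijection $M'\to M$, hence injective on all multiset variables and mapping them into the multiset variables of $Q$. Since the head-variable and multiset-variable classes of $Q$ are disjoint, $\mu$ is injective on all terms of $Q'$. Furthermore, since $Q$ has no relational subgoals, item~(5) of Definition~\ref{magic-mapping-def} says that every copy-sensitive subgoal $s(\bar Y;i)$ of $Q'$ is mapped to the subgoal $s(\mu(\bar Y);\mu(i))$ of $Q$, so $\mu$ induces a map on subgoals; this map is injective because distinct subgoals of $Q'$ carry distinct copy variables and $\mu$ is injective on copy variables, and it is surjective because $Q'$ and $Q$ have the same number of subgoals (both equal to $|M'_{copy}|=|M_{copy}|$). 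Hence $\mu$ is a one-to-one containment mapping from $Q'$ onto $Q$ inducing a bijection $M'\to M$, and symmetrically for $\mu'$; by the definition of isomorphic CCQ queries, $Q$ and $Q'$ are isomorphic, and $\mu$ is the desired isomorphic mapping from $Q'$ to $Q$.

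The principal point requiring care is merely to verify that Theorem~\ref{magic-mapping-prop} is applicable without circularity: the ``$Q$ is explicit wave'' hypothesis is supplied by Proposition~\ref{suffic-for-expl-wave-prop}, the equal multiset-variable counts by Corollary~\ref{not-same-num-multiset-vars-prop}, and the passage from $\equiv_B$ to $\equiv_C$ by Section~\ref{prelim-section}. A minor secondary obstacle is that Theorem~\ref{magic-mapping-prop} is proved under the simplifying convention that the queries have distinct head variables and no constants in the head; for bag queries violating this I would first invoke the ``straightforward extension'' of the proof noted in Section~\ref{basic-queries-sec} (equivalently, reduce to a canonically renamed copy), rather than leave the point implicit. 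I expect no other difficulty: once a SCVM exists in both directions, the bag-query structure --- no set variables, no relational subgoals, and one distinct copy variable per subgoal --- forces each SCVM to be an isomorphism essentially by counting.
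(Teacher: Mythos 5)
Your proposal is correct and follows essentially the same route as the paper: invoke Proposition~\ref{suffic-for-expl-wave-prop} to see that bag queries are explicit-wave, apply Theorem~\ref{magic-mapping-prop} (via $\equiv_B\,\Rightarrow\,\equiv_C$) to obtain a SCVM, and then use the bag-query structure (no set variables, no relational subgoals, one distinct copy variable per subgoal, equal numbers of copy variables by Theorem~\ref{not-same-num-multiset-vars-thm}) to force the SCVM to be a one-to-one onto containment mapping, i.e.\ an isomorphism. Your version just spells out the injectivity-on-terms and subgoal-counting steps in more detail than the paper does.
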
 

Theorem~\ref{bag-magic-thm} is the ``if'' part of Theorem \reminder{??? -- put in here the theorem ID} (from  \cite{ChaudhuriV93}). Its proof is straightforward from the facts that (i) each bag CQ query is a CCQ query that belongs to the class ${\cal Q}_{cszs}$,  hence by Theorem~\ref{magic-mapping-prop} (using the fact that for bag CQ queries $Q$ and $Q'$, $Q \equiv_B Q'$ implies $Q \equiv_C Q'$) there must exist a SCVM from $Q'$ to $Q$, and that (ii) for every SCVM from a bag CQ query to another bag CQ query, the mapping is an isomorphism. (The part (ii) is immediate from three straightforward observations, as follows. (1) Two bag-semantics-equivalent\footnote{Recall that for bag CQ queries $Q$ and $Q'$, $Q \equiv_B Q'$ implies $Q \equiv_C Q'$.} bag CQ queries must have the same number of copy variables (see Theorem~\ref{not-same-num-multiset-vars-thm}); (2) A bag CQ query has zero relational subgoals (and hence, in particular, the regularized version of a bag CQ query is unique and is ``the query itself''); and (3) A SCVM from CCQ query $Q'$ (from CCQ query $Q$, respectively) to CCQ query $Q$ (to CCQ query $Q'$, respectively) induces a surjection from the set of copy-sensitive subgoals of $Q'$ (of $Q$, respectively) to the set of copy-sensitive subgoals of $Q$ (of $Q'$, respectively). From (3) we infer that $Q$ and $Q'$ have the same number of (copy-sensitive) subgoals. Thus, every SCVM from $Q'$ to $Q$ must, by property (6) in Definition~\ref{magic-mapping-def}, be an isomorphism from the set of subgoals of $Q'$ to the set of subgoals of $Q$. From the existence of at least one SCVM from $Q'$ to $Q$ (by (i) here), Q.E.D.) 

\begin{theorem} 
\label{bag-set-magic-thm} 
Given bag-set CQ queries $Q$ and $Q'$, such that $Q \equiv_{BS} Q'$. Then  there exists an isomorphic mapping from the regularized version of the query $Q'$ to the regularized version of the query $Q$. 
\end{theorem}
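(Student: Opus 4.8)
The plan is to obtain Theorem~\ref{bag-set-magic-thm} as a specialization of Theorem~\ref{magic-mapping-prop}, in the same spirit as the proof of Theorem~\ref{bag-magic-thm}, but with one extra combinatorial step forced by the fact that bag-set queries have \emph{no} copy-sensitive subgoals. First I would record the easy reductions. A bag-set CQ query is a CCQ query with $M_{copy}=\emptyset$ and no set variables, so every nondistinguished variable is a multiset noncopy variable; and each of its (zero) copy-sensitive subgoals trivially has no set variables, so by Proposition~\ref{suffic-for-expl-wave-prop} every bag-set query is an explicit-wave query. Since $Q$ and $Q'$ are bag-set queries, $Q\equiv_{BS}Q'$ is equivalent to $Q\equiv_C Q'$ (each direction of $\sqsubseteq_{BS}$ coincides with $\sqsubseteq_C$ for bag-set queries, as recorded in Section~\ref{prelim-section}). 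By Definition~\ref{regulariz-def}, the regularized version $Q_r$ of a bag-set query $Q$ is just its canonical representation (the set ${\cal T}(Q)$ is empty), so $Q_r$ is again a bag-set query, indeed a \emph{duplicate-free} one; it is explicit wave, has the same multiset variables as $Q$, and $Q_r\equiv_C Q$ by Proposition~\ref{regulariz-prop}; likewise for $Q'$. Hence $Q_r\equiv_C Q\equiv_C Q'\equiv_C Q'_r$, and it suffices to exhibit an isomorphism from $Q'_r$ to $Q_r$; in other words, I may assume from now on that $Q$ and $Q'$ are themselves duplicate-free bag-set queries.

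Next I would invoke Theorem~\ref{magic-mapping-prop} in both directions: since $Q$ is explicit wave and $Q\equiv_C Q'$ there is an SCVM $\mu$ from $Q'$ to $Q$, and since $Q'$ is explicit wave and $Q'\equiv_C Q$ there is an SCVM $\nu$ from $Q$ to $Q'$. Because neither query has copy-sensitive subgoals, condition~(4) of Definition~\ref{magic-mapping-def} forces $\mu$ and $\nu$ to map every (relational) subgoal of the source to a (relational) subgoal of the target; because $(Q,Q')$ is equivalence-compatible with $M_{copy}=M'_{copy}=\emptyset$, each SCVM restricts to a bijection from $M'=M'_{noncopy}$ onto $M=M_{noncopy}$, i.e.\ a bijection between the sets of all nonhead variables; and with condition~(2) (head onto head, under the standing assumption of distinct head variables) and condition~(1) (identity on constants), this means $\mu$ is a \emph{renaming}: a bijection from the set of all variables of $Q'$ onto the set of all variables of $Q$, and $\nu$ is a renaming the other way.

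The heart of the argument, and the step I expect to require the most care, is upgrading these renamings to isomorphisms; this is exactly where the bag-set case differs from the bag case, since the ``surjection on copy-sensitive subgoals'' clause of the CVM definition is vacuous here. Consider the composition $\nu\circ\mu$. It sends each subgoal of $Q'$ to a subgoal of $Q'$ (by the subgoal-to-subgoal property of $\mu$ then of $\nu$), and as a composition of renamings it is a renaming of the variables of $Q'$ onto themselves. A renaming applied to the \emph{duplicate-free} set of atoms of $Q'$ is injective on atoms, so $\nu\circ\mu$ is an injective self-map of the finite set of subgoals of $Q'$, hence a bijection of that set; therefore $\mu$ is injective on the subgoals of $Q'$. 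Symmetrically, from $\mu\circ\nu$ being a bijection of the subgoals of $Q$, $\nu$ is injective on the subgoals of $Q$. A short counting step then forces $|\mathrm{Sub}(Q')|=|\mathrm{Sub}(Q)|$ (from $\mu$ injective into $\mathrm{Sub}(Q)$, $\nu$ injective into $\mathrm{Sub}(Q')$, and $\nu\circ\mu$ onto $\mathrm{Sub}(Q')$), so $\mu$ is a bijection of subgoals as well.

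Finally I would assemble the pieces: $\mu$ is a one-to-one containment mapping from $Q'_r$ onto $Q_r$ that induces a bijection from $M'_r$ to $M_r$, and $\nu$ is a symmetric one from $Q_r$ onto $Q'_r$ — precisely the paper's notion of isomorphic CCQ queries (the definition following Theorem~\ref{prelim-cv-theorem}). Hence there is an isomorphic mapping from the regularized version of $Q'$ to the regularized version of $Q$, as claimed; as a sanity check, this reproduces the ``if'' direction of Theorem~\ref{prelim-cv-theorem}(2). The only genuinely nontrivial input is Theorem~\ref{magic-mapping-prop} itself, which we are entitled to assume; everything else is the bookkeeping above, with the duplicate-freeness of regularized bag-set queries doing the essential work in the renaming-to-isomorphism step.
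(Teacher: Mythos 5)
Your proof is correct and follows essentially the same route as the paper: both directions of Theorem~\ref{magic-mapping-prop} (bag-set queries being explicit wave), injectivity of an SCVM on the terms of a bag-set query, duplicate-freeness of the regularized versions, and a symmetric counting step to upgrade the two SCVMs to an isomorphism. The only difference is cosmetic: your detour through $\nu\circ\mu$ is unnecessary, since injectivity of $\mu$ on terms already sends distinct (duplicate-free) subgoals of $Q'_r$ to distinct subgoals of $Q_r$, which is how the paper argues it directly.
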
 

Theorem~\ref{bag-set-magic-thm} is the ``if'' part of Theorem \reminder{??? -- put in here the theorem ID} (from \cite{ChaudhuriV93}). Its proof is straightforward from the facts that: 
\begin{itemize} 
	\item[(i)] Each bag-set CQ query is a CCQ query that belongs to the class ${\cal Q}_{cszs}$, hence by Theorem~\ref{magic-mapping-prop} (using the fact that for bag-set CQ queries $Q$ and $Q'$, $Q \equiv_{BS} Q'$ implies $Q \equiv_C Q'$) there must exist a SCVM from $Q'$ to $Q$, and
	\item[(ii)] For every SCVM from bag-set CQ query $Q'$ to bag-set CQ query $Q$, the mapping maps each subgoal of the regularized version of $Q'$ into a distinct (in terms of relational template) subgoal of the regularized version of $Q$.\footnote{We can use {\em the regularized version} of $Q$ here, because the regularized version of a bag-set CQ query $Q$, by definition, has all subgoals of $Q$ albeit without duplication.} (Indeed, by definition, each SCVM from $Q'$ to $Q$ maps each element of the set ${\cal X}'$ $\bigcup$ $P'$ $\bigcup$ $M'_{noncopy}$ into a {\em distinct} element of the set ${\cal X}$ $\bigcup$ $P$ $\bigcup$ $M_{noncopy}$. Here, $\cal X$ is the set of all variables in the head of query $Q$, $P$ is the set of all constants used in $Q$, and $M_{noncopy}$ is the set of all multiset noncopy variables of $Q$; similarly for $Q'$.) As a result, for every SCVM $\mu$ from $Q'$ to $Q$, $\mu$ maps each subgoal of the regularized  version of $Q'$ to a distinct subgoal of the regularized  version of  $Q$. (Recall that in the regularized version of $Q$, all subgoals have different relational templates; we make the same observation for $Q'$.) 
	
	We make the same observation about all SCVMs {\em from $Q$ to $Q'$,} and thus obtain that the regularized versions of $Q$ and of $Q'$ must have the same number of subgoals. We conclude that  for every SCVM $\mu$ from $Q'$ to $Q$, $\mu$ is an isomorphic mapping from the regularized  version of $Q'$ to the regularized  version of  $Q$. Hence, from (i) Q.E.D. 
\end{itemize} 
\reminder{Need to recheck the above proof of (ii)!} 

\begin{theorem} 
\label{set-magic-thm} 
Given set CQ queries $Q$ and $Q'$, such that $Q \equiv_{S} Q'$. Then  there exists a containment mapping from the query $Q'$ to the query $Q$. 
\end{theorem}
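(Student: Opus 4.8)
The plan is to obtain Theorem~\ref{set-magic-thm} as a direct specialization of our main necessary condition for equivalence, Theorem~\ref{magic-mapping-prop}, by checking that every hypothesis of that theorem is automatically satisfied by a pair of CCQ set queries. First I would observe that a set query, by definition, has $M = \emptyset$, and in particular has no copy-sensitive subgoals; hence by part~(1) of Definition~\ref{expl-wave-def} (equivalently, by Proposition~\ref{suffic-for-expl-wave-prop}, whose hypothesis on copy-sensitive subgoals is vacuously met), $Q$ is an explicit-wave query. Second, since both $Q$ and $Q'$ are set queries, combined semantics coincides with set semantics for each of them, so $Q \equiv_S Q'$ is the same assertion as $Q \equiv_C Q'$ (see the discussion of $\sqsubseteq_S$ versus $\sqsubseteq_C$ for set queries in Section~\ref{prelim-section}).

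With these two observations in hand, Theorem~\ref{magic-mapping-prop} applies to the pair $(Q, Q')$ and yields a same-scale covering mapping (SCVM) from $Q'$ to $Q$. The final step is purely definitional: as noted immediately after Definition~\ref{magic-mapping-def}, when both $Q$ and $Q'$ are set queries a CVM --- and hence an SCVM --- from $Q'$ to $Q$ is exactly a containment mapping from $Q'$ to $Q$ in the sense of Theorem~\ref{cm-thm}. This produces the required containment mapping. If a symmetric conclusion is desired, one may instead invoke Theorem~\ref{necess-suff-equiv-cond-thm} with $Q_1 := Q$ and $Q_2 := Q'$, since both queries are explicit wave; this yields CVMs in both directions, i.e.\ containment mappings from $Q'$ to $Q$ and from $Q$ to $Q'$, recovering the full ``only if'' direction of the classical Chandra--Merlin equivalence test.

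The main point requiring care is not a mathematical obstacle but a question of circularity: the proof of Theorem~\ref{magic-mapping-prop} itself disposes of the case $M = \emptyset$ --- which is precisely the case of a set query $Q$ --- by appealing to the containment-mapping theorem of \cite{ChandraM77} (see Section~\ref{basics-sec}). Consequently, the derivation sketched above should be read not as an independent reproof of Chandra--Merlin but as exhibiting Theorem~\ref{set-magic-thm} as the instance of Theorem~\ref{magic-mapping-prop} obtained by restricting to set queries, thereby confirming that our general criterion reduces correctly to the classical one. A self-contained argument would simply cite Theorem~\ref{cm-thm} directly, noting that $Q \equiv_S Q'$ entails $Q \sqsubseteq_S Q'$, which by Theorem~\ref{cm-thm} gives a containment mapping from $Q'$ to $Q$.
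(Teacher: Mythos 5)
Your proof is correct and follows essentially the same route as the paper's: specialize Theorem~\ref{magic-mapping-prop} to the pair (noting that set CQ queries are explicit-wave and that $\equiv_S$ coincides with $\equiv_C$ for them), then observe that an SCVM between set queries is, by Definition~\ref{magic-mapping-def}, exactly a containment mapping since there are no multiset variables. Your remark about circularity is also accurate --- the paper's proof of Theorem~\ref{magic-mapping-prop} disposes of the case $M=\emptyset$ by citing \cite{ChandraM77} directly, so this theorem is indeed offered as a consistency check that the general criterion reduces correctly to the classical one, not as an independent reproof of it.
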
 

Theorem~\ref{set-magic-thm} is the ``if'' part of the containment-mapping theorem \reminder{??? -- put in here the theorem ID} of \cite{ChandraM77}. Its proof is immediate from the facts that (i) each set CQ query is a CCQ query that belongs to the class ${\cal Q}_{cszs}$,  hence by Theorem~\ref{magic-mapping-prop} (using the fact that  for set CQ queries $Q$ and $Q'$, $Q \equiv_S Q'$ implies $Q \equiv_C Q'$) there must exist a SCVM from $Q'$ to $Q$, and that (ii) for every SCVM from set CQ query $Q'$ to set CQ query $Q$, the mapping is a containment mapping from $Q'$ to $Q$. (Item (ii) is immediate from Definition~\ref{magic-mapping-def}, because set CQ queries have no multiset variables.)

\begin{proposition} 
\label{explicit-wave-sufficient-prop}
Let $Q$ be a CCQ query, such that each copy-sensitive subgoal of $Q$ has zero set variables. Then $Q$ is an explicit-wave query. 
\end{proposition}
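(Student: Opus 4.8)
The plan is to establish condition~(2) of Definition~\ref{expl-wave-def} directly; since that condition alone already witnesses explicit-waveness, no separate treatment of condition~(1) is required, and the argument goes through uniformly whether or not $Q$ has two or more copy-sensitive subgoals. First I would fix an arbitrary pair $(\mu_1,\mu_2)$ of noncopy-permuting GCMs from $Q_{ce}$ to itself that agree on $M_{noncopy}$, and fix an arbitrary original copy-sensitive subgoal $s$ of $Q$, written $s = p(\bar{T};i)$ with $i \in M_{copy}$ and $\bar{T}$ the vector of the remaining terms. What must be shown is that $\mu_1(s)$ and $\mu_2(s)$ have the same relational template, i.e.\ that $p(\mu_1(\bar{T})) = p(\mu_2(\bar{T}))$; note that the copy variables $\mu_1(i)$ and $\mu_2(i)$ are irrelevant, because the relational template discards the copy position.

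The crux is the hypothesis that $s$ contains no set variables: consequently every term $T$ occurring in $\bar{T}$ is either (a)~a constant of $Q$, (b)~a head variable of $Q$, or (c)~an element of $M_{noncopy}$. I would then check equality of $\mu_1(\bar{T})$ and $\mu_2(\bar{T})$ componentwise. In case~(a), $\mu_1(T)=T=\mu_2(T)$ since a GCM is the identity on constants (condition~(2) of Definition~\ref{multiset-homom-def}). In case~(b), condition~(1) of Definition~\ref{multiset-homom-def} forces $\mu_1(\bar{X}) = \bar{X} = \mu_2(\bar{X})$, so $\mu_1$ and $\mu_2$ both fix each head variable, giving $\mu_1(T)=\mu_2(T)$. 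In case~(c), $\mu_1(T)=\mu_2(T)$ precisely because $\mu_1$ and $\mu_2$ are noncopy-permuting and agree on $M_{noncopy}$. Hence $\mu_1(\bar{T}) = \mu_2(\bar{T})$, so $\mu_1(s)$ and $\mu_2(s)$ have the common relational template $p(\mu_1(\bar{T}))$, which is exactly what condition~(2) of Definition~\ref{expl-wave-def} demands. The degenerate case $M_{noncopy} = \emptyset$ is subsumed by the same computation (only cases~(a) and~(b) arise), invoking the convention of Definition~\ref{expl-wave-def} that all pairs of GCMs then ``agree on $M_{noncopy}$''.

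I do not anticipate a genuine obstacle; the only thing requiring a little care is the bookkeeping around $Q_{ce}$. One should record explicitly that copy-enhancement introduces only fresh copy variables (attached to the relational subgoals of $Q$), so $Q$ and $Q_{ce}$ have the same head and the same set $M_{noncopy}$, and that the ``original copy-sensitive subgoals of $Q$'' are precisely the copy-sensitive subgoals $Q$ already possesses --- which, by hypothesis, are the ones free of set variables. With these identifications in place the componentwise argument closes the proof; in particular, condition~(3) of Definition~\ref{multiset-homom-def} is not even needed, since we never have to locate $\mu_i(s)$ among the subgoals of $Q_{ce}$, only to compare the two relational templates as atoms.
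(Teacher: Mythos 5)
Your proof is correct and follows essentially the same route as the paper's: a componentwise comparison of $\mu_1$ and $\mu_2$ on the non-copy arguments of a copy-sensitive subgoal, which by hypothesis are only constants, head variables, and elements of $M_{noncopy}$, each class being handled by the GCM identity on constants, the head condition, and the agreement on $M_{noncopy}$ respectively. The only (immaterial) difference is that the paper first dispatches queries with at most one copy-sensitive subgoal via condition~(1) of Definition~\ref{expl-wave-def} and uses condition~(2) only for the remaining case, whereas you verify condition~(2) uniformly; both are valid.
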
 
	
\begin{proof} 
Let $Q$ be a CCQ query whose each copy-sensitive subgoal has zero set variables. In case where $Q$ does not have copy-sensitive subgoals, the query $Q_{H^d}$ does not have multiset-on subgoals by definition. Thus, it is immediate from Definition~\ref{expl-wave-def} that in this case, $Q$ is an explicit-wave query. Hence in the remainder of this proof we assume that $Q$ has at least one copy-sensitive subgoal. 

Let list $L^{(Q)}_{nc}$ be an arbitrary fixed ordering of all the elements of the set $M_{noncopy}$ of multiset noncopy variables of $Q$. Fix an arbitrary permutation $\pi^{(Q)}_{nc}$ of the list $L^{(Q)}_{nc}$. (In case $M_{noncopy} = \emptyset$, we have that the only possibility for $L^{(Q)}_{nc}$ is $L^{(Q)}_{nc}$ $=$ $[]$, that is the empty list, and that the only possible permutation of that empty list is $\pi^{(Q)}_{nc}$ $=$ $[]$. In this case $M_{noncopy} = \emptyset$, let the empty list be (the only possibility for) the  permutation $\pi^{(Q)}_{nc}$ that we fix for the remainder of this proof.) 

In case where for the fixed permutation $\pi^{(Q)}_{nc}$ there exists at most one  containment mapping, $\mu$, from the query $Q_{H^d}$ to itself,  such that $\mu$ agrees with $\pi^{(Q)}_{nc}$, we have that w.r.t. the permutation $\pi^{(Q)}_{nc}$, query $Q$ is an explicit-wave query by definition (by default). Hence for the remainder of this proof, for the chosen permutation $\pi^{(Q)}_{nc}$, assume that there exist at least two distinct containment mappings from $Q_{H^d}$ to itself,  such that all of the mappings agree with the permutation $\pi^{(Q)}_{nc}$. Let $\mu_1$ and $\mu_2$ be any two distinct such containment mappings. 


By our assumption that query $Q$ has copy-sensitive subgoals, the set of multiset-on subgoals of the  query $Q_{H^d}$ is not empty. Let $s$ be an arbitrary multiset-on subgoal of the query $Q_{H^d}$. We show that the atoms $\mu_1(s)$ and $\mu_2(s)$ are identical relational atoms. 

Indeed, consider the set ${\cal S}^*$ of all those terms of the query $Q_{H^d}$ that are not set variables of the query $Q$. By definition of $Q_{H^d}$, the set ${\cal S}^*$ is the union of three sets: (i) the set $\cal X$ of all distinguished variables of the query $Q$; (ii) the set $M_{noncopy}$ of all multiset noncopy variables of the query $Q$; and (iii) the set $P$ of all constants used in the query $Q$. (Recall that none of the copy variables of the query $Q$ is a term of the query $Q_{H^d}$.) 

By definition of containment mapping, each containment mapping from the query $Q_{H^d}$ to itself maps each element of ${\cal X} \bigcup P$ to itself. Further, in case $M_{noncopy} \neq \emptyset$, for each containment mapping $\mu$ from $Q_{H^d}$ to itself such that $\mu$ agrees with the permutation  $\pi^{(Q)}_{nc}$, the restriction of $\mu$ to the list $L^{(Q)}_{nc}$ (which is a fixed ordering of the elements of the set $M_{noncopy}$) is the list $\pi^{(Q)}_{nc}$. 

Recall that all terms of the query $Q_{H^d}$ used in the subgoal $s$ of  $Q_{H^d}$ belong to the set ${\cal S}^*$. (Indeed, we have that for each copy-sensitive subgoal $s'$ of the query $Q$ and for each term $t$ of $Q$ used in $s'$ such that $t$ is not a copy variable, $t$ is not a set variable of $Q$ and hence must belong to the set ${\cal S}^*$.) Therefore, the mappings $\mu_1$ and $\mu_2$ yield the same atom when applied to the multiset-on subgoal $s$ of the query $Q_{H^d}$. Hence $Q$ is an explicit-wave query w.r.t. the permutation $\pi^{(Q)}_{nc}$. 

From the arbitrary choice of (a) the containment mappings $\mu_1$ and $\mu_2$ each agreeing with the fixed permutation $\pi^{(Q)}_{nc}$ of the list $L^{(Q)}_{nc}$, and of (b) the multiset-on subgoal $s$ of the query $Q_{H^d}$, we conclude that query $Q$ is an explicit-wave  query w.r.t. the permutation $\pi^{(Q)}_{nc}$ of the list $L^{(Q)}_{nc}$. From the arbitrary choice of the permutation $\pi^{(Q)}_{nc}$ of the list $L^{(Q)}_{nc}$, we conclude that query $Q$ is an explicit-wave CCQ query by definition. Q.E.D. 
\end{proof}

\reminder{Define the ``noncopy-permutation function'' of $Q''$ (for the $Q''$ and for the family of databases $\{ D_{\bar{N}^{(i)}}(Q) \}$): This is the sum of all the size-of-set-union terms for all the permutations of the (arbitrarily fixed) list of all multiset noncopy variables of the query. Also define how a ``noncopy-permutation function of $Q''$ has a grouping'', define ``the explicit wave of $Q$''.} 

\begin{proposition} 
\label{goldfish-prop} 
Let $Q$ be an explicit-wave CCQ query, and let $Q'$ be a CCQ query such that $Q' \equiv_C Q$. Then $Q'$ has a {\em noncopy-permutation monomial class} whose multiplicity monomial is the {\em explicit-wave monomial} of $Q$. 
\end{proposition}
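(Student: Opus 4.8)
The plan is to obtain Proposition~\ref{goldfish-prop} as an essentially immediate consequence of the machinery already assembled in Section~\ref{q-prime-has-wave-sec}, with Proposition~\ref{goldfish-in-q-prime-prop} carrying the entire analytic weight. First I would record the reduction. Since $Q \equiv_C Q'$, Theorem~\ref{not-same-num-multiset-vars-thm} guarantees that $(Q,Q')$ is an equivalence-compatible CCQ pair, so $Q'$ meets the standing requirements of Section~\ref{basic-queries-sec} relative to $Q$, and the multiplicity functions ${\cal F}^{(Q)}_{(Q)}$ and ${\cal F}^{(Q')}_{(Q)}$ for the family $\{ D_{\bar{N}^{(i)}}(Q) \}$ are both well defined. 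I would then identify the terminology: the \emph{explicit-wave monomial} of $Q$ is exactly the wave ${\cal P}^{(Q)}_*$ of Definition~\ref{the-wave-def}, and a \emph{noncopy-permutation monomial class} is one whose noncopy signature is a permutation of the full list of multiset-noncopy variables of $Q$ --- equivalently, one whose multiplicity monomial carries the factor $\Pi_{j=1}^m N_j$. Because the noncopy part of ${\cal P}^{(Q)}_*$ is precisely $\Pi_{j=1}^m N_j$ (by Definition~\ref{the-wave-def}), any nonempty monomial class for $Q'$ whose multiplicity monomial equals ${\cal P}^{(Q)}_*$ is automatically a noncopy-permutation class; hence it suffices to produce \emph{any} nonempty monomial class ${\cal C}_*^{(Q')}$ for $Q'$ with multiplicity monomial ${\cal P}^{(Q)}_*$.

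The key step is to invoke Proposition~\ref{goldfish-in-q-prime-prop}, which states that for every copy-variable-ordering vector ${\bar K}_w$, the $m$-covering part of ${\cal F}^{(Q')}_{(Q)}[{\bar K}_w]$ contains the wave ${\cal P}^{(Q)}_*$ as a \emph{solid} term. By the definition of solid term given in Section~\ref{solid-phantom-sec}, a term ${\cal T}$ is solid in ${\cal F}^{(Q')}_{(Q)}[{\bar K}_w]$ precisely when there exists a nonempty monomial class ${\cal C}^{(Q')}$ for $Q'$ and for $\{ D_{\bar{N}^{(i)}}(Q) \}$ whose multiplicity monomial is ${\cal T}$. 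Applying this with ${\cal T} = {\cal P}^{(Q)}_*$ delivers the required monomial class ${\cal C}_*^{(Q')}$, completing the argument. Fixing any one admissible ${\bar K}_w$ suffices, since the witnessing monomial class does not depend on the chosen total order on the copy variables.

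The genuine difficulty of the statement is concentrated entirely in Proposition~\ref{goldfish-in-q-prime-prop}, and I would be explicit that the present proof merely unpacks a definition once that proposition is in hand. The \emph{easy} half --- that ${\cal P}^{(Q)}_*$ occurs in the $m$-covering part of ${\cal F}^{(Q')}_{(Q)}[{\bar K}_w]$ with a positive coefficient --- follows from Proposition~\ref{same-funct-in-q-prime-prop}: the equivalence $Q \equiv_C Q'$ forces ${\cal F}^{(Q)}_{(Q)}[{\bar K}_w]$ and ${\cal F}^{(Q')}_{(Q)}[{\bar K}_w]$ to be identical polynomials on each subdomain ${\cal N}^{({\bar K}_w)}$ (Proposition~\ref{equiv-funct-prop}), while the explicit-wave hypothesis on $Q$ makes ${\cal P}^{(Q)}_*$ a solid positive term of ${\cal F}^{(Q)}_{(Q)}$ (Propositions~\ref{m-covering-prop} and~\ref{goldfish-properties-prop}). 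The hard part --- upgrading \emph{positive coefficient} to \emph{solid term} for $Q'$ --- is exactly what Proposition~\ref{goldfish-in-q-prime-prop} supplies, and its proof is the one place where the explicit-wave assumption is indispensable: a phantom occurrence of the wave would, by Proposition~\ref{phantom-min-two-prop}, require a $min$-expression over two distinct copy-variable names, after which transposing those two names in ${\bar K}_w$ would alter the signature ${\cal S}(Q',{\bar K}_w)$ (Propositions~\ref{key-sig-prop} and~\ref{not-same-sigs-prop}) and contradict the ${\bar K}_w$-independence of the $m$-covering polynomial. Since that obstruction has already been resolved, Proposition~\ref{goldfish-prop} follows.
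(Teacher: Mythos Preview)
Your proposal is correct and takes essentially the same approach as the paper. The paper treats this statement (in its finalized form as Proposition~\ref{qprime-goldfish-prop}) exactly as you do: all of the analytic content is concentrated in Proposition~\ref{goldfish-in-q-prime-prop}, and once that result is in hand the definition of \emph{solid term} in Section~\ref{solid-phantom-sec} immediately yields the required nonempty monomial class for $Q'$; your identification of the ``explicit-wave monomial'' with the wave ${\cal P}^{(Q)}_*$ and of a ``noncopy-permutation monomial class'' with one whose multiplicity monomial carries the full factor $\Pi_{j=1}^m N_j$ is also the intended reading.
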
 

\subsubsection{Old Reasoning that Did Not Work}

Consider all the terms of ${\cal F}_{(Q)}^{(Q)}$ that have the product of all of $N_1,\ldots,N_m$ (in case $m \geq 1$); in case $m = 0$ consider {\em all} the monomials of ${\cal F}_{(Q)}^{(Q)}$. Call these monomials collectively ${\cal F}_{(Q)}^{(m)}(Q)$. By our assumption, $Q$ is an explicit-wave CCQ query.  \reminder{In case $m = 0$, the entire  ${\cal F}_{(Q)}(Q)$ must have just the exposed wave and {\em nothing else,} for some $\cal O$ etc. Hence in the remainder of this reasoning, we assume $m \geq 1$.}
Hence there does exist a permutation $\pi_0^{(m)}$ of $N_1,\ldots,N_m$, there does exist a monomial class ${\cal C}_0^{(Q)}$ with noncopy-signature $\pi_0^{(m)}$ and with a characteristic-wave copy-signature, and there does exist  a total-order vector $\cal O$ such that for all the elements $\bar{N}^{(i)}$ of $\cal N$ that satisfy all the inequalities of $Ord({\cal O})$, the multiplicity-monomial term \reminder{???} for $\pi_0^{(m)}$ is exactly the multiplicity monomial for the monomial class ${\cal C}_0^{(Q)}$. 

We show that it follows that for each CCQ query $Q'$ such that $Q \equiv_C Q'$, there exists a monomial class, call it ${\cal C}_0^{(Q')}$, whose noncopy-signature is a permutation of $N_1,\ldots,N_m$ and whose copy-signature is a characteristic wave. In ${\cal F}_{(Q)}^{(m)}(Q)$ {\em for the query $Q$,} group 

\reminder{Idea: Group all the elements of ${\cal F}_{(Q)}^{(m)}(Q)$ by permutation of $N_1,\ldots,N_m$. Within each such permutation, we have the size-of-Cartesian-product formula by the inclusion-exclusion principle. The exposed wave will be by itself. The group ${\cal F}_{(Q)}^{(m)}(Q')$ {\em for the query $Q'$} has exactly the same polynomial, in terms of $N_{m+1},\ldots,N_{m+w}$, as does ${\cal F}_{(Q)}^{(m)}(Q)$; hence we can group that polynomial (for $Q'$, in terms of $N_{m+1},\ldots,N_{m+w}$) exactly the same way as we did for $Q$. Then the exposed wave of $Q$ will be ``by itself'' in the result of that grouping and will be ``the odd one out''. Then, by assuming that $Q'$ does not have a characteristic-wave monomial class, we have to group that exposed wave with one of the grouping, but it will be ``the odd one out'' always. } 

\begin{itemize} 
	\item Take the polynomials ${\cal F}_{(Q)}^{(m)}(Q)$ and ${\cal F}_{(Q)}^{(m)}(Q')$; they must return the same value on all inputs in the domain for the total order $Ord({\cal O})$; $\cal O$ is the total-order vector for which $Q$ has the exposed wave. 
	\item Represent each of   ${\cal F}_{(Q)}^{(m)}(Q)$ and ${\cal F}_{(Q)}^{(m)}(Q')$ as sum of groups, where each group is the multivariate polynomial for the size of a union of {\em pairwise $\cal O$- nondominated sets}; make the {\em maximal number of groups} -- that is, if the size of a union of sets can be represented as a positive sum of two sizes of unions of (contributor) sets, then do represent as that sum. In other words, each group is for the size of union of non-disjoint sets. Clearly, the exposed wave for $Q$ is a separate group whose multivariate polynomial has exactly one monomial -- the characteristic-wave monomial for $Q$. 
	\item Prove that if, for two collections of pairwise $\cal O$-nondominated sets $A_1,\ldots,A_k$ and $B_1,\ldots,B_n$ it holds that the size of the union of the $A_j$'s is the same as the size of the union of the $B_l$'s then the two collections are identical to each other, up to permutation on the Cartesian products. (Proof: By contradiction. If the two collections are different then they have different top-level sum (that is, sum $|A_1| + \ldots + |A_k|$ is different from the sum $|B_1| + \ldots + |B_n|$), then the multivariate polynomials for the two top-level sums (which have the highest degrees in the original size-of-union polynomials) are not identical, then the original multivariate polynomials are not identical. Then contradiction.) 
	\item In the equation ${\cal F}_{(Q)}^{(m)}(Q)$ $=$ ${\cal F}_{(Q)}^{(m)}(Q')$ on the domain of $Ord({\cal O})$, cancel out all same-size unions. Then what remains is the sum on the left side (this sum has the characteristic-wave monomial group for $Q$) such that the sum on the right side does not have any element identical to the characteristic-wave monomial group for $Q$. Call the resulting equation $f_{\bar{N}}^{(m)}(Q)$ $=$ $f_{\bar{N}}^{(m)}(Q')$. 
	\item Prove that when $N_{m+1}$ $=$ $N_{m+2}$ $=$ $\ldots$ $=$ $N_{m+w}$, then for all groups of each of  $f_{\bar{N}}^{(m)}(Q)$ and $f_{\bar{N}}^{(m)}(Q')$, whenever a group's top-level sum has an elementary set with degree without ones ($1$-s), then the value of the size of the group on $N_{m+1}$ $=$ $N_{m+2}$ $=$ $\ldots$ $=$ $N_{m+w}$ $=$ $N^*$ is always $(N^*)^w$. 
	\item From the previous item, the number of such groups in $f_{\bar{N}}^{(m)}(Q)$ is the same as the number of such groups in $f_{\bar{N}}^{(m)}(Q')$. (To prove, just set $N_{m+1}$ $=$ $N_{m+2}$ $=$ $\ldots$ $=$ $N_{m+w}$ $=$ $N^*$) and count the left-hand-side coefficient and the right-hand-side coefficient for the term $(N^*)^w$. 
	\item Prove that for each group, setting each of $N_{m+1}$ $=$ $N_{m+2}$ $=$ $\ldots$ $=$ $N_{m+w}$ to the value {\em unity} results in the same size value ($= 1$) for each group in each of $f_{\bar{N}}^{(m)}(Q)$ and $f_{\bar{N}}^{(m)}(Q')$. Conclude that the total number of all groups in $f_{\bar{N}}^{(m)}(Q)$ equals the total number of all groups in $f_{\bar{N}}^{(m)}(Q')$. Conclude further (from previous item and from this item) that the total number of groups whose top-level sum has an elementary set with degree without ones, is the same in $f_{\bar{N}}^{(m)}(Q)$ and $f_{\bar{N}}^{(m)}(Q')$. 
	
	\item In $f_{\bar{N}}^{(m)}(Q)$ and $f_{\bar{N}}^{(m)}(Q')$, the {\em all-positive-coefficient} sum of all the top-degree (i.e., those not having {\em unity,} i.e., $= 1$, literals in the product) monomials is the same on both sides (otherwise $f_{\bar{N}}^{(m)}(Q)$ and $f_{\bar{N}}^{(m)}(Q')$ as totals would not be the same multivariate polynomials). Then, from the groups in $f_{\bar{N}}^{(m)}(Q)$ and $f_{\bar{N}}^{(m)}(Q')$ being all not-pairwise-the-same, we conclude that it is the {\em groupings} of these top-degree monomials that is the difference between $f_{\bar{N}}^{(m)}(Q)$ and $f_{\bar{N}}^{(m)}(Q')$. 
	
	\item By the number of ``plus'' and ``minus'' signs in the two polynomials $f_{\bar{N}}^{(m)}(Q)$ and $f_{\bar{N}}^{(m)}(Q')$ (after we ``open the parentheses'' of all the groups), we conclude that the number of sets being unioned within each group is pairwise the same between $f_{\bar{N}}^{(m)}(Q)$ and $f_{\bar{N}}^{(m)}(Q')$. (That is, there is a bijection from the set of groups in $f_{\bar{N}}^{(m)}(Q)$ to the set of groups in $f_{\bar{N}}^{(m)}(Q')$, such that the number of sets being unioned between each group and its image under the bijection is the same.) 
	
	\item By contrapositive, we assume that $f_{\bar{N}}^{(m)}(Q')$ does not have the characteristic-wave monomial of $Q$ (note that it has only $N_{m+1}$ through $N_{m+w}$ in the product, as the variables $N_1,\ldots,N_m$ are {\em not} used in either $f_{\bar{N}}^{(m)}(Q)$ or $f_{\bar{N}}^{(m)}(Q')$), call this monomial ${\cal P}^*$, either as a standalone grouping or as a top-level set in any of the size-of-nontrivial-union grouping. Then it must be that there are negative-coefficient terms in both of $f_{\bar{N}}^{(m)}(Q)$ and $f_{\bar{N}}^{(m)}(Q')$. 
	 In addition, by $f_{\bar{N}}^{(m)}(Q')$ being identical to $f_{\bar{N}}^{(m)}(Q)$ after all the groupings have been removed (i.e., after all the parentheses for the groupings have been removed in each polynomial), there are two cases: 

	\begin{itemize} 
		\item Suppose that 	the characteristic-wave monomial ${\cal P}^*$ of $Q$ is {\em absent} in both  $f_{\bar{N}}^{(m)}(Q)$ and  $f_{\bar{N}}^{(m)}(Q')$  after all the groupings have been removed. Then it must be that one of the size-of-union-of-sets groupings in $f_{\bar{N}}^{(m)}(Q')$, call this grouping $G^*$,  has ${\cal P}^*$ with the negative coefficient. This means that ${\cal P}^*$ is present in that grouping as a {\em below-the-top-level term.} From ${\cal P}^*$ being absent from  $f_{\bar{N}}^{(m)}(Q')$  after all the groupings have been removed, we conclude that ${\cal P}^*$ must also be present in one of the size-of-union-of-sets groupings in $f_{\bar{N}}^{(m)}(Q')$, where that grouping $G'$ is different from $G^*$ (!), as a {\em below-the-top-level term}  (from out assumption that ${\cal P}^*$ is not a top-level term in $f_{\bar{N}}^{(m)}(Q')$) with a positive coefficient. 
 	
		\item Suppose that the characteristic-wave monomial ${\cal P}^*$ of $Q$ is {\em present} in both  $f_{\bar{N}}^{(m)}(Q)$ and  $f_{\bar{N}}^{(m)}(Q')$  after all the groupings have been removed. Then it must be that one of the size-of-union-of-sets groupings in $f_{\bar{N}}^{(m)}(Q')$, call this grouping $G^*$, has ${\cal P}^*$ with the positive coefficient. By our assumption that ${\cal P}^*$ is not present in $f_{\bar{N}}^{(m)}(Q')$ either as a standalone grouping or as a top-level set in any of the size-of-nontrivial-union grouping,  this means that ${\cal P}^*$ is present in $G^*$ as a {\em below-the-top-level term.} 

	\end{itemize} 
	
	In both cases above, we discover ${\cal P}^*$ in $f_{\bar{N}}^{(m)}(Q')$, in some group, ``under'' some negative-coefficient level. In $f_{\bar{N}}^{(m)}(Q)$, this same (!!!) negative-coefficient level has under it a (positive-coefficient) term, call it ${\cal P}'$, such that ${\cal P}'$ is different from ${\cal P}^*$ (because ${\cal P}^*$ in $f_{\bar{N}}^{(m)}(Q)$ is not under any negative-coefficient level). 
	
	We now look at an arbitrary term, $T$, being directly under a negative-coefficient level, $L$ in some group. That term's (i.e., $T$'s) variables are directly dependent on the variables in $L$, because each $i$th (i.e., in position $i$) variable in $T$ is a $min$ (under the total order $Ord({\cal O})$) of all the (appropriate) $i$th variables in $L$. Hence we cannot replace the ${\cal P}'$ in $G^*$ by ${\cal P}^*$ because from ${\cal P}'$ and ${\cal P}^*$ not being identical to each other it follows that ${\cal P}'$ and ${\cal P}^*$ must be different in at least one position, under each ``vectorization/linearlization'' of each of ${\cal P}'$ and ${\cal P}^*$. Thus we have the desired contradiction, Q.E.D.


\end{itemize}

\subsection{Extended Example 2: Implicit Waves} 
\label{writeup-weird-ex-sec} 

Interestingly, in general, the function ${\cal F}_{(Q)}^{(Q'')}$ is {\em not} a multivariate polynomial (in terms of the variables in the vector $\bar{N}$) on the {\em entire} domain of the vector $\bar{N}$. For instance, in Example~\ref{writeup-weird-ex} in Section~\ref{writeup-weird-ex-sec}, for the query $Q$ of the example, we use vector $\bar{N}$ $=$ $[ N_1$ $\ N_2$ $\ N_3$ $\ N_4 ]$ to construct the family of databases $\{ D_{\bar{N}^{(i)}}(Q) \}$. We go on in that example to show that for the query $Q$ and for the family of databases $\{ D_{\bar{N}^{(i)}}(Q) \}$, (i) ${\cal F}_{(Q)}^{(Q)}$ $=$ $N_1 \times N_2 \times N_4^2$ for all vectors ${\bar{N}}^{(i)}$ where $N_3^{(i)} \leq N_4^{(i)}$, and (ii) ${\cal F}_{(Q)}^{(Q)}$ $=$ $N_1 \times N_2 \times N_3^2$  for all vectors ${\bar{N}}^{(i)}$ where $N_3^{(i)} \geq N_4^{(i)}$. That is, on the entire domain $\cal N$ of the vector $\bar N$, the function ${\cal F}_{(Q)}^{(Q)}$ in the example can be expressed as ${\cal F}_{(Q)}^{(Q)}$ $=$ $N_1 \times N_2 \times (max(N_3,N_4))^2$. (The text of Example~\ref{writeup-weird-ex} provides the details.) At the same time, as can be seen in the function ${\cal F}_{(Q)}^{(Q)}$ in Example~\ref{writeup-weird-ex}, on certain subsets of the domain $\cal N$ of the vector $\bar{N}$, the function ${\cal F}_{(Q)}$ ``behaves as'' a multivariate polynomial in terms of the variables in the vector $\bar{N}$.

\reminder{In this beginning of this subsection, need to summarize briefly the point(s) of the example(s) of this subsection. Would those points be that: 
\begin{itemize} 
	\item For the queries $Q$ and $Q'$ of Example~\ref{writeup-weird-ex}, there exists a same-scale containment mapping from $Q$ to $Q'$. The reason is, $Q'$ is an explicit-wave query, hence Theorem~\ref{magic-mapping-prop} holds.   
	\item At the same time, there does not exist a same-scale containment mapping from $Q'$ to $Q$, for the queries $Q$ and $Q'$ of Example~\ref{writeup-weird-ex}. 
	The reason is, the query $Q$ of Example~\ref{writeup-weird-ex} is not an explicit-wave query; thus, Theorem~\ref{magic-mapping-prop} does not apply. 
\end{itemize} 
} 

\begin{example}
\label{pre-writeup-weird-ex}
We continue working with the queries $Q$ and $Q'$ of Example~\ref{writeup-weird-ex}. 
For the reader's convenience, we repeat the queries here.  
\begin{tabbing} 
Hehetab b \= hehe \kill
$Q(X_1) \leftarrow r(X_1,Y_1,Y_2,X_2; Y_3), r(X_1,Y_1,Y_2,X_3; Y_4),$ \\
\> $\{ Y_1,Y_2,Y_3,Y_4 \} .$ \\
$Q'(X'_1) \leftarrow r(X'_1,Y'_1,Y'_2,X'_2; Y'_3), r(X'_1,Y'_1,Y'_2,X'_2; Y'_4),$ \\
\> $\{ Y'_1,Y'_2,Y'_3,Y'_4 \} .$ 
\end{tabbing} 

The only difference between the queries $Q$ and $Q'$, besides the straightforward variable renaming, is that the two subgoals of the query $Q$ have different set variables, $X_2$ and $X_3$, whereas the two subgoals of the query $Q'$ have the same set variable $X'_2$. 

We show in Proposition~\ref{weird-equiv-prop} that $Q \equiv_C Q'$. 
\end{example} 

\nop{ 
\begin{proposition} 
\label{weird-equiv-prop} 
For the queries $Q$ and $Q'$ of Example~\ref{writeup-weird-ex}, we have that $Q \equiv_C Q'$. 
\end{proposition} 

\begin{proof} 
We will prove the claim of Proposition~\ref{weird-equiv-prop} if we show that for an arbitrary database $D$ and for an arbitrary constant $a$ $\in$ $adom(D)$, the sets $\Gamma^{(a)}_{\bar{S}}(Q,D)$ and $\Gamma^{(a)}_{\bar{S}}(Q',D)$ are of the same cardinality. (Recall the definition of query answer under combined semantics.) To prove this, it is sufficient to show that (for the fixed database $D$ and) for an arbitrary 3-tuple $t$ of  constants from $adom(D)$, the sets $\Gamma_{\bar{S}}(Q,D)[t]$  and $\Gamma_{\bar{S}}(Q',D)[t]$ are of the same cardinality. Here, by the set $\Gamma_{\bar{S}}(Q,D)[t]$ we denote the set of all tuples in $\Gamma_{\bar{S}}(Q,D)$ such that the projection of each tuple on the variables $X_1,Y_1,Y_2$, in this order, is exactly the fixed tuple $t$. Similarly,  by the set $\Gamma_{\bar{S}}(Q',D)[t]$ we denote the set of all tuples in $\Gamma_{\bar{S}}(Q',D)$ such that the projection of each tuple on the variables $X'_1,Y'_1,Y'_2$, in this order, is exactly the fixed tuple $t$. 

We now prove the latter claim. For the fixed database $D$, for the remainder of this proof fix a tuple $t = (a,b,c)$, for some $a,b,c$ $\in$ $adom(D)$, as described above. 

(1) We first show that whenever  the set $\Gamma_{\bar{S}}(Q,D)[t]$ is not empty, the sets $\Gamma_{\bar{S}}(Q,D)[t]$ and $\Gamma_{\bar{S}}(Q',D)[t]$ are of the same cardinality $k^2$, for some constant $k$ $\in$ ${\mathbb N}_+$ where $k$ is a copy number of some ground atom of the database $D$. 

Suppose that the set $\Gamma_{\bar{S}}(Q,D)[t]$ is not empty. Then there must exist in $D$ ground atoms (perhaps identical to each other) $g_1 = r(a,b,c,d; e)$ and $g_2 = r(a,b,c,f; h)$, for some $d,f$ $\in$ $adom(D)$ and for some $e,h$ $\in$ ${\mathbb N}_+$. These atoms $g_1$ and $g_2$ must intuitively be the images of the first and of the second subgoal of the query $Q$, respectively, under a valid assignment mapping from $Q$ to $D$. That is, formally, for the set $\Gamma_{\bar{S}}(Q,D)[t]$ to be a nonempty set, it must be that the mapping $\{$ $X_1 \rightarrow a,$ $Y_1 \rightarrow b,$ $Y_2 \rightarrow c,$ $X_2 \rightarrow d,$ $X_3 \rightarrow e$, $Y_3 \rightarrow 1$, $Y_4 \rightarrow 1$  $\}$ is a valid assignment mapping from all the terms of the query $Q$ to the elements of $adom(D)$ $\bigcup$ ${\mathbb N}_+$. The validity of this assignment mapping is justified by the presence of the ground atoms $g_1$ and $g_2$ in the database $D$. 

We now consider all those ground atoms in relation $R$ in the database $D$, such that each of the atoms has $a,b,c$, in this order, as the values of the first three attributes of the relation $R$, from left to right. We know that the set, call it $S[Q]$, of all such atoms is not empty, as $g_1$ and $g_2$ of the previous paragraph will be elements of this set. Now let the constant $k$ $\in$ ${\mathbb N}_+$ be the maximal value of the copy number among all the ground atoms in the set $S[Q]$.  From the set $S[Q]$, choose an arbitrary atom, call it $g$, whose copy number is $k$. Let $g$ be $r(a,b,c,l; k)$, for some $l$ $\in$ $adom(D)$. 

We now argue that for each $n_1,n_2$ $\in$ $\{ 1,\ldots,k \}$ and for the constant $l$ in the ground atom $g$, the mapping $\mu_{(n_1,n_2,l)}$ $=$ $\{$ $X_1 \rightarrow a,$ $Y_1 \rightarrow b,$ $Y_2 \rightarrow c,$ $X_2 \rightarrow l,$ $X_3 \rightarrow l$, $Y_3 \rightarrow n_1$, $Y_4 \rightarrow n_2$  $\}$ is a valid assignment mapping from all the terms of the query $Q$ to the elements of $adom(D)$ $\bigcup$ ${\mathbb N}_+$. Indeed, the required fact comes straight from the definition of the set $\Gamma(Q,D)$ and from the presence of the atom $g$ in the database $D$. 

Further, we argue that for each natural number $n_1$ that is strictly greater than the constant $k$, for each $n_2$ $\in$ ${\mathbb N}_+$, and for {\em each} constant $l$ $\in$ $adom(D)$, the  mapping $\mu_{(n_1,n_2,l)}$ as defined above is not a valid assignment mapping from all the terms of the query $Q$ to the elements of $adom(D)$ $\bigcup$ ${\mathbb N}_+$. Indeed, it is sufficient to observe that the set $S[Q]$ does not have atoms whose copy number is greater than $k$. (Recall that $\mu_{(n_1,n_2,l)}$ fixes the images of the variables $X_1$, $Y_1$, and $Y_2$ to the respective elements of the tuple $t$ $=$ $(a,b,c)$.) We show in a similar way that for each natural number $n_2$ that is strictly greater than the constant $k$, for each $n_1$ $\in$ ${\mathbb N}_+$, and for {\em each} constant $l$ $\in$ $adom(D)$,  the  mapping $\mu_{(n_1,n_2,l)}$ is not a valid assignment mapping from all the terms of the query $Q$ to the elements of $adom(D)$ $\bigcup$ ${\mathbb N}_+$. 

From the facts established about the mappings $\mu_{(n_1,n_2)}$ we conclude that the set $\Gamma_{\bar{S}}(Q,D)[t]$ has exactly $k^2$ elements. Now consider the set $\Gamma_{\bar{S}}(Q',D)[t]$. It is easy to show (in fact, easier than for $\Gamma_{\bar{S}}(Q,D)[t]$ as we did above) that the set $\Gamma_{\bar{S}}(Q',D)[t]$ also has exactly $k^2$ elements. (For each valid assignment mapping $\mu$ from all the terms of the query $Q'$ to the elements of $adom(D)$ $\bigcup$ ${\mathbb N}_+$, such that $\mu$ maps $X'_1$ to $a$, $Y'_1$ to $b$ and $Y'_2$ to $c$, $\mu$ induces a mapping from both subgoals of the query $Q'$ into {\em the same} ground atom of the database $D$. Specifically, for the ground atom $g$ $\in$ $S[Q]$ as defined above, there exists a valid assignment mapping of this form $\mu$,  such that the mapping induces a mapping from both subgoals of the query $Q'$ into the atom $g$.)

(2) Now suppose that for the above fixed $D$ and $t$, the set $\Gamma_{\bar{S}}(Q',D)[t]$ is not empty. We show that in this case, the sets $\Gamma_{\bar{S}}(Q,D)[t]$ and $\Gamma_{\bar{S}}(Q',D)[t]$ are of the same cardinality $p^2$, for some constant $p$ $\in$ ${\mathbb N}_+$ where $p$ is a copy number of some ground atom of the database $D$. The proof is symmetric to the proof of the claim (1) above. Q.E.D. 
\end{proof} 
} 

\begin{example}
\label{second-writeup-weird-ex} 
We continue working with the queries $Q$ and $Q'$ of Example~\ref{writeup-weird-ex}. 
For the reader's convenience, we repeat the queries here.  
\begin{tabbing} 
Hehetab b \= hehe \kill
$Q(X_1) \leftarrow r(X_1,Y_1,Y_2,X_2; Y_3), r(X_1,Y_1,Y_2,X_3; Y_4),$ \\
\> $\{ Y_1,Y_2,Y_3,Y_4 \} .$ \\
$Q_{H^d}(X_1) \leftarrow r(X_1,Y_1,Y_2,X_2), r(X_1,Y_1,Y_2,X_3), \emptyset .$ \\
$Q'(X'_1) \leftarrow r(X'_1,Y'_1,Y'_2,X'_2; Y'_3), r(X'_1,Y'_1,Y'_2,X'_2; Y'_4),$ \\
\> $\{ Y'_1,Y'_2,Y'_3,Y'_4 \} .$ \\
$Q'_{H^d}(X'_1) \leftarrow r(X'_1,Y'_1,Y'_2,X'_2), r(X'_1,Y'_1,Y'_2,X'_2), \emptyset .$ 
\end{tabbing} 
(The query $Q_{H^d}$ for the query $Q$, as well as the query $Q'_{H^d}$ for the query $Q'$, is constructed as defined in Section~\ref{minimiz-query-def}.)

The summary of this example is as follows:
\begin{itemize} 
	\item We know that $Q \equiv_C Q'$ (see Proposition~\ref{weird-equiv-prop}).  
	\item We establish here that $Q'$ is an explicit-wave CCQ query, and $Q$ is {\em not} an explicit-wave CCQ query. (To establish these facts, we will need the above queries $Q_{H^d}$ and $Q'_{H^d}$.) 
	\item By Theorem~\ref{magic-mapping-prop}, we conclude that there exists a SCVM from the query $Q$ to the query $Q'$.  (One such mapping is $\{$ $X_1 \rightarrow X'_1$, $Y_1 \rightarrow Y'_1$, $Y_2 \rightarrow Y'_2$, $X_2 \rightarrow X'_2$, $Y_3 \rightarrow Y'_3$, $X_3 \rightarrow X'_2$, $Y_4 \rightarrow Y'_4$ $\}$.) 
	\item At the same time, because $Q$ is not an explicit-wave CCQ query, Theorem~\ref{magic-mapping-prop} does not apply w.r.t. the existence of a SCVM from the query $Q'$ to the query $Q$.  Indeed, it is easy to see that a SCVM from $Q'$ to $Q$ cannot exist: The two subgoals of $Q'$ must be mapped by any such mapping into distinct subgoals of the query $Q$. This condition is impossible to satisfy because, unlike the subgoals of $Q$, the subgoals of $Q'$ have the same relational pattern. 
\end{itemize}

We now begin the detailed exposition, by establishing that $Q'$ is an explicit-wave CCQ query, and $Q$ is {\em not} an explicit-wave CCQ query. 
Recall that the only difference between the queries $Q$ and $Q'$, besides the straightforward variable renaming, is that the two subgoals of the query $Q$ have different set variables, $X_2$ and $X_3$, whereas the two subgoals of the query $Q'$ have the same set variable $X'_2$. 

By Definition~\ref{expl-wave-def}, query $Q'$ is an explicit-wave CCQ query. Indeed, every containment mapping from query $Q'_{H^d}$ (see above) to itself is an identity mapping on each of $Y'_1$ and $Y'_2$ and maps each subgoal of the query into a copy of itself. 

At the same time, 
query $Q$ is {\em not} an explicit-wave CCQ query. Indeed, let $L^{(Q)}_{nc} = [Y_1 \ Y_2]$. (For the notation, see discussion of Definition~\ref{expl-wave-def}.) Consider containment mappings $\mu_1^{(Q)}$ and $\mu_2^{(Q)}$ from $Q_{H^d}$ to itself: Let $\mu_1^{(Q)}$ be the identity mapping on all the terms of $Q_{H^d}$, and let $\mu_2^{(Q)}$ be the result of replacing $X_2 \rightarrow X_2$ in $\mu_1^{(Q)}$ with $X_2 \rightarrow X_3$. Clearly, $\mu_1^{(Q)}$ and $\mu_2^{(Q)}$ agree on the $\pi^{(Q)}_{nc}$ $=$ $L^{(Q)}_{nc}$. It is easy to see that $\mu_1^{(Q)}$ and $\mu_2^{(Q)}$ map the first subgoal of $Q_{H^d}$ into different 
subgoals (the first and the second subgoal, respectively) of $Q_{H^d}$.

\paragraph{Functions ${\cal F}_{(Q)}^{(Q)}$ and ${\cal F}_{(Q)}^{(Q')}$ for the Databases\\ $\{ D_{\bar{N}^{(i)}}(Q) \}$} 
We continue here the discussion that we began in Example~\ref{writeup-weird-ex} in Section~\ref{beyond-easy-case-sec}. 

Observe that for each of  ${\cal C}_2^{(Q)}$ and ${\cal C}_3^{(Q)}$, the multiplicity monomial of that monomial class is the wave $\Pi_{j=1}^4 N_j$ of the query $Q$. (See Definition~\ref{the-wave-def}.)

At the same time, on the  above fixed database $D_{\bar{N}^{(i)}}(Q)$, the monomial class ${\cal C}_4^{(Q)}$ generates {\em all} the tuples in the set $\Gamma^{t^*_Q}_{\bar{S}}(Q,D_{\bar{N}^{(i)}}(Q))$. The reason is, the values $N_3^{(i)} = 3$ and $N_4^{(i)} = 5$ satisfy the total order whose total-order vector \reminder{Is this term correct?} is $[1 \ N_3 \ N_4]$. Hence, by the results of \reminder{which sections?} , the function ${\cal F}_{(Q)}^{(Q)}$ for all the value vectors (in $\cal N$) for this total-order vector (and hence for our fixed value vector $\bar{N}^{(i)}$) is the multivariate polynomial ${\cal F}_{(Q)}^{(Q)}[1 \ N_3 \ N_4]$ $=$ $N_1 \times N_2 \times N_4^2$. The multivariate polynomial ${\cal F}_{(Q)}^{(Q)}[1 \ N_3 \ N_4]$ is exactly the multiplicity monomial for the monomial class ${\cal C}_4^{(Q)}$, because the monomial class ${\cal C}_4^{(Q)}$ order-dominates all the elements of the set $\{ {\cal C}_1^{(Q)},$ ${\cal C}_2^{(Q)},$ ${\cal C}_3^{(Q)}$, ${\cal C}_4^{(Q)} \}$ w.r.t. the total order associated with the total-order vector $[1 \ N_3 \ N_4]$. 

 For all the value vectors (in $\cal N$) for the only other possible total-order vector $[1 \ N_4 \ N_3]$, the function ${\cal F}_{(Q)}^{(Q)}$ is the multivariate polynomial ${\cal F}_{(Q)}^{(Q)}[1 \ N_4 \ N_3]$ $=$ $N_1 \times N_2 \times N_3^2$, that is, ${\cal F}_{(Q)}^{(Q)}[1 \ N_4 \ N_3]$ is exactly the multiplicity monomial for the monomial class ${\cal C}_1^{(Q)}$. Observe that the characteristic-wave monomial $\Pi_{j=1}^4 N_j$ cannot order-dominate \reminder{Is ``dominate'' a correct term?} \reminder{Do I need domination by the $<$ relation, as opposed to the $\leq$ relation?} either multivariate polynomial unless $N_3 = N_4$. 

Function ${\cal F}_{(Q)}^{(Q')}$ for the query $Q'$ (rather than $Q$), on the same family of databases $\{ D_{\bar{N}^{(i)}}(Q) \}$, is identical to the function ${\cal F}_{(Q)}^{(Q)}$. That is,  ${\cal F}_{(Q)}^{(Q')}[1 \ N_3 \ N_4]$ $=$ $N_1 \times N_2 \times N_4^2$, and ${\cal F}_{(Q)}^{(Q')}[1 \ N_4 \ N_3]$ $=$ $N_1 \times N_2 \times N_3^2$. 

About the above functions  ${\cal F}_{(Q)}^{(Q)}$ and  ${\cal F}_{(Q)}^{(Q')}$, each built on the  family of databases $\{ D_{\bar{N}^{(i)}}(Q) \}$, we summarize that neither function is a multivariate polynomial (in terms of the variables in the vector $\bar{N}$) on the {\em entire} domain $\cal N$ of the vector $\bar{N}$. Indeed, on the entire domain $\cal N$ of the vector $\bar{N}$ we have that ${\cal F}_{(Q)}^{(Q)}$ $=$ ${\cal F}_{(Q)}^{(Q')}$ $=$ $N_1$ $\times$ $N_2$ $\times$ $max(N_3,N_4)^2$.

\paragraph{Functions ${\cal F}_{(Q)}^{(Q)}$ and ${\cal F}_{(Q)}^{(Q')}$ for the Databases\\ $\{ D_{\bar{N}^{(i)}}(Q') \}$} 
In this section of the example, we show how to construct the family of databases $\{ D_{\bar{N}^{(i)}}(Q') \}$ for the query $Q'$ (rather than for $Q$ as above), and will develop functions ${\cal F}_{(Q)}^{(Q)}$ and ${\cal F}_{(Q)}^{(Q')}$ w.r.t. the databases in the family. 


We begin by following Section~\ref{db-constr-sec} of the proof of Theorem~\ref{magic-mapping-prop}. Fix an $i \in {\mathbb N}_+$. Let the vector ${\bar{N}}^{(i)}$, for this fixed $i$, of values of the variables in the vector $\bar{N} = [N_1 \ N_2 \ N_3]$, be $\bar{N}^{(i)} = [1 \ 2 \ 3]$.  Here, each $N_j$ in $\bar{N}$ is generated for the variable $Y'_j$ of $Q'$, for $j \in \{ 1,2,3 \}$. 
The important difference between this construction and the  construction (discussed earlier in this example) of the family of databases $\{ D_{\bar{N}^{(i)}}(Q) \}$ for the query $Q$ is that for the query $Q'$ (but not for $Q$), the set ${\cal S}_{C(Q')}$ is a singleton set, due to the fact that the two subgoals of the query $Q'$ have the same relational template. We assign arbitrarily ${\cal S}_{C(Q')}$ $=$ $\{ r(X'_1,Y'_1,Y'_2,X'_2; Y'_3) \}$. (The only other choice would be to assign ${\cal S}_{C(Q')}$ $=$ $\{ r(X'_1,Y'_1,Y'_2,X'_2; Y'_4) \}$.)

As discussed above, assume $\bar{N}^{(i)} = [1 \ 2 \ 3]$ for the fixed $i \in {\mathbb N}_+$. We use $\nu_0(X'_1) = a$ (hence $t^*_{Q'} = (a)$) and $\nu_0(X'_2) = b$. Let $S^{(i)}_1 = \{ e \}$, and let $S^{(i)}_2 = \{ f,g \}$. These setting generate, for the fixed $i$, the database $D_{\bar{N}^{(i)}}(Q') = \{ r(a,e,f,b; 3),$ $r(a,e,g,b; 3)  \}$. We will refer to the ground atoms in the set $D_{\bar{N}^{(i)}}(Q')$, from left to right, as $d_1$ and $d_2$. Denote by $g_1$ the first subgoal of the query $Q$, and by $g_2$ its second subgoal. By construction of $D_{\bar{N}^{(i)}}(Q')$, we have that $\psi^{gen(Q')}_{\bar{N}^{(i)}}[d_1]$ $=$ $\psi^{gen(Q')}_{\bar{N}^{(i)}}[d_2]$ $=$ $g_1$. 

We now follow Sections~\ref{valid-map-sec} through \ref{easy-funct-case-sec} of the proof of Theorem~\ref{magic-mapping-prop}, to construct the monomial classes for the function ${\cal F}_{(Q)}^{(Q')}$, for the query $Q'$ and for the database  $D_{\bar{N}^{(i)}}(Q')$ as constructed above in this example. 
As a result of the steps, we obtain exactly one monomial class for the query $Q'$: Monomial class ${\cal C}_1^{(Q')}$ has noncopy-signature $[Y'_1 \ Y'_2]$ and copy-signature $[N_3 \ N_3]$; it contributes  to the set $\Gamma^{t^*_{Q'}}_{\bar{S}}(Q',D_{\bar{N}^{(i)}}(Q'))$, with columns (from left to right) $X'_1$ $Y'_1$ $Y'_2$ $Y'_3$ $Y'_4$, nine tuples $(a,e,f,1,1)$ through $(a,e,f,3,3)$ (that is, tuples $(a,e,f,1,1)$, $(a,e,f,1,2)$, $(a,e,f,1,3)$, $\ldots,$ $(a,e,f,3,2)$, $(a,e,f,3,3)$), as well as nine tuples $(a,e,g,1,1)$ through $(a,e,g,3,3)$. 

Observe that \reminder{By which definition?} ${\cal C}_1^{(Q')}$ is a characteristic-wave monomial class \reminder{Is this term correct?} for the query $Q'$ and for the family of databases $\{ D_{\bar{N}^{(i)}}(Q') \}$. The characteristic-wave monomial \reminder{Is this term correct?} for ${\cal C}_1^{(Q')}$ is $N_1 \times N_2 \times (N_3)^2$. Hence, by the results of \reminder{which sections?} , the function ${\cal F}_{(Q)}^{(Q')}$ for all the value vectors in $\cal N$ is the multivariate polynomial ${\cal F}_{(Q)}^{(Q')}$ $=$ $N_1 \times N_2 \times (N_3)^2$. 

Function ${\cal F}_{(Q)}^{(Q)}$ for the query $Q$ (rather than $Q'$), on the same family of databases $\{ D_{\bar{N}^{(i)}}(Q') \}$, is identical to the function ${\cal F}_{(Q)}^{(Q')}$. That is,  ${\cal F}_{(Q)}^{(Q)}$ $=$ $N_1 \times N_2 \times (N_3)^2$. 

About the above functions  ${\cal F}_{(Q)}^{(Q)}$ and  ${\cal F}_{(Q)}^{(Q')}$, each built on the  family of databases $\{ D_{\bar{N}^{(i)}}(Q') \}$, we summarize that each function is a multivariate polynomial (in terms of the variables in the vector $\bar{N}$) on the {\em entire} domain $\cal N$ of the vector $\bar{N}$. 
\end{example}

\subsection{Old: Showing that the Exposed Wave of $Q$ Induces an Explicit Wave of $Q'$ in Case $Q \equiv_C Q'$} 

\subsubsection{True Proof?} 

\reminder{This subsubsection is labeled as ``old'' because it seems to have problems with Definition~\ref{old-expl-wave-def}: See ``reminder'' inside that definition: What is in the reminder seems to be wrong. See corrections of this entire subsection in Section~\ref{goldfish-sec}.}

Suppose that for a CCQ query $Q$ and for the set $M_{noncopy}$ of multiset noncopy variables of $Q$, $|M_{noncopy}| = m \geq 1$. Then let  list $L^{(Q)}_{nc}$ be an arbitrary fixed ordering of the set of multiset noncopy variables of $Q$, and let  $\pi^{(Q)}_{nc}$ be an arbitrary permutation of the list $L^{(Q)}_{nc}$. For a containment mapping $\mu$ from the query $Q_{H^d}$ (see Section~\ref{minimiz-sec}) to itself, we say that $\mu$ {\em agrees with the permutation} $\pi^{(Q)}_{nc}$ if for each $i$th position in the list $L^{(Q)}_{nc}$, $1 \leq i \leq m$, $\mu$ maps the variable in that position into the variable in the $i$th position of $\pi^{(Q)}_{nc}$. (As an easy example, for the case where the permutation $\pi^{(Q)}_{nc}$ is the identity permutation of the list $L^{(Q)}_{nc}$, for  a containment mapping $\mu$ from the query $Q_{H^d}$ to itself, we say that $\mu$ agrees with the permutation $\pi^{(Q)}_{nc}$ $=$ $L^{(Q)}_{nc}$ whenever $\mu$ is an identity mapping w.r.t. each multiset noncopy variable of the query $Q$.) 

In case where CCQ query $Q$ does not have multiset noncopy variables, that is where $|M_{noncopy}| = m = 0$, the only possible list $L^{(Q)}_{nc}$ of multiset noncopy variables of $Q$ is an empty list by definition. In this case, the set of all permutations of the empty list $L^{(Q)}_{nc}$ is a singleton set comprising one empty list. Whenever the query $Q$ is such that $|M_{noncopy}| = m = 0$, we say that for each containment mapping $\mu$ from the query $Q_{H^d}$ to itself,  $\mu$ {\em agrees with the (only) permutation} $\pi^{(Q)}_{nc} = []$ of the list $L^{(Q)}_{nc} = []$. 

\begin{definition}{Explicit-wave CCQ query} 
\label{old-expl-wave-def}
A CCQ query $Q$ is an {\em explicit-wave (CCQ) query} if, for each permutation $\pi^{(Q)}_{nc}$ of the list $L^{(Q)}_{nc}$ of multiset noncopy variables of $Q$ and for each pair of containment mappings $(\mu_1,\mu_2)$ from $Q_{H^d}$ to itself such that each of $\mu_1$ and $\mu_2$ agrees with the permutation $\pi^{(Q)}_{nc}$, it holds that for each multiset-on subgoal, $s$, of the query $Q_{H^d}$, $\mu_1(s)$ and $\mu_2(s)$ are identical atoms, {\em whenever} each of $\mu_1(s)$ and $\mu_2(s)$ is a multiset-on subgoal of the query $Q_{H^d}$. \reminder{Recheck whether the ``whenever'' part of this definition is necessary. That is, when at least one of $\mu_1(s)$ and $\mu_2(s)$ is {\em not} a multiset-on subgoal of $Q_{H^d}$, then there is no danger in $\mu_1$ and $\mu_2$ mapping $s$ into {\em different} subgoals of $Q_{H^d}$, right?} 
\end{definition} 

It is easy to see that a CCQ query $Q$ is or is not an explicit-wave query regardless of the choice of the ordering $L^{(Q)}_{nc}$ of the variables in the set  $M_{noncopy}$. 

\reminder{Do I absolutely need the ``{\em for each} permutation $\pi^{(Q)}_{nc}$ of the list $L^{(Q)}_{nc}$'' in Definition~\ref{old-expl-wave-def}, as opposed to ``{\em there exists a} permutation $\pi^{(Q)}_{nc}$ of the list $L^{(Q)}_{nc}$''?  I guess I do need ``for each permutation'' to prove Theorem~\ref{magic-mapping-prop}, in order to have no negative terms in the {\em entire} function (in terms of $N_{m+1}$ $\ldots$ $N_{m+w}$) for $Q$, call that function $f^{(Q)}_p$, where that function is multiplied by $\Pi_{j=1}^m N_j$, that is, by the term representing all permutations of the list of all (if any) multiset noncopy variables of $Q$. If the function $f^{(Q)}_p$ has no negative terms, then it must be {\em the same} multivariate polynomial (in terms of $N_{m+1}$ $\ldots$ $N_{m+w}$) on {\em all} the total-order subdomains of the domain $\cal N$ of the vector $\bar N$. Then the corresponding function $f^{(Q')}_p$ {\em for the query} $Q'$, $Q' \equiv_C Q$, would also be forced to be the same  multivariate polynomial (in terms of $N_{m+1}$ $\ldots$ $N_{m+w}$) on {\em all} the total-order subdomains of the domain $\cal N$ of the vector $\bar N$, and hence would also be forced to have no negative terms. {\em Whenever $f^{(Q')}_p$ has no negative terms, it must have -- as a ``real''  term (as opposed to ``imaginary'' term, which would be coming from the union-of-sets formula by the inclusion-exclusion principle) -- the characteristic wave of $Q$ (by $f^{(Q')}_p$ being the same multivariate polynomial as $f^{(Q)}_p$ on the entire domain $\cal N$ of $\bar N$, and by $f^{(Q)}_p$ having the characteristic wave of $Q$ [because $Q$ is an explicit-wave CCQ query]). Hence we can construct a SCVM from $Q'$ to $Q$, and hence such a mapping exists.} 

Proving that function $f^{(Q')}_p$ does not have negative terms: 
First we observe that in case $m = |M_{noncopy}| \leq 1$, function $f^{(Q)}_p$ has exactly one term, the characteristic-wave monomial of $Q$, on the entire domain $\cal N$ of $\bar N$. (This is immediate from Definition~\ref{old-expl-wave-def}.) Hence, by $f^{(Q')}_p$ $\equiv$ $f^{(Q)}_p$ on all of $\cal N$, we conclude that $f^{(Q')}_p$ does not have negative terms in case $m = 0$. Hence, in the remainder of this proof we assume $m \geq 2$. 
Under this assumption, the proof is by contradiction: 
Suppose that function $f^{(Q')}_p$ has negative terms. 
This is only possible when (at least) one of the size-of-union terms in $f^{(Q')}_p$  is for a union of two (or more) sets where the two sets do not unconditionally-dominate each other.  
Thus, there must exist at least two monomial classes, ${\cal C}_1^{(Q')}$ and ${\cal C}_2^{(Q')}$, such that: 
\begin{itemize} 
	\item[(i)]  the noncopy-signatures of ${\cal C}_1^{(Q')}$ and ${\cal C}_2^{(Q')}$ are the same, and this (shared) noncopy-signature includes exactly once each variable in $M_{noncopy}$ (and thus does not include any other terms of $Q$); 
	\item[(ii)] the copy-signature  of ${\cal C}_1^{(Q')}$ is distinct from the copy-signature of ${\cal C}_2^{(Q')}$, and neither copy-signature unconditionally-dominates the other. 
\end{itemize} 

From (ii) above, there must exist two distinct positions $j \neq k$, such that $1 \leq j \leq r$ and $1 \leq k \leq r$, and such that the copy-signature  of ${\cal C}_1^{(Q')}$ has term $T_1^{(j)} \neq 1$ in position $j$ and has term $T_1^{(k)} \neq 1$ in position $k$, and such that the copy-signature  of ${\cal C}_2^{(Q')}$ has term $T_2^{(j)} \neq 1$ in position $j$ and has term $T_2^{(k)} \neq 1$ in position $k$, where there exists a total-order vector ${\cal O}_1$ such that $T_1^{(j)} < T_2^{(j)}$ and $T_1^{(k)} > T_2^{(k)}$ under the {\em strict} total order for ${\cal O}_1$, and, conversely, there exists a total-order vector ${\cal O}_2$ such that $T_1^{(j)} > T_2^{(j)}$ and $T_1^{(k)} < T_2^{(k)}$ under the {\em strict} total order for ${\cal O}_2$. (This condition is immediate from the fact that there is no unconditional-dominance ``in either direction'' between the copy-signatures of ${\cal C}_1^{(Q')}$ and of ${\cal C}_2^{(Q')}$.) It follows immediately that $T_1^{(j)}$ and $T_2^{(j)}$ are two {\em distinct} variables in $\{ N_{m+1},$ $\ldots,$ $N_{m+r} \}$, and, similarly, $T_1^{(k)}$ and $T_2^{(k)}$ are two {\em distinct} variables in $\{ N_{m+1},$ $\ldots,$ $N_{m+r} \}$.

W.l.o.g., choose the above (fixed) $j$, and consider the subgoal of the query $Q'$ that has the copy variable $Y'_{m+j}$; call this subgoal $s'_j$. Clearly, $s'_j$ is a copy-sensitive (as opposed to relational) subgoal of $Q'$. By the above reasoning, the monomial-class mapping for the monomial class ${\cal C}_1^{(Q')}$, call this mapping $\mu'_1$, maps $s'_j$ to a copy-sensitive subgoal of $Q$, say to $s_1$ with copy variable $Y^{(1)}$ $\in$ $\{ Y_{m+1},$ $\ldots,$ $Y_{m+w} \}$. Similarly, the monomial-class mapping for the monomial class ${\cal C}_2^{(Q')}$, call this mapping $\mu'_2$, maps $s'_j$ to a copy-sensitive subgoal of $Q$, say to $s_2$ with copy variable $Y^{(2)}$ $\in$ $\{ Y_{m+1},$ $\ldots,$ $Y_{m+w} \}$, such that $Y^{(1)}$ $\neq$ $Y^{(2)}$. From $Y^{(1)}$ $\neq$ $Y^{(2)}$ we have that $s_1$ and $s_2$ are two distinct elements of the set ${\cal S}_{C(Q)}$; hence $s_1$ and $s_2$ have {\em distinct} relational templates. 

Observe that by the reasoning in the previous paragraph (specifically from $Y^{(1)}$ $\neq$ $Y^{(2)}$), it must be that $w \geq 2$, and hence $r \geq 2$, whenever we assume that $f^{(Q')}_p$ has negative terms. 

From the monomial classes  ${\cal C}_1^{(Q')}$ and  ${\cal C}_2^{(Q')}$ having the same noncopy-signature, 
we conclude that the subgoal $s'_j$ of the query $Q'$ must have at least one set variable. (Indeed, suppose toward contradiction that for all terms of $s'_j$ that are not the copy variable of $s'_j$, each such term is a head variable of $Q'$, a constant used in $Q'$, or a multiset noncopy variable of $Q'$. Then the relational templates of  $\mu'_1(s'_j)$ and of $\mu'_2(s'_j)$ must coincide, hence $Y^{(1)}$ $\neq$ $Y^{(2)}$ cannot hold. We establish this result by recalling that each monomial-class mapping maps each constant of $Q'$ into itself, maps each head variable of $Q'$ into the ``corresponding'' head variable of $Q$, and maps each multiset-noncopy variable of $Q'$ ``according to'' the noncopy-signature of the monomial class for the mapping.) Denote by $X'_*$ this set variable of $Q'$ in $s'_j$; let $n \geq 1$ be the position of $X'_*$ in the relational template of $s'_j$. 

We now obtain that neither subgoal $s_1$ $=$ $\mu'_1(s'_j)$ nor subgoal $s_2$ $=$ $\mu'_2(s'_j)$ of the query $Q$ can have a set variable in the position $n$. The proof is by contradiction: assuming that  (w.l.o.g.) $s_1$ has a  a set variable in the position $n$

} 

Query $Q'$ of Example~\ref{writeup-weird-ex} (in Section~\ref{writeup-weird-ex-sec}) is an explicit-wave CCQ query.  Indeed, every containment mapping from query $Q'_{H^d}$ (see Example~\ref{writeup-weird-ex}) to itself is an identity mapping on each of $Y'_1$ and $Y'_2$ and maps each subgoal of the query into a copy of itself. 

On the other hand, query $Q$ of Example~\ref{writeup-weird-ex} is {\em not} an explicit-wave CCQ query. Indeed, let $L^{(Q)}_{nc} = [Y_1 \ Y_2]$. Consider containment mappings $\mu_1^{(Q)}$ and $\mu_2^{(Q)}$ from $Q_{H^d}$ (see Example~\ref{writeup-weird-ex}) to itself: Let $\mu_1^{(Q)}$ be the identity mapping, and let $\mu_2^{(Q)}$ be the result of replacing $X_2 \rightarrow X_2$ in $\mu_1^{(Q)}$ with $X_2 \rightarrow X_3$. Clearly, $\mu_1^{(Q)}$ and $\mu_2^{(Q)}$ agree on the $\pi^{(Q)}_{nc}$ $=$ $L^{(Q)}_{nc}$. It is easy to see that $\mu_1^{(Q)}$ and $\mu_2^{(Q)}$ map the first subgoal of $Q_{H^d}$ into different {\em multiset-on} subgoals (the first and the second subgoal, respectively) of $Q_{H^d}$. 

Sufficient conditions for CCQ query $Q$ to be an explicit-wave query: For the following two classes of queries, each query in each class is an explicit-wave query: 
\begin{itemize} 
	\item The class of all CCQ queries where each query has at most one copy-sensitive subgoal. (The proof is immediate from Definition~\ref{old-expl-wave-def}.)  
	This class encompasses the class of all set queries, as well as the class of all bag-set queries. (In each of of the latter query classes, each query has zero copy-sensitive subgoals.) 
	\item The class of all CCQ queries whose each copy-sensitive subgoal has zero set variables. (For the proof, see Proposition~\ref{old-explicit-wave-sufficient-prop}.) This class, which we denote by  ${\cal Q}_{cszs}$, encompasses: 
	\begin{itemize} 
		\item The class of all queries that do not have set variables. This class encompasses the class of all bag queries, as well as the class of all bag-set queries. 
		\item The class of all queries that have no copy-sensitive subgoals. This class encompasses the class of all set queries, as well as the class of all bag-set queries. 
	\end{itemize} 
\end{itemize} 

From the above sufficient conditions for a CCQ query to be an explicit-wave query, we have that Theorem~\ref{magic-mapping-prop} has the following special case:\footnote{Another special case of Theorem~\ref{magic-mapping-prop} results from replacing item (i) in the statement of Theorem~\ref{old-cszs-magic-thm} with ``(i) $Q$ has at most one copy-sensitive subgoal.''}  

\begin{theorem} 
\label{old-cszs-magic-thm} 
Given CCQ queries $Q$ and $Q'$, such that (i) $Q$ $\in$ ${\cal Q}_{cszs}$, and (ii) $Q \equiv_C Q'$. Then  there exists a SCVM from $Q'$ to $Q$. 
\end{theorem}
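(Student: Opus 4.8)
The plan is to derive the theorem from two results already available in the excerpt: the structural lemma that membership in ${\cal Q}_{cszs}$ forces $Q$ to be explicit wave (Proposition~\ref{suffic-for-expl-wave-prop}), and the asymmetric necessary condition for equivalence of explicit-wave queries (Theorem~\ref{magic-mapping-prop}). The class ${\cal Q}_{cszs}$ is defined as exactly the CCQ queries each of whose copy-sensitive subgoals has no set variables, which is verbatim the antecedent of Proposition~\ref{suffic-for-expl-wave-prop}. Hence the single nontrivial content of the present theorem is the bridge ``$Q \in {\cal Q}_{cszs} \Rightarrow Q$ is explicit wave''; once that is in hand, Theorem~\ref{magic-mapping-prop} applied to the pair $(Q,Q')$ under hypothesis (ii) $Q \equiv_C Q'$ delivers the desired SCVM from $Q'$ to $Q$.

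First I would record the bookkeeping step: unfold the definition of ${\cal Q}_{cszs}$ and observe that $Q \in {\cal Q}_{cszs}$ is precisely the hypothesis of Proposition~\ref{suffic-for-expl-wave-prop}, so that proposition immediately yields that $Q$ is an explicit-wave query. For completeness I would then re-verify, rather than merely cite, why the ${\cal Q}_{cszs}$ condition makes Definition~\ref{expl-wave-def} hold. If $Q$ has at most one copy-sensitive subgoal, part (1) of Definition~\ref{expl-wave-def} applies and we are done, so assume $Q$ has at least two copy-sensitive subgoals and consider part (2). Fix any pair $(\mu_1,\mu_2)$ of noncopy-permuting GCMs from $Q_{ce}$ to itself that agree on $M_{noncopy}$, and let $s$ be any original copy-sensitive subgoal of $Q$. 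The key observation is that the relational template of $s$ mentions only head variables, constants, and multiset noncopy variables, precisely because $Q \in {\cal Q}_{cszs}$ guarantees that $s$ carries no set variable among its arguments. On each of these three kinds of term, $\mu_1$ and $\mu_2$ act identically: a GCM fixes every constant (condition (2) of Definition~\ref{multiset-homom-def}) and maps the head vector to itself (condition (1), which fixes every head variable even when head variables repeat), while the hypothesis that $\mu_1$ and $\mu_2$ agree on $M_{noncopy}$ forces them to coincide on every multiset noncopy variable. Therefore $\mu_1(s)$ and $\mu_2(s)$ have the same relational template, which is exactly condition (2) of Definition~\ref{expl-wave-def}.

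Having established that $Q$ is explicit wave, I would close by invoking Theorem~\ref{magic-mapping-prop} directly on the pair $(Q,Q')$: its hypotheses are (i) $Q$ explicit wave, now proved, and (ii) $Q \equiv_C Q'$, given; its conclusion is the existence of a SCVM from $Q'$ to $Q$, which is the statement to be proved. The main obstacle in this deduction lies entirely inside Theorem~\ref{magic-mapping-prop}, whose proof via the wave/monomial-class machinery of Sections~\ref{db-constr-sec} through~\ref{q-prime-has-wave-sec} does the genuine heavy lifting and may be assumed. At the level of the present argument the only point requiring care is the clean matching of the ${\cal Q}_{cszs}$ hypothesis to the explicit-wave condition: the sole place where two agreeing noncopy-permuting GCMs could produce images of $s$ with differing relational templates is on a set variable of $s$, and membership in ${\cal Q}_{cszs}$ is exactly what excludes set variables from copy-sensitive subgoals, thereby eliminating that possibility.
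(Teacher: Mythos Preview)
Your proposal is correct and follows essentially the same route as the paper: the paper's proof is the one-line observation that the result is immediate from Theorem~\ref{magic-mapping-prop} together with the proposition that membership in ${\cal Q}_{cszs}$ implies explicit-wave (Proposition~\ref{old-explicit-wave-sufficient-prop}, equivalently Proposition~\ref{suffic-for-expl-wave-prop}). Your write-up simply unpacks that second step in more detail, re-deriving why the absence of set variables in copy-sensitive subgoals forces condition (2) of Definition~\ref{expl-wave-def}, which is exactly the content of the cited proposition.
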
 

The proof of Theorem~\ref{old-cszs-magic-thm} is immediate from Theorem~\ref{magic-mapping-prop} and from Proposition~\ref{old-explicit-wave-sufficient-prop}. 

We now further special-case Theorem~\ref{old-cszs-magic-thm} using the three traditional semantics for conjunctive queries. 

\begin{theorem} 
\label{old-bag-magic-thm} 
Given bag CQ queries $Q$ and $Q'$, such that $Q \equiv_B Q'$. Then  there exists an isomorphic mapping from $Q'$ to $Q$. 
\end{theorem}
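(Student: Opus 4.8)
The statement to prove is Theorem~\ref{old-bag-magic-thm}: for bag CQ queries $Q$ and $Q'$ such that $Q \equiv_B Q'$, there exists an isomorphic mapping from $Q'$ to $Q$. This is essentially the ``if'' direction of the Chaudhuri--Vardi characterization (Theorem~\ref{prelim-cv-theorem}(1)), recovered here as a special case of the main machinery.

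The plan is to derive this from Theorem~\ref{magic-mapping-prop} together with the fact (recorded in Section~\ref{query-containment} of the Preliminaries) that for bag queries $Q$ and $Q'$, $Q \sqsubseteq_B Q'$ iff $Q \sqsubseteq_C Q'$, hence $Q \equiv_B Q'$ iff $Q \equiv_C Q'$. First I would observe that every bag CQ query, viewed as a CCQ query, has only copy-sensitive subgoals and no set variables; in particular each copy-sensitive subgoal has zero set variables, so by Proposition~\ref{suffic-for-expl-wave-prop} a bag query is an explicit-wave query. Since $Q$ is explicit wave and $Q \equiv_C Q'$, Theorem~\ref{magic-mapping-prop} gives a SCVM from $Q'$ to $Q$; by symmetry (equivalence is symmetric and $Q'$ is also explicit wave) there is also a SCVM from $Q$ to $Q'$.

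Next I would argue that a SCVM between two bag CQ queries is forced to be an isomorphism. The key facts are: (a) by Corollary~\ref{not-same-num-multiset-vars-prop}, $|M_{copy}| = |M'_{copy}|$, and since both queries are bag queries with no relational subgoals and no multiset noncopy variables, the number of subgoals of $Q$ equals $|M_{copy}|$ and likewise for $Q'$, so $Q$ and $Q'$ have the same number of subgoals; (b) each bag query is its own regularized and deregularized version (no relational subgoals, no duplicate-atom issues of the relevant kind), so a CVM from $Q'$ to $Q$ is, via Proposition~\ref{cvm-is-gcm-prop}, a GCM from $Q'$ to $Q$ itself; and (c) by Definition~\ref{magic-mapping-def}, a CVM induces a surjection from the copy-sensitive subgoals of $Q'$ onto those of $Q$, and an SCVM restricts to a bijection between the multiset-variable sets. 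Combining the equal subgoal counts with the surjectivity on subgoals forces the SCVM from $Q'$ to $Q$ to be a bijection on subgoals; together with the GCM property (head to head, body into body) and the bijection on copy variables, this says precisely that the SCVM is a one-to-one containment mapping from $Q'$ onto $Q$ inducing a bijection $M' \to M$. Running the same argument on the SCVM from $Q$ to $Q'$ supplies the reverse map with the symmetric properties, which is exactly the definition of $Q$ and $Q'$ being isomorphic (as stated just after Theorem~\ref{prelim-cv-theorem}).

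I do not anticipate a serious obstacle: the entire content is packaged inside Theorem~\ref{magic-mapping-prop} and Proposition~\ref{suffic-for-expl-wave-prop}. The one place requiring a little care is step (b)/(c) above --- making the bookkeeping about ``same number of subgoals'' airtight, in particular checking that for a bag query the map from copy variables to subgoals is a bijection and that no duplicate subgoals are present (so that ``isomorphic'' in the sense of Theorem~\ref{prelim-cv-theorem} matches what we get), and confirming that the deregularization step in Proposition~\ref{cvm-is-gcm-prop} is trivial for bag queries so that the SCVM really is a GCM into $Q$ and not merely into a deregularized variant. These are routine once the definitions are unfolded, so the main work is in the citation bookkeeping rather than in any new argument.
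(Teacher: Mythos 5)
Your proof is correct and follows essentially the same route as the paper's: bag queries are explicit-wave by Proposition~\ref{suffic-for-expl-wave-prop}, Theorem~\ref{magic-mapping-prop} yields SCVMs in both directions, and counting subgoals via copy variables together with the surjectivity of CVMs on copy-sensitive subgoals forces each SCVM to be an isomorphism in the sense defined after Theorem~\ref{prelim-cv-theorem}. One small correction to your step (b): the deregularized version of a bag query is \emph{not} the query itself (by Definition~\ref{regulariz-def} it adds the relational templates of the copy-sensitive subgoals), but this does not affect the argument, since condition (5) of Definition~\ref{magic-mapping-def} already sends every copy-sensitive subgoal of $Q'$ to a copy-sensitive subgoal of $Q$ itself, so the SCVM maps the body of $Q'$ into the body of $Q$ without any appeal to deregularization.
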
 

Theorem~\ref{old-bag-magic-thm} is the ``if'' part of Theorem \reminder{??? -- put in here the theorem ID} of \cite{ChaudhuriV93}. Its proof is immediate from the facts that (i) each bag CQ query is a CCQ query that belongs to the class ${\cal Q}_{cszs}$,  hence by Theorem~\ref{magic-mapping-prop} (using $Q \equiv_B Q'$) there must exist a SCVM from $Q'$ to $Q$, and that (ii) for every SCVM from a bag CQ query to another bag CQ query, the mapping is an isomorphism. (The part (ii) is immediate from three straightforward observations, as follows. (1) Two equivalent bag CQ queries must have the same number of copy variables (see Theorem~\ref{not-same-num-multiset-vars-thm}); (2) A bag CQ query has zero relational subgoals (and hence, in particular, the regularized version of a bag CQ query is unique and is ``the query itself''); and (3) A SCVM from CCQ query $Q'$ (from CCQ query $Q$, respectively) to CCQ query $Q$ (to CCQ query $Q'$, respectively) induces a surjection from the set of copy-sensitive subgoals of $Q'$ (of $Q$, respectively) to the set of copy-sensitive subgoals of $Q$ (of $Q'$, respectively).) 

\begin{theorem} 
\label{old-bag-set-magic-thm} 
Given bag-set CQ queries $Q$ and $Q'$, such that $Q \equiv_{BS} Q'$. Then  there exists an isomorphic mapping from the regularized version of the query $Q'$ to the regularized version of the query $Q$. 
\end{theorem}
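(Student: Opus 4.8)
The plan is to obtain Theorem~\ref{old-bag-set-magic-thm} as a special case of Theorem~\ref{magic-mapping-prop}, exactly in the spirit of Theorems~\ref{old-bag-magic-thm} and~\ref{old-cszs-magic-thm}. First I would record three preliminary facts. (a)~A bag-set CQ query has no copy-sensitive subgoals, so vacuously every copy-sensitive subgoal has no set variables; hence by Proposition~\ref{suffic-for-expl-wave-prop} (equivalently, by part~(1) of Definition~\ref{expl-wave-def}, since it has at most one copy-sensitive subgoal) both $Q$ and $Q'$ are explicit-wave queries. (b)~For bag-set CQ queries, $Q \equiv_{BS} Q'$ implies $Q \equiv_C Q'$ (Section~\ref{prelim-section}). (c)~For a bag-set query $Q$ one has $M_{copy}=\emptyset$, $Q$ has no set variables at all, and the regularized version $Q_r$ is obtained simply by deleting duplicate relational atoms; in particular $Q_r$ has the same set of terms and the same set $M_{noncopy}$ of (multiset noncopy) variables as $Q$ (Proposition~\ref{regulariz-prop}), every subgoal of $Q_r$ is a relational atom, and all subgoals of $Q_r$ are pairwise distinct. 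The same remarks apply to $Q'$.

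Combining (a), (b) with Theorem~\ref{magic-mapping-prop} gives a SCVM $\mu$ from $Q'$ to $Q$ and, symmetrically (since $Q'$ is explicit wave and $Q' \equiv_C Q$), a SCVM $\mu'$ from $Q$ to $Q'$. Using (c) together with Proposition~\ref{cvm-is-gcm-prop}, the map $\mu$ is also a CVM from $Q'_r$ to $Q_r$ (each relational subgoal of $Q'_r$ is a subgoal of $Q'$ and its $\mu$-image, being a relational atom since $Q$ has no copy-sensitive subgoals, lies in $Q_r$); because $\mu$ restricted to $M'$ is a bijection onto $M$ it is in fact a SCVM from $Q'_r$ to $Q_r$, and likewise $\mu'$ is a SCVM from $Q_r$ to $Q'_r$. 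The key step I would then carry out is to show $\mu$ is injective on the terms of $Q'_r$: by Definition~\ref{magic-mapping-def} it maps the head variables of $Q'_r$ onto the head variables of $Q_r$, fixes every constant, and restricts to a bijection from $M'_{noncopy}$ onto $M_{noncopy}$; since these three target sets are pairwise disjoint and, by~(c), $Q'_r$ has no terms other than head variables, constants, and multiset noncopy variables, $\mu$ is injective on all terms of $Q'_r$. Hence $\mu$ sends distinct subgoals of $Q'_r$ (distinct relational atoms, so differing in some argument position) to distinct subgoals of $Q_r$, i.e.\ $\mu$ is an injection from the subgoals of $Q'_r$ into the subgoals of $Q_r$; by the same argument $\mu'$ injects the subgoals of $Q_r$ into those of $Q'_r$. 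Therefore $Q_r$ and $Q'_r$ have equally many subgoals and both $\mu$ and $\mu'$ are bijections on subgoals.

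To conclude, I would note that $\mu$ is now a bijection between the subgoals of $Q'_r$ and $Q_r$ that is injective on terms, maps head variables onto head variables, fixes constants, and induces a bijection $M'_{noncopy}\to M_{noncopy}$; since every term of $Q_r$ occurs in some subgoal of $Q_r$ and every subgoal of $Q_r$ is the $\mu$-image of a subgoal of $Q'_r$, the map $\mu$ is also surjective on terms, hence a bijection on terms whose inverse is again a homomorphism. Thus $\mu$ is an isomorphism from $Q'_r$ to $Q_r$ in the sense of the paper's definition of isomorphic queries (it induces the bijection $M' = M'_{noncopy}\to M_{noncopy} = M$, and $\mu'$ witnesses the reverse direction), which is exactly the claim. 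The step I expect to be the main obstacle is the passage from ``$\mu$ is a SCVM between the original queries'' to ``$\mu$ induces an isomorphism of the regularized versions'': one must verify, via Propositions~\ref{cvm-is-gcm-prop} and~\ref{regulariz-prop}, that the SCVM survives regularization of \emph{both} queries, and that the absence of set variables in bag-set queries is precisely what upgrades the SCVM's injectivity on $\bar{X}' \cup P' \cup M'_{noncopy}$ to injectivity on all terms and hence on subgoals; once that is in place, the remainder is a routine two-directional counting argument.
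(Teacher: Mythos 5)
Your proposal is correct and follows essentially the same route as the paper's own argument: bag-set queries are explicit-wave (vacuously, having no copy-sensitive subgoals), so Theorem~\ref{magic-mapping-prop} yields SCVMs in both directions; since every term of a bag-set query is a head variable, a constant, or a multiset noncopy variable, each SCVM is injective on terms and hence on the (pairwise-distinct) subgoals of the regularized versions; and the two-directional counting argument then upgrades injectivity to the required isomorphism. Your treatment of the regularization step and of term-injectivity is somewhat more explicit than the paper's, but the decomposition and the key observations are the same.
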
 

Theorem~\ref{old-bag-set-magic-thm} is the ``if'' part of Theorem \reminder{??? -- put in here the theorem ID} of \cite{ChaudhuriV93}. Its proof is immediate from the facts that: 
\begin{itemize} 
	\item[(i)] Each bag-set CQ query is a CCQ query that belongs to the class ${\cal Q}_{cszs}$, hence by Theorem~\ref{magic-mapping-prop} (using $Q \equiv_{BS} Q'$) there must exist a SCVM from $Q'$ to $Q$, and
	\item[(ii)] For every SCVM from bag-set CQ query $Q'$ to bag-set CQ query $Q$, the mapping maps each subgoal of the regularized version of $Q'$ into a distinct (in terms of relational template) subgoal of the regularized version of $Q$.\footnote{We can use {\em the regularized version} of $Q$ here, because the regularized version of a bag-set CQ query $Q$, by definition, has all subgoals of $Q$ albeit without duplication.} (Indeed, by definition, each SCVM from $Q'$ to $Q$ maps each element of the set ${\cal X}'$ $\bigcup$ $P'$ $\bigcup$ $M'_{noncopy}$ into a {\em distinct} element of the set ${\cal X}$ $\bigcup$ $P$ $\bigcup$ $M_{noncopy}$. Here, $\cal X$ is the set of all variables in the head of query $Q$, $P$ is the set of all constants used in $Q$, and $M_{noncopy}$ is the set of all multiset noncopy variables of $Q$; similarly for $Q'$.) As a result, for each every SCVM $\mu$ from $Q'$ to $Q$, $\mu$ maps each subgoal of the regularized  version of $Q'$ to a distinct subgoal of the regularized  version of  $Q$. 
	
	We make the same observation about all SCVMs from $Q$ to $Q'$, and thus obtain that the regularized versions of $Q$ and of $Q'$ must have the same number of subgoals. We conclude that  for each every SCVM $\mu$ from $Q'$ to $Q$, $\mu$ is an isomorphic mapping from the regularized  version of $Q'$ to the regularized  version of  $Q$. Hence, from (i) Q.E.D. 
\end{itemize} 
\reminder{Need to recheck the above proof of (ii)!} 

\begin{theorem} 
\label{old-set-magic-thm} 
Given set CQ queries $Q$ and $Q'$, such that $Q \equiv_{S} Q'$. Then  there exists a containment mapping from the query $Q'$ to the query $Q$. 
\end{theorem} 

Theorem~\ref{old-set-magic-thm} is the ``if'' part of the containment-mapping theorem \reminder{??? -- put in here the theorem ID} of \cite{ChandraM77}. Its proof is immediate from the facts that (i) each set CQ query is a CCQ query that belongs to the class ${\cal Q}_{cszs}$,  hence by Theorem~\ref{magic-mapping-prop} (using $Q \equiv_S Q'$) there must exist a SCVM from $Q'$ to $Q$, and that (ii) for every SCVM from set CQ query $Q'$ to set CQ query $Q$, the mapping is a containment mapping from $Q'$ to $Q$. (Item (ii) is immediate from Definition~\ref{magic-mapping-def}, because set CQ queries have no multiset variables.)

\begin{proposition} 
\label{old-explicit-wave-sufficient-prop}
Let $Q$ be a CCQ query, such that each copy-sensitive subgoal of $Q$ has zero set variables. Then $Q$ is an explicit-wave query. 
\end{proposition}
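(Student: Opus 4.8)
The final statement to prove is Proposition~\ref{old-explicit-wave-sufficient-prop} (equivalently, Proposition~\ref{suffic-for-expl-wave-prop} / Proposition~\ref{explicit-wave-sufficient-prop}): if every copy-sensitive subgoal of a CCQ query $Q$ has no set variables, then $Q$ is an explicit-wave query.

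\medskip

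\textbf{Plan of the proof.} The plan is to verify the two clauses of Definition~\ref{expl-wave-def} directly. First I would dispose of the trivial case: if $Q$ has at most one copy-sensitive subgoal, then clause (1) of Definition~\ref{expl-wave-def} applies and $Q$ is explicit wave with nothing more to do. So for the remainder I would assume $Q$ has at least two copy-sensitive subgoals and aim to establish clause (2). Fix the set $M_{noncopy}$ of multiset noncopy variables of $Q$ and fix an arbitrary pair $(\mu_1,\mu_2)$ of noncopy-permuting GCMs from $Q_{ce}$ to itself that agree on $M_{noncopy}$. I must show that for each original copy-sensitive subgoal $s$ of $Q$, the atoms $\mu_1(s)$ and $\mu_2(s)$ have the same relational template.

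\medskip

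\textbf{Key steps.} The crux is to identify exactly which terms of $Q_{ce}$ can occur in an original copy-sensitive subgoal $s = p(\bar{T}; i)$ of $Q$, and then to observe that $\mu_1$ and $\mu_2$ are forced to agree on every such term except possibly the copy variable $i$. By hypothesis, $s$ has no set variables; hence every term of $\bar{T}$ is either a constant used in $Q$, a distinguished (head) variable of $Q$, or a multiset noncopy variable of $Q$ (it cannot be a copy variable other than $i$ itself, since condition on conditions forces copy variables to occur uniquely in a single copy-sensitive atom, and in $Q_{ce}$ the only new copy variables are those added to relational subgoals, not to $s$). Now: (a) any GCM is the identity on constants (condition (2) in Definition~\ref{multiset-homom-def}); (b) any GCM fixes the head vector, so it is the identity on every head variable of $Q$ — here I need to note that $Q$ and $Q_{ce}$ have the same head and the same head variables, since $Q_{ce}$ only adds copy variables to relational bodies; (c) since $\mu_1$ and $\mu_2$ agree on $M_{noncopy}$ by assumption, they assign the same image to every multiset noncopy variable of $Q$. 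Combining (a)–(c), $\mu_1$ and $\mu_2$ coincide on every component of $\bar{T}$, so $\mu_1(p(\bar{T};i)) = p(\mu_1(\bar{T}); \mu_1(i))$ and $\mu_2(p(\bar{T};i)) = p(\mu_2(\bar{T}); \mu_2(i))$ have the identical argument vector $\mu_1(\bar{T}) = \mu_2(\bar{T})$ and the same predicate $p$; they differ at most in the copy variable. By definition of relational template, this means $\mu_1(s)$ and $\mu_2(s)$ have the same relational template, which is precisely what clause (2) demands. Since $s$ was an arbitrary original copy-sensitive subgoal and $(\mu_1,\mu_2)$ an arbitrary agreeing pair of noncopy-permuting GCMs, $Q$ is explicit wave.

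\medskip

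\textbf{Anticipated obstacle.} This proof is genuinely short; there is no deep obstacle, but the one point requiring care is the bookkeeping around $Q_{ce}$ versus $Q$ — specifically, making sure that (i) the head vectors of $Q$ and $Q_{ce}$ agree so that GCMs fixing the head of $Q_{ce}$ fix the head of $Q$; (ii) the multiset noncopy variables of $Q$ are exactly the multiset noncopy variables of $Q_{ce}$ (the copy-enhancement only introduces copy variables, which go into $M_{ce,copy}$, leaving $M_{noncopy}$ untouched), so that ``agree on $M_{noncopy}$'' is meaningful and usable; and (iii) an original copy-sensitive subgoal of $Q$ really contains, besides its own copy variable, only constants, head variables, and multiset noncopy variables — this is the direct consequence of the hypothesis ``no set variables'' together with the uniqueness of copy variables in conditions. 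Once these three facts are recorded, the argument is a two-line chase. I would also briefly handle the degenerate case $M_{noncopy} = \emptyset$, where by the convention stated just before Definition~\ref{expl-wave-def} all GCMs from $Q_{ce}$ to itself count as noncopy-permuting and all pairs agree on $M_{noncopy}$, so the same argument (now with only constants and head variables appearing in $s$) goes through unchanged.
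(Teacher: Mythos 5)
Your proof is correct and follows essentially the same route as the paper's own argument in Appendix~\ref{suffic-for-expl-wave-sec}: dispose of the at-most-one-copy-sensitive-subgoal case via clause (1) of Definition~\ref{expl-wave-def}, then for clause (2) observe that the arguments of a copy-sensitive subgoal can only be constants, head variables, and multiset noncopy variables, on all of which any two agreeing noncopy-permuting GCMs from $Q_{ce}$ to itself are forced to coincide, so the images share a relational template. Your extra bookkeeping about $Q$ versus $Q_{ce}$ and the degenerate case $M_{noncopy}=\emptyset$ is consistent with the paper's conventions and adds nothing that conflicts with its proof.
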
 
	
\begin{proof} 
Let $Q$ be a CCQ query whose each copy-sensitive subgoal has zero set variables. In case where $Q$ does not have copy-sensitive subgoals, the query $Q_{H^d}$ does not have multiset-on subgoals by definition. Hence, it is immediate from Definition~\ref{old-expl-wave-def} that in this case, $Q$ is an explicit-wave query. Hence in the remainder of this proof we assume that $Q$ has at least one copy-sensitive subgoal. 

Let list $L^{(Q)}_{nc}$ be an arbitrary fixed ordering of all the elements of the set $M_{noncopy}$ of multiset noncopy variables of $Q$. Fix an arbitrary permutation $\pi^{(Q)}_{nc}$ of the list $L^{(Q)}_{nc}$. (In case $M_{noncopy} = \emptyset$, we have that the only possibility for $L^{(Q)}_{nc}$ is $L^{(Q)}_{nc}$ $=$ $[]$, that is the empty list, and that the only possible permutation of that empty list is $\pi^{(Q)}_{nc}$ $=$ $[]$. In this case $M_{noncopy} = \emptyset$, let the empty list be (the only possibility for) the  permutation $\pi^{(Q)}_{nc}$ that we fix for the remainder of this proof.) 

In case where for the fixed permutation $\pi^{(Q)}_{nc}$ there exists at most one  containment mapping, $\mu$, from the query $Q_{H^d}$ to itself,  such that $\mu$ agrees with $\pi^{(Q)}_{nc}$, we have that w.r.t. the permutation $\pi^{(Q)}_{nc}$, query $Q$ is an explicit-wave query by definition (by default). Hence for the remainder of this proof, for the chosen permutation $\pi^{(Q)}_{nc}$, assume that there exist at least two distinct containment mappings from $Q_{H^d}$ to itself,  such that all of the mappings agree with the permutation $\pi^{(Q)}_{nc}$. Let $\mu_1$ and $\mu_2$ be any two distinct such containment mappings. 


By our assumption that query $Q$ has copy-sensitive subgoals, the set of multiset-on subgoals of the  query $Q_{H^d}$ is not empty. Let $s$ be an arbitrary multiset-on subgoal of the query $Q_{H^d}$. We show that the atoms $\mu_1(s)$ and $\mu_2(s)$ are identical relational atoms. (This result holds for every CCQ query $Q$ whose each copy-sensitive subgoal of $Q$ has zero set variables, even when we do not assume that (as we assume in Definition~\ref{old-expl-wave-def}) each of $\mu_1(s)$ and $\mu_2(s)$  is a multiset-on subgoal of the query  $Q_{H^d}$.) 

Indeed, consider the set ${\cal S}^*$ of all those terms of the query $Q_{H^d}$ that are not set variables of the query $Q$. By definition of $Q_{H^d}$, the set ${\cal S}^*$ is the union of three sets: (i) the set $\cal X$ of all distinguished variables of the query $Q$; (ii) the set $M_{noncopy}$ of all multiset noncopy variables of the query $Q$; and (iii) the set $P$ of all constants used in the query $Q$. (Recall that none of the copy variables of the query $Q$ is a term of the query $Q_{H^d}$.) 

By definition of containment mapping, each containment mapping from the query $Q_{H^d}$ to itself maps each element of ${\cal X} \bigcup P$ to itself. Further, in case $M_{noncopy} \neq \emptyset$, for each containment mapping $\mu$ from $Q_{H^d}$ to itself such that $\mu$ agrees with the permutation  $\pi^{(Q)}_{nc}$, the restriction of $\mu$ to the list $L^{(Q)}_{nc}$ (which is a fixed ordering of the elements of the set $M_{noncopy}$) is the list $\pi^{(Q)}_{nc}$. 

Recall that all terms of the query $Q_{H^d}$ used in the subgoal $s$ of  $Q_{H^d}$ belong to the set ${\cal S}^*$. (Indeed, we have that for each copy-sensitive subgoal $s'$ of the query $Q$ and for each term $t$ of $Q$ used in $s'$ such that $t$ is not a copy variable, $t$ is not a set variable of $Q$ and hence must belong to the set ${\cal S}^*$.) Therefore, the mappings $\mu_1$ and $\mu_2$ yield the same atom when applied to the multiset-on subgoal $s$ of the query $Q_{H^d}$. Hence $Q$ is an explicit-wave query w.r.t. the permutation $\pi^{(Q)}_{nc}$. 

From the arbitrary choice of (a) the containment mappings $\mu_1$ and $\mu_2$ each agreeing with the fixed permutation $\pi^{(Q)}_{nc}$ of the list $L^{(Q)}_{nc}$, and of (b) the multiset-on subgoal $s$ of the query $Q_{H^d}$, we conclude that query $Q$ is an explicit-wave  query w.r.t. the permutation $\pi^{(Q)}_{nc}$ of the list $L^{(Q)}_{nc}$. From the arbitrary choice of the permutation $\pi^{(Q)}_{nc}$ of the list $L^{(Q)}_{nc}$, we conclude that query $Q$ is an explicit-wave CCQ query by definition. Q.E.D. 
\end{proof}

\reminder{Define the ``noncopy-permutation function'' of $Q''$ (for the $Q''$ and for the family of databases $\{ D_{\bar{N}^{(i)}}(Q) \}$): This is the sum of all the size-of-set-union terms for all the permutations of the (arbitrarily fixed) list of all multiset noncopy variables of the query. Also define how a ``noncopy-permutation function of $Q''$ has a grouping'', define ``the explicit wave of $Q$''.} 

\begin{proposition} 
\label{old-goldfish-prop} 
Let $Q$ be an explicit-wave CCQ query, and let $Q'$ be a CCQ query such that $Q' \equiv_C Q$. Then $Q'$ has a {\em noncopy-permutation monomial class} whose multiplicity monomial is the {\em explicit-wave monomial} of $Q$. 
\end{proposition}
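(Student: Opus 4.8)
The plan is to prove the result by transferring structural information about the multiplicity function of $Q$ to that of $Q'$, exploiting the hypothesis $Q \equiv_C Q'$. The relevant object is the $m$-covering part of ${\cal F}^{(Q'')}_{(Q)}$ for $Q'' \in \{Q, Q'\}$, which by the definition of the wave is exactly the portion of the multiplicity function in which the wave ${\cal P}^{(Q)}_*$ could possibly occur. By the definition of ``solid term'' (Section~\ref{solid-phantom-sec}), exhibiting a nonempty monomial class ${\cal C}_*^{(Q')}$ for $Q'$ whose multiplicity monomial is ${\cal P}^{(Q)}_*$ is the same thing as showing that ${\cal P}^{(Q)}_*$ occurs as a \emph{solid} term of the $m$-covering part of ${\cal F}^{(Q')}_{(Q)}[{\bar K}_w]$ for some (equivalently, every) copy-variable-ordering vector ${\bar K}_w$. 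So the whole task reduces to ruling out that ${\cal P}^{(Q)}_*$ is a phantom term.

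First I would assemble the easy half. On each total-order subdomain ${\cal N}^{({\bar K}_w)}$ the functions ${\cal F}^{(Q)}_{(Q)}$ and ${\cal F}^{(Q')}_{(Q)}$ restrict to genuine integer-coefficient multivariate polynomials ${\cal F}^{(Q)}_{(Q)}[{\bar K}_w]$ and ${\cal F}^{(Q')}_{(Q)}[{\bar K}_w]$ (Proposition~\ref{n-subsets-funct-prop}); since $Q \equiv_C Q'$ forces these functions to agree pointwise, Proposition~\ref{polyn-prop} (applied on the infinite-cardinality domain ${\cal N}^{({\bar K}_w)}$) yields that the two polynomials are term-by-term identical, which is Proposition~\ref{equiv-funct-prop}. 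Because $Q$ is explicit wave, Propositions~\ref{m-covering-prop} and~\ref{goldfish-properties-prop} tell me that the $m$-covering part of ${\cal F}^{(Q)}_{(Q)}[{\bar K}_w]$ is one fixed all-positive, all-solid polynomial ${\cal G}^{(Q)}_{(Q)}$, independent of ${\bar K}_w$, and that ${\cal G}^{(Q)}_{(Q)}$ contains ${\cal P}^{(Q)}_*$ as a solid term (here Proposition~\ref{q-has-wave-prop} supplies the witnessing monomial class for $Q$). Transferring across the identity of polynomials gives Proposition~\ref{same-funct-in-q-prime-prop}: for every ${\bar K}_w$, the wave ${\cal P}^{(Q)}_*$ sits in the $m$-covering part of ${\cal F}^{(Q')}_{(Q)}[{\bar K}_w]$ with a positive-integer coefficient. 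The remaining gap is purely that ``positive coefficient'' does not yet mean ``solid.''

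The main obstacle, and the heart of the argument, is to close that gap by contradiction. If ${\cal P}^{(Q)}_*$ were a phantom term of the $m$-covering part of ${\cal F}^{(Q')}_{(Q)}[{\bar K}_w]$ for one ${\bar K}_w$, it would be phantom for all of them, and by Proposition~\ref{phantom-min-two-prop} the product-of-$\min$-expressions $\cal P$ in ${\cal F}^{(Q')}_{(Q)}$ that produces it must have some $j$th projection whose $\min$ has at least two distinct arguments from $\{1, N_{m+1}, \ldots, N_{m+w}\}$. The case $w \le 1$ is dispatched directly: when $w = 1$ the only two-argument $\min$ is $\min(1, N_{m+1})$, which collapses to $1$ and forces the offending term to have fewer than $m+r$ factors (Proposition~\ref{multip-monomial-prop}), contradicting that it equals the wave. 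The genuinely hard case $w \ge 2$ is where two distinct copy variables $N_A, N_B$ appear as $\min$-arguments; here I cannot argue directly on the polynomials, because inclusion--exclusion (fact (**) of Section~\ref{intuition-sec}) produces mixed positive and negative terms that might conspire to cancel for every ordering. The plan is to replace the polynomials by the signatures ${\cal S}(Q', {\bar K}_w)$ of Section~\ref{bqkw-sec}, which by Corollary~\ref{key-sig-one-corol} are computed from all-positive $\max$-expressions (Proposition~\ref{key-sig-prop}) and hence admit no cancellation. I would then construct a second ordering ${\bar K}'_w$ by swapping the relative order of $N_A$ and $N_B$ so that the relevant projection contributes $N_A$ to ${\cal S}(Q', {\bar K}'_w)$ but $N_B$ to ${\cal S}(Q', {\bar K}_w)$; since $N_B$ is maximal and $N_A$ minimal in the respective orders, no other projection can rebalance the two signatures. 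By Proposition~\ref{not-same-sigs-prop} the $m$-covering polynomials under ${\bar K}_w$ and ${\bar K}'_w$ then differ, contradicting Propositions~\ref{equiv-funct-prop} and~\ref{m-covering-prop}, which force both to equal the ${\bar K}_w$-independent polynomial ${\cal G}^{(Q)}_{(Q)}$. This contradiction shows ${\cal P}^{(Q)}_*$ is a solid term, and the witnessing nonempty monomial class ${\cal C}_*^{(Q')}$ is exactly the class whose multiplicity monomial is that solid term, completing the proof.
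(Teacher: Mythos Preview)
Your proposal is correct and follows essentially the same route as the paper's proof of the corresponding result (Proposition~\ref{qprime-goldfish-prop}, established via Proposition~\ref{goldfish-in-q-prime-prop}): reduce to showing ${\cal P}^{(Q)}_*$ is a solid term of the $m$-covering part of ${\cal F}^{(Q')}_{(Q)}[{\bar K}_w]$, handle the easy cases, and in the case $w\ge 2$ use the signature machinery to derive a contradiction from two orderings. One refinement worth noting: where you ``swap the relative order of $N_A$ and $N_B$,'' the paper actually chooses ${\bar K}'_w$ so that $N_B$ is the \emph{global} minimum (immediately after $1$) and $N_A$ the global maximum; this is what makes the rebalancing argument go through, since any $\max$-expression evaluating to $N_B$ under ${\bar K}'_w$ can then only have arguments in $\{1,N_B\}$ and hence evaluates to $N_B$ under \emph{every} ordering.
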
 

\subsubsection{Old Reasoning that Did Not Work}

Consider all the terms of ${\cal F}_{(Q)}(Q)$ that have the product of all of $N_1,\ldots,N_m$ (in case $m \geq 1$); in case $m = 0$ consider {\em all} the monomials of ${\cal F}_{(Q)}(Q)$. Call these monomials collectively ${\cal F}_{(Q)}^{(m)}(Q)$. By our assumption, $Q$ is an explicit-wave CCQ query.  \reminder{In case $m = 0$, the entire  ${\cal F}_{(Q)}(Q)$ must have just the exposed wave and {\em nothing else,} for some $\cal O$ etc. Hence in the remainder of this reasoning, we assume $m \geq 1$.}
Hence there does exist a permutation $\pi_0^{(m)}$ of $N_1,\ldots,N_m$, there does exist a monomial class ${\cal C}_0^{(Q)}$ with noncopy-signature $\pi_0^{(m)}$ and with a characteristic-wave copy-signature, and there does exist  a total-order vector $\cal O$ such that for all the elements $\bar{N}^{(i)}$ of $\cal N$ that satisfy all the inequalities of $Ord({\cal O})$, the multiplicity-monomial term \reminder{???} for $\pi_0^{(m)}$ is exactly the multiplicity monomial for the monomial class ${\cal C}_0^{(Q)}$. 

We show that it follows that for each CCQ query $Q'$ such that $Q \equiv_C Q'$, there exists a monomial class, call it ${\cal C}_0^{(Q')}$, whose noncopy-signature is a permutation of $N_1,\ldots,N_m$ and whose copy-signature is a characteristic wave. In ${\cal F}_{(Q)}^{(m)}(Q)$ {\em for the query $Q$,} group 

\reminder{Idea: Group all the elements of ${\cal F}_{(Q)}^{(m)}(Q)$ by permutation of $N_1,\ldots,N_m$. Within each such permutation, we have the size-of-Cartesian-product formula by the inclusion-exclusion principle. The exposed wave will be by itself. The group ${\cal F}_{(Q)}^{(m)}(Q')$ {\em for the query $Q'$} has exactly the same polynomial, in terms of $N_{m+1},\ldots,N_{m+w}$, as does ${\cal F}_{(Q)}^{(m)}(Q)$; hence we can group that polynomial (for $Q'$, in terms of $N_{m+1},\ldots,N_{m+w}$) exactly the same way as we did for $Q$. Then the exposed wave of $Q$ will be ``by itself'' in the result of that grouping and will be ``the odd one out''. Then, by assuming that $Q'$ does not have a characteristic-wave monomial class, we have to group that exposed wave with one of the grouping, but it will be ``the odd one out'' always. } 

\begin{itemize} 
	\item Take the polynomials ${\cal F}_{(Q)}^{(m)}(Q)$ and ${\cal F}_{(Q)}^{(m)}(Q')$; they must return the same value on all inputs in the domain for the total order $Ord({\cal O})$; $\cal O$ is the total-order vector for which $Q$ has the exposed wave. 
	\item Represent each of   ${\cal F}_{(Q)}^{(m)}(Q)$ and ${\cal F}_{(Q)}^{(m)}(Q')$ as sum of groups, where each group is the multivariate polynomial for the size of a union of {\em pairwise $\cal O$- nondominated sets}; make the {\em maximal number of groups} -- that is, if the size of a union of sets can be represented as a positive sum of two sizes of unions of (contributor) sets, then do represent as that sum. In other words, each group is for the size of union of non-disjoint sets. Clearly, the exposed wave for $Q$ is a separate group whose multivariate polynomial has exactly one monomial -- the characteristic-wave monomial for $Q$. 
	\item Prove that if, for two collections of pairwise $\cal O$-nondominated sets $A_1,\ldots,A_k$ and $B_1,\ldots,B_n$ it holds that the size of the union of the $A_j$'s is the same as the size of the union of the $B_l$'s then the two collections are identical to each other, up to permutation on the Cartesian products. (Proof: By contradiction. If the two collections are different then they have different top-level sum (that is, sum $|A_1| + \ldots + |A_k|$ is different from the sum $|B_1| + \ldots + |B_n|$), then the multivariate polynomials for the two top-level sums (which have the highest degrees in the original size-of-union polynomials) are not identical, then the original multivariate polynomials are not identical. Then contradiction.) 
	\item In the equation ${\cal F}_{(Q)}^{(m)}(Q)$ $=$ ${\cal F}_{(Q)}^{(m)}(Q')$ on the domain of $Ord({\cal O})$, cancel out all same-size unions. Then what remains is the sum on the left side (this sum has the characteristic-wave monomial group for $Q$) such that the sum on the right side does not have any element identical to the characteristic-wave monomial group for $Q$. Call the resulting equation $f_{\bar{N}}^{(m)}(Q)$ $=$ $f_{\bar{N}}^{(m)}(Q')$. 
	\item Prove that when $N_{m+1}$ $=$ $N_{m+2}$ $=$ $\ldots$ $=$ $N_{m+w}$, then for all groups of each of  $f_{\bar{N}}^{(m)}(Q)$ and $f_{\bar{N}}^{(m)}(Q')$, whenever a group's top-level sum has an elementary set with degree without ones ($1$-s), then the value of the size of the group on $N_{m+1}$ $=$ $N_{m+2}$ $=$ $\ldots$ $=$ $N_{m+w}$ $=$ $N^*$ is always $(N^*)^w$. 
	\item From the previous item, the number of such groups in $f_{\bar{N}}^{(m)}(Q)$ is the same as the number of such groups in $f_{\bar{N}}^{(m)}(Q')$. (To prove, just set $N_{m+1}$ $=$ $N_{m+2}$ $=$ $\ldots$ $=$ $N_{m+w}$ $=$ $N^*$) and count the left-hand-side coefficient and the right-hand-side coefficient for the term $(N^*)^w$. 
	\item Prove that for each group, setting each of $N_{m+1}$ $=$ $N_{m+2}$ $=$ $\ldots$ $=$ $N_{m+w}$ to the value {\em unity} results in the same size value ($= 1$) for each group in each of $f_{\bar{N}}^{(m)}(Q)$ and $f_{\bar{N}}^{(m)}(Q')$. Conclude that the total number of all groups in $f_{\bar{N}}^{(m)}(Q)$ equals the total number of all groups in $f_{\bar{N}}^{(m)}(Q')$. Conclude further (from previous item and from this item) that the total number of groups whose top-level sum has an elementary set with degree without ones, is the same in $f_{\bar{N}}^{(m)}(Q)$ and $f_{\bar{N}}^{(m)}(Q')$. 
	
	\item In $f_{\bar{N}}^{(m)}(Q)$ and $f_{\bar{N}}^{(m)}(Q')$, the {\em all-positive-coefficient} sum of all the top-degree (i.e., those not having {\em unity,} i.e., $= 1$, literals in the product) monomials is the same on both sides (otherwise $f_{\bar{N}}^{(m)}(Q)$ and $f_{\bar{N}}^{(m)}(Q')$ as totals would not be the same multivariate polynomials). Then, from the groups in $f_{\bar{N}}^{(m)}(Q)$ and $f_{\bar{N}}^{(m)}(Q')$ being all not-pairwise-the-same, we conclude that it is the {\em groupings} of these top-degree monomials that is the difference between $f_{\bar{N}}^{(m)}(Q)$ and $f_{\bar{N}}^{(m)}(Q')$. 
	
	\item By the number of ``plus'' and ``minus'' signs in the two polynomials $f_{\bar{N}}^{(m)}(Q)$ and $f_{\bar{N}}^{(m)}(Q')$ (after we ``open the parentheses'' of all the groups), we conclude that the number of sets being unioned within each group is pairwise the same between $f_{\bar{N}}^{(m)}(Q)$ and $f_{\bar{N}}^{(m)}(Q')$. (That is, there is a bijection from the set of groups in $f_{\bar{N}}^{(m)}(Q)$ to the set of groups in $f_{\bar{N}}^{(m)}(Q')$, such that the number of sets being unioned between each group and its image under the bijection is the same.) 
	
	\item By contrapositive, we assume that $f_{\bar{N}}^{(m)}(Q')$ does not have the characteristic-wave monomial of $Q$ (note that it has only $N_{m+1}$ through $N_{m+w}$ in the product, as the variables $N_1,\ldots,N_m$ are {\em not} used in either $f_{\bar{N}}^{(m)}(Q)$ or $f_{\bar{N}}^{(m)}(Q')$), call this monomial ${\cal P}^*$, either as a standalone grouping or as a top-level set in any of the size-of-nontrivial-union grouping. Then it must be that there are negative-coefficient terms in both of $f_{\bar{N}}^{(m)}(Q)$ and $f_{\bar{N}}^{(m)}(Q')$. 
	 In addition, by $f_{\bar{N}}^{(m)}(Q')$ being identical to $f_{\bar{N}}^{(m)}(Q)$ after all the groupings have been removed (i.e., after all the parentheses for the groupings have been removed in each polynomial), there are two cases: 

	\begin{itemize} 
		\item Suppose that 	the characteristic-wave monomial ${\cal P}^*$ of $Q$ is {\em absent} in both  $f_{\bar{N}}^{(m)}(Q)$ and  $f_{\bar{N}}^{(m)}(Q')$  after all the groupings have been removed. Then it must be that one of the size-of-union-of-sets groupings in $f_{\bar{N}}^{(m)}(Q')$, call this grouping $G^*$,  has ${\cal P}^*$ with the negative coefficient. This means that ${\cal P}^*$ is present in that grouping as a {\em below-the-top-level term.} From ${\cal P}^*$ being absent from  $f_{\bar{N}}^{(m)}(Q')$  after all the groupings have been removed, we conclude that ${\cal P}^*$ must also be present in one of the size-of-union-of-sets groupings in $f_{\bar{N}}^{(m)}(Q')$, where that grouping $G'$ is different from $G^*$ (!), as a {\em below-the-top-level term}  (from out assumption that ${\cal P}^*$ is not a top-level term in $f_{\bar{N}}^{(m)}(Q')$) with a positive coefficient. 
 	
		\item Suppose that the characteristic-wave monomial ${\cal P}^*$ of $Q$ is {\em present} in both  $f_{\bar{N}}^{(m)}(Q)$ and  $f_{\bar{N}}^{(m)}(Q')$  after all the groupings have been removed. Then it must be that one of the size-of-union-of-sets groupings in $f_{\bar{N}}^{(m)}(Q')$, call this grouping $G^*$, has ${\cal P}^*$ with the positive coefficient. By our assumption that ${\cal P}^*$ is not present in $f_{\bar{N}}^{(m)}(Q')$ either as a standalone grouping or as a top-level set in any of the size-of-nontrivial-union grouping,  this means that ${\cal P}^*$ is present in $G^*$ as a {\em below-the-top-level term.} 

	\end{itemize} 
	
	In both cases above, we discover ${\cal P}^*$ in $f_{\bar{N}}^{(m)}(Q')$, in some group, ``under'' some negative-coefficient level. In $f_{\bar{N}}^{(m)}(Q)$, this same (!!!) negative-coefficient level has under it a (positive-coefficient) term, call it ${\cal P}'$, such that ${\cal P}'$ is different from ${\cal P}^*$ (because ${\cal P}^*$ in $f_{\bar{N}}^{(m)}(Q)$ is not under any negative-coefficient level). 
	
	We now look at an arbitrary term, $T$, being directly under a negative-coefficient level, $L$ in some group. That term's (i.e., $T$'s) variables are directly dependent on the variables in $L$, because each $i$th (i.e., in position $i$) variable in $T$ is a $min$ (under the total order $Ord({\cal O})$) of all the (appropriate) $i$th variables in $L$. Hence we cannot replace the ${\cal P}'$ in $G^*$ by ${\cal P}^*$ because from ${\cal P}'$ and ${\cal P}^*$ not being identical to each other it follows that ${\cal P}'$ and ${\cal P}^*$ must be different in at least one position, under each ``vectorization/linearlization'' of each of ${\cal P}'$ and ${\cal P}^*$. Thus we have the desired contradiction, Q.E.D.


\end{itemize}

} 

{\small 
\bibliographystyle{abbrv}
\bibliography{icdt14arxiv}  
}

\newpage

\appendix 

For the convenience of the reviewers, this optional appendix provides additional information from the online paper \cite{fullversion} concerning the results presented in this current manuscript.

\section{Sufficient condition for bag \\ containment} 
\label{e-three-sec}

In this appendix we provide a detailed discussion of the relationship of Theorem~\ref{cvm-containm-thm} with the following result of \cite{ChaudhuriV93}. 

\begin{theorem}{\cite{ChaudhuriV93}} 
\label{chaudhuri-v-suffic-containm-thm}
Given two CCQ bag queries $Q$ and $Q'$ such that there exists a CVM from $Q'$ to $Q$. Then we have that $Q \sqsubseteq_B Q'$. 
\end{theorem}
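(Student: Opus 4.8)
The plan is to derive Theorem~\ref{chaudhuri-v-suffic-containm-thm} as an immediate corollary of Theorem~\ref{cvm-containm-thm}. The latter result states that for arbitrary CCQ queries $Q$ and $Q'$, the existence of a CVM from $Q'$ to $Q$ implies $Q \sqsubseteq_C Q'$. Since a bag query is by definition a CCQ query (in fact, a multiset query all of whose subgoals are copy-sensitive), Theorem~\ref{cvm-containm-thm} applies directly to the pair $(Q, Q')$ when both are bag queries. Thus from the hypothesis that there exists a CVM from $Q'$ to $Q$, we obtain $Q \sqsubseteq_C Q'$. It then remains only to translate $\sqsubseteq_C$ into $\sqsubseteq_B$.

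For this last translation, I would invoke the equivalence established in the Preliminaries: for CCQ queries $Q$ and $Q'$ that are both bag queries, $Q \sqsubseteq_B Q'$ holds if and only if $Q \sqsubseteq_C Q'$ holds (this is item (2) of the correspondence stated just after Theorem~\ref{cm-thm}'s neighborhood, in the subsection on query containment and equivalence). Combining the two facts — $Q \sqsubseteq_C Q'$ from Theorem~\ref{cvm-containm-thm}, and the bag-vs.-combined equivalence for bag queries — yields $Q \sqsubseteq_B Q'$, which is exactly the claim of Theorem~\ref{chaudhuri-v-suffic-containm-thm}. So the proof is a two-line chain: (i) apply Theorem~\ref{cvm-containm-thm}; (ii) apply the bag/combined coincidence for bag queries.

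The one genuinely substantive point worth addressing in the write-up is the mismatch in terminology and syntax between \cite{ChaudhuriV93} and the present paper: \cite{ChaudhuriV93} phrases its sufficient condition in terms of a ``containment mapping'' for a bag-query syntax that differs from the syntax of \cite{Cohen06} adopted here (see the footnote attached to Theorem~\ref{cvm-containm-thm} and the reference to Appendix~\ref{e-three-sec}). So I would include a short paragraph making explicit that, when both queries are bag queries, the notion of CVM from $Q'$ to $Q$ (Definition~\ref{magic-mapping-def}) specializes exactly to the ``containment mapping onto'' condition of \cite{ChaudhuriV93}: conditions (1)--(3) force the copy variables of $Q'$ to be mapped onto those of $Q$ surjectively, and conditions (4)--(5) reduce, in the absence of relational subgoals and set variables, to the homomorphism-onto requirement on the copy-sensitive subgoals. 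This syntactic reconciliation, rather than any mathematical difficulty, is the main obstacle — and it is purely bookkeeping, since the heavy lifting is already done inside the proof of Theorem~\ref{cvm-containm-thm}, which (as noted in the excerpt) is itself essentially a verbatim adaptation of Cohen's argument for Theorem~\ref{cutup-cohen-thm-three-three}.

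Finally, I would remark — as the excerpt already does in the paragraph preceding the statement — that this gives the first formal proof of the well-known sufficient bag-containment condition outlined but not proved in \cite{ChaudhuriV93}, and that the condition is properly more general here since Theorem~\ref{cvm-containm-thm} also covers arbitrary CCQ queries (not just bag queries) and allows relational subgoals in $Q$ to absorb copy-sensitive subgoals of $Q'$ via clause (4) of Definition~\ref{magic-mapping-def}. No new machinery, no calculation, and no case analysis is needed beyond the terminology check.
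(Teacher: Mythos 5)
Your proposal is correct and matches the paper's own treatment: the paper likewise derives this result as an immediate corollary of Theorem~\ref{cvm-containm-thm} (using the stated coincidence of $\sqsubseteq_B$ and $\sqsubseteq_C$ for bag queries) and devotes the rest of its discussion to exactly the syntactic reconciliation you describe, showing that an ``onto-style containment mapping'' in the implicit bag syntax of \cite{ChaudhuriV93} is precisely a CVM in the explicit syntax used here.
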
 

It is easy to see that Theorem~\ref{chaudhuri-v-suffic-containm-thm} is an immediate corollary of  Theorem~\ref{cvm-containm-thm}. 

Note that Theorem~\ref{chaudhuri-v-suffic-containm-thm} is formulated here using the syntax of \cite{Cohen06} that we adopt in this current paper. Recall that in \cite{ChaudhuriV93}, definitions of bag queries are written using the {\em implicit} bag syntax. That is, suppose that we know that a query $Q$ is a bag query. This means that we know that (i) for all the nondistinguished variables of $Q$ that are not copy variables, each such variable is a multiset (noncopy) variable of $Q$, and that (ii) all subgoals of $Q$ are copy-sensitive subgoals. In this case, we can drop altogether from the (explicit, i.e., in the style of \cite{Cohen06}) definition of $Q$ (a) the set $M$, and (b) all copy variables in all subgoals of $Q$ -- just because we know how to interpret all subgoals and all explicit variables of a bag query. 

The resulting implicit notation makes a CVM from $Q'$ to $Q$ ``look like'' a containment mapping. That is,  suppose there exists an ``onto-style containment mapping'' $\mu$ from bag query $Q'$ to bag query $Q$ when the definitions of both queries use the implicit bag syntax of \cite{ChaudhuriV93}. By the definition of $\mu$, we have that 
\begin{itemize} 
	\item[(1)] Each subgoal $l'$ of $Q'$ is associated by $\mu$ with a subgoal $l$ of $Q$, such that $\mu(l')$ and $l$ have identical relational templates; and that 
	\item[(2)] For each subgoal $l$ of $Q$, there exists at least one subgoal $l'$ of $Q'$ such that $\mu$ associates $l'$ with $l$. 
\end{itemize} 

When we change this definition of $\mu$ in such a way that $\mu$ still applies to $Q'$ and $Q$ using the (explicit) syntax of \cite{Cohen06}, it is easy to see that $\mu$ is exactly a CVM from the (explicitly defined) $Q'$ to the (explicitly defined) $Q$. Hence the formulation of Theorem~\ref{chaudhuri-v-suffic-containm-thm} reflects correctly the result of \cite{ChaudhuriV93}.

\section{The Nonsurjective Containment Example}
\label{chaudhuri-v-ex-sec} 

In this appendix we recall an example from \cite{ChaudhuriV93}. 

\begin{example}
\label{chaudhuri-v-ninety-three-example}
Let CCQ queries $Q$ and $Q'$ be as follows. 
\begin{tabbing}
tab me boo \= ju \kill
$Q(X,Z) \leftarrow p(X; i), s(U,X; j), s(V,Z; k), r(Z; l),$ \\
\> $\{ U,V,i,j,k,l \}.$ \\
$Q'(X,Z) \leftarrow p(X; i), s(U,Y; j), s(V,Y; k), r(Z; l),$ \\
\> $\{ U,V,Y,i,j,k,l \}.$
\end{tabbing}

For bag queries $Q$ and $Q'$, the authors of \cite{ChaudhuriV93} claim $Q \sqsubseteq_B Q'$, that is $Q \sqsubseteq_C Q'$ in the context of this present paper. 
\end{example} 

Observe that for the queries $Q$ and $Q'$ of Example~\ref{chaudhuri-v-ninety-three-example},  $(Q,$ $Q')$ is a containment-compatible CCQ pair. (That is, $Q$ and $Q'$ satisfy the necessary containment condition of Theorem~\ref{not-same-num-multiset-vars-thm}.) At the same time, it is easy to check  that no CVM exists from the query $Q'$ to the query $Q$. 

\section{CVMs vs multiset homomorphisms} 
\label{e-four-sec} 

In this appendix we provide Example~\ref{scvm-vs-mhom-ex} showing that general CVMs and multiset homomorphisms are incomparable when applied to pairs of CCQ queries. We also prove Proposition~\ref{sccm-vs-mhomom-prop}.  


\begin{example}
\label{scvm-vs-mhom-ex} 
Let CCQ queries $Q$ and $Q'$ 
be as follows. 
\begin{tabbing}
$Q(A) \leftarrow p(A,B), \ p(A,C), \ \{ B, C \} .$ \\
$Q'(D) \leftarrow p(D,E), \ p(D,F), \ \{ E \} .$ 
\end{tabbing}

Consider mapping $\mu$ from the terms of query $Q$ to the terms of query $Q'$, and mappings $\mu'$ and $\mu''$ from the terms of $Q'$ to the terms of $Q$: $\mu = \{ A \rightarrow D, B \rightarrow E, C \rightarrow E \}$; $\mu' = \{ D \rightarrow A, E \rightarrow B, F \rightarrow B \}$; and $\mu'' = \{ D \rightarrow A, E \rightarrow B, F \rightarrow C \}$. 
Mapping $\mu$ is a CVM but not a multiset-homomorphism (because $\mu$ maps $B$ and $C$ into the same multiset variable $E$ of $Q'$).  
Further,  each of $\mu'$ and $\mu''$ is a multiset-homomorphism but not a CVM. (For each of $\mu'$ and $\mu''$, the image of $\{ E \}$ under the mapping is not a superset of $\{ B,C \}$.) 
\nop{ 
Now consider a CCQ query $Q''$: 
\begin{tabbing}
$Q''(G) \leftarrow p(G,H; k), \ \{ k \} .$ 
\end{tabbing}
Observe that there does not exist a containment mapping from $Q''$ to $Q$ (or to $Q'$), nor does there exist a containment mapping from $Q$ (or $Q'$) to $Q''$, due to the presence of a copy variable in $Q''$ but not in $Q$ or $Q'$. 
} 
\end{example} 

\begin{proof}{(Proposition~\ref{sccm-vs-mhomom-prop})} 
The proof is immediate from Proposition~\ref{cvm-is-gcm-prop}. Indeed, suppose that for an equivalence-compatible CCQ pair $(Q,$ $Q')$, there exists a SCVM $\mu$ from $Q'$ to $Q$. By Proposition~\ref{cvm-is-gcm-prop}, $\mu$ is a generalized containment mapping from $Q'$ to the deregularized version of $Q$. Thus, using Definition~\ref{regulariz-def}, we obtain that condition (4) of Definition~\ref{magic-mapping-def}, when applied to $\mu$, to $Q'$, and to the deregularized version of $Q$, guarantees that condition (3) of Definition~\ref{multiset-homom-def} is satisfied by $\mu$. Observe that by $(Q,$ $Q')$ being an equivalence-compatible CCQ pair, we have that condition (3) of Definition~\ref{magic-mapping-def} for $\mu$ guarantees conditions (4) and (5) of Definition~\ref{multiset-homom-def} for $\mu$. Finally, the satisfaction by $\mu$ (when applied to $Q'$ and $Q$) of conditions (1) and (2)  of Definition~\ref{magic-mapping-def} guarantees the satisfaction by $\mu$ (when applied to $Q'$ and to the deregularized version of $Q$) of  conditions (1) and (2) of Definition~\ref{multiset-homom-def}. The opposite direction (that is, a multiset homomorphism $\varphi$ from $Q'$ to the deregularized version of $Q$ is always a SCVM from $Q'$ to $Q$) is proved using the above proof ``in the opposite direction.''  
\end{proof}  

\section{Proof of Sufficient Condition for a CCQ Query to Be an Explicit-Wave Query} 
\label{suffic-for-expl-wave-sec} 

This appendix provides a proof of Proposition~\ref{suffic-for-expl-wave-prop}.

\begin{proof} 
We prove that for all queries such as $Q$ in the statement of Proposition~\ref{suffic-for-expl-wave-prop}, each such query satisfies Definition~\ref{expl-wave-def}.  Indeed, consider a query $Q$ satisfying the condition that  each copy-sensitive subgoal of $Q$ has no set variables. In case $Q$ has at most one copy-sensitive subgoal, we obtain immediately that $Q$ satisfies Definition~\ref{expl-wave-def} (1). Thus, in the remainder of this proof we assume that $Q$ has at least two copy-sensitive subgoals. We will show that in this case, $Q$ always satisfies Definition~\ref{expl-wave-def} (2). 

Let $(\mu_1,$ $\mu_2)$ be an arbitrary pair of noncopy-permuting GCMs from $Q_{ce}$ to itself such that $\mu_1$ and $\mu_2$ agree on $M_{noncopy}$. 
Consider an arbitrary copy-sensitive subgoal of $Q$, call this subgoal $s$. By definition of the query $Q_{ce}$, $s$ must be a subgoal of $Q_{ce}$. The atom $s$ may have as arguments only constants, head variables of the query $Q$, and multiset variables. (Recall that no set variables of $Q$ may be arguments of $s$.) Now each of $\mu_1$ and $\mu_2$ map each constant to itself (by each of $\mu_1$ and $\mu_2$ being a GCM), and by each of   $\mu_1$ and $\mu_2$ being a mapping from $Q_{ce}$ to itself we obtain that each of $\mu_1$ and $\mu_2$ maps each head variable of $Q_{ce}$ (that is, each head variable of $Q$, by definition of $Q_{ce}$) to itself. Finally, consider each multiset {\em noncopy} variable, call it $Y$, such that $Y$ is an argument of the atom $s$. By $\mu_1$ and $\mu_2$ agreeing on $M_{noncopy}$, we have that $\mu_1(Y)$ and $\mu_2(Y)$ are the same term of the query $Q$. (Recall that, by definition of $Q_{ce}$, we have that for all terms that are present in $Q_{ce}$ but not in $Q$, each such term is a copy variable.)  We conclude that atoms $\mu_1(s)$ and $\mu_2(s)$  have the same relational template. Hence, $Q$ satisfies Definition~\ref{expl-wave-def} (2). Q.E.D. 
\end{proof}

\section{Query $Q$ of Example 4.1 is an\\ implicit-wave query}
\label{abc-app} 

\reminder{Into section title, put manually the ID of example: Example~\ref{intro-weird-ex}} 

In this appendix we show that query $Q$ of Example~\ref{intro-weird-ex} is an implicit-wave query.  We observe first that the query $Q$ has two copy-sensitive subgoals. Now the copy-enhanced version $Q_{ce}$ of $Q$ is exactly $Q$. (Recall that to construct the copy-enhanced version $Q_{ce}$ of query $Q$, all one needs to do is add distinct copy variables to all relational subgoals of $Q$. The query $Q$ of Example~\ref{intro-weird-ex} does not have any relational subgoals.) Consider two GCMs from $Q_{ce}$ to itself. (We use here the formulation, specifically the variable naming, for the query $Q$ as given in the beginning of Appendix~\ref{weird-ex-sec}.)

$\mu_1$ $=$ $\{$ $X_1 \rightarrow X_1$, $Y_1 \rightarrow Y_1$, $Y_2 \rightarrow Y_2$, $X_2 \rightarrow X_2$, $X_3 \rightarrow X_2$, $Y_3 \rightarrow Y_3$, $Y_4 \rightarrow Y_3$  $\}$; and 

$\mu_2$ $=$ $\{$ $X_1 \rightarrow X_1$, $Y_1 \rightarrow Y_1$, $Y_2 \rightarrow Y_2$, $X_2 \rightarrow X_3$, $X_3 \rightarrow X_3$, $Y_3 \rightarrow Y_4$, $Y_4 \rightarrow Y_4$  $\}$. 

The set $M_{noncopy}$ for the query $Q$, as well as for the query $Q_{ce}$, is $\{ Y_1, Y_2 \}$. Each of $\mu_1$ and $\mu_2$ is a noncopy-permuting GCM from $Q_{ce}$ to itself, because each of $\mu_1$ and $\mu_2$ maps each element of $M_{noncopy}$ to itself. For the same reason, the mappings $\mu_1$ and $\mu_2$ agree on $M_{noncopy}$. 

Mappings $\mu_1$ and $\mu_2$ map the first subgoal of $Q$ (which is an original copy-sensitive subgoal of $Q$) into atoms with different relational templates. Indeed,\linebreak  $\mu_1(r(X_1,Y_1,Y_2,X_2; Y_3))$ is atom $r(X_1,Y_1,Y_2,X_2; Y_3)$, and $\mu_2(r(X_1,Y_1,Y_2,X_2; Y_3))$ is atom $r(X_1,Y_1,Y_2,X_3; Y_4)$. We conclude that $Q$ is not an explicit-wave query.

\section{In Example 4.1, $Q$ $\equiv_C$ $Q'$ holds} 
\label{weird-ex-sec} 

\reminder{Put into the title of this section: ID of Example~\ref{intro-weird-ex}} 

In this appendix we show that, for the queries of Example~\ref{intro-weird-ex}, $Q \equiv_C Q'$ holds. 

\begin{proposition} 
\label{weird-equiv-prop} 
For the queries $Q$ and $Q'$ of Example~\ref{intro-weird-ex}, we have that $Q \equiv_C Q'$. 
\end{proposition}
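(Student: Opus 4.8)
This is the final statement to prove: $Q \equiv_C Q'$ for the queries
\begin{tabbing}
$Q(X_1) \leftarrow r(X_1,Y_1,Y_2,X_2; Y_3), r(X_1,Y_1,Y_2,X_3; Y_4), \{ Y_1,Y_2,Y_3,Y_4 \}$ \\
$Q'(X_1) \leftarrow r(X_1,Y_1,Y_2,X_2; Y_3), r(X_1,Y_1,Y_2,X_2; Y_4), \{ Y_1,Y_2,Y_3,Y_4 \}$
\end{tabbing}
(in the notation of Example~\ref{intro-weird-ex}, where $Q$ uses distinct set variables $X_2,X_3$ in its two subgoals while $Q'$ uses the same set variable $X_2$ in both).

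The plan is to prove the equality of the two bags ${\sc Res}_C(Q,D)$ and ${\sc Res}_C(Q',D)$ directly, for an arbitrary database $D$, by a counting argument on satisfiably-extendible assignments. Fix $D$ and an arbitrary constant $a \in adom(D)$; it suffices to show that $|\Gamma_{\bar{S}}^{(a)}(Q,D)| = |\Gamma_{\bar{S}}^{(a)}(Q',D)|$, since the answer tuple multiplicity of $(a)$ in each bag is exactly this cardinality. Because $\bar{S}(Q)$ is the vector of distinguished and multiset variables of $Q$ (here $X_1,Y_1,Y_2,Y_3,Y_4$, with $X_2,X_3$ being set variables projected out), and similarly $\bar{S}(Q')$ omits the set variable $X_2$, it is in fact cleanest to further stratify by the values assigned to $X_1,Y_1,Y_2$: for a triple $t = (a,b,c)$ of constants in $adom(D)$, let $\Gamma_{\bar{S}}^{(t)}(Q,D)$ denote the subset of $\Gamma_{\bar{S}}^{(a)}(Q,D)$ whose projection on $(X_1,Y_1,Y_2)$ equals $t$, and likewise for $Q'$. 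Since $\Gamma_{\bar{S}}^{(a)}(Q,D)$ is the disjoint union of $\Gamma_{\bar{S}}^{(t)}(Q,D)$ over all such triples $t$, and similarly for $Q'$, it suffices to show $|\Gamma_{\bar{S}}^{(t)}(Q,D)| = |\Gamma_{\bar{S}}^{(t)}(Q',D)|$ for each fixed triple $t = (a,b,c)$.

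The key computation is then the following. Consider the set $S[t]$ of all ground atoms of the $r$-relation in $D$ of the form $r(a,b,c,\ast;\ast)$, i.e., all atoms whose first three arguments are $a,b,c$. If $S[t] = \emptyset$ then both sets $\Gamma_{\bar{S}}^{(t)}(Q,D)$ and $\Gamma_{\bar{S}}^{(t)}(Q',D)$ are empty and we are done; so assume $S[t] \neq \emptyset$, and let $k = \max\{\, N : r(a,b,c,d;N) \in S[t] \text{ for some } d \,\}$ be the maximum copy number occurring among atoms of $S[t]$, with witness atom $g = r(a,b,c,l;k) \in S[t]$. By the definition of combined semantics: for $Q'$, both subgoals share the set variable $X_2$, so any satisfying assignment of the body must map both subgoals to the \emph{same} ground atom $r(a,b,c,d;N)$ (with $d$ the common value of $X_2$); the copy variables $Y_3,Y_4$ then range independently over $\{1,\dots,N\}$. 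Restricting to $\bar{S}(Q')$ (which drops $X_2$), the projection $\Gamma_{\bar{S}}^{(t)}(Q',D)$ consists exactly of the tuples $(a,b,c,n_1,n_2)$ with $1 \le n_1,n_2 \le k$ — the bound $k$ is achieved via the atom $g$, and no larger value of a copy variable is extendible since no atom of $S[t]$ has copy number exceeding $k$. Hence $|\Gamma_{\bar{S}}^{(t)}(Q',D)| = k^2$. For $Q$, the two subgoals have distinct set variables $X_2,X_3$, so a satisfying assignment may map the first subgoal to $r(a,b,c,d_1;N_1)$ and the second to $r(a,b,c,d_2;N_2)$ independently; but when we restrict to $\bar{S}(Q)$ (which drops \emph{both} $X_2$ and $X_3$), two satisfying assignments differing only in $(d_1,d_2)$ collapse to the same extendible assignment, and the surviving coordinates are again $(a,b,c,n_1,n_2)$ where $n_1 \le N_1$, $n_2 \le N_2$; the maximum extendible values of $n_1$ and of $n_2$ are each exactly $k$ (achieved by taking $d_1 = d_2 = l$, $N_1 = N_2 = k$ via atom $g$), and nothing larger is achievable. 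Hence $|\Gamma_{\bar{S}}^{(t)}(Q,D)| = k^2$ as well. This establishes the per-triple equality, and summing over $t$ and then over $a$ gives ${\sc Res}_C(Q,D) = {\sc Res}_C(Q',D)$ as bags for every $D$, i.e., $Q \equiv_C Q'$.

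I expect the main obstacle to be purely expository rather than conceptual: stating precisely, without sloppiness, why the \emph{set}-projection onto $\bar{S}$ — which ``bundles together'' distinct satisfying assignments that agree on the nonset coordinates, exactly as emphasized in the discussion of Example~\ref{wave-weird-ex} — produces the same count $k^2$ for $Q$ despite $Q$ having a strictly larger raw set $\Gamma^{(t)}(Q,D)$ of satisfying assignments than $Q'$. The cleanest way to nail this down is to explicitly exhibit, for each pair $(n_1,n_2)$ with $1 \le n_1,n_2 \le k$, one canonical satisfying assignment (using the witness atom $g$ and copy values $n_1,n_2$) realizing that extendible tuple, and separately argue the upper bound $n_i \le k$ from the maximality of $k$ in $S[t]$; this avoids any delicate case analysis. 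One should also double-check the degenerate situations — $adom(D)$ not containing the relevant constants, or $t$-triples with empty $S[t]$ — but these are immediate. An alternative, lighter-weight route would be to invoke Theorem~\ref{cvm-containm-thm} for one direction (there is a SCVM from $Q$ to $Q'$, namely $\{X_1\!\to\!X_1, Y_1\!\to\!Y_1, Y_2\!\to\!Y_2, X_2\!\to\!X_2, X_3\!\to\!X_2, Y_3\!\to\!Y_3, Y_4\!\to\!Y_4\}$, giving $Q' \sqsubseteq_C Q$), but since no CVM exists from $Q'$ to $Q$, the reverse containment $Q \sqsubseteq_C Q'$ genuinely requires the direct counting argument above; so the counting proof is unavoidable and I would present it in full.
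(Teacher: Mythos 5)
Your proof is correct and follows essentially the same route as the paper's own argument: stratify by the triple $t=(a,b,c)$ of values for $(X_1,Y_1,Y_2)$, let $k$ be the maximal copy number among atoms $r(a,b,c,\ast;\ast)$ in $D$, and show both $\Gamma_{\bar{S}}^{(t)}(Q,D)$ and $\Gamma_{\bar{S}}^{(t)}(Q',D)$ have cardinality exactly $k^2$ using the witness atom with copy number $k$. The paper's proof in the appendix uses the same canonical assignments $\mu_{(n_1,n_2,l)}$ and the same maximality argument for the upper bound.
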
 

For the convenience of the exposition in the proof, we use a version of the query $Q'$ where all variables have been renamed into ``same-name'' {\em primed} variables. We also rename the copy variables in a consistent way. That is, we use 

\begin{tabbing} 
Hehetab b \= hehe \kill
$Q(X_1) \leftarrow r(X_1,Y_1,Y_2,X_2; Y_3), r(X_1,Y_1,Y_2,X_3; Y_4),$ \\
\> $\{ Y_1,Y_2,Y_3,Y_4 \} .$ \\
$Q'(X'_1) \leftarrow r(X'_1,Y'_1,Y'_2,X'_2; Y'_3), r(X'_1,Y'_1,Y'_2,X'_2; Y'_4),$ \\
\> $\{ Y'_1,Y'_2,Y'_3,Y'_4 \} .$ 
\end{tabbing}

\begin{proof} 
We will prove the claim of Proposition~\ref{weird-equiv-prop} if we show that for an arbitrary database $D$ and for an arbitrary constant $a$ $\in$ $adom(D)$, the sets $\Gamma^{(a)}_{\bar{S}}(Q,D)$ and $\Gamma^{(a)}_{\bar{S}}(Q',D)$ are of the same cardinality. (Recall the definition of query answer under combined semantics.) To prove this, it is sufficient to show that (for the fixed database $D$ and) for an arbitrary 3-tuple $t$ of  constants from $adom(D)$, the sets $\Gamma_{\bar{S}}(Q,D)[t]$  and $\Gamma_{\bar{S}}(Q',D)[t]$ are of the same cardinality. Here, by the set $\Gamma_{\bar{S}}(Q,D)[t]$ we denote the set of all tuples in $\Gamma_{\bar{S}}(Q,D)$ such that the projection of each tuple on the variables $X_1,Y_1,Y_2$, in this order, is exactly the fixed tuple $t$. Similarly,  by the set $\Gamma_{\bar{S}}(Q',D)[t]$ we denote the set of all tuples in $\Gamma_{\bar{S}}(Q',D)$ such that the projection of each tuple on the variables $X'_1,Y'_1,Y'_2$, in this order, is exactly the fixed tuple $t$. 

We now prove the latter claim. For the fixed database $D$, for the remainder of this proof fix a tuple $t = (a,b,c)$, for some $a,b,c$ $\in$ $adom(D)$, as described above. 

(1) We first show that whenever  the set $\Gamma_{\bar{S}}(Q,D)[t]$ is not empty, the sets $\Gamma_{\bar{S}}(Q,D)[t]$ and $\Gamma_{\bar{S}}(Q',D)[t]$ are of the same cardinality $k^2$, for some constant $k$ $\in$ ${\mathbb N}_+$ where $k$ is a copy number of some ground atom of the database $D$. 

Suppose that the set $\Gamma_{\bar{S}}(Q,D)[t]$ is not empty. Then there must exist in $D$ ground atoms (perhaps identical to each other) $g_1 = r(a,b,c,d; e)$ and $g_2 = r(a,b,c,f; h)$, for some $d,f$ $\in$ $adom(D)$ and for some $e,h$ $\in$ ${\mathbb N}_+$. These atoms $g_1$ and $g_2$ must, intuitively, be the images of the first and of the second subgoal of the query $Q$, respectively, under a valid assignment mapping from $Q$ to $D$. That is, formally, for the set $\Gamma_{\bar{S}}(Q,D)[t]$ to be a nonempty set, it must be that the mapping $\{$ $X_1 \rightarrow a,$ $Y_1 \rightarrow b,$ $Y_2 \rightarrow c,$ $X_2 \rightarrow d,$ $X_3 \rightarrow e$, $Y_3 \rightarrow 1$, $Y_4 \rightarrow 1$  $\}$ is a valid assignment mapping from all the terms of the query $Q$ to the elements of $adom(D)$ $\bigcup$ ${\mathbb N}_+$. The validity of this assignment mapping is justified by the presence of the ground atoms $g_1$ and $g_2$ in the database $D$. 

We now consider all those ground atoms in relation $R$ in the database $D$, such that each of the atoms has $a,b,c$, in this order, as the values of the first three attributes of the relation $R$, from left to right. We know that the set, call it $S[Q]$, of all such atoms is not empty, as $g_1$ and $g_2$ of the previous paragraph will be elements of this set. Now let the constant $k$ $\in$ ${\mathbb N}_+$ be the maximal value of the copy number among all the ground atoms in the set $S[Q]$.  From the set $S[Q]$, choose an arbitrary atom, call it $g$, whose copy number is $k$. Let $g$ be $r(a,b,c,l; k)$, for some $l$ $\in$ $adom(D)$. 

We now argue that for each $n_1,n_2$ $\in$ $\{ 1,\ldots,k \}$ and for the constant $l$ in the ground atom $g$, the mapping $\mu_{(n_1,n_2,l)}$ $=$ $\{$ $X_1 \rightarrow a,$ $Y_1 \rightarrow b,$ $Y_2 \rightarrow c,$ $X_2 \rightarrow l,$ $X_3 \rightarrow l$, $Y_3 \rightarrow n_1$, $Y_4 \rightarrow n_2$  $\}$ is a valid assignment mapping from all the terms of the query $Q$ to the elements of $adom(D)$ $\bigcup$ ${\mathbb N}_+$. Indeed, the required fact is immediate from the definition of the set $\Gamma(Q,D)$ and from the presence of the atom $g$ in the database $D$. 

Further, we argue that for each natural number $n_1$ that is strictly greater than the constant $k$, for each $n_2$ $\in$ ${\mathbb N}_+$, and for {\em each} constant $l$ $\in$ $adom(D)$, the  mapping $\mu_{(n_1,n_2,l)}$ as defined above is not a valid assignment mapping from all the terms of the query $Q$ to the elements of $adom(D)$ $\bigcup$ ${\mathbb N}_+$. Indeed, it is sufficient to observe that the set $S[Q]$ does not have atoms whose copy number is greater than $k$. (Recall that $\mu_{(n_1,n_2,l)}$ fixes the images of the variables $X_1$, $Y_1$, and $Y_2$ to the respective elements of the tuple $t$ $=$ $(a,b,c)$.) We show in a similar way that for each natural number $n_2$ that is strictly greater than the constant $k$, for each $n_1$ $\in$ ${\mathbb N}_+$, and for {\em each} constant $l$ $\in$ $adom(D)$,  the  mapping $\mu_{(n_1,n_2,l)}$ is not a valid assignment mapping from all the terms of the query $Q$ to the elements of $adom(D)$ $\bigcup$ ${\mathbb N}_+$. 

From the facts established about the mappings $\mu_{(n_1,n_2)}$ we conclude that the set $\Gamma_{\bar{S}}(Q,D)[t]$ has exactly $k^2$ elements. Now consider the set $\Gamma_{\bar{S}}(Q',D)[t]$. It is easy to show (in fact, easier than for $\Gamma_{\bar{S}}(Q,D)[t]$ as we did above) that the set $\Gamma_{\bar{S}}(Q',D)[t]$ also has exactly $k^2$ elements. (For each valid assignment mapping $\mu$ from all the terms of the query $Q'$ to the elements of $adom(D)$ $\bigcup$ ${\mathbb N}_+$, such that $\mu$ maps $X'_1$ to $a$, $Y'_1$ to $b$ and $Y'_2$ to $c$, $\mu$ induces a mapping from both subgoals of the query $Q'$ into {\em the same} ground atom of the database $D$. Specifically, for the ground atom $g$ $\in$ $S[Q]$ as defined above, there exists a valid assignment mapping of this form $\mu$,  such that the mapping induces a mapping from both subgoals of the query $Q'$ into the atom $g$.)

(2) Now suppose that for the above fixed $D$ and $t$, the set $\Gamma_{\bar{S}}(Q',D)[t]$ is not empty. We show that in this case, the sets $\Gamma_{\bar{S}}(Q,D)[t]$ and $\Gamma_{\bar{S}}(Q',D)[t]$ are of the same cardinality $p^2$, for some constant $p$ $\in$ ${\mathbb N}_+$ where $p$ is a copy number of some ground atom of the database $D$. The proof is symmetric to the proof of the claim (1) above. Q.E.D. 
\end{proof} 



\section{Necessary and Sufficient Conditions of [9] for Combined-Seman- tics Query Equivalence}
\label{cohen-expl-wave-app} 

Cohen in \cite{Cohen06} provides necessary and sufficient conditions for combined-semantics equivalence of CQ 
queries, possibly with negation, comparisons, and disjunction. For these necessary and sufficient conditions to be applicable, both queries to be tested for combined-semantics equivalence are to satisfy one of the following conditions: 

\begin{enumerate} 
	\item Neither of the two queries has set variables; or  
	\item Neither of the two queries has multiset variables; or  
	\item Neither of the two queries has same-name predicate twice or more in positive (i.e., nonnegated) subgoals; or 
	\item Each query is a join of a set (i.e., no multiset variables) subquery with a multiset (i.e., no set variables) subquery. The formal definition is that neither query may have a subgoal that would have both a multiset variable and a set variable; or   
	\item Neither query may have copy variables. 
\end{enumerate} 

Now consider a restriction of the query language studied in  \cite{Cohen06} to CCQ queries. In the remainder of this section, we consider the above conditions 1--5 only as applied to the queries that satisfy this restriction. (That is, in the remainder of this section we consider CQ combined-semantics queries only, without any extensions of this query language.) Under this query-language restriction, each of the above conditions 1--5 enforces that each {\em CCQ} query in question be an explicit-wave query, by Definition~\ref{expl-wave-def}  in this current paper. Specifically:  

\begin{enumerate} 
	\item Whenever neither of the two queries has set variables, then both queries are explicit-wave queries because in each query, each copy-sensitive subgoal has no set variables. (See Proposition~\ref{suffic-for-expl-wave-prop} in this current paper for this syntactic sufficient condition for a CCQ query to be an explicit-wave query.) 
	\item Whenever neither of the two queries has multiset variables, then neither query has copy-sensitive subgoals. Hence, both queries in question are explicit-wave queries by Definition~\ref{expl-wave-def} (1).   
	\item Whenever neither of the two queries has same-name predicate twice or more in positive (i.e., nonnegated) subgoals, then both queries are explicit-wave queries because neither (CCQ) query has self-joins. (The fact that a CCQ query without self-joins is an explicit-wave query is an easy inference from Definition~\ref{expl-wave-def} (2).)  
	\item Whenever neither query may have a subgoal that would have both a multiset variable and a set variable, then both queries are explicit-wave queries because in each query, each copy-sensitive subgoal has no set variables.   (See Proposition~\ref{suffic-for-expl-wave-prop} in this current paper for this syntactic sufficient condition for a CCQ query to be an explicit-wave query.) 
	\item Whenever neither query may have copy variables, then both queries are explicit-wave queries  by Definition~\ref{expl-wave-def} (1). 
\end{enumerate} 

We conclude that if we apply to {\em only} CCQ queries  the necessary and sufficient conditions of \cite{Cohen06}  for query combined-semantics equivalence, then each of these conditions would be applicable exclusively to pairs of explicit-wave queries.  Thus, {\em when all the queries in question are required to be CCQ queries,} we have that all the necessary and sufficient conditions of \cite{Cohen06} for combined-semantics equivalence of queries are subsumed by Theorem~\ref{necess-suff-equiv-cond-thm} of this current paper. 

Observe that the CCQ query $Q$ of Example~\ref{weird-two-ex} does not satisfy (individually) any of the conditions 1--5 of this section. Thus, none of the necessary and sufficient query-equivalence conditions of \cite{Cohen06} would apply to a pairing of this query with an {\em arbitrary}  query in the query language considered in \cite{Cohen06}. (By definition, see Definition~\ref{ccq-def}, CCQ queries do belong to the query language considered in \cite{Cohen06}.) We make the same observation about the CCQ query $Q'$  of Example~\ref{weird-two-ex}, as well as about the query CCQ $Q$ of Example~\ref{mh-vs-sscm-ex}. Still, by Theorem~\ref{necess-suff-equiv-cond-thm} 
of this current paper we obtain that (i) $Q \equiv_C \hspace{-0.55cm} / \hspace{0.5cm} Q'$ for the queries of Example~\ref{weird-two-ex}, and that (ii)  $Q \equiv_C Q'$ for the queries of Example~\ref{mh-vs-sscm-ex}. 

\end{document}